\def\csname ver@etex.sty\endcsname{3000/12/31}
\crefname{part}{Part}{Parts}
\crefname{section}{Section}{Sections}
\crefname{theorem}{Theorem}{Theorems}
\crefname{lemma}{Lemma}{Lemmata}
\crefname{conjecture}{Conjecture}{Conjectures}
\crefname{question}{Question}{Questions}
\crefname{proposition}{Proposition}{Propositions}
\crefname{equation}{Equation}{Equations}
\crefname{definition}{Definition}{Definitions}
\crefname{remark}{Remark}{Remarks}
\newtheorem{theorem}{Theorem}[section]
\newtheorem{lemma}[theorem]{Lemma}
\newtheorem{proposition}[theorem]{Proposition}
\newtheorem{corollary}[theorem]{Corollary}
\newtheorem{conjecture}[theorem]{Conjecture}
\theoremstyle{definition}
\newtheorem{definition}[theorem]{Definition}
\newtheorem{example}[theorem]{Example}
\theoremstyle{remark}
\newtheorem{remark}[theorem]{Remark}
\renewcommand{\thepart}{\Roman{part}}
\setlist[enumerate,1]{label=(\roman*)}
\newcommand{\define}[1]{\textit{#1}}
\newcommand{\mc}{\mathcal}
\newcommand{\mr}{\mathrm}
\newcommand{\mf}{\mathfrak}
\renewcommand{\C}{\mathbb{C}}
\newcommand{\N}{\mathbb{N}}
\newcommand{\Z}{\mathbb{Z}}
\newcommand{\Q}{\mathbb{Q}}
\renewcommand{\P}{\mathbb{P}}
\newcommand{\Rational}{\mathbb{Q}}
\newcommand{\Complex}{\mathbb{C}}
\renewcommand{\theta}{\vartheta}
\renewcommand{\phi}{\varphi}
\newcommand{\<}{\langle}
\renewcommand{\>}{\rangle}
\newcommand{\plh}{\mathord{\cdot}}
\DeclareMathOperator{\ad}{ad}
\DeclareMathOperator{\End}{End}
\DeclareMathOperator{\Spec}{Spec}
\DeclareMathOperator*{\Res}{Res}
\newcommand{\dd}{\mathrm{d}}
\newcommand{\id}{\mathord{\mathrm{id}}}
\newcommand{\WeylAlg}[2]{\mc{D}_{#1}^{#2}}
\newcommand{\WeylAlgComp}[2]{\widehat{\mc{D}}_{#1}^{#2}}
\newcommand{\degProj}[2]{\pi_{#1}\!\left(#2\right)}
\newcommand{\vsum}[2]{\boldsymbol{#1}_{#2}}
\newcommand{\rsum}[1]{\vsum{r}{#1}}
\newcommand{\ssum}[1]{\vsum{s}{#1}}
\newcommand{\vsumind}[2]{\vsum{#1}{[#2]}}
\newcommand{\rsumind}[1]{\vsumind{r}{#1}}
\newcommand{\ssumind}[1]{\vsumind{s}{#1}}
\newcommand{\higherorderterms}[1]{\mc{O}({#1})}
\newcommand{\descPart}{\vdash}
\newcommand{\norder}[1]{{\vcentcolon}\!\mathrel{#1}\!{\vcentcolon}}
\newcommand{\ii}{\mathbf{i}}
\newcommand{\degOneMat}{\mathcal{M}}
\newcommand{\kLowerBound}[3]{\mf{d}^{>}_{\mathbf{#2}, \mathbf{#3}}(#1)}
\newcommand*{\Scale}[2][4]{\scalebox{#1}{$#2$}}
\title{Higher Airy structures and topological recursion for singular spectral curves}
\author[G.~Borot]{Ga\"etan Borot}
\author[R.~Kramer]{Reinier Kramer}
\author[Y.~Sch\"uler]{Yannik Sch\"uler}
\address[G.~B. \& R.K.]{Max-Planck-Institut f\"ur Mathematik, Vivatsgasse 7, 53111 Bonn, Germany}
\address[G.~B.]{Institut f\"ur Mathematik und Institut f\"ur Physik, Humboldt-Universit\"at zu Berlin, Unter den Linden 6, 10099 Berlin, Germany}
\email[G.~B.]{gaetan.borot@hu-berlin.de}
\address[R.~K.]{Department of Mathematical \& Statistical Sciences, University of Alberta, 632 CAB, Edmonton, Alberta, Canada, T6G 2G1.}
\email[R.~K.]{reinier@ualberta.ca}
\address[Y.~S.]{School of Mathematics and Statistics, University of Sheffield, Hicks Building, Hounsfield Road, S3 7RH, Sheffield, United Kingdom.\newline
Bethe Center for Theoretical Physics, Universit\"at Bonn, Nu\ss{}allee 12, 53115 Bonn, Germany.}
\email[Y.~S.]{yschuler1@sheffield.ac.uk}
\thanks{We thank Nezhla Aghaei for participation in an early phase of the project, Ran Tessler for his lights on open $r$-spin theory, and Nitin K. Chidambaram for pointing out typos. We also thank the referee for their comments. G.B. and R.K benefited from the support of the Max-Planck-Gesellschaft. During revision, R.K. was supported by the Natural Sciences and Engineering Research Council of Canada, and the Pacific Institute for the Mathematical Sciences (PIMS). The research and findings may not reflect those of these institutions.\\
The University of Alberta respectfully acknowledges that we are situated on Treaty 6 territory, traditional lands of First Nations and Métis people.}
\subjclass[2010]{14Hxx, 14N10, 17B65, 51Pxx, 81R10, 81T45, }
\begin{document}

\begin{abstract}
We give elements towards the classification of quantum Airy structures based on the $W(\mathfrak{gl}_r)$-algebras at self-dual level based on twisted modules of the Heisenberg VOA of $\mathfrak{gl}_r$ for twists by arbitrary elements of the Weyl group $\mathfrak{S}_{r}$. In particular, we construct a large class of such quantum Airy structures. We show that the system of linear ODEs forming the quantum Airy structure and determining uniquely its partition function is equivalent to a topological recursion \`a la Chekhov--Eynard--Orantin on singular spectral curves. In particular, our work extends the definition of the Bouchard--Eynard topological recursion (valid for smooth curves) to a large class of singular curves, and indicates impossibilities to extend naively the definition to other types of singularities. We also discuss relations to intersection theory on moduli spaces of curves, giving a general ELSV-type representation for the topological recursion amplitudes on smooth curves, and formulate precise conjectures for application in open $r$-spin intersection theory.
\end{abstract}

\maketitle

\newpage

\tableofcontents

\newpage

\section{Introduction}
Introduced by Kontsevich and Soibelman in \cite{KSTR}, Airy structures consist of a system of linear PDEs depending on a parameter $\hbar$ and satisfying a compatibility condition, so that they admit a simultaneous solution which is unique when properly normalised. This solution is called ``partition function'' and is encoded in Taylor coefficients $F_{g,n}$ indexed by integers $n \geq 1$ and (half-)integers $g \geq 0$, which are determined by a topological recursion, i.e. a recursion on $2g - 2 + n$. The interest in Airy structures comes from the numerous applications of the topological recursion in enumerative geometry, see e.g. \cite{EynardICM}.

The purpose of \cref{part:AiryStr} of this work is to construct new Airy structures from representations of the $\mc{W}(\mathfrak{gl}_r)$-algebra. The latter is realised as a sub-VOA of the Heisenberg VOA $\mathcal{F}_{\mathbb{C}^r}$. Twisting the free field representation of $\mathcal{F}_{\mathbb{C}^{r}}$ by an arbitrary element of the Weyl group $\sigma \in \mathfrak{S}_{r}$ gives rise to an (untwisted) representation of $\mathcal{W}(\mathfrak{gl}_r)$. Applying a ``dilaton shift'', \cite{BBCCN18} constructed all the Airy structures that can arise when $\sigma$ is an $r$-cycle or an $(r - 1)$-cycle. Our work explores the possibility of constructing Airy structures from arbitrary $\sigma \in \mathfrak{S}_{r}$: we will obtain in \cref{thm:W_gl_Airy_arbitrary_autom} conditions on $\sigma$ and the dilaton shifts that are sufficient for the success of this construction. This gives rise to many new Airy structures and thus partition functions, for which it would be desirable to find enumerative interpretations. If one contents oneself with the existence of a partition function only solving the PDEs to leading order in $\hbar$ (\textit{i.e.} existence of $F_{g=0,n}$) we give less restrictive conditions for this in \cref{thm:genus0_soln_admissible_Hik}. Our approach and the results are presented in \cref{S2} while \cref{sec:W_gl_airy_structs_proof} is devoted to the proof of the main \cref{thm:W_gl_Airy_arbitrary_autom}.

In \cref{SCpart}, we show that the topological recursion for all these Airy structures can be equivalently formulated via residue (hence, period) computations on possibly singular spectral curves. Roughly speaking, a spectral curve is a branched cover of complex curves $x\,:\,C \rightarrow C_0$ equipped with a meromorphic function $y$ and a bidifferential $\omega_{0,2}$ on $C^2$. The original formulation of the topological recursion, with a spectral curve as input, was developed by Chekhov, Eynard and Orantin \cite{EyOr07,EORev} in the case of smooth curves with simple ramifications. The output of this CEO recursion is a family of multidifferentials $\omega_{g,n}$ on $C^n$ that have poles at ramification points of $x$, are symmetric under permutation of the $n$ copies of $C$, and obtained recursively by residue computations on $C$. As observed already in \cite{EORev} and revisited by \cite{KSTR,ABCO}, the corresponding Airy structure is based on the Virasoro algebra and related to the $\mc{W}(\mathfrak{gl}_2)$-algebra; for each $g,n$, $F_{g,n}$ or $\omega_{g,n}$ contain the same information packaged in a different way.  The definition of the CEO topological recursion was extended in \cite{BHLMR14,BoEy13} to smooth curves with higher order ramification points, and its correspondence with $\mc{W}(\mathfrak{gl}_r)$-Airy structures when $\sigma$ is an $r$-cycle was established in \cite{BBCCN18}. An important application of this correspondence is a conceptual proof of symmetry of the $\omega_{g,n}$ based on representation theory of $\mc{W}$-algebras. This led the discovery of a non-trivial criterion on the order of $y$ at ramification points for the symmetry to hold. From \cite{BHLMR14}, it is easy to propose a definition of the CEO recursion also valid for singular spectral curves (see \cref{TRsum}), i.e. if $C$ has several irreducible components intersecting at ramification points of arbitrary order. It is however unclear (and in fact not always true) that this definition leads to symmetric $\omega_{g,n}$.

In \cref{HASequivALE,mainth2}, we show that this definition naturally arises from the $\mc{W}(\mathfrak{gl}_r)$-Airy structures with arbitrary permutation $\sigma$ encoded the ramification profile over a branchpoint in a normalisation of $C$, and dilaton shifts specifying the order of $y$ at the ramification points. Smooth spectral curves correspond to $\sigma$ having a single cycle. Besides, the basic properties of Airy structures guarantee that the corresponding $\omega_{g,n}$ are symmetric. The results of \cref{part:AiryStr} therefore give sufficient conditions on the ramification type and the order of $y$ at ramification points for the symmetry to hold, see \cref{dereg,deirreg,deexpreg,dall} and \labelcref{dequasi}. The central result of \cref{SCpart} is then \cref{mainth2} giving the correspondence between those Airy structures and our extension of the CEO topological recursion.

In terms of spectral curves, the CEO-like topological recursion provides the (unique when properly normalised) solution to the so-called ``abstract loop equations''. The latter express that certain polynomial combinations of the $\omega_{g,n}$ are holomorphic at the ramification points. This in fact provides tools that have been used to establish applications of the topological recursion e.g. in matrix models \cite{BEO15}, in Hurwitz theory \cite{Hurwitzspin,DBKPS19}, and to the reconstruction of WKB expansions \cite{dxdsys,Marchalwaki,BEMar}. The setup of abstract loop equations was developed in \cite{BEO15,BoSh17} for smooth curves with simple ramifications and extended in \cite[Section 7.6]{Kra19} to higher order ramifications. In Section~\ref{sec:absloop}, we define a notion abstract loop equations for arbitrary spectral curves (\cref{def:absloopgen}) and prove they admit at most one normalised solution (\cref{combinabsloop}), which must then be given by our extension of the CEO recursion (\cref{TRsum}). The question of existence of a solution is then reduced to proving this formula yields symmetric $\omega_{g,n}$. As we establish the equivalence of the abstract loop equations with the differential constraints built from $\mc{W}(\mathfrak{gl}_r)$-algebra representations (\cref{HASequivALE}), it is sufficient to check that the latter form an Airy structure to establish symmetry. In case the associated PDEs admit a partition function solving them to leading order in $\hbar$ this is sufficient to prove symmetry of the $\omega_{0,n}$ (\cref{mainth2gequal0}).

Our results give a definition of topological recursion for many spectral curves that could not be treated before, for instance
\begin{equation}
\begin{split}
& y(x - y^r) = 0 \qquad r \geq 1\,, \\
& (1 - xy^3) = 0\,, \\
& x^2 - y^2 = 0 \\
& (1 - xy^3)(x - y + 1) = 0\,, \\
& (1 - xy^3)(x - y + 1)(x - y + t) = 0\qquad t \in \mathbb{C}\setminus\{1\} \,,\\
& (1 - xy^3)(x - y + 1)(y - t) = 0\qquad t\in \mathbb{C}\,,\\
& (1 - xy^2)(x - y + 1)(x - ty) = 0\qquad t \in \mathbb{C}\setminus\{1\}\,. \\
\end{split}
\end{equation}
In general, nodal points such that $y$ tends to distinct non-zero values on both sides are admissible. if $C_i\,:\,P_i(x,y) = 0$ are admissible for $i \in \{1,2\}$ do not intersect in $\mathbb{C}^2$ at zeroes of $\dd x$ or singular points of $C_i$, then $C\,: P_1(x,y)P_2(x,y) = 0$ is admissible. The complete list of conditions defining admissible spectral curves according to our work can be found in \cref{SecTR}, and they only regulate the behavior of $x$ and $y$ at the points where the cardinality of the fibers of $x$ jump (zeroes of $\dd x$ and singular points). Examples of non-admissible curves are:
\begin{equation}
\begin{split}
& 1 - x^2y^5 = 0\,, \\
& (x - y^2)(1 - xy^2) = 0\,, \\
&  (x - y)(x - y^r) = 0\qquad r \in \mathbb{Z}\,, \\
& x^p - y^q = 0 \qquad p,q\,\,{\rm coprime}\,\,{\rm and}\,\,p > 1, q > 0\,, \\
& (x - y^2)^2 = 0\,, \\
&  (x - y^2)(x - y^3) = 0\,.
\end{split}
\end{equation}
In general, the following cases are not admissible for us: non-reduced curves, curves where $\dd y$ and $\dd x$ have a common zero, reducible curves where there is a point at which at least two irreducible components $C_1$ and $C_2$ meet and where $\dd x$ has a zero and $y$ is regular on $C_1$ and $y$ is not identically zero on $C_2$.
It would be desirable to understand if the admissibility conditions can be weakened even more with a suitable modification of the topological recursion residue formula.

We expect all the partition functions of the Airy structures present in this article to admit an enumerative interpretation, i.e. that $\omega_{g,n}$ or $F_{g,n}$ can be computed via intersection theory on a certain moduli space of curves. This was achieved by Eynard in \cite{Eyn11,Eyn14} in the case of simple ramification points on smooth curves and $y$ holomorphic, in a form that has a structure similar to the ELSV formula \cite{ELSV}, and found applications in Gromov--Witten theory \cite{EyOr15} and Hurwitz theory \cite{SSZ15,KLPS17}. The case of $y$ having a simple pole was later treated by Chekhov and Norbury \cite{Nor17,CheNor19}. \Cref{part:appl} explores the generalisations of this link to other spectral curves that can directly be reached by combining known results with the results of \cref{part:AiryStr,SCpart}. This stresses the role of Laplace-type integrals on the spectral curve. Although it is not essential in the theory, in the case of global spectral curves the Laplace transform of $\omega_{0,2}$ enjoy a factorisation property reminiscent to the use of $R$-matrices in the theory of Frobenius manifolds. This was known by \cite[Appendix B]{Eyn14} for smooth spectral curves with simple ramifications, and we show in \Cref{compactfactospin} that it extends to singular spectral curves in a slightly different form. 

In \cref{higheret}-\cref{ELSVwgnsec}, building on \cite{BBCCN18} and \cref{mainth2} we generalise Eynard's formula of \cite{Eyn11,Eyn14} to all smooth spectral curves with arbitrary ramification and $y$ of order $1$ at the ramification points --- this involves Witten spin classes. In the case of global smooth spectral curves (i.e. obeying the precise conditions \cref{presequde}) it leads in \Cref{ELSVwgnCor} to an intersection-theoretic formula for decomposition of the $\omega_{g,n}$ on the descendants of the primary differentials. We deduce in \Cref{TRESLVcor} that the coefficients of expansion of $\omega_{g,n}$ near simple poles of $\dd x$ are given by an ELSV-like formula and displaying a quasi-polynomiality property. This answers a question of Shadrin to the first-named author. In \cref{sec:open}, we apply our general results to the $\mc{W}(\mathfrak{gl}_3)$-constraints of Alexandrov \cite{Ale15} for the open intersection theory developed in \cite{PST15,Bur15,BT17,ST}. The open intersection numbers can be packaged in a generating series $\omega_{g,n}^{{\rm open}}$. As Alexandrov's $\mc{W}(\mathfrak{gl}_3)$-constraints have been identified with an Airy structure in \cite{BBCCN18} for $\sigma = (12)(3)$, we deduce from the results of \cref{SCpart} that $\omega_{g,n}^{{\rm open}}$ satisfies our extension of the CEO topological recursion applied to the curve $y(y^2 - 2x) = 0$ (\Cref{coroKP}). Using modified $\mc{W}$-constraints, Safnuk had derived in \cite{Saf16} a residue formula associated to this curve, but its structure is different and not easily generalisable. On the other hand, we can easily conjecture a residue formula for the open $r$-spin theory. This conjecture is equivalent to the $\mc{W}(\mathfrak{gl}_r)$ constraints first mentioned in \cite[Section 6.3]{BBCCN18}: we expose it in more detail, in particular specifying the normalisations necessary for the comparison, and give some support in its favor by comparison with \cite{BCT18}.

\begin{remark} At the time of writing, there are several foundational conjectures in open intersection theory. In \cref{OpenIntKP} and \cref{courserspin}, we formulate the ones that are directly relevant for us and explain the logical dependence of our statements on these conjectures. 
\end{remark}

In fact, one of the initial motivation of our work was to generalise the structure of Safnuk's residue formula \cite{Saf16} to higher $r$, and seek along this line for a definition of topological recursion for curves with many irreducible components. Our conclusion is that, although we do not know how to generalise the structure of Safnuk's recursion, there is a simpler and general definition of the recursion, which retrieves open intersection theory when applied to the reducible curve $y(y^2 - 2x) = 0$.

\medskip

The $\mc{W}(\mathfrak{gl}_r)$-representations that we consider have an explicit though lengthy expression. We extract from them a few concrete calculations:
\begin{itemize}
\item General formulas for $F_{0,3}$, $F_{\frac{1}{2},2}$ and $F_{1,1}$, and partial computations for $F_{0,4}$ in the undeformed case. Their symmetry gives necessary constraints for $\sigma$ and the dilaton shifts. Remarkably, the symmetry conditions governing $F_{0,3}$ and $F_{0,4}$ exactly match those found in \cref{thm:genus0_soln_admissible_Hik} which are sufficient for all $(F_{0,n})_{n\geq 0}$ to be symmetric. Taking the additional symmetry conditions of $F_{\frac{1}{2},2}$ for a generic choice of $F_{\frac{1}{2},1}$ into account we find that these are only slightly weaker than the sufficient conditions under which we have an Airy structure according to \cref{thm:W_gl_Airy_arbitrary_autom}. Hence, we are inclined to think that also this result is generically optimal. This could be implied by the analysis of the symmetry of $F_{\frac{1}{2},n}$ for higher $n \geq 3$.
\item If $\sigma = (1\cdots r -1)(r)$ or $(1\cdots r)$, we derive in \cref{SecHom} from the $\mathcal{W}$-constraints a homogeneity property, the dilaton equation, and the string equation when it applies.
\end{itemize}

\medskip

While this work was in the final writing stage, we learned that results similar to our \cref{thm:W_gl_Airy_arbitrary_autom} but restricted to the case of cycles of equal lengths (perhaps with a fixed point) are obtained in an independent work of Bouchard and Mastel~\cite{BM20}. The question of defining a Chekhov-Eynard-Orantin topological recursion on singular curves mentioned in their work is solved by \cref{SCpart} of our work.

\medskip

\subsection*{Notation}
$\N$ is the set of nonnegative integers and $\N^* = \N \setminus \{0\}$. For $n \in \N$, we denote $[n] = \{ 1, \ldots, n\}$, and in particular $ [0] = \emptyset$. More generally, if $a \leq b$ are integers, we denote $[a,b]$, $[a,b)$, $(a,b]$, $(a,b)$ the integer segments, where the bracket (resp. parenthesis) means we include (resp. exclude) the corresponding endpoint. If $a > b$ we set $[a,b] = \emptyset$.
Furthermore $ z_{[n]} = \{ z_1, \ldots, z_n\}$. \par
A partition of $r \in \N$, denoted $ \lambda \vdash r$, is $\lambda=(\lambda_1,\ldots,\lambda_\ell)$ such that $\lambda_1+\cdots+\lambda_\ell = r$. We will sometimes (but not always) require $\lambda_j \geq \lambda_{j+1}$, in which case we say that $\lambda$ is a \emph{descending partition}. Often it is convenient to express equal blocks $\lambda_j=\lambda_{j+1}=\cdots=\lambda_{j+n}$ as $\lambda_j^n$. Moreover, to any descending partition $\lambda$ one can associate a Young diagram $\mathbb{Y}_{\lambda}$ in a bijective way. For example all notations
\begin{equation*}
\lambda=(4,3,3,1) \quad \longleftrightarrow \quad \lambda=(4,3^2,1) \quad  \longleftrightarrow \quad  \mathbb{Y}_\lambda=\Scale[0.5]{\ytableausetup{centertableaux} \begin{ytableau}%
	\, & \, & \, & \, \\
	\, & \, & \, \\
	\, & \, & \, \\
	\, \\
	\end{ytableau}}
\end{equation*}
characterise the same descending partition $\lambda \descPart 11$. The \emph{size} of $ \lambda$ is $ |\lambda| = \sum_i \lambda_i $ and its \emph{length} is $\ell (\lambda )= \max \{ i \mid \lambda_i > 0\} $. Given partitions $\lambda^1,\ldots,\lambda^n$ of integers $r_1,\ldots,r_n$ we will write $(\lambda^1,\ldots,\lambda^n)\coloneqq (\lambda^1_1,\ldots,\lambda^1_{\ell(\lambda^1)},\lambda^2_1,\ldots,\lambda^n_{\ell(\lambda^n)})$ for their concatenation, which is a partition of $\sum_{j=1}^n r_j$.\par
If $A$ is a finite set, we write $\mathbf{L} \vdash A$ to say that $\mathbf{L}$ is a partition of $A$, that is an unordered tuple of pairwise disjoint non-empty subsets of $A$ whose union is $A$. We denote $|\!|\mathbf{L}|\!|$ the number of sets in the partition $\mathbf{L}$. \par
All of our vector spaces or algebraic spaces are over $\C$. We denote $\mathbb{C}\langle y \rangle$ a $1$-dimensional complex vector space equipped a non-zero linear form $y$.

\newpage

\part{Classification of \texorpdfstring{$\mc{W}(\mf{gl}_r)$}{W(glr)}-Airy structures}
\label{part:AiryStr}

\section{Constructing Airy structures}
\label{S1}
\label{S2}

\medskip

In this \namecref{S1}, we recall the definition of Airy structures and their partition function and its adaptation to the infinite-dimensional setting for which it will be used. We present the main results of \cref{part:AiryStr}, i.e. we exhibit the new Airy structures that can be constructed from $\mc{W}(\mf{gl}_r)$-algebra modules, while the proofs are carried out in  \cref{sec:W_gl_airy_structs_proof}.

\medskip

\subsection{Preliminaries on Airy structures}

\medskip

\subsubsection{Finite dimension}
\label{sec:AiryStructsFinDim}
\medskip

We first present the definition when $E$ is a finite-dimensional $\mathbb{C}$-vector space. For convenience, let us fix a basis $(e_a)_{a \in A}$ of $E$ and $(x_a)_{a \in A}$ be the dual basis of $E^*$.  We consider the graded algebra of differential operators $\WeylAlg{E}{\hbar}$, also called \textit{Weyl algebra}. It is the quotient of the free algebra generated by $\hbar^{\frac{1}{2}}$, $(x_a)_{a \in A}$ and $(\hbar \partial_{x_a})_{a \in A}$,  modulo the relations generated by 
\[
[\hbar \partial_{x_a},x_b] = \hbar \, \delta_{a,b}\,,\quad [x_a,x_b]=[\hbar\partial_{x_a},\hbar\partial_{x_b}] = 0 \qquad a,b\in I,\qquad \hbar\,\,{\rm central}.
\]
and equipped with the grading
\[
\deg x_a = \deg \hbar \partial_{x_a} = \deg \hbar^{\frac{1}{2}} = 1\,,
\]
We will write $P = Q + \higherorderterms{n}$ for two elements $P,Q\in \WeylAlg{E}{\hbar}$ if they agree up to at least degree $n-1$ and use the notation $P = Q + o(\hbar^n)$ if $P-Q\in \hbar^n \WeylAlg{E}{\hbar}$.

\begin{definition}
	\label{defn:higher_airy_struct}
	A family $(H_i)_{i \in I}$ of elements of $\WeylAlg{E}{\hbar}$ is an \emph{Airy structure on $E$ in normal form} with respect to the basis $(x_a)_{a \in A}$ if $I = A$ and it satisfies:
		\begin{enumerate}
		\item[$\bullet$] \textit{The degree $1$ condition}: for all $i \in I$, we have
		\begin{equation}
		\label{eq:deg_one_cond}
		H_i= \hbar \partial_{x_i} + \higherorderterms{2}\,.
		\end{equation}
		\item[$\bullet$] \textit{The subalgebra condition}: there exist $f^k_{i,j}\in \WeylAlg{E}{\hbar}$ such that for all $i,j\in I$
		\begin{equation}
		\label{eq:gr_lie_subalg_cond}
		[H_i,H_j] = \hbar \sum_{k \in A} f^k_{i,j} \, H_k\,.
		\end{equation}
	\end{enumerate}
A family $(H_i)_{i \in I}$ of elements of $\WeylAlg{E}{\hbar}$ is an \emph{Airy structure} if there exists two matrices $\mc{N} \in \mathbb{C}^{A \times I}$ and $\mc{M} \in \mathbb{C}^{I \times A}$ such that
\begin{equation}
\label{inversMS}
\begin{split}
\forall a,b \in A,& \qquad \sum_{i \in I} \mc{N}_{a,i}\mc{M}_{i,b} = \delta_{a,b} \\
\forall i,j \in I,& \qquad \sum_{a \in A} \mc{M}_{i,a}\mc{N}_{a,j} = \delta_{i,j}
\end{split}
\end{equation}
and the family $\tilde{H}_a = \sum_{i \in I} \mc{N}_{a,i}\,H_i$ indexed by $a \in A$ is an \emph{Airy structure in normal form}. 
\end{definition}

Being an Airy structure does not depend on a choice of basis, but being an Airy structure in normal form does. Airy structures in \cref{defn:higher_airy_struct} would be called in \cite{BBCCN18} ``crosscapped higher quantum Airy structure''. ``Crosscapped'' refers to the presence of half-integer powers of $\hbar$ and we comment it in \cref{nohalf}. ``Higher'' means that compared to the definition in \cite{KSTR,ABCO}, it can contain terms of degree higher than $2$. ``Quantum'' is used to distinguish it in \cite{KSTR,ABCO} from the classical Airy structures where the Weyl algebra is replaced by the Poisson algebra of polynomial functions on $T^*E$. We simplified the terminology as the restriction to maximum degree $2$ and the classical Airy structures will not play any role in this article and handling half-integer powers of $\hbar$ does not lead to any complication in the theory.

The essential property of an Airy structure is that it specifies uniquely a formal function on $E$.

\begin{theorem} \cite[Theorem 2.4.2]{KSTR}, \cite[Proposition 2.20]{BBCCN18}
	\label{thm:partition_fct_exists}
	If $(H_i)_{i\in I}$ is an Airy structure on $E$, then the system of linear differential equations
	\begin{equation}
	\label{HihIH}\forall a \in A,\qquad Z^{-1} H_a Z \cdot 1 = 0
	\end{equation}
	admits a unique solution $Z$ of the form $Z= \exp(F)$ with
	\begin{equation}
	\label{eq:free_energy_of_Airy_struct}
	F = \sum_{\substack{g \in \frac{1}{2}\mathbb{Z}_{\geq 0} \\ n \in \mathbb{Z}_{>0} \\ 2g+2-n>0}} \frac{\hbar^{g - 1}}{n!}\,F_{g,n}\,\qquad F_{g,n} \in {\rm Sym}^n(E^*)
	\end{equation}
\end{theorem}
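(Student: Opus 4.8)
The plan is to reduce to the normal-form case, insert the exponential ansatz, conjugate the constraints by $e^{F}$, and extract from the resulting identity a recursion on $2g-2+n$ that pins down each $F_{g,n}$; existence will then boil down to checking that this recursion is consistent. Since $\mc{N}$ and $\mc{M}$ are mutually inverse in the sense of \eqref{inversMS}, the systems $\{H_i\cdot Z=0\}_{i\in I}$ and $\{\tilde H_a\cdot Z=0\}_{a\in A}$ with $\tilde H_a=\sum_{i\in I}\mc{N}_{a,i}H_i$ have exactly the same solutions, because $H_i=\sum_{a\in A}\mc{M}_{i,a}\tilde H_a$; so we may assume $(H_i)_{i\in I}$ is an Airy structure in normal form, $I=A$, $H_a=\hbar\partial_{x_a}+\higherorderterms{2}$, and (after normal-ordering, which affects only lower-degree terms) that the $x$'s stand to the left of the $\hbar\partial_x$'s in each $H_a$.

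Now look for $Z=\exp(F)$ with $F$ as in \eqref{eq:free_energy_of_Airy_struct}. Each $F_{g,n}$ is homogeneous of degree $n\ge 1$ in the $x$'s with $2g-2+n>0$, and since $\deg x_a=\deg\hbar\partial_{x_a}=\deg\hbar^{1/2}=1$ makes $\tfrac{\hbar^{g-1}}{n!}F_{g,n}$ homogeneous of degree $2g-2+n\ge 1$, the series $e^{F}$ is well defined in the completion of $\WeylAlg{E}{\hbar}$ for this grading, and $H_a\cdot Z=0$ is equivalent to $Y_a:=e^{-F}H_a e^{F}\cdot 1=0$. Conjugation by $e^{F}$ fixes each $x_b$ and sends $\hbar\partial_{x_b}\mapsto\hbar\partial_{x_b}+\hbar\,\partial_{x_b}F$; applying the conjugated operator to $1$ and using the exponential formula
\begin{equation*}
e^{-F}\,(\hbar\partial_{x_{c_1}})\cdots(\hbar\partial_{x_{c_q}})\,e^{F}\cdot 1\;=\;\sum_{\pi\vdash[q]}\ \prod_{B\in\pi}\hbar^{|B|}\Bigl(\prod_{i\in B}\partial_{x_{c_i}}\Bigr)F
\end{equation*}
yields an identity $Y_a=\hbar\,\partial_{x_a}F+R_a=0$ in that completion, where $R_a$ collects, over the terms of $H_a$ of degree $\ge 2$, monomials in the $x$'s multiplied by products of factors $\hbar^{|B|}(\prod_{i\in B}\partial_{x_{c_i}})F$.

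For uniqueness, take a term of $H_a$ of degree $d\ge 2$, carrying $\hbar^{m}$, with $p$ of the $x$'s and $q$ of the $\hbar\partial_x$'s ($d=2m+p+q$); a monomial it contributes to $R_a$, obtained by choosing $\pi\vdash[q]$ and a stable pair $(g_B,n_B)$ for each block $B$, carries $\hbar^{\,m+q+\sum_B(g_B-1)}$ and has $x$-degree $p+\sum_B(n_B-|B|)$. Matching it, at order $\hbar^{g}$ and $x$-degree $n-1$, against the term $\tfrac{\hbar^{g}}{n!}\partial_{x_a}F_{g,n}$ of $\hbar\,\partial_{x_a}F$, a direct bookkeeping of powers of $\hbar$ and degrees in the $x$'s gives
\begin{equation*}
\sum_{B\in\pi}\bigl(2g_B-2+n_B\bigr)\;=\;(2g-2+n)-d+1\;\le\;(2g-2+n)-1 .
\end{equation*}
Since each $(g_B,n_B)$ is stable, $2g_B-2+n_B\ge 1$, so every $(g_B,n_B)$ occurring has $2g_B-2+n_B<2g-2+n$. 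Hence extracting from $Y_a=0$ the coefficient of $\hbar^{g}$ times a fixed monomial of $x$-degree $n-1$ expresses $\partial_{x_a}F_{g,n}$ through the coefficients of $H_a$ and the tensors $F_{g',n'}$ with $2g'-2+n'<2g-2+n$. As the minimal value of $2g-2+n$ over stable pairs with $g\in\tfrac12\N$, $n\in\N^{*}$ is $1$ — attained at $(0,3)$, $(\tfrac12,2)$, $(1,1)$, where the right-hand side involves only the coefficients of $H_a$ — induction on $2g-2+n$ determines all $\partial_{x_a}F_{g,n}$, hence, $F_{g,n}$ being homogeneous of positive degree, all the $F_{g,n}$. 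This proves uniqueness.

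Existence is where I expect the real difficulty. Running the recursion in reverse \emph{defines} candidate tensors $(F_{g,n})$ by singling out one slot for the $\partial_{x_a}$, but it is not automatic that each such tensor is symmetric (equivalently, that $a\mapsto\partial_{x_a}F_{g,n}$ is a closed $1$-form) nor that $Y_a=0$ holds in full, beyond its components of $x$-degree $n-1$ at order $\hbar^{g}$. Both are obtained by a single induction on $2g-2+n$ exploiting the subalgebra condition in the form $[H_a,H_b]\cdot Z=\hbar\sum_{k\in A}f^{k}_{a,b}\,H_k\cdot Z$: the inductive hypothesis makes each $H_k\cdot Z$ vanish to the order relevant for the current step, so the left-hand side vanishes to one order higher, which forces the symmetry of the freshly computed tensor and leaves no residual obstruction in $Y_a$. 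This is exactly the reasoning behind \cite[Theorem~2.4.2]{KSTR} and \cite[Proposition~2.20]{BBCCN18}, to which we refer for the remaining details; it is the only step where the full hypothesis of being an Airy structure — and not merely the degree-one normalisation \eqref{eq:deg_one_cond} — is used.
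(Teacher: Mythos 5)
Your proposal is correct and follows the same strategy as the cited references \cite[Theorem~2.4.2]{KSTR} and \cite[Proposition~2.20]{BBCCN18}, to which the paper defers (it does not reprove the statement). The reduction to normal form, the conjugation by $e^{F}$ with the Fa\`a~di~Bruno / set-partition expansion, the degree bookkeeping showing $\sum_{B}(2g_B-2+n_B)=(2g-2+n)-d+1\le(2g-2+n)-1$, and the use of the subalgebra condition via $[H_a,H_b]\cdot Z=\hbar\sum_k f^k_{a,b}H_k\cdot Z$ to obtain symmetry of the freshly defined $F_{g,n}$ are exactly the ingredients of the standard proof; your deferral of the full existence induction to the references is explicitly flagged and appropriate.
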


$Z$ (resp. the $F_{g,n}$) is called the partition function (resp. free energies). Equation \eqref{HihIH} implies a recursive formula for $F_{g,n}$ on $2g - 2 + n> 0$. We will typically be interested in the coefficients of decomposition of the free energies on a given basis $(x_a)_{a \in A}$ of linear coordinates of $E$, for which we use the notation:
\begin{equation}
\label{Fgnanudgsgn}
F_{g,n} = \sum_{a_1,\ldots,a_n \in A} F_{g,n}[a_1,\ldots,a_n] x_{a_1}\cdots x_{a_n}
\end{equation}
When the Airy structure has normal form with respect to this basis, the explicit recursion to obtain the $F_{g,n}[a_1,\ldots,a_n]$ appears in \cite[Sections 2.2 and 2.3]{BBCCN18}. We reproduce it in \cref{theFGNsum} at the only place where it is used in the article. We will be led to work with Airy structures that are not given in normal form (cf. \cref{sec:deg_one_part}), but for which there is an equivalent formulation of the recursion in terms of spectral curves (\cref{SCpart}), that is often more efficient for calculations (cf. \cref{sec:free_energ_arb_autom}).

\medskip

\subsubsection{Infinite dimension}
\label{sec:infinit}

\medskip

In this article we need to handle Airy structures for certain infinite-dimensional vector spaces. This requires some amendments of the previous definitions which we now explain.

A filtered vector space is a vector space $E$ together with a collection of subspaces $ 0 \subseteq \mc{F}_1E \subseteq \mc{F}_2E \subseteq \dotsb \subseteq E$, called the filtration. Throughout this paper we will assume that for any filtered vector space $E$, the $\mc{F}_pE$ are finite-dimensional, and that $E = \bigcup_{p > 0} \mc{F}_pE$. Two filtrations $ \mc{F}$, $\mc{F}'$ on a vector space $E$ are equivalent if for any $p > 0$ there exists $p' > 0$ such that $\mc{F}_pE \subseteq \mc{F}'_{p'}E$ and for any $q' > 0$ there exists $q > 0$ such that $\mc{F}'_{q'}E \subseteq \mc{F}_{q}E$. In particular, all filtrations on a given vector space satisfying our extra assumptions are equivalent.\par
 A filtered set is a set $A$ together with a collection of subsets $ \emptyset \subseteq f_1A \subseteq f_2A \subseteq \dotsb \subseteq A$. Again, we assume all $f_pA$ are finite and $ A = \bigcup_{p>0} f_pA$. A filtered basis of a filtered vector space $(E,\mc{F})$ is the data of a filtered set $(A,f)$ and a family $(e_a)_{a \in A}$ of elements of $E$ such that $(e_a)_{a \in f_pA}$ is a basis of $\mc{F}_pE$.

Let $(E,\mc{F})$ be a filtered vector space, and for convenience choose a filtered basis $(e_a)_{a \in A}$ and the corresponding linear coordinates $(x_a)_{a \in A}$. We consider the completed Weyl algebra with respect to this filtration, $\WeylAlgComp{E}{\hbar}$. It consists of elements of the form
\begin{equation}
\label{decomConp}
\sum_{\substack{m,n \in \mathbb{Z}_{\geq 0} \\ j \in \frac{1}{2}\mathbb{Z}_{\geq 0}}} \sum_{a_1,\ldots,a_n \in A} \sum_{\overline{a}_1,\ldots,\overline{a}_n \in A} \frac{\hbar^j}{n!m!}\,C^{(j)}[\overline{a}_1,\ldots,\overline{a}_m ; a_1,\ldots,a_n]\,x_{\overline{a}_1} \cdots x_{\overline{a}_m} \hbar\partial_{x_{a_1}} \cdots \hbar \partial_{x_{a_n}}\,,
\end{equation}
where for any $p>0$, the coefficients $C^{(j)}[\overline{a}_1,\ldots,\overline{a}_m; a_1,\ldots,a_n]$ vanish for all but finitely many $ a_1, \dotsc, a_n \in f_pA$ and $j,\overline{a}_1,\ldots,\overline{a}_m$. One can check this does form a (graded) algebra.

If $(I,f')$ is a filtered set (which is unrelated to $(A,f)$), we say that a family $(D_i)_{i \in I}$ of elements of $\WeylAlgComp{E}{\hbar}$ is filtered if for any $ p > 0$, the coefficients $C^{(j)}_i[\overline{a}_1,\ldots,\overline{a}_m; a_1,\ldots,a_n]$ in the decomposition \eqref{decomConp} of $D_b$ vanish for all but finitely many $ a_1, \dotsc, a_n \in f_pA$ and $i,j,\overline{a}_1,\ldots,\overline{a}_n$.

\begin{definition}
\label{deffilter}A filtered family $(H_i)_{i \in I}$ of elements of $\WeylAlgComp{E}{\hbar}$ is an Airy structure on $E$ in normal form with respect to $(x_a)_{a \in A}$ if $(I,f') = (A,f)$,  and
\begin{itemize}
\item[$\bullet$] the degree one condition holds ;
\item[$\bullet$] the subalgebra condition holds for some filtered family $(f^k_{i,j})_{i,j,k \in I}$ of elements of $\WeylAlgComp{E}{\hbar}$.
\end{itemize}
A filtered family $(H_i)_{i \in I}$ of elements of $\WeylAlgComp{E}{\hbar}$ is an Airy structure on $E$ if there exist $\mc{N} \in \mathbb{C}^{A \times I}$ and $\mc{M} \in \mathbb{C}^{I \times A}$ such that:
\begin{itemize}
\item[$\bullet$] for each $i \in I$, $\mc{N}_{a,i}$ vanish for all but finitely many $a \in A$ ;
\item[$\bullet$] for each $a \in A$,  $\mc{M}_{i,a}$ vanish for all but finitely many $i \in I$ ;
\item[$\bullet$] $\mc{N}$ and $\mc{M}$ are inverse to each other in the sense of \eqref{inversMS} --- where the sums are finite due to the previous two points ;
\item[$\bullet$] the family $\tilde{H}_a = \sum_{i \in I} \mc{N}_{a,i}H_i$  indexed by $a \in A$ ---  which is a filtered family of elements of $\WeylAlgComp{E}{\hbar}$ by the first point --- is an Airy structure on $E$ in normal form with respect to $(x_a)_{a \in A}$.
\end{itemize}
\end{definition}

The notion of Airy structure does not depend on the choice of filtered basis, and only depends on the equivalence class of the filtration of $E$. We will sometimes omit to specify the filtration when it is evident. The existence and uniqueness of the partition function (\cref{thm:partition_fct_exists}) extends to this infinite-dimensional setting and for each $g$, $n$, and $p$, the summation of \eqref{Fgnanudgsgn} with $ a_1$ restricted to $ f_pA$ is finite, that is $F_{g,n} \in \widehat{{\rm Sym}^n(E^*)}$.

\medskip

\subsection{Preliminaries on the \texorpdfstring{$\mc{W}(\mf{gl}_r)$}{W(glr)}-algebra and its twisted modules}

\medskip

The Airy structures constructed in this paper are obtained by considering twisted modules of Heisenberg vertex operator algebras (VOAs), taking subalgebras of the associated algebras of modes, and using a dilaton shift to break homogeneity. The idea of this construction dates back to \cite{Mil16}. It was developed more systematically in \cite[Sections 3 and 4]{BBCCN18} and we refer to that paper for details. Here we summarise the main points of the construction.

\medskip

\subsubsection{The Heisenberg VOA and  \texorpdfstring{$\mc{W}(\mf{gl}_r)$}{W(glr)}-algebras}

\medskip

\begin{definition}
Let $ \mf{h} $ be a finite-dimensional vector space with non-degenerate inner product $ \< \plh, \plh \>$. The \emph{Heisenberg Lie algebra} associated to $\mf{h}$ is given by
\begin{equation*}
\hat{\mf{h}} = \big( \mf{h} [t^{\pm 1}] \oplus \C K\big)  \otimes \C_{\hbar} \,, \qquad [\xi_l+ aK, \eta_m + bK] = \hbar\,\< \xi, \eta \>l \delta_{l+m,0}K\,,
\end{equation*}
where we write $ \xi_l \coloneqq \xi \otimes t^l$, and $ \C_{\hbar} \coloneqq \C[\hbar^{\frac{1}{2}}]$. The associated \emph{Weyl algebra} is defined as a quotient of its universal enveloping algebra: $ \mc{H}_\mf{h} \coloneqq \mf{U}(\hat{\mf{h}})/(K-1)$. The \emph{Fock space} $\mc{F}_{\mf{h}}$ is the representation of $\mc{H}_{\mf{h}}$ generated by a vector $ |0\> $ and relations $ \xi_l |0\> = 0$ for $ \xi \in \mf{h}$, $ l \geq 0$. The \emph{Heisenberg VOA} is the vertex operator algebra with underlying vector space $\mc{F}_\mf{h}$ equipped with vacuum $ |0\> $, state-field correspondence $ Y \colon \mc{F}_\mf{h} \to \End \mc{F}_\mf{h} \llbracket \tilde{x}^{\pm 1} \rrbracket $ given by
\begin{equation}
\label{productformY} 
\begin{split}
Y\big(|0\>, \tilde{x}\big) &= \id_{\mc{F}_\mf{h}}\,, \\
Y\big(\xi_{-1}|0\>, \tilde{x}\big) &= \sum_{l \in \Z} \xi_l \tilde{x}^{-l-1}\,, \\
Y\big(\xi^1_{-k_1} \cdots \xi^n_{-k_n} |0\>, \tilde{x}\big) &= \norder{\frac{1}{(k_1-1)!} \frac{\dd^{k_1-1}}{\dd \tilde{x}^{k_1-1}} Y(\xi^1_{-1}|0\>, \tilde{x}) \cdots \frac{1}{(k_n-1)!} \frac{\dd^{k_n-1}}{\dd \tilde{x}^{k_n-1}} Y(\xi^n_{-1}|0\>,\tilde{x})}\,,
\end{split}
\end{equation}
and conformal vector $ \varpi = \frac{1}{2} \sum_{j} \chi^j_{-1} \chi^j_{-1} |0\> $ for an orthonormal basis $(\chi^j)_{j}$ of $ \mf{h}$.
\end{definition}

The $\mc{W}$-algebra associated to the general linear Lie algebra $ \mf{gl}_r$ at the self-dual level is denoted $\mc{W}(\mathfrak{gl}_r)$. It can be constructed as a sub-VOA of the Heisenberg VOA $\mc{F}_{\mathfrak{h}}$ attached to its Cartan subalgebra $ \mf{h} \cong \mathbb{C}^{r} \subset \mf{gl}_r$. We identify $ \mf{h}$ with its dual using the Killing form and take $ \chi^j \in \mf{h}$ to correspond to the roots under this identification. Note that they are orthonormal.

\begin{theorem}[\cite{Fateev:1988,AM17}]
$\mc{W}(\mf{gl}_r)$ is the sub-VOA of $\mc{F}_\mf{h}$ freely and strongly generated by the elements
\begin{equation}
\label{strongenw} w_i = {\rm e}_i \big(\chi^1_{-1},\ldots,\chi^{r}_{-1}\big) \ket{0},\quad i\in [r]\,,
\end{equation}
where ${\rm e}_i$ is the $i$th elementary symmetric polynomial in $r$ variables.
\end{theorem}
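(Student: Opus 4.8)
The statement is the classical free-field description of the principal $\mc{W}$-algebra of type $A$, due to Fateev--Lukyanov and made rigorous by Arakawa--Molev, so the plan is to recover it by combining the Feigin--Frenkel structure theorem with a direct inspection of the vectors $w_i$. The first ingredient is the free-field realization: at the self-dual level $\mc{W}(\mf{gl}_r)$ is realized inside $\mc{F}_{\mf{h}}$ as the image of the Miura map, equivalently as the joint kernel $\bigcap_{k=1}^{r-1}\ker S_k$ of the screening operators $S_k$ attached to the simple roots $\alpha_k=\chi^{k}-\chi^{k+1}$ of $\mf{sl}_r\subset\mf{gl}_r$, the $\mf{gl}_1$-direction being unconstrained. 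The Feigin--Frenkel structure theorem (valid in particular at the self-dual level) then gives that this vertex algebra is freely and strongly generated by \emph{some} family of fields of conformal weights $1,2,\dots,r$; in particular its graded character is $\prod_{i=1}^{r}\prod_{n\ge i}(1-q^{n})^{-1}$. What remains is to show that the explicit vectors $w_i={\rm e}_i(\chi^1_{-1},\dots,\chi^r_{-1})\ket{0}$ may be taken as the generators.

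The first thing to verify is that $w_i\in\mc{W}(\mf{gl}_r)$, i.e.\ $S_k(w_i)=0$ for every $k$. Since $w_i$ is built only from the mode-$(-1)$ generators $\chi^1_{-1},\dots,\chi^r_{-1}$, which pairwise commute, the top-degree part of $S_k(w_i)$ is controlled by how ${\rm e}_i$ transforms under an infinitesimal displacement in the $\alpha_k$-direction, i.e.\ by the reflection $s_k$; as ${\rm e}_i$ is a symmetric polynomial it is $s_k$-invariant, which should force $S_k(w_i)=0$ after a short computation. This is exactly where self-duality of the level is essential: at a general level the naive symmetric combination ${\rm e}_i(\chi_{-1})\ket{0}$ has to be corrected by lower-weight terms involving the higher modes $\chi^j_{-l}$, $l\ge 2$, in order to land in the screening kernel. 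I expect this verification -- the one genuinely $\mc{W}$-algebra-theoretic step -- to be the main obstacle, and if one prefers not to carry it out by hand, the identification of exactly these generators can be quoted from Fateev--Lukyanov and Arakawa--Molev.

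For the free (and strong) generation I would argue with the natural filtration rather than computing OPEs. Put $\mathcal{U}=\langle w_1,\dots,w_r\rangle\subseteq\mc{F}_{\mf{h}}$ and filter $\mc{F}_{\mf{h}}=\C[\chi^j_{-l}\mid j\in[r],\,l\ge 1]\ket{0}$ by total polynomial degree in the $\chi^j_{-l}$. Writing $\phi^j(\tilde x):=\sum_{l\ge 1}\chi^j_{-l}\tilde x^{\,l-1}$ (the creation part of the field $Y(\chi^j_{-1}\ket{0},\tilde x)$) and expanding ${\rm e}_i\big(\phi^1(\tilde x),\dots,\phi^r(\tilde x)\big)=\sum_{a\ge 0}{\rm e}_i^{(a)}\tilde x^{\,a}$, one finds $(w_i)_{(-1-a)}\ket{0}={\rm e}_i^{(a)}\ket{0}$, homogeneous of degree $i$, and -- because the non-negative modes only lower the degree by two when they contract -- the symbol of an ordered monomial $(w_{i_1})_{(-1-a_1)}\cdots(w_{i_m})_{(-1-a_m)}\ket{0}$ is the product $\prod_{\nu}{\rm e}_{i_\nu}^{(a_\nu)}$. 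These symbols are linearly independent because the family $\{{\rm e}_i^{(a)}\mid i\in[r],\,a\ge 0\}$ is algebraically independent in $\C[\chi^j_{-l}]$: the Jacobian of $(t_1,\dots,t_r)\mapsto({\rm e}_1(t),\dots,{\rm e}_r(t))$ is the Vandermonde $\prod_{j<j'}(t_j-t_{j'})$, which is generically nonzero, so the Jacobian of every finite jet of this morphism is generically nonzero, and the jet coordinates ${\rm e}_i^{(a)}$ with $a\le N$ are algebraically independent for every $N$. Hence the ordered monomials in the $w_i$ are linearly independent in $\mc{F}_{\mf{h}}$, so $\mathcal{U}$ is freely and strongly generated by $w_1,\dots,w_r$, with graded character $\prod_{i=1}^{r}\prod_{n\ge i}(1-q^{n})^{-1}$.

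Finally, combining the steps: by the second step $\mathcal{U}\subseteq\mc{W}(\mf{gl}_r)$, and by the first and third steps the two vertex algebras have equal graded characters, whence $\mathcal{U}=\mc{W}(\mf{gl}_r)$ -- which is the assertion. So the overall architecture is: Feigin--Frenkel free generation and the known character, plus ``the $w_i$ survive all screenings'' (the hard part, where self-duality enters), plus an elementary Jacobian/jet computation of algebraic independence, plus a character comparison. If one is content to black-box the first two steps as the cited theorem, only the third step requires proof.
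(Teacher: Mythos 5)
The paper does not prove this theorem: it is stated with attribution to \cite{Fateev:1988,AM17} and used as a black box, so there is no internal argument to compare against. Taking your sketch on its own merits, the architecture is the standard one and is sound: the screening-kernel realization of $\mc{W}(\mf{gl}_r)$ inside $\mc{F}_\mf{h}$, the Feigin--Frenkel free-generation theorem to pin down the graded character $\prod_{i=1}^r\prod_{n\geq i}(1-q^n)^{-1}$, a verification that the $w_i$ survive the screenings, and a filtration argument for strong free generation of the sub-VOA they generate, closed off by a character comparison. In particular the jet-space Jacobian argument for algebraic independence of $\{{\rm e}_i^{(a)}\}$ is correct and tidy, and the observation that contractions drop the filtration degree by two is exactly the reduction to the associated graded that one needs.

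The genuine gap is step (3), and you flag it yourself only to defer it to precisely the sources the theorem cites, which makes the proposal circular as a proof. Your heuristic -- that the ``top-degree part of $S_k(w_i)$ is controlled by the reflection $s_k$'' -- is not literally how a screening charge acts, and does not by itself close the gap; controlling the quantum corrections to $s_k$-invariance is exactly the issue, and it is only at the self-dual level that they disappear. What you are missing is that step (3) is in fact short here, precisely because ${\rm e}_i$ is multilinear: each $w_i$ has degree at most one in each $\chi^j_{-1}$, hence at most quadratic in the pair $(\chi^k_{-1},\chi^{k+1}_{-1})$. Symmetry in that pair rewrites $w_i$ as a polynomial in $\chi^k_{-1}+\chi^{k+1}_{-1}$, the remaining $\chi^j_{-1}$ for $j\neq k,k+1$ (all of which commute with $S_k$), and $(\chi^k_{-1}-\chi^{k+1}_{-1})^2$; and $(\chi^k_{-1}-\chi^{k+1}_{-1})^2\ket{0}$ is, up to normalization, the Virasoro vector of the $c=1$ free boson attached to the root $\alpha_k$, whose screening-invariance is the familiar $r=2$ base case. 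Supplying this two-line reduction would make the sketch self-contained; as written, the crux is black-boxed.
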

We introduce the modes $W_{i,k}$ and their ($i$-form valued) generating series $W_i(\tilde{x})$ with the formulas
\begin{equation}
\label{Wgenerating}
W_i(\tilde{x}) \coloneqq \sum_{k \in \Z} \frac{W_{i,k} (\dd \tilde{x})^i}{\tilde{x}^{i+k}} \coloneqq Y(w_i , \tilde{x})(\dd \tilde{x})^{i}\,.
\end{equation}

\begin{remark}
\label{notationWik}
Contrarily to \cite{BBCCN18}, we do not include a factor of $r^{i - 1}$ in the definition of $w_i$. Our convention that $W_{i,k}$ is the coefficient of $\tilde{x}^{-(k + i)}$. This coincides with the convention taken in \cite[Section 3.3.4]{BBCCN18} but differs from the convention $\tilde{x}^{-(k + 1)}$ used in the rest of \cite{BBCCN18}. We also find convenient to consider the generating series of modes of conformal weight $i$ to be $i$-differential forms. The variable $\tilde{x}$ is often denoted $z$ in the VOA literature. However, in \cref{SCpart}, we will see that this variable can be interpreted as the pullback under the normalisation morphism of a function usually denoted $x$ for a spectral curve. For consistency, we therefore chose to use the letter $\tilde{x}$, and use the letter $z$ for local coordinates on (the normalisation of) the spectral curve.
\end{remark}

\medskip

\subsubsection{The mode algebra and its subalgebras}

\medskip

Let $\mc{A}$ be the associative algebra of modes of $\mc{W}(\mf{gl}_r)$, see \cite[Section 4.3]{FB04}.  Furthermore, let $L(\mc{A})$ denote the set of possibly infinite sums of ordered monomials in $\mc{A}$ whose degree and conformal weight is bounded;  we equip it with the bracket $\hbar^{-1}[\cdot,\cdot]$, making it a Lie algebra.

\begin{definition}
We say that a subset $S \subset [r] \times \mathbb{Z}$ of modes generates a Lie subalgebra of $L(\mc{A})$ if the left $\mc{A}$-ideal generated by $\oplus_{(i,k) \in S} W_{i,k}$ is a Lie subalgebra of $L(\mc{A})$. An equivalent condition is the existence of $f^{(j,l)}_{(i,k),(i',k')} \in L(\mc{A})$ such that
	\begin{equation*}
	\forall (i,k),(i',k') \in S,\qquad [ W_{i,k}, W_{i',k'} ] = \hbar \sum_{(j,l)\in I_\lambda} f^{(j,l)}_{(i,k),(i',k')} ~ W_{j,l}\,.
	\end{equation*}
	More precisely, it is  first required that the right-hand side defines an element of $L(\mc{A})$, and then that it coincides with the left-hand side.
\end{definition}

Given a partition $\lambda \vdash r$, we set
\[
\lambda(i)\coloneqq \min \bigg\{m \geq 0 \quad \Big|\quad  \sum_{j=1}^m \lambda_j \geq i\bigg\}\,,
\]
and we define the index set
\begin{equation}
\label{eq:part_to_index_set}
I_{\lambda} \coloneqq \{ (i,k) \in [r] \times \mathbb{Z} \quad | \quad  \lambda(i) + k > 0\}\,.
\end{equation}
\begin{theorem}
	\label{prop:desc_part_Lie_subalg}
	\cite[Theorem 3.16]{BBCCN18} For any descending partition $\lambda \vdash r$, $I_{\lambda}$ generates a Lie subalgebra of $L(\mc{A})$.
\end{theorem}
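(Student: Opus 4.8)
The plan is to verify the bracket relation $[W_{i,k}, W_{i',k'}] = \hbar \sum_{(j,l) \in I_\lambda} f^{(j,l)}_{(i,k),(i',k')} W_{j,l}$ directly, by controlling which modes $W_{j,l}$ can appear on the right-hand side and showing they all lie in $I_\lambda$. First I would recall the structure of the operator product expansion (OPE) of the strong generators $w_i$ of $\mc{W}(\mf{gl}_r)$: since $w_i$ has conformal weight $i$, the OPE $W_i(\tilde x) W_{i'}(\tilde x')$ produces a finite sum of terms of the form $(\tilde x - \tilde x')^{-m} W_j(\tilde x')$ (up to derivatives) with $j \leq i + i' - m$ and $m \geq 1$; in terms of modes this says $[W_{i,k}, W_{i',k'}]$ is a (possibly infinite, but conformal-weight-bounded) $\mc{A}$-linear combination of modes $W_{j,l}$ with $j \leq i + i'$ and, crucially, $j + (k + k') \geq$ something controlled by the pole order. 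The key combinatorial input is then to track, for each such $W_{j,l}$ appearing, a lower bound on $\lambda(j) + l$ in terms of $\lambda(i) + k$ and $\lambda(i') + k'$, using that $(i,k), (i',k') \in I_\lambda$ means $\lambda(i) + k > 0$ and $\lambda(i') + k' > 0$.

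The second step is the crucial inequality. The function $i \mapsto \lambda(i)$ is the "ceiling" associated to the partition: it is non-decreasing, $\lambda(i+i') \le \lambda(i) + \lambda(i')$ (subadditivity, which holds because after reading off $\lambda(i)$ parts one has accumulated at least $i$ boxes and similarly for $i'$, so $\lambda(i) + \lambda(i')$ parts suffice for $i + i'$ boxes), and more generally $\lambda(j) \le \lambda(i) + \lambda(i')$ whenever $j \le i + i'$. I would pair this with the mode-counting fact extracted from the OPE: if $W_{j,l}$ appears in $[W_{i,k}, W_{i',k'}]$, then $l = k + k' + (i + i' - j) - m$ for a pole of some order, but more to the point the total "shift" is such that $j + l \ge$ a quantity bounded below using $(i+k) + (i'+k')$. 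Combining, $\lambda(j) + l \le \lambda(i) + \lambda(i') + l$ while the mode bookkeeping forces $l \ge k + k' + (\text{nonneg})$ relative to the drop in conformal weight, and a careful accounting shows $\lambda(j) + l > 0$, i.e. $(j,l) \in I_\lambda$. This is exactly the content that makes the left $\mc{A}$-ideal generated by $\bigoplus_{(i,k)\in I_\lambda} W_{i,k}$ closed under the bracket.

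The third step is to handle the analytic/convergence subtlety flagged in the definition: one must first check that $\sum_{(j,l) \in I_\lambda} f^{(j,l)}_{(i,k),(i',k')} W_{j,l}$ genuinely defines an element of $L(\mc{A})$ — that is, the degree and conformal weight stay bounded. Boundedness of conformal weight is automatic since the bracket of two modes of weights $i, i'$ has weight $i + i'$; the degree boundedness follows because the OPE of strong generators involves only finitely many normally-ordered products, so the $f^{(j,l)}$ are honest elements of $\mc{A}$ (finite sums) and there are finitely many pole orders $m$. Then equality with the left-hand side is just the OPE-to-commutator dictionary (Borcherds' commutator formula).

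The main obstacle I expect is the combinatorial inequality in step two: one has to be careful that the mode $W_{j,l}$ appearing from an order-$m$ pole really satisfies $\lambda(j) + l > 0$ given only $\lambda(i)+k>0$ and $\lambda(i')+k'>0$, because both the drop $i + i' - j$ in conformal weight and the pole order $m$ enter, and the naive bound $\lambda(j)+l \ge (\lambda(i)+k) + (\lambda(i')+k') - (\text{something})$ must be shown to stay positive. The descending hypothesis on $\lambda$ is what guarantees $\lambda(i)$ is the "correct" ceiling so that subadditivity and monotonicity hold cleanly; for non-descending $\lambda$ the function $\lambda(i)$ can behave worse. The cleanest route is probably to write $\lambda(i) = \#\{\text{parts needed to cover } i \text{ boxes}\}$, note $\lambda(i) - \lambda(i-1) \in \{0,1\}$, and then verify the inequality $\lambda(j) \le \lambda(i)+\lambda(i')$ for $j \le i+i'$ by an explicit box-counting argument, after which the mode bookkeeping is routine. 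I would also double-check the edge cases $i = r$ (where $\lambda(r) = \ell(\lambda)$) and small values of $j$, since those are where the bound is tightest.
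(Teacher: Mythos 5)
The paper gives no proof of this statement; it is cited directly from \cite{BBCCN18}, so the comparison is against their argument. Your high-level strategy (analyse the OPE of the strong generators, track the conformal weight and mode index of the modes $W_{j,l}$ that can appear in $[W_{i,k},W_{i',k'}]$, and use a combinatorial inequality for $\lambda(\cdot)$ that is specific to descending partitions) is the right one. However, the combinatorial input you propose is wrong in two ways, and this leaves a genuine gap.

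First, the claimed subadditivity $\lambda(i+i') \le \lambda(i) + \lambda(i')$ is simply false for descending partitions. Take $\lambda = (3,2,1) \vdash 6$: then $\lambda(3) = 1$ but $\lambda(6) = 3 > 1 + 1$. Your box-counting justification goes wrong precisely because $\lambda$ is descending: after using the first $\lambda(i)$ (largest) parts to cover $i$ boxes, the \emph{next} $\lambda(i')$ parts are smaller than the first $\lambda(i')$ parts, so they need not cover $i'$ further boxes. (The inequality that does hold for descending $\lambda$ is in the opposite spirit, e.g.\ $\lambda(i+i'-1) \ge \lambda(i) + \lambda(i') - 1$.) Second, even granting some inequality of this type, the direction you use it in does not close the argument. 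You need a \emph{lower} bound on $\lambda(j) + l$ to conclude $(j,l) \in I_\lambda$, but $\lambda(j) \le \lambda(i) + \lambda(i')$ only gives $\lambda(j) + l \le \lambda(i) + \lambda(i') + l$, an upper bound. Combined with $l \ge k + k'$ and $\lambda(j) \ge 1$ one gets $\lambda(j) + l \ge 1 + k + k'$, but since $(i,k),(i',k') \in I_\lambda$ only gives $k + k' \ge 2 - \lambda(i) - \lambda(i')$, this bound can be $\le 0$ whenever $\lambda(i) + \lambda(i') \ge 3$. The ``careful accounting'' you defer to is exactly where the content of the theorem lives: one must exploit the precise trade-off in the $\mathcal W(\mathfrak{gl}_r)$ OPE between the weight drop $i+i'-j$, the pole order, and the resulting shift in $l$ (and also handle the nested normally-ordered terms by rewriting them so the rightmost factor lies in $I_\lambda$), and none of that is supplied. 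As written, the proposal does not establish the claim.
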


\medskip

\subsubsection{Twisted modules}
\label{sec:twisted_modules}

\medskip

Let $\sigma\in \mathfrak{S}_{r}$ be an arbitrary element of the Weyl group of $\mf{gl}_r$. It is a permutation of the elements $\chi^i$ and can thus be decomposed into $d\geq1$ cycles $\sigma=\sigma_1\cdots \sigma_d$ with each cycle $\sigma_\mu$ of length $r_\mu \geq 1$ such that $r_1+\cdots+r_d = r$. If $d=1$ then $\sigma$ is a transitive element. After relabelling of the elements of the basis of the Cartan, we can assume that $\sigma_\mu$ acts as
\[
\sigma_\mu : \quad \chi^{1+\rsumind{\mu-1}} \longrightarrow \chi^{2+\rsumind{\mu-1}} \longrightarrow \cdots \longrightarrow \chi^{r_\mu+\rsumind{\mu-1}} \longrightarrow \chi^{1+\rsumind{\mu-1}}\,,
\]
while keeping all other $\chi^i$ fixed, where we introduced the notation $\rsumind{\mu}\coloneqq\sum_{\nu=1}^\mu r_\nu$. It is then easy to check that
\[
v^{\mu,a}\coloneqq \sum_{j=1}^{r_\mu} \theta_{r_\mu}^{-aj} \chi^{j+\rsumind{\mu-1}} \,, \qquad \mu \in [d],\qquad a \in [r_{\mu}], \qquad \theta_{r_\mu}\coloneqq e^{2\ii\pi/r_\mu}
\]
is an eigenvector of the action of $\sigma$ on $\mathfrak{h}$, with eigenvalue $\theta_{r_\mu}^{a}$. We define the \emph{currents} via the state field-correspondence
\begin{equation}
\label{DefCurrents}
J^{\mu,a}(\tilde{x}) := {}^{\sigma}Y\big(v^{\mu,a}_{(-1)} \ket{0},\tilde{x}\big)\dd \tilde{x} \coloneqq  \sum_{k \in \frac{a}{r_\mu}+\mathbb{Z}} J^\mu_{r_\mu k} \frac{\dd \tilde{x}}{\tilde{x}^{k+1}}
\end{equation}
with fractional mode expansion on these eigenvectors, after introducing the differential operators
\begin{equation}
\label{repJdiff} J^\mu_k = \left\{\begin{array}{lll} \hbar \partial_{x^\mu_k} & & {\rm if}\,\,k > 0 \\ \hbar^{\frac{1}{2}}Q_\mu & & {\rm if}\,\,k = 0 \\ -k\, x^\mu_{-k} & & {\rm if}\,\,k < 0 \end{array}\right.\,,
\end{equation}
with $Q_\mu \in \Complex$, acting on the space $\mc{T}\coloneqq \Complex_{\hbar^{\frac{1}{2}}}[(x^\mu_a)_{\mu \in [d],\,\, a>0}]$. Note that the formal variables $ \tilde{x}$ and $x^\mu_i$ are unrelated. The state-field correspondence ${}^{\sigma}Y$ can be extended to the whole space $\mc{F}_{\mathfrak{h}}$ by using formula \eqref{productformY}. This turns $(\mc{T},{}^{\sigma}Y)$ into a twisted representation of $\mc{F}_{\mathfrak{h}}$, whose restriction to $\mc{W}(\mf{gl}_r)$ becomes an (untwisted) representation of $\mc{W}(\mathfrak{gl}_r)$.  For details see \cite{Doyon,BBCCN18}. We define the \define{twist modes} $ {}^{\sigma}W_{i,k} $ by
\begin{equation*}
{}^{\sigma}W_i(\tilde{x}) \coloneqq {}^{\sigma}Y(w_i, \tilde{x})\, (\dd \tilde{x})^{i} = \sum_{k \in \Z} {}^{\sigma}W_{i,k} \frac{ (\dd \tilde{x})^i}{\tilde{x}^{k+i}}\,,
\end{equation*}
where $w_i$ are the strong generators of $\mc{W}(\mf{gl}_r)$ from \eqref{strongenw}. They are differential operators acting on $\mc{T}$.

\begin{lemma} \cite[Proposition 4.5 and Lemma 4.15]{BBCCN18}.
\label{TwistedWmodes}
	For an automorphism $\sigma$ with $d$ cycles of respective lengths $r_\mu$ the $\sigma$-twisted modes read
	\begin{equation}
	\label{eq:twist_mode_arbitr}
	{}^{\sigma}W_{i,k} = \sum_{M \subseteq [d]} \sum_{\substack{i_{\mu} \in [r_\mu],\,\mu \in M \\ \sum_{\mu} i_{\mu} = i}} \sum_{\substack{\mathbf{k} \in \mathbb{Z}^{M} \\ \sum_{\mu} k_{\mu} = k}} \prod_{\mu \in M} \frac{1}{r_\mu^{i_{\mu} - 1}}\,W^{\mu}_{i_\mu,k_\mu}\,,
	\end{equation}
	where the $W^{\mu}_{i_\mu,k_\mu}$ are defined as
	\begin{equation}
	\label{eq:twist_mode_coxeter}
	 W^{\mu}_{i_\mu,k_\mu} = \frac{1}{r_\mu} \sum_{j_\mu = 0}^{\lfloor i_\mu/2 \rfloor} \frac{i_\mu!\,\hbar^{j_\mu}}{2^{j_\mu}j_\mu!(i_\mu - 2j_\mu)!} \sum_{\substack{p^\mu_{2j_{\mu} + 1},\ldots,p^\mu_{i_\mu} \in \mathbb{Z} \\ \sum_{l} p^\mu_{l} = r_\mu k_\mu }} \Psi^{(j_\mu)}_{r_\mu}(p^\mu_{2j_\mu + 1},p^\mu_{2j_\mu + 2},\ldots,p^\mu_{i_\mu})\,\,\,\norder{\prod_{l = 2j_\mu + 1}^{i_\mu} J^\mu_{p^\mu_{l}}}\,,
	\end{equation}
	with coefficients $\Psi^{(j_\mu)}_{r_\mu}(\cdots) \in \Rational$ admitting a representation in terms of sums over $r$th roots of unity:
\begin{equation}
\label{eq:psidef}
\Psi^{(j)}_r (a_{2j+1},\ldots, a_i) \coloneqq  \frac{1}{i!} \sum_{\substack{m_1, \ldots , m_{i}=0 \\ m_{l} \neq m_{l'}}}^{r-1} \left( \prod_{l'=1}^{j}\frac{\theta^{m_{2l' - 1}+m_{2l'}}}{(\theta^{m_{2l'}} - \theta^{m_{2l' - 1}})^2}   \prod_{l=2j+1}^{i}\theta^{-m_{l} a_{l}}\right)\,,
\end{equation}
where $\theta=e^{2\ii\pi/r}$.
\end{lemma}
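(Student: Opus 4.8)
The statement to prove is \cref{TwistedWmodes}, giving the closed form \eqref{eq:twist_mode_arbitr}--\eqref{eq:psidef} of the $\sigma$-twisted modes ${}^{\sigma}W_{i,k}$ of the strong generators $w_i = {\rm e}_i(\chi^1_{-1},\dots,\chi^r_{-1})\ket{0}$. The plan is to compute the twisted state-field correspondence ${}^{\sigma}Y(w_i,\tilde x)$ explicitly, starting from the decomposition of $\mathfrak h$ into $\sigma$-isotypic pieces and the diagonalisation provided by the eigenvectors $v^{\mu,a}$ of \eqref{DefCurrents}. First I would express the elementary symmetric polynomial ${\rm e}_i(\chi^1,\dots,\chi^r)$ cycle by cycle: since $\sigma=\sigma_1\cdots\sigma_d$ acts blockwise and ${\rm e}_i$ of a disjoint union of variable sets factors as $\sum_{\sum_\mu i_\mu = i}\prod_\mu {\rm e}_{i_\mu}$, the generating function identity $\prod_{j=1}^r(1+t\chi^j) = \prod_{\mu=1}^d \prod_{j=1}^{r_\mu}(1+t\chi^{j+\rsumind{\mu-1}})$ immediately produces the outer two sums of \eqref{eq:twist_mode_arbitr} (over $M\subseteq[d]$ — the cycles actually contributing — and over compositions $\sum_\mu i_\mu = i$) together with the claimed product structure $\prod_{\mu\in M}\frac{1}{r_\mu^{i_\mu-1}}W^\mu_{i_\mu,k_\mu}$, once the normalisation factors are tracked. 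The factor $r_\mu^{-(i_\mu-1)}$ and the mode-index constraint $\sum_\mu k_\mu = k$ arise from the change of variables $\tilde x \mapsto \tilde x^{1/r_\mu}$ implicit in the twisted vertex operator on a cycle of length $r_\mu$.

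This reduces everything to the transitive (single-cycle) case, i.e. computing $W^\mu_{i_\mu,k_\mu}$, the modes of ${}^{\sigma_\mu}Y\big({\rm e}_{i_\mu}(\chi^{1+\rsumind{\mu-1}}_{-1},\dots)\ket 0,\tilde x\big)$ for a length-$r_\mu$ cyclic permutation. Here I would invoke the normal-ordered product formula \eqref{productformY} for the state-field correspondence of a product of currents, applied to the ${\rm e}_{i_\mu}$-many distinct roots $\chi^{j+\rsumind{\mu-1}}$, and then rewrite in terms of the diagonalising currents $J^{\mu,a}(\tilde x)$ of \eqref{DefCurrents}. The transition matrix between the $\chi^{j}_{-1}\ket 0$-currents and the $v^{\mu,a}_{-1}\ket 0$-currents is the finite Fourier transform, so the sum over the $i_\mu$ chosen roots turns into the sum over $r_\mu$-th roots of unity $\theta_{r_\mu}^{m}$. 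The key subtlety is that the normal-ordered product of twisted currents is not simply the naive product: contracting two twisted currents $J^{\mu,a}(\tilde x)J^{\mu,a'}(\tilde x')$ produces, besides the normal-ordered term, a c-number (OPE singular part) proportional to $\hbar$, coming from the twisted two-point function. Each such contraction pairs two of the $i_\mu$ variables, contributes a factor $\tfrac{\hbar}{2}$ times a rational kernel (the coincidence limit of the twisted propagator, which after the Fourier change of variables is $\theta^{m+m'}/(\theta^{m'}-\theta^{m})^2$), and reduces $i_\mu$ by $2$. Summing over all ways to choose $j_\mu$ disjoint pairs among the $i_\mu$ slots yields precisely the sum $\sum_{j_\mu=0}^{\lfloor i_\mu/2\rfloor}$ with the combinatorial coefficient $\frac{i_\mu!}{2^{j_\mu}j_\mu!(i_\mu-2j_\mu)!}\hbar^{j_\mu}$ in \eqref{eq:twist_mode_coxeter}, and the kernels assemble into $\Psi^{(j_\mu)}_{r_\mu}$ as in \eqref{eq:psidef}, with the restriction $m_l\neq m_{l'}$ reflecting that distinct roots $\chi^j$ were chosen by ${\rm e}_{i_\mu}$. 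The remaining $J^\mu_{p_l}$'s stay normal-ordered, with the constraint $\sum_l p^\mu_l = r_\mu k_\mu$ from extracting the $\tilde x^{-(k_\mu+i_\mu)}$-coefficient (the factor $r_\mu$ again from the $\tilde x\mapsto\tilde x^{1/r_\mu}$ rescaling).

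The main obstacle I anticipate is the bookkeeping of the OPE/contraction combinatorics in the twisted setting: getting the twisted propagator $\langle v^{\mu,a}(\tilde x) v^{\mu,a'}(\tilde x')\rangle$ right, taking its regularised coincidence limit correctly (this is where a naive computation can be off by signs or by the $\tfrac12$), and verifying that the sum over pair-partitions of the slot set $\{1,\dots,i_\mu\}$ collapses to the stated symmetric expression rather than to something that still remembers the pairing. A secondary point is to justify convergence: one must check that ${}^{\sigma}W_{i,k}$ so defined is a well-defined element of the completed algebra acting on $\mc{T}$, i.e. that for fixed $i,k$ only finitely many normal-ordered monomials of each degree appear — this follows because the constraint $\sum p^\mu_l = r_\mu k_\mu$ with a bounded number of factors $i_\mu\le r_\mu$ leaves only finitely many mode-configurations of annihilation type, the rest being absorbed by normal ordering. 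Both of these are present already in \cite{BBCCN18} for $\sigma$ a single $r$-cycle (Proposition 4.5 there) and for the general reassembly (Lemma 4.15), so concretely I would structure the proof as: (i) reduce to the transitive case via the blockwise factorisation of ${\rm e}_i$, citing \cite[Lemma 4.15]{BBCCN18} for the gluing of cycles; (ii) in the transitive case reproduce \cite[Proposition 4.5]{BBCCN18}, carefully carrying our modified normalisation (no $r^{i-1}$ factor, cf. \cref{notationWik}) through \eqref{eq:twist_mode_coxeter}; (iii) identify the root-of-unity sums with $\Psi^{(j)}_r$ and check the $m_l\neq m_{l'}$ constraint against the elementary-symmetric (square-free) nature of $w_i$.
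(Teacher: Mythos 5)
The paper provides no proof of this lemma: it is imported verbatim (modulo the normalisation convention of \cref{notationWik}) as a citation to \cite[Proposition~4.5 and Lemma~4.15]{BBCCN18}, and you correctly identify it as such. Your sketch is a faithful account of the argument one would need, and it matches the structure of the referenced result: the blockwise factorisation of $\mathrm{e}_i$ across cycles gives the outer sums of \eqref{eq:twist_mode_arbitr} and reduces to the transitive case; in the transitive case, passing from the $\chi^j$-currents to the $v^{\mu,a}$-currents by discrete Fourier transform introduces the root-of-unity sums, while the regularised coincidence limits of the twisted OPE supply the $\hbar$-corrections paired off by $\Psi^{(j_\mu)}_{r_\mu}$; and the constraint $m_l\neq m_{l'}$ in \eqref{eq:psidef} comes from the square-free (elementary-symmetric) nature of $w_i$. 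Your flag about the dropped $r^{i-1}$ factor is the one real normalisation adjustment required when transcribing the BBCCN18 formula into the present conventions. Since the paper offers no proof to compare against, the only substantive remark is that your step~(ii) is the technically heavy part — the twisted two-point function and its coincidence limit — and a full verification would require reproducing the computation in \cite[Section~4.1]{BBCCN18}, but the route you describe is the correct one.
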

In terms of generating series, this lemma can be restated as
\begin{align}
{}^{\sigma}W_{i}(\tilde{x}) & = \sum_{M \subseteq [d]} \,\,\sum_{\substack{i_{\mu} \in [r_{\mu}]\,\,\mu \in M \\  \sum_{\mu} i_{\mu} = i}} ~~\prod_{\mu \in M} \frac{W^{\mu}_{i_{\mu}}(\tilde{x})}{r_\mu^{i_\mu - 1}}\,, \nonumber \\
 W^\mu_{i}(\tilde{x}) & \coloneqq \frac{1}{r_\mu} \sum_{j = 0}^{\lfloor i/2 \rfloor} \frac{i!\,\hbar^{j}}{2^{j}j!(i - 2j)!} \bigg(\frac{\dd \tilde{x}}{\tilde{x}}\bigg)^{2j}\bigg(\sum_{a_{2j + 1},\ldots,a_{i} = 0}^{r_{\mu} - 1} \Psi^{(j)}_{r_\mu}(a_{2j + 1},\ldots,a_{i})\,\, \norder{ \prod_{l = 2j + 1}^{i} J^{\mu,a_l}(\tilde{x})}\,\,\bigg)\,. \label{Wikz}
\end{align}
If we also take a generating series in $i$ by defining
\begin{equation*}
\mc{W}^\mu(\tilde{x},u) \coloneqq \frac{u^{r_\mu}}{r_\mu} + \sum_{i = 1}^{r_\mu} W^\mu_i (\tilde{x}) u^{r_\mu -i}\,, \qquad \mc{W}(\tilde{x},u) \coloneqq u^r + \sum_{i = 1}^r {}^{\sigma}W_i(\tilde{x}) u^{r -i}\,,
\end{equation*}
this can be written compactly as
\begin{equation*}
\mc{W}(\tilde{x},u) = \prod_{\mu = 1}^{d} r_\mu \mc{W}^\mu (\tilde{x},u)\,.
\end{equation*}

Introducing the filtered vector space
\begin{equation}
\label{Evectspace}
E = \bigoplus_{k > 0} \bigoplus_{\mu = 1}^d \mathbb{C}\langle x_k^\mu \rangle\,,\qquad E_p \coloneqq \bigoplus_{k \leq p} \bigoplus_{\mu = 1}^d \mathbb{C}\langle x_k^\mu \rangle\,,
\end{equation}
we see from the condition of summations in \eqref{eq:twist_mode_arbitr}-\eqref{eq:twist_mode_coxeter} that each $W_{i,k}$ belongs to the completed Weyl algebra $\WeylAlgComp{E}{\hbar}$ according to the definitions in \cref{sec:infinit}. Even more, any family $(W_{i,k})_{(i,k) \in I}$ for which $\min_I k > -\infty$ is a filtered family of elements of $\WeylAlgComp{E}{\hbar}$.  Our goal is to construct Airy structures from these operators. The degree one condition requires the operators to have the  form $\hbar \partial + \higherorderterms{2}$. Unfortunately, ${}^{\sigma}W_{i,k}$ does not have this form as it is homogenous of degree $\deg {}^{\sigma}W_{i,k} = i$ in view of \eqref{eq:twist_mode_arbitr}. Thus in order to construct an Airy structure we first have to break up this homogeneity.

\medskip

\subsection{Airy structures from twisted \texorpdfstring{$\mc{W}(\mf{gl}_r)$}{W(glr)} modules}

\medskip

We can now summarise the main findings of \Cref{part:AiryStr}. Before presenting the new basic Airy structures we have found in Section~\ref{sec:W_gl_Airy_structs_all_shift}, we review in Section~\ref{sec:ninrign} the ones known from \cite{BBCCN18}. Section~\ref{sec:AllDilatonPolarize} gives the general form of the Airy structures that can be obtained by deformation of those basic ones through conjugation in the Weyl algebra, which will be important for the correspondence with spectral curves in \Cref{SCpart}. This provides so far \emph{sufficient conditions} on the parameters of our construction so that it does produce Airy structures. It is natural to wonder if these conditions are also necessary (classification problem). In Section~\ref{sec:NecCond} we summarise the conditions we have found \emph{necessary} to get Airy structures.

\medskip

\subsubsection{From a twist with a transitive element}

\medskip

\label{sec:ninrign}

First let $\sigma$ be a transitive element, i.e. $d=1$. Then ${}^{\sigma}W_{i,k} =r^{1-i} W^1_{i,k}$, and to simplify the notation we can omit the superscript $1$. One can break up the homogeneity of the differential operators ${}^{\sigma}W_{i,k}$ by performing a \define{dilaton shift}
\begin{equation}
\label{eq:dilaton_shift_coxeter}
J_{-s} \longrightarrow J_{-s} - 1
\end{equation}
for a fixed $s>0$ while keeping all other $J_a$ for $a\neq -s$ unchanged. Formally one defines
\begin{equation*}
{}^{\sigma}H_{i,k} \coloneqq  \hat{T}\cdot {}^{\sigma}W_{i,k} \cdot \hat{T}^{-1}\,, \qquad \hat{T} \coloneqq  \exp\left(-\frac{J^\mu_{s}}{\hbar s}\right) \,.
\end{equation*}
It follows from the Baker--Campbell--Hausdorff formula that conjugating with $\hat{T}$ means shifting the $J$s as in \eqref{eq:dilaton_shift_coxeter}. The action on the completed Weyl algebra $\WeylAlgComp{E}{\hbar}$ is then well-defined. Then, certain subsets of the modes ${}^{\sigma} H_{i,k}$ yield Airy structures.

\begin{theorem}
	\label{thm:W_gl_Airy_Coxeter}
	\cite[Theorem 4.9]{BBCCN18}
	Let $\sigma \in \mf{S}_r $ be transitive and $s\in [r + 1]$ with $r = \pm 1 \,\,{\rm mod}\,\, s$ and $J_0=0$. Let us define $\lambda^{r,s} \descPart r$ to be the descending partition
		\begin{equation}
	\label{eq:corresp_part_coxeter_Airy}
	\lambda^{r,s} = \left\{\begin{array}{lll}
	(r) & & {\rm if}\,\,s=1 \\
	(1^r) & & {\rm if}\,\,s = r+1 \\
	\big((r'+1)^{r''}, (r')^{s-r''}\big) & & {\rm if}\,\,r=r's + r''\,\,{\rm with}\,\,r''\in\{1,s-1\}
\end{array}\right.\,.
	\end{equation}
	Then, the family of operators
	\begin{equation*}
	r^{i - 1}\, {}^{\sigma}H_{i,k} \,\qquad i\in [r], \quad  k\geq 1 - \lambda^{r,s}(i) + \delta_{i,1}
	\end{equation*}
	forms an Airy structure in normal form on $E = \bigoplus_{k > 0} \mathbb{C}\langle x_k \rangle$, seen as a filtered vector space when equipped with the filtration $E_p = \bigoplus_{k \leq p} \mathbb{C}\langle x_k \rangle$.
\end{theorem}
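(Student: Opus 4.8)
The statement is \cite[Theorem 4.9]{BBCCN18}; here is a sketch. Write $\lambda$ for the descending partition \eqref{eq:corresp_part_coxeter_Airy}. The plan is to verify, for the family $(r^{i-1}\,{}^{\sigma}H_{i,k})$ over the stated range, the degree one condition and the subalgebra condition of \cref{deffilter}, after identifying the index set with $\N^*$. First the bookkeeping of indices: since $\lambda(1)=1$, the range $i\in[r]$, $k\geq 1-\lambda(i)+\delta_{i,1}$ equals $I_{\lambda}\setminus\{(1,0)\}$ with $I_{\lambda}$ as in \eqref{eq:part_to_index_set}, and $J_0=0$ (i.e.\ $Q_1=0$ in \eqref{repJdiff}) makes ${}^{\sigma}W_{1,0}=J^1_0=0$, so dropping $(1,0)$ is harmless. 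Identify the coordinates on $E=\bigoplus_{k>0}\C\<x_k\>$ with the $x^1_p$, $p\in\N^*$, so $J^1_p=\hbar\partial_{x_p}$ for $p>0$. I claim $(1,k)\mapsto rk$ and, for $i\geq2$, $(i,k)\mapsto rk+(i-1)s$ is a bijection $I_{\lambda}\setminus\{(1,0)\}\to\N^*$: the hypothesis $r\equiv\pm1\pmod s$ forces $\gcd(r,s)=1$, so $i\mapsto(i-1)s\bmod r$ is a bijection $[r]\to\Z/r\Z$; hence the images for distinct $i$ lie in distinct residue classes mod $r$, each an arithmetic progression of step $r$, and $r\equiv\pm1\pmod s$ (equivalently $\lambda(i)=\lceil(i-1)s/r\rceil$ for $i\geq2$) is exactly what makes each progression start at the least positive element of its class, so the union is $\N^*$ bijectively.

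I would then compute the degree-$\leq 1$ part of $r^{i-1}\,{}^{\sigma}H_{i,k}$. For $i=1$, $k\geq1$, formula \eqref{eq:twist_mode_coxeter} (using $\Psi^{(0)}_r(rk)=r$) gives ${}^{\sigma}W_{1,k}=J^1_{rk}$, and $\hat{T}=\exp(-J^1_s/(\hbar s))$ only translates $x^1_s$ by a constant, fixing $J^1_{rk}$ (since $rk\neq s$ for $r\geq2$), so ${}^{\sigma}H_{1,k}=\hbar\partial_{x_{rk}}$ exactly. For $i\geq2$: as $\deg\hbar^{1/2}=\deg J^1_p=1$, a monomial produced by the shift $J_{-s}\to J_{-s}-1$ from the $\hbar^{j}$-summand of \eqref{eq:twist_mode_coxeter} with $n$ of its $i-2j$ fields replaced by $-1$ has degree $2j+(i-2j)-n=i-n$; degree $\leq1$ forces $n=i-1$, whence $j=0$ (as $n\leq i-2j$), and a short check rules out a degree-$0$ contribution on our range. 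The surviving field then carries index $q=rk+(i-1)s$, and $\gcd(r,s)=1$ gives $q\neq0$ there, so no spurious $\hbar^{1/2}$ appears. Evaluating the root-of-unity sum \eqref{eq:psidef} yields $\Psi^{(0)}_r(\underbrace{-s,\dots,-s}_{i-1},q)=(-1)^{i-1}r/i$; together with the factor $i$ (choice of surviving field), the symmetry of $\Psi^{(0)}_r$ and of normal ordering, the prefactors in \eqref{eq:twist_mode_arbitr}--\eqref{eq:twist_mode_coxeter} and the normalisation $r^{i-1}$, the degree-$\leq1$ part of $r^{i-1}\,{}^{\sigma}H_{i,k}$ is exactly $J^1_q=\hbar\partial_{x_{q(i,k)}}$. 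By the bijection above, relabelling operators by $q$ gives precisely \eqref{eq:deg_one_cond}.

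For the subalgebra condition, $({}^{\sigma}Y,\mc{T})$ being a representation of $\mc{W}(\mf{gl}_r)$ realises the modes ${}^{\sigma}W_{i,k}$ as elements of $\WeylAlgComp{E}{\hbar}$ obeying the same relations as the abstract modes in $L(\mc{A})$; by \cref{prop:desc_part_Lie_subalg} (applicable because $\lambda$ is descending) the family $({}^{\sigma}W_{i,k})_{(i,k)\in I_{\lambda}}$ generates a Lie subalgebra of $\WeylAlgComp{E}{\hbar}$ for the bracket $\hbar^{-1}[\cdot,\cdot]$, all sums being well defined since $\min k>-\infty$ on $I_{\lambda}$ (so the family is filtered, cf.\ the remark after \eqref{Evectspace}). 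Conjugation by $\hat{T}$ is an automorphism of $\WeylAlgComp{E}{\hbar}$, so $({}^{\sigma}H_{i,k})_{(i,k)\in I_{\lambda}}$, and its restriction to $I_{\lambda}\setminus\{(1,0)\}$ (since ${}^{\sigma}H_{1,0}=0$), also generate a Lie subalgebra; rescaling by the scalars $r^{i-1}$ (and the structure constants correspondingly) does not affect this, giving \eqref{eq:gr_lie_subalg_cond}. With the degree one condition and the bijection of filtered index sets, \cref{deffilter} is satisfied, so the family is an Airy structure in normal form on $E$.

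The main obstacle is the degree one step: isolating the low-degree part of the unwieldy shifted modes, evaluating $\Psi^{(0)}_r(-s,\dots,-s,q)$ as a sum over $r$-th roots of unity, and tracking all combinatorial prefactors so that $r^{i-1}$ produces exactly $\hbar\partial$ --- intertwined with the number-theoretic fact that, precisely when $r\equiv\pm1\pmod s$, the threshold $i\mapsto1-\lambda(i)$ arises from a genuine descending partition (so that \cref{prop:desc_part_Lie_subalg} applies) and the induced assignment to $\N^*$ is bijective.
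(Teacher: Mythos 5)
Your sketch is correct, and it matches the method of the reference [BBCCN18, Theorem 4.9] that the paper cites for this statement, which is also the $d=1$ specialisation of the paper's own proof of \cref{prop:AiryAllDilatonPolarize} in \cref{sec:W_gl_airy_structs_proof} (isolate the degree-one part via the dilaton-shifted $j=0$ summand, show the resulting matrix is the bijection $(i,k)\mapsto rk+(i-1)s$ onto $\N^*$, check degree zero vanishes, and invoke \cref{prop:desc_part_Lie_subalg} with the descending partition for the subalgebra condition). In particular your evaluation $\Psi^{(0)}_r(-s,\ldots,-s,q)=(-1)^{i-1}r/i$ (via $e_j(\theta,\ldots,\theta^{r-1})=(-1)^j$) and the accounting of the $i$ orderings and the $(-1)^{i-1}$ from the shift do combine to give exactly $J_q$, and your number-theoretic argument that the range $k\geq 1-\lambda(i)+\delta_{i,1}$ maps bijectively onto $\N^*$ under $r\equiv\pm1\pmod s$ is the single-cycle case of \cref{prop:index_set_rewritten} and \cref{prop:k_min_partition_one_cycle}.
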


The partition $\lambda^{r,s}$ chosen in \eqref{eq:corresp_part_coxeter_Airy} to define the mode set determines the subalgebra associated to this Airy structure by using \cref{prop:desc_part_Lie_subalg}. The corresponding Young diagrams are depicted in \cref{tab:coxeter_partition_diagrams}.

\begin{table}[t]
	\centering
		{\def\arraystretch{2.2}
			\begin{tabular}{|c|@{}p{2pt}@{}|c|c|c|c|}
				\cline{1-1}\cline{3-6}
				$r,s$ && $s=1$  & $r=r's + 1$ & $r=r's + s - 1$ & $s=r+1$ \\
				\cline{1-1}\cline{3-6}
				$\lambda^{r,s}$ &&
				\begin{minipage}{0.14\textwidth}
					\centering
					\vspace{0.7em}
				$\overbrace{\Scale[0.6]{\begin{ytableau}%
					\, & \, & \, & \none[\dots] & \, & \,\\
					\none \\
					\none \\
					\none \\
					\none \\
					\none \\
					\end{ytableau}}}^{r\text{ boxes}}$
				\vspace{0.7em}
				\end{minipage} &
				\begin{minipage}{0.14\textwidth}
			\centering
			\vspace{0.7em}
			$\overbrace{\Scale[0.6]{\begin{ytableau}%
					\, & \, & \none[\dots] & \, & \, \\
					\, & \, & \none[\dots] & \, \\
					\, & \, & \none[\dots] & \, \\
					\none[\vdots] & \none[\vdots] & \none[\ddots] & \none[\vdots] \\
					\, & \, & \none[\dots] & \, \\
					\, & \, & \none[\dots] & \, \\
					\end{ytableau}}}^{(r' + 1)\text{ columns}}$
				\vspace{0.7em}
				\end{minipage} & 
				\begin{minipage}{0.14\textwidth}
					\centering
					\vspace{0.7em}
					$\overbrace{\Scale[0.6]{\begin{ytableau}%
					\, & \, & \none[\dots] & \, & \, \\
					\, & \, & \none[\dots] & \, & \, \\
					\, & \, & \none[\dots] & \, & \, \\
					\none[\vdots] & \none[\vdots] & \none[\ddots] & \none[\vdots] & \none[\vdots] \\
					\, & \, & \none[\dots] & \, & \, \\
					\, & \, & \none[\dots] & \, \\
						\end{ytableau}}}^{(r' + 1)\text{ columns}}$
					\vspace{0.7em}
				\end{minipage} &
				\begin{minipage}{0.14\textwidth}
					\centering
					\vspace{0.7em}
					$\Scale[0.6]{\begin{ytableau}%
					\none \\
					\none \\
					\, \\
					\, \\
					\, \\
					\none[\vdots] \\
					\, \\
					\, \\
					\end{ytableau}}$
					\vspace{0.7em}
				\end{minipage} \\
				\cline{1-1}\cline{3-6}
			\end{tabular}}
		\vspace{0.3cm}
		\caption{The partitions $\lambda^{r,s}$ associated to different values $s$.}
		\label{tab:coxeter_partition_diagrams}
\end{table}

\medskip

\subsubsection{A generalisation to arbitrary twists}
\medskip
\label{sec:W_gl_Airy_structs_all_shift}
Let $\sigma\in\mf{S}_r$ be a permutation with $d$ cycles of respective lengths $r_1,\ldots,r_d$, so that $r_1+\cdots+r_d=r$. Then the differential operators ${}^{\sigma}W_{i,k}$ act on the space
\begin{equation*}
\Complex_{\hbar^{\frac{1}{2}}}\big[(x^\mu_a)_{\mu\in [d],\,\,a>0}\big]\,.
\end{equation*}
Again, we will break up the homogeneity of ${}^{\sigma}W_{i,k}$ by performing a dilaton shift. There are $d$ independent families of variables $(x^\mu_a)_{a>0}$ labelled by $\mu \in [d]$ in which we can perform the shift. Two types of shifts will in fact lead to Airy structures:
\begin{itemize}
	\item simultaneous shifts in each of the $d$ sets of variables.
	
	\item simultaneous shifts in all but one set of variables and the label $\mu$ of the unshifted set of variables correspond to a fixed point $r_\mu=1$.
\end{itemize}
Let us thus choose $s_\mu \in \N^* \cup \{ \infty \} $ and, for each $\mu$ such that $ s_\mu \neq \infty $, $t_\mu\in\Complex^*$, for each $\mu\in [d]$ and define
\begin{equation}
\label{eq:dilaton_arbitr_autom}
{}^{\sigma}H_{i,k} \coloneqq  \hat{T} \cdot {}^{\sigma}W_{i,k} \cdot \hat{T}^{-1}\,, \qquad \hat{T} \coloneqq  \prod_{\mu \in [d]\,\,:\,\,s_\mu \neq \infty} \exp\left( -r_{\mu}t_\mu \frac{J^\mu_{s_{\mu}}}{\hbar s_{\mu}} \right)\,.
\end{equation}
We chose to include a normalisation factor $r_{\mu}$ to simplify later computations.  Remember that conjugating with $\hat{T}$ means nothing but shifting
\begin{equation*}
\forall \mu\in[d],\quad k\in\mathbb{Z},\qquad J^\mu_{-k} \longrightarrow J^\mu_{-k}  -r_\mu t _\mu \, \delta_{k, s_\mu}.
\end{equation*}
and its action on the completed Weyl algebra $\WeylAlgComp{E}{\hbar}$ is well-defined. It turns out that, with the right choice of parameters, certain subsets of these operators ${}^{\sigma} H_{i,k}$ indeed form an Airy structure.

\begin{theorem}
	\label{thm:W_gl_Airy_arbitrary_autom}
	Let $ d \geq 2$, $ r_1, \ldots, r_d \geq 1$ and $ s_1, \ldots, s_d \in \N^* \cup \{ \infty \} $ be such that
	\begin{equation}
	\label{eq:fracordering}
	\frac{r_1}{s_1} \geq \frac{r_2}{s_2} \geq \cdots \geq \frac{r_d}{s_d}.
	\end{equation}
Let $Q_1,\ldots,Q_{d} \in \mathbb{C}$, $t_1,\ldots,t_{d - 1} \in \mathbb{C}^*$, and if $s_{d} \neq \infty$ also $t_{d} \in \mathbb{C}^*$. Assume that
\[
\sum_{\mu = 1}^{d} Q_{\mu} = 0,\qquad t_{\mu}^{r_{\mu}} \neq t_{\nu}^{r_{\nu}}\,\,\,{\rm whenever}\,\, \mu \neq \nu\,\,{\rm and}\,\,(r_{\mu},s_{\mu}) = (r_{\nu},s_{\nu}).
\]
Define $ r = \sum_{\mu=1}^d r_\mu $ and let $ \sigma \in \mf{S}_r $ be a permutation made of disjoint cycles of respective lengths $ (r_1, \ldots, r_d)$. Assume that
\begin{itemize}
\item $r_1 = -1\,\,{\rm mod}\,\, s_1$.
\item $ s_\mu = 1$ for any $ \mu \notin \{1, d\}$;
\item $r_d = 1 \,\,{\rm mod}\,\, s_{d}$.
\end{itemize}
and define $\lambda \descPart r$ to be the descending partition 
	\begin{equation}
\label{thepartlam}	\lambda =\begin{cases}
	\big({(r_{1}'+1)}^{s_{1}} , r_2 , r_3 , \ldots , r_{d-1} , {r_{d}'}^{s_{d}}\big) & {\rm if}\,\,r_d\neq 1 \\
	\big({(r_{1}'+1)}^{s_{1}} , r_2 , r_3 , \ldots , r_{d-1}\big) & {\rm if}\,\,r_d=1
	\end{cases}\,,
	\end{equation}
where we set $r_{\mu}'\coloneqq \bigl\lfloor \frac{r_\mu}{s_\mu} \bigr\rfloor$. Then, the family
\[
{}^{\sigma}H_{i,k}\,,\qquad  i \in [r],\qquad k \geq 1 - \lambda(i) + \delta_{i,1}
\]
is an Airy structure (not necessarily in normal form) on the filtered vector space $E$ given in \eqref{Evectspace}.
\end{theorem}
	We call the case $ s_d = \infty $ the \emph{exceptional case} and the other case the \emph{standard case}. The proof of the \nameCref{thm:W_gl_Airy_arbitrary_autom} will be presented in \cref{sec:W_gl_airy_structs_proof}. The exceptional $d = 2$ case was already obtained in \cite[Theorem 4.16]{BBCCN18}.
	
\begin{remark}
Note that the conditions imply that $ s_1 \in [r_1+1]$ and if $ r_d >1$, $s_d \in [r_d-1]$.
\end{remark}

The partition $\lambda$ in \eqref{thepartlam} can be depicted as
\begin{equation}
\label{eq:arb_autom_corresp_diagram}
\includegraphics[scale=1]{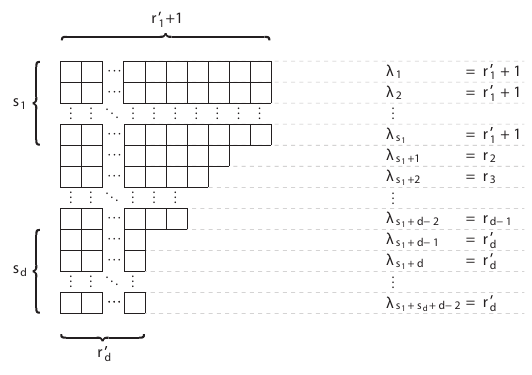}
\end{equation}
In case $r_d=1$ the last block ${r_d'}^{s_d}$ is simply absent. Going through all cases one thus finds that every descending partition $\lambda\descPart r$ is either of the form depicted in \cref{tab:coxeter_partition_diagrams} or of the form \eqref{eq:arb_autom_corresp_diagram}. This implies that all the subalgebras mentioned in \cref{prop:desc_part_Lie_subalg} support two Airy structures: one standard and one exceptional.

\medskip

\Cref{thm:W_gl_Airy_arbitrary_autom} guarantees us the existence of a partition function $Z$ solving the system of differential equations \eqref{HihIH} associated to the operators ${}^{\sigma}H_{i,k}$ but especially the conditions on the choice of the integers $(r_\mu,s_\mu)_{\mu=1}^d$ are rather restrictive. It turns out that if we content ourselves with the existence of a partition function that solves the differential equations \eqref{HihIH} only up to corrections in $\hbar^{\frac{1}{2}}$ we can relax the conditions on the input data drastically.
\begin{theorem}
	\label{thm:genus0_soln_admissible_Hik}
	Suppose the $(t_\mu)_{\mu = 1}^{d}$ satisfy the condition stated in \cref{thm:W_gl_Airy_arbitrary_autom} while $(r_\mu,s_\mu)_{\mu = 1}^{d}$ are subject to
	\begin{itemize}
		\item $s_\mu = \infty$ for at most one $\mu\in[d]$ and in this case $r_\mu=1$.
		\item  $r_\mu = \pm 1 \,\,{\rm mod}\,\, s_\mu$ for all $\mu\in[d]$.
		
		\item For all $\mu_1 \neq \mu_2$ with $s_{\mu_i}>2$ such that either 
		\begin{equation*}
			r_{\mu_1} = 1 \,\,{\rm mod}\,\, s_{\mu_1} \text{ and } r_{\mu_2} = 1 \,\,{\rm mod}\,\, s_{\mu_2}
		\end{equation*}
		or
		\begin{equation*}
			r_{\mu_1} = -1 \,\,{\rm mod}\,\, s_{\mu_1} \text{ and } r_{\mu_2} = -1 \,\,{\rm mod}\,\, s_{\mu_2}
		\end{equation*}
		one has $\big\lfloor \tfrac{r_{\mu_1}}{s_{\mu_1}} \big\rfloor \neq \big\lfloor\tfrac{ r_{\mu_2}}{s_{\mu_2}}\big\rfloor$.
		
		\item If there are pairwise distinct $\mu_1,\mu_2,\mu_3\in[d]$ with $\big\lfloor\frac{r_{\mu_1}}{s_{\mu_1}}\big\rfloor = \big\lfloor\frac{r_{\mu_2}}{s_{\mu_2}}\big\rfloor = \big\lfloor\frac{r_{\mu_3}}{s_{\mu_3}}\big\rfloor$, then there is an $m\in \{1,2,3\}$ for which $s_{\mu_m}=1$.
	\end{itemize}
	Then the family
	\[
	{}^{\sigma}H_{i,k}\,,\qquad  i \in [r],\qquad k \geq 1 - \lambda(i) + \delta_{i,1}
	\]
	satisfies the degree one condition of an Airy structure and there is a unique solution $Z=\exp \hbar^{-1}F_0$ to
	\begin{equation}
		\label{eq:Hik_diff_eq_const_ord}
		\forall i \in [r],\, k \geq 1 - \lambda(i) + \delta_{i,1} \qquad Z^{-1} \, {}^{\sigma}H_{i,k} \, Z \cdot 1 = o(\hbar^{\frac{1}{2}})  
	\end{equation}
	with $F_0=\sum_{n\geq 3} \frac{1}{n!}F_{0,n}$ where $F_{0,n} \in {\rm Sym}^n(E^*)$.
\end{theorem}
Here, the partition $\lambda$ specifying the mode set is obtained from the concatenation $(\lambda^{r_1,s_1},\ldots,\lambda^{r_d,s_d})$, which is build from the partitions $\lambda^{r_\mu,s_\mu}$ specified in \Cref{tab:coxeter_partition_diagrams}, by shifting a box from the first row of $\lambda^{r_{\mu+1},s_{\mu+1}}$ to the last row of $\lambda^{r_{\mu},s_{\mu}}$ for each $\mu\in[d-1]$. In this construction we are assuming that we chose a labelling satisfying \eqref{eq:fracordering}. See also equation \eqref{eq:lambda_decomposed} for an explicit description of this partition. The proof of \cref{thm:genus0_soln_admissible_Hik} can be found in \cref{sec:genuszerosolutionexists}.

\medskip

\subsubsection{Arbitrary dilaton shifts and changes of polarisation}
\label{sec:AllDilatonPolarize}
\medskip
In order to connect with the theory of the Chekhov--Eynard--Orantin topological recursion, we ought to be able to conjugate the ${}^{\sigma}W_{i,k} $ with more general operators, inducing dilaton shifts in several of the variables and also making a \emph{change in polarisation}. This section is completely parallel to \cite[Section 4.1.5]{BBCCN18}.

First, let us consider a general dilaton shift
\begin{equation*}
 \hat{T} \coloneqq \exp\left(\sum_{\substack{\mu \in [d] \\ k > 0}} \Big(\hbar^{-1} F_{0,1}\big[\begin{smallmatrix} \mu \\ -k \end{smallmatrix}\big] + \hbar^{-\frac{1}{2}} F_{\frac{1}{2},1}\big[\begin{smallmatrix} \mu \\ -k \end{smallmatrix}\big]\Big)\,\frac{J^{\mu}_{k}}{k}\right)\,,\qquad s_\mu \coloneqq \min \big\{ k  > 0 \quad \big| \quad  F_{0,1}\big[\begin{smallmatrix} \mu \\ -k \end{smallmatrix}\big] \neq 0 \big\}.
\end{equation*} 
with arbitrary scalars $F_{h,1}\big[\begin{smallmatrix} \mu \\ -k \end{smallmatrix}\big]$ for $h \in \{0,\tfrac{1}{2}\}$. The seemingly complicated way to denote these scalars will become natural in \Cref{SCpart}, see e.g. \cref{locun}. Effectively, this shifts
\[
J_{-k}^{\mu} \rightarrow J_{-k}^{\mu} + F_{0,1}\big[\begin{smallmatrix} \mu \\ -k \end{smallmatrix}\big] + \hbar^{\frac{1}{2}}\,F_{\frac{1}{2},1}\big[\begin{smallmatrix} \mu \\ -k \end{smallmatrix}\big]
\]
and by construction of the completed Weyl algebra, its action on $\WeylAlgComp{E}{\hbar}$ is well-defined. It should be interpreted as a deformation of the case where there is a single non-zero shift
\begin{equation}
\label{F01tum}
F_{0,1}\big[\begin{smallmatrix} \mu \\ -s_\mu \end{smallmatrix}\big] = - r_{\mu}t_{\mu} \,.
\end{equation}
The $F_{\frac{1}{2},1}$ give an extra possible deformation as we have allowed half-integer powers of $\hbar$.

 \par
Another deformation we would like to consider is the \emph{change of polarisation}, given by conjugation with the operator
\begin{equation*}
\hat{\Phi} = \exp\left( \frac{1}{2\hbar} \sum_{\substack{\mu,\nu \in [d] \\ k,l > 0}}  F_{0,2}\big[\begin{smallmatrix} \mu & \nu \\ -k & -l \end{smallmatrix}\big]\,\frac{J_k^\mu J_l^\nu}{kl} \right)\,.
\end{equation*} 
where $F_{0,2}\big[\begin{smallmatrix} \mu & \nu \\ -k & -l \end{smallmatrix}\big]$ are arbitrary scalars. Effectively, it shifts
\begin{equation}
\label{PolChange}
J^\mu_{-k} \to J^\mu_{-k} + \sum_{\substack{\nu \in [d]\\ l > 0}} F_{0,2}\big[\begin{smallmatrix} \mu & \nu \\ -k & -l \end{smallmatrix}\big]\,\frac{J^\nu_l}{l}\,.
\end{equation}
Once again, the action on $\WeylAlgComp{E}{\hbar}$ is well-defined. We want to consider the conjugated operators
\begin{equation}
\label{HAllDilatonPolarize}
{}^{\sigma}H_{i,k} \coloneqq \hat{\Phi} \hat{T}\cdot {}^{\sigma}W_{i,k} \cdot \hat{T}^{-1} \hat{\Phi}^{-1}\,.
\end{equation}

\begin{theorem}\label{prop:AiryAllDilatonPolarize}
Defining $t_{\mu}$ by \eqref{F01tum} and with the same conditions for $d$, $(r_\mu, s_\mu, t_\mu, Q_{\mu})_{\mu = 1}^{d}$ and the same range for $ (i,k)$ as in \cref{thm:W_gl_Airy_arbitrary_autom}, the operators ${}^{\sigma}H_{i,k}$ in \cref{HAllDilatonPolarize} form an Airy structure on $E$.
\end{theorem}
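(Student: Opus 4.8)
The strategy is to deduce the statement from \cref{thm:W_gl_Airy_arbitrary_autom} by showing that the additional conjugations preserve the Airy structure property, in the spirit of \cite[Section 4.1.5]{BBCCN18}. \textbf{Reduction.} Since the operators $J^\mu_k$ with $k>0$ pairwise commute, the exponential $\hat{T}$ factors as $\hat{T}=\hat{T}_{\geq}\hat{T}_0$, where $\hat{T}_0:=\prod_{\mu\,:\,s_\mu\neq\infty}\exp\!\big(-r_\mu t_\mu J^\mu_{s_\mu}/(\hbar s_\mu)\big)$ is exactly the dilaton shift of \eqref{eq:dilaton_arbitr_autom} (using \eqref{F01tum} together with $s_\mu=\min\{k>0 : F_{0,1}\big[\begin{smallmatrix}\mu\\-k\end{smallmatrix}\big]\neq 0\}$), while $\hat{T}_{\geq}$ collects the remaining shifts, namely the $F_{0,1}$ at modes $k>s_\mu$ and all the $F_{\frac12,1}$. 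Setting $A_{i,k}:=\hat{T}_0\cdot{}^{\sigma}W_{i,k}\cdot\hat{T}_0^{-1}$, \cref{thm:W_gl_Airy_arbitrary_autom} asserts that $(A_{i,k})_{i\in[r],\,k\geq 1-\lambda(i)+\delta_{i,1}}$ is an Airy structure on $E$, and from \eqref{HAllDilatonPolarize} one has ${}^{\sigma}H_{i,k}=\hat{U}\,A_{i,k}\,\hat{U}^{-1}$ with $\hat{U}:=\hat{\Phi}\hat{T}_{\geq}$. So it suffices to prove that conjugation by $\hat{U}$ sends an Airy structure on $E$ to an Airy structure on $E$.

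\textbf{Subalgebra condition.} This is formal: $\Ad_{\hat{U}}$ is an automorphism of $\WeylAlgComp{E}{\hbar}$ (well-definedness was noted when $\hat{T}$ and $\hat{\Phi}$ were introduced) preserving the degree filtration, the conformal-weight grading, and the property of being a filtered family, because $\Ad_{\hat{U}}$ of any ordered monomial is a finite expression of bounded degree and conformal weight. Hence if $[H_i,H_j]=\hbar\sum_k f^k_{i,j}H_k$ for a filtered family $(f^k_{i,j})$, then $[\hat{U}H_i\hat{U}^{-1},\hat{U}H_j\hat{U}^{-1}]=\hbar\sum_k(\hat{U}f^k_{i,j}\hat{U}^{-1})(\hat{U}H_k\hat{U}^{-1})$, so the conjugated family still satisfies the subalgebra condition, and the recombination matrices $\mc{N},\mc{M}$ are unchanged.

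\textbf{Degree-one condition.} Let $\mc{N},\mc{M}$ bring $(A_{i,k})$ to normal form, $\tilde{A}_a=\hbar\partial_{x_a}+\higherorderterms{2}$. The generators of $\hat{T}_{\geq}$ (a filtered sum of operators in the $\partial_{x^\mu_k}$ only, of degree $\leq 0$) and of $\hat{\Phi}$ (a filtered sum of degree-$0$ operators quadratic in the $\partial_{x^\mu_l}$) both commute with every $\hbar\partial_{x_b}$, so $\hat{U}\hbar\partial_{x_a}\hat{U}^{-1}=\hbar\partial_{x_a}$ and $\hat{U}\tilde{A}_a\hat{U}^{-1}=\hbar\partial_{x_a}+\hat{U}R_a\hat{U}^{-1}$ with $R_a:=\tilde{A}_a-\hbar\partial_{x_a}$ of degree $\geq 2$ in every term. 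It then remains to check that $\hat{U}R_a\hat{U}^{-1}$ has no term of degree $\leq 1$, i.e.\ produces neither a constant nor an $x$-linear term — which, by inspection of \cref{defn:higher_airy_struct}, is exactly what being an Airy structure forbids; granting this, $\sum_{(i,k)}\mc{N}_{a,(i,k)}\hat{U}A_{i,k}\hat{U}^{-1}=\hat{U}\tilde{A}_a\hat{U}^{-1}$ is in normal form, so $(\hat{U}A_{i,k}\hat{U}^{-1})$ is an Airy structure. Since $\hat{\Phi}$ is a graded automorphism, the only possible degree-lowering comes from the degree-$(-1)$ part of $\hat{T}_{\geq}$, i.e.\ from translating the creation variables $x^\mu_l$ with $l>s_\mu$ by scalars; one therefore has to show that no monomial of $A_{i,k}$, for $(i,k)$ in the allowed range, involves \emph{only} creation modes $x^\mu_l$ with $l\geq s_\mu$. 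Such a monomial sits at mode roughly $-\sum_\mu s_\mu n_\mu/r_\mu$, with $n_\mu$ the number of factors coming from cycle $\mu$, and the hypotheses $r_1/s_1\geq\cdots\geq r_d/s_d$, $r_1\equiv -1\pmod{s_1}$, $r_d\equiv 1\pmod{s_d}$, $s_\mu=1$ for $\mu\notin\{1,d\}$, and $|\lambda|=r$ force this mode to lie strictly below the threshold $1-\lambda(i)+\delta_{i,1}$. This is the same estimate as in the proof of \cref{thm:W_gl_Airy_arbitrary_autom}, now with the extra parameters $F_{0,1},F_{\frac12,1},F_{0,2}$ carried along, in both the standard and the exceptional ($s_d=\infty$) case.

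\textbf{Main obstacle.} The formal part — the reduction, the subalgebra condition, and the observation that $\hat{U}$ fixes the degree-$1$ piece — is routine. The real work is the last estimate: verifying that $\hat{T}_{\geq}$ and $\hat{\Phi}$ are \emph{admissible}, i.e.\ that they cannot degrade any monomial of $A_{i,k}$ in the prescribed mode range below degree $2$. This is where the arithmetic conditions on $(r_\mu,s_\mu)$ genuinely enter, as in \cite[Section 4.1.5]{BBCCN18}, and redoing this bookkeeping in the presence of half-integer powers of $\hbar$ and of a nonzero change of polarisation $F_{0,2}$ is where I expect the bulk of the technical care to be required.
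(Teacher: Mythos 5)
Your overall architecture (reduce to \cref{thm:W_gl_Airy_arbitrary_autom}, then conjugate by $\hat{U}=\hat{\Phi}\hat{T}_{\geq}$ and check that the Airy structure conditions survive) runs \emph{backwards} relative to the paper. The paper proves \cref{prop:AiryAllDilatonPolarize} directly and derives \cref{thm:W_gl_Airy_arbitrary_autom} as a special case: it groups the conjugation as $(\hat{\Phi}\hat{T}_2)\cdot\hat{T}_1$, where $\hat{T}_1$ already contains \emph{all} the $F_{0,1}\big[\begin{smallmatrix}\mu\\-k\end{smallmatrix}\big]$ with $k\geq s_\mu$. The lemmas feeding into the degree-one check (\cref{DegOneSeries}, \cref{prop:k_min}, \cref{prop:deg_one_cond_arb_autom}) are stated and proved from the outset for the full power series $\omega_{0,1}$, so the genuinely degree-lowering shifts are absorbed there; the only remaining step is the observation — which you also make — that $\hat{T}_2$ and $\hat{\Phi}$ are degree-preserving and commute with $\hbar\partial_{x^\mu_a}$. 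Your split $(\hat{\Phi}\hat{T}_{\geq})\cdot\hat{T}_0$ puts the degree-lowering $F_{0,1}$-shifts with $k>s_\mu$ into the \emph{second} conjugation, which forces you to redo essentially the same estimate a second time rather than reuse it. This is a valid route, but it does not simplify anything, and you acknowledge this.

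There is however a genuine gap in your verification of the degree-one condition. You write that $\hat{U}R_a\hat{U}^{-1}$ must produce ``neither a constant nor an $x$-linear term,'' and you accordingly reduce the estimate to: no monomial of $A_{i,k}$ in the allowed range consists solely of creation modes $x^\mu_l$ with $l\geq s_\mu$. This rules out spurious degree-$0$ terms, but it misses two degree-$1$ possibilities. First, a monomial of $R_a$ of the form $x^\mu_l\,\hbar\partial_{x^\nu_b}$ with $l>s_\mu$ collapses under $\hat{T}_{\geq}$ to $c\,\hbar\partial_{x^\nu_b}$; this spoils the normal form with your chosen $\mc{N},\mc{M}$, contrary to your claim that the recombination matrices are unchanged (the degree-one matrix $\mc{M}$ genuinely depends on the full $\omega_{0,1}$, not just its leading term). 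Second — and this is fatal, not merely a change of basis — a monomial $J^\mu_0 J^\nu_{-l} = \hbar^{\frac12}Q_\mu\cdot(-l\,x^\nu_l)$ with $l>s_\nu$ collapses to a scalar proportional to $\hbar^{\frac12}Q_\mu$, a degree-$1$ constant that no linear recombination can remove. In the paper this is exactly what \cref{prop:k_min} controls: the index range $I_{\mathbf{r},\mathbf{s}}$ is chosen so that $\pi_1({}^\sigma H_{i,k})$ lies in the span of $J^\mu_a$ with $a>0$ \emph{only}, and the proof explicitly says ``it suffices to inspect $a=0$'' — i.e.\ the zero-mode $J^\mu_0$ is the borderline case that the threshold is designed around. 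Your stated estimate does not see this case; to close the argument you would need to prove the stronger statement that $\pi_1(\hat{U}A_{i,k}\hat{U}^{-1})$ lies in $\mathrm{span}\{J^\mu_a : a>0\}$, which is precisely the content of \cref{prop:k_min} applied with the full $\omega_{0,1}$.
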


\begin{remark}
	\label{rem:AiryAllDilatonPolarizegenuszero}
	In the same vein, the result of \cref{thm:genus0_soln_admissible_Hik} with the exact same conditions on $(r_\mu, s_\mu, t_\mu)_{\mu = 1}^{d}$ holds for the operators \eqref{HAllDilatonPolarize} as well. 
\end{remark}

These results are proved in \cref{sec:W_gl_airy_structs_proof} and will be reformulated in terms of spectral curves in \cref{globalsp}.  

\par

\subsubsection{Necessary conditions}
\label{sec:NecCond}

\medskip

It turns out that for a family of differential operators $({}^\sigma H_{i,k})$ of the type considered in \Cref{sec:W_gl_Airy_structs_all_shift}, i.e.\ $F_{0,1}\big[\begin{smallmatrix} \mu \\ k \end{smallmatrix}\big] = - r_{\mu}t_{\mu}\delta_{k,-s_\mu}$, the implication in \Cref{thm:genus0_soln_admissible_Hik} can be reversed.
\begin{proposition}
	\label{prop:gen_zero_nec=suff}
	For a family $({}^\sigma H_{i,k})_{k>-\lambda(i)+\delta_{i,1}}$ of differential operators as defined in \eqref{eq:dilaton_arbitr_autom} there exists a solution $Z$ to equation \eqref{eq:Hik_diff_eq_const_ord} if and only if $(r_\mu,s_\mu,t_\mu)_{\mu=1}^d$ satisfy the conditions in \Cref{thm:genus0_soln_admissible_Hik}.
\end{proposition}
To prove this proposition we will exploit a reinterpretation of the results of this section in terms of spectral curves which will be presented in \Cref{globalsp}. The proof of this proposition can be found in \cref{secneccons}.\par

Motivated by \Cref{prop:gen_zero_nec=suff} it is natural to ask whether the sufficient conditions stated in \Cref{thm:W_gl_Airy_arbitrary_autom} for the family of differential operators to be an Airy structure are also necessary to impose in order to guarantee an all order solution to the associated system of differential equations. We will investigate the analogous question phrased in the setting of spectral curves in \cref{sec:free_energ_arb_autom} by analysing the symmetry of the multidifferentials $\omega_{0,3}$, $\omega_{0,4}$, and $\omega_{\frac{1}{2},2}$ which are the counterparts of the free energies $F_{0,3}$, $F_{0,4}$ and $F_{\frac{1}{2},2}$. Doing so we can prove that for generic values of $(t_\mu,Q_\mu)_{\mu=1}^d$ most of the conditions stated in \Cref{thm:W_gl_Airy_arbitrary_autom} are also necessary and we believe that a more thorough analysis would actually lead to the conclusion that they are all indeed necessary.

\begin{proposition}
\label{propneccons} Let $ d \geq 2$, $ r_1, \ldots, r_d \geq 1$ and $ s_1, \ldots, s_d \in \N^* \cup \{ \infty \} $ be such that $\frac{r_1}{s_1} \geq \cdots \geq \frac{r_d}{s_d}$ and choose $\lambda$ as in \eqref{thepartlam}. Assume that for all $Q_1,\ldots,Q_{d} \in \mathbb{C}$, $t_1,\ldots,t_{d - 1} \in \mathbb{C}^*$, and in case $s_{d} \neq \infty$ also $t_{d} \in \mathbb{C}^*$ such that
\[
\sum_{\mu = 1}^{d} Q_{\mu} = 0,\qquad {\rm and}\qquad t_{\mu}^{r_{\mu}} \neq t_{\nu}^{r_{\nu}}\,\,\,{\rm whenever}\,\, \mu \neq \nu\,\,{\rm and}\,\,(r_{\mu},s_{\mu}) = (r_{\nu},s_{\nu})\,,
\]
the operators $({}^\sigma H_{i,k})_{k>-\lambda(i)+\delta_{i,1}}$ defined in \eqref{eq:dilaton_arbitr_autom} form an Airy structure. Then necessarily
\begin{itemize}
	\item $r_1 = -1\,\,{\rm mod}\,\, s_1$ ;
	\item $ s_\mu \in\{1,2\}$ for all $ \mu \notin \{1, d\}$ ;
	\item $r_d = 1 \,\,{\rm mod}\,\, s_{d}$.
\end{itemize}
Moreover for $d>2$, if $(r_{\mu},s_{\mu})=(r_{\nu},s_{\nu})$ for $\mu\neq\nu$ then necessarily $s_{\mu}=s_{\nu}=1$.
\end{proposition}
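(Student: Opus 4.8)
The plan is to establish necessity of each condition by exhibiting an explicit obstruction to the Airy structure property, concretely the failure of symmetry of one of the low free energies $F_{0,3}$, $F_{0,4}$, or $F_{\frac{1}{2},2}$, which must hold for any genuine Airy structure by \cref{thm:partition_fct_exists}. Since an Airy structure (not necessarily in normal form) determines a symmetric partition function $F = \sum \hbar^{g-1} F_{g,n}/n!$ with $F_{g,n} \in \widehat{\mathrm{Sym}^n(E^*)}$, the symmetry of each $F_{g,n}$ in its $n$ arguments is a necessary condition. The strategy is thus: compute the relevant low free energies explicitly from the differential constraints $\bigl({}^{\sigma}H_{i,k}\bigr)$, identify the precise algebraic combination of $r_\mu, s_\mu, t_\mu, Q_\mu$ controlling the symmetry-breaking coefficient, and show that when one of the three listed arithmetic conditions fails (or when the extra $d>2$ condition fails), this coefficient is nonzero for generic $(t_\mu, Q_\mu)$, contradicting the hypothesis.

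First I would extract $F_{0,3}$: using the degree-one part of the operators to identify the change of variables putting the system in normal form (cf. \cref{sec:deg_one_part}), one obtains $F_{0,3}[a_1,a_2,a_3]$ as an explicit rational expression in the structure constants of the ${}^{\sigma}W_{i,k}$, which in turn are built from the $\Psi^{(j)}_{r_\mu}$ coefficients and the dilaton shift data via \cref{TwistedWmodes}. The key point is that $F_{0,3}$ sees only the degree-$\leq 3$ (in fact the quadratic and cubic) parts of the constraints, so its computation is finite and reduces to a manipulation of roots of unity; the ansatz $r_1 \equiv -1 \bmod s_1$ and $r_d \equiv 1 \bmod s_d$ enters precisely through which modes $W_{i,k}$ with $k = 1-\lambda(i)$ must be included for the index set $I_\lambda$ to close as a Lie subalgebra (\cref{prop:desc_part_Lie_subalg}), and a congruence violation forces the appearance of a spurious mode whose coefficient in some $[H_i, H_j]$ cannot be absorbed, which one detects as an asymmetry of $F_{0,3}$ or $F_{0,4}$. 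For the middle-cycle condition $s_\mu \in \{1,2\}$ when $\mu \notin \{1,d\}$, the obstruction should appear already at $F_{0,3}$ or $F_{1/2,2}$: a cycle of length $r_\mu$ with a dilaton shift at $s_\mu \geq 3$ in the interior of the chain contributes terms that break symmetry because the ordering $r_1/s_1 \geq \cdots \geq r_d/s_d$ together with the partition structure \eqref{thepartlam} forces $r_\mu/s_\mu$ to be an integer only when $s_\mu \le 2$ in the relevant range. Finally, the $d>2$ refinement — that coincident pairs $(r_\mu,s_\mu)=(r_\nu,s_\nu)$ force $s_\mu = s_\nu = 1$ — I would handle by a separate short computation showing that two identical interior blocks with $s_\mu = 2$ produce a resonance in $F_{0,3}$ or $F_{0,4}$ that the condition $t_\mu^{r_\mu} \neq t_\nu^{r_\nu}$ is insufficient to cure.

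Concretely, the order of steps is: (i) recall the normal-form reduction and the explicit recursion for $F_{g,n}[a_1,\ldots,a_n]$ from \cite[Sections 2.2–2.3]{BBCCN18}, reproduced in \cref{theFGNsum}; (ii) compute $F_{0,3}$ in closed form as a function of $(r_\mu, s_\mu, t_\mu, Q_\mu)_\mu$, using \cref{TwistedWmodes} and the generating-series form \eqref{Wikz}; (iii) impose symmetry of $F_{0,3}$ and read off the necessary constraints, which will yield $r_1 \equiv -1 \bmod s_1$, $r_d \equiv 1 \bmod s_d$, and a partial bound on interior $s_\mu$; (iv) compute $F_{\frac{1}{2},2}$ and/or a suitable partial component of $F_{0,4}$ to sharpen the interior bound to $s_\mu \in \{1,2\}$ and to obtain the $d>2$ resonance condition; (v) in each case verify that the obstructing coefficient, viewed as a polynomial in the free parameters $(t_\mu, Q_\mu)$, does not vanish identically, so that it is nonzero for generic values and the hypothesis (``for all such parameters, an Airy structure'') is contradicted.

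The main obstacle I anticipate is step (ii)–(iv): the closed-form evaluation of $F_{0,3}$ and especially $F_{0,4}$ and $F_{\frac{1}{2},2}$ in the presence of several cycles of unequal length with independent dilaton shifts is a genuinely intricate computation with the $\Psi^{(j)}_{r_\mu}$ root-of-unity sums, and isolating the single scalar whose nonvanishing encodes the failure of each arithmetic condition — without getting buried in the full symmetric tensor — requires choosing the right specialisation of indices $(a_1,a_2,a_3)$ (resp. $(a_1,\ldots,a_4)$) adapted to the cycle in question. A secondary subtlety is ensuring the chosen obstruction is genuinely an obstruction to \emph{being an Airy structure at all} (equivalently, to symmetry of $F_{g,n}$) rather than merely to being in normal form, but this is guaranteed because the partition function and its symmetry are basis-independent invariants attached to any Airy structure via \cref{thm:partition_fct_exists} and the $\mc{N},\mc{M}$ conjugation in \cref{deffilter}.
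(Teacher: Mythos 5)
Your overall strategy is the right one and is the one the paper uses: an Airy structure yields a symmetric partition function by \cref{thm:partition_fct_exists}, so producing a non-symmetric $F_{0,3}$, $F_{0,4}$ or $F_{\frac{1}{2},2}$ for generic values of the free parameters contradicts the ``for all $(t_\mu,Q_\mu)$'' hypothesis. Where you diverge from the paper is the computational route, and the divergence matters. You propose to compute the $F_{g,n}$ directly from the differential constraints, via the $\Psi^{(j)}$ root-of-unity sums of \cref{TwistedWmodes} and the recursion \eqref{theFGNsum}. The paper deliberately declines this route as intractable and instead invokes the equivalence established in Part II (\cref{HASequivALE}, \cref{combinabsloop}, \cref{mainth2}): if the operators formed an Airy structure, the unique normalised solution to the abstract loop equations would be given by the residue formula \eqref{TRsum}, and since the $\omega_{g,n}$ are generating series for the $F_{g,n}$ via \eqref{omgnxp}, symmetry of one is equivalent to symmetry of the other. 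The residue formula is far more tractable: the $\Psi^{(j)}$ sums collapse into geometric series and contour integrals, and \cref{sec:free_energ_arb_autom} (specifically \cref{prop:symm_cond_omega03}, \cref{prop:symm_cond_omega1half2}, \cref{prop:symm_cond_omega04}, \cref{prop:symm_cond_exc}) produces closed forms whose symmetry constraints are then simply assembled in \cref{secneccons}. If you insist on the direct route, you will in effect rederive the spectral-curve machinery: for $d>1$ the operators are not in normal form, and the change-of-basis matrix $\mc{N}_{--}$ that your step (i) treats as a preliminary is itself given by the residue operator \eqref{Ntheform} — i.e.\ it is exactly the Lagrange interpolation underlying the recursion kernel, and \eqref{theFGNsum} as stated in the paper applies only after this reduction (and is written out only for the single-cycle case).

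Two further points of detail on the mechanism. First, your claim that a congruence violation ``forces the appearance of a spurious mode whose coefficient in some $[H_i,H_j]$ cannot be absorbed'' conflates the subalgebra condition with the symmetry obstruction. In the paper the arithmetic constraints are read off from explicit formulas for $\omega_{0,3}$ and $\omega_{\frac{1}{2},2}$ — conditions (i)--(ii) of \cref{prop:symm_cond_omega03} and (iii)--(v) of \cref{prop:symm_cond_omega1half2} — not from a failure of the Lie bracket to close; the subalgebra side is handled separately (\cref{prop:Lie_subalg_cond_arb_autom}) and goes into the sufficiency proof, not the necessity proof. Second, the $d>2$ coincident-pairs refinement in \cref{propneccons} is derived from the symmetry of $F_{\frac{1}{2},2}$ (condition \labelcref{item:1half2cond4} of \cref{prop:symm_cond_omega1half2}), not from $F_{0,4}$; the $F_{0,4}$ obstruction you propose is real (\cref{prop:symm_cond_omega04}) but addresses a different configuration (three distinct components with the same ratio $r_\mu/s_\mu$), and is recorded in the paper but not actually used in the proof of \cref{propneccons}.
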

Again we need to postpone the proof of this proposition to \cref{secneccons} since we first have to establish the required tools. Besides that, in the course of the proof of \cref{thm:W_gl_Airy_arbitrary_autom} in \cref{sec:W_gl_airy_structs_proof}, we will see that coprimality of $r_\mu$ and $s_\mu$ and non-resonance condition for the $t_\mu$ (\cref{rem:t_cond}), as well as non-vanishing of all but maybe one dilaton shift (\cref{rem:generality_number_shifts}) are obvious necessary conditions to obtain Airy structures with our method. The assumption $\sum_{\mu = 1}^d Q_\mu = 0$ may not always be necessary and we obtain finer information on this in \cref{prop:symm_cond_omega1half2}, but we adopted it here to simplify the statement of \cref{propneccons}.

\medskip

\subsubsection{Half-integer or integer powers of \texorpdfstring{$\hbar$}{hbar}?}

\medskip

\label{nohalf}

There are several reasons to allow half-integer powers of $\hbar$ in Airy structures instead of just integer power. Our construction admits natural extra degrees of freedom when $\sigma$ has at least two cycles, namely the parameters $Q$ in Theorem~\ref{thm:W_gl_Airy_arbitrary_autom}. This is relevant for applications to open intersection theory, where we have to allow indices $g$ to be both integer or \textit{half-integer} --- see \cref{OpenIntKP,ZKPAiry}.  As it does not lead to any complication, we write the whole article allowing $g \in \frac{1}{2}\mathbb{Z}_{\geq 0}$. The possibility of half-integer genus was already addressed in \cite{BBCCN18}, which called such Airy structures crosscapped. If all monomials in $H_i$ only feature integer powers of $\hbar$ then $F_{g,n}$ in \eqref{eq:free_energy_of_Airy_struct} vanishes for half-integers $g$. It is therefore straightforward to specialise our results to allow only integer $g$, as is more common in topological recursions. Let us however note that half-integer $g$ already made their appearance in certain other applications of the topological recursion, such as non-hermitian matrix models \cite{CE06}, enumeration of non-orientable discretised surfaces \cite{CEMq1}, Chern--Simons theory with gauge groups ${\rm SO}(N)$ or ${\rm Sp}(2N)$ \cite{BESeifert}, etc. 

\medskip
\subsection{String, dilaton and homogeneity equations}
\medskip
In this section, we give more explicit forms of some of the lowest order constraints of the Airy structures found in \cref{thm:W_gl_Airy_Coxeter} and \cref{thm:W_gl_Airy_arbitrary_autom}. In the particular case of a transitive twist or its exceptional analogue, we obtain the dilaton equation. With a transitive twist and $ s_1 = r_1+1$ or its exceptional analogue --- i.e. for the Airy structures already known from \cite{BBCCN18} --- we also get a string equation. The string equation does not occur in any of the other (new) Airy structures.

\label{SecHom}
\begin{lemma}
\label{lem:homogen0}
Consider one of the Airy structure of \cref{thm:W_gl_Airy_Coxeter} or \cref{thm:W_gl_Airy_arbitrary_autom}. For any $g,n \geq 0$ such that $2g - 2 + (n + 1) > 0$, any $\mu,\mu_1,\ldots,\mu_n \in [d]$ and $p_1,\ldots,p_n > 0$, we have
\begin{equation}
\label{thedesdil}\sum_{\mu = 1}^d t_\mu F_{g,n + 1}\big[\begin{smallmatrix} \mu & \mu_1 & \cdots & \mu_n \\  s_\mu & p_1& \cdots & p_n \end{smallmatrix}\big] = \sum_{m = 1}^n \frac{p_m}{r_{\mu_m}} F_{g,n}\big[\begin{smallmatrix}   \mu_1 & \cdots & \mu_n \\  p_1& \cdots & p_n \end{smallmatrix}\big] + \delta_{g,1}\delta_{n,0}\bigg(\frac{r_\mu^2 - 1}{24r_\mu} + \frac{Q_\mu^2}{2r_\mu}\bigg)\,.
\end{equation}
\end{lemma}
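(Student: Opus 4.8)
The plan is to derive \eqref{thedesdil} directly from the defining differential constraints $H_a \cdot Z = 0$ of the Airy structure, by extracting the action of a suitable single twist mode. Concretely, I would look at the mode $\sigma W_{1,k}$ for the smallest allowed $k$, namely $k = 1 - \lambda(1) + 1 = 1$ (since $\lambda(1) = 1$). From \cref{TwistedWmodes} and the fractional mode expansion \eqref{repJdiff}, $\sigma W_{1,0}$ is, up to normalisation, the sum of the level-zero currents $\sum_{\mu} \frac{1}{r_\mu} J^{\mu}_{0}$, which after the dilaton shift \eqref{eq:dilaton_arbitr_autom} and the identification $J^{\mu}_{0} = \hbar^{1/2} Q_\mu$ becomes an Euler-type operator: a combination of $\sum_\mu \hbar^{1/2} Q_\mu$, the dilaton-shift contributions $\sum_\mu r_\mu t_\mu \, \hbar \partial_{x^\mu_{s_\mu}}$, and the degree-counting operator $\sum_{\mu, p} p\, x^\mu_p \partial_{x^\mu_p}$ coming from the normal-ordered product of one positive and one negative current at total level zero. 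The first step is therefore to write $\sigma H_{1,1}$ (or the appropriate linear combination giving a constraint in the Airy structure) explicitly as such a first-order differential operator plus the constant anomaly term, the latter coming from the $j=1$ piece in \eqref{eq:twist_mode_coxeter}, which produces $\sum_\mu \bigl(\frac{r_\mu^2-1}{24 r_\mu} + \frac{Q_\mu^2}{2 r_\mu}\bigr)$-type central contributions via $\Psi^{(1)}_{r_\mu}$ and $(J^\mu_0)^2$.

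Next I would apply this operator to $Z = \exp(F)$ with $F$ as in \eqref{eq:free_energy_of_Airy_struct}, and read off the coefficient of a generic monomial $x^{\mu_1}_{p_1}\cdots x^{\mu_n}_{p_n}$ at order $\hbar^{g-1}$. The degree-counting part $\sum_{\mu,p} p\, x^\mu_p \partial_{x^\mu_p}$ acting on $e^F$ contributes $\bigl(\sum_m p_m\bigr) F_{g,n}[\cdots]$ — here one must be careful that the currents $J^\mu_{\pm k}$ as written in \eqref{repJdiff} are $-k\, x^\mu_{-k}$ and $\hbar\partial_{x^\mu_k}$, so the Euler operator built from $\norder{J^\mu_{p} J^\mu_{-p}}$ genuinely weighs $x^\mu_p$ by $p$, matching the left side of \eqref{thedesdil} up to the factor $1/r_{\mu_m}$ that I expect to come from the $r_\mu^{-(i_\mu-1)} = r_\mu^0$ normalisations combined with how the variables $x^\mu_p$ versus modes $J^\mu_{p}$ are paired (one factor of $1/r_\mu$ per current, as in \eqref{DefCurrents} where $J^\mu_{r_\mu k}$ is the mode). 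The dilaton-shift part $\sum_\mu r_\mu t_\mu\,\partial_{x^\mu_{s_\mu}}$ acting on $F$ produces exactly $\sum_\mu t_\mu F_{g,n+1}\bigl[\begin{smallmatrix} \mu & \mu_1 & \cdots \\ s_\mu & p_1 & \cdots\end{smallmatrix}\bigr]$ after the $r_\mu$ cancels against the pairing factor, giving the left-hand side of \eqref{thedesdil}. The constant/anomaly term only survives when it is not killed by a derivative, i.e. only contributes to the ``vacuum'' coefficient $F_{1,0}$, which accounts precisely for the $\delta_{g,1}\delta_{n,0}$ term.

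The remaining work is bookkeeping: one must check that $\sigma H_{1,1}$ (after the necessary $\mc N$-recombination, since the Airy structure of \cref{thm:W_gl_Airy_arbitrary_autom} is not in normal form) still annihilates $Z$ — this is immediate since it is in the span of the constraints — and that no other terms from \eqref{eq:twist_mode_arbitrary} of degree $>2$ enter at $i=1$; but $i=1$ forces $|M|=1$ and $i_\mu=1$, so $\sigma W_{1,k}$ is genuinely a sum of single currents $\sum_\mu \frac1{r_\mu} J^\mu_{r_\mu k}$ with no normal-ordered products, and the Euler structure only appears after the dilaton shift mixes a constant into $J^\mu_{-s_\mu}$, or rather, I should instead use $\sigma W_{2, \text{min}}$ — this is the subtle point. \textbf{I expect the main obstacle to be identifying the correct constraint to use and tracking the constant term.} Since $\sigma W_{1,k}$ is purely linear in the currents, applying it to $Z$ gives only a relation between $F_{g,n+1}$ and lower $F$'s of the wrong shape (a string-type equation, not the dilaton-type \eqref{thedesdil}); the Euler/degree operator $\sum p\, x^\mu_p\partial_{x^\mu_p}$ with the anomaly $\frac{r_\mu^2-1}{24r_\mu}+\frac{Q_\mu^2}{2r_\mu}$ genuinely arises from $\sigma W_{2,0}$ (or the relevant $\sigma H_{2,k}$), whose $j=0$ term contains $\sum_\mu \frac{1}{r_\mu}\sum_{a} \norder{J^{\mu,a} J^{\mu,-a}}$ — a sum of Sugawara-type bilinears — and whose $j=1$ term contributes the central charge $\propto \frac{r_\mu^2-1}{24 r_\mu}$ while the $(J^\mu_0)^2=\hbar Q_\mu^2$ piece contributes $\frac{Q_\mu^2}{2r_\mu}$. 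So the real first step is to compute $\sigma H_{2,k}$ for its minimal $k$ explicitly via \cref{TwistedWmodes}, isolate its degree-$\le 2$ part, recognise it as (dilaton shift derivative) $+$ (Euler operator) $+$ (constant), and then the extraction of the monomial coefficient from $H\cdot e^F = 0$ is routine. Verifying the precise normalisation constants — the $1/r_{\mu_m}$ on the right, the $t_\mu$ on the left, and the exact anomaly coefficient — against the conventions of \eqref{DefCurrents}, \eqref{repJdiff}, \eqref{eq:twist_mode_coxeter} and \eqref{eq:psidef} is where the care is needed.
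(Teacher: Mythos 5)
Your self-corrected plan (use $^{\sigma}H_{i=2,k=0}$, expand via \cref{TwistedWmodes}, read the result as a first-order Euler-type operator plus anomaly) is the paper's route, but as written it has a genuine gap that would derail the computation if you carried it out.

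When you expand $^{\sigma}W_{i=2,k=0}$ using \eqref{eq:twist_mode_arbitr}, you do not only get the diagonal pieces $\sum_{\mu} r_\mu^{-1} W^\mu_{2,0}$; you also get cross-terms
\[
\sum_{\substack{\mu\neq\nu\\ k>0}} W^\mu_{1,-k}W^\nu_{1,k}\;+\;\sum_{\mu<\nu}W^\mu_{1,0}W^\nu_{1,0}\,,
\]
which after the dilaton shift become $\sum_{\mu\neq\nu}\bigl(\sum_{k>0} J^\mu_{-r_\mu k}J^\nu_{r_\nu k} + \tfrac{\hbar}{2}Q_\mu Q_\nu\bigr)$. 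These are \emph{not} part of any Euler operator or constant; they are new quadratic terms mixing different cycles. The paper kills them by pairing the constraint $^{\sigma}H_{1,k}\cdot Z = \bigl(\sum_\mu J^\mu_{r_\mu k}\bigr)\cdot Z = 0$ for $k>0$ with the hypothesis $\sum_\mu Q_\mu = 0$: writing $J^\mu_{-r_\mu k}J^\nu_{r_\nu k} = J^\mu_{-r_\mu k}\bigl(\sum_{\nu'}J^{\nu'}_{r_{\nu'} k}\bigr) - J^\mu_{-r_\mu k}J^\mu_{r_\mu k}$, the first factor annihilates $Z$, and the $Q_\mu Q_\nu$ term collapses to $-\sum_\mu Q_\mu^2/2$ by the same relation. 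Your proposal never invokes either the level-one constraints or the assumption $\sum_\mu Q_\mu = 0$, so the reduction of $^{\sigma}H_{2,0}\cdot Z = 0$ to the stated dilaton equation does not close.

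This gap also infects your accounting of the anomaly. The diagonal contribution from $(J^\mu_0)^2$ inside $W^\mu_{2,0}$ is $\tfrac{(r_\mu-1)Q_\mu^2\hbar}{2r_\mu}$, not $\tfrac{Q_\mu^2\hbar}{2r_\mu}$; the latter only emerges after absorbing the $-\tfrac{Q_\mu^2\hbar}{2}$ that survives from the cross-terms via $\sum_\mu Q_\mu = 0$, i.e.\ $\tfrac{(r_\mu-1)Q_\mu^2}{2r_\mu} - \tfrac{Q_\mu^2}{2} = -\tfrac{Q_\mu^2}{2r_\mu}$. So identifying the precise constant is not a matter of reading off $\Psi^{(1)}_{r_\mu}$ and $(J^\mu_0)^2$ in isolation; it is tied to the same step you omitted. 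Once these two points are filled in, the extraction of the coefficient of $x^{\mu_1}_{p_1}\cdots x^{\mu_n}_{p_n}$ at order $\hbar^{g-1}$ is indeed the routine computation you describe.
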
 
\begin{proof}
We express the constraint ${}^{\sigma}H_{i = 2,k = 0} \cdot Z = 0$, as it is always part of the Airy structure. From \eqref{eq:twist_mode_arbitr}-\eqref{eq:twist_mode_coxeter} and taking into account $J_0^\mu = \hbar^{\frac{1}{2}}Q_\mu$ and the evaluations
\begin{equation}
\label{psiorun}\begin{split}
\Psi^{(0)}_{r}(q_1,q_2) & = \frac{1}{2}\big(r^2\delta_{r|q_1}\delta_{r|q_2} - r\delta_{r|q_1 + q_2}\big)\,, \\
\Psi^{(1)}_{r}(\emptyset) & = -\frac{r(r^2 - 1)}{24}\,,
\end{split}
\end{equation}
we obtain
\begin{equation}
\label{Wi2k9}
\begin{split}
{}^{\sigma} W_{i = 2,k = 0} & = \sum_{\mu = 1}^d \frac{1}{r_\mu} W^\mu_{2,0} + \sum_{\substack{\mu \neq \nu \\ k > 0}} W^\mu_{1,-k}W^\nu_{1,k}  + \sum_{\mu < \nu} W^\mu_{1,0}W^\nu_{1,0} \\
& = \sum_{\mu = 1}^d \bigg(\sum_{k > 0}  \frac{r_\mu \delta_{r_\mu|k} - 1}{r_\mu} J^\mu_{-k}J^\mu_k + \frac{(r_\mu - 1)Q_\mu^2 \hbar}{2r_\mu} - \frac{(r_\mu^2 - 1)\hbar}{24r_\mu}\bigg) + \sum_{\mu \neq \nu } \bigg( \sum_{k > 0} J^\mu_{-r_\mu k}J^\nu_{r_\nu k} + \frac{\hbar Q_\mu Q_\nu}{2}\bigg)\,.
\end{split}
\end{equation}
To get ${}^\sigma H_{i = 2,k = 0}$ we have to apply the dilaton shifts $J^\mu_{-s_\mu} \rightarrow J^\mu_{-s_\mu } - r_\mu t_\mu$, which simply results in adding a term $\sum_{\mu = 1}^d t_\mu J^\mu_{s_\mu}$ to \eqref{Wi2k9}. Expressing the constraint
\[
\forall k > 0\,,\qquad  {}^\sigma H_{i = 1,k}\cdot Z = \bigg(\sum_{\mu = 1}^d J_{r_\mu k}^\mu\bigg)\cdot Z\ = 0\,,
\]
and using the assumption $\sum_{\mu = 1}^d Q_\mu = 0$, we see that ${}^\sigma H_{i = 2,k = 0} \cdot Z = 0$ implies that $Z$ is annihilated by the operator
\[
\sum_{\mu  = 1}^d \bigg\{t_\mu J^\mu_{s_\mu} - \sum_{k > 0} \frac{1}{r_\mu} J^\mu_{-k}J^\mu_k - \hbar\bigg(\frac{r_\mu^2 - 1}{24r_\mu} + \frac{Q_\mu^2}{2r_\mu}\bigg)\bigg\}\,.
\]
By the representation \eqref{repJdiff} of the $J$s, this yields \eqref{thedesdil} for the coefficients  \eqref{eq:free_energy_of_Airy_struct} of the partition function.
\end{proof}

The partition functions of these Airy structures for $\sigma = (1\cdots r)$ or $(1\cdots r - 1)(r)$ (already constructed in \cite{BBCCN18}) enjoy an extra property of homogeneity, which turn \eqref{thedesdil} into an analog of the dilaton equation.

\begin{corollary}
\label{lem:homogen} Assume $d = 1$ and $(r_1,s_1) = (r,s)$ with $r = \pm 1\,\,{\rm mod}\,\,s$. Then, the coefficients \eqref{eq:free_energy_of_Airy_struct} of the partition function of the Airy structure of \cref{thm:W_gl_Airy_Coxeter} satisfy $F_{g,n}[p_1,\ldots,p_n] = 0$ whenever $\sum_{m = 1}^n p_m \neq s(2g - 2 + n)$, and the dilaton equation
\[
F_{g,n + 1}[s,p_1,\ldots,p_n] = s(2g - 2 + n)\,F_{g,n}[p_1,\ldots,p_n] + \frac{r^2_1 - 1}{24} \delta_{g,1}\delta_{n,0}\,.
\]
\end{corollary}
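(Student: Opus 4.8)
The plan is to derive the homogeneity statement from the $\mc{W}$-constraints directly and then feed it into \cref{lem:homogen0}. First I would observe that for $d = 1$ and $(r_1,s_1) = (r,s)$ the dilaton shift $\hat{T} = \exp(-J_s/(\hbar s))$ (with $t_1 = 1$ absorbed in the normalisation, since $F_{0,1}[-s] = -r_1 t_1$ and we may rescale) is the unique shift, and I would exhibit an Euler-type grading operator that detects the defect $\sum_m p_m - s(2g-2+n)$. Concretely, introduce the operator $L = \sum_{k>0} x_k \partial_{x_k}$ counting the weighted degree and compare it with the scaling behaviour of $Z$ under $x_k \mapsto \zeta^k x_k$; the claim $F_{g,n}[p_1,\ldots,p_n] = 0$ unless $\sum_m p_m = s(2g-2+n)$ is equivalent to saying that $Z$, after the dilaton shift, is an eigenvector for a suitable combination of $L$, the $\hbar\partial_\hbar$-grading, and a constant shift.

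The cleanest route is to use the constraint ${}^{\sigma}H_{i,k}\cdot Z = 0$ for the pair $(i,k) = (1,k)$ together with $(i,k)=(2,0)$, exactly as in the proof of \cref{lem:homogen0}, but now also to exploit that in the $d=1$ case ${}^{\sigma}W_{i,k}$ is homogeneous of a definite $\tilde{x}$-weight: from \eqref{eq:twist_mode_coxeter} with a single cycle, $W_{i,k}$ is built from normally ordered products $\norder{\prod_l J_{p_l}}$ with $\sum_l p_l = rk$, so before the shift ${}^{\sigma}W_{i,k}$ has weighted degree $0$ in the grading $\deg J_p = p$, $\deg \hbar = 2s$ is a consistent assignment once we track the $\hbar^{j}(\dd\tilde{x}/\tilde{x})^{2j}$ factors. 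The dilaton shift $J_{-s}\to J_{-s}-1$ is the unique source of weighted-degree inhomogeneity, lowering the weighted degree by exactly $s$ per application; hence ${}^{\sigma}H_{i,k}$ decomposes into homogeneous pieces and the constraint forces $Z$ to be supported on the single weighted-degree value fixed by matching $\hbar$-powers, which reads off as $\sum_m p_m = s(2g-2+n)$ after recalling $F = \sum \hbar^{g-1}F_{g,n}/n!$ and $\deg\hbar = 2s$. I would phrase this as: the operator $\mathcal{D} \coloneqq \sum_{k>0}\frac{k}{s}x_k\partial_{x_k} + 2\hbar\partial_\hbar - (\text{shift of }1) $ annihilates $Z$, then extract coefficients.

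Given the homogeneity relation, the dilaton equation is immediate from \cref{lem:homogen0}: specialise to $d=1$, so $t_1 = 1$ (after the normalisation in which $F_{0,1}[-s]=-r_1t_1$, or simply rescaling $x_s$), $r_\mu = r$, $s_\mu = s$, $\mu_m$ absent, and $Q_\mu = 0$ (forced by $\sum Q_\mu = 0$ with $d=1$); then \eqref{thedesdil} becomes
\[
F_{g,n+1}[s,p_1,\ldots,p_n] = \sum_{m=1}^n \frac{p_m}{r}\,F_{g,n}[p_1,\ldots,p_n] + \delta_{g,1}\delta_{n,0}\,\frac{r^2-1}{24r}.
\]
On the support of $F_{g,n}$ we have $\sum_m p_m = s(2g-2+n)$, so $\sum_m \frac{p_m}{r}$... here I need to be careful: the factor should come out to $s(2g-2+n)$, not $s(2g-2+n)/r$, which signals that the correct normalisation in \cref{thm:W_gl_Airy_Coxeter} carries an extra $r^{i-1}$ and a rescaling of the $x_k$ by powers of $r$ absorbing the $1/r_{\mu_m}$; I would make this rescaling explicit (replacing $x_k \to r^{?}x_k$ so that the operators $r^{i-1}\,{}^{\sigma}H_{i,k}$ of \cref{thm:W_gl_Airy_Coxeter} are used) and check the $\delta_{g,1}\delta_{n,0}$ term matches $\frac{r_1^2-1}{24}$ after the same rescaling. \textbf{The main obstacle} I anticipate is precisely this bookkeeping of normalisations: getting the powers of $r$ in the dilaton shift, in the rescaling of the $x_k^\mu$, and in the constant term to line up so that \eqref{thedesdil} produces $s(2g-2+n)$ rather than $\frac{s(2g-2+n)}{r}$, and confirming that $t_1$ can indeed be set to $1$ without loss of generality in the $d=1$ setting of \cref{thm:W_gl_Airy_Coxeter}.
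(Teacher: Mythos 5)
Your high-level plan — establish homogeneity and then feed it into \cref{lem:homogen0} — is the same as the paper's. But both halves of your sketch need correction.

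For the homogeneity statement, the paper proceeds by an induction on $2g-2+n$ using the explicit recursion \eqref{theFGNsum}: writing $r^{i-1}\,{}^\sigma H_{i,k} = J_{\Pi(i,k)} - \sum_{j,\ell,\mathbf{q}} \tfrac{\hbar^j}{\ell!}C^{(j)}[\Pi(i,k)\,|\,\mathbf{q}]\,\norder{J_{\mathbf{q}}}$ with $\Pi(i,k)=rk+s(i-1)$, one tracks the number $\ell'$ of dilaton-shifted currents to get $\sum_l q_l = p_1 + s(1-2j-\ell)$, which matches the Euler-characteristic constraint on the summands of \eqref{theFGNsum} and so propagates the homogeneity of $F_{0,2}$ upward. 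Your Euler-operator sketch encodes morally the same information, but it glosses over the hard part: neither ${}^\sigma W_{i,k}$ (which under $\deg J_p=-p$ has weight $-rk$, not $0$) nor ${}^\sigma H_{i,k}$ (which is deliberately inhomogeneous) is an eigenvector of a naive grading, and the assignment $\deg\hbar=2s$ is asserted rather than shown to be consistent with the $\hbar^j$-terms in \eqref{eq:twist_mode_coxeter}. Your ``cleanest route'' paragraph also conflates the homogeneity step with \cref{lem:homogen0}, whose proof only uses ${}^\sigma H_{2,0}$ and produces the dilaton equation, not homogeneity. As stated, this part of your argument has a genuine gap: you do not actually exhibit the operator $\mathcal{D}$ nor verify that the constraints imply $\mathcal{D}Z=0$, and the coefficients you propose for $\mathcal{D}$ do not give a constant eigenvalue on $\hbar^{g-1}F_{g,n}$.

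The ``main obstacle'' you worry about in the dilaton equation is a phantom produced by a concrete error: you set $t_1 = 1$, but the dilaton shift $J_{-s}\to J_{-s}-1$ in \cref{thm:W_gl_Airy_Coxeter} means $F_{0,1}[-s]=-1$, and by \eqref{F01tum} this gives $t_1 = 1/r_1 = 1/r$, not $1$. With the correct $t_1=1/r$ and $Q_1=0$, \eqref{thedesdil} reads
\[
\tfrac{1}{r}\,F_{g,n+1}[s,p_1,\ldots,p_n] \;=\; \tfrac{1}{r}\sum_{m=1}^n p_m\,F_{g,n}[p_1,\ldots,p_n] \;+\; \delta_{g,1}\delta_{n,0}\,\tfrac{r^2-1}{24r}\,,
\]
and multiplying by $r$ and using homogeneity $\sum_m p_m = s(2g-2+n)$ gives the desired formula at once. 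There is no need to rescale the $x_k$, and the $r^{i-1}$ prefactor in \cref{thm:W_gl_Airy_Coxeter} only serves to bring the operators into normal form and does not affect the partition function.
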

\begin{corollary}
\label{lem:homogen2} Assume $d = 2$, $r_1 = -1\,\,{\rm mod}\,\,s_1$, $(r_2,s_2) = (1,\infty)$ and $t_1 = \tfrac{1}{r_1}$. Then, the coefficients of the partition function of the Airy structure described in \cref{thm:W_gl_Airy_arbitrary_autom} (also appearing in \cite[Theorem 4.16]{BBCCN18}) satisfy $F_{g,n}\big[\begin{smallmatrix} 1 & \cdots & 1 \\ p_1 & \cdots & p_n \end{smallmatrix}\big] = 0$ whenever $\sum_{m = 1}^n p_m \neq s_1(2g - 2 + n)$, and the dilaton equation
\[
F_{g,n + 1}\big[\begin{smallmatrix} 1 & 1 & \cdots & 1 \\ s_1 & p_1 & \cdots & p_n \end{smallmatrix}\big] = s_1(2g - 2 + n)\,F_{g,n}\big[\begin{smallmatrix} 1 & \cdots & 1 \\ p_1 & \cdots & p_n \end{smallmatrix}\big] + \bigg(\frac{r^2_1 - 1}{24} + \frac{(r_1 + 1)Q_1^2}{2}\bigg)\delta_{g,1}\delta_{n,0}\,.
\]
\end{corollary}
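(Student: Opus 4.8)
The statement has two components: the \emph{dimension constraint} $F_{g,n}\big[\begin{smallmatrix} 1 & \cdots & 1 \\ p_1 & \cdots & p_n \end{smallmatrix}\big] = 0$ unless $\sum_{m} p_m = s_1(2g-2+n)$, and the dilaton equation. The plan is to establish the dimension constraint first, and then read off the dilaton equation from \cref{lem:homogen0}. The hard part is the dimension constraint.

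\emph{Dilaton equation.} Specialise \cref{lem:homogen0} to $d=2$, $(r_2,s_2) = (1,\infty)$, $Q_2 = -Q_1$, $t_1 = \tfrac1{r_1}$, and $\mu_1 = \cdots = \mu_n = 1$. Since $s_2 = \infty$ only the $\mu = 1$ term remains on the left, and the constant on the right becomes $\delta_{g,1}\delta_{n,0}\big(\tfrac{r_1^2-1}{24 r_1} + \tfrac{Q_1^2}{2 r_1} + \tfrac{Q_2^2}{2}\big)$, the $\mu = 2$ curvature term vanishing because $r_2 = 1$. Multiplying through by $r_1$ and using $Q_2^2 = Q_1^2$ turns this constant into $\tfrac{r_1^2-1}{24} + \tfrac{(r_1+1)Q_1^2}{2}$ and leaves $\big(\sum_m p_m\big)F_{g,n}\big[\begin{smallmatrix} 1 & \cdots & 1 \\ p_1 & \cdots & p_n \end{smallmatrix}\big]$ on the right; by the dimension constraint (proved below) this equals $s_1(2g-2+n)\,F_{g,n}\big[\begin{smallmatrix} 1 & \cdots & 1 \\ p_1 & \cdots & p_n \end{smallmatrix}\big]$ when the coefficient is non-zero, while both sides vanish otherwise, which is the displayed identity.

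\emph{Dimension constraint.} Here I would run a rescaling argument parallel to the $d=1$ case of \cref{lem:homogen}, exploiting that the partition function is the unique normalised solution of the system (\cref{thm:partition_fct_exists}). For $\lambda \in \mathbb{C}^*$ let $R_\lambda$ be the automorphism of $\WeylAlgComp{E}{\hbar}$ determined by $x^1_p \mapsto \lambda^{p}x^1_p$ and $x^2_k \mapsto \lambda^{r_1 k}x^2_k$, i.e.\ $J^1_m \mapsto \lambda^{-m}J^1_m$, $J^2_m \mapsto \lambda^{-r_1 m}J^2_m$, leaving $\hbar$ and $Q_\mu$ fixed; the exponents are chosen precisely so that the mode-sum conditions $\sum_l p^\mu_l = r_\mu k_\mu$ in \cref{eq:twist_mode_arbitr}--\cref{eq:twist_mode_coxeter} produce the same power of $\lambda$ from each cycle. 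One checks from these formulas that $R_\lambda$ rescales each ${}^{\sigma}W_{i,k}$ by $\lambda^{-r_1 k}$, and that the dilaton shift $J^1_{-s_1} \mapsto J^1_{-s_1} - r_1 t_1$ is compatible with the substitution $t_1 \mapsto \lambda^{-s_1}t_1$, so that
\[
R_\lambda \cdot {}^{\sigma}H_{i,k}[t_1] \cdot R_\lambda^{-1} = \lambda^{-r_1 k}\,{}^{\sigma}H_{i,k}[\lambda^{-s_1}t_1].
\]
Since $R_\lambda$ sends normalised partition functions to normalised partition functions, uniqueness yields $R_\lambda Z[t_1] = Z[\lambda^{-s_1}t_1]$; extracting Taylor coefficients, $F_{g,n}[t_1]\big[\begin{smallmatrix}\mu_1 & \cdots & \mu_n \\ p_1 & \cdots & p_n\end{smallmatrix}\big]$ is a scalar multiple of a single power $t_1^{e}$ with $s_1 e + \sum_m w(x^{\mu_m}_{p_m}) = 0$, where $w(x^1_p) = p$ and $w(x^2_k) = r_1 k$.

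It remains to show that $e$ depends only on $(g,n)$, and equals $-(2g-2+n)$. For this, assign $t_1$ degree $1$ in the grading $\deg \hbar^{\frac12} = \deg x^\mu_p = \deg \hbar\partial_{x^\mu_p} = 1$: then the unshifted ${}^{\sigma}W_{i,k}$ is homogeneous of degree $i$, and each substitution $J^1_{-s_1} \mapsto -r_1 t_1$ preserves the degree, so every ${}^{\sigma}H_{i,k}$ is homogeneous of degree $i$ over $\mathbb{C}[t_1^{\pm 1}]$. Propagating this through the change of basis to normal form (which contributes the expected negative powers of $t_1$) and through the recursion of \cref{theFGNsum} shows that $\hbar^{g-1}F_{g,n}$ is homogeneous of degree $0$, hence $F_{g,n}$ has degree $2 - 2g$, and since $F_{g,n} \in \widehat{{\rm Sym}^n(E^*)}$ is of $x$-degree exactly $n$ its $t_1$-degree is $2 - 2g - n = -(2g-2+n)$. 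Setting all $\mu_m = 1$ in $s_1 e + \sum_m w(x^{\mu_m}_{p_m}) = 0$ then gives $\sum_m p_m = s_1(2g-2+n)$ whenever $F_{g,n}\big[\begin{smallmatrix} 1 & \cdots & 1 \\ p_1 & \cdots & p_n \end{smallmatrix}\big] \neq 0$. The main obstacle is exactly this last step, namely controlling the Weyl-algebra degree of $F_{g,n}$ (equivalently, checking it on a few base cases such as $F_{0,3}$ and $F_{1,1}$ and propagating it through the recursion); one should also record that the hypotheses of \cref{thm:W_gl_Airy_arbitrary_autom} are stable under $t_1 \mapsto \lambda^{-s_1}t_1$, since with $d=2$ and $(r_1,s_1) \neq (r_2,s_2)$ there is no non-resonance condition on $t_1$ to begin with.
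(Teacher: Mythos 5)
Your derivation of the dilaton equation from \cref{lem:homogen0} is correct and matches the paper's (in particular you correctly read off $t_2 = 0$ as the specialisation forced by $s_2 = \infty$, and the arithmetic of the constant term is right).

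For the homogeneity statement you take a genuinely different route. The paper's proof works directly with the explicit recursion \eqref{theFGNsum}: it writes the $d=2$ operators as in \eqref{HihigI}, uses the ${}^\sigma H_{1,k}\cdot Z = 0$ constraint to substitute $J^2_{k_2}\mapsto -J^1_{r_1 k_2}$ (for $k_2>0$) as in \eqref{subsJ2} so that the recursion closes on $F_{g,n}\big[\begin{smallmatrix}1 & \cdots & 1\\ * & \cdots & *\end{smallmatrix}\big]$, and then checks by induction that the mode-sum identity $\sum_l q_l = p_1 + s_1(1-2j-\ell)$ persists — at which point the argument of \cref{lem:homogen} applies verbatim. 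You instead combine two independent scalings: an outer rescaling $R_\lambda$ (multiplicative in the $x$-variables, with weights chosen so that $R_\lambda {}^\sigma H_{i,k}[t_1] R_\lambda^{-1} = \lambda^{-r_1 k} {}^\sigma H_{i,k}[\lambda^{-s_1}t_1]$ and uniqueness forces $R_\lambda Z[t_1] = Z[\lambda^{-s_1}t_1]$) and an internal Weyl-algebra grading in which $t_1$ is given degree $1$. The first pins the $t_1$-dependence of each coefficient as a monomial of exponent $-\frac{1}{s_1}\sum_m w(x^{\mu_m}_{p_m})$; the second pins the exponent at $-(2g-2+n)$. This avoids the substitution \eqref{subsJ2} and cleanly exhibits a symmetry of the Airy structure that the paper does not use; what it buys is conceptual clarity and a statement that holds for general $\mu_i$ rather than only $\mu_i\equiv 1$. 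What it loses is that the technical content did not disappear; it moved.

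The gap, which you flag yourself, sits in the degree count. You need each normal-form operator $\tilde{H}^\mu_a$ to be homogeneous of combined degree $1$ over $\mathbb{C}[t_1^{\pm 1}]$; equivalently, that in the expansion $\tilde{H}^\mu_a = \hbar\partial_{x^\mu_a} - \sum\frac{\hbar^j}{\ell!}C^{(j)}\norder{J\cdots J}$ the coefficient $C^{(j)}$ in front of $\ell$ currents has $t_1$-degree exactly $1-2j-\ell$. The source of this is precisely the mode-sum condition $\sum_l p^\mu_l = r_\mu k_\mu$ in \eqref{eq:twist_mode_coxeter} together with control of the $t_1$-degrees of the change-of-basis matrix $\mathcal{N}_{(\mu,a),(i,k)}$ of \cref{prop:deg_one_cond_arb_autom}, and in the present exceptional case the normalisation in \cref{sec:fixed_pt_no_shift} happens in two stages (first normalising the $i=r$ modes against the upper-triangular matrix built from \eqref{eq:HfromJby01s_zerodeg}, then eliminating $J^2_a$ from the lower modes) — each stage must be checked to preserve the combined homogeneity. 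This is the same verification that the paper does explicitly via \eqref{sumimimim}; once you carry it out your argument closes.
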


The proof of these two corollaries goes by induction on $2g - 2 + n > 0$, analysing the structure of the Airy structure constraints.
 
\begin{proof}[Proof of \Cref{lem:homogen}]
In this case we only need to consider $g \in \mathbb{Z}_{\geq 0}$. The Airy structure is in normal form up to an overall normalisation, and we can decompose for $i \in (r]$ and $k \geq 1 - \lambda(i)$
\[
r^{i - 1}\,{}^{\sigma} H_{i,k} = J_{\Pi(i,k)} - \sum_{\substack{ \ell,j \in \mathbb{Z}_{\geq 0} \\ 2 \leq \ell + 2j \leq r}} \frac{\hbar^{j}}{\ell!} \sum_{\mathbf{q} \in (\mathbb{Z}^*)^{\ell}} C^{(j)}[\Pi(i,k)|q_1,\ldots,q_{\ell}]\,\norder{J_{q_1}\cdots J_{q_{\ell}}}\,,
\]
where $\Pi(i,k) \coloneqq rk + s(i - 1)$. Setting $F_{0,2}[p_1,p_2] \coloneqq |p_1|\delta_{p_1 + p_2,0}$, \cite[Corollary 2.16]{BBCCN18} gives the following formula for the coefficients of the partition function:
\begin{equation} 
\label{theFGNsum} F_{g,n}[p_1,\ldots,p_n] =  \sum_{\substack{ \ell,j \in \mathbb{Z}_{\geq 0} \\ 2 \leq \ell + 2j \leq r \\ \mathbf{q} \in (\mathbb{Z}^*)^{\ell}}} \frac{C^{(j)}[p_1|q_1,\ldots,q_{\ell}]}{\ell!} \sum_{\boldsymbol{\rho} \vdash [\ell]} \sum_{\substack{\boldsymbol{\mu} \vdash_{\boldsymbol{\rho}} (n] \\ h\,:\,\boldsymbol{\rho} \rightarrow \mathbb{Z}_{\geq 0} \\ \ell + j + \sum_{\rho \in \boldsymbol{\rho}} (h_{\rho} - 1) = g}}'' \prod_{\rho \in \boldsymbol{\rho}} F_{h_{\rho},|\rho| + |\mu_{\rho}|}[\mathbf{q}_{\rho},\mathbf{p}_{\mu_\rho}]\,.
\end{equation} 
Here, $\boldsymbol{\rho} \vdash [\ell]$ means that $\boldsymbol{\rho}$ is a set of non-empty subsets of $[\ell]$ which are pairwise disjoint and whose union is $\ell$, and for $\rho \in \boldsymbol{\rho}$ we denote $\mathbf{q}_{\rho} \coloneqq (q_{m})_{m \in \rho}$. Then, $\mu \vdash_{\boldsymbol{\rho}} (n]$ is a family of (possibly empty) pairwise disjoint subsets $\mu_{\rho}$ of $(n]$ indexed by $\rho \in \boldsymbol{\rho}$, whose union is $(n]$. The double prime over the summation means that the terms involving $F_{0,1}[q_m]$ or $F_{0,2}[q_m,q_{m'}]$ are excluded from the sum. We note that the summation condition is equivalent to
\[ 
2g - 2 + n + (1 - \ell - 2j) = \sum_{\rho \in \boldsymbol{\rho}} (2h_{\rho} - 2 + |\rho| + |\mu_{\rho}|)\,.
\]
Since $\ell + 2j \geq 2$, this is indeed a recursion on $2g - 2 + n \geq 0$ to compute $F_{g,n}$ starting from the value of $F_{0,2}$.

$F_{0,2}$ obviously satisfies the homogeneous property. Assume the $F_{g',n'}$ for $0 \leq 2g' - 2 + n' < 2g - 2 + n$ satisfy homogeneity. So, the summands that may contribute to \eqref{theFGNsum} are such that
\begin{equation} 
\label{sumsssnynbbh} s\big(2g - 2 + n + (1 - \ell - 2j)\big)= \sum_{\rho \in \boldsymbol{\rho}} s(2h_{\rho} - 2 + |\rho| + |\mu_{\rho}|) = \sum_{l = 1}^{\ell} q_{l} + \sum_{m = 2}^{n} p_m\,.
\end{equation}
Writing $p_1 = \Pi(i,k)$ and applying the dilaton shift $J_{-s} \rightarrow J_{-s} - 1$ to \eqref{eq:twist_mode_coxeter} we know that $C^{(j)}[p_1|\mathbf{q}]$ is a linear combination of terms inside which
\begin{equation*}
\sum_{l = 1}^{\ell} q_l  - s\ell' = rk\,,
\end{equation*}
where $\ell + \ell' = i - 2j$. Hence
\begin{equation}
\label{sumimimim} \sum_{l = 1}^{\ell} q_{l} = p_1 + s(1 - 2j - \ell)\,.
\end{equation}
Together with \eqref{sumsssnynbbh}, this proves homogeneity of $F_{g,n}$. By induction, homogeneity is established for all $g,n$.

We then apply \Cref{lem:homogen0}. In our case, there is a single $\mu$, $t_\mu = \tfrac{1}{r_\mu}$ and $Q_\mu = 0$. Using homogeneity to simplify the right-hand side of \eqref{thedesdil}, we obtain
\[
F_{g,n + 1}[s,p_1,\ldots,p_n] = s(2g - 2 + n)F_{g,n}[p_1,\ldots,p_n] + \delta_{g,0}\delta_{n,1}\,\frac{r^2 - 1}{24}\,,
\]
\end{proof}

\begin{proof}[Proof of \Cref{lem:homogen2}]
The argument is similar and we only point the minor differences that must be taken into consideration. Although half-integer $g$ and $j$ are now allowed, this does not spoil the sum constraints appearing in the recursive formula for $F_{g,n}$ and which were used in the argument. According to \eqref{eq:twist_mode_arbitr} and \eqref{eq:dilaton_arbitr_autom}, we have for $i \in (r_1 + 1]$
\begin{equation}
\label{HihigI}{}^{\sigma} H_{i,k} = r_1^{-(i - 1)}\bigg(W_{i,k}^{1} + \sum_{k' \in \mathbb{Z}} r_1W_{i - 1,k'}^1 J_{k - 1 - k'}^2\bigg)\bigg|_{J_{-s_1}^1 \rightarrow J_{-s_1}^1 - 1}\,,
\end{equation}
where by convention $W_{r_1 + 1,k}^1 = 0$. The analysis of the previous proof applies to the term $W_{i,k}^1$. Due to the equation ${}^{\sigma} H_{1,k}\cdot Z = (J^{1}_{r_1k} + J_k^2) = 0$, we can obtain a recursion in normal form involving only $F_{g,n}\big[\begin{smallmatrix} 1 & \cdots & 1 \\ * & \cdots & * \end{smallmatrix}\big]$ by substituting
\begin{equation}
\label{subsJ2}J_{k_2}^2 \longrightarrow \left\{\begin{array}{lll} 0 & & {\rm if}\,\,k_2 < 0 \\ -\hbar^{\frac{1}{2}}Q_1 & & {\rm if}\,\,k_2 = 0 \\ -J_{r_1k_2}^{1} & & {\rm if}\,\,k_2 > 0 \end{array}\right.
\end{equation}
in \eqref{HihigI}. This converts $W_{i - 1,k'}^{1}J_{k -k'}^{2}$ into $-W_{i - 1,k'}^1J^1_{r_1(k - k')}$ or $0$, and makes it contribute to a coefficient $C^{(j)}[p_1|q_1,\ldots,q_{\ell}]$ where now $q_{\ell} = r_1(k - k')$ and
\[
\bigg(\sum_{l = 1}^{\ell - 1} q_l\bigg) - s_1\ell' = r_1k'\,,
\]
with $(\ell - 1) + \ell' = i - 1 - 2j$. Hence
\[
\sum_{l = 1}^{\ell} q_l =  r_1k - s_1\ell' = p_1 + s_1(1- 2j - \ell)\,,
\]
which is the same as \eqref{sumimimim} and is all what we need to repeat the previous proof and establish homogeneity.

We then specialise \Cref{lem:homogen0} to our case, that is $t_{1} = \frac{1}{r_{1}}$ and $t_2 = 0$, while $Q_2 = -Q_1$. Setting $\mu_i = 1$ for all $i \in [n]$ in \eqref{thedesdil} and using homogeneity to simplify the right-hand side, we deduce
\[
F_{g,n + 1}\big[\begin{smallmatrix} 1 & 1 & \cdots & 1 \\ s_1 & p_1 & \cdots & p_n \end{smallmatrix}\big] = s_1(2g - 2 + n)F_{g,n}\big[\begin{smallmatrix} 1 & \cdots & 1 \\ p_1 & \cdots & p_n \end{smallmatrix}\big] + \bigg(\frac{r_1^2 - 1}{24} + \frac{(r_1 + 1)Q_1^2}{2}\bigg)\delta_{g,1}\delta_{n,1}\,.
\]
\end{proof}

For the two above cases, we also have a string equation when $s_1 = r_1 + 1$.

\begin{lemma}
Assume $d = 1$ and $(r_1,s_1) = (r,r + 1)$. Then, the coefficients of the partition function of the Airy structure described in \cref{thm:W_gl_Airy_Coxeter} satisfy
\[
F_{g,1 + n}\big[1,p_1,\ldots,p_n\big] = \sum_{m = 1}^n p_m F_{g,n}\big[p_1,\ldots,p_{m - 1},p_m - r,p_{m + 1},\ldots,p_n\big] + \delta_{g,0}\delta_{n,2}\delta_{p_1 + p_2,r}\,.
\]
\end{lemma}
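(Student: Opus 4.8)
The plan is to obtain the string equation from the single constraint ${}^{\sigma}H_{2,-1}\cdot Z=0$ --- the $L_{-1}$-analogue of the constraint ${}^{\sigma}H_{2,0}\cdot Z=0$ that produced the dilaton equation in \cref{lem:homogen}. (For $r=1$ there is no index $i=2$; but then the Airy structure is $\{J_k\}_{k\ge1}$, so $F=0$, and the claimed identity holds vacuously since $p_1+p_2\ge 2>r$.) For $r\ge2$: first note that $(i,k)=(2,-1)$ lies in the mode set of \cref{thm:W_gl_Airy_Coxeter}, because for $s=r+1$ one has $\lambda=(1^r)$, hence $\lambda(2)=2$ and the bound is $k\ge 1-\lambda(2)+\delta_{2,1}=-1$; and since $\Pi(2,-1):=r(-1)+s(2-1)=1$, the corresponding normal-form operator is $\tilde H_1:=r^{2-1}\,{}^{\sigma}H_{2,-1}=W^1_{2,-1}\big|_{J_{-(r+1)}\to J_{-(r+1)}-1}$ (using ${}^{\sigma}W_{i,k}=r^{1-i}W^1_{i,k}$ for $d=1$). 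I would compute $W^1_{2,-1}$ from \cref{TwistedWmodes} and the evaluations \eqref{psiorun}, noting that the $\Psi^{(1)}$ term drops because the mode index $-1$ is nonzero; after normal ordering, using $J_0=0$, and applying the dilaton shift one gets
\[
\tilde H_1=\hbar\partial_{x_1}+r^2\sum_{k\ge1}(k+1)\,x_{r(k+1)}\,\hbar\partial_{x_{rk}}-\sum_{k\ge1}(k+r)\,x_{k+r}\,\hbar\partial_{x_k}-\tfrac12\!\!\sum_{\substack{u,v\ge1\\u+v=r}}\!\!uv\,x_ux_v,
\]
the linear term $\hbar\partial_{x_1}$ arising from the shift applied to $\norder{J_{-(r+1)}J_1}$.

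Next I would trim $\tilde H_1$ using the lowest Heisenberg constraints. For $k\ge1$ the operator ${}^{\sigma}H_{1,k}={}^{\sigma}W_{1,k}=J_{rk}=\hbar\partial_{x_{rk}}$ also belongs to the Airy structure (its index is $\Pi(1,k)=rk$), so $Z$, hence $F$, is independent of the variables $x_{rk}$; equivalently $F_{g,n}[\boldsymbol p]=0$ whenever some $p_i$ is a positive multiple of $r$. Hence the second summand of $\tilde H_1$ kills $Z$, and $\tilde H_1\cdot Z=0$ reduces to $\hbar\partial_{x_1}Z=\big(\sum_{k\ge1}(k+r)x_{k+r}\hbar\partial_{x_k}+\tfrac12\sum_{u+v=r}uv\,x_ux_v\big)Z$.

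Then, writing $Z=\exp(F)$ and dividing by $Z$, I would get
\[
\hbar\partial_{x_1}F=\sum_{k\ge1}(k+r)\,x_{k+r}\,\hbar\partial_{x_k}F+\tfrac12\!\!\sum_{u+v=r}\!\!uv\,x_ux_v,
\]
and extract the coefficient of $\tfrac{\hbar^{g-1}}{n!}x_{p_1}\cdots x_{p_n}$ in the sense of \eqref{eq:free_energy_of_Airy_struct}--\eqref{Fgnanudgsgn}. The left side gives $F_{g,1+n}[1,p_1,\ldots,p_n]$; the first term on the right gives $\sum_{m=1}^n p_m\,F_{g,n}[p_1,\ldots,p_m-r,\ldots,p_n]$ (put $k=p_m-r$, so the weight $k+r$ equals $p_m$; the entries with $p_m\le r$ are absent, and by \cref{lem:homogen} combined with the vanishing above these absent contributions are precisely the ones for which $F_{g,n}$ would vanish, so the sum may be written over all $m\in[n]$); the purely quadratic term, of $\hbar$-degree $0$, contributes only when $g=0$, $n=2$, and yields the inhomogeneous piece $\delta_{g,0}\delta_{n,2}\delta_{p_1+p_2,r}$. (Alternatively one can feed $\tilde H_1$ into the explicit recursion \eqref{theFGNsum} with $\Pi=1$, exactly as in the proof of \cref{lem:homogen}.)

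The step I expect to be most delicate is the $g=0$, $n=2$ boundary: matching the single quadratic monomial $-\tfrac12\sum_{u+v=r}uv\,x_ux_v$ of $\tilde H_1$ with $\delta_{g,0}\delta_{n,2}\delta_{p_1+p_2,r}$ is sensitive to the normalization of $F_{0,2}$ and to the multiplicity bookkeeping when passing between the operator form of the constraint and the coefficients $F_{g,n}[\boldsymbol p]$; the remainder of the argument is purely formal.
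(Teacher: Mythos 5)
Your approach is exactly the one the paper intends: the proof in the paper only says ``use the constraint $H_{2,-1}\cdot Z = 0$ and proceed as in \cref{lem:homogen0},'' and you have carried that out explicitly, with the correct normalisation ($\Pi(2,-1)=1$, so $\tilde H_1 = r\,{}^\sigma H_{2,-1}$ is in normal form), the correct observation that the $\Psi^{(1)}$ term of \eqref{eq:twist_mode_coxeter} drops because the mode label $-1$ is non-zero, and the correct trimming via the Heisenberg constraints $J_{rk}Z=0$. Your displayed formula for $\tilde H_1$ is right.

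The one place your argument is not carried through to the end is the one you yourself flag as ``most delicate,'' and it turns out to matter. Extracting the coefficient of $\hbar^{-1}x_{p_1}x_{p_2}$ from the quadratic inhomogeneous piece
\[
-\tfrac12\sum_{\substack{u,v\ge1\\u+v=r}}uv\,x_ux_v
\]
of $\tilde H_1$ gives $p_1p_2\,\delta_{p_1+p_2,r}$, not $\delta_{p_1+p_2,r}$: in the notation of \eqref{theFGNsum} this piece records $C^{(0)}[1\,|\,-u,-v]=1$ for $u+v=r$, so the base case of the recursion yields $F_{0,3}[1,p_1,p_2]=p_1p_2\,\delta_{p_1+p_2,r}$. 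This agrees with \cref{03explicit} specialised to $(r,s)=(r,r+1)$, $c=1$, $t=1/r$, which gives $F_{0,3}[k_1,k_2,k_3]=k_1k_2k_3\,\delta_{k_1+k_2+k_3,r+1}$, hence $F_{0,3}[1,p_1,p_2]=p_1p_2\,\delta_{p_1+p_2,r}$; it also matches \eqref{rsclassfgn} combined with the $r$-spin string equation, since $1!^{(r)}a_1!^{(r)}a_2!^{(r)}=a_1a_2$. For $r=2$ the two forms coincide (the only solution of $p_1+p_2=2$ is $p_1=p_2=1$), which is why the discrepancy is invisible in the Kontsevich case, but for $r\geq 3$ they differ. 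So the inhomogeneous term in the lemma should read $p_1p_2\,\delta_{g,0}\delta_{n,2}\delta_{p_1+p_2,r}$; the statement as printed appears to have dropped the factor $p_1p_2$. Had you pushed the ``delicate'' matching to completion rather than asserting agreement with the stated lemma, your computation would have revealed this. A smaller point: the appeal to \cref{lem:homogen} to justify extending the $m$-sum to all of $[n]$ is unnecessary, since $F_{g,n}$ with a non-positive argument is simply absent (zero by convention) and no homogeneity argument is needed.
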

\begin{lemma}
Assume $d = 2$, $s_1 = r_1 + 1$ and $(r_2,s_2) = (1,\infty)$. Then, the coefficients of the partition function of the Airy structure described in \cref{thm:W_gl_Airy_arbitrary_autom} satisfy
\[
F_{g,1 + n}\big[\begin{smallmatrix} 1 & 1 & \cdots & 1 \\ 1 & p_1 & \ldots & p_n\end{smallmatrix}\big] = \sum_{m = 1}^n p_m F_{g,n}\big[\begin{smallmatrix} 1 & \cdots & 1 & 1 & 1 & \cdots & 1 \\ p_1 & \cdots & p_{m- 1} & p_m - r & p_{m + 1} & \cdots & p_n\end{smallmatrix}\big] + \delta_{g,0}\delta_{n,2}\delta_{p_1 + p_2,r} + \delta_{g,\frac{1}{2}}\delta_{n,1}\delta_{p_1,r}\,Q_1 \,.
\]
\end{lemma}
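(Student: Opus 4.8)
The plan is to extract this string equation from a single low-lying constraint of the Airy structure, in exact analogy with the way the dilaton equation of \cref{lem:homogen2} was deduced from ${}^{\sigma}H_{2,0}\cdot Z = 0$ in \cref{lem:homogen0}; the passage from the $d=1$ string equation (the preceding lemma) to the present statement is then parallel to the passage from \cref{lem:homogen} to \cref{lem:homogen2}.

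First I would observe that since $s_1 = r_1 + 1$ one has $\lfloor r_1/s_1 \rfloor = 0$, so the partition \eqref{thepartlam} is $\lambda = (1^{r})$, whence $\lambda(2) = 2$ and the pair $(i,k) = (2,-1)$ satisfies $k \geq 1 - \lambda(2) + \delta_{2,1} = -1$; thus ${}^{\sigma}H_{2,-1}$ belongs to the Airy structure of \cref{thm:W_gl_Airy_arbitrary_autom}, and $Z$ satisfies ${}^{\sigma}H_{2,-1}\cdot Z = 0$. Next I would write this operator out explicitly: by \cref{TwistedWmodes} and \eqref{eq:twist_mode_arbitr}, for $d=2$ with a fixed point in the second slot one has ${}^{\sigma}W_{2,-1} = \tfrac{1}{r_1}W^{1}_{2,-1} + \sum_{k_1 + k_2 = -1} J^{1}_{r_1 k_1}\,J^{2}_{k_2}$, while \eqref{psiorun} realises $W^{1}_{2,-1}$ as a quadratic expression in the $J^{1}$'s with mode indices summing to $-r_1$ (the $\Psi^{(1)}$-contribution vanishes because $r_1 k \neq 0$ for $k=-1$). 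Performing the dilaton shift $J^{1}_{-s_1} \mapsto J^{1}_{-s_1} - r_1 t_1$ with the normalisation $t_1 = 1/r_1$ as in \cref{lem:homogen2}, the resulting operator acquires a degree-one part proportional to $J^{1}_{1} = \hbar\partial_{x^{1}_{1}}$, together with quadratic corrections and a linear correction stemming from the monomial $J^{1}_{-r_1}J^{2}_{0}$.

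It remains to turn ${}^{\sigma}H_{2,-1}\cdot Z = 0$ into a relation among the $F_{g,n}$. Exactly as in the proof of \cref{lem:homogen2}, the constraints ${}^{\sigma}H_{1,k}\cdot Z = (J^{1}_{r_1 k} + J^{2}_{k})\cdot Z = 0$ together with $J^{2}_{0} = \hbar^{1/2}Q_2 = -\hbar^{1/2}Q_1$ justify the substitution \eqref{subsJ2}, which recasts the constraint as a normal-form recursion for the restricted coefficients $F_{g,n}\big[\begin{smallmatrix} 1 & \cdots \\ \ast & \cdots \end{smallmatrix}\big]$ with leading term $\hbar\partial_{x^{1}_{1}}$. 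Reading off this recursion via \eqref{repJdiff}, the ``annihilation $\times$ creation'' quadratic monomials of ${}^{\sigma}H_{2,-1}$ produce the sum $\sum_{m} p_m\, F_{g,n}[\ldots, p_m - r, \ldots]$ (the creation operator raises a spectator index by the shift appearing there, with the multiplicity $p_m$ coming from \eqref{repJdiff}); the ``creation $\times$ creation'' monomials acting on $Z$ produce, through $F_{0,2}[p,q] = |p|\delta_{p+q,0}$, the seed term $\delta_{g,0}\delta_{n,2}\delta_{p_1+p_2,r}$; and the monomial $J^{1}_{-r_1}J^{2}_{0}\mapsto -\hbar^{1/2}Q_1\,J^{1}_{-r_1}$ yields the half-integer anomaly $\delta_{g,\frac12}\delta_{n,1}\delta_{p_1,r}\,Q_1$. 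Finally I would invoke the homogeneity of \cref{lem:homogen2}, which forces every $F_{g,n}$ of the wrong total degree to vanish and thereby collapses the a priori infinite sums over the mode indices of ${}^{\sigma}H_{2,-1}$ to the finite right-hand side printed in the statement. The hard part is precisely this last bookkeeping: checking that, once homogeneity has discarded the off-shell terms and the normal-ordering has been accounted for, the quadratic part reassembles into exactly the shift $p_m \mapsto p_m - r$ with multiplicity $p_m$, and that the $F_{0,2}$-seed and the $Q_1$-anomaly carry precisely the coefficients stated.
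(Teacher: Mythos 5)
Your proposal is correct and follows the same route as the paper: the paper's own proof (one line) says to use the constraint $H_{2,-1}\cdot Z=0$, which is present in the Airy structure only for $s_1=r_1+1$, and to proceed ``similarly to \Cref{lem:homogen0}''. You identify exactly this operator (verifying via the partition $\lambda=(1^{r})$ that $k=-1$ is admitted for $i=2$), write ${}^{\sigma}W_{2,-1}$ out via \eqref{eq:twist_mode_arbitr} and \eqref{eq:twist_mode_coxeter}, observe that the $\Psi^{(1)}$ anomaly drops out because $k=-1\ne0$, apply the dilaton shift, use the substitution \eqref{subsJ2} from the $i=1$ constraints, and read the string equation from the degree-one part $\hbar\partial_{x_1^1}$ plus quadratic and linear residues — precisely the mechanism the paper leaves implicit.

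Two small remarks. First, your assumption $t_1=1/r_1$ is indeed necessary for the formula to come out with the stated coefficients; the paper omits it from the lemma's hypotheses even though it is required (as in \Cref{lem:homogen2}), so you were right to make it explicit. Second, your appeal to homogeneity in the last step is not actually needed: because the mode constraint $p_1+p_2=-r_1<0$ in $W^1_{2,-1}$ forces each quadratic monomial to have exactly one annihilator and one creator, there are never $F\cdot F$ cross-terms, and the projection onto a fixed monomial $\prod_m x^1_{p_m}$ is automatically finite — no homogeneity vanishing is required. That said, invoking it is harmless.
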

\begin{proof} The string equation corresponds to the operator $H_{i = 2,k = -1}$, which is only present in the Airy structure for $d = 1$ when $s = r + 1$, and for $d = 2$ if $s_1 = r_1 + 1$. The proof of the above relations is then similar to the one in \Cref{lem:homogen0}.
\end{proof}

\medskip
\section{Proof of \texorpdfstring{\cref{thm:W_gl_Airy_arbitrary_autom,prop:AiryAllDilatonPolarize,thm:genus0_soln_admissible_Hik}}{Theorem}}
\label{sec:W_gl_airy_structs_proof}

\medskip

The first part of this section is devoted to the proof of \cref{thm:W_gl_Airy_arbitrary_autom,prop:AiryAllDilatonPolarize}. These theorems state that certain collections of operators $({}^{\sigma}H_{i,k})_{(i,k)\in I}$ defined in \cref{eq:dilaton_arbitr_autom,HAllDilatonPolarize} are Airy structures. In fact we only prove the second theorem, and note that the first is a special case. We first prove this in the standard case, proceeding as follows.
\begin{enumerate}[label=(\Roman*)]
	\item \label{item:step_deg_one_calc} The operators of an Airy structure must be of the form $J^\mu_a + \higherorderterms{2}$. It is thus necessary to first identify the degree zero and degree one term of ${}^{\sigma}H_{i,k}$ in order to check this condition.
	
	\item In general, one will find that
	\begin{equation}
	\label{eq:Hik_general_deg_one}
	{}^{\sigma}H_{i,k} = c_{i,k} + \sum_{\substack{\mu \in [d] \\ a\in\mathbb{Z}}} \degOneMat_{(i,k),(\mu,a)} \, J^\mu_a + \higherorderterms{2}\,.
	\end{equation}
	for some matrix $\degOneMat$ and constants $c_{i,k}$. This means that for generic $i,k$ we may expect terms proportional to $J^\mu_{-a}=a\,x^\mu_a$. We will thus construct an index set $I\subset [r] \times \mathbb{Z}$ such that for all $(i,k)\in I$ we have $c_{i,k} = 0$ and $\degOneMat_{(i,k),(\mu,a)}=0$ if $a\leq 0$.	
	\item \label{item:step_deg_one_cond} Nevertheless, even restricted to $I$ the degree one term \eqref{eq:Hik_general_deg_one} is in general a linear combination of many $J^\mu_a$s for $a>0$. In order to bring the operators into the normal form of an Airy structure $J^\mu_a + \higherorderterms{2}$ where each $(\mu,a)$ appears in a unique operator, we will show that the matrix $\degOneMat$ restricted to $I$ is invertible under certain constraints on the dilaton shifts.
	One can then obtain the operators
	\begin{equation*}
	{}^{\sigma}\tilde{H}^\mu_a \coloneqq  \sum_{(i,k)\in I} \left(\degOneMat^{-1}\right)_{(\mu, a), (i,k)} {}^{\sigma}H_{i,k} = \hbar \partial_{x^{\mu}_a} + \higherorderterms{2}\,,
	\end{equation*}
	which are of desired form.
	\item \label{item:step_Lie_subalg} The modes $(\tilde{H}^\mu_a)_{(\mu\in[d],\,\,a>0)}$ satisfy the subalgebra condition if and only if the $(^{\sigma}W_{i,k})_{(i,k)\in I}$ do. The latter is satisfied when $I$ is induced by a descending partition of $r$ as specified in \cref{prop:desc_part_Lie_subalg}. This criterion thus allows for an easy check whether the mode set constructed in \labelcref{item:step_deg_one_cond} satisfies the subalgebra condition \eqref{eq:gr_lie_subalg_cond} of a higher quantum Airy structure.
\end{enumerate}

Together the results of \labelcref{item:step_deg_one_cond} and \labelcref{item:step_Lie_subalg} will directly imply \cref{prop:AiryAllDilatonPolarize} and hence \cref{thm:W_gl_Airy_arbitrary_autom} in the standard case. The steps \labelcref{item:step_deg_one_calc} to \labelcref{item:step_deg_one_cond} will be carried out in \cref{sec:deg_one_part} and step \labelcref{item:step_Lie_subalg} is performed in \cref{sec:match_lie_subalg}. In \cref{sec:fixed_pt_no_shift} we treat in less details the exceptional case, as it resembles the standard case in many ways.\par
The purpose of \Cref{sec:genuszerosolutionexists} is the proof of \Cref{thm:genus0_soln_admissible_Hik} that addresses the question in which case the operators ${}^{\sigma}H_{i,k}$ admit a partition function solving the associated system of differential equations to leading order in $\hbar^{\frac{1}{2}}$. We will start in \cref{sec:genuszerosolutionexists2} by deriving sufficient conditions for this question to have a positive answer for general families of differential operators satisfying the degree one condition. Then we apply this result to our case at hand and finally prove \Cref{thm:genus0_soln_admissible_Hik} in \cref{sec:genus0_soln_admissible_Hik}.
\vspace{10pt}

First, let us recall some notation. Let $\sigma\in\mf{S}_r$ be a permutation with $d$ cycles of respective length $r_\mu$ such that $r=r_1+\cdots+r_d$. We can then define the dilaton shifted modes
\begin{equation*}
{}^{\sigma}H_{i,k} \coloneqq  \hat{T} \cdot {}^{\sigma}W_{i,k} \cdot \hat{T}^{-1} \,,\qquad  \hat{T} \coloneqq \exp\left(\sum_{\substack{\mu \in [d] \\ k > 0}} \hbar^{-1} F_{0,1}\big[\begin{smallmatrix} \mu \\ -k \end{smallmatrix}\big] \,\frac{J^{\mu}_{k}}{k}\right)\,,\qquad s_\mu \coloneqq \min \big\{ k  > 0 \mid  F_{0,1}\big[\begin{smallmatrix} \mu \\ -k \end{smallmatrix}\big] \neq 0 \big\}\,,
\end{equation*}
which in the following are the central objects of study. Here ${}^{\sigma}W_{i,k}$ are the $\sigma$-twisted modes from \cref{TwistedWmodes}. Compared to \cref{HAllDilatonPolarize}, we do not introduce the coefficients $ F_{\frac{1}{2},1}$ and $F_{0,2}$ here yet, as it turns out these are not important for most of the proof. As in \cref{Wgenerating}, it will be useful to gather these in a generating function.
\begin{equation*}
{}^{\sigma}H_i(\tilde{x}) = \sum_{k \in \Z} \frac{{}^{\sigma}H_{i,k}}{\tilde{x}^k} \bigg( \frac{\dd \tilde{x}}{\tilde{x}} \bigg)^i = \hat{T} \cdot {}^\sigma W_i(\tilde{x}) \cdot \hat{T}^{-1}\,.
\end{equation*}
We will also recall from \cref{DefCurrents}
\begin{equation*}
J^{\mu,a} (\tilde{x}) := \sum_{k \in \frac{a}{r_\mu}+\mathbb{Z}} J^\mu_{r_\mu k} \frac{\dd \tilde{x}}{\tilde{x}^{k+1}}
\end{equation*}
and define $ J^{\mu,{\rm tot}}(\tilde{x}) = \sum_{a =0}^{r_\mu - 1} J^{\mu,a} (\tilde{x})$. To treat these currents more uniformly, we introduce $\tilde{C} = \bigsqcup_{\mu = 1}^d \tilde{C}_\mu$, the union of $d$ copies of a formal neighbourhood  $\tilde{C}_\mu$ of the origin in the complex plane. We use the notation $ \begin{psmallmatrix}\mu \\ z\end{psmallmatrix}$ for the coordinate in the $\tilde{C}_\mu$. We define a function $ \tilde{x} \colon \tilde{C} \to \C $ by $ \tilde{x} \begin{psmallmatrix} \mu \\ z \end{psmallmatrix} = z^{r_\mu}$. Cf. \cref{SCpart} for more on this viewpoint. We then define a unified current
\begin{equation*}
J\begin{psmallmatrix} \mu \\ z\end{psmallmatrix} := \frac{J^{\mu,{\rm tot}}(\tilde{x}(z))}{r_\mu} = \sum_{k \in \Z} J^\mu_k \frac{\dd \tilde{x}}{r_\mu \tilde{x}^{k/r_\mu +1}} = \sum_{k \in \Z} J^\mu_k \frac{\dd z}{z^{k +1}}\,.
\end{equation*}
The factor of $r_\mu$ in the denominator is a convention making \eqref{HfromJby01s} simpler. We sometimes omit $\mu$ from the notation and simply denote $z \in \tilde{C}$. With these notations, we see that the dilaton shift induces
\begin{equation*}
J (z) \to J (z) + \omega_{0,1} (z)\,, \qquad \omega_{0,1}\begin{psmallmatrix} \mu \\z \end{psmallmatrix} \coloneqq \sum_{k >0} F_{0,1}\big[\begin{smallmatrix} \mu \\ -k \end{smallmatrix}\big]\,z^{k - 1}\dd z\,.
\end{equation*}
We also use the shorthand notation $ t_\mu = -\frac{1}{r_\mu}\,F_{0,1} \begin{bsmallmatrix} \mu \\ -s_\mu \end{bsmallmatrix} $ for the leading coefficient.

\medskip

\subsection{The degree one condition}
\label{sec:deg_one_part}

\medskip

In this subsection and the next one we assume that all $s_{\mu}$ are finite. Let us begin by identifying the degree one component of ${}^{\sigma}H_i$. Let $\pi_1$ be the projection to degree one.

\begin{lemma}
	\label{DegOneSeries}
	For any $i\in [r]$
	\begin{equation}
	\label{HfromJby01s}
	\pi_1 \big({}^\sigma H_i (\xi ) \big) = \sum_{z \in \tilde{x}^{-1}(\xi )} \sum_{\substack{Z \subseteq \tilde{x}^{-1}(\xi ) \setminus \{ z\} \\ |Z|=i-1}} J(z) \prod_{z' \in Z} \omega_{0,1}(z')\,.
	\end{equation}
So we have
\begin{equation}
\pi_1 \big({}^\sigma H_{i,k} \big) = \sum_{\substack{\mu \in [d] \\ a \in \mathbb{Z}}} \mc{M}_{(i,k),(\mu,a)}\,J_a^\mu\,,
\end{equation}
with a matrix $\mc{M}_{(i,k),(\mu,a)}$ having the property that, for each $(\mu,a)$, there exists $K_{\mu,a}$ such that for any $i \in [r]$ and $k \geq K_{a,\mu}$ we have $\mc{M}_{(i,k),(\mu,a)} = 0$.
\end{lemma}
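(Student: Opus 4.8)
The plan is to compute the degree-one (and degree-zero) part of $ {}^\sigma H_i(\tilde{x}) = \hat{T}\cdot {}^\sigma W_i(\tilde{x})\cdot \hat{T}^{-1} $ directly from the generating-series form of \cref{TwistedWmodes} and the description of the dilaton shift as the substitution $ J(z) \to J(z) + \omega_{0,1}(z) $. First I would use the compact product formula $ \mc{W}(\tilde{x},u) = \prod_{\mu=1}^d r_\mu\,\mc{W}^\mu(\tilde{x},u) $ together with the expression \eqref{Wikz} for $ W^\mu_i(\tilde{x}) $. The key observation is that the unified current $ J\begin{psmallmatrix}\mu\\z\end{psmallmatrix} = J^{\mu,\mathrm{tot}}(\tilde{x}(z))/r_\mu $ repackages the $ d $ families of currents $ J^{\mu,a}(\tilde{x}) $ as a single object on $ \tilde{C} $, so that the normally-ordered products of currents in \eqref{Wikz} — once we discard the $ \hbar^{j} $ terms with $ j\geq 1 $ and the normal-ordering corrections, which contribute only in degree $ \leq i-2 $ — organise into the elementary symmetric combination $ \sum_{Z}\prod_{z\in Z} J(z) $ over $ (i) $-element subsets $ Z $ of the fibre $ \tilde{x}^{-1}(\xi) $. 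Concretely I would argue that $ {}^\sigma W_i(\xi) $, modulo lower-degree terms, equals $ \sum_{\substack{Z\subseteq \tilde{x}^{-1}(\xi)\\|Z|=i}}\prod_{z\in Z}J(z) $; this is essentially the statement that the $ w_i $ are elementary symmetric polynomials in the $ \chi^j_{-1} $, transported through the twist, and it is the place where the roots-of-unity sums in $ \Psi^{(0)}_{r_\mu} $ collapse to a clean symmetric-function identity (cf. the evaluations in \eqref{psiorun}).

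Next, applying the conjugation by $ \hat{T} $, which sends $ J(z)\mapsto J(z)+\omega_{0,1}(z) $, I would expand $ \prod_{z\in Z}\bigl(J(z)+\omega_{0,1}(z)\bigr) $ and keep only the degree-one part: exactly one factor contributes a current $ J(z) $ and the remaining $ i-1 $ factors contribute the scalar $ 1 $-forms $ \omega_{0,1}(z') $. Summing over which factor is the current gives precisely the right-hand side of \eqref{HfromJby01s}. The degree-zero part (all factors contributing $ \omega_{0,1} $) yields the constants $ c_{i,k} $ and will be recorded separately; the degree $ \geq 2 $ parts are irrelevant here. This establishes the first display of the \nameCref{DegOneSeries}.

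For the second display I would simply extract Laurent coefficients in $ \xi $. Writing $ \xi = z^{r_\mu} $ on the $ \mu $-th sheet, the fibre $ \tilde{x}^{-1}(\xi) $ consists of the points $ \theta_{r_\nu}^b z_\nu $ for $ \nu\in[d] $, $ b\in[r_\nu] $, where $ z_\nu $ is any $ r_\nu $-th root of $ \xi $; the currents expand as $ J(z) = \sum_{a\in\Z} J^\mu_a\, z^{-a-1}\dd z $ and $ \omega_{0,1}\begin{psmallmatrix}\mu\\z\end{psmallmatrix} = \sum_{k>0}F_{0,1}\bigl[\begin{smallmatrix}\mu\\-k\end{smallmatrix}\bigr]z^{k-1}\dd z $ starts at order $ z^{s_\mu - 1} $. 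Hence in \eqref{HfromJby01s}, the coefficient of $ \xi^{-k} (\dd\xi/\xi)^i $ involves $ J^\mu_a $ only with a bounded deficit: the $ i-1 $ factors $ \omega_{0,1} $ each raise the $ z $-order by at least $ s_{\mu'} \geq 1 $, so for the total $ z^{r_\mu(\dots)} $-exponent to match a fixed $ k $ one needs $ a \leq r_\mu\bigl(\dots\bigr) $ bounded above by an explicit function of $ k $; this gives the finiteness statement $ \mc{M}_{(i,k),(\mu,a)} = 0 $ for $ k\geq K_{\mu,a} $, with $ K_{\mu,a} $ depending on $ a $, on $ \mu $, and on the (finite) data of the $ s_{\nu} $ and the orders of $ \omega_{0,1} $. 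I would make the bound explicit enough to be reused in \labelcref{item:step_deg_one_cond}.

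The main obstacle I anticipate is the first step: proving cleanly that $ {}^\sigma W_i(\xi) $ reduces, modulo degree $ \leq i-2 $, to the elementary symmetric sum $ \sum_{|Z|=i}\prod_{z\in Z}J(z) $ over the fibre. The lemma \eqref{eq:twist_mode_coxeter}--\eqref{eq:psidef} presents $ {}^\sigma W_{i,k} $ as a rather opaque sum over roots of unity and over $ \hbar $-corrected lower products, so isolating the top-degree piece requires knowing that $ \Psi^{(0)}_{r_\mu}(a_{i'},\dots,a_i) $ is the ``clean'' coefficient that reconstitutes $ \prod v^{\mu,a_l}(\tilde{x}) \leftrightarrow J^{\mu,\mathrm{tot}} $-products and in fact the full symmetric function $ \mathrm{e}_i $ when summed over sheets — i.e. that the construction is genuinely the image of $ w_i = \mathrm{e}_i(\chi^1_{-1},\dots,\chi^r_{-1}) $ under the twisted state-field map, with no stray top-degree terms from normal ordering. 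I would handle this either by a direct generating-function manipulation of $ \prod_\mu r_\mu \mc{W}^\mu(\tilde{x},u) $, recognising $ r_\mu\mc{W}^\mu(\tilde{x},u) = \prod_{b\in[r_\mu]}\bigl(u + J^{\mu,b}(\tilde{x})/(\dd\tilde{x}) + \higherorderterms{\hbar}\bigr) $ up to lower order, or by citing the relevant identity from \cite{BBCCN18}; everything downstream (the conjugation expansion and the coefficient bookkeeping) is then routine.
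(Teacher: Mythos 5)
Your proposal is correct and follows essentially the same route as the paper's own proof: starting from \eqref{Wikz}, discarding the $\hbar^j$ terms with $j\geq 1$ (which after the shift land in degree $\geq 2$), recognising the $\Psi^{(0)}$ sums as elementary symmetric combinations over the fibre of $\tilde{x}$ (this is precisely what the paper says, observing ``the definition of $\Psi^{(0)}$ is nothing but a sum over subsets of Galois conjugates of the function $\tilde{x}$''), applying the dilaton shift $J\mapsto J+\omega_{0,1}$, keeping exactly one $J$ factor, and finally reading off the matrix $\mc{M}$ and the vanishing bound from the fact that $\omega_{0,1}$ contains only nonnegative powers of $z$. One small imprecision: the $\hbar^j$ terms with $j\geq 1$ are homogeneous of degree $i$ before the shift and of degree $\geq 2j\geq 2$ after it (not ``$\leq i-2$'' as you write), but that slip does not affect the conclusion, since either way those terms miss the degree-one projection.
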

Note that the right-hand side is a symmetric function in the elements of $ \tilde{x}^{-1}(\xi)$, and therefore it contains only integral powers of $\xi$.
\begin{proof}
	From \cref{Wikz}, we see that $ {}^\sigma W_i (\xi )$ is a linear combination of currents
	\begin{equation*}
	\hbar^{\sum_{\mu} j_\mu} ~ \prod_{\mu\in M} \norder{ \prod_{l = 2j_\mu + 1}^{i_\mu} J^{\mu,a^\mu_l}(\xi) }\,.
	\end{equation*}
	After the dilaton shift, these will only contribute to the degree one component of $ {}^\sigma H_i (\xi) $ if $j_\mu=0$ for all $\mu\in M$. We have
	\begin{equation}
	\label{pi1efnun}
	\pi_1\big({}^{\sigma}W_{i}(\xi)\big) = \pi_1\left(\sum_{M \subseteq [d]} \,\,\sum_{\substack{i_{\mu} \in [r_{\mu}]\,\,\mu \in M \\  \sum_{\mu} i_{\mu} = i}} \prod_{\mu \in M} \frac{1}{r_\mu^{i_\mu}} \bigg(\sum_{a^\mu_1,\ldots,a^\mu_{i_\mu} \in [0,r_\mu)} \Psi^{(0)}_{r_\mu}(a^\mu_{1},\ldots,a^\mu_{i_\mu})\,\, \norder{ \prod_{\mu \in M} \prod_{l = 1}^{i_\mu} J^{\mu,a^\mu_l}(\xi)}\,\,\bigg)\right)\,,
	\end{equation}
	and therefore on the right-hand side we need to take the contribution of the shifts in all factors but one. The definition \eqref{eq:psidef} of the $\Psi^{(0)}$ is nothing but a sum over subsets of Galois conjugates of the function $ \tilde{x}$, so we obtain
	\begin{equation*}
	\sum_{a^\mu_1,\ldots,a^\mu_{i_\mu} \in [0,r_{\mu})} \Psi^{(0)}_{r_\mu}(a^\mu_{1},\ldots,a^\mu_{i_\mu})\,\norder{\prod_{l = 1}^{i_\mu} J^{\mu,a^\mu_l}(\xi)} = \sum_{\substack{Z \subseteq \tilde{x}^{-1} (\xi ) \cap \tilde{C}_\mu \\|Z| = i_\mu}} \norder{\prod_{z' \in Z} J\big(\begin{smallmatrix} \mu \\ z'\end{smallmatrix}\big)}\,.
	\end{equation*}
	The sum over $ M \subseteq [d]$ in \eqref{pi1efnun} `globalises'  this sum of subsets from one component $\tilde{C}_\mu$ to all of $\tilde{C}$. As stated before, the degree one projection extracts the dilaton shifts of all but one (the choice of $z$) of these factors, which proves \eqref{HfromJby01s}. We obtain the matrix $\mc{M}$ by expanding this equation in $\xi$ and collecting the contributions of $J_a^\mu$. The vanishing property comes from the fact that $\omega_{0,1}$ contains only nonnegative positive powers of $z$.
\end{proof}

We will restrict the range of indices on both sides and show that the matrix $\mathcal{M}$ is invertible in order to bring the differential operators into the normal form of an Airy structure. But first, we would like to see that this matrix is invertible without restricting it to any subspace yet.
Define
\begin{equation*}
{}^\sigma H(\xi, u) := \sum_{i=1}^r {}^\sigma H_i (\xi )\, u^{r-i}\,.
\end{equation*}
Then
\begin{equation}
\label{DegOneDoubleGenFun}
\pi_1 \big({}^\sigma H (\xi, u) \big) = \sum_{i=1}^r \sum_{z \in \tilde{x}^{-1}(\xi )} \sum_{\substack{Z \subseteq \tilde{x}^{-1}(\xi ) \setminus \{ z\} \\ |Z|=i-1}} \!\! J(z) \prod_{z' \in Z} \omega_{0,1}(z') \, u^{r-i}  = \!\! \sum_{z \in \tilde{x}^{-1} (\xi)} J(z) \!\!\prod_{z' \in \tilde{x}^{-1}(\xi) \setminus \{ z\} }\!\!\!\!\!\! \big( u+ \omega_{0,1}(z')\big) \,.
\end{equation}
We are looking for the inverse to this operation. Remember from \eqref{F01tum} that for any $\mu \in [d]$ we write $t_\mu \coloneq -\frac{1}{r_\mu}\,F_{0,1}\big[\begin{smallmatrix} \mu \\ -s_\mu \end{smallmatrix}\big]$.

\begin{lemma}
	\label{InverseJfromH}
	Assume ${\rm gcd}(r_{\mu},s_{\mu}) = 1$ for all $\mu \in [d]$ and $ t_\mu^{r_\mu} \neq t_\nu^{r_\nu} $ for any distinct $\mu,\nu$ such that $ \frac{r_\mu}{s_\mu} = \frac{r_\nu}{s_\nu}$. Then the currents can be recovered from $\pi_1\big({}^\sigma H(\xi, u)\big)$ as follows:
	\begin{equation}
	\label{CurrentFromH}
	J(z) = \Res_{u = - \omega_{0,1}(z)}  \frac{\pi_1\big( {}^\sigma H(\tilde{x}(z), u)\big)\,\dd u}{\prod_{z' \in \tilde{x}^{-1}(\tilde{x}(z))} (u+ \omega_{0,1}(z'))}\,.
	\end{equation}
\end{lemma}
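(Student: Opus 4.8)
The plan is to verify the residue formula \eqref{CurrentFromH} by direct computation, using \eqref{DegOneDoubleGenFun} as the starting point and showing that the only pole of the integrand at $u = -\omega_{0,1}(z)$ comes from a single factor in the denominator, so the residue picks out precisely $J(z)$. The first thing I would check is that, under the coprimality and non-resonance hypotheses, the map $u \mapsto u + \omega_{0,1}(z')$, as $z'$ ranges over $\tilde{x}^{-1}(\xi)$, takes pairwise distinct values for generic $\xi$ — or more precisely, that the $\omega_{0,1}(z')$ are pairwise distinct as functions of $\xi$ near the origin. Here one uses that $\omega_{0,1}\begin{psmallmatrix} \mu \\ z'\end{psmallmatrix} = -r_\mu t_\mu z'^{\,s_\mu - 1}\dd z' + \higherorderterms{z'^{s_\mu}}$, and $z'$ runs over the $r_\mu$ values $\theta_{r_\mu}^j z$ (the Galois conjugates on the $\mu$-th component), so that within a single component the leading terms $z'^{\,s_\mu - 1}$ already differ by $r_\mu$-th roots of unity to the power $(s_\mu-1)j$ — distinct precisely because $\gcd(r_\mu, s_\mu) = 1$. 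Across different components $\mu \neq \nu$ with $r_\mu/s_\mu = r_\nu/s_\nu$, the relevant powers of $z$ match but the coefficients $r_\mu t_\mu$ versus $r_\nu t_\nu$ differ by the hypothesis $t_\mu^{r_\mu} \neq t_\nu^{r_\nu}$; and when $r_\mu/s_\mu \neq r_\nu/s_\nu$ the leading powers of $\xi = z^{r_\mu} = (z')^{r_\nu}$ themselves differ, so there is nothing to check.

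Granting that the $\omega_{0,1}(z')$ are pairwise distinct, the denominator $\prod_{z' \in \tilde{x}^{-1}(\tilde{x}(z))}(u + \omega_{0,1}(z'))$ has a simple zero at $u = -\omega_{0,1}(z)$, and the numerator $\pi_1\big({}^\sigma H(\tilde{x}(z),u)\big)$ from \eqref{DegOneDoubleGenFun} is a polynomial in $u$, hence regular there. Computing the residue at this simple pole, the factor $(u + \omega_{0,1}(z))$ cancels against the $z' = z$ term already singled out on the right of \eqref{DegOneDoubleGenFun}, and evaluating the remaining factors at $u = -\omega_{0,1}(z)$ kills every summand with $z'' \neq z$ in the outer sum of \eqref{DegOneDoubleGenFun} (because each such summand still contains the factor $u + \omega_{0,1}(z)$, which vanishes), leaving exactly $J(z)$. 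So the identity \eqref{CurrentFromH} holds termwise.

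One subtlety I would be careful about is that these are formal Laurent series identities in $\xi$ (equivalently in $z$) with operator-valued coefficients, not honest analytic functions, so "residue" and "pairwise distinct" must be interpreted in the formal sense: the residue in $u$ is the algebraic operation of extracting the coefficient after partial-fraction decomposition, and "$\omega_{0,1}(z')$ pairwise distinct" means the differences $\omega_{0,1}(z') - \omega_{0,1}(z'')$ are invertible in the appropriate ring of Laurent series (their leading term is a nonzero monomial). The argument above is really an identity of rational functions in $u$ over that ring, so the partial-fraction manipulation is legitimate. I expect the main obstacle to be precisely this bookkeeping — making the distinctness of the $\omega_{0,1}(z')$ rigorous across all three cases ($\mu = \nu$, $\mu \neq \nu$ with equal slope, $\mu \neq \nu$ with different slope) and checking that the leading coefficients never accidentally cancel — rather than the residue computation itself, which is a one-line partial-fraction argument once distinctness is in hand. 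A secondary point worth a sentence is that the formula should be read as recovering each mode $J^\mu_a$ by further expanding both sides in $z$; Lemma~\ref{DegOneSeries} already guarantees the coefficient-matching is well-posed (the matrix $\mc{M}$ is "upper-triangular" in the sense of its vanishing property), and \eqref{CurrentFromH} exhibits the inverse explicitly.
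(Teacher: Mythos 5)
Your approach is the same as the paper's: substitute \eqref{DegOneDoubleGenFun} into the claimed residue formula, observe that the quotient collapses to the sum of simple fractions $\sum_{\zeta} J(\zeta)/(u + \omega_{0,1}(\zeta))$, and take the residue, which picks out $\zeta = z$ because the $\omega_{0,1}(\zeta)$ are pairwise distinct on a fiber. The paper presents this in two lines and defers the distinctness check to \cref{rem:t_cond}; you choose to spell it out in the proof, which is a good instinct since that is exactly where the hypotheses enter. However, your case analysis contains two slips, both landing on the correct conclusion by a slightly wrong route.

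First, within a single component $\tilde C_\mu$: since $\omega_{0,1}$ is a $1$-form, evaluating at $z' = \theta^j z$ rescales the leading term by $\theta^{j s_\mu}$, not $\theta^{j(s_\mu - 1)}$ as you wrote --- you dropped the factor of $\theta^j$ coming from $\dd z' = \theta^j\,\dd z$. This matters: with your exponent the hypothesis needed to make the $r_\mu$ values distinct would be $\gcd(r_\mu, s_\mu - 1) = 1$, which does not match the stated $\gcd(r_\mu, s_\mu) = 1$; with the corrected exponent $j s_\mu$ the stated hypothesis is exactly right. Second, across components with $r_\mu/s_\mu = r_\nu/s_\nu$ (hence $(r_\mu, s_\mu) = (r_\nu, s_\nu)$ by coprimality): comparing ``$r_\mu t_\mu$ versus $r_\nu t_\nu$'' only excludes a coincidence at one pair of points in the fiber. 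One must compare all Galois conjugates: after pulling back to $\xi$, the set of leading coefficients arising from $\tilde C_\mu$ is $\{-\theta^m t_\mu : m \in [r_\mu]\}$, the full set of $r_\mu$-th roots of $(-t_\mu)^{r_\mu}$, and likewise for $\nu$; disjointness of these root sets is precisely the condition $t_\mu^{r_\mu} \neq t_\nu^{r_\nu}$, which is strictly stronger than $t_\mu \neq t_\nu$ (take $t_\mu = \theta\, t_\nu$). The paper's \cref{rem:t_cond} is essentially the contrapositive of this verification. The residue computation itself, and your remark about interpreting everything as identities of rational functions in $u$ over a ring of formal Laurent series, are both fine.
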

\begin{proof}
	If we plug \cref{DegOneDoubleGenFun} back in \cref{CurrentFromH}, we get
	\begin{align*}
	&\Res_{u = - \omega_{0,1}(z)} \sum_{\zeta \in \tilde{x}^{-1}( \tilde{x}(z))} J(\zeta ) \!\!\prod_{\zeta' \in \tilde{x}^{-1}(\tilde{x}(z)) \setminus \{ \zeta \} }\!\!\!\!\!\! \big( u+ \omega_{0,1}(\zeta')\big) \prod_{z' \in \tilde{x}^{-1}(\tilde{x}(z))} \big( u+ \omega_{0,1}(z')\big)^{-1} \dd u \\
	&= \Res_{u = - \omega_{0,1}(z)} \sum_{\zeta \in \tilde{x}^{-1}(\tilde{x}(z))} \frac{J(\zeta )\, \dd u}{u+ \omega_{0,1}(\zeta )} 
	\end{align*}
	Because we took $ z $ in a small neighbourhood of zero (but of course not zero itself), the conditions of the lemma ensure that all $ \omega_{0,1}(\zeta )$ for $ \zeta $ in the same fibre have different values. Therefore, the only contribution to the residue comes from $ \zeta = z$.
\end{proof}

\begin{remark}
	\label{OtherInverse}
	The other inverse relationship,
	\begin{equation*}
	\pi_1 \big({}^\sigma H (\xi, u) \big) = \!\! \sum_{z \in \tilde{x}^{-1} (\xi)} \prod_{z' \in \tilde{x}^{-1}(\xi) \setminus \{ z\} }\!\!\!\!\!\! \big( u+ \omega_{0,1}(z')\big) \Res_{u' = - \omega_{0,1}(z)} \frac{\pi_1\big( {}^\sigma H(\tilde{x}(z), u')\big)\,\dd u'}{\prod_{z' \in \tilde{x}^{-1}(\tilde{x}(z))} ( u'+ \omega_{0,1}(z'))}\,,
	\end{equation*}
	follows immediately from Lagrange interpolation since $ \pi_1 \big({}^\sigma H(\tilde{x}(z),u)\big)$ is polynomial in $u$. 
\end{remark}

\begin{remark}
\label{rem:t_cond}
For \cref{InverseJfromH,OtherInverse} to work, all we really need is that $ \omega_{0,1}$ takes distinct values on all elements of the fibre of $ \tilde{x} $ near the ramification point $ \xi = 0$, i.e. the map $ (\tilde{x},\omega_{0,1}) \colon \tilde{C} \to T^*\P^1 $ is an embedding on a punctured neighbourhood of the ramification point. The conditions $ \text{gcd}(r_\mu,s_\mu) = 1$ and $t_\mu^{r_\mu} \neq t_\nu^{r_\nu}$ for $ \mu \neq \nu $ such that $ \frac{r_\mu}{s_\mu} = \frac{r_\nu}{s_\nu}$ ensure this. Though in general not necessary at this point, it will become crucial to impose these stricter conditions in \Cref{prop:deg_one_cond_arb_autom}. See \Cref{CurveEmbeddingDeformed} for this.\par
 In the undeformed case, the setting of \cref{thm:W_gl_Airy_arbitrary_autom}, i.e. monomial $ \omega_{0,1}$, this is in fact necessary as well as sufficient. Suppose we are in the setting of \cref{thm:W_gl_Airy_arbitrary_autom} and there were a $ \mu $ such that $ \text{gcd}(r_\mu,s_\mu) = d_\mu > 1$, let $ z \in \tilde{C}_\mu$, and let $ \theta $ be a primitive $ r_\mu$th root of unity. Then $ x(z) = x(\theta^{r_\mu/d_\mu} z)$ and $ \omega_{0,1}(z) = \omega_{0,1}(\theta^{r_\mu/d_\mu} z)$. So in \cref{DegOneDoubleGenFun}, the dependence on $J^\mu_* $ is given by
\begin{equation*}
\prod_{ z' \in \tilde{x}^{-1}(\xi)} \big(u+ \omega_{0,1}(z')\big) \sum_{i =1}^{r_\mu} \frac{J \begin{psmallmatrix}\mu \\ \theta^i z \end{psmallmatrix}}{u + \omega_{0,1}(\theta^i z)} = \prod_{ z' \in \tilde{x}^{-1}(\xi)} \big(u+ \omega_{0,1}(z')\big) \sum_{i =1}^{r_\mu/d_\mu} \frac{ \sum_{j=1}^{d_\mu}  J \begin{psmallmatrix}\mu \\ \theta^{i+jr_\mu/d_\mu} z \end{psmallmatrix}}{u + \omega_{0,1}(\theta^i z)}\,.
\end{equation*}
Performing the $ j$-sum, we see that $ \pi_1\big({}^\sigma H(\xi,u )\big)$ only depends on $ J^\mu_a $ for $ d_\mu \mid a$. Therefore, it is impossible to retrieve all $ J^\mu_a$ from $\pi_1({}^\sigma H)$.\par
In the case $ \mu \neq \nu$ such that $ \frac{r_\mu}{s_\mu} = \frac{r_\nu}{s_\nu}$ and $ t_\mu^{r_\mu} = t_\nu^{r_\nu}$, the situation is similar. First note that by redefining the local coordinate on $\tilde{C}_\mu$, we may actually assume $ t_\mu = t_\nu$. Writing the local coordinates as $ z \in \tilde{C}_\mu$ and $ z' \in \tilde{C}_\nu$, we then have $ \omega_{0,1}(\theta^i z) = \omega_{0,1}( \theta^i z')$ for all $ i \in [r_\mu ]$. By the same argument as above, $ \pi_1(H)$ then depends on $ J^\mu_a$ and $ J^\nu_b$ only in the combination $ J^\mu_a + J^\nu_a$, so again we can never retrieve the individual $ J^\mu_a $ and $J^\nu_a$.
\end{remark}

Since the $\omega_{0,1}$ are power series with exponents bounded below (by $ s_\mu$ on component $ \tilde{C}_\mu$), we can see that for any $i$, the set of $ k$ such that $ {}^\sigma H_{i,k} $ gets a contribution from $ J^\mu_a$ with $ a \leq 0$ in \cref{HfromJby01s} is bounded from above. Therefore, the following definition makes sense.

\begin{definition}
	\label{def:kLowerBound}
	For $ i \in [r]$, we define $ k^{>}_{\mathrm{min}}(i)$ to be the smallest $ K$ such that  for all $ k \geq K$, $ \pi_1 ({}^\sigma H_{i,k})$  given by \cref{HfromJby01s} lies in the linear span of $J^\mu_a $ with $\mu\in [d]$ and $a > 0$ solely.\par
	Analogously, for $ i \in [r]$ we choose $ k^{\geq}_{\mathrm{min}}(i)$ to be the lowest bound for which $\pi_1 ({}^\sigma H_{i,k})$ with $k\geq k^{\geq}_{\mathrm{min}}(i)$ only features $J^\mu_a $ with $\mu\in [d]$ and $a \geq 0$.
\end{definition}

It turns out that $k^{>}_{\mathrm{min}}(i)$ can be approximated as follows.

\begin{lemma}
	\label{prop:k_min}
	If $\frac{r_1}{s_1} \geq \cdots \geq \frac{r_d}{s_d}$ and $\gcd(r_\mu, s_\mu)=1$ for all $\mu\in [d]$ we have $k^{>}_{\mathrm{min}}(i) \leq \kLowerBound{i}{r}{s}$ where
	\begin{equation}
	\label{eq:k_min}
	\kLowerBound{i}{r}{s} \coloneqq \begin{cases}
	-\left\lfloor\tfrac{s_1(i-1)}{r_1}\right\rfloor + \delta_{i,1}\,, & 1\leq i \leq r_1\\[0.5em]
	-\left\lfloor\tfrac{s_2(i-r_1-1)}{r_2}\right\rfloor - s_1  + \delta_{i,1+r_1}\,, & r_1 < i \leq \rsumind{2}\\[0.3em]
	\qquad \qquad \vdots & \qquad \quad \vdots \\[0.3em]
	-\left\lfloor\tfrac{s_d(i-\rsumind{d-1}-1)}{r_d}\right\rfloor - \ssumind{d-1}  + \delta_{i,1+\rsumind{\mu-1}}\,,& \rsumind{d-1} < i \leq r\,
	\end{cases}\,,
	\end{equation}
	where for a subset $M \subseteq [d]$ we denoted $\mathbf{r}_M := \sum_{\mu \in M} r_\mu$ and $\mathbf{s}_M := \sum_{\mu \in M} s_\mu$.
\end{lemma}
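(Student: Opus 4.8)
The plan is to read off, from the explicit formula \eqref{HfromJby01s} of \cref{DegOneSeries}, the coefficient $c^{\mu}_a(\xi)$ of each mode $J^{\mu}_a$ in $\pi_1\big({}^{\sigma}H_i(\xi)\big)$, and to bound from below its order of vanishing in $\xi$ at the ramification point. Since by definition $\pi_1\big({}^{\sigma}H_i(\xi)\big)(\dd\xi/\xi)^{-i}=\sum_{k\in\Z}\pi_1({}^{\sigma}H_{i,k})\,\xi^{-k}$ has integral powers of $\xi$, a lower bound "$\xi$-valuation of $c^{\mu}_a(\xi)(\dd\xi/\xi)^{-i}\ge N$" means that $J^{\mu}_a$ does not occur in $\pi_1({}^{\sigma}H_{i,k})$ for $k\ge 1-N$; applying this with $a\le 0$ uniformly in $\mu$ will produce the bound on $k_{\mathrm{min}}(i)$ of \cref{def:kLowerBound}.

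First I would isolate $c^{\mu}_a$ from \eqref{HfromJby01s}: it requires $z\in\tilde{x}^{-1}(\xi)\cap\tilde{C}_\mu$ (there are $r_\mu$ such points), contributes the factor $z^{-a-1}\dd z=z^{-a}\,\dd z/z$, and is weighted by $\sum_{\substack{Z\subseteq\tilde{x}^{-1}(\xi)\setminus\{z\}\\|Z|=i-1}}\prod_{z'\in Z}\omega_{0,1}(z')$. Using that a point $z'$ on $\tilde{C}_\nu$ satisfies $(z')^{r_\nu}=\xi$ — so that $\dd z'/z'=\tfrac1{r_\nu}\dd\xi/\xi$ and a monomial $(z')^m$ has $\xi$-valuation $m/r_\nu$ — and that $\omega_{0,1}(z')$ equals $\dd z'/z'$ times a power series in $z'$ of valuation $\ge s_\nu$ (by definition of $s_\nu$), one checks that every summand indexed by a subset $Z$ with $m_\nu\coloneqq|Z\cap\tilde{C}_\nu|$ (so $\sum_\nu m_\nu=i-1$, $m_\nu\le r_\nu$, $m_\mu\le r_\mu-1$) contributes to $c^{\mu}_a(\xi)(\dd\xi/\xi)^{-i}$ a series of $\xi$-valuation at least $-\tfrac{a}{r_\mu}+\sum_\nu m_\nu\tfrac{s_\nu}{r_\nu}$, hence, as $a\le 0$, at least $\beta_{\mu,i}\coloneqq\min_{(m_\nu)}\sum_\nu m_\nu\tfrac{s_\nu}{r_\nu}$ over all such tuples. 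Summing over $z$, over the choices of $Z$ with prescribed $m_\nu$ and over the monomials of the $\omega_{0,1}$-series preserves this lower bound, and since $c^{\mu}_a(\xi)(\dd\xi/\xi)^{-i}$ is integral in $\xi$ its valuation is in fact $\ge\lceil\beta_{\mu,i}\rceil$. Therefore
\[
k_{\mathrm{min}}(i)\ \le\ -\Big\lceil\,\min_{\mu}\beta_{\mu,i}\,\Big\rceil+1 .
\]

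It then remains to evaluate $\min_\mu\beta_{\mu,i}$ and compare with \eqref{eq:k_min}. Because $\tfrac{s_1}{r_1}\le\cdots\le\tfrac{s_d}{r_d}$, the inner minimisation is greedy: fill the components $1,2,\dots$ in order up to their capacities. The only subtlety is that component $\mu$ offers one slot fewer, which becomes irrelevant once $\mu\ge\mu_*$, where $\mu_*$ is the first not-completely-filled component; writing $\rsumind{\mu_*-1}<i\le\rsumind{\mu_*}$ and $j\coloneqq i-1-\rsumind{\mu_*-1}\in\{0,\dots,r_{\mu_*}-1\}$ (with $\rsumind{0}=\ssumind{0}=0$), this yields $\min_\mu\beta_{\mu,i}=\ssumind{\mu_*-1}+j\,\tfrac{s_{\mu_*}}{r_{\mu_*}}$. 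Substituting, the claim $k_{\mathrm{min}}(i)\le\kLowerBound{i}{r}{s}$ reduces to the elementary inequality $\lceil X\rceil-\lfloor X\rfloor\ge 1-\delta_{j,0}$ for $X\coloneqq j\,s_{\mu_*}/r_{\mu_*}$: it is an equality when $j=0$, and for $1\le j\le r_{\mu_*}-1$ it follows from $\gcd(r_{\mu_*},s_{\mu_*})=1$, which forces $X\notin\Z$. The Kronecker deltas in \eqref{eq:k_min} encode precisely the integral case $X=0$, i.e.\ $j=0$, i.e.\ $i=1+\rsumind{\mu_*-1}$ (which for $\mu_*=1$ is $i=1$, where the obstruction is the mode $J^\mu_0\propto\hbar^{\frac12}Q_\mu$): there $\beta_{\mu,i}$ is itself an integer, the coprimality improvement is unavailable, and the extra $\delta$ makes the two bounds agree.

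I expect the main obstacle to be the careful bookkeeping of the fractional $\xi$-valuations of the individual summands of \eqref{HfromJby01s} — making sure the symmetrisation over Galois conjugates is harmless — and, above all, the verification that the two-sided slot constraints ($m_\nu\le r_\nu$ versus $m_\mu\le r_\mu-1$) do not spoil the closed form for $\min_\mu\beta_{\mu,i}$: concretely, that the minimum is attained by choosing $\mu\ge\mu_*$, so that the deficit in component $\mu$ never forces a cheaper filling, and that for $\mu<\mu_*$ one indeed gets $\beta_{\mu,i}\ge\ssumind{\mu_*-1}+j\,s_{\mu_*}/r_{\mu_*}$ by displacing a ball from component $\mu$ into the (no cheaper) component $\mu_*$.
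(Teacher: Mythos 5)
Your proposal is correct and follows essentially the same route as the paper's proof: both read off the coefficient of $J^\mu_a$ from \cref{DegOneSeries}, greedily bound its $\xi$-valuation from below using the ordering $\frac{s_1}{r_1}\le\cdots\le\frac{s_d}{r_d}$, and then pass from a rational lower bound to the integer $\kLowerBound{i}{r}{s}$ via integrality plus $\gcd(r_\mu,s_\mu)=1$. Your explicit use of the ceiling $\lceil\beta_{\mu,i}\rceil$ on each $c^\mu_a(\xi)(\dd\xi/\xi)^{-i}$ (justified by Galois-invariance of the fibrewise sum) is a slightly cleaner packaging of the step the paper phrases as ``$k_{\min}(i)\le\min\{K\in\Z \mid K>-(\dotsb)\}$, \dots taking the integral part implies strict inequality,'' and your displacement argument for $\mu<\mu_*$ makes explicit a point the paper leaves implicit because it only needs the unconstrained (hence smaller) minimum as an upper bound.
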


\begin{proof}
	Recall that $k$ is the exponent in
	\begin{equation}
	\label{eq:kexpchar}
	\frac{\pi_1 ({}^\sigma H_{i,k})}{\xi^k} \left( \frac{\dd \xi}{\xi} \right)^i\,.
	\end{equation}
	As we also get $i$ factors of $ \frac{\dd \xi}{\xi} $ on the right-hand side of \cref{HfromJby01s} (one from $ J$ and $i-1$ from the $ \omega_{0,1}$, we will concentrate on the remaining powers of $ \xi$.\par
	From \cref{DegOneSeries}, we see that in order to determine an upper bound $k_{\mathrm{max}} \in \Q $ for the exponents $k$ in \eqref{eq:kexpchar} for which we get a non-vanishing contribution from $J^\mu_a$ with  $a\leq 0$ it suffices to inspect $a= 0$ and to take only the leading order of all $\omega_{0,1}$s into account. More specifically, to make $k$ in \eqref{eq:kexpchar} as large as possible, we need to choose the $\omega_{0,1}$s efficiently: an $\omega_{0,1} $ on branch $\nu$ has leading order $(z')^{s_\nu} \frac{\dd z'}{z'}  = \xi^{\frac{s_\nu}{r_\nu}} \frac{\dd \xi}{\xi}$. So, to minimise the power of $\xi$, we need to take $\tfrac{s_\nu}{r_\nu}$ minimal, i.e. $\tfrac{r_\nu}{s_\nu}$ maximal, i.e. $\nu$ minimal. From this it follows that the maximal $k$ such that $J_a^*$, $a \leq 0$ can contribute, is found by first taking all $r_1 $ factors $\omega_{0,1}$ with arguments on component $\nu = 1$, then the $r_2$ on component $ \nu = 2$, up until we get to $i-1$ factors.\par
	If $i \in [r_1]$, we get $-k_{\mathrm{max}} = \tfrac{s_1}{r_1}(i-1)$ from this, while for $i \in (\rsumind{\mu-1},\rsumind{\mu}]$, we similarly get $-k_{\mathrm{max}} =  (i - \rsumind{\mu-1}-1 )\tfrac{s_\mu}{r_\mu} + \sum_{\nu =1}^{\mu-1} \tfrac{s_\nu}{r_\nu} r_\nu$. Of course, it might still happen that the coefficient of the power $\xi^{-k_{\mathrm{max}}}$ vanishes due to cancelling contributions of different combinations of $\omega_{0,1}$s. However, in any case $k_{\mathrm{max}} \in \Q$ is an upper bound for $k \in \Q$ for which $J^*_a$ with $a \leq 0$ can still contribute. Therefore
	\begin{equation}
	\label{eq:dfrac_rs_min}
	k^{>}_{\mathrm{min}}(i) \leq \min \Big\{ K \in \Z \quad \Big|\quad  K  > -\big(\ssumind{\mu-1}  + (i - \rsumind{\mu-1}-1 )\tfrac{s_\mu}{r_\mu} \big)\Big\}\qquad i \in (\rsumind{\mu-1},\rsumind{\mu}]\,.
	\end{equation}
	Finally, let us argue that the right-hand side of the above equation is nothing but $\kLowerBound{i}{r}{s}$ as defined in \eqref{eq:k_min}. If  $ i \in (\rsumind{\mu-1} +1,\rsumind{\mu}]$, this is clearly true, as then $ (i-\rsumind{\mu-1})\tfrac{s_\mu}{r_\mu} \notin \Z$, and taking the integral part implies strict inequality. If $ i= \rsumind{\mu-1} +1 $, we need to add $1$ to obtain strict inequality, explaining the Kronecker symbol in \eqref{eq:k_min}.
\end{proof}

\begin{remark}
	For generic $\omega_{0,1}$, i.e. generic values for dilaton shifts, we even have $k^{>}_{\mathrm{min}}(i) =  \kLowerBound{i}{r}{s}$ for all $i$. Indeed, it should be clear from the proceeding discussion that the case $k^{>}_{\mathrm{min}}(i) <  \kLowerBound{i}{r}{s}$ can only occur if the leading order inspected in the proof of \cref{prop:k_min} vanishes. This can only happen if $\xi^{-k_{\mathrm{max}}}$ gets contributions from several combinations of $\omega_{0,1}$s, which however is only possible if in case $ i \in (\rsumind{\mu-1}, \rsumind{\mu}]$ we have a $\nu\neq \mu$ with $\tfrac{r_\nu}{s_\nu}=\tfrac{r_\mu}{s_\mu}$. This can be compared in \cref{SCpart} to the results of \cref{lemYA} and \cref{ALEtoMLE}.
\end{remark}
The lower bound $k^{\geq}_{\mathrm{min}}$ can be approximated in the same way.

\begin{lemma}
	\label{prop:k_min_geq}
	If $\frac{r_1}{s_1} \geq \cdots \geq \frac{r_d}{s_d}$ and $\gcd(r_\mu, s_\mu)=1$ for all $\mu\in [d]$ we have $k^{\geq}_{\mathrm{min}}(i) \leq \mathfrak{d}^{\geq}_{\mathbf{r},\mathbf{s}}(i)$ where
	\begin{equation}
		\label{eq:k_min_geq}
		\mathfrak{d}^{\geq}_{\mathbf{r},\mathbf{s}}(i) \coloneqq -\left\lfloor\tfrac{s_\mu(i-\rsumind{\mu-1}-1)}{r_\mu}\right\rfloor - \ssumind{\mu-1} \qquad i \in (\rsumind{\mu-1},\rsumind{\mu}]\,.
	\end{equation}
\end{lemma}

\begin{proof}
	By the same arguments as in the proof of \Cref{prop:k_min} we have
	\begin{equation*}
		k^{\geq}_{\mathrm{min}}(i) \leq \min \Big\{ K \in \Z \quad \Big|\quad  K  \geq -\big(\ssumind{\mu-1}  + (i - \rsumind{\mu-1}-1 )\tfrac{s_\mu}{r_\mu} \big)\Big\}\qquad i \in (\rsumind{\mu-1},\rsumind{\mu}]\,.
	\end{equation*}
	The right-hand side of this equation is of course nothing but \eqref{eq:k_min_geq}.
\end{proof}

From now on we will always assume that $\frac{r_1}{s_1} \geq \ldots \geq \frac{r_d}{s_d}$ if not stated otherwise. The modes selected with the help of $\kLowerBound{i}{r}{s}$ via
\begin{equation*}
{}^{\sigma}H_{i,k} \qquad i\in [r],\quad k\geq \kLowerBound{i}{r}{s}
\end{equation*}
shall be shown to be an Airy structure for some certain $(r_\mu,s_\mu)_{\mu = 1}^d$. For future reference let us therefore define the following index set.
\begin{definition}
	We define the index set $I^{>}_{\mathbf{r},\mathbf{s}}$ to be
	\begin{equation}
	\label{eq:index_set_defn}
	I^{>}_{\mathbf{r},\mathbf{s}} \coloneqq  \big\{(i,k)\in [r] \times \mathbb{Z} \quad |\quad k\geq \kLowerBound{i}{r}{s}\big\}
	\end{equation}
	with $\kLowerBound{i}{r}{s}$ as in \eqref{eq:k_min}. Analogously, we set $I^{\geq}_{\mathbf{r},\mathbf{s}} \coloneqq  \big\{(i,k)\in [r] \times \mathbb{Z} ~ |\ ~ k\geq \mathfrak{d}^{\geq}_{\mathbf{r},\mathbf{s}}(i)\big\}$.
\end{definition}
With the fact that $k^{>}_{\mathrm{min}}(i)\leq \kLowerBound{i}{r}{s}$ and \cref{eq:k_min} we have two different characterisations of $I^{>}_{\mathbf{r},\mathbf{s}}$. The first property tells us that for $(i,k)\in I^{>}_{\mathbf{r},\mathbf{s}}$ the degree one projection of ${}^{\sigma}H_{i,k}$ is a linear combination of $J^\mu_a$ with $a>0$ only while the characterisation in \eqref{eq:k_min} will be important later in order to check whether the modes satisfy the  subalgebra condition. In the following we will need yet another characterisation of $I^{>}_{\mathbf{r},\mathbf{s}}$.
\begin{lemma}
	\label{prop:index_set_rewritten}
	For $(i,k) \in [r] \times \mathbb{Z}$ we have  $(i,k) \notin I^{>}_{\mathbf{r},\mathbf{s}}$ if and only if for all $\mu\in [d]$ we have
	\begin{equation}
		\label{eq:Pi_Irs_characterisation}
		r_\mu (k+\ssumind{\mu-1}) + s_\mu (i-\rsumind{\mu-1}-1)\leq 0 \,.
	\end{equation}
\end{lemma}
\begin{proof}
	First, let us take $(i,k) \in [r] \times \mathbb{Z}$ for which equation \eqref{eq:Pi_Irs_characterisation} holds for all $\mu\in[d]$. Then especially it holds for $\lambda\in[d]$ chosen so that $i\in(\rsumind{\lambda-1},\rsumind{\lambda}]$. In this case \eqref{eq:Pi_Irs_characterisation} can be rewritten as
	\begin{equation}
		\label{eq:Pi_Irs_characterisation_lam}
		k \leq  - \tfrac{s_\lambda}{r_\lambda} (i-\rsumind{\lambda-1}-1) - \ssumind{\lambda-1}
	\end{equation}
	which comparing with \eqref{eq:dfrac_rs_min} means nothing but $k < \kLowerBound{i}{r}{s}$ and thus $(i,k) \notin I^{>}_{\mathbf{r},\mathbf{s}}$.
	
	For the other direction suppose $(i,k) \notin I^{>}_{\mathbf{r},\mathbf{s}}$. Then again if we choose $\lambda$ so that $i\in(\rsumind{\lambda-1},\rsumind{\lambda}]$ the statement that $k < \kLowerBound{i}{r}{s}$ translates into \eqref{eq:Pi_Irs_characterisation_lam}. Hence, for arbitrary $\mu\in[d]$ we can use the inequality in order to obtain
	\begin{equation*}
		\begin{split}
			&r_\mu (k+\ssumind{\mu-1}) + s_\mu (i-\rsumind{\mu-1}-1)\\
			&\quad\leq \left(s_\mu - \frac{r_\mu s_\lambda}{r_\lambda}\right)(i-\rsumind{\lambda-1}-1) + s_\mu (\rsumind{\lambda-1} - \rsumind{\mu-1}) - r_\mu (\ssumind{\lambda-1} - \ssumind{\mu-1})\,.
		\end{split}
	\end{equation*}
	Now suppose $\mu\leq\lambda$. Then by assumption $s_\mu - \frac{r_\mu s_\lambda}{r_\lambda}\leq 0$ and therefore
	\begin{equation*}
		r_\mu (k+\ssumind{\mu-1}) + s_\mu (i-\rsumind{\mu-1}-1) \leq s_\mu \rsum{[\mu,\lambda)} - r_\mu \ssum{[\mu,\lambda)} \leq 0
	\end{equation*}
	where the last inequality follows from our assumption that $\frac{r_\mu}{s_\mu}\geq\frac{r_\nu}{s_\nu}$ for all $\nu\geq \mu$. The case $\mu>\lambda$ can be treated similarly. Note that in this case $s_\mu - \frac{r_\mu s_\lambda}{r_\lambda}\geq 0$ and thus
	\begin{equation*}
		r_\mu (k+\ssumind{\mu-1}) + s_\mu (i-\rsumind{\mu-1}-1) \leq \left(s_\mu - \frac{r_\mu s_\lambda}{r_\lambda}\right)r_\lambda + r_\mu \ssum{[\lambda,\mu)} - s_\mu \rsum{[\lambda,\mu)} \leq 0
	\end{equation*}
	where the last inequality follows from our assumption that $\frac{r_\mu}{s_\mu}\leq\frac{r_\nu}{s_\nu}$ for all $\nu\leq \mu$.
\end{proof}

Remember that we defined the matrix $\degOneMat_{(i,k), (\mu,a)}$ to be the collection of coefficients
\begin{equation*}
\degProj{1}{{}^{\sigma}H_{i,k}} = \sum_{\mu\in [d],\, a \in \mathbb{Z}} \degOneMat_{(i,k), (\mu,a)} ~ J^\mu_a
\end{equation*}
of the projection to degree $1$. It is given abstractly in \cref{DegOneSeries}. So far we found out that $\degOneMat$ admits a two-sided inverse and moreover we characterised those modes featuring only derivatives in degree one. The next step is now to combine both results in order to bring the operators $({}^{\sigma}H_{i,k})_{(i,k)\in I^{>}_{\mathbf{r},\mathbf{s}}}$ into the normal form of an Airy structure.

\begin{lemma}
	\label{prop:deg_one_cond_arb_autom} 
	Assume ${\rm gcd}(r_\mu,s_\mu) = 1$  for all $\mu\in[d]$ and $t_\mu^{r_\mu}\neq t_\nu^{r_\nu}$ for any distinct $\mu,\nu$ with $\frac{r_\mu}{s_\mu} = \frac{r_\nu}{s_\nu}$. Then, there exists matrices $\mathcal{N}_{(\mu,a),(i,k)}$ and $\mc{\mathcal{M}}_{(i,k),(\mu,a)}$ indexed by $(\mu,a) \in [d] \times \mathbb{Z}_{> 0}$ and $(i,k) \in I^{>}_{\mathbf{r},\mathbf{s}}$ (see \eqref{eq:index_set_defn}) obeying the vanishing properties in \cref{deffilter}, that are inverse to each other, and such that
	\[ 
	{}^{\sigma}\tilde{H}_{\mu,a} = \sum_{(i,k) \in I^{>}_{\mathbf{r},\mathbf{s}}} \mc{N}_{(\mu,a),(i,k)}\,{}^{\sigma}H_{i,k} \qquad (\mu,a) \in [d] \times \mathbb{Z}_{> 0}
	\]  
satisfy the degree one condition.
\end{lemma}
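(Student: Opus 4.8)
The plan is to let $\degOneMat$ be the (full) degree-one matrix introduced in \cref{DegOneSeries} and $\degOneMatInv$ the (full) matrix defined by the residue formula of \cref{InverseJfromH}, i.e.\ $J\begin{psmallmatrix}\mu\\z\end{psmallmatrix}=\sum_{(i,k)}\degOneMatInv_{(\mu,a),(i,k)}\,{}^\sigma H_{i,k}$ with $J^\mu_a$ read off as the coefficient of $z^{-a-1}\dd z$; the hypotheses $\gcd(r_\mu,s_\mu)=1$ and $t_\mu^{r_\mu}\neq t_\nu^{r_\nu}$ on the tied indices are precisely those of \cref{InverseJfromH} (cf.\ \cref{rem:t_cond}), so $\degOneMatInv$ is well defined, and \cref{InverseJfromH} together with \cref{OtherInverse} (Lagrange interpolation) show that $\degOneMat$ and $\degOneMatInv$ are two-sided inverses of each other. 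The matrices of the lemma will be the restrictions of these to $(i,k)\in I_{\mathbf r,\mathbf s}$ and $(\mu,a)\in[d]\times\N^*$. One half of the compatibility is already in place: by \cref{def:kLowerBound} and \cref{prop:k_min} (using $k_{\mathrm{min}}(i)\le\kLowerBound{i}{r}{s}$), for $(i,k)\in I_{\mathbf r,\mathbf s}$ the row $\degOneMat_{(i,k),\plh}$ is supported on $\{(\mu,a):a>0\}$, so the restriction of $\degOneMat$ loses no information on those rows.

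The heart of the matter is the complementary claim: $\degOneMatInv_{(\mu,a),(i,k)}=0$ whenever $a>0$ and $(i,k)\notin I_{\mathbf r,\mathbf s}$. I would prove this by a valuation count in the local coordinate $z$ on the component $\tilde C_\mu$ inside the residue formula \eqref{CurrentFromH} (with $z^m\,\tfrac{\dd z}{z}$ counted as order $m$). Recording orders in $z$: $\omega_{0,1}$ has order $s_\mu$ along $\tilde C_\mu$ and order $\tfrac{r_\mu s_\nu}{r_\nu}$ at each point of $\tilde C_\nu$ lying over $\tilde x(z)$; the coefficient $J^\mu_a$ sits at order $-a$ in $J\begin{psmallmatrix}\mu\\z\end{psmallmatrix}$; and the mode ${}^\sigma H_{i,k}$ enters ${}^\sigma H_i(\tilde x(z))$ at order $-r_\mu k$. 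Using $\gcd(r_\mu,s_\mu)=1$ and $t_\mu^{r_\mu}\neq t_\nu^{r_\nu}$ to exclude leading-order cancellations in the denominator $\prod_{z'\neq z}\bigl(-\omega_{0,1}(z)+\omega_{0,1}(z')\bigr)$, and the ordering $\tfrac{r_1}{s_1}\ge\cdots\ge\tfrac{r_d}{s_d}$ to decide for each $\nu\neq\mu$ whether each of its $r_\nu$ factors is dominated by $\omega_{0,1}(z)$ or by $\omega_{0,1}(z')$, one computes that this denominator has order $s_\mu\bigl(r-\rsumind{\mu-1}-1\bigr)+r_\mu\ssumind{\mu-1}$ in $z$. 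Combining the three contributions, the coefficient of ${}^\sigma H_{i,k}$ in $J\begin{psmallmatrix}\mu\\z\end{psmallmatrix}$ starts at order $\ge-\bigl(\piBij{\mu}{i}{k}-\Delta_\mu\bigr)$ in $z$, with $\piBij{\mu}{i}{k}$ and $\Delta_\mu$ as in \cref{prop:index_set_rewritten}; hence ${}^\sigma H_{i,k}$ can only contribute to $J^\mu_a$ for $a\le\piBij{\mu}{i}{k}-\Delta_\mu$, and for $a>0$ this forces $\piBij{\mu}{i}{k}-\Delta_\mu>0$, i.e.\ $(i,k)\in I_{\mathbf r,\mathbf s}$ by \cref{prop:index_set_rewritten}. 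This is the estimate in the proof of \cref{prop:k_min} run in the opposite direction, and it is the step I expect to cost the most care — chiefly in handling the differential forms, the ramified pullback $z\mapsto z^{r_\mu}$, and the residue, and in checking that the denominator order is \emph{exactly} $s_\mu(r-\rsumind{\mu-1}-1)+r_\mu\ssumind{\mu-1}$, so that the resulting bound matches the defining inequality of $I_{\mathbf r,\mathbf s}$ on the nose.

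Granting the claim, the restricted $\degOneMat$ and $\degOneMatInv$ are mutual inverses: in the identity $\degOneMatInv\degOneMat=\id$ the rows $(i,k)\notin I_{\mathbf r,\mathbf s}$ drop out of the positive block by the claim, and in $\degOneMat\degOneMatInv=\id$ the columns $a\le0$ drop out because rows of $\degOneMat$ indexed by $I_{\mathbf r,\mathbf s}$ are supported on $\{a>0\}$; in both cases the product restricts to the relevant block of the unrestricted identity. The vanishing conditions of \cref{deffilter} follow from the same bookkeeping: for fixed $(i,k)\in I_{\mathbf r,\mathbf s}$ the bound $0<a\le\piBij{\mu}{i}{k}-\Delta_\mu$ leaves finitely many $(\mu,a)$, so each column of $\degOneMatInv$ is finitely supported; and for fixed $(\mu,a)$ the $k$ with $\degOneMat_{(i,k),(\mu,a)}\neq0$ are bounded above (\cref{DegOneSeries}) and, within $I_{\mathbf r,\mathbf s}$, bounded below, so the corresponding column of $\degOneMat$ restricted to $I_{\mathbf r,\mathbf s}$ is finitely supported. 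Finally, $\pi_1\bigl({}^{\sigma}\tilde{H}_{\mu,a}\bigr)=\sum_{(i,k)\in I_{\mathbf r,\mathbf s}}\degOneMatInv_{(\mu,a),(i,k)}\,\pi_1\bigl({}^\sigma H_{i,k}\bigr)=\sum_{\nu,b>0}\bigl(\degOneMatInv\degOneMat\bigr)_{(\mu,a),(\nu,b)}J^\nu_b=J^\mu_a=\hbar\partial_{x^\mu_a}$ (as $a>0$), while the degree-zero part $\sum_{(i,k)\in I_{\mathbf r,\mathbf s}}\degOneMatInv_{(\mu,a),(i,k)}\,c_{i,k}$ vanishes because $c_{i,k}=\pi_0\bigl({}^\sigma H_{i,k}\bigr)=0$ for $(i,k)\in I_{\mathbf r,\mathbf s}$ — the latter being the same kind of order estimate applied to $\pi_0\bigl({}^\sigma H_i(\xi)\bigr)={\rm e}_i\bigl((\omega_{0,1}(z'))_{z'\in\tilde x^{-1}(\xi)}\bigr)$. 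Hence each ${}^{\sigma}\tilde{H}_{\mu,a}$ is of the form $\hbar\partial_{x^\mu_a}+\higherorderterms{2}$, which is the degree one condition.
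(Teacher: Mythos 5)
Your proposal reproduces the paper's proof: you take $\mathcal{M}$ and $\mathcal{N}$ to be the full matrices from \cref{DegOneSeries,InverseJfromH}, observe from the definition of $I_{\mathbf{r},\mathbf{s}}$ that the $I_{\mathbf{r},\mathbf{s}}$-rows of $\mathcal{M}$ are supported on $a>0$, and prove the complementary block-vanishing $\mathcal{N}_{(\mu,a),(i,k)}=0$ for $a>0$ and $(i,k)\notin I_{\mathbf{r},\mathbf{s}}$ by exactly the residue-valuation estimate the paper uses, your denominator order $s_\mu(r-\rsumind{\mu-1}-1)+r_\mu\ssumind{\mu-1}=s_\mu(r-1)-\Delta_\mu$ giving $\mathcal{N}\bigl(u^{r-i}\xi^{-k}(\tfrac{\dd\xi}{\xi})^i\bigr)\in\mathcal{O}\bigl(z^{-\piBij{\mu}{i}{k}+\Delta_\mu}\tfrac{\dd z}{z}\bigr)$ and then \cref{prop:index_set_rewritten}, which is the paper's key display and step. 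The finiteness bookkeeping for \cref{deffilter} and the $\pi_0$-vanishing argument also match the paper's.
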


\begin{proof}		
	From \cref{InverseJfromH,OtherInverse} we know that the matrix $\degOneMat$ encoding the coefficients of the degree one projection of the modes ${}^{\sigma}H_{i,k}$ admits a two-sided inverse, if we keep the full range of indices $ (i,k) \in [r] \times \mathbb{Z}$ and $(\mu,a) \in [d] \times \mathbb{Z}$. However, we actually need such a relation between the semi-infinite column vectors
\[
\mathbf{H}_- := \big(\pi_1({}^\sigma H_{i,k})\big)_{(i,k) \in I^{>}_{\mathbf{r},\mathbf{s}}} \qquad {\rm and} \qquad \mathbf{J}_- := \big(J^\mu_a\big)_{(\mu,a) \in [d] \times \mathbb{Z}_{> 0}}\,.
\]
For this, let us also introduce the semi-infinite column vectors
\[
\mathbf{H}_+ := \big(\pi_1({}^\sigma H_{i,k})\big)_{(i,k) \notin I^{>}_{\mathbf{r},\mathbf{s}}} \qquad {\rm and}\qquad \mathbf{J}_+ := \big(J^\mu_a\big)_{(\mu,a) \in [d] \times \mathbb{Z}_{\leq 0}}\,,
\] 
and the infinite column vectors
\[
\mathbf{H} \coloneqq \begin{psmallmatrix} \mathbf{H}_- \\ \mathbf{H}_+ \end{psmallmatrix}\,,\qquad  \mathbf{J} \coloneqq \begin{psmallmatrix} \mathbf{J}_- \\ \mathbf{J}_+ \end{psmallmatrix}\,.
\]
We can then write \cref{DegOneDoubleGenFun} symbolically as $ \mathbf{H} = \mc{M}\cdot \mathbf{J}$. The vanishing properties of the matrix $\mc{M}$ guarantee that this product is well-defined (i.e. the evaluation of each entry involves only finite sums). By definition of $\mathbf{I}_{\mathbf{r},\mathbf{s}}$, this splits as
	\begin{equation*}
	\begin{pmatrix} \mathbf{H}_- \\ \mathbf{H}_+ \end{pmatrix} = \begin{pmatrix} \mc{M}_{--} & 0 \\ \mc{M}_{-+} & \mc{M}_{++} \end{pmatrix} \begin{pmatrix} \mathbf{J}_- \\ \mathbf{J}_+ \end{pmatrix}\,.
	\end{equation*}
	Writing $\mc{N}$ for the inverse of $\mc{M}$, the relation $ \mc{M}\cdot \mc{N} = \id $ implies that $ \mc{M}_{--}\cdot \mc{N}_{--} = \id_-$ (with obvious notation). As these are  semi-infinite matrices, we cannot conclude yet that $\mc{N}_{--}\cdot \mc{M}_{--} = \id_-$. However, using the same arguments as above to prove this it is sufficient to show that $\mc{N}_{+-}=0$. For this let us inspect $\mc{N}$ further. According to \cref{InverseJfromH}, seen $\mc{N}$ as a linear operator we have
	\begin{equation}
	\label{Ntheform}\mc{N} \colon \Upsilon(\xi,u) \mapsto \Res_{u = - \omega_{0,1}(z)}  \frac{\Upsilon(\tilde{x}(z), u)\,\dd u}{\prod_{z' \in \tilde{x}^{-1}(\tilde{x}(z))} ( u+ \omega_{0,1}(z'))}\,.
	\end{equation}
	By a straightforward calculation, we see that, if $ z \in \tilde{C}_\mu $ and $ z' \in \tilde{C}_\nu$, then, as $ z \to 0$,
	\begin{equation*}
	\omega_{0,1}(z') - \omega_{0,1}(z) = \begin{cases}\mc{O} ( z^{s_\mu - 1} \dd z ) & \nu\geq\mu \\ \mc{O} (z^{\frac{r_\mu s_\nu}{r_\nu}-1} \dd z ) & \nu < \mu\end{cases}\,.
	\end{equation*}
	More precisely, a thorough analysis analogous to the one in \cref{lemYA} shows that
	\begin{equation}
		\label{Ndenom}
		\begin{split}
		\prod_{z' \in \tilde{x}^{-1}(\tilde{x}(z))\setminus\{z\}} \!\!\!\!\! \big(- \omega_{0,1}(z) + \omega_{0,1}(z')\big) & = \mf{t}_\mu z^{\mf{u}_\mu} (\dd z)^{r-1} + \mc{O} (z^{\mf{u}_\mu+1} (\dd z)^{r-1} )\,, \\
		\mf{u}_\mu & \coloneqq-r-s_\mu+\sum_{\nu\leq \mu} r_\mu s_\nu + \sum_{\nu>\mu} r_\nu s_\mu +1 
		\end{split}
	\end{equation}
	for some non-vanishing $\mf{t}_\mu$. Thus, considering the term related to $ {}^{\sigma}H_{i,k}$, we see after some elementary addition of exponents that
	\begin{align*}
	\mc{N} \bigg( u^{r-i} \frac{1}{\xi^k} \Big( \frac{\dd \xi}{\xi}\Big)^i \bigg) &\in \mc{O} \bigg( z^{-r_\mu (k+\ssumind{\mu-1}) - s_\mu (i-\rsumind{\mu-1}-1)} \Big( \frac{\dd z}{z}\Big)\bigg) \,.
	\end{align*}
	Since by \Cref{prop:index_set_rewritten} for $(i,k)\notin I^{>}_{\mathbf{r},\mathbf{s}}$ we have
	\begin{equation*}
	r_\mu (k+\ssumind{\mu-1}) + s_\mu (i-\rsumind{\mu-1}-1) \leq 0
	\end{equation*}
	this transformation maps the vector $\mathbf{H}_+$ to $\mathbf{J}_+$ and therefore indeed $\mc{N}_{+-}=0$ implying the sought-after relation $\mc{N}_{--}\cdot \mc{M}_{--} = \id_-$.
	
	From \eqref{Ntheform}, one can check the desired vanishing property for the entries of $\mc{N}_{--}$ and therefore, applying the matrix $\mc{N}_{--}$ to the semi-infinite column vector with entries
	\begin{equation*}
	\degProj{1}{{}^{\sigma}H_{i,k}} = \sum_{\substack{\mu\in [d] \\  a > 0}} \degOneMat_{(i,k), (\mu,a)} ~ \hbar \partial_{x^\mu_a} \qquad (i,k)\in I^{>}_{\mathbf{r},\mathbf{s}}
	\end{equation*}
	is well-defined and gives the semi-infinite column vector $\mathbf{J}_-$.
	
	To completely prove the degree one property, we need to check that for any $(i,k)\in I^{>}_{\mathbf{r},\mathbf{s}}$, the degree zero component of ${}^{\sigma}H_{i,k}$ is vanishing. Following the proof of \cref{DegOneSeries} one finds that the projection to degree zero of ${}^{\sigma}H_{i,k}$ for arbitrary $i$ and $k$ is
	\begin{equation}
	\label{eq:HfromJby01s_zerodeg}
	\pi_0 \big({}^\sigma H_i (\xi ) \big) = \sum_{\substack{Z \subseteq \tilde{x}^{-1}(\xi ) \\ |Z|=i}}  \prod_{z \in Z} \omega_{0,1}(z)\,.
	\end{equation}
	Similarly to \cref{prop:k_min}, we then see that in order to get a non-vanishing contribution to $\pi_0 ({}^\sigma H_{i,k})$, we need 
	\begin{equation*}
	k \leq -\big(\ssumind{\mu-1}  + (i - \rsumind{\mu-1})\tfrac{s_\mu}{r_\mu} \big)\,,
	\end{equation*}
	where the difference with that \namecref{prop:k_min} is the substitution of $ i-1$ by $i$. 
	By the proof of that lemma,
	\begin{equation}
	\label{eq:degzerovanishes}
	\kLowerBound{i}{r}{s} > -\big(\ssumind{\mu-1}  + (i - \rsumind{\mu-1}-1)\tfrac{s_\mu}{r_\mu} \big) > -\big(\ssumind{\mu-1}  + (i - \rsumind{\mu-1})\tfrac{s_\mu}{r_\mu} \big) \,.
	\end{equation}
	Hence $\pi_0 ( {}^\sigma H_{i,k}) = 0$ for $(i,k) \in I^{>}_{\mathbf{r},\mathbf{s}}$, and this concludes the proof.
\end{proof}

\begin{remark}
\label{CurveEmbeddingDeformed}
Comparing with \cref{lemYA}, we see that the assumptions ${\rm gcd}(r_\mu,s_\mu) = 1$  for all $\mu\in[d]$ and $t_\mu^{r_\mu}\neq t_\nu^{r_\nu}$ for any distinct $\mu,\nu$ with $\frac{r_\mu}{s_\mu} = \frac{r_\nu}{s_\nu}$ are crucial for the vanishing of $\mc{N}_{+-}$ as they ensure that \eqref{Ndenom} starts with the correct order in $z$. So in case these conditions are not satisfied we cannot expect the differential operators to be an Airy structure in general. A case in which this is indeed failing is provided by the example
\begin{equation*}
	x\begin{psmallmatrix} \mu \\ z \end{psmallmatrix} = \tfrac{z^2}{2}\,, \qquad y\begin{psmallmatrix} 1 \\ z \end{psmallmatrix} = \tfrac{1}{z} \,, \qquad y \begin{psmallmatrix} 2 \\ z \end{psmallmatrix} = \tfrac{1}{z} + z
\end{equation*}
with $\omega_{0,1} = y \dd x$ presented in the vein of \cref{sec:free_energ_arb_autom}. If we tried to compute the free energies associated to this input data recursively we would find a non-symmetric $F_{0,3}$. Indeed, using a correspondence which will be established in \cref{SAIRPS} we can associate a multidifferential $\omega_{0,3}$ to $F_{0,3}$ which can be computed explicitly and shown to be non-symmetric as
\begin{equation*}
	\omega_{0,3}\begin{psmallmatrix} 2 & 1 & 1 \\ z_1& z_2& z_3 \end{psmallmatrix} = 0 \neq -\frac{\dd z_1\dd z_2\dd z_3}{z_1^2 z_2^2 z_3^2} = \omega_{0,3}\begin{psmallmatrix} 1 & 1 & 2 \\ z_3& z_2& z_1 \end{psmallmatrix}\,.
\end{equation*}
Hence, the differential operators associated with the above input data cannot form an Airy structure.
\end{remark}

\medskip

\subsection{The subalgebra condition}
\label{sec:match_lie_subalg}

\medskip

Having proven the degree one condition for the modes
\begin{equation*}
	{}^{\sigma}H_{i,k} \qquad (i,k)\in I^{>}_{\mathbf{r},\mathbf{s}}
\end{equation*}
with index set $I^{>}_{\mathbf{r},\mathbf{s}}$ as defined in \eqref{eq:index_set_defn} we need to check whether these modes form a graded Lie subalgebra as demanded for Airy structures. In order to do so we use \cref{prop:desc_part_Lie_subalg} stating that if $I$ is \define{induced by a descending partition} then $({}^{\sigma}H_{i,k})_{(i,k)\in I}$ generate a graded Lie subalgebra. By this we mean that there exists $\lambda_1 \geq \cdots \geq \lambda_{\ell}$ with $\sum_{j = 1}^{\ell} \lambda_j = r$ such that $I =I_\lambda $ where
\begin{equation*}
	I_{\lambda} \coloneqq  \big\{ (i,k)\in [r] \times \mathbb{Z} \quad | \quad  \forall i \in [r],\quad  \lambda(i)+ k > 0 \big\}\,,
\end{equation*}
and we set
\begin{equation*}
	\lambda(i) \coloneqq \min \bigg\{m \quad \bigg| \quad \sum_{j = 1}^m \lambda_j \geq i\bigg\}\,.
\end{equation*}
In our case at hand we want to check whether $I^{>}_{\mathbf{r},\mathbf{s}} \cup \{(1,0)\}$ is induced by a descending partition. We will later then exclude ${}^{\sigma}H_{1,0}$ from the associated mode set by setting this mode to zero. Explicitly, this means that we want to classify the cases in which there exists a descending partition $\lambda\descPart r$ such that $\lambda(i) = 1 - \kLowerBound{i}{r}{s} + \delta_{i,1}$. Writing $i = i' + \rsumind{\nu-1}$ for $i'\in [r_{\nu}]$ again assuming $\frac{r_1}{s_1}\geq\cdots\geq \frac{r_d}{s_d}$ we can write out $\kLowerBound{i}{r}{s}$ using \cref{prop:k_min} and obtain
\begin{equation}
	\label{eq:partition_k_min_relation}
	\lambda(i) = 1 + \left\lfloor \frac{s_\nu (i' - 1)}{r_\nu} \right\rfloor + \ssumind{\nu-1} - \delta_{i',1} \, \delta_{\nu>1}\,.
\end{equation}
The case $d=1$ was -- up to a small addition we will need later -- studied in \cite{BBCCN18}, resulting in the following correspondence.
\begin{lemma}
	\label{prop:k_min_partition_one_cycle}
	Let $r,s\geq 1$ be coprime. Then there exists a descending partition $\lambda=(\lambda_1,\ldots,\lambda_\ell)$ such that
	\begin{equation}
		\label{eq:partition_k_min_relation_one_cycle}
		\lambda(i')=1 + \left\lfloor \frac{s (i' - 1)}{r} \right\rfloor \qquad i' \in [r]
	\end{equation}
	if and only if $r = \pm 1 \,\,{\rm mod}\,\, s$. In this case $\lambda$ is given by
	\begin{equation}
		\label{eq:partition_one_cycle}
		\lambda_1 =\cdots = \lambda_{r''} = r'+ 1, \qquad \lambda_{r''+1} =\cdots = \lambda_{\ell} = r',\quad \qquad \ell=s-\delta_{s,r+1}\,,
	\end{equation}
	writing $r=r's+r''$ with $r''\in\{1,s-1\}$. In particular we have $\lambda=(1)$ for $r=1$, $\lambda=(r)$ for $s=1$, and $\lambda=(1^r)$ for $s=r+1$.
\end{lemma}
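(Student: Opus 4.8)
This is the single-cycle base case, and the plan is to reduce it to an elementary number-theoretic statement about the step function $i'\mapsto 1+\lfloor s(i'-1)/r\rfloor$. First I would observe that for \emph{any} descending partition $\lambda=(\lambda_1,\dots,\lambda_\ell)$ the function $\lambda(\cdot)$ is a nondecreasing step function on $[r]$ taking the value $m$ exactly on the block $i'\in(\lambda_1+\cdots+\lambda_{m-1},\,\lambda_1+\cdots+\lambda_m]$, so $\lambda$ is recovered from $\lambda(\cdot)$ by reading off the sizes of these level sets; in particular $\lambda(1)=1$ always, and $\lambda(\cdot)$ must increase by exactly $1$ at each jump. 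Thus the existence of $\lambda$ with \eqref{eq:partition_k_min_relation_one_cycle} is equivalent to the statement that $\phi(i')\coloneqq 1+\lfloor s(i'-1)/r\rfloor$ never jumps by more than $1$ as $i'$ runs over $[r]$ (monotonicity and $\phi(1)=1$ being automatic), and then $\lambda$ is forced to be the partition whose $j$-th part is the length of the level set $\{i'\mid \phi(i')=j\}$.

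The core step is therefore to analyse when $\phi(i'+1)-\phi(i')\le 1$ for all $i'\in[r-1]$, i.e. when $\lfloor s i'/r\rfloor-\lfloor s(i'-1)/r\rfloor\le 1$. Since $s\le r+1$, the difference $\lfloor s i'/r\rfloor - \lfloor s(i'-1)/r\rfloor$ is either $\lfloor s/r\rfloor$ or $\lceil s/r\rceil$, so it can only exceed $1$ when $s\ge r+1$, and when $s=r+1$ one checks directly that the jump $\lfloor(r+1)i'/r\rfloor-\lfloor(r+1)(i'-1)/r\rfloor$ equals $2$ precisely once (at $i'=r$, giving $\phi=r+1-\delta$ handled separately) — this is exactly the degenerate case producing $\lambda=(1^r)$ and the $-\delta_{s,r+1}$ in \eqref{eq:partition_one_cycle}. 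For $s\le r$ I would invoke the classical three-distance/Beatty-sequence fact (equivalently, a direct gcd argument using $\gcd(r,s)=1$): writing $r=r's+r''$ with $0\le r''<s$, the level sets of $\phi$ have lengths that are all equal to $r'$ or $r'+1$, and the pattern of $(r'+1)$'s among the $r'$'s is controlled by $r''$. The jump-by-$2$ obstruction occurs exactly when some level set would have to have length $0$, which happens iff $r''\notin\{1,s-1\}$; conversely when $r''\in\{1,s-1\}$ one counts that there are precisely $r''$ level sets of length $r'+1$ and $s-r''$ of length $r'$, since the lengths sum to $r=r's+r''=(r'+1)r''+r'(s-r'')$ and the total number of levels is $\phi(r)=1+\lfloor s(r-1)/r\rfloor=s$. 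Assembling these counts gives \eqref{eq:partition_one_cycle} with $\ell=s$, and the special values $s=1$ ($\lambda=(r)$) and $s=r+1$ ($\lambda=(1^r)$, $\ell=r$) follow by inspection.

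The main obstacle — or rather the only place requiring genuine care — is the sharp characterisation $r''\in\{1,s-1\}$: one must show both that $r''\notin\{1,s-1\}$ genuinely forces a forbidden jump (so no partition exists), and that $r''=1$ or $r''=s-1$ genuinely yields a valid partition with the stated block structure. The cleanest route is to use coprimality of $r$ and $s$ to understand the fractional parts $\{s i'/r\}$: a jump of size $2$ at $i'$ means the interval $(s(i'-1)/r,\,si'/r]$ of length $s/r<2$ contains two integers, equivalently $\lceil s(i'-1)/r\rceil$ and $\lceil s(i'-1)/r\rceil+1$ both lie in it, which a short estimate shows is possible iff $s/r$ times something lands in a window of width $<1$ that the residues $si' \bmod r$ avoid precisely under the stated congruence conditions. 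Since this is exactly \cite[Proposition B.1]{BBCCN18} I would cite that for the detailed count and restrict the written proof to recording the reduction to the step-function statement and the congruence bookkeeping, rather than reproving the three-distance theorem from scratch.
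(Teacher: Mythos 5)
Your reduction contains an error that undermines the whole structure of the proposal. You assert that the existence of a \emph{descending} partition $\lambda$ with $\lambda(i')=\phi(i'):=1+\left\lfloor s(i'-1)/r\right\rfloor$ is equivalent to $\phi$ never jumping by more than~$1$. That equivalence is false. Jump-by-at-most-$1$ only lets you read off a sequence of positive level-set sizes and obtain a weak composition $\lambda$ with $\lambda(\cdot)=\phi$; one still needs those sizes, read in order, to be weakly decreasing, and that is a separate and nontrivial condition which your proposal never addresses.

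This matters because the jump-by-$2$ obstruction you build the argument around never actually occurs in the interesting range. For $s\le r$ the difference $\left\lfloor si'/r\right\rfloor-\left\lfloor s(i'-1)/r\right\rfloor$ counts the integers in a half-open interval of length $s/r\le 1$ and is therefore always $0$ or $1$, regardless of $r''$; and for $s=r+1$ one has $\phi(i')=i'$ on $[r]$, so the jumps are all exactly~$1$ (not $2$ at $i'=r$ as you claim). A concrete counterexample to your reduction: take $r=7$, $s=5$, so $r=1\cdot5+2$ with $r''=2\notin\{1,4\}$. Then $\phi$ takes the values $1,1,2,3,3,4,5$ on $[7]$, every jump is $\le 1$, but the level-set sizes read $(2,1,2,1,1)$, which is a valid composition yet not weakly decreasing, so no descending partition exists — exactly as the lemma predicts since $7\equiv 2\pmod 5$. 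Note also that your count \textquotedblleft there are $r''$ blocks of length $r'+1$ and $s-r''$ of length $r'$\textquotedblright\ holds for every $r''$, because $r''(r'+1)+(s-r'')r'=r$ always; what fails when $r''\notin\{1,s-1\}$ is not the count but the \emph{ordering} of the block sizes. That ordering — showing the $(r'+1)$-blocks all appear before the $r'$-blocks precisely when $r''\in\{1,s-1\}$, via the distribution of the residues $si'\bmod r$ — is the actual content of the lemma, and the step your proposal would need to supply. (For reference, the version of this argument carried out in the paper for $d\ge 2$, in the proof of \cref{prop:Lie_subalg_cond_arb_autom}, makes the descending condition an explicit separate check via the inequality $\lambda^\mu_{s_\mu-1}\ge\lambda^\mu_{s_\mu}+1$, rather than subsuming it into a jump condition.)
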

\begin{proof}
	In case $r=1$ the statement trivially holds and is independent of the choice of $s$. So let $r>1$ now. The case where $s \in [r+1]$ was already discussed in \cite[Proposition B.1]{BBCCN18} and leads to the above classification of cases in which we can find a descending partition satisfying \eqref{eq:partition_k_min_relation_one_cycle}.
	
	So what is left to prove is that in case $s>r+1$ there is no partition $\lambda$ for which \eqref{eq:partition_k_min_relation_one_cycle} holds. Suppose the opposite is true. Then since $\lambda(i'+1)-\lambda(i')\in\{0,1\}$ it follows that
	\begin{equation*}
		\left\lfloor \frac{s i'}{r} \right\rfloor - \left\lfloor \frac{s (i' - 1)}{r} \right\rfloor \leq 1
	\end{equation*}
	for all $i'\in[r-1]$. Now writing $s=s' r + s''$ for some $s''\in [0,r-1]$ we see that the above is equivalent to
	\begin{equation*}
		s' + \left\lfloor \frac{s'' i'}{r} \right\rfloor - \left\lfloor \frac{s'' (i' - 1)}{r} \right\rfloor \leq 1\,.
	\end{equation*}
	This inequality can only be satisfied if $s'=1$ and
	\begin{equation*}
		\left\lfloor \frac{s'' i'}{r} \right\rfloor - \left\lfloor \frac{s'' (i' - 1)}{r} \right\rfloor \leq 0
	\end{equation*}
	for all $i'\in[r-1]$. Remember that we assumed $s>r+1$ which implies $s''>1$. However, in this case we can always find an $i'\in[r-1]$ for which $\lfloor \tfrac{s'' i'}{r} \rfloor - \lfloor \tfrac{s'' (i' - 1)}{r} \rfloor = 1$ which is a contradiction.
\end{proof}
For later reference, if $r$ and $s$ satisfy the properties of \Cref{prop:k_min_partition_one_cycle} let us write $\lambda^{r,s}$ for the associated partition $\lambda$ and $\ell^{r,s}\coloneqq s - \delta_{s,r+1}$ for its length.\\
To extend the statement of the last lemma to the case $d>1$ we need to make a few observations. First, notice that for a partition $\lambda\vdash r$ we have
\begin{equation*}
	\lambda(i+1) - \lambda(i) = 1 \qquad \text{if and only if}\qquad i=\sum_{j=1}^m \lambda_j
\end{equation*}
for some $m>0$ and $\lambda(i+1) - \lambda(i)=0$ otherwise. Therefore a descending partition $\lambda$ satisfying \eqref{eq:partition_k_min_relation} encodes the length of the intervals for which the right-hand side of the equation stays constant. Now notice that for $i\in (\rsumind{\nu-1}+1,\rsumind{\nu}]$ the right-hand side of \eqref{eq:partition_k_min_relation} is exactly of the form as in \eqref{eq:partition_k_min_relation_one_cycle} up to a constant shift, i.e.\
\begin{equation*}
	\lambda(i) = \lambda^{r_\nu,s_\nu} (i') + \ssumind{\nu-1} 
\end{equation*}
for $i = i' + \rsumind{\nu-1}$ with $i'\in (1,r_{\nu}]$. Hence for $\lambda$ to be descending $r_\nu$ and $s_\nu$ need to satisfy the properties from \Cref{prop:k_min_partition_one_cycle}. Moreover, if we analyse the jumping behaviour of $\lambda(i)$ at the transition $i\in [\rsumind{\nu-1},\rsumind{\nu-1}+2]$ we obtain a full description of $\lambda$. For this observe that because of the Kronecker delta
\begin{align*}
	\lambda(\rsumind{\nu-1}+1) - \lambda(\rsumind{\nu-1}) & = 0\\
	\lambda(\rsumind{\nu-1}+2) - \lambda(\rsumind{\nu-1}+1) & = 1 
\end{align*}
assuming $r_\nu>1$ in the second line. Therefore, a partition $\lambda$ describing the right-hand side of \eqref{eq:partition_k_min_relation} must be of the form
\begin{equation}
	\begin{split}
		\label{eq:lambda_decomposed}
		\lambda = \left(\begin{array}{ccccccc} &\lambda^{r_1,s_1}_1 &, \lambda^{r_1,s_1}_2 &, \ldots,  &,\lambda^{r_1,s_1}_{\ell^{r_1,s_1}-1} &,\lambda^{r_1,s_1}_{\ell^{r_1,s_1}}+1 &,\\[0.2em]
			&\lambda^{r_2,s_2}_1-1 &,\lambda^{r_2,s_2}_2 &,\ldots &,\lambda^{r_2,s_2}_{\ell^{r_2,s_2}-1} &, \lambda^{r_2,s_2}_{\ell^{r_2,s_2}} + 1 &,\\[0.2em]
			&\lambda^{r_3,s_3}_1-1 &,\lambda^{r_3,s_3}_2 &,\ldots &, \lambda^{r_3,s_3}_{\ell^{r_3,s_3}-1} &, \lambda^{r_3,s_3}_{\ell^{r_3,s_3}} + 1 &,\\
			& & & \vdots & & \\
			&\lambda^{r_d,s_d}_1-1 &,\lambda^{r_d,s_d}_2 &,\ldots &,\lambda^{r_d,s_d}_{\ell^{r_d,s_d}-1} &, \lambda^{r_d,s_d}_{\ell^{r_d,s_d}} &
		\end{array}\right)
	\end{split}
\end{equation}
where in  case $s_\mu=1$ the $\mu$th line
\begin{equation*}
	\big(\ldots,\lambda^{r_\mu,s_\mu}_1-1,\lambda^{r_\mu,s_\mu}_2 ,\ldots , \lambda^{r_\mu,s_\mu}_{\ell^{r_\mu,s_\mu} - 1} , \lambda^{r_\mu,s_\mu}_{\ell^{r_\mu,s_\mu}} + 1 ,\ldots \big)
\end{equation*}
must be replaced with just $(\ldots, r_\mu, \ldots)$. This partition is almost just the concatenation \mbox{$(\lambda^{r_1,s_1},\ldots, \lambda^{r_d,s_d})$} but it has one box moved from the first row of $\lambda^{r_{\mu+1},s_{\mu+1}}$ to the last row of $\lambda^{r_{\mu},s_{\mu}}$ for each $\mu\in[d-1]$. A classification of the cases in which $\lambda$ is descending is now given as follows.
\begin{lemma}
	\label{prop:Lie_subalg_cond_arb_autom}
	Let $d \geq 2$. Given $\frac{r_1}{s_1}\geq\cdots\geq \frac{r_d}{s_d}$ with  $r_\mu$ and $s_\mu$ coprime for all $\mu$, there exists a descending partition $\lambda = (\lambda_1,\ldots,\lambda_\ell)$ of $r=r_1+\cdots+r_d$ such that \eqref{eq:partition_k_min_relation} is satisfied if and only if the following holds
	\begin{enumerate}
		\item\label{item:s_1_condtion} $r_1 = -1\,\,{\rm mod}\,\, s_1$ ; 
		
		\item\label{item:s_mu_condtion} $s_\mu=1$ for all $\mu\in (1,d)$ ;
		
		\item\label{item:s_d_condtion} $r_d = +1\,\,{\rm mod}\,\, s_d$.
	\end{enumerate}
	In this case $\lambda$ is given by
	\begin{equation}
		\label{eq:arb_autom_corresp_partition}
		\lambda =\begin{cases}
			\big({(r_{1}'+1)}^{s_{1}} , r_2 , r_3 , \ldots , r_{d-1} , {r_{d}'}^{s_{d}}\big) & r_d\neq 1 \\
			\big({(r_{1}'+1)}^{s_{1}} , r_2 , r_3 , \ldots , r_{d-1}\big) & r_d=1
		\end{cases}\,,
	\end{equation}
	where $r_\mu'\coloneqq  \lfloor r_\mu / s_\mu \rfloor$.
\end{lemma}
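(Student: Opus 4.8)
The plan is to recast the problem through the standard dictionary between descending partitions $\lambda\vdash r$ and their counting functions: a map $f\colon[r]\to\mathbb{Z}_{>0}$ equals $\lambda(\cdot)$ for some descending partition $\lambda\vdash r$ if and only if $f(1)=1$, $f(i+1)-f(i)\in\{0,1\}$ for $1\le i<r$, and the sequence of level sizes $\lambda_m\coloneqq|f^{-1}(m)|$, read in the order $m=1,\ldots,f(r)$, is weakly decreasing; in that case $\lambda=(\lambda_1,\lambda_2,\ldots)$. This is the combinatorics already underlying \cref{prop:k_min_partition_one_cycle}. I would apply it to the function $f$ of \eqref{eq:partition_k_min_relation}, which is defined block by block: on block $\nu$, i.e. for $i=i'+\mathbf{r}_{[\nu-1]}$ with $i'\in[r_\nu]$, one has $f=\mathbf{s}_{[\nu-1]}+g_\nu(i')-\delta_{i',1}\delta_{\nu>1}$ where $g_\nu(i')\coloneqq 1+\lfloor s_\nu(i'-1)/r_\nu\rfloor$ is exactly the $d=1$ counting function attached to the coprime pair $(r_\nu,s_\nu)$. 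The condition $f(1)=1$ is immediate.

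First I would extract the constraints imposed by the increment condition. Inside block $\nu$ the increments of $f$ away from its first point coincide with those of $g_\nu$, which lie in $\{0,1\}$ exactly when $s_\nu\le r_\nu+1$; the jump from the first to the second point of a block $\nu>1$ with $r_\nu\ge 2$ equals $1+\lfloor s_\nu/r_\nu\rfloor$, forcing $s_\nu\le r_\nu-1$; a block $\nu>1$ with $r_\nu=1$ has a transition of size $s_\nu$ to the next block, forcing $s_\nu=1$ (unless $\nu=d$, in which case there is no block afterwards and $s_d$ is unconstrained here); and every other block boundary is checked to carry a jump of $0$ or $1$ as soon as $s_\nu\le r_\nu+1$. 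The borderline possibility $s_1=r_1+1$ is disposed of at once: by the hypothesis $\tfrac{r_1}{s_1}\ge\cdots\ge\tfrac{r_d}{s_d}$ it makes every $r_\mu<s_\mu$, and combined with the constraints above this forces $d=2$ and $r_2=1$, whence $f$ runs through $1,2,\ldots,r$ and $\lambda=(1^r)$, in agreement with \eqref{eq:arb_autom_corresp_partition} and with conditions (1)--(3). From now on one may assume $s_1\le r_1$.

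The core of the argument is the level-size condition, and this is the step I expect to be the main obstacle. The key structural observation is that, when $s_\nu\le r_\nu$, the top level reached on block $\nu$ is exactly $\mathbf{s}_{[\nu]}$, which is also the bottom level of block $\nu+1$; hence the level-size sequence of $f$ is obtained from the (level-ordered) level-size sequences $\mathbf{c}^{(\nu)}$ of the $g_\nu$ by concatenating them, adding at each junction level $\mathbf{s}_{[\nu]}$ the contributions of blocks $\nu$ and $\nu+1$, and inserting below each block $\nu>1$ one isolated level of size $1$ (the point $f(\mathbf{r}_{[\nu-1]}+1)=\mathbf{s}_{[\nu-1]}$). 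Tracing the requirement that this assembled sequence be weakly decreasing, one reads from the top that $\mathbf{c}^{(1)}$ must itself be descending, so \cref{prop:k_min_partition_one_cycle} yields $r_1\equiv\pm1\pmod{s_1}$; the inequality $c^{(1)}_{s_1-1}\ge c^{(1)}_{s_1}+1$ at the junction with block $2$ then excludes $r_1\equiv 1$ when $s_1\ge 3$, giving (1), $r_1\equiv-1\pmod{s_1}$. For a middle block $\mu$ ($1<\mu<d$) with $s_\mu\ge 2$ one verifies that the contribution of block $\mu$ forces $\mathbf{c}^{(\mu)}$ to be essentially descending, hence $r_\mu\equiv\pm1\pmod{s_\mu}$, but then either $c^{(\mu)}_1-1<c^{(\mu)}_2$ (if $r_\mu\equiv-1$) or $c^{(\mu)}_{s_\mu-1}<c^{(\mu)}_{s_\mu}+1$ (if $r_\mu\equiv 1$), a contradiction; so (2), $s_\mu=1$. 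Finally the junction below block $d$ demands $c^{(d)}_1-1\ge c^{(d)}_2$, which together with $\mathbf{c}^{(d)}$ descending forces (3), $r_d\equiv 1\pmod{s_d}$ (the case $r_d=1$ being vacuous). The remaining junction inequalities reduce to $r_1'+1\ge r_2\ge\cdots\ge r_{d-1}\ge r_d'$, which is automatic from $\tfrac{r_1}{s_1}\ge\cdots\ge\tfrac{r_d}{s_d}$ together with $s_\mu=1$ for $1<\mu<d$.

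Conversely, assuming (1)--(3) the same bookkeeping makes the level-size sequence of $f$ explicit: using $r_1\equiv-1$, the junction at $\mathbf{s}_{[1]}$ promotes block $1$'s final part $r_1'$ to $r_1'+1$, producing $(r_1'+1)^{s_1}$ at the top; each junction $\mathbf{s}_{[\mu]}$ with $1<\mu<d$ assembles $(r_\mu-1)+1$ into a single part $r_\mu$; and, using $r_d\equiv 1$, the $s_d$ levels above $\mathbf{s}_{[d-1]}$ (after subtracting $1$ from $c^{(d)}_1$) all have size $r_d'$, producing $(r_d')^{s_d}$ at the bottom --- which is precisely \eqref{eq:arb_autom_corresp_partition}. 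It is weakly decreasing by the ordering hypothesis, and $|\lambda|=(r_1+1)+r_2+\cdots+r_{d-1}+(r_d-1)=r$ follows from $s_1(r_1'+1)=r_1+1$ and $s_d r_d'=r_d-1$, so $\lambda\vdash r$ as claimed. The delicate point throughout is this junction bookkeeping: keeping track of which levels are shared between consecutive blocks, where the isolated "$+1$" points land, and checking that the resulting monotonicity inequalities are equivalent to (1)--(3) and to nothing stronger --- in particular that $r_1'\ge 1$, which is exactly why the case $s_1=r_1+1$ had to be separated off first.
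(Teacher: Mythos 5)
Your proposal follows essentially the same structure as the paper's proof: recast the problem through the counting-function/partition dictionary, extract constraints first from the increment condition, separate off the $s_1=r_1+1$ case, and then use \cref{prop:k_min_partition_one_cycle} block by block together with the descending-condition bookkeeping at the block junctions. Your "level-size sequence" language is just another reading of the paper's explicit concatenation $\big(\lambda^1_1,\ldots,\lambda^1_{s_1}+1,\lambda^2_1-1,\ldots,\lambda^2_{s_2}+1,\lambda^3_1-1,\ldots\big)$, so the two proofs are not genuinely different.

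There is, however, one genuine gap in your necessity argument: the middle-block case $s_\mu=2$. You claim that for a middle block $\mu$ with $s_\mu\geq 2$ and $r_\mu\equiv\pm 1\bmod s_\mu$, one of $c^{(\mu)}_1-1<c^{(\mu)}_2$ or $c^{(\mu)}_{s_\mu-1}<c^{(\mu)}_{s_\mu}+1$ gives a contradiction. For $s_\mu\geq 3$ that is correct, but for $s_\mu=2$ (with $r_\mu=2r'_\mu+1\geq 3$) one has $c^{(\mu)}=(r'_\mu+1,r'_\mu)$, so $c^{(\mu)}_1-1=r'_\mu=c^{(\mu)}_2$ and $c^{(\mu)}_{s_\mu-1}=r'_\mu+1=c^{(\mu)}_{s_\mu}+1$: both of your inequalities are equalities, and neither is violated. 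The actual obstruction is that the contribution of block $\mu$ to the partition is the two-entry string $\big(c^{(\mu)}_1-1,\,c^{(\mu)}_2+1\big)=(r'_\mu,r'_\mu+1)$, whose single internal descent condition is the combined inequality $c^{(\mu)}_1-1\geq c^{(\mu)}_2+1$ (because for $s_\mu=2$ the "first" and "$(s_\mu-1)$th" parts coincide), and this is the one that fails. You should replace the two separate inequalities by this combined one when $s_\mu=2$. A concrete witness: $(r_1,s_1;r_2,s_2;r_3,s_3)=(3,1;3,2;1,1)$ satisfies $\tfrac{r_1}{s_1}\geq\tfrac{r_2}{s_2}\geq\tfrac{r_3}{s_3}$ and all your stated inequalities, yet the counting function is $(1,1,1,1,2,3,3)$, giving level sizes $(4,1,2)$, not descending. (The paper's own writeup is also terse at exactly this point, leaving it as "straightforward"; you have simply made the omission more visible by trying to spell it out.) Once this is fixed, the rest of your proof is sound and matches the paper's.
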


\begin{proof}
	Suppose property \labelcref{item:s_1_condtion}--\labelcref{item:s_d_condtion} are satisfied. Then inserting the explicit expressions for the partitions $\lambda^{r_\mu,s_\mu}$ into \eqref{eq:lambda_decomposed} immediately tells us that $\lambda$ is of the form \eqref{eq:arb_autom_corresp_partition} which is a descending partition as claimed.
	
	Conversely, in order to argue that \labelcref{item:s_1_condtion}--\labelcref{item:s_d_condtion} are necessary for $\lambda$ to be a descending partition first note that every row in \eqref{eq:lambda_decomposed} has to be descending individually which forces $\lambda^{r_\mu,s_\mu}$ to be descending for all $\mu$. Hence, \Cref{prop:k_min_partition_one_cycle} tells us that necessarily $r_\mu = \pm 1 \,\,{\rm mod}\,\, s_\mu$. To further constrain the choice for $r_\mu$ and $s_\mu$ note that in case $s_\mu\geq 3$ the requirement that $\lambda^{r_\mu,s_\mu}_1 - 1 \geq \lambda^{r_\mu,s_\mu}_2$ implies that $r_\mu = 1 \,\,{\rm mod}\,\, s_\mu$ and $\lambda^{r_\mu,s_\mu}_{\ell^{r_\mu,s_\mu}-1} \geq \lambda^{r_\mu,s_\mu}_{\ell^{r_\mu,s_\mu}}+1$ forces $r_\mu = - 1 \,\,{\rm mod}\,\, s_\mu$. This explains \labelcref{item:s_1_condtion} and \labelcref{item:s_d_condtion} and tells us that $s_\mu\leq 2$ for $\mu\in(1,d)$. Suppose now $s_\mu= 2$ for some $\mu\in(1,d)$. Then since we assume $\lambda$ to be descending we find that $\big\lceil\tfrac{r_\mu}{2}\big\rceil - 1 = \lambda^{r_\mu,2}_1 - 1 \geq \lambda^{r_\mu,2}_1 + 1 = \big\lfloor\tfrac{r_\mu}{2}\big\rfloor + 1$ which is a contradiction. Thus, the only possibility we are left with is indeed $s_\mu=1$ for all $\mu\in(1,d)$.
\end{proof}

We now have everything at hand to prove the standard case of \cref{prop:AiryAllDilatonPolarize}.

\begin{proof}[Proof of \cref{prop:AiryAllDilatonPolarize}, standard case]
	Recall that the situation of the theorem is as follows:
	\begin{align*}
	{}^{\sigma}H_{i,k} &\coloneqq \hat{\Phi}\hat{T} \cdot {}^{\sigma}W_{i,k} \cdot \hat{T}^{-1}\hat{\Phi}^{-1}\,,\\
	\hat{T} &\coloneqq\exp\left(\sum_{\mu \in [d]} \sum_{k > 0} \Big(\hbar^{-1} F_{0,1}\big[\begin{smallmatrix} \mu \\ -k \end{smallmatrix}\big] + \hbar^{-\frac{1}{2}} F_{\frac{1}{2},1}\big[\begin{smallmatrix} \mu \\ -k \end{smallmatrix}\big]\Big)\,\frac{J^{\mu}_{k}}{k}\right)
	= \hat{T}_2\hat{T}_1\,,\\
	\hat{T}_1 &\coloneqq\exp\left(\frac{1}{\hbar} \sum_{\mu \in [d]} \sum_{k \geq s_\mu} F_{0,1}\big[\begin{smallmatrix} \mu \\ -k \end{smallmatrix}\big]\,\frac{J^\mu_k}{k}\right)\,, \qquad \hat{T}_2  \coloneqq\exp\left( \frac{1}{\hbar^{\frac{1}{2}}} \sum_{\mu \in [d]} \sum_{k > 0} F_{\frac{1}{2},1}\big[\begin{smallmatrix} \mu \\ -k \end{smallmatrix}\big]\,\frac{J^\mu_k}{k}\right)\,,\\
	\hat{\Phi} &\coloneqq \exp\bigg( \frac{1}{2\hbar} \sum_{\substack{\mu,\nu \in [d] \\ k,l > 0}}  F_{0,2}\big[\begin{smallmatrix} \mu & \nu \\ -k & -l \end{smallmatrix}\big]\,\frac{J_k^\mu J_l^\nu}{kl} \bigg)\,.
	\end{align*} 
	Up to now, we have only considered the conjugation with $ \hat{T}_1$, so let us finish the argument for that case first.
	
	The selected modes
	\begin{equation}
	\label{eq:modes_airy_struct_arb_autom_pf}
	{}^{\sigma}H_{i,k} \coloneqq \hat{T_1} \cdot {}^{\sigma}W_{i,k} \cdot \hat{T_1}^{-1} \,\qquad i\in [r],\qquad k\geq i - \lambda(i) + \delta_{i,1}
	\end{equation}
	with the partition
	\begin{equation*}
	\lambda =\begin{cases}
	\big({(r_{1}'+1)}^{s_{1}} , r_2 , r_3 , \ldots , r_{d-1} , {r_{d}'}^{s_{d}}\big) & ,\,r_d\neq 1 \\
	\big({(r_{1}'+1)}^{s_{1}} , r_2 , r_3 , \ldots , r_{d-1}\big) &,\,r_d=1
	\end{cases}
	\end{equation*}
	exactly correspond to the modes $({}^{\sigma}H_{i,k})_{(i,k)\in I^{>}_{\mathbf{r},\mathbf{s}}}$ where $I^{>}_{\mathbf{r},\mathbf{s}}$ is the index set defined in \eqref{eq:index_set_defn} by performing the identification of index sets via \cref{prop:Lie_subalg_cond_arb_autom}. Thus, \cref{prop:deg_one_cond_arb_autom} tells us that after a change of basis the modes \eqref{eq:modes_airy_struct_arb_autom_pf} satisfy the degree one condition. Since by assumption
	\begin{equation*}
	{}^{\sigma}H_{1,0} = {}^{\sigma}W_{1,0} = J^1_0 + \cdots +J^d_0 = \hbar^{\frac{1}{2}}\,(Q_1+\cdots+Q_d) = 0\,,
	\end{equation*}
	the modes \eqref{eq:modes_airy_struct_arb_autom_pf} satisfy the subalgebra condition if the modes $({}^{\sigma}H_{i,k})_{(i,k)\in I_\lambda}$ do. Here $I_\lambda$ is defined as in \eqref{eq:part_to_index_set}. Now using that ${}^{\sigma}H_{i,k}$ is obtained from ${}^{\sigma}W_{i,k}$ via conjugation the claim immediately follows from \cref{prop:desc_part_Lie_subalg}.
	
	For the general case, conjugating also with $ \hat{T}_2 $ and $ \hat{\Phi}$, note first of all that conjugation preserves commutation relations, so the subalgebra condition still holds. For the degree one condition, note that conjugation by $\hat{T}_2$ gives the shifts
	\begin{equation*}
	J^\mu_{-k} \longrightarrow J^\mu_{-k} + \hbar^{\frac{1}{2}}\,F_{\frac{1}{2},1}\big[\begin{smallmatrix} \mu \\ -k \end{smallmatrix}\big], 
	\end{equation*}
	which preserves degrees, and only acts on $ J^\mu_k$ with $  k< 0$, which do not occur in $ \pi_1\big(\hat{T}_1\cdot {}^{\sigma}W_{i,k}\cdot \hat{T}_1^{-1})$ by the previous parts of the computation. Likewise, conjugation by $ \hat{\Phi}$ acts as in \cref{PolChange}, which again preserves degrees and only affects $ J^\mu_k $ with $ k < 0$, so it also preserves the degree one condition.
\end{proof}

\medskip

\subsection{The exceptional case}
\label{sec:fixed_pt_no_shift}
\medskip

Contrary to the case considered before let us now allow $s_\mu = \infty$ for $\mu\in [d]$. Let us write
\[
\tilde{C}_{+}\coloneqq \bigsqcup_{\mu \in [d],\, s_\mu\neq \infty} \tilde{C}_{\mu}\,,
\]
and $\tilde{C}_-$ for the collection of all components $\tilde{C}_{\mu}$ on which $s_\mu = \infty$. 

\begin{lemma}
	\label{DegOneSeriesExcep}
	For any $i\in [r]$
	\begin{equation}
	\label{HfromJby01sExcep}
	\pi_1 \big({}^\sigma H_i (\xi ) \big) = \sum_{z \in \tilde{x}^{-1}(\xi )} \sum_{\substack{Z \subseteq \tilde{x}^{-1}(\xi ) \setminus \{ z\} \cap \tilde{C}_{+} \\ |Z|=i-1}} J(z) \prod_{z' \in Z} \omega_{0,1}(z')\,.
	\end{equation}
\end{lemma}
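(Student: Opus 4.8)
The plan is to repeat the proof of \cref{DegOneSeries} almost verbatim; the only new input is that in the exceptional case the conjugating operator $\hat{T}$ in \eqref{eq:dilaton_arbitr_autom} only involves the variables $J^{\mu}_{k}$ attached to indices $\mu$ with $s_{\mu}\neq\infty$, so that the one-form $\omega_{0,1}$ vanishes identically on $\tilde{C}_{-}$.

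Concretely, I would start from the expansion of ${}^{\sigma}W_{i}(\xi)$ provided by \cref{Wikz} as a linear combination of currents of the shape $\hbar^{\sum_{\mu}j_{\mu}}\prod_{\mu\in M}\norder{\prod_{l=2j_{\mu}+1}^{i_{\mu}}J^{\mu,a^{\mu}_{l}}(\xi)}$, and observe exactly as in the proof of \cref{DegOneSeries} that only the terms with $j_{\mu}=0$ for all $\mu$ can contribute to $\pi_{1}$ after conjugation: a factor $\hbar^{\sum_{\mu}j_{\mu}}$ with $\sum_{\mu}j_{\mu}\geq1$ has degree $\geq2$, while all remaining factors (current factors and dilaton shifts) have nonnegative degree. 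Using the same identity $\sum_{a_{1},\ldots,a_{i_{\mu}}\in[0,r_{\mu})}\Psi^{(0)}_{r_{\mu}}(a_{1},\ldots,a_{i_{\mu}})\norder{\prod_{l}J^{\mu,a_{l}}(\xi)}=\sum_{Z\subseteq\tilde{x}^{-1}(\xi)\cap\tilde{C}_{\mu},\,|Z|=i_{\mu}}\norder{\prod_{z'\in Z}J(z')}$ and the sum over $M\subseteq[d]$ to globalise these subsets over all of $\tilde{C}$, I obtain $\pi_{1}\big({}^{\sigma}W_{i}(\xi)\big)=\pi_{1}\big(\sum_{Z\subseteq\tilde{x}^{-1}(\xi),\,|Z|=i}\norder{\prod_{z\in Z}J(z)}\big)$.

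Finally I would apply the dilaton shift: conjugation by $\hat{T}$ replaces $J(z)$ by $J(z)+\omega_{0,1}(z)$, and the degree-one projection of $\norder{\prod_{z\in Z}\big(J(z)+\omega_{0,1}(z)\big)}$ retains exactly one $J$-factor while replacing the other $i-1$ factors by the scalars $\omega_{0,1}(z')$. Since $\omega_{0,1}\equiv0$ on $\tilde{C}_{-}$, the only nonzero summands are those for which the $i-1$ replaced points lie in $\tilde{C}_{+}$, whereas the surviving point $z$ ranges over the whole fibre $\tilde{x}^{-1}(\xi)$; this is precisely \eqref{HfromJby01sExcep}. As in \cref{DegOneSeries}, the resulting expression involves only integral powers of $\xi$ because it is assembled from the Galois-symmetric coefficients $\Psi^{(0)}$.

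There is essentially no obstacle here: the whole content is the vanishing of $\omega_{0,1}$ on the unshifted components $\tilde{C}_{-}$, which simply restricts the "shiftable" points to $\tilde{C}_{+}$ in the combinatorial identity already established for \cref{DegOneSeries}. The only point requiring a little care is the degree bookkeeping used to discard the $j_{\mu}>0$ terms, but this is unchanged from the standard case.
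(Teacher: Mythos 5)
Your proposal is correct and matches the paper's approach: the paper's own proof of \cref{DegOneSeriesExcep} simply states that the argument is verbatim the one of \cref{DegOneSeries}, taking into account that $\omega_{0,1}$ vanishes identically on the components $\tilde{C}_\mu$ with $s_\mu = \infty$. You spell out the details of that ``verbatim'' (degree bookkeeping to discard $j_\mu>0$, the $\Psi^{(0)}$ identity and globalisation over $M\subseteq[d]$, then the dilaton shift with $\omega_{0,1}\equiv 0$ on $\tilde{C}_-$), and all of it is accurate.
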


\begin{proof}
	The proof of this \namecref{DegOneSeriesExcep} is verbatim to the one of \cref{DegOneSeries} taking into account that $\omega_{0,1}\begin{psmallmatrix} \mu \\z \end{psmallmatrix}=0$ for all $\mu\in[d]$ with $s_\mu = \infty$.
\end{proof}

\begin{remark}
	\label{rem:generality_number_shifts}
	Let us make two important observations. First notice that if we write $r_+ \coloneqq \sum_{\mu \in [d],\, s_\mu\neq \infty} r_{\mu}$ then
	\begin{equation*}
	\pi_1 \big({}^\sigma H_{r_+ + 1} (\xi ) \big) = \sum_{z \in \tilde{x}^{-1}(\xi )\cap \tilde{C}_-} J(z) \prod_{z' \in \tilde{C}_{+}} \omega_{0,1}(z')
	\end{equation*}
	and moreover that for all $i>r_+ + 1$ we have
	\begin{equation*}
	\pi_1 \big({}^\sigma H_i (\xi ) \big) = 0\,.
	\end{equation*}
	Especially, from the last identity we deduce that, in order to end up with an Airy structure, it is necessary to have at most one $\mu\in[d]$ for which $s_\mu = \infty$. Moreover, necessarily for this $\mu$ we need $r_\mu=1$. Otherwise, there is no hope to obtain an Airy structure.
\end{remark}

Motivated by \cref{rem:generality_number_shifts} in the following we will assume that only $(r_d,s_d) = (1,\infty)$ while for all other $\mu\in[d-1]$ we have $s_\mu\neq\infty$, what we call the exceptional case  in \cref{thm:W_gl_Airy_arbitrary_autom}. Moreover, let us assume that as before $\frac{r_1}{s_1} \geq \ldots \geq \frac{r_{d-1}}{s_{d-1}}$. Rather than working with expression \eqref{HfromJby01sExcep} we will mainly use that by \eqref{eq:twist_mode_arbitr} we have
\begin{equation*}
{}^\sigma H_{i,k} = {}^\sigma H_{i,k}' + \sum_{a\in\mathbb{Z}} {}^\sigma H_{i-1,k-a}' \, J^d_a \qquad i\in[r],\qquad k\in\mathbb{Z}
\end{equation*}
where ${}^\sigma H_{i,k}'$ is obtained from ${}^\sigma H_{i,k}$ by formally setting $J^d_*$ equal to zero. Of course, ${}^\sigma H_{i,k}'$ may be computed via \eqref{HfromJby01s} replacing $d$ with $d-1$, i.e. these are modes of the standard case. Therefore, as the modes ${}^\sigma H_{i,k}$ are build up from modes considered in \cref{HfromJby01s} and an additional factor $J^d_*$, we can use the analysis of the standard case from the previous section in order to prove the degree one condition for these operators.

Let us select the following modes. For $i<r$ we define $\kLowerBound{i}{r}{s}$ exactly as in \eqref{eq:k_min} and set $\kLowerBound{r}{r}{s}\coloneqq 1 - \ssumind{d-1}$. Again, we define $I^{>}_{}$ as in \eqref{eq:index_set_defn} to be the index set associated to this choice of $\mf{d}^{>}_{\mathbf{r}, \mathbf{s}}$.

\begin{lemma}
	\label{prop:DegOneCondExcep}
	Assume ${\rm gcd}(r_\mu,s_\mu) = 1$  for all $\mu\in[d-1]$ and $t_\mu^{r_\mu}\neq t_\nu^{r_\nu}$ for any distinct $\mu,\nu$ with $\frac{r_\mu}{s_\mu} = \frac{r_\nu}{s_\nu}$. Then there exists an invertible matrix $\mc{N}$ such that
	\[ 
	{}^{\sigma}\tilde{H}_{\mu,a} = \sum_{(i,k) \in I^{>}_{\mathbf{r},\mathbf{s}}} \mc{N}_{(\mu,a),(i,k)}\,{}^{\sigma}H_{i,k} \qquad (\mu,a) \in [d] \times \mathbb{Z}_{> 0}
	\]  
	satisfy the degree one condition.
\end{lemma}
\begin{proof}
	First, let us argue that
	\begin{equation}
		\label{eq:Hexcepdegone}
		\pi_1\big({}^\sigma H_{i,k}\big) = \pi_1\big({}^\sigma H_{i,k}'\big) + \sum_{a\in\mathbb{Z}} \pi_0\big({}^\sigma H_{i-1,k-a}'\big) \, J^d_a
	\end{equation}
	for $k\geq \kLowerBound{i}{r}{s}$ is a linear combination of $J^\nu_a$s with $a>0$ only. Since for $i<r$ we defined $\kLowerBound{i}{r}{s}$ exactly as in \eqref{eq:k_min} it follows from \Cref{prop:k_min} that the first term $\pi_1\big({}^\sigma H_{i,k}'\big)$ is a linear combination of $J^\mu_a$s with $a>0$ only. That the second term $\sum_{a\in\mathbb{Z}} \pi_0\big({}^\sigma H_{i-1,k-a}'\big) \, J^d_a$ is a linear combination of $(J^d_a)_{a>0}$ only follows from the fact that $\pi_0\big({}^\sigma H_{i-1,k-a}'\big) = 0$ unless $k-a < \kLowerBound{i}{r}{s}$ as observed in \eqref{eq:degzerovanishes}. Hence,
	\begin{equation*}
		\pi_1\big({}^\sigma H_{i,k}\big) = \pi_1\big({}^\sigma H_{i,k}'\big) + \sum_{a>k-\kLowerBound{i}{r}{s}} \pi_0\big({}^\sigma H_{i-1,k-a}'\big) \, J^d_a \qquad i\in[r],\qquad k\geq \kLowerBound{i}{r}{s}
	\end{equation*}
	indeed lies in the linear span of $(J^\mu_a)_{(\mu,a) \in [d] \times \mathbb{Z}_{> 0}}$.
	
	In order to bring the operators into normal form, let us make use of our observation made earlier in \cref{rem:generality_number_shifts} that
	\begin{equation*}
		\pi_1\big({}^\sigma H_{r,k}\big) = \sum_{a>k-\kLowerBound{r}{r}{s}} \pi_0\big({}^\sigma H_{r-1,k-a}'\big) \, J^d_a\,.
	\end{equation*}
	This can be rephrased in the sense that $\pi_1\big({}^\sigma H_{r,k}\big) = \sum_{a > 0} \mc{A}_{k - \kLowerBound{r}{r}{s} + 1,a} \, J^d_a$ where $\mc{A}$ is an upper triangular matrix whose diagonal entries
	\begin{equation*}
		\mc{A}_{a,a} = \pi_0\big({}^\sigma H_{r-1, \kLowerBound{r}{r}{s} - 1}'\big) = (-1)^{r+d} \prod_{\mu = 1}^{d-1} \big(F_{0,1}\big[\begin{smallmatrix} \mu \\ -s_\mu \end{smallmatrix}\big] \big)^{r_\mu}\neq 0
	\end{equation*}
	may be read off from \eqref{eq:HfromJby01s_zerodeg} by taking only the leading order contributions of the $\omega_{0,1}$s into account. Thus, one can find a two-sided inverse of $\mc{A}$, and applying it to the semi-infinite vector $({}^{\sigma}H_{r,k})_{k\geq \kLowerBound{r}{r}{s}}$ we get $({}^{\sigma}\tilde{H}_{r,k})_{k\geq \kLowerBound{r}{r}{s}}$ for which
	\begin{equation*}
		\pi_1\big({}^\sigma \tilde{H}_{r,k}\big) = J^d_{k- \kLowerBound{r}{r}{s} + 1}\,.
	\end{equation*}
	By taking again suitable linear combinations, we can use the above modes in order to eliminate all $J^d_a$ from $\pi_1\big({}^\sigma H_{i,k}\big)$ for $i<r$, i.e. get operators $(\overline{H}_{i,k})_{(i,k) \in I^{>}_{\mathbf{r},\mathbf{s}},\, i<r}$ that satisfy
	\begin{equation*}
		\forall i \in  [r)\,,\,\,k\geq \kLowerBound{i}{r}{s}\,,\qquad \pi_1\big({}^{\sigma}\overline{H}_{i,k}\big) = \pi_1\big({}^{\sigma}H_{i,k}'\big)\,.
	\end{equation*}
	This expression is exactly the degree one projection of the operators considered in \cref{DegOneSeries} where we shifted in all cycles. From \cref{prop:deg_one_cond_arb_autom} we know that they can be brought in normal form, provided we can argue at last that the degree zero projection of these modes is vanishing. This is indeed the case: since $\pi_0\big({}^\sigma H_{i,k}\big) = \pi_0\big({}^\sigma H_{i,k}'\big)$ and we know by \eqref{eq:degzerovanishes} that the degree zero projection of $H_{i,k}'$ vanishes as long as $k\geq \kLowerBound{i}{r}{s}$, we see the same holds for ${}^\sigma H_{i,k}$.
\end{proof}

This provides us with everything we need to prove \cref{thm:W_gl_Airy_arbitrary_autom} in the exceptional case.

\begin{proof}[Proof of \cref{prop:AiryAllDilatonPolarize}, exceptional case]
	As in the standard case we know that conjugation with
	\begin{equation*}
	\hat{T}_2  \coloneqq\exp\left( \frac{1}{\hbar^{\frac{1}{2}}} \sum_{\mu \in [d]} \sum_{k > 0} F_{\frac{1}{2},1}\big[\begin{smallmatrix} \mu \\ -k \end{smallmatrix}\big]\,\frac{J^\mu_k}{k}\right) \,, \qquad \hat{\Phi} \coloneqq \exp\bigg( \frac{1}{2\hbar} \sum_{\substack{\mu,\nu \in [d] \\ k,l > 0}}  F_{0,2}\big[\begin{smallmatrix} \mu & \nu \\ -k & -l \end{smallmatrix}\big]\,\frac{J_k^\mu J_l^\nu}{kl} \bigg)
	\end{equation*}
	preserves the Airy structure conditions. It is hence sufficient to prove that
	\begin{equation*}
	{}^{\sigma}H_{i,k}\coloneqq \hat{T_1} \cdot {}^{\sigma}W_{i,k} \cdot \hat{T_1}^{-1} \,\qquad i\in [r]\,, \quad  k\geq 1 - \lambda(i) + \delta_{i,1} \,, \qquad \hat{T}_1 \coloneqq\exp\Bigg(\frac{1}{\hbar} \sum_{\substack{\mu \in [d] \\ s_\mu \neq \infty}} \sum_{k \geq s_\mu} F_{0,1}\big[\begin{smallmatrix} \mu \\ -k \end{smallmatrix}\big]\,\frac{J^\mu_k}{k}\Bigg)\,,
	\end{equation*}
	where we chose the partition
	\begin{equation*}
	\lambda = \big({(r_{1}'+1)}^{s_{1}} , r_2 , r_3 , \ldots , r_{d-1}\big)\,,
	\end{equation*}
	form an Airy structure. Indeed, as we have
	\begin{equation*}
	{}^{\sigma}H_{1,0} = \hbar^{\frac{1}{2}} (Q_1+\cdots + Q_d) = 0\,,
	\end{equation*}
	which vanishes by assumption, the selected modes already satisfy the subalgebra condition using \cref{prop:desc_part_Lie_subalg}. On the other hand, since $\lambda$ is chosen so that $1-\lambda(i)+\delta_{i,1} = \kLowerBound{i}{r}{s}$ for all $i\in [r]$, the selected modes also satisfy the degree one condition by \eqref{prop:DegOneCondExcep} as required.
\end{proof}

\medskip
\subsection{Approximate solution to the PDEs}
\label{sec:genuszerosolutionexists}
\medskip

Having analysed the cases in which $({}^\sigma H_{i,k})_{(i,k)\in I^{>}_{\mathbf{r},\mathbf{s}}}$ is an Airy structure and hence giving rise to a partition function $Z$ solving the associated system of differential equations to all orders in $\hbar$, we now turn our attention to the question in which case the associated differential equations are solved at leading order in the $\hbar$ expansion. For this we will use a technical lemma that holds for arbitrary families of differential operators satisfying the degree one condition.

\medskip
\subsubsection{The existence of a partition function}
\label{sec:genuszerosolutionexists2}
\medskip
Suppose we we are in the setting of \Cref{sec:AiryStructsFinDim}, i.e.\ $E$ is a finite-dimensional vector space over $\C$ with $(x_a)_{a\in A}$ a basis of $E^*$. Moreover, Let $\pi_k$ denote the projection to the $k$-graded piece of $\WeylAlg{E}{\hbar}$. This section is devoted to the question in which case a family $(H_a)_{a\in A}$ of differential operators satisfying the degree one condition admits a solution $F_0=\sum_{n\geq 3} \frac{1}{n!}F_{0,n}$ where $F_{0,n} \in {\rm Sym}^n(E^*)$ to the system of differential equations
\begin{equation}
	\label{eq:Hi_diff_eq_const_ord}
	\forall a\in A, \qquad e^{-\hbar^{-1} F_0} \, H_a \, e^{\hbar^{-1} F_0} \cdot 1 = o(\hbar^{\frac{1}{2}}) \,.
\end{equation}
We remark that due to the degree one condition such an $F_0$ is unique in case it exists and the free energies $F_{0,n}$ obey a recursion in $n$. Regarding the question of the existence of such a solution we have the following lemma.
\begin{lemma}
	\label{lem:suf_cond_genus0_soln}
	Suppose there is a family $(H_a)_{a\in A}$ of elements in $\WeylAlg{E}{\hbar}$ that (after normalisation) satisfies the degree one condition \eqref{eq:deg_one_cond} of an Airy structure and there is a second family $(H_j)_{j\in J}$ whose elements satisfy
	\begin{equation*}
		\pi_0(H_j)=0\,, \qquad \pi_1(H_j)\in\C\langle\hbar\partial_{x_*},\hbar^{\frac{1}{2}}\rangle
	\end{equation*}
	so that the combined family $(H_i)_{i\in A\cup J}$ satisfies the subalgebra condition \eqref{eq:gr_lie_subalg_cond} up to order $\hbar^{\frac{3}{2}}$, i.e. there exist $f^{i_3}_{i_1,i_2} \in \WeylAlg{E}{\hbar}$ so that for all $i_1,i_2\in A \cup J$
	\begin{equation*}
		[H_{i_1},H_{i_2}] - \hbar \sum_{i_3 \in A\cup J} f^{i_3}_{i_1,i_2} H_{i_3} = o(\hbar^{\frac{3}{2}})\,.
	\end{equation*}
	Then $(H_a)_{a\in A}$ admits a solution $F_0$ to the associated system of differential equations \eqref{eq:Hi_diff_eq_const_ord}.	
\end{lemma}
\begin{proof}
	After a change of basis we can assume that the family $(H_a)_{a\in A}$ is normalised, which means the operators take the form
	\begin{align*}
		H_a &= \hbar \partial_{x_a} + x^{\geq 2} + x^{\geq 1} (\hbar \partial)^{\geq 1} + (\hbar \partial)^{\geq 2} + o(\hbar^{\frac{1}{2}})  \qquad a\in A\,,\\
		H_j &= \sum_{b\in A} \mc{A}_{j,b} \hbar \partial_{x_b} + x^{\geq 2} + x^{\geq 1} (\hbar \partial)^{\geq 1} + (\hbar \partial)^{\geq 2} + o(\hbar^{\frac{1}{2}}) \qquad j\in J
	\end{align*}
	for some matrix $\mc{A}$. Here we used the same notation as in  \cite[Section 2.4]{KSTR} to represent terms in which certain symbols may appear with a fixed order. In the following we will construct free energies $F_{0,n}$ inductively in $n \geq 3$ so that
	\begin{equation}
		\label{eq:Hi_diff_eq_const_ord_induction}
		e^{-\hbar^{-1} \sum_{k=3}^n F_{0,k}} \, H_i \, e^{\hbar^{-1} \sum_{k=3}^n F_{0,k}} (1) = x^{\geq n} +  \hbar^{\geq \frac{1}{2}} x^{\geq 0}
	\end{equation}
	For all $i\in A\cup J$. Then taking $F \coloneqq \lim_{n \to \infty} \sum_{k=3}^n F_{0,k}$ we obtain a free energy that satisfies \eqref{eq:Hi_diff_eq_const_ord} as claimed.\\
	Note that since $H_i \cdot 1 = x^{\geq 2} +  \hbar^{\geq \frac{1}{2}} x^{\geq 0}$ the first non-trivial case is $n=3$. For this let us write $H_i = (\hbar \partial)^{=1} - H_{i;0,2} + x^{\geq 3} + x^{\geq 1} (\hbar \partial)^{\geq 1} + (\hbar \partial)^{\geq 2} + o(\hbar^{\frac{1}{2}})$, i.e.\ we write $- H_{i;0,2}$ for the terms $\hbar^{0} x^{=2} (\hbar \partial)^{=0}$. Then for $a_1,a_2\in A$ we have
	\begin{equation*}
		[H_{a_1},H_{a_2}] = \hbar \big(\partial_{a_2} H_{a_1;0,2} - \partial_{a_1} H_{a_2;0,2} + x^{\geq 2} + x^{\geq 0} (\hbar \partial)^{\geq 1} + o(\hbar^{\frac{1}{2}})\big)
	\end{equation*}
	and since by assumption the commutator has to lie in the left ideal generated by $\{H_i\}_{i\in A\cup J}$ we get $\partial_{a_2} H_{a_1;0,2} - \partial_{a_1} H_{a_2;0,2}=0$. As the last equality holds for all $a_1,a_2\in A$ there exists a unique homogenous polynomial $F_{0,3}$ of degree three in $x_*$ such that $\partial_a F_{0,3} = H_{a;0,2}$. With this choice \eqref{eq:Hi_diff_eq_const_ord_induction} is indeed satisfied for all $i\in A$ with $n=3$. So far we only used the exact same arguments as in \cite[Theorem 2.4.2]{KSTR}. Now we need to prove the compatibility of our choice for $F_{0,3}$ with $(H_j)_{j\in J}$. For this, notice that from
	\begin{equation*}
		[H_{j},H_{a}] = \hbar \big(\partial_{a} H_{j;0,2} - \sum_{b\in A}\mc{A}_{j,b}\partial_{b} H_{a;0,2} + x^{\geq 2} + x^{\geq 0} (\hbar \partial)^{\geq 1} + o(\hbar^{\frac{1}{2}})\big)
	\end{equation*}
	which holds for all $j\in J$ and $a\in A$ we can deduce that
	\begin{equation*}
		\partial_{a} \left(H_{j;0,2} - \sum_{b\in A}\mc{A}_{j,b}\partial_{b} F_{0,3}\right) = 0
	\end{equation*}
	which in turn implies that $H_{j;0,2} = \sum_{b\in A}\mc{A}_{j,b}\partial_{b} F_{0,3}$ since the term in brackets is a homogenous polynomial of order two. Therefore, we see that for $n=3$ equation \eqref{eq:Hi_diff_eq_const_ord_induction} is indeed satisfied for all $i\in J$ as well.
	
	Now for the induction step, suppose we have already constructed $\sum_{k=3}^n F_{0,k}$ solving \eqref{eq:Hi_diff_eq_const_ord_induction} for all $i\in A\cup J$. Then after conjugation $\overline{H}_i \coloneqq e^{-\hbar^{-1}\sum_{k=3}^n F_{0,k}} H_i e^{\hbar^{-1}\sum_{k=3}^n F_{0,k}}$ the operators take the form
	\begin{align*}
		\overline{H}_a &= \hbar \partial_{x_a} - \overline{H}_{a;0,n} + x^{\geq n+1} + x^{\geq 1} (\hbar \partial)^{\geq 1} + (\hbar \partial)^{\geq 2} + o(\hbar^{\frac{1}{2}})   && a\in A\,,\\
		\overline{H}_j &= \sum_{b\in A} \mc{A}_{j,b} \hbar \partial_{x_b} - \overline{H}_{j;0,n} + x^{\geq n+1} + x^{\geq 1} (\hbar \partial)^{\geq 1} + (\hbar \partial)^{\geq 2} + o(\hbar^{\frac{1}{2}}) && j\in J
	\end{align*}
	where we wrote $-\overline{H}_{i;0,n}$ for the piece $\hbar^{0}x^{=n} (\hbar\partial)^{=0}$ in $\overline{H}_i$. We use the exact same argument as for the base case and deduce from
	\begin{equation*}
		[H_{a_1},H_{a_2}] = \hbar \big(\partial_{a_2} H_{a_1;0,n} - \partial_{a_1} H_{a_2;0,n} + x^{\geq 2} + x^{\geq 0} (\hbar \partial)^{\geq 1} + o(\hbar^{\frac{1}{2}})\big)
	\end{equation*}
	that $\partial_{a_2} H_{a_1;0,n} - \partial_{a_1} H_{a_2;0,n}=0$. Hence, there is a homogenous polynomial $F_{0,n+1}$ of degree $n+1$ satisfying $\partial_a F_{n+1} = \overline{H}_{a;0,n}$ as required. The compatibility with the operators $(\overline{H}_j)_{j\in J}$ similarly follows from inspecting
	\begin{equation*}
		[H_{j},H_{a}] = \hbar \big(\partial_{a} H_{j;0,n} - \sum_{b\in A}\mc{A}_{j,b}\partial_{b} H_{a;0,n} + x^{\geq 2} + x^{\geq 0} (\hbar \partial)^{\geq 1} + o(\hbar^{\frac{1}{2}})\big)\qquad j\in J, \, a\in A
	\end{equation*}
	from which we deduce that
	\begin{equation*}
		\partial_{a} \left(H_{j;0,n} - \sum_{b\in A}\mc{A}_{j,b}\partial_{b} F_{0,n+1}\right) = 0\,.
	\end{equation*}
	For this to be satisfied $H_{j;0,n} - \sum_{b\in A}\mc{A}_{j,b}\partial_{b} F_{0,n+1} =0$ must vanish for itself. So we see that with our choice for $F_{0,n+1}$ we indeed have
	\begin{equation*}
		e^{-\hbar^{-1} \sum_{k=3}^{n+1} F_{0,k}} \, H_i \, e^{\hbar^{-1} \sum_{k=3}^{n+1} F_{0,k}} \cdot 1 = e^{-\hbar^{-1} F_{0,n+1}} \, \overline{H}_i \, e^{\hbar^{-1} F_{0,n+1}} \cdot 1  = x^{\geq n+1} +  \hbar^{\geq \frac{1}{2}} x^{\geq 0}
	\end{equation*}
	for all $i\in A\cup J$ which proves the induction step.
\end{proof}

\begin{remark}
	For expositional reasons we chose to prove \Cref{lem:suf_cond_genus0_soln} in the setting of finite dimensional vector spaces but we stress that the statement of the lemma carries over to the case where $E$ is a filtered vector space and $(H_a)_{a\in A }$ a filtered family of elements in $\WeylAlgComp{E}{\hbar}$. We leave it to the reader to check the details.
\end{remark}
\begin{remark}
	In the language of \cite{KSTR,BBCCN18} \Cref{lem:suf_cond_genus0_soln} is only a statement about the \textit{classical limit} of the family $(H_a)_{a\in A}$ since all higher order terms in $\hbar^{\frac{1}{2}}$ essentially play no role. We chose to state the statement in the \textit{quantum} setting nevertheless since all differential operators considered in this paper come with a natural $\hbar$-refinement.
\end{remark}

\medskip
\subsubsection{The proof of \Cref{thm:genus0_soln_admissible_Hik}}
\label{sec:genus0_soln_admissible_Hik}
\medskip
Let us now apply \Cref{lem:suf_cond_genus0_soln} to the family of modes $({}^\sigma H_{i,k})_{(i,k)\in I^{>}_{\mathbf{r},\mathbf{s}}}$ from \Cref{sec:AllDilatonPolarize} in order to work out sufficient conditions under which the operators admit a solution to their associated system of differential equations up to corrections in $\hbar^{\frac{1}{2}}$. In \Cref{prop:deg_one_cond_arb_autom} and \Cref{prop:DegOneCondExcep} we already analysed the cases in which this family satisfies the degree one condition. So in order to apply \Cref{lem:suf_cond_genus0_soln} we need to find a set $J\subseteq [r]\times\Z$ so that the combined family $({}^\sigma H_{i,k})_{(i,k)\in I_{\mathbf{r},\mathbf{s}}^{>} \cup J}$ satisfies the subalgebra condition and the operators $({}^\sigma H_{i,k})_{(i,k)\in J}$ feature no terms involving $x^*_*$ in degree one. As shown in \Cref{prop:k_min_geq} the latter requirement forces that we choose $J\subseteq I_{\mathbf{r},\mathbf{s}}^{\geq} \setminus I_{\mathbf{r},\mathbf{s}}^{>}$. Therefore, again by making use of \Cref{prop:desc_part_Lie_subalg} the question about when the operators $({}^\sigma H_{i,k})_{(i,k)\in I_{\mathbf{r},\mathbf{s}}^{>}}$ admit a solution to the associated system of differential equations to leading order in $\hbar^{\frac{1}{2}}$ is reduced to the question whether we can find a descending partition $\lambda\vdash r$ for which $I_{\mathbf{r},\mathbf{s}}^{>} \subseteq I_\lambda \subseteq I_{\mathbf{r},\mathbf{s}}^{\geq}$ because if such a partition exists we can simply take $J = I_\lambda\setminus I_{\mathbf{r},\mathbf{s}}^{>}$ and apply \Cref{lem:suf_cond_genus0_soln}. A  characterisation when this is possible is presented in the following.
\begin{lemma}
	\label{lem:mode_set_extendable}
	Given $\frac{r_1}{s_1}\geq\cdots\geq \frac{r_d}{s_d}$ with  $r_\mu$ and $s_\mu$ coprime integers for all $\mu$, there exists a descending partition $\lambda$ such that
	\begin{equation}
		\label{eq:mode_set_extension}
		I^{>}_{\mathbf{r},\mathbf{s}} \subseteq I_\lambda \subseteq I^{\geq}_{\mathbf{r},\mathbf{s}}
	\end{equation}
	if the following points are satisfied.
	\begin{enumerate}
		\item\label{item:r_eq_pm1mods1}  $r_\mu = \pm 1 \,\,{\rm mod}\,\, s_\mu$ for all $\mu\in[d]$.
		
		\item\label{item:fract_constr} For all $\mu_1 \neq \mu_2$ with $s_{\mu_i}>2$ such that either 
		\begin{equation*}
			r_{\mu_1} = 1 \,\,{\rm mod}\,\, s_{\mu_1} \text{ and } r_{\mu_2} = 1 \,\,{\rm mod}\,\, s_{\mu_2}
		\end{equation*}
		or
		\begin{equation*}
			r_{\mu_1} = -1 \,\,{\rm mod}\,\, s_{\mu_1} \text{ and } r_{\mu_2} = -1 \,\,{\rm mod}\,\, s_{\mu_2}
		\end{equation*}
		one has $\big\lfloor \tfrac{r_{\mu_1}}{s_{\mu_1}} \big\rfloor \neq \big\lfloor\tfrac{ r_{\mu_2}}{s_{\mu_2}}\big\rfloor$.
		
		\item\label{item:three_equal_constr} If there are pairwise distinct $\mu_1,\mu_2,\mu_3\in[d]$ with $\big\lfloor\frac{r_{\mu_1}}{s_{\mu_1}}\big\rfloor = \big\lfloor\frac{r_{\mu_2}}{s_{\mu_2}}\big\rfloor = \big\lfloor\frac{r_{\mu_3}}{s_{\mu_3}}\big\rfloor$, then there is an $m\in \{1,2,3\}$ for which $s_{\mu_m}=1$.
	\end{enumerate}
\end{lemma}

\begin{proof}
	First, let us describe what a possibly non-descending partition $\lambda'$ looks like that satisfies $I_{\lambda'} = I^{\geq}_{\mathbf{r},\mathbf{s}}$. On the intervals $i=i'+\rsumind{\nu-1}\in (\rsumind{\nu-1},\rsumind{\nu}]$ we have
	\begin{equation*}
		1 - \mf{d}_{\mathbf{r}, \mathbf{s}}(i) = \lambda^{r_\nu,s_\nu}(i') + \ssumind{\nu-1} 
	\end{equation*}
	where $\lambda^{r_\nu,s_\nu}$ is the partition described in \Cref{prop:k_min_partition_one_cycle}. To analyse the transitions between these intervals let us for a moment assume that $r_\nu>1$ and $s_\nu<r_\nu$ for all $\nu\in d$. These special cases will be treated separately later. Under these assumptions
	\begin{equation*}
		\big(1 - \mf{d}_{\mathbf{r}, \mathbf{s}}^{\geq}(\rsumind{\nu}+1)\big) -  \big(1 -\mf{d}_{\mathbf{r}, \mathbf{s}}^{\geq}(\rsumind{\nu-1})\big) = (1 + \ssumind{\nu}) - \ssumind{\nu} = 1 
	\end{equation*}
	and thus the partition $\lambda'=(\lambda^{r_1,s_1},\ldots,\lambda^{r_d,s_d})$ satisfies $I_{\lambda'} = I^{\geq}_{\mathbf{r},\mathbf{s}}$. Since by assumption \labelcref{item:r_eq_pm1mods1} each block $\lambda^{r_\nu,s_\nu}$ in $\lambda'$ is descending we only need to check whether it is descending at the transitions, i.e.\ whether $\lambda^{r_\nu,s_\nu}_{\ell^{r_\nu,s_\nu}} \geq \lambda^{r_{\nu+1},s_{\nu+1}}_{1}$ for all $\nu\in[d-1]$, and if not whether we can modify $\lambda'$ into a descending partition $\lambda$ satisfying \eqref{eq:mode_set_extension}. For this note that $\lambda^{r_\nu,s_\nu}_{\ell^{r_\nu,s_\nu}} \geq \lambda^{r_{\nu+1},s_{\nu+1}}_{1}$ holds if and only if
	\begin{equation*}
		\left\lfloor \frac{r_{\nu}}{s_{\nu}} \right\rfloor \geq \left\lceil \frac{ r_{\nu+1}}{s_{\nu+1}} \right\rceil\,.
	\end{equation*}
	This means in case $\big\lfloor \frac{r_{\nu}}{s_{\nu}} \big\rfloor > \big\lfloor\frac{ r_{\nu+1}}{s_{\nu+1}}\big\rfloor$ for all $\nu\in [d-1]$ the partition $\lambda'$ is indeed descending and so we can choose $\lambda=\lambda'$. If however there is a $\mu\in[d]$ for which $\big\lfloor \frac{r_{\mu}}{s_{\mu}} \big\rfloor = \big\lfloor\frac{ r_{\mu+1}}{s_{\mu+1}}\big\rfloor$ we need to be more careful. The case where at least one of $s_\mu$ or $s_{\mu+1}$ equals $1$ is unproblematic since then $\big\lfloor \frac{r_{\mu}}{s_{\nu}} \big\rfloor \geq \big\lceil \frac{ r_{\nu+1}}{s_{\mu+1}} \big\rceil$ is automatically satisfied. However, if $s_{\mu},s_{\mu+1}>1$ we have $\big\lfloor \frac{r_{\mu}}{s_{\nu}} \big\rfloor < \big\lceil \frac{ r_{\nu+1}}{s_{\mu+1}} \big\rceil$. The lack of $\lambda'$ being descending at this transition can be cured as follows. First, notice that \labelcref{item:fract_constr} forces that $r_{\mu} = -1 \,\,{\rm mod}\,\, s_{\mu}$ and $r_{\mu+1} = 1 \,\,{\rm mod}\,\, s_{\mu+1}$. One should remark here that the second case $r_{\mu} = 1 \,\,{\rm mod}\,\, s_{\mu}$ and $r_{\mu+1} = -1 \,\,{\rm mod}\,\, s_{\mu+1}$ allowed by \labelcref{item:fract_constr} can only occur if $s_{\mu} = s_{\mu+1} =2$ and is hence contained in the first case. Pictorially, we are in the following setting
	\begin{align}
		\lambda' = ~ ~ ~ &
		\Scale[0.6]{\begin{ytableau}%
				\none[\vdots] & \none & \none & \none & \none\\
				\, & \, & \none[\dots] & \, & \, \\
				\, & \, & \none[\dots] & \, & \, \\
				\, & \, & \none[\dots] & \, & \, \\
				\none[\vdots] & \none[\vdots] & \none[\ddots] & \none[\vdots] & \none[\vdots] \\
				\, & \, & \none[\dots] & \, & \, \\
				\, & \, & \none[\dots] & \, \\
				\, & \, & \none[\dots] & \, & \, \\
				\, & \, & \none[\dots] & \, \\
				\, & \, & \none[\dots] & \, \\
				\none[\vdots] & \none[\vdots] & \none[\ddots] & \none[\vdots] \\
				\, & \, & \none[\dots] & \, \\
				\, & \, & \none[\dots] & \, \\
				\none[\vdots] & \none & \none & \none & \none
		\end{ytableau}}
		\begin{array}{ll}
			\left.\phantom{\Scale[0.6]{\ydiagram{1,1,1,1,1,1}}}\right\} &= \lambda^{r_{\mu},s_{\mu}} \\[-0.1em]
			\left.\phantom{\Scale[0.6]{\ydiagram{1,1,1,1,1,1}}}\right\} &= \lambda^{r_{\mu+1},s_{\mu+1}}
		\end{array}%
		\label{eq:problemcases1}
	\end{align}
	We will now make use of the fact that there is a certain ambiguity in choosing $\lambda$ so that $I_\lambda$ lies between $I^{>}_{\mathbf{r},\mathbf{s}}$ and $I^{\geq}_{\mathbf{r},\mathbf{s}}$. So far we focused on the partition $\lambda'$ satisfying $I_{\lambda'}=I^{\geq}_{\mathbf{r},\mathbf{s}}$. Now notice that deleting an element $(\rsumind{\mu}+1,-\ssumind{\mu})\in I^{\geq}_{\mathbf{r},\mathbf{s}} \setminus I^{>}_{\mathbf{r},\mathbf{s}}$ affects the partition describing the index set in the sense that one box from the first row of $\lambda^{r_{\mu+1},s_{\mu+1}}$ gets shifted to the last row of $\lambda^{r_{\mu},s_{\mu}}$, i.e. the index sets associated to the partitions
	\begin{align*}
		\lambda' & = (\lambda^{r_1,s_1}_1,\ldots,\lambda^{r_\mu,s_\mu}_{\ell^{r_\mu,s_\mu}-1},\lambda^{r_\mu,s_\mu}_{\ell^{r_\mu,s_\mu}},\lambda^{r_{\mu+1},s_{\mu+1}}_{1},\lambda^{r_{\mu+1},s_{\mu+1}}_{2},\ldots,\lambda^{r_{d},s_{d}}_{\ell^{r_{d},s_{d}}})\,,\\
		\lambda'' & =(\lambda^{r_1,s_1}_1,\ldots,\lambda^{r_\mu,s_\mu}_{\ell^{r_\mu,s_\mu}-1},\lambda^{r_\mu,s_\mu}_{\ell^{r_\mu,s_\mu}} + 1 ,\lambda^{r_{\mu+1},s_{\mu+1}}_{1}-1,\lambda^{r_{\mu+1} ,s_{\mu+1}}_{2},\ldots,\lambda^{r_{d},s_{d}}_{\ell^{r_{d},s_{d}}})
	\end{align*}
	are related by $I_{\lambda''}=I_{\lambda'}\setminus\{(\rsumind{\mu}+1,-\ssumind{\mu})\}$. The index set $I_{\lambda''}$ then still satisfies $I^{>}_{\mathbf{r},\mathbf{s}} \subseteq I_{\lambda''} \subseteq I^{\geq}_{\mathbf{r},\mathbf{s}}$ as required. We notice now that if we perform such a box shift in the case displayed in \eqref{eq:problemcases1} we indeed obtain a partition $\lambda''$ that is descending at the transition between the $\mu$th and $(\mu+1)$th block. This solves the problem of having two subsequent blocks with \mbox{$\big\lfloor\frac{r_\mu}{s_\mu}\big\rfloor = \big\lfloor\frac{r_{\mu+1}}{s_{\mu+1}}\big\rfloor$}. However, if there is a sequence \mbox{$\big\lfloor\frac{r_\mu}{s_\mu}\big\rfloor = \ldots = \big\lfloor\frac{r_{\mu+a}}{s_{\mu+a}}\big\rfloor$} with $a>1$ then by \ref{item:three_equal_constr} we must already have $s_{\mu+2}=\ldots=s_{\mu+a}=1$. This means that if also $s_{\mu+1}=1$ there is nothing to worry about and in case $s_{\mu+1}>1$ we have
	\begin{align*}
		\lambda' = ~ ~ ~ &
		\Scale[0.6]{\begin{ytableau}%
				\none[\vdots] & \none & \none & \none & \none\\
				\, & \, & \none[\dots] & \, & \, \\
				\, & \, & \none[\dots] & \, & \, \\
				\, & \, & \none[\dots] & \, & \, \\
				\none[\vdots] & \none[\vdots] & \none[\ddots] & \none[\vdots] & \none[\vdots] \\
				\, & \, & \none[\dots] & \, & \, \\
				\, & \, & \none[\dots] & \, \\
				\, & \, & \none[\dots] & \, & \, \\
				\, & \, & \none[\dots] & \, \\
				\, & \, & \none[\dots] & \, \\
				\none[\vdots] & \none[\vdots] & \none[\ddots] & \none[\vdots] \\
				\, & \, & \none[\dots] & \, \\
				\, & \, & \none[\dots] & \, \\
				\, & \, & \none[\dots] & \, \\
				\none[\vdots] & \none[\vdots] & \none & \none[\vdots]\\
				\, & \, & \none[\dots] & \, \\
				\none[\vdots] & \none & \none & \none
		\end{ytableau}}
		\begin{array}{ll}
			&\\[-0.5em]
			\left.\phantom{\Scale[0.6]{\ydiagram{1,1,1,1,1,1}}}\right\} &= \lambda^{r_{\mu},s_{\mu}} \\[-0.1em]
			\left.\phantom{\Scale[0.6]{\ydiagram{1,1,1,1,1,1}}}\right\} &= \lambda^{r_{\mu+1},s_{\mu+1}}\\[-0.1em]
			&= \lambda^{r_{\mu+2},s_{\mu+2}}\\[-0.4em]
			&\qquad \Scale[0.6]{\vdots} \\[-0.2em]
			&= \lambda^{r_{\mu+a},s_{\mu+a}}
		\end{array}%
	\end{align*}
	and so we can perform a box-shift in order to construct a descending partition $\lambda$ from $\lambda'$.	This finishes the construction of a partition $\lambda$ satisfying \eqref{eq:mode_set_extension} if $s_\nu < r_\nu$ and $r_\nu>1$ for all $\nu\in[d]$.\\
	Hence, we are left with the cases in which possibly $r_\mu=1$ or $s_\mu=r_\mu+1$ for a $\mu\in[d]$. So suppose there is a $\mu$ with $s_\mu=r_\mu=1$ and assume $\mu\in[d]$ is minimal with this property. Then from our assumption that $\frac{r_{\nu_1}}{s_{\nu_1}} \geq \frac{r_{\nu_2}}{s_{\nu_2}}$ for $\nu_1 \leq \nu_2$ we deduce that necessarily $s_\nu\leq r_\nu$ and $r_\nu>1$ for all $\nu<\mu$. Hence, we know from our previous analysis that we can find a partition $\lambda\vdash\rsumind{\mu-1}$ whose associated index set $I_{\lambda}$ satisfies
	\begin{equation*}
		I^{>}_{\mathbf{r},\mathbf{s}} \cap \{(i,k) ~\rvert~ i \leq \rsumind{\mu-1} \} \subseteq I_{\lambda} \subseteq I^{\geq}_{\mathbf{r},\mathbf{s}} \cap\{(i,k) ~\rvert~ i \leq \rsumind{\mu-1} \}\,.
	\end{equation*}
	Now let $\delta\geq \mu$ be maximal with $r_\nu=s_\nu=1$ for all $\nu\in[\mu,\delta]$. Then we compute
	\begin{equation*}
		\big(1-\mf{d}^{\geq}_{\mathbf{r},\mathbf{s}}(\rsumind{\nu-1}+1)\big) - \big(1-\mf{d}^{\geq}_{\mathbf{r},\mathbf{s}}(\rsumind{\nu-1})\big) = 1
	\end{equation*}
	for all $\nu\in[\mu,\delta]$ and thus
	\begin{equation*}
		I^{>}_{\mathbf{r},\mathbf{s}} \cap\{(i,k) ~\rvert~ i \leq \rsumind{\delta} \} \subseteq I_{\lambda'} \subseteq I^{\geq}_{\mathbf{r},\mathbf{s}} \cap\{(i,k) ~\rvert~ i \leq \rsumind{\delta} \}
	\end{equation*}
	if we choose $\lambda'= (\lambda,1^{\delta-\mu+1})$ which is descending. Since we have chosen $\delta$ maximal we must have $s_\nu>r_\nu$ for all $\nu>\delta$ which however implies that $\delta\in\{d-2,d-1\}$ by \ref{item:three_equal_constr}.  First let us assume $\delta=d-1$. Then
	\begin{equation*}
		-\mf{d}^{\geq}_{\mathbf{r},\mathbf{s}}(\rsumind{d-1}+1) + \mf{d}^{\geq}_{\mathbf{r},\mathbf{s}}(\rsumind{d-1}) = 1
	\end{equation*}
	and thus
	\begin{equation*}
		I^{>}_{\mathbf{r},\mathbf{s}} \subseteq I_{(\lambda',1^{r_d})} \subseteq I^{\geq}_{\mathbf{r},\mathbf{s}} \,.
	\end{equation*}
	Now suppose $\delta=d-2$. In this case we need to be more careful. First, let us assume $r_{d-1}>1$ which due to \ref{item:r_eq_pm1mods} forces $s_{d-1}=r_{d-1}+1$ and thus also $r_{d}=1$ due to \ref{item:fract_constr}. We find that between the two transition points
	\begin{equation*}
		-\mf{d}^{\geq}_{\mathbf{r},\mathbf{s}}(\rsumind{d-2}+1) + \mf{d}^{\geq}_{\mathbf{r},\mathbf{s}}(\rsumind{d-2}) = 1\,, \qquad -\mf{d}^{\geq}_{\mathbf{r},\mathbf{s}}(\rsumind{d-1}+1) + \mf{d}^{\geq}_{\mathbf{r},\mathbf{s}}(\rsumind{d-1}) = 2\,.
	\end{equation*}
	The fact that $\mf{d}^{\geq}_{\mathbf{r},\mathbf{s}}$ jumps by two between $\rsumind{d-1} \rightarrow \rsumind{d-1}+1=r$ is only a mild problem that can be cured by not including $(\rsumind{d},-\ssumind{d-1})$ into our index set. Indeed, if we choose $\lambda''=(\lambda',1^{r_{d-1}},1)$ then the associated index set indeed has the desired property that $I^{>}_{\mathbf{r},\mathbf{s}} \subseteq I_{\lambda''} \subseteq I^{\geq}_{\mathbf{r},\mathbf{s}}$.\\
	Lastly, let us assume that $r_{d-1}=1$. In this case \ref{item:r_eq_pm1mods} and \ref{item:fract_constr} only leave us with $s_{d-1}=2$ and $r_d=1$. Recall here that we treated the case $s_{d-1}=1$ earlier. This time we find
	\begin{equation*}
		-\mf{d}^{\geq}_{\mathbf{r},\mathbf{s}}(\rsumind{d-2}+1) + \mf{d}^{\geq}_{\mathbf{r},\mathbf{s}}(\rsumind{d-2}) = 1\,, \qquad -\mf{d}^{\geq}_{\mathbf{r},\mathbf{s}}(\rsumind{d-1}+1) + \mf{d}^{\geq}_{\mathbf{r},\mathbf{s}}(\rsumind{d-1}) = 2\,.
	\end{equation*}
	Again we cure the fact that $\mf{d}^{\geq}_{\mathbf{r},\mathbf{s}}$ jumps by the value of two by deleting $(\rsumind{d},-\ssumind{d-1})$ from the index set and finally as in the case before we find that $(\lambda',1,1)$ satisfies the desired property $I^{>}_{\mathbf{r},\mathbf{s}} \subseteq I_{(\lambda',1,1)} \subseteq I^{\geq}_{\mathbf{r},\mathbf{s}}$.
\end{proof}

\begin{remark}
	The statement of \Cref{lem:mode_set_extendable} can easily be extended to the exceptional case. If $r_d=1$ then $\mf{d}^{>}_{\mathbf{r},\mathbf{s}}$ and $\mf{d}^{\geq}_{\mathbf{r},\mathbf{s}}$ actually do not depend on the choice of $s_d$ for finite $s_d$. Therefore, we can extend the definition of $\mf{d}^{>}_{\mathbf{r},\mathbf{s}}$ and $\mf{d}^{\geq}_{\mathbf{r},\mathbf{s}}$ to the case $s_d=\infty$ by choosing $\mf{d}^{>}_{\mathbf{r},\mathbf{s}}(r) = \mf{d}^{\geq}_{\mathbf{r},\mathbf{s}}(r)+1 = -\ssumind{d-1}+1$ as in the finite setting. Thus, all the results of \Cref{lem:mode_set_extendable} easily translate to the exceptional case if we adopt the convention that we say $r_d = 1\,\,{\rm mod}\,\, s_d$ if $(r_d,s_d)=(1,\infty)$ and interpret $\frac{r_d}{s_d}$ as being zero.
\end{remark}

As motivated in the introduction of this subsection we can now easily use \Cref{lem:mode_set_extendable} to prove  \Cref{thm:genus0_soln_admissible_Hik}.

\begin{proof}[Proof of \Cref{thm:genus0_soln_admissible_Hik}]
	The partition $\lambda$ in the statement of the \namecref{thm:genus0_soln_admissible_Hik} is chosen so that $1-\lambda(i)+\delta_{i,1}=\mf{d}^{>}_{\mathbf{r},\mathbf{s}}(i)$ for all $i\in [r]$. Thus, we already know from \Cref{prop:deg_one_cond_arb_autom,prop:DegOneCondExcep} that the selected modes satisfy the degree one condition. Moreover, \Cref{lem:mode_set_extendable} tells us that we can choose a partition $\lambda'$ so that the modes $({}^\sigma H_{i,k})_{(i,k)\in J}$ associated to the index set $J=I_{\lambda'} \setminus I^{>}_{\mathbf{r},\mathbf{s}}$ satisfy the properties required in \Cref{lem:suf_cond_genus0_soln}. Therefore, the \namecref{lem:suf_cond_genus0_soln} ensures the sought-after existence of a leading order solution to the system of differential equations associated to $({}^\sigma H_{i,k})_{(i,k)\in I_\lambda}$.
\end{proof}

\newpage

\part{Spectral curve descriptions}
\label{SCpart}

In this second part, we translate the differential constraints coming from the $\mathcal{W}$-algebra representations of \cref{S1} into constraints on the order of poles of certain combinations of multidifferentials $\omega_{g,n}$ on a spectral curve built from the coefficients $F_{g,n}$. The latter constraints are called ``abstract loop equations''. In a second step, we show that the unique solution to the abstract loop equations is provided by an adaptation of Bouchard--Eynard topological recursion to the setting of singular spectral curves. In fact, this provides us with the right definition of the topological recursion \`a la Chekhov--Eynard--Orantin in this setting, together with the proof that it is well-defined.

\section{From Airy structures to local spectral curves}
\label{SAIRPS}
\subsection{Fields for a single cycle}
\label{onecyc}

\medskip

We will start by reconsidering \cref{sec:twisted_modules} in the case of $ \sigma$ consisting of a single cycle, of length $ r$. In this case, we can omit all $ \mu$-indices, and consider
\begin{equation*}
J_{k} = \left\{\begin{array}{lll} \hbar \partial_{x_k} & & {\rm if}\,k > 0 \\ \hbar^{\frac{1}{2}}Q & & {\rm if}\,\,k = 0 \\ -kx_{-k} & & {\rm if}\,\,k < 0 \end{array}\right. \,,
\end{equation*}
the standard representation of the Heisenberg algebra of $\mathfrak{gl}_{r}$. It is useful to write $\tilde{x} = z^r$. We split the current as follows:
\begin{equation*}
\begin{split}
J(z) & = \sum_{k \in \mathbb{Z}} \frac{J_{k}\,\dd z}{z^{k + 1}} = J_+(z) + J_-(z) + \hbar^{\frac{1}{2}}\,\frac{Q\dd z}{z}\,, \\
J_+(z) & = \sum_{k > 0} J_{-k}\,z^{k - 1}\,\dd z\,, \\
J_{-}(z) & = \sum_{k > 0} \frac{J_{k}\,\dd z}{z^{k + 1}}\,,
\end{split}
\end{equation*}
Choose a primitive $r$th root of unity $ \theta $ and let $\mathfrak{f}(z) = \{z,\theta z,\ldots,\theta^{r - 1}z\}$. Set
\[
\omega_{0,2}^{{\rm std}}(z_1,z_2) = \frac{\dd z_1 \dd z_2}{(z_1 - z_2)^2}\,.
\]
We can rewrite \eqref{Wikz} as
\begin{equation}\label{WiAsCorrelators}
W_i(\tilde{x}) = \frac{1}{r} \! \sum_{Z\,:\,[i] \hookrightarrow \mathfrak{f}(z)} \sum_{\substack{0 \leq j \leq \lfloor i/2 \rfloor \\ A_0 \sqcup A_+ \sqcup A_- = [2j + 1,i]}} \!\!\!\!\! \frac{\hbar^{j}}{2^{j}j!(i - 2j)!} \prod_{l = 1}^{j} \omega_{0,2}^{{\rm std}}(Z_{2l - 1},Z_{2l}) \prod_{l \in A_0} \frac{Q \dd z}{z} \prod_{l \in A_+} J_+(Z_l) \prod_{l \in A_-} J_-(Z_l)\,,
\end{equation}
As in \cref{sec:AllDilatonPolarize}, let us apply a general dilaton shift and change of polarisation to these operators. We take
\begin{equation*}
\begin{split}
\hat{T} & = \exp\bigg(\sum_{k > 0} \big(\hbar^{-1}\,F_{0,1}[-k] + \hbar^{-\frac{1}{2}}\,F_{\frac{1}{2},1}[-k]\big)\frac{J_{k}}{k}\bigg)\,, \\
\hat{\Phi} & = \exp\bigg(\frac{1}{2\hbar} \sum_{k_1,k_2 > 0} F_{0,2}[-k_1,-k_2]\,\frac{J_{k_1}J_{k_2}}{k_1k_2} \bigg)\,,
\end{split}
\end{equation*}
in which we can always assume that $F_{0,2}[-k_1,-k_2] = F_{0,2}[-k_2,-k_1]$, and introduce
\[
H_{i}(\tilde{x}) \coloneqq  \hat{\Phi}\hat{T} \cdot W_i(\tilde{x})  \cdot \hat{T}^{-1}\hat{\Phi}^{-1} \,.
\]
The effect of the dilaton shift $\hat{T}$ in \eqref{WiAsCorrelators} is to replace $J_+(z)$ with $J_+(z) + \hbar^{\frac{1}{2}}\big(\omega_{\frac{1}{2},1}(z) - Q\tfrac{\dd z}{z}\big)+ \omega_{0,1}(z)$, where
\begin{equation*}
\begin{split}
\omega_{0,1}(z) & \coloneqq  \sum_{k > 0} F_{0,1}[-k]\,z^{k - 1}\dd z\,,\\
\omega_{\frac{1}{2},1}(z) & \coloneqq  Q\,\frac{\dd z}{z} + \sum_{k > 0} F_{\frac{1}{2},1}[-k]\,z^{k - 1}\dd z\,.
\end{split}
\end{equation*}
Using the Baker--Campbell--Hausdorff formula, it is easy to see that the net effect of the change of polarisation $\hat{\Phi}$ is to replace $\omega_{0,2}^{{\rm std}}$ with
\[
\omega_{0,2}(z_1,z_2) \coloneqq  \frac{\dd z_1 \dd z_2}{(z_1 - z_2)^2} + \sum_{k_1,k_2 > 0} F_{0,2}[-k_1,-k_2]\,z_1^{k_1 - 1}z_2^{k_2 - 1}\,\dd z_1 \dd z_2\,,
\]
and to replace $J_-(z)$ with\footnote{More precisely, the conjugation by $\hat{\Phi}$ leaves $J_-(z)$ invariant and add extra terms with positive $J$s in $J_+(z)$. We collect all terms with negative (resp. positive) $J$s in $\mathcal{J}_+$ (resp $\mathcal{J}_-$), thus leading to \eqref{Jmoinsz}.}
\begin{equation}
\label{Jmoinsz}
\mathcal{J}_-(z) \coloneqq  \sum_{k > 0} J_k\,\dd \xi_{-k}(z)\,,
\end{equation}
where for $k > 0$
\[ 
\dd \xi_{-k}(z) \coloneqq  \frac{\dd z}{z^{k + 1}} + \sum_{l > 0} \frac{F_{0,2}[-k,-l]}{k}\,\dd\xi_{l}(z) =  \Res_{z' = 0} \bigg(\int_{0}^{z'} \omega_{0,2}(\cdot,z)\bigg)\frac{\dd z'}{(z')^{k + 1}}\,.
\] 
For uniformity we also define for $k \geq 0$
\[
\dd \xi_{k}(z) \coloneqq  z^{k - 1}\dd z\,.
\]
So, we can rewrite
\begin{equation*}
\begin{split}
\omega_{0,1}(z) & = \sum_{k > 0} F_{0,1}[-k]\,\dd \xi_{k}(z)\,, \\
\omega_{\frac{1}{2},1}(z) & = Q\,\dd\xi_{0}(z) + \sum_{k > 0} F_{\frac{1}{2},1}[-k]\,\dd \xi_{k}(z)\,, \\
\omega_{0,2}(z_1,z_2) & = \frac{\dd z_1 \, \dd z_2}{(z_1 - z_2)^2} + \sum_{k_1,k_2 > 0} F_{0,2}[-k_1,-k_2]\,\dd\xi_{k_1}(z_1) \dd \xi_{k_2}(z_2)\,.
\end{split}
\end{equation*}
so that we have
\[
\mathcal{J}_+(z) \coloneqq  J_+(z) = \sum_{k > 0} kx_{k}\,\dd\xi_{k}(z)\,.
\]
We then obtain
\begin{equation}
\begin{split}
\label{Hix0} & \\
rH_i(\tilde{x})  =  \!\!\!\!\!\!\! \sum_{\substack{Z\,:\,[i] \hookrightarrow \mathfrak{f}(z) \\ 0 \leq j \leq \lfloor i/2 \rfloor \\ A_{0} \sqcup A_{\frac{1}{2}} \sqcup A_+ \sqcup A_- = [2j + 1,i]}} \!\!\!\!\!\!\!\!\!\!\!\!\! \frac{\hbar^{j + \frac{1}{2}|A_{\frac{1}{2}}|}}{2^{j}j!(i - 2j)!} \prod_{l = 1}^{j} \omega_{0,2}(Z_{2l - 1},Z_{2l})\prod_{l \in A_{0}} & \omega_{0,1}(Z_{l}) \prod_{l \in A_{\frac{1}{2}}} \omega_{\frac{1}{2},1}(Z_l) \\
& \qquad \prod_{z' \in A_+} \mathcal{J}_+(z') \prod_{z' \in A_-} \mathcal{J}_-(z')\,.
\end{split}
\end{equation}
We prefer to convert this expression into a sum over subsets $Z \subseteq \mathfrak{f}(z)$ of cardinality $i$. Then, we have to sum over partitions $B_1 \sqcup \ldots \sqcup B_j \sqcup A_0 \sqcup A_{\frac{1}{2}} \sqcup A_+ \sqcup A_- = Z$ where $|B_l| = 2$ for any $l \in [j]$, and it can arise in exactly $2^{j}(i - 2j)!$ terms in \eqref{Hix0} corresponding to the choice of an order within each pair $B_j$, and the choice of a labelling by $[2j + 1,i]$ for the elements in $A_0 \sqcup A_{\frac{1}{2}} \sqcup A_+ \sqcup A_-$. So only the factor $1/j!$ remains. It can also be erased by forgetting the ordering of $B_1,\ldots,B_j$. More precisely, introducing the set $\mathcal{P}(\mathfrak{f}(z))$ whose elements are sets of disjoint pairs in $ \mf{f}(z)$, and writing $\sqcup \mathbf{P} \coloneqq  \bigsqcup_{P \in \mathbf{P}} P$ if $\mathbf{P} \in \mathcal{P}(\mathfrak{f}(z))$, we obtain
\begin{equation*}
\begin{split}
 rH_i(\tilde{x}) = \sum_{\substack{Z \subseteq \mathfrak{f}(z) \\ |Z| = i}} \sum_{\substack{(\sqcup \mathbf{P}) \sqcup A_{0} \sqcup A_{\frac{1}{2}} \sqcup A_+ \sqcup A_- = Z \\ \mathbf{P} \in \mathcal{P}(\mathfrak{f}(z))}} \hbar^{|\mathbf{P}| + \frac{1}{2}|A_{\frac{1}{2}}|} \!\! \prod_{\{z',z''\} \in \mathbf{P}} \!\! \omega_{0,2}(z',z'')\prod_{z' \in A_{0}} &\omega_{0,1}(z') \prod_{z' \in A_{\frac{1}{2}}} \omega_{\frac{1}{2},1}(z') \\
& \prod_{z' \in A_{+}} \mathcal{J}_{+}(z')\prod_{z' \in A_-} \mathcal{J}_{-}(z')\,.
\end{split} 
\end{equation*} 

\medskip
  
\subsection{Fields for an arbitrary twist}
\label{Ftwis}

\medskip

We now return to the general situation of \cref{sec:twisted_modules}. Let $\sigma$ be a permutation of $[r]$ with cycles of  lengths $r_{\mu}$ labelled by $\mu \in [d]$. For each $\mu \in [d]$, we have generators $J^{\mu}_{k}$ of the Heisenberg algebra of $\mathfrak{gl}_{r_{\mu}}$:
\[
J_{k}^{\mu} = \left\{\begin{array}{lll} \hbar \partial_{x_k^{\mu}} & & {\rm if}\,k > 0 \\ \hbar^{\frac{1}{2}}Q_{\mu} & & {\rm if}\,\,k = 0 \\ -kx^{\mu}_{-k} & & {\rm if}\,\,k < 0 \end{array}\right.\,,
\]
whose currents we split as $J^\mu_+(z)$ and $ J^\mu_-(z)$ in the same way as in \cref{onecyc}. We obtain modes $W^{\mu}_{i_{\mu},k_{\mu}}$ indexed by $i_{\mu} \in [r_{\mu}]$ and $k_{\mu} \in \mathbb{Z}$ for a representation of the $\mc{W}(\mathfrak{gl}_r)$ algebra given by \eqref{eq:twist_mode_coxeter}.
To match \cref{onecyc}, we introduce for each $\mu \in [d]$ formal variables $z$ such that $\tilde{x} = z^{r_{\mu}}$. These $z$ thus depend on $\mu$, but they will appear in generating series with superscript $\mu$ so that one can infer directly from the formula which power $r_{\mu}$ one should use to relate it to the global variable $\tilde{x}$.

At this stage we are naturally led to introduce a curve which is the union of copies of a formal disk for each $\mu \in [d]$:
\[
\tilde{C} = \bigsqcup_{\mu = 1}^{d} \tilde{C}_{\mu},\qquad \tilde{C}_{\mu} \coloneqq  {\rm Spec}\,\mathbb{C}\llbracket z \rrbracket\,.
\]
When necessary to avoid confusion, points in $\tilde{C}$ will be denoted $\big(\begin{smallmatrix} \mu \\ z \end{smallmatrix}\big)$ to indicate in which copy of the formal disk we consider them. One can consider $\tilde{x}$ as a branched cover $\tilde{C} \longrightarrow V \coloneqq  {\rm Spec} \,\mathbb{C} \llbracket X \rrbracket$ given by $z \mapsto z^{r_{\mu}}$ on the $\mu$th copy of the $z$-formal disk. The smooth (but reducible) curve $\tilde{C}$ is in fact the normalisation $\pi \colon \tilde{C} \rightarrow C$ of the singular curve
\[
C = {\rm Spec}\,\mathbb{C} \llbracket x,z \rrbracket/ \big( \prod_{\mu=1}^d (x - z^{r_\mu})\big)\,.
\]
The branched cover $\tilde{x}\colon \tilde{C} \rightarrow V$ factors through $x \colon C \rightarrow V$. This is the local picture we will globalise later in \cref{globalsp} by considering more general branched covers
\[  
\tilde{C} \mathop{\longrightarrow}^{\pi} C \mathop{\longrightarrow}^{x} V,\qquad \tilde{x} = x \circ \pi\,,
\]
where $V,\tilde{C}$ are regular curves and $C$ is a possibly singular curve whose normalisation is $\tilde{C}$. For the moment we stick to the local setting.

Let us again consider a general dilaton shift and change of polarisation
\begin{equation*}
\begin{split} 
\hat{T} & = \exp\Bigg(\sum_{\substack{\mu \in [d] \\ k > 0}} \Big(\hbar^{-1}\,F_{0,1}\big[\begin{smallmatrix} \mu \\ -k \end{smallmatrix}\big] + \hbar^{-\frac{1}{2}}\,F_{\frac{1}{2},1}\big[\begin{smallmatrix} \mu \\ - k \end{smallmatrix}\big]\Big)\,\frac{J^{\mu}_{k}}{k}\Bigg)\,, \\
\hat{\Phi} & = \exp\Bigg( \frac{1}{2\hbar} \sum_{\substack{\mu,\nu \in [d] \\ k,l > 0}}  F_{0,2}\big[\begin{smallmatrix} \mu & \nu \\ -k & -l \end{smallmatrix}\big]\,\frac{J_{k}^{\mu} J_{l}^{\nu}}{kl} \Bigg)\,,
\end{split}   
\end{equation*} 
and the conjugated operator
\begin{equation*}
H_i(\tilde{x}) = \hat{\Phi}\,\hat{T} \cdot W_i(\tilde{x}) \cdot \hat{T}^{-1}\,\hat{\Phi}^{-1} = \sum_{k \in \mathbb{Z}} \frac{H_{i,k}(\dd \tilde{x})^i}{\tilde{x}^{k + i}}\,.
\end{equation*}
To express $H_i(\tilde{x})$, we introduce the basis of meromorphic $1$-forms $\dd \xi^{\mu}_{k}$ on $\tilde{C}$, indexed by $\mu \in [d]$ and $k \in \mathbb{Z}$. It is defined by
\begin{equation*} 
\begin{split}
k \geq 0\qquad :& \qquad \dd\xi^{\mu}_k\big(\begin{smallmatrix} \nu \\ z\end{smallmatrix}\big) = \delta_{\mu,\nu}\,z^{k - 1}\,\dd z\,, \\
k > 0 \qquad :& \qquad \dd \xi_{-k}^{\mu}\big(\begin{smallmatrix} \nu \\ z \end{smallmatrix}\big) = \delta_{\mu,\nu}\,\frac{\dd z}{z^{k + 1}} + \sum_{l > 0} \frac{F_{0,2}\big[\begin{smallmatrix} \mu & \nu \\ -k & -l \end{smallmatrix}\big]}{k}\,z^{l - 1}\,\dd z\,. 
\end{split}
\end{equation*} 
We also introduce the meromorphic forms $\omega_{0,1}$, $ \omega_{\frac12, 1}$ and bidifferential $\omega_{0,2}$ on $\tilde{C}$:
\begin{equation*}
\begin{split} 
\omega_{0,1} & = \sum_{\substack{\mu \in [d] \\ k > 0}} F_{0,1}\big[\begin{smallmatrix} \mu \\ -k \end{smallmatrix}\big]\,\dd\xi_{k}^{\mu}\,, \\
\omega_{\frac{1}{2},1} & = \sum_{\mu \in [d]} Q_{\mu}\dd\xi_{0}^{\mu} + \sum_{k > 0} F_{\frac{1}{2},1}\big[\begin{smallmatrix} \mu \\ -k \end{smallmatrix}\big]\,\dd \xi_{k}^{\mu}\,, \\
\omega_{0,2} & = \omega_{0,2}^{{\rm std}} + \sum_{\substack{\mu_1,\mu_2 \in [d] \\ k_1,k_2 > 0}}F_{0,2}\big[\begin{smallmatrix} \mu_1 & \mu_2 \\ -k_1 & -k_2 \end{smallmatrix}\big]\,\dd \xi_{k_1}^{\mu_1} \dd \xi_{k_2}^{\mu_2}\,,
\end{split}
\end{equation*}
where
\[
\omega_{0,2}^{{\rm std}}\big(\begin{smallmatrix} \nu_1 & \nu_2 \\ z_1 & z_2 \end{smallmatrix}\big) = \frac{\delta_{\nu_1,\nu_2}\,\dd z_1 \dd z_2}{(z_1 - z_2)^2}\,.
\]
For $k \in \mathbb{Z}$, we introduce the $1$-form on $\tilde{C}$
\[
\dd \xi_{k}^*\big(\begin{smallmatrix} \mu \\ z \end{smallmatrix}\big) = \dd\xi_{k}^{\mu}\big(\begin{smallmatrix} \mu \\ z \end{smallmatrix}\big)\,.
\]
We recall that the index $\mu \in [d]$ of the component to which a point $z' \in \tilde{C}$ belongs is implicit in the data of $z'$.

Similarly to \cref{onecyc}, the effect of the dilaton shift is to replace $J_+^{\mu}(z)$ with
\[
J_+^{\mu}(z) + \hbar^{\frac{1}{2}}\Big(\omega_{\frac{1}{2},1}\big(\begin{smallmatrix} \mu \\ z \end{smallmatrix}\big) - Q_{\mu}\tfrac{\dd z}{z}\Big) + \omega_{0,1}\big(\begin{smallmatrix} \mu \\ z \end{smallmatrix}\big)\,,
\]
while the effect of the change of polarisation is to replace
$\omega_{0,2}^{{\rm std}}(z_1,z_2)$ with $\omega_{0,2}\big(\begin{smallmatrix} \mu_1 & \mu_2 \\ z_1 & z_2 \end{smallmatrix}\big)$ and $J_-^{\mu}(z)$ with
\[
\mathcal{J}_{-}\big(\begin{smallmatrix} \mu \\ z \end{smallmatrix}\big) \coloneqq \sum_{k > 0} J_{k}^{\mu}\,\dd\xi_{-k}^{*}\big(\begin{smallmatrix} \mu \\ z \end{smallmatrix}\big)\,.
\] 
For uniformity we also set
\[
\mathcal{J}_{+}\big(\begin{smallmatrix} \mu \\ z \end{smallmatrix}\big) \coloneqq  J_+^{\mu}(z) = \sum_{k > 0} kx_{k}^{\mu}\,\dd\xi_{k}^{*}\big(\begin{smallmatrix} \mu \\ z \end{smallmatrix}\big)\,.
\]
We can repeat the argument of \cref{onecyc} with several $\mu$s,
 defining the fiber over $\tilde{x}$ in $\tilde{C}$
\[
\mathfrak{f}(z) \coloneqq  \bigsqcup_{\mu = 1}^{d} \mathfrak{f}_{\mu}(z)\,,
\]
and getting
\begin{equation*} 
\begin{split}
H_i(\tilde{x})  = \sum_{\substack{Z \subseteq \mathfrak{f}(z) \\ |Z| = i}} \sum_{\substack{(\sqcup \mathbf{P}) \sqcup A_{0} \sqcup A_{\frac{1}{2}} \sqcup A_{+} \sqcup A_{-} = Z \\ \mathbf{P} \in \mathcal{P}(\mathfrak{f}(z))}} \hbar^{|\mathbf{P}| + \frac{1}{2}|A_{\frac{1}{2}}|} \prod_{\{z',z''\} \in \mathbf{P}} \omega_{0,2}(z',z'') &\prod_{z' \in A_{0}} \omega_{0,1}(z') \prod_{z' \in A_{\frac{1}{2}}} \omega_{\frac{1}{2},1}(z') \\
& \prod_{z' \in A_{+}} \mathcal{J}_{+}(z')\prod_{z' \in A_{-}} \mathcal{J}_{-}(z')\,.
\end{split}
\end{equation*}

\medskip

\subsection{Action of the fields on the partition function}

\medskip

Given a formal function
\begin{equation}
\label{Fsum} F = \sum_{\substack{g \geq 0,\,\,n \geq 1 \\ 2g - 2 + n > 0}} \sum_{\substack{\mu_1,\ldots,\mu_n \in [d] \\ k_1,\ldots,k_n > 0}}\frac{\hbar^{g - 1}}{n!} F_{g,n}\big[\begin{smallmatrix}  \mu_1 & \cdots & \mu_n \\ k_1 & \cdots & k_n \end{smallmatrix}\big]\,\prod_{i = 1}^n x_{k_i}^{\mu_i}\,,
\end{equation}
let us compute $G_{i}(\tilde{x}) = e^{-F}H_i(\tilde{x})e^{F}\cdot 1$. The partition function $e^F$ is annihilated by the differential operators above a certain index in the $\mathcal{W}$ if and only if the $G_i$ satisfy certain bounds on their pole orders as $ \tilde{x} \to 0$. Because $F$ is a function (i.e. it does not contain a differential part), it commutes with $ \omega_{0,2}$, $\omega_{0,1}$, $\omega_{\frac{1}{2},1}$, and $\mc{J}_+$. The only non-trivial computation is
\begin{equation*}
e^{-F} \mc{J}_-(z') e^F = \mc{J}_-(z') + [\mc{J}_-(z'), F]\,,
\end{equation*}
where each term $ \mc{J}_-(z')$ obtained like this has to act on a later $ [\mc{J}_-(z''),F]$, as it annihilates $1$. The $ \mc{J}_-$ commute among each other, so we get a partition of $ A_-$ into sets of operators acting on a single copy of $F$.
We obtain
\begin{equation}
\label{Gitildex}\begin{split}
G_i(\tilde{x}) & = \sum_{\substack{Z \subseteq \mathfrak{f}(z) \\ |Z| = i}} \sum_{\substack{(\sqcup \mathbf{P}) \sqcup A_{0} \sqcup A_{\frac{1}{2}} \sqcup A_{+} \sqcup A_- = Z \\ \mathbf{P} \in \mathcal{P}(\mathfrak{f}(z))}}  \hbar^{|\mathbf{P}| + \frac{1}{2}|A_{\frac{1}{2}}|}\,\prod_{\{z',z''\} \in \mathbf{P}} \omega_{0,2}(z',z'') \prod_{z' \in A_{0}} \omega_{0,1}(z') \prod_{z' \in A_{\frac{1}{2}}} \omega_{\frac{1}{2},1}(z') \\
&\quad \prod_{z' \in A_+} \bigg( \sum_{\tilde{k}_{z'} \geq 1} \tilde{k}_{z'}x_{\tilde{k}_{z'}}^{\mu_{z'}}\,\dd \xi^*_{\tilde{k}_{z'}}(z') \bigg) \\
&  \quad \cdot \!\!\!\!\!\!\!\! \sum_{\substack{\mathbf{L} \vdash A_- \\ g_L,m_L \geq 0,\,\,L \in \mathbf{L} \\ 2g_L - 2 + m_L + |L| > 0}}\!\!\! \prod_{L \in \mathbf{L}} \sum_{\substack{\mathbf{\nu}_{L}\colon [m_L] \to [d] \\ \mathbf{\ell}_{L}\colon [m_L] \to \N^*\\ k_L \colon L \to \N^*}}  \! \bigg(\frac{\hbar^{g_L - 1 + |L|}}{m_{L}!}\,F_{g_L,|L| + m_L}\big[\begin{smallmatrix} \boldsymbol{\mu}_{|L} & \nu_{L,1} & \cdots & \nu_{L,m_L} \\ k_L & \ell_{L,1} & \cdots & \ell_{L,m_L}\end{smallmatrix}\big] \!\! \prod_{l \in [m_L]} \!\! x_{\ell_{L,l}}^{\nu_{L,l}}   \prod_{z' \in L} \dd\xi^*_{-k_{z'}}(z') \bigg)\,,
\end{split}
\end{equation} 
where $\mathbf{\mu}\colon A_+ \sqcup A_- \rightarrow [d]$ associates to $z' $ the index $\mu_{z'} \in [d]$ such that $z' \in \mathfrak{f}_{\mu_{z'}}(z)$, and we identified $\boldsymbol{\mu}_{|L}$ and $\mathbf{k}_{L}$ with the tuples $(\mu_{z'})_{z' \in L}$ and $(k_{z'})_{z' \in L}$.\par
We decompose $G_i$ in homogeneous terms with respect to the exponent of $\hbar$ and the number of $ x^\mu_k$:
\begin{equation*}
G_i(\tilde{x}) = \sum_{g,n \geq 0} \frac{\hbar^g}{n!} G_{i;g,n}(\tilde{x})\,.
\end{equation*}
In order to completely rephrase this in terms of spectral curves, we need to get rid of the $ x_k^\mu $ and replace them with $ \dd \xi $s. For every $n$, prepare a tuple $w_{[n]} = (w_j)_{j=1}^n $ of points on $ \tilde{C}$ and define
\begin{equation*}
\mc{E}^{(i)}_{g,n}(\tilde{x};w_{[n]})  \coloneqq \prod_{j=1}^n \ad_{\mc{J}_-(w_j)} G_{i;g,n}(\tilde{x})\,.
\end{equation*}
To compute it, we introduce the multidifferential forms for $g \geq 0$ and $n \geq 1$ such that $2g - 2 + n > 0$
\begin{equation}
\label{wgnd}
\omega_{g,n}(z_1,\ldots,z_n) \coloneqq  \sum_{\substack{\mu_1,\ldots,\mu_n \in [d] \\ k_1,\ldots,k_n > 0}} F_{g,n}\big[\begin{smallmatrix} \mu_1 & \cdots & \mu_n \\ k_1 & \cdots & k_n \end{smallmatrix}\big]\,\prod_{j = 1}^n \dd\xi_{-k_i}^{\mu_i}(z_i)\,.
\end{equation}
Besides, under this action, we get
\[
\ad_{\mc{J}_-(w)}\bigg(\sum_{k > 0} k\,x^{\nu}_{k}\,\dd\xi^\mu_{k}(z)\bigg) = \sum_{k > 0} k\,\dd\xi_{-k}^{\nu}(w)\,\dd\xi^{\mu}_{k}\big(z)\,,
\]
which is the series expansion of $\omega_{0,2}\big(\begin{smallmatrix} \nu & \mu  \\ w & z \end{smallmatrix}\big)$ with $|z| < |w|$.

We then notice that the sums over $\tilde{k}_{z'}$, $k_L$, $\nu_L$ and $\ell_L$ in \eqref{Gitildex} recombine into
\begin{equation*}
\begin{split} 
& \quad \mathcal{E}^{(i)}_{g,n}(\tilde{x},w_{[n]}) \\
& = \sum_{\substack{Z \subseteq \mathfrak{f}(z) \\ |Z| = i}} \,\,\,\sum_{\substack{(\sqcup \mathbf{P}) \sqcup A_0 \sqcup A_{\frac{1}{2}} \sqcup A_{+} \sqcup A_- = Z \\ \mathbf{P} \in \mathcal{P}(\mathfrak{f}(z))}} \sum_{\iota \colon A_+\hookrightarrow [n]} \sum_{\substack{\mathbf{L} \vdash A_- \\ M \vdash_{\mathbf{L}} [n] \setminus \iota (A_+)}} \sum_{\substack{g_L \geq 0,\,\,L \in \mathbf{L} \\ 2g_L - 2 + |L| + |M_L| > 0 \\ \frac{1}{2}|A_{\frac{1}{2}}| + |\mathbf{P}| + |A_-| + \sum_{L} (g_L - 1) = g}} \\ 
& \quad \prod_{\{z',z''\} \in \mathbf{P}} \omega_{0,2}(z',z'') \prod_{z' \in A_{0}} \omega_{0,1}(z') \prod_{z' \in A_{\frac{1}{2}}} \omega_{\frac{1}{2},1}(z') \prod_{z' \in A_+} \omega_{0,2}(w_{\iota (z')},z')  \prod_{L \in \mathbf{L}} \omega_{g,|L| + |M_L|}(L,w_{M_L})\,. 
\end{split} 
\end{equation*} 
We now observe that the factors $\omega_{0,1},\omega_{\frac{1}{2},1},\omega_{0,2}$ can be treated uniformly by summing over partitions $\mathbf{L} \vdash Z$ and allowing $(g_L,|L| + m_L) = (0,1),(\tfrac{1}{2},1),(0,2)$, which were exactly the terms for which $2g_L - 2 + |L| + m_L \leq 0$. We get
\begin{equation*}
\mathcal{E}^{(i)}_{g,n}(\tilde{x};w_{[n]}) = \sum_{\substack{Z \subseteq \mathfrak{f}(z) \\ |Z| = i}} \,\,\,\sum_{\substack{\mathbf{L} \vdash Z \\ N \vdash_{\mathbf{L}} [n]}} \sum_{\substack{g_L \geq 0,\,\,L \in \mathbf{L} \\ g = i + \sum_{L} (g_L - 1)}} \prod_{L \in \mathbf{L}} \omega_{g_L,|L| + |N_L|}(L,w_{N_L})\,.
\end{equation*}

\medskip

\subsection{From PDEs to abstract loop equations}
\label{SAIRNUSN} 

\medskip

\Cref{prop:AiryAllDilatonPolarize} gives sufficient conditions on the values of $(r_{\mu})_{\mu = 1}^{d}$, of positive integers $(s_{\mu})_{\mu = 1}^{d}$, of scalars $(t_{\mu})_{\mu = 1}^{d}$ and $(Q_{\mu})_{\mu = 1}^{d}$ to get a unique $F$ of the form \eqref{Fsum} such that for any $i \in [r]$ and $k \geq \mathfrak{d}_{\mathbf{r},\mathbf{s}}(i)$
\[
e^{-F}H_{i,k}e^{F} \cdot 1= 0\,.
\]
The translation of these differential constraints in terms of the \emph{correlators} $\boldsymbol{\omega} = (\omega_{g,n})_{g,n}$ defined in \eqref{wgnd} is called ``abstract loop equations''. It says that for any $n \geq 0$, we have
\begin{equation*}
\mathcal{E}^{(i)}_{g,n}(\tilde{x};w_{[n]}) \in o\big(\tilde{x}^{-\mathfrak{d}_{\mathbf{r},\mathbf{s}}(i)}\big)\cdot \bigg(\frac{\dd \tilde{x}}{\tilde{x}}\bigg)^i  ,\qquad \tilde{x} \rightarrow 0\,.
\end{equation*}
In other words, $\mathcal{E}^{(i)}_{g,n}(\tilde{x},w_{[n]})$ is meromorphic and has a pole of order strictly less than $\mf{d}_{\mathbf{r},\mathbf{s}}(i) + i$ at the point $\tilde{x} = 0$ in $V$. If we let $\tilde{\mathcal{E}}^{(i)}_{g,n}(z,w_{[n]})$ be its pullback to a meromorphic $i$-differential on $\tilde{C}$, this is tantamount to requiring that, for any $\mu \in [d]$
\[
\tilde{\mathcal{E}}^{(i)}_{g,n}\big(\begin{smallmatrix} \mu \\ z \end{smallmatrix};\,w_{[n]}\big) \in o \big(z^{-r_{\mu}\mathfrak{d}_{\mathbf{r},\mathbf{s}}(i)}\big) \cdot \bigg(\frac{\dd z}{z}\bigg)^i,\qquad z \rightarrow 0\,.
\] 

\medskip

\section{Topological recursion on global spectral curves}
\label{globalsp}
\medskip

We are going to formalise what we have found in the context of global, possibly singular spectral curves. This will lead us to define the appropriate notion of abstract loop equations in \cref{SecLoop}, and to show in \cref{SecTR} that its unique solution is given by an appropriate topological recursion \`a la Chekhov--Eynard--Orantin, that is by computing residues on the normalisation of the singular curve.

\medskip

\subsection{Spectral curves}
\label{SecSP}

\medskip

\begin{definition}
\label{defspc} A \emph{spectral curve} is a triple $\mc{C} = (C,x,y)$, where $ C$ is a reduced analytic curve over $\mathbb{C}$ and $x,y$ are meromorphic functions on $C$, such that all fibers of $ x$ are finite and $ \omega_{0,1} \coloneqq y \, \dd x $ has no poles at ramifications of $ x$.
\end{definition}

Note that $C$ is not necessarily connected, compact, or irreducible. We will work with its normalisation $\pi\,:\,\tilde{C} \rightarrow C$, which is a smooth curve. We have meromorphic functions $\tilde{x} = x \circ \pi$ and $\tilde{y} = y \circ \pi$ defined on $\tilde{C}$. Let $\mathfrak{b} \subset \C $ be the set of points $ b$ that have a neighbourhood $ U_b $ such that the cardinality of the fibre of $x$ is constant on $ U_b \setminus \{ b\}$ and strictly smaller at $b$ itself. It is the collection of branchpoints of $\tilde{x}$ and images of locally reducible points away from $\infty$. We also denote $ \mf{a} = x^{-1} (\mf{b})$ and $\tilde{\mathfrak{a}} = \tilde{x}^{-1}(\mathfrak{b})$. We assume that $\mathfrak{b}$ is finite. As a result, $\tilde{\mathfrak{a}}$ and $\mathfrak{a}$ are also finite. Note that, since we assumed that all fibers of $x$ are finite, the same is true of $\tilde{x}$ and there cannot be an irreducible component of $\tilde{C}$ where $\tilde{x}$ is constant.

If $\alpha \in \mathfrak{a}$, we let $U_{\alpha} \subset C$ be a small neighborhood of $\alpha$ that is invariant under local Galois transformations and
\[
\tilde{U}_{\alpha}\coloneqq  \pi^{-1}(U_{\alpha}),\qquad \tilde{U}_{\alpha}' = \tilde{U}_{\alpha} \setminus \pi^{-1}(\alpha),\qquad V_{\alpha} = x(U_{\alpha}),\qquad V_{\alpha}' = V_{\alpha} \setminus \{x(\alpha)\}\,.
\]
Without loss of generality we can assume that $V_\alpha \subset \mathbb{C}$. If $z \in \tilde{U}_{\alpha}$, we define
\[
\mathfrak{f}_\alpha(z) = \tilde{x}^{-1}(\tilde{x}(z)) \cap \tilde{U}_{\alpha},\qquad \mathfrak{f}_{\alpha}'(z) = \mathfrak{f}_{\alpha}(z) \setminus \{z\},\qquad \tilde{\mathfrak{a}}_{\alpha} = \pi^{-1}(\alpha)\,.
\]
Note that $\tilde{\mathfrak{a}}_{\alpha}$ is in bijection with the set of branches in $\tilde{C}$ above $\alpha$, and we denote $d_{\alpha} \coloneqq  |\tilde{\mathfrak{a}}_{\alpha}|$. For each $\mu \in \tilde{\mathfrak{a}}_{\alpha}$, we introduce a small neighborhood $\tilde{C}_{\mu}$ of $\mu$ in $\tilde{C}$, such that $\pi(\tilde{C}_{\mu}) = U_{\alpha}$, as well as $\tilde{C}_{\mu}' = \tilde{C}_{\mu} \setminus \{\mu\}$. We have of course
\[
\tilde{U}_{\alpha} = \bigsqcup_{\mu \in \tilde{\mathfrak{a}}_{\alpha}} \tilde{C}_{\mu}\,.
\]
By taking a smaller neighborhood, we can always assume that the $(\tilde{C}_{\mu})_{\mu \in \tilde{\mathfrak{a}}}$ are pairwise disjoint. As anticipated in \cref{Ftwis}, if we want to insist that a point $z \in \tilde{U}_{\alpha}$ belongs to $\tilde{C}_{\mu}$, we will denote it $\big(\begin{smallmatrix} \mu \\ z \end{smallmatrix}\big)$. The fibers can be decomposed
\[
\mathfrak{f}_{\alpha}(z) = \bigsqcup_{\mu \in \tilde{\mathfrak{a}}_{\alpha}} \mathfrak{f}_{\mu}(z),\qquad \mathfrak{f}_{\mu}(z) \coloneqq  \mathfrak{f}_{\alpha}(z) \cap \tilde{C}_{\mu}\,.
\]
We denote $r_{\mu} = |\mathfrak{f}_{\mu}(z)|$ which is independent of $z \in \tilde{C}_{\mu}'$ and $r_{\alpha} = |\mathfrak{f}_{\alpha}(z)|$ which is independent of $z \in \tilde{U}'_{\alpha}$.  In particular
\[
r_{\alpha} = \sum_{\mu \in \tilde{\mathfrak{a}}_{\alpha}} r_{\mu}\,.
\]

If $\gamma$ is a small loop in $V_{\alpha}$ around $x(\alpha)$, it induces a Galois transformation in the cover $\tilde{x}_{|\tilde{U}_{\alpha}}$, that is for each $z \in \tilde{U}_{\alpha}'$ a permutation $\sigma_{\alpha}$ of $\mathfrak{f}_{\alpha}(z)$, which on $\tilde{\mathfrak{f}}_{\mu}(z)$ restricts to a cyclic transformation of order $r_{\mu}$. This integer represents the order of ramification at $\mu \in \tilde{\mathfrak{a}}_{\alpha}$ of $\tilde{x}_{|\tilde{C}_{\mu}}$.

\begin{remark}
If $|\tilde{\mathfrak{a}}_{\alpha}| = 1$, $C$ is irreducible locally at $\alpha$, hence smooth at $\alpha$. We can then use the same symbol to denote the point $\alpha \in C$ and the unique point above it in $\tilde{C}$. If $|\tilde{\mathfrak{a}}_{\alpha}| > 1$, $C$ is reducible locally at $\alpha$, hence singular at $\alpha$. If $|\tilde{\mathfrak{a}}_{\alpha}| = 2$, $\alpha$ is a node. For $\mu \in \tilde{\mathfrak{a}}$, we have $r_{\mu} = 1$ if and only if $\mu$ is not a ramification point of $\tilde{x}$. We say that the spectral curve is smooth if all ramification points in $C$ are smooth.
\end{remark}

For each $\alpha \in \mathfrak{a}$ and $\mu \in \tilde{\mathfrak{a}}_\alpha$, there exists a local coordinate $\zeta$ on $\tilde{C}_{\mu}$ such that
\[
\tilde{x}\begin{psmallmatrix} \mu \\ z \end{psmallmatrix} = x(\alpha) + \zeta(z)^{r_{\mu}}\,.
\]
As in \cref{Ftwis}, when working with local coordinates it should be clear from the context which $\tilde{C}_{\mu}$ is involved.  Specifying such coordinates requires the choice of a $r_{\mu}$th root of unity for $\tilde{x} - x(\alpha)$. We assume such a choice is fixed. We also choose a primitive $r_{\mu}$th root of unity, denoted $\theta_{\mu}$. If $z \in \tilde{C}_{\mu}'$, the set of coordinates of the points in $\mathfrak{f}_{\alpha}(z)$ is
\[ 
\big\{\vartheta_{\nu}^{j}\zeta^{r_{\mu}/r_{\nu}} \quad | \quad \nu \in \tilde{\mathfrak{a}}_{\alpha},\,\,\,j \in [r_{\nu} ]\big\}\,.
\]

Let us write locally at $\mu \in \mathfrak{a}$ the Laurent series expansion of the function $\tilde{y}$
\[
\tilde{y}\big(\begin{smallmatrix} \mu \\ z \end{smallmatrix}\big) \sim \sum_{k >0} \frac{1}{r_{\mu}}\,F_{0,1}\big[\begin{smallmatrix} \mu \\ -k \end{smallmatrix}\big]\,\zeta^{k-r_{\mu}}\,,
\]
and define
\[
s_{\mu} \coloneqq  \min\big\{k \in \mathbb{Z} \quad \big| \quad F_{0,1}\big[\begin{smallmatrix} \mu \\ -k \end{smallmatrix}\big] \neq 0 \big\} \in \mathbb{Z} \cup \{+\infty\}\,.
\]
In particular, $s_{\mu} = +\infty$ if $y$ vanishes identically in the connected component of $\mu$ in $\tilde{C}$. If $s_{\mu}$ is finite, we introduce
\[
t_{\mu} \coloneqq  -\frac{1}{r_{\mu}}\,F_{0,1}\big[\begin{smallmatrix} \mu \\ -s_{\mu} \end{smallmatrix}\big]\,.
\]

We equip $\tilde{\mathfrak{a}}_{\alpha}$ with a total order $\preccurlyeq$ satisfying
\[
\mu \preccurlyeq \nu \qquad \Longrightarrow\qquad  \frac{r_{\mu}}{s_{\mu}} \geq \frac{r_{\nu}}{s_{\nu}}
\]
and denote $\prec$ the corresponding strict order. (Such orders exist.) Note that the inequality still makes sense for $\mu$s such that $s_{\mu} = +\infty$. Then, in agreement with \cref{sec:W_gl_airy_structs_proof}, for $\lambda,\mu \in \tilde{\mathfrak{a}}_{\alpha}$ we let
\begin{equation*}
\begin{split}
[\mu] & \coloneqq  \big\{\nu \in \tilde{\mathfrak{a}}_{\alpha}\quad | \quad \nu \preccurlyeq \mu\big\}\,, \\ 
[\mu ) &\coloneqq  \big\{\nu \in \tilde{\mathfrak{a}}_{\alpha} \quad |\quad \nu \prec \mu \big\}\,,  \\
[\lambda,\mu] & := \big\{\nu \in \mathfrak{a}_{\alpha} \quad | \quad \lambda \preccurlyeq \nu \preccurlyeq \mu \big\}\,,
\end{split}
\end{equation*}
and likewise for the open segments $[\lambda,\mu)$, $(\lambda,\mu]$, etc. For instance $[\mu ) = [\min \tilde{\mathfrak{a}}_{\alpha},\mu)$. If $M \subseteq \tilde{\mathfrak{a}}_{\alpha}$, we let
\[
\mathbf{r}_{M} \coloneqq  \sum_{\mu \in M} r_{\mu}\,,\qquad\qquad \mathbf{s}_{M} \coloneqq  \sum_{\mu \in M} s_{\mu}\,.
\]
For $\mu \in \tilde{\mathfrak{a}}_{\alpha}$ we define
\[
\Delta_{\mu} \coloneqq  \mathbf{r}_{[\mu ]}s_{\mu} - \mathbf{s}_{[\mu ]}r_{\mu}\,.
\]

\begin{definition}
\label{defYA} For $\alpha \in \mathfrak{a}$, we define a function of $z \in \tilde{U}_{\alpha}$ by
\[
Y_{\alpha}(z) \coloneqq  \prod_{z' \in \mathfrak{f}'_{\alpha}(z)} (\tilde{y}(z') - \tilde{y}(z))\,.
\]
\end{definition}
In the next paragraph we will need to study the order of vanishing of these functions at $\tilde{\mathfrak{a}}$. This is given by the following lemma.

\begin{lemma}
\label{lemYA} If one of the following conditions is satisfied
\begin{itemize}
\item[(i)] there exist distinct $\mu,\nu \in \tilde{\mathfrak{a}}_{\alpha}$ such that $s_{\mu} = s_{\nu} = +\infty$; or
\item[(ii)] there exists at least one $\mu \in \tilde{\mathfrak{a}}_{\alpha}$ such that $s_{\mu} = +\infty$ and $r_{\mu} > 1$,
\end{itemize}
then $Y_{\alpha}(z)$ vanishes identically on $\tilde{C}_{\mu}$ for the $\mu$ involved in these conditions. Otherwise, for any $\mu \in \tilde{\mathfrak{a}}_{\alpha}$, we have $Y_{\alpha}(z) \in \mc{O}(\zeta^{\mathfrak{v}_{\mu}})$ when $z \in \tilde{C}_{\mu}$ approaches $\mu$, where
\begin{equation}
\label{alphamu} \mathfrak{v}_{\mu} = (s_{\mu} - r_{\mu})(r_{\alpha} - 1) - \Delta_{\mu}\,.
\end{equation}
If furthermore either
\begin{itemize}
\item[(iii)] there exist distinct $\mu,\nu \in \tilde{\mathfrak{a}}_{\alpha}$ such that $s_{\mu},s_{\nu}$ are finite,  $\tfrac{r_{\mu}}{s_{\mu}} = \tfrac{r_{\nu}}{s_{\nu}}$ and $t_{\mu}^{r_{\nu}} = t_{\nu}^{r_{\nu}}$; or
\item[(iv)] there exists $\mu \in \tilde{\mathfrak{a}}_{\alpha}$ such that $s_{\mu}$ is finite and ${\rm gcd}(r_{\mu},s_{\mu}) > 1$,
\end{itemize}
then $Y_{\alpha}(z) \in \mc{O}(\zeta^{\mathfrak{v}_{\mu} + 1})$. If none of the above conditions are satisfied, then there exists a non-zero scalar $\mathfrak{t}_{\alpha,\mu}$ such that $Y_{\alpha}(z)- \mathfrak{t}_{\alpha,\mu} \zeta^{\mathfrak{v}_{\mu}} \in \mc{O}(\zeta^{\mf{v}_\mu+1})$.
\end{lemma}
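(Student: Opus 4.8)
The plan is to compute the leading order of $Y_\alpha(z) = \prod_{z'\in\mathfrak{f}'_\alpha(z)}(\tilde y(z')-\tilde y(z))$ directly from the Laurent expansions of $\tilde y$ on each branch. Fix $\mu\in\tilde{\mathfrak{a}}_\alpha$ and a point $\binom{\mu}{z}$ with local coordinate $\zeta = \zeta(z)\to 0$. The fibre $\mathfrak{f}'_\alpha(z)$ decomposes as $\bigsqcup_{\nu\in\tilde{\mathfrak a}_\alpha}\mathfrak{f}_\nu(z)$ minus the point $z$ itself; the points of $\mathfrak{f}_\nu(z)$ have local coordinates $\vartheta_\nu^j\,\zeta^{r_\mu/r_\nu}$ for $j\in[r_\nu]$. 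So $Y_\alpha$ factorises as a product over $\nu$ of the contributions of each branch, with the $\nu=\mu$ factor missing the linear term coming from $z'=z$. The first step is therefore to write, for each $z'$ with coordinate $w$, $\tilde y(z') - \tilde y(z) = \sum_{k}\tfrac1{r_\nu}F_{0,1}\!\bigl[\begin{smallmatrix}\nu\\-k\end{smallmatrix}\bigr]\bigl(w^{k-r_\nu} - (\text{the }z\text{-expansion})\bigr)$, and to identify its valuation in $\zeta$.

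Next I would treat the three regimes separately. If $s_\nu = +\infty$ then $\tilde y \equiv 0$ on the whole connected component of $\nu$; the factors from such branches contribute $-\tilde y(z)$ each, which is $\mathcal O(\zeta^{s_\mu - r_\mu})$ if $s_\mu$ finite, but identically $0$ if the component of $\mu$ is the vanishing one (this already gives cases (i) and (ii): with two such branches, or one such branch with $r_\mu>1$ so that $\mathfrak{f}_\mu$ still contributes factors $\tilde y(z')-\tilde y(z)$ that all vanish, $Y_\alpha\equiv 0$ on $\tilde C_\mu$). In the generic regime all $s_\nu$ are finite on the components meeting $\alpha$; then the leading term of $\tilde y(z') - \tilde y(z)$ for $z'$ on branch $\nu\ne\mu$ (coordinate $w = \vartheta_\nu^j\zeta^{r_\mu/r_\nu}$) is $-t_\nu w^{s_\nu - r_\nu} = -t_\nu \vartheta_\nu^{j(s_\nu-r_\nu)}\zeta^{(r_\mu/r_\nu)(s_\nu - r_\nu)}$, whose $\zeta$-exponent is $\tfrac{r_\mu}{r_\nu}(s_\nu - r_\nu) = r_\mu s_\nu/r_\nu - r_\mu$; and for the $r_\mu - 1$ points $z'\ne z$ on branch $\mu$ itself the leading term of $\tilde y(z')-\tilde y(z)$ is $t_\mu(\zeta^{s_\mu} - (\vartheta_\mu^j\zeta)^{s_\mu})(1+\cdots)$, of $\zeta$-exponent $s_\mu - r_\mu$ (after accounting for the $\zeta^{-r_\mu}$ normalisation — here the algebra needs care but is routine). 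Summing exponents over all $r_\alpha - 1$ factors gives $(s_\mu - r_\mu)(r_\alpha-1) + \sum_{\nu\ne\mu}r_\mu\bigl(\tfrac{s_\nu}{r_\nu} - \tfrac{s_\mu}{r_\mu}\bigr)$; I would check this equals $(s_\mu-r_\mu)(r_\alpha-1) - \Delta_\mu$ with $\Delta_\mu = \mathbf r_{[\mu]}s_\mu - \mathbf s_{[\mu]}r_\mu$, noting that branches $\nu$ with $\tfrac{r_\nu}{s_\nu} > \tfrac{r_\mu}{s_\mu}$ contribute negative exponents and those with $\tfrac{r_\nu}{s_\nu} < \tfrac{r_\mu}{s_\mu}$ positive ones, which is where the total order $\preccurlyeq$ and the definition of $\mathfrak v_\mu$ come from — this bookkeeping, matching my exponent sum to the stated $\mathfrak v_\mu$, is the main place one can slip.

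The refinement to cases (iii) and (iv) is then a statement about when the leading coefficient $\mathfrak t_{\alpha,\mu}$ vanishes. This is precisely the obstruction-to-injectivity phenomenon already isolated in \cref{rem:t_cond} and \cref{CurveEmbeddingDeformed}: the coefficient of $\zeta^{\mathfrak v_\mu}$ in $Y_\alpha$ is (up to nonzero constants) a product of differences $\vartheta_\nu^{j}\,t_\nu^{1/?} - (\cdots)$ that degenerates exactly when two branches $\mu,\nu$ with $\tfrac{r_\mu}{s_\mu}=\tfrac{r_\nu}{s_\nu}$ have $t_\mu^{r_\nu}=t_\nu^{r_\nu}$ (so two distinct points of the fibre have the same leading $\tilde y$-value, forcing a factor $\tilde y(z')-\tilde y(z)$ of strictly higher order), or when $\gcd(r_\mu,s_\mu)>1$ (so the leading term of $\tilde y$ on branch $\mu$ is invariant under a nontrivial subgroup of the Galois group and the same collision occurs within a single branch). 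In either case one extra power of $\zeta$ is gained, giving $Y_\alpha(z)\in\mathcal O(\zeta^{\mathfrak v_\mu+1})$; and when none of (i)–(iv) hold, all $r_\alpha-1$ leading factors are nonzero and pairwise the collisions are avoided, so $\mathfrak t_{\alpha,\mu}\ne 0$. I would present the last two claims by writing out the leading coefficient as an explicit product over the fibre and reading off when a factor vanishes. The hard part is really the first refinement: organising the exponent sum so that the answer visibly equals $(s_\mu-r_\mu)(r_\alpha-1)-\Delta_\mu$, and being careful that in cases (iii)/(iv) the \emph{next} coefficient does not also vanish (which could push the order even higher) — here I would argue that a single generic collision produces exactly one extra vanishing, invoking genericity of the remaining data exactly as in the remark after \cref{prop:k_min}.
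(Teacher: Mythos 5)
The overall plan matches the paper's: factor $Y_\alpha$ branch by branch, compute the valuation of each factor, and identify when the leading coefficient vanishes. Cases (i)--(ii) and the coefficient-vanishing discussion for (iii)--(iv) are handled correctly. However, there is a genuine error in the exponent count, which is the heart of the lemma.

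You assert that the leading term of $\tilde y(z') - \tilde y(z)$ for $z'$ on branch $\nu\ne\mu$ is always $-t_\nu w^{s_\nu - r_\nu}$, i.e.\ that $\tilde y(z')$ dominates. This is only true when $\tfrac{r_\nu}{s_\nu} \geq \tfrac{r_\mu}{s_\mu}$, i.e.\ $\nu \preccurlyeq \mu$; compare the $\zeta$-valuations $r_\mu(s_\nu - r_\nu)/r_\nu$ of $\tilde y(z')$ and $s_\mu - r_\mu$ of $\tilde y(z)$, and note that the first is smaller iff $r_\mu s_\nu < r_\nu s_\mu$. For $\nu \succ \mu$ the dominant term is $-\tilde y(z)$, so the correct valuation of the branch-$\nu$ product is $\min(r_\mu s_\nu, r_\nu s_\mu) - r_\mu r_\nu$, not $r_\mu(s_\nu - r_\nu)$. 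This is exactly what makes the final exponent $(s_\mu-r_\mu)(r_\alpha-1) - \Delta_\mu$ with $\Delta_\mu = \mathbf{r}_{[\mu)}s_\mu - \mathbf{s}_{[\mu)}r_\mu$: branches $\nu \succ \mu$ contribute a \emph{zero} correction relative to the baseline $(s_\mu-r_\mu)r_\nu$, and only branches $\nu \prec \mu$ give a (nonpositive) correction $r_\mu s_\nu - r_\nu s_\mu$. Your parenthetical claim that branches with $\tfrac{r_\nu}{s_\nu} < \tfrac{r_\mu}{s_\mu}$ contribute positive exponents is thus wrong, and your written exponent sum $(s_\mu-r_\mu)(r_\alpha-1)+\sum_{\nu\ne\mu}r_\mu(s_\nu/r_\nu - s_\mu/r_\mu)$ (which is also off by a factor of $r_\nu$ per term) does not reduce to $\mathfrak v_\mu$ except when $\mu$ is $\preccurlyeq$-maximal or all ratios $r_\nu/s_\nu$ coincide.

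Two smaller points: you do not address the exceptional case where a single $\mu_-\in\tilde{\mathfrak a}_\alpha$ has $s_{\mu_-}=+\infty$ and $r_{\mu_-}=1$ (not covered by (i) or (ii)), for which the factor $\prod_{z'\in\mathfrak f_{\mu_-}(z)}(\tilde y(z')-\tilde y(z))$ is either absent or collapses to $-\tilde y(z)$ and one must check the exponent formula still holds. And the worry at the end about whether the coefficient of $\zeta^{\mathfrak v_\mu+1}$ could also vanish is not needed: the lemma only asserts $Y_\alpha\in\mathcal O(\zeta^{\mathfrak v_\mu+1})$ under (iii) or (iv), not a sharp valuation.
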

\begin{proof}
If $s_{\mu} = +\infty$, $\tilde{y}$ is identically zero for $z \in \tilde{C}_{\mu}$. Conditions \emph{(i)} and \emph{(ii)} both imply there is a $z' \in \mf{f}_{\alpha}'(z)$ such that $ \tilde{y}(z') \equiv 0$ as well, so one of the factors in $Y_{\alpha}(z)$ vanishes identically.\par
We now assume that \emph{(i)} and \emph{(ii)} are not satisfied. Let us add for the moment the assumption that all $s_{\mu}$ are finite. We compute
\begin{equation}
\label{prodmu}
\prod_{z' \in \mathfrak{f}_{\mu}(z)} (\tilde{y}(z') - \tilde{y}(z)) = \bigg(\prod_{j = 1}^{r_{\mu} - 1} (\theta_{\mu}^{s_{\mu}j} - 1)\bigg) (-t_{\mu})^{r_{\mu} - 1}\,\zeta^{(s_{\mu} - r_{\mu})(r_{\mu} - 1)} + \mathcal{O}(\zeta^{(s_{\mu} - r_{\mu})(r_{\mu} - 1) + 1})\,,
\end{equation}
and observe that the scalar prefactor in the first term is non-zero if and only if $r_{\mu}$ and $s_{\mu}$ are coprime --- in that case it is equal to $r_{\mu}t_{\mu}^{r_{\mu} - 1}$. For $\nu \in \tilde{\mathfrak{a}}_{\alpha}$ distinct from $\mu$, we have
\begin{equation}
\label{prodnu}
\begin{split} \prod_{z' \in \mathfrak{f}_{\nu}(z)} (\tilde{y}(z') - \tilde{y}(z)) & = \prod_{j = 0}^{r_{\nu} - 1} (-t_{\nu}\theta_{\nu}^{j}\zeta^{r_{\mu}(s_{\nu} - r_{\nu})/r_{\nu}} +  t_{\mu}\zeta^{s_{\mu} - r_{\mu}}  + \cdots) \\
& = t_{\mu,\nu} \zeta^{\min(r_{\mu}s_{\nu},r_{\nu}s_{\mu}) - r_{\mu}r_{\nu}} + \mathcal{O}(\zeta^{\min(r_{\mu}s_{\nu},r_{\nu}s_{\mu}) -r_{\mu}r_{\nu} + 1})\,,
\end{split}
\end{equation}
where $\cdots$ are higher order terms, and
\[
t_{\mu,\nu} = \left\{\begin{array}{lll} -t_{\nu}^{r_{\nu}} & & {\rm if}\,\,r_{\mu}s_{\nu} < r_{\nu}s_{\mu} \\ t_{\mu}^{r_{\nu}} & & {\rm if}\,\,r_{\mu}s_{\nu} > r_{\nu}s_{\mu} \\ t_{\mu}^{r_{\nu}} - t_{\nu}^{r_{\nu}} & & {\rm if}\,\,r_{\mu}s_{\mu} = r_{\nu}s_{\mu} \end{array}\right. \,.
\]
We have $t_{\mu,\nu} = 0$ if and only if $(\mu,\nu)$ obey the condition \emph{(iii)}. Multiplying \eqref{prodmu} with the product of \eqref{prodnu} over all $\nu \neq \mu$, we deduce that $Y_{\alpha}(z) = \mathfrak{t}_{\alpha,\mu} \zeta^{\mathfrak{v}_{\mu}} + \mathcal{O}(\zeta^{\mathfrak{v}_{\mu} + 1})$ and $\mathfrak{t}_{\alpha,\mu} = 0$ if and only if the conditions \emph{(iii)} and \emph{(iv)} are satisfied, with the exponent
\begin{equation}
\label{alphacm}
\begin{split}
\mathfrak{v}_{\mu} & = (s_{\mu} - r_{\mu})(r_{\mu} - 1) + \sum_{\nu \neq \mu} \big(\min(r_{\mu}s_{\nu},r_{\nu}s_{\mu}) - r_{\mu}r_{\nu}\big) \\
& = r_{\mu} - s_{\mu} + \sum_{\nu \in \tilde{\mathfrak{a}}_{\alpha}} \big(\min(r_{\mu}s_{\nu},r_{\nu}s_{\mu}) - r_{\mu}r_{\nu}\big) \\
& = -r_{\mu}(r_{\alpha} - 1)  - s_{\mu} + \sum_{\nu \prec \mu} r_{\mu}s_{\nu} + \sum_{\nu \succcurlyeq \mu} r_{\nu}s_{\mu} \\
& = (s_{\mu}-r_{\mu})(r_{\alpha} - 1) - \Delta_{\mu}\,,
\end{split}
\end{equation}
as claimed. This concludes the proof in absence of an infinite $s$. 

Now let us assume there exists a unique $\mu_- \in \tilde{\mathfrak{a}}_{\alpha}$ such that $s_{\mu_-} = +\infty$. As we assume that \emph{(i)} and \emph{(ii)} are not satisfied, we must have $r_{\mu_-} = 1$. If $\mu \neq \mu_-$ and we take $z \in \tilde{C}_{\mu}$, we only need to pay attention to the factor \eqref{prodnu} for $\nu = \mu_-$, and in fact \cref{prodnu} remains valid, hence $Y_{\alpha}(z) = \mathfrak{t}_{\alpha,\mu}\zeta^{\mathfrak{v}_{\mu}} + \mathcal{O}(\zeta^{\mathfrak{v}_{\mu} + 1})$ with the same expression for $\mathfrak{v}_{\mu}$ and the same discussion for the (non-)vanishing of $\mathfrak{t}_{\alpha,\mu}$. Notice that by definition of the order we must have $\mu_- = \max(\tilde{\mathfrak{a}}_{\alpha})$ so $\mu_-$ does not appear in $\Delta_{\mu}$. If $z \in \tilde{C}_{\mu_-}$, the factor \eqref{prodmu} is absent in $Y_{\alpha}(z)$ and the other factors $\nu \neq \mu_-$ are as in \eqref{prodnu}. But, as $\mu_-$ only appears in $\mathfrak{v}_{\mu}$ via the first term of the first line of \eqref{alphacm}, which can be consistently set to $0$ since $r_{\mu_-} = 1$, the formula for $\mathfrak{v}_{\mu}$ remains valid.
\end{proof}

\begin{remark} Notice that if $(iv)$ does not hold, i.e. any finite $s_{\mu}$ is coprime to $r_{\mu}$, the condition $\frac{r_{\mu}}{s_{\mu}} = \frac{r_{\nu}}{s_{\nu}}$ is equivalent to $(r_{\mu},s_{\mu}) = (r_{\nu},s_{\nu})$. In that case, condition $(iii)$ can be replaced with a more symmetric one
\begin{itemize}
\item[$(iii)$'] there exist distinct $\mu,\nu \in \tilde{\mathfrak{a}}_{\alpha}$ such that $s_{\mu},s_{\nu}$ are finite, $(r_{\mu},s_{\mu}) = (r_{\nu},s_{\nu})$ and $t_{\mu}^{r_{\mu}} = t_{\nu}^{r_\nu}$.
\end{itemize}
\end{remark}

\medskip

\subsection{Correlators, master loop equations and topological recursion}
\label{SecLoop}
\label{Sec:42} 

\medskip

Let $\mc{C} = (C,x,y)$ be a spectral curve with normalisation $\pi\,:\,\tilde{C} \rightarrow C$.

\begin{definition}
A \emph{fundamental bidifferential of the second kind} on $\mc{C}$ is an element
\begin{equation*}
B \in H^0\big( \tilde{C} \times \tilde{C} ; K_{\tilde{C}}^{\boxtimes 2} (2\Delta )\big)^{\mf{S}_2}\,,
\end{equation*}
with biresidue $1$ on the diagonal $\Delta \subset \tilde{C} \times \tilde{C}$, where $K_{\tilde{C}}$ is the sheaf of differentials on $\tilde{C}$.\par
A \emph{crosscap differential} on $\mc{C}$ is the data of a (possibly empty) divisor $D$ on $\tilde{C} \setminus \tilde{\mathfrak{a}}$ and
\[
q \in H^0\big(\tilde{C};K_{\tilde{C}}(D + \tilde{\mathfrak{a}})\big)\,,
\] 
such that
\[
\forall \alpha \in \mathfrak{a},\qquad \sum_{\mu \in \tilde{\mathfrak{a}}_{\alpha}} \Res_{z = \mu} q(z) = 0\,.
\]
\end{definition}

\begin{definition}
\label{DEFCOR} A \emph{family of correlators} is a family of multidifferentials $\boldsymbol{\omega} = (\omega_{g,n})_{g \in \frac{1}{2}\N, n \geq 1}$ on $\tilde{C}$ such that $\omega_{0,1} = \tilde{y}\,\dd \tilde{x}$, $\omega_{0,2}$ is a fundamental bidifferential of the second kind on $\mc{C}$, $\omega_{\frac{1}{2},1}$ is a crosscap differential, and for $2g - 2 + n > 0$
\begin{equation*}
\omega_{g,n} \in H^0 \big(\tilde{C}^n ; (K_{\tilde{C}}(*\tilde{\mathfrak{a}}))^{\boxtimes n}\big)^{\mathfrak{S}_{n}}\,.
\end{equation*} 
It satisfies the \emph{projection property} if for $2g -2 + n > 0$,
\begin{equation}  
\label{omnorm}\omega_{g,n}(z_1,z_{[2,n]}) = \sum_{\mu \in \tilde{\mathfrak{a}}} \Res_{z = \mu} \bigg( \int_{\mu}^z \omega_{0,2}(\cdot, z_1) \bigg) \omega_{g,n}(z,z_{[2,n]})\,.
\end{equation} 
\end{definition}
Note that \eqref{omnorm} is automatically satisfied for $(g,n) = (0,2)$. Differentials satisfying the projection property cannot have residues, and if they are holomorphic, they must vanish. We can always assume by taking smaller neighborhoods that the divisor $D$ of the crosscap differential is supported outside $\sqcup_{\alpha \in \mathfrak{a}} \tilde{U}_{\alpha}$.

\begin{definition}\label{DisconnConnCor}
Let $\boldsymbol{\omega}$ be a family of correlators, and $i \geq 1$, $g \in \tfrac{1}{2}\N$ and $n \geq 0$. The genus $g$, $i$-disconnected, $n$-connected correlator is defined by 
\begin{equation*}
\mc{W}_{g,i,n} (z_{[i]}; w_{[n]}) \coloneqq  \sum_{\substack{\mathbf{L} \vdash [i]\\ \sqcup_{L \in \mathbf{L}} N_{L} = [n]\\ i + \sum_{L} (g_L - 1) = g}} \prod_{L \in \mathbf{L}} \omega_{g_{L},|L| + |N_L|} (z_{L} ,w_{N_L})\,.
\end{equation*}
We define $\mc{W}_{g,i,n}'$ by the same formula, but omitting any summand containing some $\omega_{0,1}$.
\end{definition}

If $i \in [r_{\alpha}]$, we let $\varkappa_i\,:\,\tilde{U}_{\alpha}^{(i)} \longrightarrow V_{\alpha}$ be the smooth curve obtained by taking the fibered product of $i$ copies of $\tilde{x}\,:\,\tilde{U}_{\alpha}' \rightarrow \mathbb{C}$, deleting the big diagonal $\Delta{(i)}$, and quotienting by the (free) action of $\mathfrak{S}_{i}$. Points in $\tilde{U}_{\alpha}^{(i)}$ are exactly subsets of cardinality $i$ of $\mf{f}_{\alpha}(z)$ for some $z \in \tilde{U}_{\alpha}'$. We have natural holomorphic maps
\[
(\tilde{U}_{\alpha}')^{i} \setminus \Delta{(i)} \mathop{\longrightarrow}^{\mathsf{q}_i} \tilde{U}_{\alpha}^{(i)} \mathop{\longrightarrow}^{\mathsf{x}_i} V_{\alpha}'\,,
\]
where $\mathsf{q}_i$ forgets the order of elements of an $i$-tuple and $\mathsf{x}_i(\{z_1,\ldots,z_n\}) = \tilde{x}(z_1) = \cdots = \tilde{x}(z_i)$. Let $\mathsf{I}_i\,:\,\tilde{U}_{\alpha}^{i} \setminus \Delta^{(i)} \rightarrow \tilde{C}^i$ be the natural inclusion. We introduce
\begin{equation*}
\begin{split}
\mc{E}_{\alpha;g,n}^{(i)} & \coloneqq  \frac{(\mathsf{x}_i\mathsf{q}_i)_*}{i!} \mathsf{I}_i^*(\mc{W}_{g,i,n}) \in H^0\big(V_{\alpha}' \times \tilde{C}^n;K_{V_{\alpha}'}^{\otimes i} \boxtimes K_{\tilde{C}}(*\tilde{\mathfrak{a}})^{\boxtimes n}\big)^{\mathfrak{S}_{n}} \\
\tilde{\mc{E}}_{\alpha;g,n}^{(i)} & \coloneqq  \tilde{x}^*\mc{E}_{\alpha;g,n}^{(i)}\,,
\end{split}
\end{equation*}
where all operations do not concern the last $n$ variables. More concretely,
\begin{equation}
\label{Qi} 
\begin{split}
\mc{E}_{\alpha;g,n}^{(i)}(x_0;z_{[n]}) & = \sum_{\substack{Z \subseteq \tilde{x}^{-1}(x_0) \cap \tilde{U}_{\alpha} \\ |Z|=i}} \mc{W}_{g,i,n} (Z;z_{[n]})\,, \\
\tilde{\mc{E}}_{\alpha;g,n}^{(i)}(z_0;z_{[n]}) & = \sum_{\substack{Z \subseteq \mathfrak{f}_{\alpha}(z_0) \\ |Z| = i}} \mc{W}_{g,i,n}(Z;z_{[n]})\,.
\end{split} 
\end{equation}
The symmetry factor $i!$ disappeared since $\mc{W}_{g,i,n}$ is symmetric in its $i$ first variables. Note that reading \eqref{Qi} in the local coordinate $\zeta_0$ of $z_0 \in \tilde{U}_{\mu}'$ each term may be multivalued --- i.e. fractional powers of $\zeta(z_0)$ could appear --- however the sum is single-valued as it is the pullback along $\tilde{x}$ of a $1$-form on $V_{\alpha}'$.

\begin{definition}
\label{MDef} We say that a family of correlators satisfies the \emph{master loop equations} if for any $g \in \tfrac{1}{2}\N$ and $n \geq 0$ such that $2g - 2 + (n + 1) > 0$, for any $\alpha \in \mathfrak{a}$ and $i \in [r_{\alpha}]$, any $\mu \in \tilde{\mathfrak{a}}_{\alpha}$, when $z_0 \in \tilde{C}_{\mu}'$ approaches $\mu$, we have
\[
\sum_{i = 1}^{r_{\alpha}} \tilde{\mathcal{E}}^{(i)}_{\alpha;g,n}(z_0;z_{[n]}) \big(-\omega_{0,1}(z_0)\big)^{r_{\alpha} - i} = \mc{O}\big(\zeta^{-1 + \mathfrak{v}_{\mu} + (r_{\mu} - 1)(r_{\alpha} - 1)}(\dd \zeta)^{r_{\alpha}}\big)\,.
\]
\end{definition}

The relevance of this notion comes from the fact that the master loop equation can be solved by the topological recursion.

\begin{proposition}\label{MLEtoTR}
Assume that none of the conditions (i), (ii), (iii), and (iv) appearing in \cref{lemYA} are satisfied. Then, if $\boldsymbol{\omega}$ is a family of correlators satisfying the master loop equation (\cref{MDef}) and the projection property (\cref{DEFCOR}), we must have for any $g \in \frac{1}{2}\N$ and $n \geq 0$ such that $2g - 2 + (n + 1) > 0$,
\begin{equation}
\label{TRsum} \omega_{g,n + 1}(z_0,z_{[n]}) = \sum_{\alpha \in \mathfrak{a}} \sum_{\mu \in \tilde{\mathfrak{a}}_{\alpha}} \Res_{z = \mu}\bigg( \sum_{ Z \subseteq \mathfrak{f}'_{\alpha}(z)} K_{\mu}^{(1+|Z|)}(z_0;z,Z) \mc{W}'_{g,1 + |Z|,n}(z,Z;z_{[n]}),\bigg)\,,
\end{equation}
where for $m \geq 2$ we have introduced the $m$th \emph{recursion kernel} for $ |Z| = i-1$
\begin{equation}\label{TRkernel}
K_{\mu}^{(i)} (z_0;z,Z) \coloneqq  - \frac{\int_{\mu}^z \omega_{0,2}(\cdot, z_0)}{\prod_{z' \in Z} \big((\tilde{y}(z') - \tilde{y}(z))\dd \tilde{x}(z)\big)}\,.
\end{equation}
\end{proposition}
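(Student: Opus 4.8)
The plan is to follow the standard Chekhov--Eynard--Orantin argument adapted to the singular setting, namely: start from the projection property, insert it into $\omega_{g,n+1}$, and then replace the residue at each $\mu\in\tilde{\mathfrak{a}}_\alpha$ by a residue of a symmetrised object over the whole fibre $\mathfrak{f}_\alpha(z)$, using the master loop equation to control the pole structure. Concretely, first I would fix $g,n$ with $2g-2+(n+1)>0$ and write, by the projection property (\cref{DEFCOR}, eq.~\eqref{omnorm}),
\[
\omega_{g,n+1}(z_0,z_{[n]}) = \sum_{\alpha\in\mathfrak{a}}\sum_{\mu\in\tilde{\mathfrak{a}}_\alpha}\Res_{z=\mu}\Bigl(\int_\mu^z\omega_{0,2}(\cdot,z_0)\Bigr)\,\omega_{g,n+1}(z,z_{[n]})\,.
\]
The first key step is to recognise $\omega_{g,n+1}(z,z_{[n]})$ appearing here as (part of) $\mathcal{W}'_{g,1,n}(z;z_{[n]})$, and then to observe that, since the integrand is a local object near $\mu$, we may freely symmetrise over the points $z'\in\mathfrak{f}_\alpha(z)$: the sum $\sum_{z'\in\mathfrak{f}_\alpha(z)}\bigl(\int_\mu^{z'}\omega_{0,2}(\cdot,z_0)\bigr)\omega_{g,n+1}(z',z_{[n]})$ has the same residue at $\mu$ as the original term, because the extra summands $z'\neq z$ are holomorphic there after one accounts for the fact that $\int_\mu^{z'}\omega_{0,2}(\cdot,z_0)$ is regular for $z'$ in a different branch (and $\omega_{g,n+1}$ has poles only along $\tilde{\mathfrak{a}}$, whose contributions cancel by the residue-free consequence of the projection property). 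This is the reduction that turns the plain residue of $\omega_{g,n+1}$ into a residue of a Galois-symmetric quantity.

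The heart of the argument is then the algebraic manipulation producing the recursion kernel \eqref{TRkernel}. Having symmetrised, one uses the master loop equation (\cref{MDef}): the polynomial in $-\omega_{0,1}(z_0)$ with coefficients $\tilde{\mathcal{E}}^{(i)}_{\alpha;g,n}$ is $\mathcal{O}(\zeta^{-1+\mathfrak{v}_\mu+(r_\mu-1)(r_\alpha-1)}(\dd\zeta)^{r_\alpha})$, while by \cref{lemYA} (and the hypothesis that none of (i)--(iv) holds) $Y_\alpha(z)=\prod_{z'\in\mathfrak{f}'_\alpha(z)}(\tilde y(z')-\tilde y(z))$ has a zero of order exactly $\mathfrak{v}_\mu$ with non-zero leading coefficient $\mathfrak{t}_{\alpha,\mu}$. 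The plan is to isolate, inside $\sum_i\tilde{\mathcal{E}}^{(i)}_{\alpha;g,n}(z_0;z_{[n]})(-\omega_{0,1}(z_0))^{r_\alpha-i}$, the single term containing $\omega_{g,n+1}$ — this is the term $i=1$ with the connected piece $\omega_{g,1+n}(z;z_{[n]})$ times $(-\omega_{0,1}(z_0))^{r_\alpha-1}$ — and to solve for it. One divides the master loop equation by $\prod_{z'\in\mathfrak{f}'_\alpha(z)}\bigl((\tilde y(z')-\tilde y(z))\dd\tilde x(z)\bigr)$, multiplies by $\int_\mu^z\omega_{0,2}(\cdot,z_0)$, and takes $\Res_{z=\mu}$: the left-over ``error'' term coming from the $\mathcal{O}(\cdots)$ bound on the right-hand side of \cref{MDef} contributes nothing to the residue precisely because, after dividing by $Y_\alpha(z)$ (order $\mathfrak{v}_\mu$) and $(\dd\tilde x)^{r_\alpha-1}$ (which in the coordinate $\zeta$ contributes $\zeta^{(r_\mu-1)(r_\alpha-1)}$ worth of extra vanishing, or rather the reciprocal), the $\zeta^{-1+\mathfrak{v}_\mu+(r_\mu-1)(r_\alpha-1)}$ estimate becomes a bona fide holomorphic $1$-form in $z$ near $\mu$, whose product with the regular kernel has vanishing residue. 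What survives is exactly \eqref{TRsum}, with the $Z$-sum over subsets of $\mathfrak{f}'_\alpha(z)$ repackaging the lower $\tilde{\mathcal{E}}^{(i)}$ (i.e.\ the $i\geq 2$ terms) into $\mathcal{W}'_{g,|Z|,n}(z,Z;z_{[n]})$ and the kernel $K_\mu^{(|Z|+1)}$.

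The main obstacle I expect is the bookkeeping in this last step: showing carefully that the combinatorial repackaging of $\sum_{i=2}^{r_\alpha}\tilde{\mathcal{E}}^{(i)}_{\alpha;g,n}(z_0;z_{[n]})(-\omega_{0,1}(z_0))^{r_\alpha-i}$, after division by $\prod_{z'\in\mathfrak{f}'_\alpha(z)}(\tilde y(z')-\tilde y(z))\dd\tilde x(z)$, matches precisely $\sum_{Z\subseteq\mathfrak{f}'_\alpha(z)}K_\mu^{(|Z|+1)}(z_0;z,Z)\mathcal{W}'_{g,|Z|,n}(z,Z;z_{[n]})$ — this requires identifying the elementary-symmetric-polynomial structure in $\omega_{0,1}(z_0)=\tilde y(z_0)\dd\tilde x(z_0)$ with the Lagrange-interpolation / Newton-identity manipulation that converts ``a polynomial in $-\omega_{0,1}$ with $\mathcal{E}^{(i)}$ coefficients'' into ``a sum over subsets $Z$ weighted by $\prod_{z'\in Z}(\tilde y(z')-\tilde y(z))^{-1}$''. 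The delicate point within this is to check the vanishing of the residue of the error term uniformly over all $\mu\in\tilde{\mathfrak{a}}_\alpha$, using that $\mathfrak{v}_\mu$ is given by \eqref{alphamu} and that $\dd\tilde x$ vanishes to order $r_\mu-1$ at $\mu$; the exponent $-1+\mathfrak{v}_\mu+(r_\mu-1)(r_\alpha-1)$ in \cref{MDef} is tuned exactly so this works, so the argument is really a verification that the estimate is sharp enough, not an additional idea. Finally one notes that the $Z$-sum naturally produces the cases $|Z|\geq 1$, and the term $|Z|=0$ does not arise since $\mathcal{W}'$ excludes $\omega_{0,1}$; the resulting formula is the claimed \eqref{TRsum}.
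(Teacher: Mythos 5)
Your plan captures the second half of the argument correctly: the division of the master loop equation by $Y_\alpha(z)\bigl(\dd\tilde{x}(z)\bigr)^{r_\alpha-1}$, the multiplication by the kernel numerator, the order-of-vanishing bookkeeping showing the bound $\mc{O}\bigl(\zeta^{-1+\mathfrak{v}_\mu+(r_\mu-1)(r_\alpha-1)}(\dd\zeta)^{r_\alpha}\bigr)$ becomes $\mc{O}(\dd\zeta)$ after these operations and hence has vanishing residue, and the observation that a combinatorial identity repackages the lower $\tilde{\mc{E}}^{(i)}$-terms into the $Z$-sum with kernels $K_\mu^{(|Z|+1)}$. All of this matches the paper.

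However, your ``first key step'' — the symmetrisation claim — is false. You assert that
\[
\sum_{z'\in\mathfrak{f}_\alpha(z)}\Bigl(\int_\mu^{z'}\omega_{0,2}(\cdot,z_0)\Bigr)\,\omega_{g,n+1}(z',z_{[n]})
\]
has the same residue at $z=\mu$ as the single term $z'=z$, because the other summands are holomorphic. They are not. Take $z'=\vartheta_\mu^j z$ a same-branch Galois conjugate, $z'\in\tilde{C}_\mu$: as a $1$-form in $z$, the summand is the pullback under $\zeta\mapsto\vartheta_\mu^j\zeta$ of the $z'=z$ summand, and residues are invariant under such a local reparametrisation (if $\phi(\zeta')\dd\zeta'$ pulls back to $\phi(\vartheta^j\zeta)\,\vartheta^j\dd\zeta$, the $\zeta^{-1}$-coefficient is unchanged). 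So the same-branch conjugates each contribute the \emph{same} residue as the original term, and the $\tilde{C}_\mu$-part of your sum already overcounts by a factor $r_\mu$. Appealing to ``the residue-free consequence of the projection property'' does not help: that property says $\omega_{g,n+1}$ itself has no residues, but the product of $\omega_{g,n+1}$ with a kernel that only vanishes to first order can certainly have residues (e.g.\ a $\zeta^{-2}\dd\zeta$ leading singularity of $\omega_{g,n+1}$ times a $\zeta$-kernel gives a nonzero residue). The parenthetical about cross-branch conjugates is also not decisive, since as $z\to\mu$ the conjugate $z'$ also approaches its own ramification point $\mu'$, where $\omega_{g,n+1}$ is singular.

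The paper does not symmetrise at all. It uses the algebraic identity
\[
\int_\mu^z\omega_{0,2}(\cdot,z_0) \;=\; -\,K_\mu^{(r_\alpha)}(z_0;\mathfrak{f}_\alpha(z))\,Y_\alpha(z)\,\bigl(\dd\tilde{x}(z)\bigr)^{r_\alpha-1}
\]
inside the projection property, and then the combinatorial identity
\[
\sum_{\{z\}\subseteq Z\subseteq\mathfrak{f}_\alpha(z)}\mc{W}'_{g,|Z|,n}(Z;z_{[n]})\prod_{z'\in\mathfrak{f}_\alpha(z)\setminus Z}\bigl((\tilde{y}(z')-\tilde{y}(z))\dd\tilde{x}(z)\bigr) \;=\; \sum_{i=1}^{r_\alpha}\tilde{\mc{E}}^{(i)}_{\alpha;g,n}(z;z_{[n]})\bigl(-\omega_{0,1}(z)\bigr)^{r_\alpha-i}
\]
to isolate the $Z=\{z\}$ term, which is exactly $\omega_{g,n+1}(z,z_{[n]})\,Y_\alpha(z)(\dd\tilde{x})^{r_\alpha-1}$. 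Substituting and using the master loop bound to drop the residue of the $\tilde{\mc{E}}$-part, the remaining $Z$-sum collapses to \eqref{TRsum} via $K_\mu^{(r_\alpha)}(z_0;\mathfrak{f}_\alpha(z))\prod_{z'\in\mathfrak{f}_\alpha(z)\setminus Z}(\tilde{y}(z')-\tilde{y}(z))\dd\tilde{x}(z)=K_\mu^{(|Z|)}(z_0;Z)$. Once you replace the erroneous symmetrisation by this direct ``multiply and divide by $Y_\alpha(\dd\tilde{x})^{r_\alpha-1}$'' step, the remainder of your plan coincides with the paper's proof.
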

\begin{proof}
The proof is similar to \cite[Theorem~7.6.5]{Kra19}, the only difference being the order of the pole in the master loop equation. For completeness, we include the argument here. By definition, $\mc{W}_{g,1,n}' = \omega_{g,n+1}$. By the projection property and \cref{defYA}
\begin{equation}
\label{trpre}
\begin{split}
& \quad \omega_{g,n+1}(z_0,z_{[n]})  \\
&= \sum_{\alpha \in \mathfrak{a}} \sum_{\mu \in \tilde{\mathfrak{a}}_{\alpha}} \Res_{z = \mu}  \bigg( \int_{\mu}^{z} \omega_{0,2}(\cdot,z_0) \bigg) \mc{W}'_{g,1,n}(z;z_{[n]}) \\
&= - \sum_{\alpha \in \mathfrak{a}} \sum_{\mu \in \tilde{\mathfrak{a}}_{\alpha}} \Res_{z = \mu} K_{\mu}^{(r_{\alpha})}(z_0;\mathfrak{f}_{\alpha}(z))\mc{W}'_{g,1,n}(z_0;z_{[n]}) \cdot Y_{\alpha}(z)\cdot \big(\dd \tilde{x}(z)\big)^{(r_{\alpha} - 1)}\,,
\end{split}
\end{equation}
where we noticed that
\begin{equation*} K_{\mu}^{(r_{\alpha})}(z_0;\mathfrak{f}_{\alpha}(z)) = -\frac{\int_{\mu}^{z} \omega_{0,2}(\cdot,z_0)}{Y_{\alpha}(z)\big(\dd \tilde{x}(z)\big)^{(r_{\alpha} - 1)}}
\end{equation*}
always considering $z \in \mathfrak{f}_{\alpha}(z)$ as the first element of the set. Let $\alpha \in \mathfrak{a}$, and use the combinatorial identity \cite[Lemma 7.6.4]{Kra19}, which states that
\begin{equation*}
\sum_{\{z\} \subseteq Z \subseteq \mathfrak{f}_{\alpha}(z)} \!\! \mc{W}'_{g,|Z|,n}(Z;z_{[n]}) \!\!\! \prod_{z' \in \mathfrak{f}_{\alpha}(z) \setminus Z} \! \big((\tilde{y}(z') - \tilde{y}(z))\dd \tilde{x}(z)\big) = \sum_{i = 1}^{r_{\alpha}} \tilde{\mathcal{E}}_{\alpha;g,n}^{(i)}(z;z_{[n]}) \big(-\omega_{0,1}(z)\big)^{r_{\alpha} - i}\,.
\end{equation*}
Isolating the term $Z = \{z\}$ and substituting in \eqref{trpre}, we obtain
\begin{equation}
\label{OMQ}
\begin{split}
\omega_{g,n+1}(z_0,z_{[n]}) &= \sum_{\alpha \in \mathfrak{a}} \sum_{\mu \in \tilde{\mathfrak{a}}_{\alpha}} \Res_{z = \mu} K_{\mu}^{(r_{\alpha})}(z_0;\mathfrak{f}_{\alpha}(z)) \bigg( - \sum_{i=1}^{r_{\alpha}} \tilde{\mc{E}}_{\alpha;g,n}^{(i)}(z;z_{[n]}) \big(-\omega_{0,1} (z) \big)^{r_{\alpha}-i} \\
& \qquad\qquad\qquad\qquad + \sum_{\{z\} \subset Z \subseteq \mathfrak{f}_{\alpha}(z)} \mc{W}'_{g,|Z|,n}(Z;z_{[n]}) \prod_{z' \in \mathfrak{f}_{\alpha}(z) \setminus Z} \big((\tilde{y}(z') - \tilde{y}(z))\dd\tilde{x}(z)\big)\bigg) \\
 &= \sum_{\alpha \in \mathfrak{a}} \sum_{\mu \in \tilde{\mathfrak{a}}_{\alpha}} \Res_{z = \mu} \bigg( \sum_{\{z \} \subset Z \subseteq \mathfrak{f}_{\alpha}(z)} K_{\mu}^{(|Z|)}(z_0;Z) \mc{W}'_{g,|Z|,n}(Z;z_{[n]})\bigg)\,. 
\end{split} 
\end{equation} 
By \cref{lemYA} and the assumption, we know that for $z \in \tilde{C}_{\mu}'$ approaching $\mu$
\[
Y_{\alpha}(z)(\dd \tilde{x}(z))^{r_{\alpha} - 1} \sim \mathfrak{t}_{\alpha,\mu} \zeta^{\mathfrak{v}_{\mu} + (r_{\mu} - 1)(r_{\alpha} - 1)}\,(\dd \zeta)^{r_{\alpha} - 1}
\]
for some non-zero scalar $\mathfrak{t}_{\alpha,\mu}$. Since the numerator of the recursion kernel vanishes at order $1$ at $z = \mu$, the master loop equation implies that the first term inside the bracket of \eqref{OMQ} is $\mc{O}(\dd \zeta)$ hence does not contribute to the residue. Besides, the contribution of the second sum can be simplified by observing that
\[
K_{\mu}^{(|Z|)}(z_0;Z) = K_{\mu}^{(r_{\alpha})}(z_0;\mathfrak{f}_{\alpha}(z)) \prod_{z' \in \mathfrak{f}_{\alpha}(z) \setminus Z} \big((\tilde{y}(z') - \tilde{y}(z))\dd \tilde{x}(z)\big)\,.
\]
Redefining $Z$ by removing $z$ from it, we obtain the desired formulas.
\end{proof}
\begin{remark} From the proof, we see that if one of the conditions \emph{(i)} and \emph{(ii)} appearing in \cref{lemYA} is satisfied, the recursion kernel is ill-defined as the denominator vanishes identically in the neighborhood of some $\mu$. Besides, if one of the conditions \emph{(iii)} or \emph{(iv)} is satisfied, the same thing could occur or at least the order of vanishing of the denominator is finite but higher than the one specified by \cref{MDef}. In the latter case, one can still ask for the analogue of \cref{MLEtoTR} simply by modifying the master loop equation to require that the first sum in \eqref{OMQ} is $\mc{O}(\dd \zeta)$.
\end{remark}

We note that the right-hand side of \eqref{TRsum} involves only $\omega_{g',n'}$ with $2g' - 2 + n' < 2g - 2 + (n + 1)$. For a fixed $\boldsymbol{\omega}^{{\rm un}} = (\omega_{0,1},\omega_{\frac{1}{2},1},\omega_{0,2})$, there exists at most one way to complete it into a system of correlators satisfying the master loop equation and the projection property: the $\omega_{g,n}$ are then determined by \eqref{TRsum} inductively on $2g - 2 + n > 0$. However, such a system of correlators may actually fail to exist at all. Indeed, \eqref{TRsum} gives a non-symmetric role to $z_0$ compared to $z_1,\ldots,z_n$, therefore the $\omega_{g,n + 1}(z_0,\ldots,z_n)$ that \eqref{TRsum} compute may fail to be symmetric, and so would not respect \cref{DEFCOR}.

\medskip

\subsection{Abstract loop equations}
\label{sec:absloop}
\medskip

We now address the aforementioned problem of existence of the solution to the master loop equations, thanks to the results obtained in \cref{SAIRPS}. We first introduce a seemingly different notion of ``abstract loop equations'' valid in the setting of \cref{SecSP}. It will turn out that they give the right generalisation of ``abstract loop equations'' proposed in \cite{BoSh17} for smooth spectral curves. We will show that, under admissibility conditions on the spectral curves that pertain to our constructions of Airy structures in \cref{S1}, the abstract loop equations have a solution satisfying the projection properties, and imply the master loop equation. Therefore, this solution must be given by the topological recursion  formula \eqref{TRsum}, and this proves a posteriori that this definition is well-posed, i.e. it produces inductively only multidifferentials that are symmetric under permutations of all their variables. A direct proof of symmetry by residue computations on $\tilde{C}$ seems rather elusive.

Let $\mc{C}$ be a spectral curve as in \cref{SecSP}. We introduce integers $\mathfrak{d}_{\alpha}(i)$ for each $\alpha \in \mathfrak{a}$ and $i \in [r_{\alpha}]$ matching \cref{prop:k_min}. If $i \in [r_{\alpha}]$, we first decompose it into $i = \mathbf{r}_{[\lambda)} + i'$ for the unique $\lambda \in \tilde{\mathfrak{a}}_{\alpha}$ such that $\mathbf{r}_{[\lambda)} < i \leq \mathbf{r}_{[\lambda]}$. Then, $i' \in [r_{\lambda}]$ and we have
\begin{equation}
\label{dadef} \mathfrak{d}_{\alpha}(i) \coloneqq - \Big\lfloor \frac{s_{\lambda}(i' - 1)}{r_{\lambda}} \Big\rfloor - \mathbf{s}_{[\lambda )} + \delta_{i',1}\,.
\end{equation}

\begin{definition}
\label{def:absloopgen}We say that a family of correlators satisfies the \emph{abstract loop equations} if for any $g \in \tfrac{1}{2}\N$ and $n \geq 0$ such that $2g - 2 + (n + 1) > 0$, for any $\alpha \in \mathfrak{a}$ and $i \in [r_{\alpha}]$ when $x_0 \rightarrow x(\alpha)$ we have
\begin{equation}
\label{abstrloopeq}
\mathcal{E}^{(i)}_{\alpha;g,n}(x_0;z_{[n]}) = \mc{O}\Big(x_0^{-(\mathfrak{d}_{\alpha}(i) - 1- \delta_{i,1})}\,\Big(\frac{\dd x_0}{x_0}\Big)^i\Big)\,.
\end{equation}
This condition is equivalent to the property that, for any $\mu \in \tilde{\mathfrak{a}}_{\alpha}$ we have when $z_0 \in \tilde{C}_{\mu}'$ approaches $\mu$
\[ 
\tilde{\mathcal{E}}^{(i)}_{\alpha;g,n}\big(z_0;z_{[n]}) = \mc{O}\Big(\zeta_0^{-r_{\mu}(\mathfrak{d}_{\alpha}(i) -1 - \delta_{i,1}) }\Big(\frac{\dd \zeta_0}{\zeta_0}\Big)^i\Big)\,.
\]
\end{definition}

\begin{proposition}\label{ALEtoMLE}
Assume that none of the conditions (i), (ii), (iii), (iv) appearing in \cref{lemYA} are satisfied. Then, the abstract loop equations imply the master loop equations.
\end{proposition}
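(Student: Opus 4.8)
The plan is to reduce the master loop equations to a termwise estimate on each component $\tilde{C}_{\mu}$ and then match pole orders by invoking the combinatorial characterisation of $\mathfrak{d}_{\alpha}$. Fix $\alpha \in \mathfrak{a}$, $g,n$ with $2g-2+(n+1)>0$, and $\mu \in \tilde{\mathfrak{a}}_{\alpha}$, and work in the local coordinate $\zeta$ on $\tilde{C}_{\mu}$. The left-hand side of the master loop equation is the \emph{finite} sum $\sum_{i=1}^{r_{\alpha}} \tilde{\mathcal{E}}^{(i)}_{\alpha;g,n}(z_0;z_{[n]})\,(-\omega_{0,1}(z_0))^{r_{\alpha}-i}$, so it suffices to bound each summand by $\mathcal{O}\big(\zeta^{-1+\mathfrak{v}_{\mu}+(r_{\mu}-1)(r_{\alpha}-1)}(\dd\zeta)^{r_{\alpha}}\big)$ as $z_0 \to \mu$, uniformly in $z_{[n]}$. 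First I would record the local behaviour of $\omega_{0,1} = \tilde{y}\,\dd\tilde{x}$ near $\mu$: inserting the Laurent expansion of $\tilde{y}$ from \cref{SecSP} and $\dd\tilde{x} = r_{\mu}\zeta^{r_{\mu}-1}\dd\zeta$ gives $\omega_{0,1}(z) = -r_{\mu}t_{\mu}\,\zeta^{s_{\mu}-1}\dd\zeta + \higherorderterms{\zeta^{s_{\mu}}\dd\zeta}$ when $s_{\mu}$ is finite, and $\omega_{0,1}\equiv 0$ on $\tilde{C}_{\mu}$ when $s_{\mu}=+\infty$.

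For a branch with $s_{\mu}$ finite, combine this with the abstract loop equation in the form $\tilde{\mathcal{E}}^{(i)}_{\alpha;g,n}(z_0;z_{[n]}) = \mathcal{O}\big(\zeta^{-r_{\mu}(\mathfrak{d}_{\alpha}(i)-1-\delta_{i,1})-i}(\dd\zeta)^i\big)$: the $i$-th summand is then $\mathcal{O}(\zeta^{M_i}(\dd\zeta)^{r_{\alpha}})$ with $M_i = -r_{\mu}(\mathfrak{d}_{\alpha}(i)-1-\delta_{i,1}) - i + (s_{\mu}-1)(r_{\alpha}-i)$. Using the identity $\mathfrak{v}_{\mu} + (r_{\mu}-1)(r_{\alpha}-1) = (s_{\mu}-1)(r_{\alpha}-1) - \Delta_{\mu}$, which is immediate from \eqref{alphamu}, the desired estimate $M_i \geq -1 + \mathfrak{v}_{\mu} + (r_{\mu}-1)(r_{\alpha}-1)$ becomes, after an elementary rearrangement, equivalent to $\Pi_{\mu}(i,\mathfrak{d}_{\alpha}(i)) - \Delta_{\mu} \leq r_{\mu}(1+\delta_{i,1})$, where $\Pi_{\mu}(i,k) = r_{\mu}k + s_{\mu}(i-1)$. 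This last inequality holds for every $\mu \in \tilde{\mathfrak{a}}_{\alpha}$ and every $i \in [r_{\alpha}]$: by construction (see \eqref{dadef}, matching \cref{prop:k_min}) $\mathfrak{d}_{\alpha}(i)$ is the smallest integer $k$ with $(i,k)\in I_{\mathbf{r},\mathbf{s}}$, so $(i,\mathfrak{d}_{\alpha}(i)-1)\notin I_{\mathbf{r},\mathbf{s}}$; by the characterisation of $I_{\mathbf{r},\mathbf{s}}$ in \cref{prop:index_set_rewritten} this means $\Pi_{\mu}(i,\mathfrak{d}_{\alpha}(i)-1) - \Delta_{\mu} \leq 0$ for all $\mu$, i.e. $\Pi_{\mu}(i,\mathfrak{d}_{\alpha}(i)) - \Delta_{\mu} \leq r_{\mu} \leq r_{\mu}(1+\delta_{i,1})$. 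Summing over $i$ then yields the master loop equation on $\tilde{C}_{\mu}$.

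It remains to treat a branch $\mu$ with $s_{\mu}=+\infty$. Since conditions \emph{(i)} and \emph{(ii)} of \cref{lemYA} are excluded, such a $\mu$ is unique, has $r_{\mu}=1$, and is the maximum of $\tilde{\mathfrak{a}}_{\alpha}$ for $\preccurlyeq$; moreover $r_{\alpha}\geq 2$ in this situation. Then $\omega_{0,1}\equiv 0$ on $\tilde{C}_{\mu}$, so every summand with $i<r_{\alpha}$ vanishes identically and only the $i=r_{\alpha}$ term survives, namely $\tilde{\mathcal{E}}^{(r_{\alpha})}_{\alpha;g,n}(z_0;z_{[n]})$. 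By the abstract loop equation (with $r_{\mu}=1$) this is $\mathcal{O}\big(\zeta^{1-\mathfrak{d}_{\alpha}(r_{\alpha})-r_{\alpha}}(\dd\zeta)^{r_{\alpha}}\big)$, and since $\mathfrak{d}_{\alpha}(r_{\alpha}) = 1 - \mathbf{s}_{[\mu)}$ while $\mathfrak{v}_{\mu} = \mathbf{s}_{[\mu)} - (r_{\alpha}-1)$ — both read off as in the proof of \cref{lemYA} — the exponent equals $-1+\mathfrak{v}_{\mu} = -1+\mathfrak{v}_{\mu}+(r_{\mu}-1)(r_{\alpha}-1)$, which is exactly the required bound. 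This completes the proof.

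The calculations are routine; the only genuinely substantive step, and the one I would treat carefully, is the exponent bookkeeping that turns the pole-order inequality $M_i \geq -1+\mathfrak{v}_{\mu}+(r_{\mu}-1)(r_{\alpha}-1)$ into the clean inequality $\Pi_{\mu}(i,\mathfrak{d}_{\alpha}(i)) - \Delta_{\mu} \leq r_{\mu}$, together with the observation that this is nothing but the statement that $\mathfrak{d}_{\alpha}(i)-1$ has not yet entered $I_{\mathbf{r},\mathbf{s}}$. Once that is in place, uniformity over $i$ and over the choice of $\mu$ is automatic, and the two features peculiar to the singular setting — fractional powers of $\zeta_0$ in the individual terms, and the possibility that $\tilde{y}$ vanishes identically on one component — are dealt with respectively by the fact that $\tilde{\mathcal{E}}^{(i)}_{\alpha;g,n}$ is a pullback along $\tilde{x}$ (hence single-valued, so only integral orders occur in the sum) and by the $s_{\mu}=+\infty$ discussion above.
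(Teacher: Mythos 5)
Your proof is correct and it takes a genuinely different route from the paper's. The paper's proof substitutes the explicit formula \eqref{dadef} for $\mathfrak{d}_{\alpha}(i)$ into the surplus $\mathfrak{p}_{\mu}(i)$ and then runs a three-way case analysis on the relative position of $\mu$ and $\lambda$ in $\tilde{\mathfrak{a}}_{\alpha}$, with separate subcases $i'=1$ and $i'\geq 2$, using the inequality $\lfloor x\rfloor > x-1$ repeatedly. You instead rearrange the target exponent inequality into the clean form $\Pi_{\mu}(i,\mathfrak{d}_{\alpha}(i))-\Delta_{\mu}\leq r_{\mu}(1+\delta_{i,1})$ and observe that the (stronger) bound $\Pi_{\mu}(i,\mathfrak{d}_{\alpha}(i)-1)-\Delta_{\mu}\leq 0$ for all $\mu$ is exactly the contrapositive of one direction of \cref{prop:index_set_rewritten}, applied to $(i,\mathfrak{d}_{\alpha}(i)-1)\notin I_{\mathbf{r},\mathbf{s}}$. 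This avoids redoing the floor-function bookkeeping that already went into proving \cref{prop:index_set_rewritten}, and it makes it transparent that the combinatorics driving the master loop equation is the same as that driving the degree-one invertibility in \cref{part:AiryStr}. I checked your ``elementary rearrangement'' and the identity $\mathfrak{v}_{\mu}+(r_{\mu}-1)(r_{\alpha}-1)=(s_{\mu}-1)(r_{\alpha}-1)-\Delta_{\mu}$: both are correct, as is the tightness you implicitly obtain ($\Pi_{\mu}-\Delta_{\mu}=r_{\mu}$ occurs exactly when $\mu=\lambda$, $i'=1$, $\lambda\succ\lambda_{\alpha}$, matching $\mathfrak{p}_{\mu}(i)=0$). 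Your treatment of the $s_{\mu}=+\infty$ branch is also correct, and is in fact more explicit than the paper's (which leaves the exceptional case ``as exercise to the reader'').

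One small caveat worth flagging: \cref{prop:index_set_rewritten} is stated and proved in \cref{sec:deg_one_part} under the standing assumption that all $s_{\mu}$ are finite. When one branch $\mu_-$ has $s_{\mu_-}=+\infty$ (the exceptional case), you apply the lemma only for $\mu$ with $s_{\mu}$ finite, but the hypothesis of the lemma as written is not literally met, so a brief extension argument is called for: either observe that for $i<r_{\alpha}$ the set $I_{\mathbf{r},\mathbf{s}}$ and all the $\Delta_{\mu}$, $\mu\neq\mu_-$, agree with those of the reduced $(d-1)$-branch system (for which the lemma applies), or check directly that for $i=r_{\alpha}$ the quantity $\Pi_{\mu}(r_{\alpha},\mathfrak{d}_{\alpha}(r_{\alpha})-1)-\Delta_{\mu}=\sum_{\nu\in[\mu,\mu_-)}(s_{\mu}r_{\nu}-r_{\mu}s_{\nu})$ is nonpositive by the ordering $\preccurlyeq$. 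This is two lines of work and doesn't affect the substance; I mention it only because the lemma's proof, as written, does not formally cover it.
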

\begin{proof}
We treat the case where $s_{\mu}$ is finite for all $\mu \in \mathfrak{a}$. The case where there could exist $\mu_{\alpha,-} \in \tilde{\mathfrak{a}}_{\alpha}$ (which is then unique) such that $s_{\mu_{\alpha,-}} = +\infty$ is left as exercise to the reader.

For each $\alpha \in \mathfrak{a}$ and $i \in [r_{\alpha}]$, the abstract loop equations imply that for any $\mu \in \tilde{\mathfrak{a}}_{\alpha}$ we have when $z_0 \in \tilde{U}_{\mu}'$ approaches $\mu$
\[
\tilde{\mathcal{E}}^{(i)}_{\alpha;g,n}(z_0;z_{[n]}) \big(-\omega_{0,1}(z)\big)^{r_{\alpha} - i} = O\big(\zeta^{-r_{\mu}(\mathfrak{d}_{\alpha}(i) - 1 -\delta_{i,1}) - i + (s_{\mu} - 1)(r_{\alpha} - i)} (\dd \zeta)^{r_{\alpha}}\big)\,.
\]
Comparing with \cref{MDef}, the result will be proved after we justify that
\[
\mathfrak{p}_{\mu}(i) \coloneqq  -r_{\mu}(\mathfrak{d}_{\alpha}(i) -1 - \delta_{i,1}) - i + (s_{\mu} - 1)(r_{\alpha} - i) - \big(-1 + \mathfrak{v}_{\mu} + (r_{\mu} - 1)(r_{\alpha} - 1)\big)
\]
is always nonnegative. We recall the definition of $\mathfrak{v}_{\mu}$ in \eqref{alphamu}
\[
\mathfrak{v}_{\mu} = (s_{\mu} - r_{\mu})(r_{\alpha} - 1) - \mathbf{r}_{[\mu)}s_{\mu} + \mathbf{s}_{[\mu)}r_{\mu}\,.
\]
We decompose $i = \mathbf{r}_{[\lambda)} + i'$ with the unique $\lambda \in \tilde{\mathfrak{a}}_{\alpha}$ such that $\mathbf{r}_{[\lambda)} < i \leq \mathbf{r}_{[\lambda]}$ and $i' \in [\mathbf{r}_{\lambda}]$, and we denote $\lambda_{\alpha} \coloneqq  \min \tilde{\mathfrak{a}}_{\alpha}$. Inserting the definition of $\mathfrak{d}_{\alpha}(i)$ from \eqref{dadef}, we obtain
\[
\mathfrak{p}_{\mu}(i) = s_{\mu}(1 - i + \mathbf{r}_{[\mu)}) + r_{\mu}\bigg(1 + \Big\lfloor \frac{s_{\lambda}(i' - 1)}{r_{\lambda}} \Big\rfloor + \mathbf{s}_{[\lambda)} - \mathbf{s}_{[\mu)} - \delta_{\lambda \succ \lambda_{\alpha}}\delta_{i',1}\bigg)\,.
\] 
We are going to use often the inequality
\begin{equation}
\label{ineq}
\lfloor x \rfloor > x - 1\,.
\end{equation}
Checking nonnegativity of $\mathfrak{p}_{\mu}(i)$ is done by a case discussion.

\begin{itemize}
\item If $\mu = \lambda$, this becomes
\[
\mathfrak{p}_{\mu}(i) = s_{\mu}(1 - i') + r_{\mu}\bigg(1 + \Big\lfloor \frac{s_{\mu}(i' - 1)}{r_{\mu}}\Big\rfloor -  \delta_{\mu \succ \lambda_{\alpha}}\delta_{i',1}\bigg)\,.
\] 
For $i' = 1$ and $\mu \succ \lambda_{\alpha}$, we get $\mathfrak{p}_{\mu}(i) = 0$. For $i' = 1$ and $\mu = \lambda_{\alpha}$, we get $\mathfrak{p}_{\mu}(i) = r_{\lambda_{\alpha}} > 0$. For $i' \geq 2$, using \eqref{ineq} yields directly $\mathfrak{p}_{\mu}(i) > 0$.
\item If $\mu \prec \lambda$, we have
\[
\mathfrak{p}_{\mu}(i) = s_{\mu}(1 - \mathbf{r}_{[\mu,\lambda )} - i') + r_{\mu}\bigg(1 + \Big\lfloor \frac{s_{\lambda}(i' - 1)}{r_{\lambda}} \Big\rfloor + \mathbf{s}_{[\mu,\lambda )} - \delta_{i',1}\bigg)\,.
\]
For $i' = 1$, this simplifies into
\[
\mathfrak{p}_{\mu}(i) = -s_{\mu}\mathbf{r}_{[\mu,\lambda )} + r_{\mu}\mathbf{s}_{[\mu,\lambda )}\,.
\]
By definition of the order, for all $\nu \in [\mu,\lambda )$ we have $\frac{r_{\mu}}{s_{\mu}} \geq \frac{r_{\nu}}{s_{\nu}}$, therefore $\mathfrak{p}_{\mu}(i) \geq 0$. For $i' \geq 2$, we can use $\frac{s_{\lambda}}{r_{\lambda}} \geq \frac{s_{\mu}}{r_{\mu}}$ and \eqref{ineq} and obtain $\mathfrak{p}_{\mu}(i) > 0$.
\item If $\mu \succ \lambda$, we rather have
\[
\mathfrak{p}_{\mu}(i) = s_{\mu}(1 + \mathbf{r}_{[\lambda,\mu )} - i') + r_{\mu}\bigg(1 + \Big\lfloor \frac{s_{\lambda}(i' - 1)}{r_{\lambda}} \Big\rfloor - \mathbf{s}_{[\lambda,\mu )} - \delta_{\lambda \succ \lambda_{\alpha}} \delta_{i',1}\bigg)\,.
\]
For $i' = 1$ and $\lambda = \lambda_{\alpha}$ this simplifies to
\[
\mathfrak{p}_{\mu}(i) = s_{\mu}\mathbf{r}_{[\mu )} - r_{\mu}\mathbf{s}_{[\mu )} + r_{\mu}
\]
and thanks to the inequality $\frac{r_{\nu}}{s_{\nu}} \geq \frac{r_{\mu}}{s_{\mu}}$ for all $\nu \in [\mu )$ we deduce $\mathfrak{p}_{\mu}(i) \geq r_{\mu} > 0$. For $i' = 1$ and $\lambda \succ \lambda_{\alpha}$, we have
\[
\mathfrak{p}_{\mu}(i) = s_{\mu}\mathbf{r}_{[\lambda,\mu )} - r_{\mu}\mathbf{s}_{[\lambda,\mu )}\,.
\]
Due to the inequality $\frac{r_{\nu}}{s_{\nu}} \geq \frac{r_{\mu}}{s_{\mu}}$ for all $\nu \in [\lambda,\mu )$ we have again $\mathfrak{p}_{\mu}(i) \geq 0$. For $i' \geq 2$, we use the inequality \eqref{ineq} to write
\begin{equation*}  
\begin{split}
\mathfrak{p}_{\mu}(i) & > s_{\mu}(1 + \mathbf{r}_{[\lambda,\mu )} - i') + r_{\mu}\bigg(\frac{s_{\lambda}(i' - 1)}{r_{\lambda}} - \mathbf{s}_{[\lambda,\mu )}\bigg) \\
& > r_{\mu}(i' - 1)\bigg(\frac{s_{\lambda}}{r_{\lambda}} - \frac{s_{\mu}}{r_{\mu}}\bigg) + \mathbf{r}_{[\lambda,\mu )}s_{\mu} - \mathbf{s}_{[\lambda,\mu )}r_{\mu} 
\end{split}
\end{equation*}
and due to the ordering we find again $\mathfrak{p}_{\mu}(i) > 0$.
\end{itemize}
\end{proof}

\begin{remark}
In the proof we see that for any $\alpha \in \mathfrak{a}$, if $ |\tilde{\mathfrak{a}}_\alpha | > 1$, there exists $\mu \in \tilde{\mathfrak{a}}_{\alpha}$ and $i \in [r_{\alpha}]$ such that $\mathfrak{p}_{\mu}(i) = 0$. Therefore, we do use all the vanishing provided by the abstract loop equations to derive the master loop equations.
\end{remark}

Combining with \cref{MLEtoTR}, we obtain the following result.
\begin{proposition}
\label{combinabsloop} Assume that none of the conditions $(i)$, $(ii)$, $(iii)$, $(iv)$ appearing in \cref{lemYA} are satisfied. For a fixed $(\omega_{0,1},\omega_{0,2},\omega_{1,\frac{1}{2}})$, the topological recursion \eqref{TRsum} gives the unique --- if it exists, i.e. if the result is symmetric in all variables --- solution to the abstract loop equations.
\end{proposition}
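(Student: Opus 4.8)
The plan is to assemble the statement from three ingredients already in hand: existence of a solution (borrowed from Part I via the correspondence between the $\mc{W}(\mf{gl}_r)$-differential constraints and the abstract loop equations established in \cref{SAIRNUSN}), uniqueness of a solution to the abstract loop equations, and the identification of that solution with the topological recursion formula \eqref{TRsum}. For the last two points, we have done almost all the work: \cref{ALEtoMLE} says the abstract loop equations imply the master loop equations (under the stated exclusions of conditions $(i)$--$(iv)$), and \cref{MLEtoTR} says that a family of correlators satisfying the master loop equations and the projection property is computed by \eqref{TRsum}. So the logical chain is: abstract loop equations $\Rightarrow$ master loop equations $\Rightarrow$ the $\omega_{g,n}$ are given by \eqref{TRsum}, inductively on $2g-2+n$.

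First I would spell out the induction. Fix $(\omega_{0,1},\omega_{0,2},\omega_{\frac12,1})$. Suppose $\boldsymbol{\omega}$ and $\boldsymbol{\omega}'$ are two families of correlators, both satisfying the abstract loop equations and the projection property, and agreeing with the prescribed unstable data. Proceeding by induction on $2g-2+(n+1)>0$: assuming $\omega_{g',n'}=\omega'_{g',n'}$ for all $(g',n')$ with $2g'-2+n'<2g-2+(n+1)$, apply \cref{ALEtoMLE} to get that both satisfy the master loop equations, then apply \cref{MLEtoTR}: the right-hand side of \eqref{TRsum} for $\omega_{g,n+1}$ involves only $\omega_{g',n'}$ with $2g'-2+n'<2g-2+(n+1)$ (as noted right after the proof of \cref{MLEtoTR}), hence $\omega_{g,n+1}(z_0,z_{[n]})=\omega'_{g,n+1}(z_0,z_{[n]})$. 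This closes the induction and gives uniqueness, and simultaneously exhibits the unique candidate as the output of \eqref{TRsum}.

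The only subtlety — and the one genuinely substantive point — is the parenthetical ``if it exists, i.e. if the result is symmetric in all variables''. The recursion \eqref{TRsum} treats $z_0$ asymmetrically, so a priori the multidifferentials it produces need not lie in $H^0(\tilde C^n;(K_{\tilde C}(*\tilde{\mathfrak a}))^{\boxtimes n})^{\mathfrak S_n}$, i.e. need not form a family of correlators in the sense of \cref{DEFCOR}. Thus the statement is really: \emph{if} one defines $\omega_{g,n}$ inductively by \eqref{TRsum} and the resulting objects happen to be symmetric (so that they constitute a family of correlators), then they automatically satisfy the projection property and the abstract loop equations, and they are the unique such family; conversely any family of correlators solving the abstract loop equations must coincide with them. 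I would phrase the proof so as to make this conditional structure explicit: the projection property of \eqref{TRsum}'s output is visible from the kernel $K_\mu^{(m)}$ carrying the factor $\int_\mu^z\omega_{0,2}(\cdot,z_0)$, which is exactly the building block in \eqref{omnorm}; and that \eqref{TRsum} then satisfies the abstract loop equations follows because running the argument of \cref{MLEtoTR} in reverse together with the (already performed) computation in \cref{SAIRNUSN}/\cref{ALEtoMLE} shows the required pole bounds hold.

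The main obstacle, therefore, is purely expository rather than mathematical: one must be careful to state uniqueness unconditionally while stating existence conditionally on symmetry, and to point to Part I (specifically \cref{prop:AiryAllDilatonPolarize} plus the dictionary of \cref{SAIRPS}) for the cases where symmetry — hence existence — is guaranteed. Concretely, when the spectral curve arises from a $\mc{W}(\mf{gl}_r)$-Airy structure as in \cref{thm:W_gl_Airy_arbitrary_autom,prop:AiryAllDilatonPolarize}, the partition function exists and is symmetric by \cref{thm:partition_fct_exists}, so the $\omega_{g,n}$ defined by \eqref{wgnd} form a genuine family of correlators satisfying the abstract loop equations, and \cref{combinabsloop} then identifies them with \eqref{TRsum}. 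I would close the proof with this remark so the reader sees that the ``if it exists'' clause is not vacuous.
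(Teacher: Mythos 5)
Your main argument is exactly the paper's: combine \cref{ALEtoMLE} with \cref{MLEtoTR}, then run the evident induction on $2g-2+n>0$. The paper itself gives nothing beyond ``Combining with \cref{MLEtoTR}, we obtain the following result,'' so your spelled-out induction supplies precisely the reasoning the reader is meant to reconstruct, and your closing remark pointing to \cref{prop:AiryAllDilatonPolarize} and \cref{HASequivALE} for existence is also how the paper proceeds in \cref{mainth2}.

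One claim in your discussion of the ``if it exists'' clause is not sound as stated. You propose that the output of \eqref{TRsum}, if symmetric, ``satisfies the abstract loop equations $\dots$ because running the argument of \cref{MLEtoTR} in reverse together with $\dots$ \cref{ALEtoMLE} shows the required pole bounds hold.'' Reversing \cref{MLEtoTR} does give something: from symmetry one applies the projection property in the first variable, identifies the result with the right-hand side of \eqref{TRsum}, and isolates the leftover term via the combinatorial identity of \cite[Lemma 7.6.4]{Kra19}; the vanishing of its residue against the kernel for all $z_0$ forces $\sum_i \tilde{\mathcal{E}}^{(i)}_{\alpha;g,n}\,(-\omega_{0,1})^{r_\alpha - i}$, divided by $Y_\alpha (\dd\tilde{x})^{r_\alpha - 1}$, to have at most a simple pole at each $\mu$ --- which is the \emph{master} loop equation of \cref{MDef}. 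But the abstract loop equations (\cref{def:absloopgen}) are $r_\alpha$ separate pole-order bounds, one for each $i$, and \cref{ALEtoMLE} is one-directional: it combines the abstract loop equations into the master loop equation, not the other way around. So the reverse argument yields MLE, not ALE. Fortunately the proposition does not need this converse: its content is uniqueness plus identification, which your induction establishes; the existence of an ALE solution (hence symmetry of the TR output) is supplied independently through the $\mc{W}(\mf{gl}_r)$-Airy structure of Part~\ref{part:AiryStr} and the dictionary of \cref{HASequivALE}, exactly as in your final paragraph. Keep that remark, drop the ``reverse \cref{MLEtoTR}'' claim, and your proof is complete.
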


The notion of abstract loop equation was first introduced \cite{BEO15,BoSh17} for smooth curves with simple ramifications and was shown there to be a mechanism implying directly the topological recursion. This was extended to higher order ramifications on smooth curves having $\tilde{y}$ holomorphic near $\mathfrak{a}$ in~\cite{BoEy17,BBCCN18,Kra19}, and to the more general case where $y\dd x$ is holomorphic near $\mathfrak{a}$ in ~\cite{BBCCN18}. The novelty of \cref{MLEtoTR,ALEtoMLE} here is the treatment of possibly singular curves.

\medskip

\subsection{Topological recursion for admissible spectral curves}
\label{SecTR}

\medskip

In this paragraph, we express the abstract loop equations in a more algebraic way, that will make the bridge to Airy structures. The converse route was anticipated in \cref{SAIRPS}.

Let $\mc{C}$ be a spectral curve. We can attach to it a local spectral curve matching the definitions in \cref{Ftwis}. Namely, we let
\[
\tilde{C}^{{\rm loc}} = \bigsqcup_{\mu \in \tilde{\mathfrak{a}}} \tilde{C}^{{\rm loc}}_{\mu},\qquad \tilde{C}^{{\rm loc}}_{\mu} \coloneqq  {\rm Spec}\,\mathbb{C} \llbracket \zeta \rrbracket\,.
\]
We recall for each $\alpha \in \mathfrak{a}$ and ramification point $\mu \in \tilde{\mathfrak{a}}_{\alpha}$ above $\alpha$, we have a local coordinate $\zeta$ such that:
\[
\tilde{x}\begin{psmallmatrix} \mu \\ z \end{psmallmatrix} = x(\alpha) + \zeta(z)^{r_{\mu}}\,.
\]
For each $\mu \in \tilde{\mathfrak{a}}$, define $ \tilde{C}^{\rm loc'}_\mu \coloneqq \Spec \C (\!( \zeta )\!) $ and let
\[
\mathcal{L}_{\mu} \coloneqq  H^0\big(\tilde{C}^{{\rm loc'}}_{\mu};K_{\tilde{C}^{{\rm loc'}}_{\mu}}\big) \cong \mathbb{C}(\!(\zeta)\!).\dd\zeta
\]
be a copy of the space of formal Laurent series, and
\[
\mathcal{L} = H^0\big(\tilde{C}^{{\rm loc'}};K_{\tilde{C}^{{\rm loc'}}}\big) \cong \bigoplus_{\mu \in \tilde{\mathfrak{a}}} \mathcal{L}_{\mu}\,. 
\]
We denote by
\[ 
\mathsf{Loc}_{\mu}\,:\,\,H^0\big(\tilde{U};K_{\tilde{U}}(*\tilde{\mathfrak{a}})\big) \rightarrow \mathcal{L}_{\mu}
\] 
the linear map associating to a meromorphic differential its all-order Laurent series expansion near $\mu$ using the local coordinate $\zeta$ in $\tilde{U}_{\mu}$, and
\[
\mathsf{Loc} = \bigoplus_{\mu \in \tilde{\mathfrak{a}}} \mathsf{Loc}_{\mu}\,.
\]
We define elements $\dd\xi_{k}^{\mu} \in \mathcal{L}$, indexed by $\mu \in \tilde{\mathfrak{a}}$ and $k \geq 0$
\[
\dd\xi^{\mu}_{k}\big(\begin{smallmatrix} \nu \\ \zeta\end{smallmatrix}\big) = \delta_{\mu,\nu}\,\zeta^{k - 1}\,\dd \zeta\,.
\]
We introduce the standard bidifferential of the second kind on $\tilde{U}$, that is
\[ 
\omega_{0,2}^{{\rm std}}\big(\begin{smallmatrix} \mu_1 & \mu_2 \\ z_1 & z_2 \end{smallmatrix}\big) \coloneqq  \frac{\delta_{\mu_1,\mu_2}\,\dd \zeta(z_1) \dd \zeta(z_2)}{(\zeta(z_1) - \zeta(z_2))^2}\,.
\]

Let now $\boldsymbol{\omega}$ be a family of correlators on $\mc{C}$. We can encode the correlators $\omega_{g,n}$ with $2g - 2 + n \geq 0$ by the following Laurent series expansion
\begin{equation}
\label{locun} \begin{split}
\mathsf{Loc}(\omega_{0,1}) & = \sum_{\substack{\mu \in \tilde{\mathfrak{a}} \\ k > 0}} F_{0,1}\big[\begin{smallmatrix} \mu \\ -k \end{smallmatrix}\big]\,\dd\xi^{\mu}_k\,, \\
\mathsf{Loc}(\omega_{\frac{1}{2},1}) & = \sum_{\mu \in \tilde{\mathfrak{a}}} \bigg(Q_{\mu}\,\dd\xi^{\mu}_{0} + \sum_{k > 0} F_{\frac{1}{2},1}\big[\begin{smallmatrix} \mu \\ -k \end{smallmatrix}\big]\,\dd\xi^{\mu}_{k}\bigg)\,, \\
\mathsf{Loc}^{\otimes 2}(\omega_{0,2} - \omega_{0,2}^{{\rm std}}) & = \sum_{\substack{\mu_1,\mu_2 \in \tilde{\mathfrak{a}} \\ k_1,k_2 > 0}} F_{0,2}\big[\begin{smallmatrix} \mu_1 & \mu_2 \\ -k_1 & -k_2 \end{smallmatrix}\big]\,\dd\xi_{k_1}^{\mu_1} \dd\xi_{k_2}^{\mu_2}\,.
\end{split} 
\end{equation}
Using the fundamental bidifferential of the second kind, we introduce another family of differentials $\dd\xi_{-k}^{\mu}$, now globally defined on $\tilde{C}$ and indexed by $\mu \in \tilde{\mathfrak{a}}$ and $k > 0$
\begin{equation}
\label{xinegdef} \dd\xi^{\mu}_{-k}(z) = \Res_{z' = \mu} \bigg(\int_{\mu}^{z'} \omega_{0,2}(\cdot,z)\bigg) \frac{\dd \zeta(z')}{(\zeta(z'))^{k + 1}}\,.
\end{equation}
Notice that it is such that, for any $\mu,\nu \in \tilde{\mathfrak{a}}$
\[
\mathsf{Loc}_{\nu}(\dd\xi^{\mu}_{-k}) = \frac{\delta_{\mu,\nu}\dd \zeta}{\zeta^{k + 1}} + \sum_{l > 0} \frac{F_{0,2}\big[\begin{smallmatrix} \mu & \nu \\ -k & -l \end{smallmatrix}\big]}{k}\,\dd\xi^{\nu}_{l}\,.
\]
Assuming that $\boldsymbol{\omega}$ satisfies the projection property, by symmetry we can apply this property to each variable to obtain the existence of a finite decomposition for $2g - 2 + n > 0$
\begin{equation}
\label{omgnxp} \omega_{g,n}(z_1,\ldots,z_n) = \sum_{\substack{\mu_1,\ldots,\mu_n \in \tilde{\mathfrak{a}} \\ k_1,\ldots,k_n > 0}} F_{g,n}\big[\begin{smallmatrix} \mu_1 & \cdots & \mu_n \\ k_1 & \cdots & k_n \end{smallmatrix}\big] \prod_{i = 1}^n \dd \xi_{-k_i}^{\mu_i}(z_i)\,,
\end{equation}
where $F_{g,n}\big[\begin{smallmatrix} \boldsymbol{\mu} \\ \mathbf{k} \end{smallmatrix}\big]$ are scalars.

\begin{definition}
The \emph{partition function} $Z$ associated to a $\boldsymbol{\omega}$ satisfying the projection property is defined as
\[
Z \coloneqq e^F\,, \qquad F = \sum_{\substack{g \in \tfrac{1}{2}\N,\,\,n \geq 1  \\ 2g - 2 + n > 0}} \sum_{\substack{\mu_1,\ldots,\mu_n \in \tilde{\mathfrak{a}} \\ k_1,\ldots,k_n > 0}} \frac{\hbar^{g - 1}}{n!} F_{g,n}\big[\begin{smallmatrix} \mu_1 & \cdots & \mu_n \\ k_1 & \cdots & k_n \end{smallmatrix}\big]\,\prod_{i = 1}^n x_{k_i}^{\mu_i}\,.
\]
\end{definition}

We now would like to translate the abstract loop equations on $\boldsymbol{\omega}$ into constraints for its partition function. For this purpose, we introduce for each $\alpha \in \mathfrak{a}$ a copy $W_{\alpha;i,k}$ of the differential operators in \cref{eq:twist_mode_arbitr} indexed by $i \in [r_{\alpha}]$ and $k \in \mathbb{Z}$ forming a representation of the $\mc{W}(\mathfrak{gl}_{r_{\alpha}})$-VOA using as twists permutations $\sigma_{\alpha}$ which is a product of disjoint cycles of respective orders $(r_{\mu})_{\mu \in \tilde{\mathfrak{a}}_{\alpha}}$. They are described in terms of the Heisenberg generators indexed by $\mu \in \tilde{\mathfrak{a}}$ and $k \in \mathbb{Z}$
\[
J^{\mu}_{k} = \left\{\begin{array}{lll} \hbar\partial_{x_{k}^{\mu}} & & {\rm if}\,\,k > 0 \\[0.2ex] \hbar^{\frac{1}{2}}Q_{\mu} & & {\rm if}\,\,k = 0 \\[0.3ex] -kx_{-k}^{\mu} & & {\rm if}\,\,k < 0 \end{array}\right.\,.
\]
where we use
\[
Q_{\mu} = \Res_{z = \mu} \omega_{\frac{1}{2},1}(z)
\]
coming from the crosscap differential. Then, we construct the dilaton shift and the change of polarisation
\begin{equation*}
\begin{split}
\hat{T} & = \exp\left(\sum_{\substack{\mu \in \tilde{\mathfrak{a}} \\ k > 0}} \Big(\hbar^{-1}\,F_{0,1}\big[\begin{smallmatrix} \mu \\ -k \end{smallmatrix}\big] + \hbar^{-\frac{1}{2}}\,F_{\frac{1}{2},1}\big[\begin{smallmatrix} \mu \\ -k \end{smallmatrix}\big]\Big)\,\frac{J_{k}^{\mu}}{k}\right)\,, \\
\hat{\Phi} & = \exp\left(\frac{1}{2\hbar} \sum_{\substack{\mu,\nu \in \tilde{\mathfrak{a}} \\ k,l > 0}}F_{0,2}\big[\begin{smallmatrix} \mu & \nu \\ -k & -l \end{smallmatrix}\big]\,\frac{J_{k}^{\mu}J_{l}^{\nu}}{kl}\right)\,.
\end{split} 
\end{equation*}

\begin{definition}
\label{Airytocurve}To a spectral curve $\mc{C}$ equipped with a crosscap differential $\omega_{\frac{1}{2},1}$ and a fundamental bidifferential of the second kind $\omega_{0,2}$, we associate the system of differential operators  indexed by $\alpha \in \mathfrak{a}$, $i \in [r_{\alpha}]$ and $k \in \mathbb{Z}$
\[
H_{\alpha;i,k} \coloneqq  \hat{\Phi}\,\hat{T} \cdot {}^{\sigma_\alpha}W_{i,k}  \cdot \hat{T}^{-1}\,\hat{\Phi}^{-1}\,,
\]
where $ W^{\sigma}_{i,k}$ is as in \cref{eq:twist_mode_arbitr}, and $ \sigma_\alpha$ is the monodromy permutation at $\alpha$.
We also introduce the set
\[
\mathcal{I} \coloneqq  \Big\{(\alpha,i,k) \quad \Big| \quad \, \alpha \in \mathfrak{a},\,\, i \in [r_{\alpha}],\,\, k \geq \mathfrak{d}_{\alpha}(i) - \delta_{i,1}\Big\}\,.
\]
\end{definition}

\begin{proposition}\label{HASequivALE}
Assume that none of the conditions (i), (ii), (iii), (iv) appearing in \cref{lemYA} are satisfied, and let $\boldsymbol{\omega}$ be a system of correlators satisfying the projection property. Then, the abstract loop equations for $\boldsymbol{\omega}$ are equivalent to the following system of differential equations for its partition function:
\begin{equation}
\label{PDEtocorr}
\forall (\alpha,i,k) \in \mathcal{I},\qquad e^{-F} H_{\alpha;i,k} e^{F} \cdot 1 = 0\,.
\end{equation}
\end{proposition}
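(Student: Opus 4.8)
\textbf{Proof plan for Proposition \ref{HASequivALE}.}
The plan is to unwind both sides of the claimed equivalence into the same combinatorial quantity, namely the multidifferential $\tilde{\mathcal{E}}^{(i)}_{\alpha;g,n}$ on the normalisation, and to match the pole bound demanded by the abstract loop equations with the annihilation statement for $e^{F}$. The key observation is that everything has already been set up locally in \cref{SAIRPS}: there the operators ${}^{\sigma}H_{i,k}$ were obtained by conjugating the twisted $\mc{W}$-modes by exactly the same $\hat{T}$ and $\hat{\Phi}$ as in \cref{Airytocurve}, and the quantities $\mathcal{E}^{(i)}_{g,n}(\tilde{x};w_{[n]}) = \prod_{j} \ad_{\mc{J}_-(w_j)}G_{i;g,n}(\tilde{x})$ were computed to equal $\sum_{Z \subseteq \mathfrak{f}(z),\,|Z|=i}\sum_{\mathbf{L}\vdash Z,\,N\vdash_{\mathbf{L}}[n]}\sum_{g=i+\sum_L(g_L-1)}\prod_{L}\omega_{g_L,|L|+|N_L|}(L,w_{N_L})$. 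Comparing this with \cref{Qi}, one sees that $\mathcal{E}^{(i)}_{\alpha;g,n}$ in the global spectral curve language, restricted to a neighbourhood $\tilde{U}_\alpha$, is literally the pullback of $\mathcal{E}^{(i)}_{g,n}$ produced by the local VOA computation at $\alpha$ (with the roles of $[r]$, $d$, $r_\mu$ played by $r_\alpha$, $d_\alpha$, and the $(r_\mu)_{\mu\in\tilde{\mathfrak{a}}_\alpha}$). So the first step is to record this identification carefully: the decomposition \eqref{omgnxp} of the correlators along the basis $\dd\xi^\mu_{-k}$ is exactly the ansatz \eqref{Fsum} read through $\mathsf{Loc}$, and by the formulas \eqref{locun} for $\omega_{0,1},\omega_{\frac12,1},\omega_{0,2}$ the scalars $F_{g,n}\big[\begin{smallmatrix}\boldsymbol\mu\\ \mathbf{k}\end{smallmatrix}\big]$, $Q_\mu$, $F_{0,1}$, $F_{\frac12,1}$, $F_{0,2}$ appearing in \cref{Airytocurve} coincide with those of the formal function $F$. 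This reduces the statement to the \emph{local} equivalence already implicit in \cref{SAIRNUSN}.

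The second step is to invoke the dictionary between PDEs and pole bounds at one branchpoint. By the computation leading to $\mathcal{E}^{(i)}_{g,n}$ in \cref{SAIRNUSN}, for each $\alpha$ the operator $e^{-F}H_{\alpha;i,k}e^{F}$ applied to $1$ produces, after decomposing in $\hbar$ and number of $x^\mu_k$'s and then acting by $\ad_{\mc{J}_-}$ to strip the $x$'s, precisely the Laurent coefficients (in $\tilde{x}$ near $x(\alpha)$) of $\tilde{\mathcal{E}}^{(i)}_{\alpha;g,n}$. More precisely, $e^{-F}H_{\alpha;i,k}e^{F}\cdot 1 = 0$ for all $k \geq \mathfrak{d}_{\alpha}(i) - \delta_{i,1}$ is equivalent, after assembling into generating series in $k$ via $H_{\alpha;i}(\tilde{x}) = \sum_k H_{\alpha;i,k}(\dd\tilde{x})^i/\tilde{x}^{k+i}$ and using the adjoint action to extract $\mathcal{E}^{(i)}_{\alpha;g,n}$, to the statement that $\mathcal{E}^{(i)}_{\alpha;g,n}(x_0;z_{[n]})$ has a pole of order at most $\mathfrak{d}_\alpha(i) - 1 - \delta_{i,1} + i$ at $x_0 = x(\alpha)$, i.e. lies in $\mc{O}\big(x_0^{-(\mathfrak{d}_\alpha(i)-1-\delta_{i,1})}(\dd x_0/x_0)^i\big)$ --- which is exactly the abstract loop equation of \cref{def:absloopgen}. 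The equivalent formulation on $\tilde{C}_\mu$ is obtained by pulling back along $\tilde{x}\begin{psmallmatrix}\mu\\ z\end{psmallmatrix} = x(\alpha) + \zeta^{r_\mu}$, which converts the order in $x_0$ into order in $\zeta_0$ with the factor $r_\mu$, and converts $(\dd x_0/x_0)^i$ into $(\dd\zeta_0/\zeta_0)^i$; this is the second displayed equivalence inside \cref{def:absloopgen}, and its local consistency (single-valuedness) was noted right after \eqref{Qi}. One should also note that the condition $2g-2+(n+1)>0$ matches the range in which $F$ in \eqref{Fsum} has coefficients, and that the constraint $H_{\alpha;1,0}\cdot Z = 0$, which is $\sum_{\mu\in\tilde{\mathfrak{a}}_\alpha}J^\mu_0 = \hbar^{1/2}\sum_\mu Q_\mu$, is consistent with the vanishing $\sum_{\mu\in\tilde{\mathfrak{a}}_\alpha}\Res_{z=\mu}\omega_{\frac12,1}(z) = 0$ built into the crosscap differential; this is why $\delta_{i,1}$ shifts the index range in $\mathcal{I}$.

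The only genuinely non-trivial point --- and the place where the hypothesis that none of conditions (i)--(iv) of \cref{lemYA} holds is used --- is to guarantee that the change of basis from the $x^\mu_k$-coordinates to the correlator decomposition \eqref{omgnxp} is the correct one, i.e. that the correlators, which a priori are global meromorphic differentials on $\tilde{C}$ satisfying the projection property, are faithfully captured by their local expansions near $\tilde{\mathfrak{a}}$ without loss of information, and conversely that the differentials $\dd\xi^\mu_{-k}$ defined in \eqref{xinegdef} span the space in which the projection property forces the $\omega_{g,n}$ to live. The projection property (\cref{DEFCOR}) ensures the $\omega_{g,n}$ have poles only at $\tilde{\mathfrak{a}}$ and no residues there, hence a finite expansion \eqref{omgnxp}; and the hypotheses on (i)--(iv) are precisely what make the local picture at each $\alpha$ match an honest twisted $\mc{W}(\mathfrak{gl}_{r_\alpha})$-module of the type analysed in \cref{SAIRPS} (in particular that $\omega_{0,1}$ separates the sheets near each $\mu$, cf. \cref{rem:t_cond,CurveEmbeddingDeformed}), so that the conjugation formulas there apply verbatim. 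The main obstacle is therefore bookkeeping: keeping straight the three indexing layers (the branchpoint $\alpha$, the branch $\mu\in\tilde{\mathfrak{a}}_\alpha$, and the level $k$), and checking that the shift $\mathfrak{d}_\alpha(i)-\delta_{i,1}$ in \cref{Airytocurve} is exactly the one coming from \cref{prop:k_min} via $1 - \lambda(i) + \delta_{i,1} = \mathfrak{d}_{\mathbf{r},\mathbf{s}}(i)$, so that the index set $\mathcal{I}$ corresponds term-by-term to the index set over which $e^{F}$ is annihilated in \cref{SAIRNUSN}. Once these identifications are in place, the equivalence is a direct translation and no further analysis is needed.
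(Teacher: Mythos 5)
Your plan follows the same route as the paper's (very terse) proof: reduce to the local VOA computation in \cref{SAIRNUSN}, globalise to several $\alpha$, and translate vanishing of $e^{-F}H_{\alpha;i,k}e^{F}\cdot 1$ for $k\geq\mathfrak{d}_\alpha(i)-\delta_{i,1}$ into the pole bound on $\mathcal{E}^{(i)}_{\alpha;g,n}$. Your identification of $\mathfrak{d}_\alpha(i)$ with $\kLowerBound{i}{r}{s}=1-\lambda(i)+\delta_{i,1}$ and the explanation of the $\delta_{i,1}$ shift via $H_{\alpha;1,0}=\hbar^{1/2}\sum_\mu Q_\mu=0$ are both correct and match the paper's conventions, and the matching of \eqref{locun}--\eqref{omgnxp} with the ansatz \eqref{Fsum} is the right bridge between the global correlators and the local $\mathcal{W}$-module data.

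One remark on your third step: you attribute the hypothesis that (i)--(iv) fail to ensuring faithfulness of the local expansion and "matching an honest twisted module." That is not quite where those conditions bite. The finite decomposition \eqref{omgnxp} is guaranteed by the projection property alone, and the twisted $\mathcal{W}(\mathfrak{gl}_{r_\alpha})$-module exists for any choice of parameters; the bare combinatorial identity $e^{-F}H_{\alpha;i,k}e^F\cdot 1=0 \iff$ pole bound on $\mathcal{E}^{(i)}_{\alpha;g,n}$ is a formal bookkeeping fact not requiring (i)--(iv). The separation-of-sheets issue you cite from \cref{rem:t_cond,CurveEmbeddingDeformed} is relevant to the \emph{degree one condition} (invertibility of $\mathcal{M}$) and hence to whether the $H_{\alpha;i,k}$ form an Airy structure, and to whether the master loop equations (via \cref{ALEtoMLE}) and TR kernel (\cref{MLEtoTR}) behave; it is not what makes the present equivalence hold. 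In the paper the hypothesis appears in the statement mainly for uniformity with the surrounding propositions (and so that $\mathfrak{d}_\alpha(i)$ is well-defined even when some $s_\mu=\infty$). This is a minor imprecision in your write-up, not a gap in the argument.
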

\begin{proof}
If $|\mathfrak{a}| = 1$ this is the computation done in \cref{SAIRNUSN}. Given the formalism that we introduced, it is straightforward to adapt it to handle several $\alpha$s, where the $H_{\alpha;i,k}$ now form a representation of the direct sum over $\alpha \in \mathfrak{a}$ of the $\mc{W}(\mathfrak{gl}_{r_{\alpha}})$-VOAs.
\end{proof}

It is now easy to combine the construction of Airy structures in \cref{thm:W_gl_Airy_arbitrary_autom} with \cref{MLEtoTR,ALEtoMLE} to obtain our second main result. We recall that we had defined
\begin{equation*}
t_{\mu} = -\frac{1}{r_{\mu}} F_{0,1}\big[\begin{smallmatrix} \mu \\ -s_{\mu} \end{smallmatrix}\big] \,.
\end{equation*}
  
\begin{definition}
\label{dereg} We say $\alpha \in \mathfrak{a}$ is \emph{regularly admissible} if
\begin{itemize}
\item $C$ is irreducible locally at $\alpha$, that is $|\tilde{\mathfrak{a}}_{\alpha}| = 1$.
\item $\tilde{y}$ is holomorphic near $\alpha$ and $\dd \tilde{y}(\alpha) \neq 0$.
\end{itemize}
\end{definition} 
In that case, in all the previous definitions and constructions in the neighborhood $\tilde{U}_{\alpha}$ we replace $\tilde{y}(z)$ with $\tilde{y}(z) - \tilde{y}(\alpha)$. In particular, we take $s_{\alpha} = r_{\alpha} + 1$, and the value of $\tilde{y}(\alpha)$ plays absolutely no role in all the results we have mentioned.

\begin{definition}
\label{deirreg}We say $\alpha \in \mathfrak{a}$ is \emph{irregularly admissible} if
\begin{itemize}
\item for any $\mu \in \tilde{\mathfrak{a}}_{\alpha}$ such that $ r_\mu > 1$, $\tilde{y}$ has a pole at $\mu$ but $\tilde{y}\dd \tilde{x}$ is regular at $\mu$. In particular, this imposes $s_{\mu} \in [1,r_{\mu})$.
\item for any distinct $\mu,\nu \in \tilde{\mathfrak{a}}_{\alpha}$ such that $(r_\mu,s_\mu) = (r_\nu,s_\nu)$, we have $t_\mu^{r_\mu} \neq t_\nu^{r_\nu}$.
\item if $ |\tilde{\mf{a}}_\alpha| >1$, there exist distinct $\mu_{+},\mu_- \in \tilde{\mathfrak{a}}_{\alpha}$ such that $r_{\mu_{\pm}} = \mp 1\,\,{\rm mod}\,\,s_{\mu_{\pm}}$ and $\tfrac{r_{\mu_+}}{s_{\mu_+}} \geq \tfrac{r_{\mu_-}}{s_{\mu_-}}$, and for any $\mu \in \tilde{\mathfrak{a}}_{\alpha}\setminus \{\mu_-,\mu_+\}$, we have $s_{\mu} = 1$ and $\tfrac{r_{\mu_+}}{s_{\mu_+}} \geq r_{\mu} \geq \tfrac{r_{\mu_-}}{s_{\mu_-}}$.
\item if $ |\tilde{\mf{a}}_\alpha| =1$ then $r_{\mu} = \pm 1\,\,{\rm mod}\,\,s_{\mu}$ for $\mu\in\tilde{\mf{a}}_\alpha$.
\end{itemize} 
\end{definition}
These conditions always imply that for any $\mu$, we have ${\rm gcd}(r_\mu,s_\mu) = 1$; in other words the plane curve $(\tilde{C},\tilde{x},\tilde{y})$ is locally irreducible at $\mu$. Here, the second condition avoids the pathology of \emph{(iv)} in \cref{lemYA} and the next results. The third condition is then equivalent to avoiding the pathology \emph{(iii)} in \cref{lemYA}, because $\frac{r_{\mu}}{s_{\mu}} = \frac{r_{\nu}}{s_{\nu}}$ and $(r_{\mu},s_{\mu})$ coprime, $(r_{\nu},s_{\nu})$ coprime imply that $(r_{\mu},s_{\mu}) = (r_{\nu},s_{\nu})$. The fourth and fifth conditions match those in \cref{thm:W_gl_Airy_arbitrary_autom} if $d >1$, and the case $d=1$ corresponds to \cref{thm:W_gl_Airy_Coxeter}.

\begin{definition}
\label{deexpreg} We say $\alpha \in \mathfrak{a}$ is \emph{exceptionally admissible} if
\begin{itemize}
\item there exists a unique $\mu_- \in \tilde{\mathfrak{a}}_{\alpha}$ such that $s_{\mu_-} = +\infty$, and it has $r_{\mu_-} = 1$.
\item the three first properties in \cref{deirreg} that do not involve $\mu_-$ are satisfied.
\item there exists $\mu_+ \in \tilde{\mathfrak{a}}_{\alpha} \setminus \{\mu_-\}$ such that $r_{\mu_+} = -1\,\,{\rm mod}\,\,s_{\mu_+}$.
\item for any $\mu \in \tilde{\mathfrak{a}}_{\alpha} \setminus \{\mu_-,\mu_+\}$, we have $s_{\mu} = 1$ and $\tfrac{r_{\mu_+}}{s_{\mu_+}} \geq r_{\mu}$.
\end{itemize}
\end{definition}
Allowing infinite $s$, the first condition guarantees that we avoid the pathologies \emph{(i)} and \emph{(ii)} in \cref{lemYA}, which make the denominator of the recursion kernel be identically zero in some open set. The last two conditions match those in \cref{thm:W_gl_Airy_arbitrary_autom}.

\begin{definition}
\label{dall} A spectral curve $\mc{C} = (C,x,y)$ is \emph{admissible} if all $\alpha \in \mathfrak{a}$ are either regularly, irregularly or exceptionally admissible. The tuple $(r_{\mu},s_{\mu})_{\mu \in \tilde{\mathfrak{a}}_{\alpha}}$ is called the type of the ramification point $\alpha \in \mathfrak{a}$. 
\end{definition}

\begin{theorem}
\label{mainth2}Let $\mc{C}$ be an admissible spectral curve equipped with a fundamental bidifferential of the second kind $\omega_{0,2}$ and with a crosscap differential $\omega_{\frac{1}{2},1}$. Then there exists a unique way to complete $(\omega_{0,1},\omega_{\frac{1}{2},1},\omega_{0,2})$ into a system of correlators $\boldsymbol{\omega}$ satisfying the projection property and the abstract loop equations (or the master loop equations). Moreover, $\omega_{g,n}$ is computed by the topological recursion \eqref{TRsum} by induction on $2g - 2 + n > 0$, and the result of this formula is symmetric in all its variables. The $F_{g,n}$ determined by its decomposition are the coefficients of expansion of the partition function of the Airy structure introduced in \cref{Airytocurve}.
\end{theorem}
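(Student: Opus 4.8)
The plan is to assemble \cref{mainth2} from the pieces already established, as this theorem is essentially a synthesis statement. First I would observe that admissibility (\cref{dall}) precisely means that at each $\alpha \in \mathfrak{a}$, the local data $(r_\mu, s_\mu, t_\mu, Q_\mu)_{\mu \in \tilde{\mathfrak{a}}_\alpha}$ satisfies the hypotheses of \cref{thm:W_gl_Airy_arbitrary_autom} (in the standard, exceptional, or — when $|\tilde{\mathfrak{a}}_\alpha| = 1$ — the single-cycle \cref{thm:W_gl_Airy_Coxeter}, with the convention $s_\alpha = r_\alpha + 1$ in the regularly admissible case), and moreover that none of the pathological conditions (i)--(iv) of \cref{lemYA} hold. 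The latter is checked case by case directly from \cref{dereg,deirreg,deexpreg}: the coprimality $\gcd(r_\mu, s_\mu) = 1$ noted after \cref{deirreg} rules out (iv), the condition $t_\mu^{r_\mu} \neq t_\nu^{r_\nu}$ whenever $(r_\mu, s_\mu) = (r_\nu, s_\nu)$ rules out (iii), and the requirement that at most one $\mu$ has $s_\mu = +\infty$ with $r_{\mu_-} = 1$ rules out (i) and (ii).

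Next I would establish uniqueness and the topological recursion formula. Given $(\omega_{0,1}, \omega_{\frac{1}{2},1}, \omega_{0,2})$, \cref{combinabsloop} already states that \eqref{TRsum} gives the unique solution to the abstract loop equations satisfying the projection property, provided such a solution exists (i.e. provided the multidifferentials produced by \eqref{TRsum} turn out to be symmetric in all their variables). So the only thing that remains is \emph{existence}: one must exhibit a family of correlators $\boldsymbol{\omega}$ satisfying both the projection property and the abstract loop equations, with the prescribed $\omega_{0,1} = \tilde{y}\,\dd\tilde{x}$, $\omega_{\frac{1}{2},1}$, $\omega_{0,2}$.

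To produce this family, I would invoke the Airy structure. Apply \cref{thm:W_gl_Airy_arbitrary_autom} (together with \cref{prop:AiryAllDilatonPolarize} to allow the general dilaton shift and change of polarisation encoded by $\omega_{0,1}, \omega_{\frac{1}{2},1}, \omega_{0,2}$) to each $\alpha \in \mathfrak{a}$: the operators $(H_{\alpha;i,k})_{(\alpha,i,k) \in \mathcal{I}}$ of \cref{Airytocurve} form an Airy structure on the filtered vector space $E = \bigoplus_{\mu \in \tilde{\mathfrak{a}},\, k > 0} \mathbb{C}\langle x_k^\mu \rangle$ — here the operators attached to distinct $\alpha$ commute since they involve disjoint sets of variables, so the union over $\alpha$ is again an Airy structure. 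By \cref{thm:partition_fct_exists} (in its infinite-dimensional form), there is a unique partition function $Z = e^F$ with $F$ of the shape \eqref{Fsum} annihilated by all $H_{\alpha;i,k}$ with $(\alpha,i,k) \in \mathcal{I}$. Now define $\omega_{g,n}$ for $2g - 2 + n > 0$ from the coefficients $F_{g,n}\big[\begin{smallmatrix}\boldsymbol{\mu} \\ \mathbf{k}\end{smallmatrix}\big]$ via \eqref{omgnxp} using the globally-defined forms $\dd\xi^\mu_{-k}$ of \eqref{xinegdef}; together with the given $\omega_{0,1}, \omega_{\frac{1}{2},1}, \omega_{0,2}$ this is a family of correlators, and it satisfies the projection property by construction (the $\dd\xi^\mu_{-k}$ are built precisely so that \eqref{omnorm} holds, cf. the discussion around \eqref{xinegdef}). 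By \cref{HASequivALE}, the differential constraints $e^{-F} H_{\alpha;i,k} e^F \cdot 1 = 0$ for $(\alpha,i,k) \in \mathcal{I}$ are equivalent to the abstract loop equations for $\boldsymbol{\omega}$; hence $\boldsymbol{\omega}$ satisfies the abstract loop equations. This proves existence.

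Finally, I would close the loop: by \cref{ALEtoMLE} the abstract loop equations imply the master loop equations, and then by \cref{MLEtoTR} the correlators $\boldsymbol{\omega}$ are given by the topological recursion \eqref{TRsum}. Since \cref{combinabsloop} already told us the solution is unique if it exists, and we have now produced it, we conclude that \eqref{TRsum} well-definedly computes a symmetric family of multidifferentials, and that the $F_{g,n}$ in its decomposition are exactly the free energies of the Airy structure of \cref{Airytocurve}. The main conceptual point — and the step that carries the real content — is that symmetry of the $\omega_{g,n}$, which is \emph{a priori} not at all obvious from the residue formula \eqref{TRsum} (it treats $z_0$ asymmetrically), is obtained for free from the Airy structure side: it is just the symmetry of $F_{g,n} \in \widehat{\mathrm{Sym}^n(E^*)}$ guaranteed by \cref{thm:partition_fct_exists}. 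All the genuine labor has already been done, in the proof of \cref{thm:W_gl_Airy_arbitrary_autom} in \cref{sec:W_gl_airy_structs_proof} and in the residue manipulations of \cref{MLEtoTR,ALEtoMLE}; the remaining obstacle here is merely the bookkeeping of matching the admissibility conditions of \cref{dereg,deirreg,deexpreg} against the hypotheses of \cref{thm:W_gl_Airy_arbitrary_autom,lemYA} case by case, which is routine.
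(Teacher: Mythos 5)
Your proof is correct and follows exactly the route the paper intends, assembling the result from the Airy structure theorems (\cref{thm:W_gl_Airy_arbitrary_autom}, \cref{thm:W_gl_Airy_Coxeter}, \cref{prop:AiryAllDilatonPolarize}), \cref{HASequivALE}, \cref{ALEtoMLE}, \cref{MLEtoTR}, and \cref{combinabsloop}, with symmetry of the $\omega_{g,n}$ imported for free from the Airy structure's partition function. The one small imprecision is your justification of cross-$\alpha$ commutativity: after conjugation by the global change-of-polarisation operator $\hat{\Phi}$ the operators $H_{\alpha;i,k}$ for distinct $\alpha$ no longer involve disjoint sets of variables, but they nevertheless commute because the unconjugated $\,{}^{\sigma_\alpha}W_{i,k}$ do and conjugation preserves commutators, and the degree-one condition likewise survives since $\hat{\Phi}$ only affects negative modes.
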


For smooth curves with simple ramifications --- i.e. $\tilde{C} = C$, $|\tilde{\mf{a}}_{\alpha}| = 1$ and $r_{\alpha} = 2$ for all $\alpha \in \mathfrak{a}$ --- the symmetry is proved in \cite[Theorem~4.6]{EyOr07}. For admissible smooth curves, \cref{mainth2} is proved in \cite[Theorem 5.32]{BBCCN18}. For singular curves, the admissibility condition we have adopted is not far from being optimal for this formulation of the abstract loop equation/formulas like \eqref{TRsum}. It may not be impossible to define a topological recursion for more general spectral curves, but either the formula \eqref{TRsum} will have to be different or the exponents in the master loop equations/abstract loop equations should be increased, in a consistent way so that there still exist a unique symmetric solution.\par

We obtained \Cref{mainth2} by exploiting the dictionary established in \Cref{SAIRPS} and using \Cref{prop:AiryAllDilatonPolarize}. It is therefore natural to ask what the analogous statement of \Cref{thm:genus0_soln_admissible_Hik} (with the addition of \Cref{rem:AiryAllDilatonPolarizegenuszero}) in the setting of spectral curves should be. For this we make the following observation.
\begin{proposition}\label{HASequivALEgequal0}
	Again assume that none of the conditions (i), (ii), (iii), (iv) appearing in \cref{lemYA} are satisfied, and let $\boldsymbol{\omega}$ be a system of correlators satisfying the projection property. Then, $\boldsymbol{\omega}$ satisfies the abstract loop equations for $g=0$\footnote{With this we mean that for any $n \geq 2$, $\alpha \in \mathfrak{a}$ and $i \in [r_{\alpha}]$ equation \eqref{abstrloopeq} is satisfied for $g=0$ as $x_0 \rightarrow x(\alpha)$.} if and only if the associated partition function satisfies
	\begin{equation}
		\label{PDEtocorr_leadingord}
		\forall (\alpha,i,k) \in \mathcal{I},\qquad e^{-F} H_{\alpha;i,k} e^{F} \cdot 1 = o(\hbar^{\frac{1}{2}})\,.
	\end{equation}
\end{proposition}
From this it becomes clear how to translate \Cref{thm:genus0_soln_admissible_Hik}.
\begin{definition}
	\label{dequasi} We say $\alpha \in \mathfrak{a}$ is \emph{admissible in genus $0$} if
	\begin{itemize}
		\item There is at most one  $\mu_- \in \tilde{\mathfrak{a}}_{\alpha}$ for which $s_{\mu_-} = +\infty$. If such a $\mu_-$ exists then $r_{\mu_-} = 1$.
		
		\item for any distinct $\mu,\nu \in \tilde{\mathfrak{a}}_{\alpha}$ such that $(r_\mu,s_\mu) = (r_\nu,s_\nu)$, we have $t_\mu^{r_\mu} \neq t_\nu^{r_\nu}$.
		
		\item $r_{\mu} = \pm 1\,\,{\rm mod}\,\,s_{\mu}$ for all $\mu\in\tilde{\mf{a}}_\alpha$.
		
		\item for all $\mu_1 \neq \mu_2$ with $s_{\mu_i}>2$ such that either 
		\begin{equation*}
			r_{\mu_1} = 1 \,\,{\rm mod}\,\, s_{\mu_1} \text{ and } r_{\mu_2} = 1 \,\,{\rm mod}\,\, s_{\mu_2}
		\end{equation*}
		or
		\begin{equation*}
			r_{\mu_1} = -1 \,\,{\rm mod}\,\, s_{\mu_1} \text{ and } r_{\mu_2} = -1 \,\,{\rm mod}\,\, s_{\mu_2}
		\end{equation*}
		one has $\big\lfloor \frac{r_{\mu_1}}{s_{\mu_1}} \big\rfloor \neq \big\lfloor\frac{ r_{\mu_2}}{s_{\mu_2}}\big\rfloor$.
		
		\item if there are pairwise distinct $\mu_1,\mu_2,\mu_3\in\tilde{\mathfrak{a}}_{\alpha}$ with $\big\lfloor\frac{r_{\mu_1}}{s_{\mu_1}}\big\rfloor = \big\lfloor\frac{r_{\mu_2}}{s_{\mu_2}}\big\rfloor = \big\lfloor\frac{r_{\mu_3}}{s_{\mu_3}}\big\rfloor$, then there is an $m\in \{1,2,3\}$ for which $s_{\mu_m}=1$.
	\end{itemize}
	A spectral curve $\mc{C} = (C,x,y)$ is \emph{admissible in genus $0$} if all $\alpha \in \mathfrak{a}$ are.
\end{definition}
\begin{theorem}
	\label{mainth2gequal0}If $\mc{C}$ be a spectral curve admissible in genus $0$, equipped with a fundamental bidifferential of the second kind $\omega_{0,2}$ there exists a unique way to complete $(\omega_{0,1},\omega_{0,2})$ into a system of correlators $(\omega_{0,n})_{n\geq 1}$ satisfying the projection property and the abstract loop equations (or the master loop equations) for $g=0$. Moreover, $\omega_{0,n}$ is computed by the topological recursion \eqref{TRsum} by induction on $n > 0$, and the result of this formula is symmetric in all its variables. The $F_{0,n}$ determined by its decomposition are the free energies solving \eqref{PDEtocorr_leadingord}. 
\end{theorem}
Clearly, not every curve admissible in genus $0$ is admissible and there is indeed a good reason for this definition. Suppose we are in the setting of \Cref{mainth2gequal0}. Then given a crosscap differential $\omega_{\frac{1}{2},1}$ one could ask whether one can extend the family $(\omega_{0,1},\omega_{0,2},\omega_{0,3},\ldots,\omega_{\frac{1}{2},1})$ to a system of correlators $\boldsymbol{\omega}$ satisfying the abstract loop equations for all $g\geq 0$. Doing explicit calculations we will observe in \Cref{Sec63} that this is not always possible without at least imposing certain conditions on the choice of $\omega_{\frac{1}{2},1}$.

\medskip

\subsection{Decoupling of exceptional components}

\medskip

If we erase some or all of the components of an exceptionally admissible local spectral curve $\mc{C}$ indexed by the $\mu_- \in \mathfrak{a}$ such that $s_{\mu_-} = \infty$, we still obtain an admissible local spectral curve $\mc{C}'$. We prove below a decoupling result if $\omega_{0,2}$ has no cross-terms with these components and $\omega_{\frac{1}{2},1}$ vanishes on these components. This decoupling means that the computing $\omega_{g,n}$ on $ \mc{C}$ and restricting to $\mc{C}'$ gives the same result as restricting $(\omega_{0,1},\omega_{0,2},\omega_{\frac{1}{2},1})$ to $ \mc{C}'$ and then computing $ \omega_{g,n}$ by the topological recursion on $ \mc{C}'$.

\begin{proposition}
\label{propdecouple} Let $(C,x,y)$ be an exceptionally admissible spectral curve, equipped with a fundamental bidifferential of the second kind $\omega_{0,2}$ and with a crosscap differential $\omega_{\frac{1}{2},1}$.

Let $\mathfrak{a}'$ be a non-empty subset of exceptionally admissible ramification points, and denote $\tilde{\mathfrak{a}}'$ the set of $\mu_- \in \tilde{\mathfrak{a}}$ such that $\mu_- \in \tilde{\mathfrak{a}}_{\alpha}$ for some $\alpha \in \mathfrak{a}'$ and $s_{\mu_-} = \infty$. Assume that for any $\mu_- \in \tilde{\mathfrak{a}}'$ and $\nu \in \tilde{\mathfrak{a}} \setminus \tilde{\mathfrak{a}}'$ we have
\[
(\mathsf{Loc}_{\mu_-} \otimes \mathsf{Loc}_{\nu})(\omega_{0,2}) = 0\,,\qquad \mathsf{Loc}_{\mu_-}(\omega_{\frac{1}{2},1}) = 0\,.
\]

Then, if we denote $\boldsymbol{\omega}$ the outcome of topological recursion and
\[
\mathsf{Loc}' = \bigsqcup_{\mu \in \tilde{\mathfrak{a}}\setminus \tilde{\mathfrak{a}}'} \mathsf{Loc}_{\mu}\,,
\]
the $\mathsf{Loc}'$-projection of the system of correlators $\boldsymbol{\omega}$ obtained from $(C,x,y,\omega_{0,2},\omega_{1,\frac{1}{2}})$ by the topological recursion \eqref{TRsum}, satisfies the topological recursion on the local spectral curve
\[
(\tilde{C}^{{\rm loc}})' = \bigsqcup_{\mu \in \tilde{\mathfrak{a}} \setminus \tilde{\mathfrak{a}}'} \tilde{C}_{\mu}
\]
equipped with the restriction of $x,y,\omega_{0,2},\omega_{\frac{1}{2},1}$ onto $(\tilde{C}^{{\rm loc}})'$.
\end{proposition}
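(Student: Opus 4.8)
The plan is to show that the $\mathsf{Loc}'$-restriction $\boldsymbol{\eta}$ of $\boldsymbol{\omega}$ coincides with the output of the topological recursion on $\mathcal{C}'$, by induction on $2g-2+n$, matching the formula \eqref{TRsum} residue by residue. Since $\mathcal{C}'$ is admissible (as noted just above) and $\boldsymbol{\omega}$ is symmetric by \cref{mainth2}, \eqref{TRsum} applies on both curves. The key facts exploited are: on every erased branch $\mu_-\in\tilde{\mathfrak{a}}'$ one has $\tilde y\equiv 0$ (because $s_{\mu_-}=\infty$) and $r_{\mu_-}=1$, so $\omega_{0,1}$ vanishes there and the $\tilde x$-fibre of such a branch is a single point; moreover $\mathsf{Loc}_{\mu_-}(\omega_{\frac12,1})=0$ and $\omega_{0,2}$ has no cross-terms between $\tilde{\mathfrak{a}}'$ and its complement (the standard part being component-diagonal), so $\omega_{0,1},\omega_{\frac12,1}$ vanish on $\tilde{\mathfrak{a}}'$ and $\omega_{0,2}(\cdot,z_0)$ restricts to $0$ on any erased component whenever $z_0$ lies off $\tilde{\mathfrak{a}}'$. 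It is also crucial that each exceptional branchpoint carries a \emph{unique} $s=\infty$ branch and that this branch is unramified. The induction proceeds in two stages: a ``non-mixing'' vanishing statement, and the decoupling identity itself.

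The first stage (``$\star$'') is: for $2g-2+m>0$, $\omega_{g,m}$ vanishes whenever exactly one of its arguments lies on $\tilde{\mathfrak{a}}'$ and the others lie off it. We would prove this by induction on $2g-2+m$, placing the $\tilde{\mathfrak{a}}'$-argument in the distinguished slot $z_0$ of \eqref{TRsum}. A residue at a branch $\mu$ off $\tilde{\mathfrak{a}}'$ vanishes because the kernel numerator $\int_\mu^z\omega_{0,2}(\cdot,z_0)$ is the restriction to the component of $\mu$ of $\omega_{0,2}(\cdot,z_0)$, which is $0$. A residue at a branch $\mu_-\in\tilde{\mathfrak{a}}'$ vanishes because $r_{\mu_-}=1$ forces $\mathfrak{f}'_\alpha(z)$ to consist only of points off $\tilde{\mathfrak{a}}'$; hence in each term of $\mathcal{W}'_{g,|Z|,n}(z,Z;z_{[n]})$ the block of the partition containing $z$ is some $\omega_{g_L,\bullet}(z,\cdots)$ with its single $\tilde{\mathfrak{a}}'$-leg $z$ and all other legs off $\tilde{\mathfrak{a}}'$, which is $0$ either by the unstable data above or by the inductive hypothesis; so $\mathcal{W}'\equiv 0$ there. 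A short check of sizes shows the relevant intermediate correlators have strictly smaller $2g-2+(\#\text{legs})$, so the induction closes; the base cases reduce to the vanishing of $\omega_{0,1},\omega_{0,2},\omega_{\frac12,1}$ on the erased components.

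The second stage is the decoupling identity proper. For $z_0,z_{[n]}$ all off $\tilde{\mathfrak{a}}'$, compute $\omega_{g,n+1}(z_0,z_{[n]})$ by \eqref{TRsum}. Residues at branches in $\tilde{\mathfrak{a}}'$ vanish (kernel numerator, as above). Among the non-erased branches, those $\mu_+$ over an $\alpha\in\mathfrak{a}'$ with $\tilde{\mathfrak{a}}_\alpha=\{\mu_-,\mu_+\}$ and $r_{\mu_+}=1$ — which cease to be ramification points of $\mathcal{C}'$ — also contribute $0$: the only fibre point is the $\mu_-$-point $z_*$, so $\mathcal{W}'$ again carries a block with exactly one $\tilde{\mathfrak{a}}'$-leg, vanishing by ($\star$) or the unstable data. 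At the genuine branchpoints $\mu\in\tilde{\mathfrak{a}}^{\mathcal{C}'}$ one splits the sum over $Z\subseteq\mathfrak{f}'_\alpha(z)$ according to whether $z_*\in Z$: the terms with $z_*\in Z$ vanish by the same ($\star$) argument, while the terms with $z_*\notin Z$ run over the $\mathcal{C}'$-fibre, carry the same kernel (since $\tilde y,\dd\tilde x,\omega_{0,2}$ restrict), and, by the inductive hypothesis that the lower correlators restrict correctly, yield exactly the corresponding residue in the topological recursion on $\mathcal{C}'$. Summing over all branches then gives $\mathsf{Loc}'^{\otimes(n+1)}(\omega_{g,n+1})=\omega^{\mathcal{C}'}_{g,n+1}$, which is the claim (and $\boldsymbol{\eta}$ is automatically a legitimate family of correlators on $\mathcal{C}'$ because the right-hand side is, by \cref{mainth2}).

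The hard part is the ``spurious'' case in the third paragraph: after erasing $\mu_-$, a node may become an ordinary smooth point $\mu_+$ of $\mathcal{C}'$, and one must rule out that $\omega_{g,n+1}$ has a pole there in the non-erased variables (which would be incompatible with the $\mathcal{C}'$-projection property and with the residue vanishing above). This is exactly where ($\star$) is used again: the abstract loop equation at such a node at order $i=1$ forces $\omega_{g,n+1}(z_*^{\mu_-},z_{[n]})+\omega_{g,n+1}(z_*^{\mu_+},z_{[n]})$ to be holomorphic as $z_*$ tends to the node, and ($\star$) kills the first summand when $z_{[n]}$ lies off $\tilde{\mathfrak{a}}'$, so $\omega_{g,n+1}$ is holomorphic at $\mu_+$. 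So there is no genuine obstruction once ($\star$) is in hand; the real work is the induction for ($\star$) and the careful bookkeeping — ultimately hinging on $r_{\mu_-}=1$ and the uniqueness of the $s=\infty$ branch over each exceptional point — ensuring the non-mixing mechanism is stable under the recursion.
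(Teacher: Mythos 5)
Your proof is correct and follows the paper's two-stage strategy: first show that every $\omega_{g,m}$ with exactly one argument on an erased component vanishes (your $\star$), then match the residue contributions of \eqref{TRsum} term by term, observing that terms with the exceptional fibre point vanish by $\star$ and the remaining terms reproduce the recursion on $(\tilde C^{\mathrm{loc}})'$. The paper streamlines this by passing immediately to the formal local spectral curve, which makes the bookkeeping lighter but is otherwise the same argument. The detour through the abstract loop equation in your final paragraph is not actually needed: since the coefficient of $\dd\xi^{\mu_+}_{-k}(z_0)$ in $\omega_{g,n+1}$ arises solely from the residue at $\mu_+$ in \eqref{TRsum}, and you have already shown that residue vanishes (because $\mathfrak{f}'_\alpha(z)$ at such a $\mu_+$ consists only of the exceptional point, so every $Z$ contributes $0$), the restricted correlator is automatically regular at $\mu_+$ without invoking the $i=1$ abstract loop equation.
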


\begin{proof}
We detail the proof in the case of a single ramification point with  $s_{\mu_-} = \infty$. The general case follows because topological recursion is local. It suffices to work from the start with the normalised local spectral curve attached to $(C,x,y)$. We write $ \tilde{C}^{\rm loc}_- $ for the connected component of $ \tilde{C}^{\rm loc}$ associated to $ \mu_-$ and $ \tilde{C}^{\rm loc}_+$ for all other components.

First let us prove all $ \omega_{g,n}$ with exactly one argument in $\tilde{C}_-^{{\rm loc}}$ are zero. We will prove this by induction on the Euler characteristic. The base cases hold, as $ \omega_{0,1}$ and $ \omega_{\frac{1}{2},1}$ vanish on $\tilde{C}_-^{{\rm loc}}$ and $ \omega_{0,2}$ does not have cross-terms. For the induction step, let us recall the topological recursion formula \eqref{TRsum}.
\begin{align*}
\omega_{g,n + 1}(z_0,z_{[n]}) &= \sum_{\alpha \in \mathfrak{a}} \sum_{\mu \in \tilde{\mathfrak{a}}_{\alpha}} \Res_{z = \mu}\bigg( \sum_{\{z\} \subset Z \subseteq \mathfrak{f}_{\alpha}(z)} K_{\mu}^{(|Z|)}(z_0;Z) \mc{W}'_{g,|Z|,n}(Z;z_{[n]})\bigg)\,,\\
K_{\mu}^{(m)} (z_0;z_{[m]}) &\coloneqq  - \frac{\int_{\mu}^z \omega_{0,2}(\cdot, z_0)}{\prod_{l=2}^m \big((\tilde{y}(z_l) - \tilde{y}(z_1))\dd \tilde{x}(z_1)\big)}\,.
\end{align*}
In this case, $ \mf{a} = \{ 0\}$, and $ \tilde{\mf{a}}_0 = \{ 0 \in \tilde{C}_+^{{\rm loc}}, 0\in \tilde{C}_-^{{\rm loc}}\}.$\par
Let us analyse this formula with $ z_0 \in \tilde{C}_-^{{\rm loc}}$ and all $ z_i  \in \tilde{C}_+^{{\rm loc}}$ for $ i \in [n]$.  First, because the $ \omega_{0,2}$ does not contain cross-terms, we must have $ z \in C_-$, so also $ \mu = 0 \in \tilde{C}_-^{{\rm loc}}$. Because $\tilde{x}|_{\tilde{C}_-^{{\rm loc}}}$ is injective, $\mf{f}_0(z)$ contains no other point in $\tilde{C}^{{\rm loc}}_-$. Hence for all terms in the sum, $ Z \cap \tilde{C}^{{\rm loc}}_- = \{ z\}$, and by the induction hypothesis, all $\mc{W}'(Z;z_{[n]})$ are zero (they contain a factor $ \omega_{g',n'}$ with $ 2g' -2 +n' < 2g-2+n$ and one argument in $\tilde{C}^{{\rm loc}}_-$).\par
Now, we will prove the proposition using a similar induction. The base cases, $ \omega_{0,1}$ and $ \omega_{0,2}$ (and the trivial $ \omega_{\frac{1}{2},1}$) do indeed not mix several components.\par
For the induction step, we again look at the topological recursion formula. Let us look at the terms contributing in the case $ z_0$ and all $z_{[n]}$ are in $\tilde{C}^{{\rm loc}}_+$.  As before, $ z \in \tilde{C}^{{\rm loc}}_+$. Furthermore, $ \mf{f}_0(z) $ contains exactly one element, say $ \zeta$, in $\tilde{C}^{{\rm loc}}_-$, so any $ Z$ contains at most one such element. Any term in the sum not containing $\zeta $ also contributes to the topological recursion on $\tilde{C}^{{\rm loc}}_+$, and all terms including $ \zeta $ must vanish by the first part of this proof, as they have a factor $ \omega_{g',n'}$ with exactly one argument --- namely $\zeta $ --- in $\tilde{C}^{{\rm loc}}_-$. 
\end{proof}

\medskip

\section{Calculations for low \texorpdfstring{$2g - 2 + n$}{2g-2+n}.}
\label{sec:free_energ_arb_autom}
\medskip

In this section, we calculate some of the first correlators in the unique way of \cref{mainth2}, but for not necessarily admissible spectral curves. In this generality, there is no guarantee for the correlators to be symmetric, and we will find that indeed they are not for certain choices of parameters. As correlators coming from Airy structures are symmetric by construction, these calculations give necessary condition for our collections of differential operators to form Airy structures. These conditions are summarised in \cref{sec:NecCond}.
\medskip
\subsection{The standard case}
\label{sec:stndcasecorr}
\medskip

Let us consider the spectral curve with a unique ramification point $\alpha$ at which $d$ irreducible components labelled by $\tilde{\mf{a}}_\alpha=\tilde{\mf{a}}$ intersect, and defined for $\mu \in \tilde{\mf{a}}$ and $z$ on the $\mu$th component of the normalisation by the formulas
\begin{equation}
\label{eq:monoSC}
x\big(\begin{smallmatrix} \mu \\ z \end{smallmatrix}\big) = z^{r_{\mu}}\,,\qquad y\big(\begin{smallmatrix} \mu \\ z \end{smallmatrix}\big) = -t_{\mu}\,z^{s_{\mu} - r_{\mu}}\,.
\end{equation}
We equip it with the bidifferential and the crosscap differential
\[
\qquad \omega_{0,2}\big(\begin{smallmatrix} \mu_1 & \mu_2 \\ z_1 & z_2 \end{smallmatrix}\big) = \delta_{\mu_1,\mu_2}\,\frac{\dd z_1 \dd z_2}{(z_1 - z_2)^2}\,,\qquad  \omega_{\frac{1}{2},1}\big(\begin{smallmatrix} \mu \\ z \end{smallmatrix}\big) = \frac{Q_{\mu}\dd z}{z}\,.
\]
We assume that ${\rm gcd}(r_{\mu},s_{\mu}) = 1$, and that for $\frac{r_{\mu}}{s_{\mu}} = \frac{r_{\nu}}{s_{\nu}}$ and $\mu \neq \nu$ we must have $t_{\mu}^{r_\mu} \neq t_{\nu}^{r_\nu}$. The correlators for $\chi = 2 - 2g - n = -1$ are
\begin{equation}
\begin{split}  
\label{thefirstg}\omega_{0,3}(z_1,z_2,z_3) & = \sum_{\mu \in \tilde{\mf{a}}} \mathop{{\rm Res}}_{z = \mu} \sum_{z' \in \mathfrak{f}'(z)} K_\mu(z_1,z,z')\Big(\omega_{0,2}(z,z_2)\,\omega_{0,2}(z',z_3) + \omega_{0,2}(z,z_3)\,\omega_{0,2}(z',z_2)\Big)\,, \\ 
\omega_{\frac{1}{2},2}(z_1,z_2) & = \sum_{\mu \in \tilde{\mf{a}}}  \mathop{{\rm Res}}_{z = \mu} \sum_{z' \in \mathfrak{f}'(z)} K_\mu(z_1,z,z')\Big(\omega_{0,2}(z,z_2)\,\omega_{\frac{1}{2},1}(z') + \omega_{0,2}(z',z_2)\, \omega_{\frac{1}{2},1}(z)\Big)\,, \\ 
\omega_{1,1}(z_1) & = \sum_{\mu \in \tilde{\mf{a}}}  \mathop{{\rm Res}}_{z = \mu} \sum_{z' \in \mathfrak{f}'(z)} K_\mu(z_1,z,z')\Big(\omega_{0,2}(z,z') + \omega_{\frac{1}{2},1}(z)\,\omega_{\frac{1}{2},1}(z')\Big) \,,
\end{split}
\end{equation}
where  
\[
K_\mu(z_1,z,z') = \frac{\int_\mu^{z} \omega_{0,2}(\cdot,z_1)}{(y(z) - y(z'))\dd x(z)}\,.
\]

In this section, we compute these correlators, and show that the symmetry of $\omega_{0,3}$ and $\omega_{\frac{1}{2},2}$ poses constraints on the parameters $(r_{\mu},s_{\mu},t_{\mu},Q_{\mu})_{\mu \in \tilde{\mf{a}}}$. We also obtain similar constraints from partial calculation of $ \omega_{0,4} $. In the light of \Cref{HASequivALEgequal0} and \Cref{mainth2gequal0} the symmetry of $\omega_{0,3}$ and $ \omega_{0,4} $ is necessary so that the family of differential operators from admits a partition function solving the associated system of differential equations to leading order in $\hbar^\frac{1}{2}$. In the same way \Cref{HASequivALE} and \Cref{mainth2} tell us that additionally we need to take into account the symmetry conditions for $\omega_{\frac{1}{2},2}$ which are necessary constraints to obtain Airy structures from the construction presented in Section~\ref{S2}. These ideas will be used in \cref{secneccons} to prove \cref{prop:gen_zero_nec=suff,propneccons}.

\medskip

\subsubsection{Genus zero}
\label{GenusZeroConstraints}

\medskip

In this \namecref{GenusZeroConstraints}, we calculate $\omega_{0,3}$ and obtain constraints on the parameters of the spectral curve \eqref{eq:monoSC} that are necessary for the symmetry of the correlators $ \omega_{0,3} $ and $ \omega_{0,4}$.
\begin{proposition}
	\label{prop:symm_cond_omega03}
	Assume $r_\mu$ and $s_\mu$ are coprime for all $\mu\in\tilde{\mf{a}}$. Then $\omega_{0,3}$ is symmetric if and only if the following holds
	\begin{enumerate}
		\item\label{item:r_eq_pm1mods}  $r_\mu = \pm 1 \,\,{\rm mod}\,\, s_\mu$ for all $\mu\in \tilde{\mf{a}}$.
		\item\label{item:eq_fract_constr} For all $\mu_1 \neq \mu_2$ with $s_{\mu_i}>2$ such that either 
		\begin{equation*}
		r_{\mu_1} = 1 \,\,{\rm mod}\,\, s_{\mu_1} \text{ and } r_{\mu_2} = 1 \,\,{\rm mod}\,\, s_{\mu_2}
		\end{equation*}
		or
		\begin{equation*}
		r_{\mu_1} = -1 \,\,{\rm mod}\,\, s_{\mu_1} \text{ and } r_{\mu_2} = -1 \,\,{\rm mod}\,\, s_{\mu_2}
		\end{equation*}
		one has $\big\lfloor \tfrac{r_{\mu_1}}{s_{\mu_1}} \big\rfloor \neq \big\lfloor\tfrac{ r_{\mu_2}}{s_{\mu_2}}\big\rfloor$.
	\end{enumerate}
	When these conditions are satisfied, then $\omega_{0,3}$ is given by
	\begin{equation}\label{03explicit}
	\omega_{0,3}\big(\begin{smallmatrix} \mu_1 & \mu_2 & \mu_3 \\ z_1 & z_2 & z_3 \end{smallmatrix}\big) = \sum_{\mu}\frac{c_\mu \delta_{\mu_1,\mu_2,\mu_3,\mu}}{t_\mu r_\mu} \sum_{k_1,k_2,k_3 > 0} \frac{k_1k_2k_3\,\dd z_1  \dd z_2 \dd z_3}{z_1^{k_1 + 1}z_2^{k_2 + 1}z_3^{k_3 + 1}} \delta_{k_1 + k_2 + k_3,s_{\mu}}\,,
	\end{equation}
	where
	\begin{equation}
	\label{defcmu}
	c_\mu \coloneqq \begin{cases} - r'_\mu & r_\mu = r'_\mu s_\mu + 1\\ r'_\mu + 1 & r_\mu = r'_\mu s_\mu + s_\mu -1\end{cases}\,.
	\end{equation} 
\end{proposition}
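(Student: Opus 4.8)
The strategy is to compute $\omega_{0,3}$ directly from the topological recursion formula \eqref{thefirstg}, extract the pole-order constraints on the parameters that are forced by symmetry in $z_1,z_2,z_3$, and then identify the explicit form \eqref{03explicit} in the regime where those constraints hold. The key point is that, because each component is \emph{unramified} with respect to the simple monomial data $x = z^{r_\mu}$, $y = -t_\mu z^{s_\mu - r_\mu}$, the fibre $\mathfrak{f}(z)$ decomposes over $\tilde{\mf{a}}$, and only those summands contribute to the residue at $\mu$ for which the kernel $K_\mu$ actually has a pole there. First I would expand $K_\mu(z_1,z,z')$ near $z = \mu$: the numerator $\int_\mu^z \omega_{0,2}(\cdot,z_1)$ is $O(\zeta)$, $\dd x(z) = r_\mu \zeta^{r_\mu - 1}\dd\zeta$, and the denominator factor $y(z) - y(z')$ over $z' \in \mathfrak{f}'(z)$ must be analysed by \cref{lemYA} (with $i = 2$ rather than $i = r_\alpha$, so only pairs of points are involved). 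This produces the leading behaviour of $K_\mu$ as a function of $\zeta$; its order of vanishing or pole is controlled precisely by $\mathfrak{v}_\mu$-type exponents and by whether $\gcd(r_\mu,s_\mu)=1$ and whether the $t_\mu$ collide.

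Second, I would feed in $\omega_{0,2}(z,z_j)=\sum_{k>0}k\,z^{k-1}z_j^{-k-1}\dd z\,\dd z_j + (\text{cross terms, here zero})$ and take the residue. The residue at $z = \mu$ picks out a finite linear combination of $z_1^{-k_1-1}z_2^{-k_2-1}z_3^{-k_3-1}$ with the constraint $k_1+k_2+k_3 = s_\mu$ coming from matching powers of $\zeta$; the prefactor is a sum over $r_\mu$-th roots of unity of $y(z)-y(z')$, which is exactly the scalar $\mathfrak{t}_{\alpha,\mu}$ (here $r_\mu t_\mu^{r_\mu-1}$ up to the root-of-unity sum) appearing in \cref{lemYA}. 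The delicate bookkeeping is the root-of-unity sum $\frac{1}{r_\mu}\sum_{j=1}^{r_\mu-1}(\theta_\mu^{s_\mu j}-1)^{-1}$-type expression; evaluating this is where the integers $c_\mu$ in \eqref{defcmu} arise, and I expect this is cleanest when $r_\mu = \pm1 \bmod s_\mu$, which is condition \eqref{item:r_eq_pm1mods}. If $r_\mu \not\equiv \pm 1\bmod s_\mu$, the $\zeta$-expansion of the denominator $\prod_{z'}(y(z')-y(z))$ fails to be a pure power (the partition $\lambda^\mu$ of \cref{prop:k_min_partition_one_cycle} is not descending), the residue picks up genuinely different monomials in $z$, and symmetry in $z_1,z_2,z_3$ breaks — this is the ``only if'' direction of \eqref{item:r_eq_pm1mods}.

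For condition \eqref{item:eq_fract_constr}, the obstruction to symmetry comes from \emph{cross}-contributions between two components $\mu_1\neq\mu_2$ with $\frac{r_{\mu_1}}{s_{\mu_1}}=\frac{r_{\mu_2}}{s_{\mu_2}}$ (equivalently $\lfloor r_{\mu_1}/s_{\mu_1}\rfloor = \lfloor r_{\mu_2}/s_{\mu_2}\rfloor$ together with the mod condition): when the leading orders of $\omega_{0,1}$ on the two branches agree, the denominator $y(z')-y(z)$ with $z$ on $\mu_1$ and $z'$ on $\mu_2$ acquires an extra order of vanishing (this is precisely pathology \emph{(iii)} of \cref{lemYA}, or rather its borderline version since $t_{\mu_i}^{r_{\mu_i}}$ may differ), so the residue at $\mu_1$ receives a nonzero contribution from $\omega_{0,2}(z',z_j)$ with $z'$ on $\mu_2$, spoiling the diagonal form $\delta_{\mu_1,\mu_2,\mu_3,\mu}$ and hence symmetry. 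I would make this quantitative by computing that cross-term's leading coefficient and showing it is nonzero exactly when $\lfloor r_{\mu_1}/s_{\mu_1}\rfloor = \lfloor r_{\mu_2}/s_{\mu_2}\rfloor$ under the stated $\pm1$ mod-conditions. Conversely, when both \eqref{item:r_eq_pm1mods} and \eqref{item:eq_fract_constr} hold, every cross-term vanishes, only the diagonal $\mu_1=\mu_2=\mu_3=\mu$ survives, the root-of-unity sum collapses to $c_\mu/(t_\mu r_\mu)$, and one reads off \eqref{03explicit} directly.

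\textbf{Main obstacle.} The hard part will be the careful $\zeta$-expansion of the recursion kernel $K_\mu$ in the presence of several components with comparable ratios $r_\nu/s_\nu$: one must track not just the leading order of $\prod_{z'\in\mathfrak{f}'(z)}(y(z')-y(z))$ but enough subleading terms to see exactly which monomials $z_1^{-k_1-1}z_2^{-k_2-1}z_3^{-k_3-1}$ the residue produces, and to verify that symmetry fails precisely on the complement of conditions \eqref{item:r_eq_pm1mods}–\eqref{item:eq_fract_constr} rather than merely on a subset. The evaluation of the relevant roots-of-unity sums to obtain the closed form \eqref{defcmu} for $c_\mu$ is a finite but somewhat intricate computation that I would organise using the known descending partition $\lambda^\mu$ from \cref{prop:k_min_partition_one_cycle}; the cases $r_\mu = r_\mu' s_\mu + 1$ and $r_\mu = r_\mu' s_\mu + s_\mu - 1$ must be handled separately, as the sign of $c_\mu$ and the shape of $\lambda^\mu$ differ between them.
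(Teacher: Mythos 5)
Your overall strategy (compute $\omega_{0,3}$ directly from \eqref{thefirstg}, expand the recursion kernel near $\mu$, extract the monomials, impose symmetry) is the right one and matches the paper's approach. There are, however, two places where your outline is off in a way that would derail the argument.

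The more serious issue is your characterisation of condition \labelcref{item:eq_fract_constr}. You write that the obstruction comes from components $\mu_1\neq\mu_2$ with $\frac{r_{\mu_1}}{s_{\mu_1}}=\frac{r_{\mu_2}}{s_{\mu_2}}$, and then assert this is ``equivalently'' $\lfloor r_{\mu_1}/s_{\mu_1}\rfloor = \lfloor r_{\mu_2}/s_{\mu_2}\rfloor$ under the mod condition. This equivalence is false: if $r_{\mu_i}=r_{\mu_i}'s_{\mu_i}+1$, then $\frac{r_{\mu_1}}{s_{\mu_1}}=\frac{r_{\mu_2}}{s_{\mu_2}}$ forces $r_{\mu_1}'=r_{\mu_2}'$ \emph{and} $s_{\mu_1}=s_{\mu_2}$, whereas the floor condition only forces $r_{\mu_1}'=r_{\mu_2}'$. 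Condition \labelcref{item:eq_fract_constr} is genuinely about the floors, and the paper's proof has to run through the three cases $\frac{r_\mu}{s_\mu}>\frac{r_\nu}{s_\nu}$, $=$, and $<$ separately: it is precisely in the strict-inequality cases (where the ratios are \emph{not} equal) that the delicate inequality \eqref{eq:F03_mumunu_constraint_lower_snu} fails exactly when the floors coincide under the stated mod conditions. The equal-ratio case contributes only the boundary subcase $s_\mu=s_\nu\leq 2$. By reducing the obstruction to equal ratios you would miss the bulk of the cases that condition \labelcref{item:eq_fract_constr} is there to exclude; your quantitative step (``showing the cross-term's leading coefficient is nonzero exactly when $\lfloor r_{\mu_1}/s_{\mu_1}\rfloor=\lfloor r_{\mu_2}/s_{\mu_2}\rfloor$'') would come back negative, because that leading coefficient is supported at a single power, and the non-symmetry arises from comparing $\omega_{0,3}\big(\begin{smallmatrix}\mu&\mu&\nu\end{smallmatrix}\big)$ (computed from the $\mu$-residue) against $\omega_{0,3}\big(\begin{smallmatrix}\nu&\mu&\mu\end{smallmatrix}\big)$, which vanishes identically for kernel reasons. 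You need the full delta-function bookkeeping of \eqref{eq:F03_mumunu_constraint}--\eqref{eq:F03_mumunu_constraint2} with $\ell_\nu(k)$ plugged in explicitly, not an appeal to pathology \emph{(iii)} of \cref{lemYA}.

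A smaller point: for $\omega_{0,3}$ the recursion only ever invokes $K_\mu^{(2)}(z_0;z,z')$, whose denominator is a \emph{single} factor $(\tilde{y}(z')-\tilde{y}(z))\dd\tilde{x}(z)$, not the full product $Y_\alpha(z)$ of \cref{defYA}. So \cref{lemYA} is not really the right tool here; the relevant object is the sum over $z'\in\mathfrak{f}'(z)$ of $K_\mu^{(2)}$ paired with $\omega_{0,2}(z',\cdot)$, which the paper organises as the root-of-unity sums leading to $\ell_\mu(k)$ (same branch) and to the delta conditions in the $\mu\neq\nu$ cases. Likewise, the necessity of $r_\mu=\pm1\bmod s_\mu$ in \labelcref{item:r_eq_pm1mods} is not a statement about $\lambda^\mu$ failing to be descending but is read off directly from whether $\ell_\mu(k_2)+\ell_\mu(k_3)-r_\mu+1$ is a symmetric function on the simplex $k_1+k_2+k_3=s_\mu$; that this agrees with $\lambda^\mu$ being descending is a coincidence of content, not the mechanism of the proof.
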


\begin{proposition}
	\label{prop:symm_cond_omega04}
	Assume conditions \labelcref{item:r_eq_pm1mods} and \labelcref{item:eq_fract_constr} of \Cref{prop:symm_cond_omega03} hold. Then $\omega_{0,4}$ can only be symmetric if
	\begin{enumerate}
		\setcounter{enumi}{2}
		\item\label{item:threefracteq} In case there are $\mu_1,\mu_2,\mu_3$ distinct with $\big\lfloor\frac{r_{\mu_1}}{s_{\mu_1}}\big\rfloor = \big\lfloor\frac{r_{\mu_2}}{s_{\mu_2}}\big\rfloor = \big\lfloor\frac{r_{\mu_3}}{s_{\mu_3}}\big\rfloor$ then there is an $m\in\{1,2,3\}$ for which $s_{\mu_m}=1$.
	\end{enumerate}
\end{proposition}
Notice that the conditions from \Cref{prop:symm_cond_omega03,prop:symm_cond_omega04} are exactly the defining properties of a curve admissible in genus $0$ which are those curves for which we already know that all $(\omega_{0,n})_{n\geq 1}$ computed by the topological recursion will be symmetric.
\begin{corollary}\label{cor:genuszerosymclassification}
Let $\mc{C}$ be a spectral curve of type \eqref{eq:monoSC} equipped with the standard bidifferential. Then the topological recursion formula applied to $(\omega_{0,1},\omega_{0,2})$ yields a family of symmetric multidifferentials $(\omega_{0,n})_{n\geq 1}$ if and only if $\mc{C}$ is admissible in genus $0$.
\end{corollary}

Before we prove \Cref{prop:symm_cond_omega03,prop:symm_cond_omega04}, we first give some more general considerations for calculating the genus zero correlators.\par
Recalling \cref{TRsum,TRkernel}, we see that the recursion kernel can be split in factors coming from different irreducible components of the spectral curve. First, note that in $ K(z_1;z,Z)$, we always need $z_1$ and $z$ to lie in the same irreducible component, and then we have
\begin{equation*}
\int_\mu^z \omega_{0,2} (\plh, z_1 ) = \frac{z \dd z_1}{z_1 (z_1-z)}\,.
\end{equation*}
Because $ \omega_{0,2} $ in our current situation does not mix irreducible components, this shows immediately that to get symmetric correlators, we need
\begin{equation*}
\omega_{0,3}\big(\begin{smallmatrix} \mu & \nu & \lambda \\ z_1 & z_2 & z_3 \end{smallmatrix}\big) = 0
\end{equation*}
unless $ \mu = \nu = \lambda $: if one (say $\mu$) is different from the other two, we can use the recursion with respect to its variable, and get $z \in \tilde{C}_\mu$, so both terms in \eqref{thefirstg} would involve an $ \omega_{0,2}$ between two different irreducible components.\par
This vanishing can then be used to calculate $ \omega_{0,4}$ with arguments in exactly three different irreducible components. All terms involving $K^{(2)}$ would also involve a vanishing $ \omega_{0,3}$, while the same argument as for $ \omega_{0,3}$ above shows that the contribution of $K^{(3)}$ to such $\omega_{0,4}$ should vanish --- it also involves only $ \omega_{0,2}$. In fact, this same argument could be applied to $ \omega_{0,4} \big(\begin{smallmatrix} \mu & \nu & \nu & \nu \\ z_1 & z_2 & z_3 & z_4 \end{smallmatrix}\big)$ (we only need one irreducible component to be different from all others), but this turns out not to give a new constraint, so we omit it here.\par

\begin{remark}
	This argument can be used inductively to show that all $ \omega_{0,n}$ with exactly one argument on a given irreducible component must vanish. This is analogous to the proof of \cref{propdecouple}, but also uses that all arguments of the recursion kernel must couple to a different correlator, as we restrict to genus zero. However, in general the recursive computation of these correlators does require $K^{(m)}$ of order $m$ up to the degree of $\tilde{x}$, and therefore becomes quite complicated.
\end{remark}

For the remainder of this \namecref{GenusZeroConstraints}, we restrict to recursion kernels coupled to $ \omega_{0,2}$'s, which is sufficient for our calculations. We also assume $ z_1 \in \tilde{C}_\mu $. From the shape of the recursion kernel, we obtain several possible contributions (combining factors from the kernel and the correlators):
\begin{enumerate}[label=(\arabic*)]
	\item\label{AlwaysInKernel}There will always be one term
	\[
	\frac{z \dd z_1}{z_1 (z_1-z)} \frac{\dd z \dd z_m}{(z-z_m)^2}
	\]
	with $ z_m \in \tilde{C}_\mu $.
	\item\label{KernelSameComp} For any other $z_m \in \tilde{C}_\mu$, we get a contribution 
	\[
	\frac{\theta_\mu^{a_m}}{r_\mu t_\mu (\theta_\mu^{a_m s_\mu} -1) z^{s_\mu -1}} \frac{\dd z_m}{(\theta_\mu^{a_m} z - z_m)^2}\,,
	\]
	where $ \theta_\mu$ is a primitive $ r_\mu$th root of unity and we need to sum over all subsets $\{ a_m \} \subset [r_\mu - 1]$ of size determined by the number of $z_m \in \tilde{C}_\mu$.
	\item\label{KernelOtherComp} For any $ \nu \neq \mu$ and $ z_m \in \tilde{C}_\nu$, we get a contribution 
	\[
	\frac{\theta_\nu^{b_m} z^{r_\mu/r_\nu -1}}{r_\nu(t_\nu \theta_\nu^{b_m s_\nu} z^{r_\mu s_\nu/r_\nu - 1} - t_\mu z^{s_\mu -1})} \frac{\dd z_m}{(\theta_\nu^{b_m} z^{r_\mu /r_\nu} - z_m)^2}
	\]
	and we need to sum over all subsets $\{ b_m \} \subset [r_\nu ]$ of size determined by the number of $z_m \in \tilde{C}_\nu$.
\end{enumerate}
We need to take the series expansion of each of these near $ z= 0$. For \labelcref{AlwaysInKernel}, this is
\begin{equation*}
\frac{z \dd z_1}{z_1 (z_1-z)} \frac{\dd z \dd z_m}{(z-z_m)^2} = \dd z \dd z_1 \dd z_m \sum_{\ell,k \geq 1} k z_1^{-\ell-1} z_m^{-k-1} z^{\ell+k-1}\,.
\end{equation*}
The summations for cases \labelcref{KernelSameComp,KernelOtherComp} get quite complicated for general sizes of the subset, but for size one, they are computable. For case \labelcref{KernelSameComp}, we get
\begin{align*}
\sum_{a=1}^{r_\mu -1} \frac{\theta_\mu^a}{r_\mu t_\mu (\theta_\mu^{a s_\mu} -1) z^{s_\mu -1}} \frac{\dd z_m}{(\theta_\mu^{a} z - z_m)^2} 
&= \frac{\dd z_m}{r_\mu^2 t_\mu} \sum_{a = 1}^{r_\mu-1} \sum_{\ell=0}^{r_\mu -1} \ell \theta_\mu^{a s_\mu \ell} \sum_{k \geq 1} k \theta_\mu^{ak} z_m^{-k-1} z^{k-s_\mu}\\
&= \frac{\dd z_m}{r_\mu^2 t_\mu} \sum_{k \geq 1} \sum_{\ell=0}^{r_\mu -1} \ell\big(r_\mu \delta_{r_\mu | s_\mu \ell+ k} -1\big)  k z_m^{-k-1} z^{k-s_\mu}\\
&= \frac{\dd z_m}{r_\mu t_\mu} \sum_{k \geq 1}  \Big(\ell_\mu (k) - \frac{r_\mu -1}{2} \Big)  k z_m^{-k-1} z^{k-s_\mu}\,,
\end{align*}
where $ \ell_\mu (k)$ is the unique $\ell \in [0, r_\mu )$ such that $ r_\mu\,|\, s_\mu \ell + k $.\par
For case \labelcref{KernelOtherComp}, there are three different subcases, depending on the sign of $ \frac{r_\mu}{s_\mu} - \frac{r_\nu}{s_\nu} $:
\begin{enumerate}[label=(3\alph*)]
	\item\label{KernelOtherComp=} If $ \frac{r_\mu}{s_\mu} = \frac{r_\nu}{s_\nu}$, we get
	\begin{equation*}
	\sum_{b=1}^{r_\nu} \frac{\theta_\nu^{b_m} z^{r_\mu/r_\nu -1}}{r_\nu(t_\nu \theta_\nu^{b_m s_\nu} z^{r_\mu s_\nu/r_\nu - 1} - t_\mu z^{s_\mu -1})} \frac{\dd z_m}{(\theta_\nu^{b_m} z^{r_\mu /r_\nu} - z_m)^2} = \dd z_m \sum_{k \geq 1} \frac{t_\mu^{r_\mu - \ell_\mu(k)-1} t_\nu^{\ell_\mu(k)}}{t_\nu^{r_\mu} - t_\mu^{r_\mu}} \, k \,z_m^{-k-1} z^{k-s_\mu}\,.
	\end{equation*}
	\item\label{KernelOtherComp>} If $ \frac{r_\mu}{s_\mu} > \frac{r_\nu}{s_\nu}$, we get
	\begin{equation*}
	\sum_{b=1}^{r_\nu} \frac{\theta_\nu^{b_m} z^{r_\mu/r_\nu -1}}{r_\nu(t_\nu \theta_\nu^{b_m s_\nu} z^{r_\mu s_\nu/r_\nu - 1} - t_\mu z^{s_\mu -1})} \frac{\dd z_m}{(\theta_\nu^{b_m} z^{r_\mu /r_\nu} - z_m)^2} = - \frac{\dd z_m}{t_\mu} \!\!\!\!\!\!\sum_{\substack{k,\ell \geq 1\\r_\nu | s_\nu (\ell-1) + k}} \!\!\!\!\! \Big( \frac{t_\nu}{t_\mu} \Big)^{\ell-1} k \,z_m^{-k-1} z^{(s_\nu (\ell-1) + k)\frac{r_\mu}{r_\nu} -\ell s_\mu}\,.
	\end{equation*}
	\item\label{KernelOtherComp<} If $ \frac{r_\mu}{s_\mu} < \frac{r_\nu}{s_\nu}$, we get
	\begin{equation*}
	\sum_{b=1}^{r_\nu} \frac{\theta_\nu^{b_m} z^{r_\mu/r_\nu -1}}{r_\nu(t_\nu \theta_\nu^{b_m s_\nu} z^{r_\mu s_\nu/r_\nu - 1} - t_\mu z^{s_\mu -1})} \frac{\dd z_m}{(\theta_\nu^{b_m} z^{r_\mu /r_\nu} - z_m)^2} = \frac{\dd z_m}{t_\nu} \!\!\!\!\!\sum_{\substack{k,\ell \geq 1\\r_\nu | s_\nu \ell -k}} \!\!\! \Big( \frac{t_\mu}{t_\nu} \Big)^{\ell-1} k \,z_m^{-k-1} z^{(k-\ell s_\nu )\frac{r_\mu}{r_\nu} +(\ell-1) s_\mu}\,.
	\end{equation*}
\end{enumerate}

\begin{proof}[Proof of \cref{prop:symm_cond_omega03}]
	We have already argued that in the symmetric case $ \omega_{0,3}$ vanishes unless all arguments are on the same component. Let us therefore first calculate the value of $ \omega_{0,3}$ in exactly this case. If all arguments lie on $\tilde{C}_\mu$, we get a contribution from case \labelcref{AlwaysInKernel} and the one-argument version of case \labelcref{KernelSameComp}. Then taking the residue we obtain
	\begin{align*}
	\frac{\omega_{0,3} \begin{psmallmatrix} \mu & \mu & \mu\\ z_1 & z_2 & z_3 \end{psmallmatrix}}{\dd z_1 \dd z_2 \dd z_3} &= \Res_{z=0} \bigg( \dd z \sum_{\ell_2,k_2 \geq 1} k_2 z_1^{-\ell_2-1} z_2^{-k_2-1} z^{\ell_2+k_2-1} \frac{1}{r_\mu t_\mu} \sum_{k_3 \geq 1}  \Big(\ell_\mu (k_3) - \frac{r_\mu -1}{2} \Big)  k_3 z_3^{-k_3-1} z^{k_3-s_\mu} \\
	&\qquad \quad + \dd z \sum_{\ell_3,k_3 \geq 1} k_3 z_1^{-\ell_3-1} z_3^{-k_3-1} z^{\ell_3+k_3-1} \frac{1}{r_\mu t_\mu} \sum_{k_2 \geq 1}  \Big(\ell_\mu (k_2) - \frac{r_\mu -1}{2} \Big)  k_2 z_2^{-k_2-1} z^{k_2-s_\mu} \bigg) \\
	&= \frac{1}{r_\mu t_\mu} \Res_{z=0} \dd z  \sum_{\ell,k_2,k_3 \geq 1} k_2 k_3 \big(\ell_\mu (k_2) + \ell_\mu (k_3) - r_\mu +1 \big) z_1^{-\ell-1} z_2^{-k_2-1} z_3^{-k_3-1} z^{\ell+k_2+k_3-s_\mu-1} \\
	&= \frac{1}{r_\mu t_\mu}  \sum_{k_1,k_2,k_3 \geq 1} k_2 k_3 \big(\ell_\mu (k_2) + \ell_\mu (k_3) - r_\mu +1 \big) z_1^{-k_1-1} z_2^{-k_2-1} z_3^{-k_3-1} \, \delta_{k_1+k_2+k_3,s_\mu} \,.
	\end{align*}
	For $r_\mu=1$ this expression vanishes and hence is symmetric. In case  $r_\mu > 1$ we know from \cite[Proposition B.2]{BBCCN18} that for $s_\mu\in[r_\mu+1]$, this is symmetric if and only if $r_\mu = \pm 1 \,\,{\rm mod}\,\, s_\mu$. If however, $r_\mu > 1$ and $s_\mu>r_\mu+1$ it is easy to see that the correlators can never be symmetric. Let us assume $\omega_{0,3}$ is symmetric. Then for all $k_1,k_2,k_3>0$ satisfying $k_1+k_2+k_3=s_\mu$ we must have
	\begin{equation*}
	k_2 k_3 \big(\ell_\mu (k_2) + \ell_\mu (k_3) - r_\mu +1 \big) = k_1 k_3 \big(\ell_\mu (k_1) + \ell_\mu (k_3) - r_\mu +1 \big)\,.
	\end{equation*}
	Plugging $k_1=1$, $k_2=r_\mu$, and $k_3=s_\mu-r_\mu-1$ into the above equation the right-hand side vanishes and we obtain
	\begin{equation*}
	r_\mu \big(\ell_\mu (s_\mu-r_\mu-1) - r_\mu +1\big) = 0\,.
	\end{equation*}
	From this we deduce that $\ell_\mu (s_\mu-r_\mu-1) = r_\mu-1$. This means that there must exist an $m$ such that $m r_\mu = s_\mu-r_\mu-1 + s_\mu(r_\mu-1)$. This in turn implies that $r_\mu=1$ which contradicts our starting assumption. We therefore conclude that the symmetry condition for $\omega_{0,3} \begin{psmallmatrix} \mu & \mu & \mu\\ z_1 & z_2 & z_3 \end{psmallmatrix}$ is exactly captured by \labelcref{item:r_eq_pm1mods}.
	
	To show \cref{03explicit}, recall that $ \ell_\mu (k) $ is the unique $ \ell \in [0, r_\mu)$ such that $ r_\mu \mid s_\mu \ell + k$. In other words, there is  an $m$ such that $m (r'_\mu s_\mu + \varepsilon_\mu ) = s_\mu \ell_\mu (k) + k$. Viewing this formula modulo $ s_\mu$ shows that for $k < s_\mu$
	\begin{equation*}
	m = \begin{cases} k  & \text{if } \varepsilon_\mu = 1 \\  s_\mu -k &\text{if } \varepsilon_\mu = s_\mu -1 \end{cases}\,.
	\end{equation*}
	From this, it follows easily that $ (\ell_\mu (k_2 ) + \ell_\mu (k_3) -r_\mu +1) = c_\mu $ as $k_2, k_3 < s_\mu$.
	
	Now let us turn our attention to the case where the arguments $\omega_{0,3}$ lie on two different components. We will see that in this case the symmetry is controlled by condition \labelcref{item:eq_fract_constr}. Let us first show that this condition is indeed necessary for the symmetry of $\omega_{0,3}$. So suppose there are $\mu$ and $\nu$ violating \labelcref{item:eq_fract_constr}, i.e.\ $s_\mu,s_\nu>2$, $\big\lfloor\frac{r_\mu}{s_\mu}\big\rfloor =  \big\lfloor\frac{r_\nu}{s_\nu}\big\rfloor$ and either $r_{\mu} = 1 \,\,{\rm mod}\,\, s_{\mu}$ and $r_{\nu} = 1 \,\,{\rm mod}\,\, s_{\nu}$ or $r_{\mu} = -1 \,\,{\rm mod}\,\, s_{\mu}$ and $r_{\nu} = -1 \,\,{\rm mod}\,\, s_{\nu}$. After relabelling we may also assume that $\frac{r_\mu}{s_\mu} \geq \frac{r_\nu}{s_\nu}$. In the following we will show by explicit computation that in this case $\omega_{0,3}\begin{psmallmatrix} \mu & \mu & \nu \\ z_1 & z_2 & z_3 \end{psmallmatrix} \neq 0$. This however means that $\omega_{0,3}$ cannot be symmetric since $ \omega_{0,3}\begin{psmallmatrix} \nu & \mu & \mu \\ z_3 & z_2 & z_1 \end{psmallmatrix} $ is always vanishing.
	
	In case $ \frac{r_\mu}{s_\mu} = \frac{r_\nu}{s_\nu}$, $\omega_{0,3}\begin{psmallmatrix} \mu & \mu & \nu \\ z_1 & z_2 & z_3 \end{psmallmatrix}$ gets a contribution from case \labelcref{AlwaysInKernel}, and the one-argument version of case \labelcref{KernelOtherComp=}. Then taking the residue gives
	\begin{align*}
		\frac{\omega_{0,3} \begin{psmallmatrix}\mu & \mu & \nu \\ z_1 & z_2 & z_3 \end{psmallmatrix}}{\dd z_1 \dd z_2 \dd z_3} &=
		\Res_{z =0} \dd z  \sum_{k_1,k_2,k_3 \geq 1} k_2 k_3 z_1^{-k_1-1} z_2^{-k_2-1} z^{k_1+k_2-1} \, \frac{t_\mu^{r_\mu - \ell_\mu(k)-1} t_\nu^{\ell_\mu(k)}}{t_\nu^{r_\mu} - t_\mu^{r_\mu}} \, z_m^{-k_3-1} z^{k_3-s_\mu}\\
		&=  \sum_{k_1,k_2,k_3 \geq 1} \frac{t_\mu^{r_\mu - \ell_\mu(k_3)-1} t_\nu^{\ell_\mu(k_3)}}{t_\nu^{r_\mu} - t_\mu^{r_\mu}} \, k_2 k_3 \,\delta_{k_1+k_2+k_3,s_\mu} \, z_1^{-k_1-1} z_2^{-k_2-1} z_3^{-k_3-1}\,.
	\end{align*}
	This is clearly non-vanishing since we assumed that $s_\mu,s_\nu>2$.
	
	If now $\frac{r_\mu}{s_\mu} > \frac{r_\nu}{s_\nu}$ the correlator gets a contribution from \labelcref{AlwaysInKernel} and \labelcref{KernelOtherComp>} leading to
	\begin{align}
		\frac{\omega_{0,3} \begin{psmallmatrix}\mu & \mu & \nu \\ z_1 & z_2 & z_3 \end{psmallmatrix}}{\dd z_1 \dd z_2 \dd z_3} &= -\frac{1}{t_\mu} \Res_{z =0} \dd z  \!\!\! \sum_{\substack{k_1,k_2,k_3,\ell \geq 1\\r_\nu | s_\nu (\ell-1) + k_3}} \!\! k_2 k_3 \,z_1^{-k_1-1} z_2^{-k_2-1} z^{k_1+k_2-1} \Big( \frac{t_\nu}{t_\mu} \Big)^{\ell-1}  z_3^{-k_3-1} z^{(s_\nu (\ell-1) + k_3)\frac{r_\mu}{r_\nu} -\ell s_\mu}\nonumber\\
		&= -\frac{1}{t_\mu}  \!\!\! \sum_{\substack{k_1,k_2,k_3,\ell \geq 1\\r_\nu | s_\nu (\ell-1) + k_3}} \!\! \Big( \frac{t_\nu}{t_\mu} \Big)^{\ell-1} k_2 k_3 \,z_1^{-k_1-1} z_2^{-k_2-1} z_3^{-k_3-1} \,\delta_{k_1+k_2+(s_\nu (\ell-1) + k_3)\frac{r_\mu}{r_\nu}-\ell s_\mu,\, 0}\nonumber\\
		&= -\frac{1}{t_\mu}  \! \sum_{k_1,k_2,k_3 \geq 1} \sum_{\ell'  \geq 0} \Big( \frac{t_\nu}{t_\mu} \Big)^{\ell_\nu(k_3)+r_\nu \ell'} k_2 k_3\, z_1^{-k_1-1} z_2^{-k_2-1} z_3^{-k_3-1} \nonumber\\
		&\hspace{5cm} \cdot \delta_{k_1 + k_2 - a_{\mu,\nu}(k_3) + \ell'(r_\mu s_\nu - r_\nu s_\mu), 0}\,. \label{eq:omega03mumunu}
	\end{align}
	where we set
	\begin{equation}
		\label{eq:amunukdef}
		a_{\mu,\nu}(k)\coloneqq -(s_\nu \ell_\nu (k) + k)\frac{r_\mu}{r_\nu} + (\ell_\nu (k) +1) s_\mu\,.
	\end{equation}
	The key observation to make now is that $a_{\mu,\nu}(1)>1$ under our assumptions on $\mu$ and $\nu$. Indeed, in case $r_{\mu} = 1 \,\,{\rm mod}\,\, s_{\mu}$ and $r_{\nu} = 1 \,\,{\rm mod}\,\, s_{\nu}$ we find that
	\begin{equation}
		\label{eq:amunuk>}
		a_{\mu,\nu}(1) = - r_\mu + (r_\mu - 1 + s_\mu) = s_\mu-1 > 1
	\end{equation}
	where we used that $\ell_\nu(1)=\big\lfloor \frac{r_\nu}{s_\nu}\big\rfloor=\big\lfloor \frac{r_\mu}{s_\mu}\big\rfloor$. Similarly, if $r_{\mu} = -1 \,\,{\rm mod}\,\, s_{\mu}$ and $r_{\nu} = -1 \,\,{\rm mod}\,\, s_{\nu}$ an analogous calculation gives $a_{\mu,\nu}(1)=s_\nu-1>1$. Hence, we find that the residue
	\begin{align*}
		\Res_{z_1=0}\,\Res_{z_2=0}\,\Res_{z_3=0} ~ z_1^{a_{\mu,\nu}(1)-1}z_2 z_3 \, \omega_{0,3} \begin{psmallmatrix}\mu & \mu & \nu \\ z_1 & z_2 & z_3 \end{psmallmatrix} \,& = -\frac{1}{t_\mu}  \! \sum_{\ell'  \geq 0} \Big( \frac{t_\nu}{t_\mu} \Big)^{\ell_\nu(1)+r_\nu \ell'} ~ \delta_{ \ell'(r_\mu s_\nu - r_\nu s_\mu), 0}\\
		& = -\frac{1}{t_\mu}\Big( \frac{t_\nu}{t_\mu} \Big)^{\ell_\nu(1)}
	\end{align*}
	is non-vanishing, telling us that $\omega_{0,3} \begin{psmallmatrix}\mu & \mu & \nu \\ z_1 & z_2 & z_3 \end{psmallmatrix}\neq 0$. This shows that condition \labelcref{item:eq_fract_constr} is indeed necessary to ensure the symmetry of $\omega_{0,3}$.
	
	Let us now finally argue why \labelcref{item:r_eq_pm1mods} and \labelcref{item:eq_fract_constr} are also sufficient to ensure symmetry. Indeed, our spectral curve is admissible in genus $0$ if additionally the condition \labelcref{item:threefracteq} of \namecref{prop:symm_cond_omega04} is satisfied. In this case $\omega_{0,3}$ computed by the topological recursion must be symmetric by \Cref{mainth2gequal0}. Note now that condition \labelcref{item:threefracteq} is trivially satisfied in case $d\leq 2$. In case $d>2$ since $\omega_{0,3}\begin{psmallmatrix}\mu & \mu & \mu \\ z_1 & z_2 & z_3 \end{psmallmatrix}$ can as well be computed after restricting $x$, $y$, and $\omega_{0,2}$ to the component $\tilde{C}_\mu$ --- a one-component spectral curve for which \labelcref{item:threefracteq} is trivially satisfied --- the correlator must be symmetric under permutation of the arguments. Similarly, for $\mu\neq\nu$ we can compute $\omega_{0,3}\begin{psmallmatrix}\mu & \nu & \mu \\ z_1 & z_2 & z_3 \end{psmallmatrix}$, $\omega_{0,3}\begin{psmallmatrix}\mu & \mu & \nu \\ z_1 & z_2 & z_3 \end{psmallmatrix}$, etc.\ after restricting the input data to the two-component curve $\tilde{C}_\mu \sqcup \tilde{C}_\nu$ for which \labelcref{item:threefracteq} holds as well. Finally, if all arguments lie on different components $\mu$, $\nu$, $\lambda$ we always have $\omega_{0,3}\begin{psmallmatrix}\mu & \nu & \lambda \\ z_1 & z_2 & z_3 \end{psmallmatrix}=0$ which of course is symmetric. Thus, although we are not imposing condition \labelcref{item:threefracteq} of \Cref{prop:symm_cond_omega04} the correlator $\omega_{0,3}$ is already symmetric given that \labelcref{item:eq_fract_constr} and \labelcref{item:eq_fract_constr} hold.
	\end{proof}

\begin{proof}[Proof of \cref{prop:symm_cond_omega04}]
	Suppose $\omega_{0,4}$ is symmetric and there are distinct $\mu,\nu,\lambda$ with $\big\lfloor\frac{r_{\mu}}{s_{\mu}}\big\rfloor = \big\lfloor\frac{r_{\nu}}{s_{\nu}}\big\rfloor = \big\lfloor\frac{r_{\lambda}}{s_{\lambda}}\big\rfloor$ and $s_\mu,s_\nu,s_\lambda>1$. After relabelling we may assume that $\frac{r_{\mu}}{s_{\mu}}\geq \frac{r_{\nu}}{s_{\nu}}\geq \frac{r_{\lambda}}{s_{\lambda}}$. As argued before
	\begin{equation*}
		\omega_{0,4}\begin{psmallmatrix}\mu & \nu & \mu & \lambda \\ z_1 & z_3 & z_2 & z_4 \end{psmallmatrix} = 0
	\end{equation*}
	because $\mu\neq \nu$. So it suffices to show that $\omega_{0,4}\begin{psmallmatrix}\mu & \mu & \nu & \lambda \\ z_1 & z_2 & z_3 & z_4 \end{psmallmatrix}\neq 0$ in order to get a contradiction.
	
	In case $\frac{r_{\mu}}{s_{\mu}} > \frac{r_{\nu}}{s_{\nu}} \geq \frac{r_{\lambda}}{s_{\lambda}}$ the correlator $\omega_{0,4}\begin{psmallmatrix}\mu & \mu & \nu & \lambda \\ z_1 & z_2 & z_3 & z_4 \end{psmallmatrix}$ gets a contribution from \labelcref{AlwaysInKernel} and twice from \labelcref{KernelOtherComp>} giving us
	\begin{align*}
		\frac{\omega_{0,4}\begin{psmallmatrix}\mu & \mu & \nu & \lambda \\ z_1 & z_2 & z_3 & z_4 \end{psmallmatrix}}{\dd z_1 \dd z_2 \dd z_3 \dd z_4}\, & = -\frac{1}{t_\mu^2} \Res_{z =0} \dd z  \!\!\! \sum_{\substack{k_1,k_2,k_3,k_4,\ell_3,\ell_4 \geq 1\\r_\nu | s_\nu (\ell_3-1) + k_3\\r_\lambda | s_\lambda (\ell_4-1) + k_4}} \!\!\!\! k_2 k_3 k_4 \,z_1^{-k_1-1} z_2^{-k_2-1} z^{k_1+k_2-1} \Big( \frac{t_\nu}{t_\mu} \Big)^{\ell_3-1}  z_3^{-k_3-1} z^{(s_\nu (\ell_3-1) + k_3)\frac{r_\mu}{r_\nu} -\ell_3 s_\mu} \\[-1em]
		&\hspace{11.5em}\cdot \Big( \frac{t_\lambda}{t_\mu} \Big)^{\ell_4-1} z_4^{-k_4-1} z^{(s_\lambda (\ell_4-1) + k_4)\frac{r_\mu}{r_\lambda} -\ell_4 s_\mu}\\
		& = -\frac{1}{t_\mu^2}  \sum_{\substack{k_1,k_2,k_3,k_4 \geq 1 \\ \ell_3',\ell_4'\geq 0}} \!\!\!\! k_2 k_3 k_4 \Big( \frac{t_\nu}{t_\mu} \Big)^{\ell_\nu(k_3)+\ell_3' r_\nu} \Big( \frac{t_\lambda}{t_\mu} \Big)^{\ell_\lambda(k_4)+\ell_4' r_\lambda} \left(\prod_{m=1}^4 z_m^{-k_m-1}\right) \\
		&\hspace{10em} \cdot \delta_{k_1 + k_2 - a_{\mu,\nu}(k_3) +\ell_3'(r_\mu s_\nu - r_\nu s_\mu) - a_{\mu,\lambda}(k_4) +\ell_4'(r_\mu s_\lambda - r_\lambda s_\mu) ,\, 0}\,.
	\end{align*}
	where we defined $a_{\mu,\nu}(k_3)$ and $a_{\mu,\lambda}(k_4)$ as in \eqref{eq:amunukdef}. Again, the key observation to make is that $a_{\mu,\nu}(1)>0$. We have already shown this in \eqref{eq:amunuk>} for two cases. In the other two remaining cases a straight forward calculation shows that $a_{\mu,\nu}(1)=1$ in case $r_\mu = -1 \,\,{\rm mod}\,\, s_\mu$ and $r_\nu = 1 \,\,{\rm mod}\,\, s_\nu$ and $a_{\mu,\nu}(1)=(s_\mu-1)(s_\nu-1)$ if $r_\mu = 1 \,\,{\rm mod}\,\, s_\mu$ and $r_\nu = -1 \,\,{\rm mod}\,\, s_\nu$ confirming that indeed $a_{\mu,\nu}(1)$ is a strictly positive integer. Thus, the residue
	\begin{equation*}
		\begin{split}
			&\Res_{z_1=0} \cdots \Res_{z_4=0} ~~ z_1^{a_{\mu,\nu}(1)} z_2^{a_{\mu,\lambda}(1)} z_3 z_4\, \omega_{0,4}\begin{psmallmatrix}\mu & \mu & \nu & \lambda \\ z_1 & z_2 & z_3 & z_4 \end{psmallmatrix}\\
			&\qquad= -\frac{a_{\mu,\lambda}(1)}{t_\mu^2} \!\! \sum_{\ell_3',\ell_4'\geq 0} \!\!  \Big( \frac{t_\nu}{t_\mu} \Big)^{\ell_\nu(1)+\ell_3' r_\nu} \Big( \frac{t_\lambda}{t_\mu} \Big)^{\ell_\lambda(1)+\ell_4' r_\lambda} \delta_{\ell_3'(r_\mu s_\nu - r_\nu s_\mu) + \ell_4'(r_\mu s_\lambda - r_\lambda s_\mu) ,\, 0}\\
			&\qquad= -\frac{a_{\mu,\lambda}(1)}{t_\mu^2} \Big( \frac{t_\nu}{t_\mu} \Big)^{\ell_\nu(1)} \Big( \frac{t_\lambda}{t_\mu} \Big)^{\ell_\lambda(1)}
		\end{split}
	\end{equation*}
	is non-vanishing and hence $\omega_{0,4}\begin{psmallmatrix}\mu & \mu & \nu & \lambda \\ z_1 & z_2 & z_3 & z_4 \end{psmallmatrix}\neq 0$ giving us the desired contradiction.\\
	
	Now if $\frac{r_{\mu}}{s_{\mu}} = \frac{r_{\nu}}{s_{\nu}} > \frac{r_{\lambda}}{s_{\lambda}}$ we can proceed similarly. This time the correlator receives contributions from \labelcref{AlwaysInKernel}, \labelcref{KernelOtherComp=}, and \labelcref{KernelOtherComp>} leading to
	\begin{align*}
		\frac{\omega_{0,4}\begin{psmallmatrix}\mu & \mu & \nu & \lambda \\ z_1 & z_2 & z_3 & z_4 \end{psmallmatrix}}{\dd z_1 \dd z_2 \dd z_3 \dd z_4}\, & = -\frac{1}{t_\mu} \!\! \sum_{\substack{k_1,k_2,k_3,k_4 \geq 1 \\ \ell_4'\geq 0}} \!\!\!\! k_2 k_3 k_4 \frac{t_\mu^{r_\mu-\ell_\mu(k_3)-1} t_\nu^{\ell_\mu(k_3)}}{t_\mu^{r_\mu}-t_\nu^{r_\mu}} \Big( \frac{t_\lambda}{t_\mu} \Big)^{\ell_\lambda(k_4)+\ell_4' r_\lambda} \left(\prod_{m=1}^4 z_m^{-k_m-1}\right) \\
		&\hspace{10em} \cdot \delta_{k_1 + k_2 + k_3 - s_\mu - a_{\mu,\lambda}(k_4) +\ell_4'(r_\mu s_\lambda - r_\lambda s_\mu) ,\, 0}\,.
	\end{align*}
	Therefore, also in this case we find that $\omega_{0,4}\begin{psmallmatrix}\mu & \mu & \nu & \lambda \\ z_1 & z_2 & z_3 & z_4 \end{psmallmatrix}\neq 0$ as
	\begin{equation*}
		\Res_{z_1=0} \cdots \Res_{z_4=0} ~~ z_1^{s_\mu-1} z_2^{a_{\mu,\lambda}(1)} z_3 z_4\, \omega_{0,4}\begin{psmallmatrix}\mu & \mu & \nu & \lambda \\ z_1 & z_2 & z_3 & z_4 \end{psmallmatrix} = \frac{a_{\mu,\lambda}(1)}{t_\mu} \frac{t_\mu^{r_\mu-\ell_\mu(1)-1} t_\nu^{\ell_\mu(1)}}{t_\nu^{r_\mu}-t_\mu^{r_\mu}} \Big( \frac{t_\lambda}{t_\mu} \Big)^{\ell_\lambda(1)}\neq 0\,.
	\end{equation*}

	Similarly, if $\frac{r_{\mu}}{s_{\mu}} = \frac{r_{\nu}}{s_{\nu}} = \frac{r_{\lambda}}{s_{\lambda}}$ an analogous calculation gives
	\begin{equation*}
		\Res_{z_1=0} \cdots \Res_{z_4=0} ~~ z_1^{s_\mu-1} z_2^{s_\mu-1} z_3 z_4\, \omega_{0,4}\begin{psmallmatrix}\mu & \mu & \nu & \lambda \\ z_1 & z_2 & z_3 & z_4 \end{psmallmatrix} \neq 0\,.
	\end{equation*}
\end{proof}

\medskip

\subsubsection{The Bouchard--Eynard formula for \texorpdfstring{$(0,3)$}{(0,3)} does not hold in general}

\medskip

Let us compare $\omega_{0,3}$ with
\begin{equation*}
\begin{split}
\check{\omega}_{0,3}\big(\begin{smallmatrix} \mu_1 & \mu_2 & \mu_3 \\ z_1 & z_2 & z_3 \end{smallmatrix}\big) & = \sum_{\mu} \mathop{{\rm Res}}_{z = \mu}  \frac{\omega_{0,2}(z,z_1)\omega_{0,2}(z,z_2)\omega_{0,2}(z,z_3)}{\dd x(z)\dd y(z)} \\
& = -\sum_{\mu}  \delta_{\mu_1,\mu_2,\mu_3,\mu}\sum_{k_1,k_2,k_3 > 0} \frac{k_1k_2k_3\,\dd z_1 \dd z_2 \dd z_3}{z_1^{k_1 + 1}z_2^{k_2 + 1}z_3^{k_3 + 1}}\,\frac{\delta_{k_1 + k_2 + k_3,s_{\mu}}}{t_{\mu}r_{\mu}(s_{\mu} - r_{\mu})}\,.
\end{split}
\end{equation*}
According to \cite[Proposition 11]{BoEy13}, $ \check{\omega}_{0,3} = \omega_{0,3}$ for regularly admissible spectral curves. More generally, if the spectral curve is admissible, comparing with \cref{prop:symm_cond_omega03}, we see that $ \omega_{0,3} = \check{\omega}_{0,3}$ if and only if $ c_\mu = \frac{1}{s_\mu -r_\mu}$ or $ s_\mu \leq 2$ for each $ \mu \in \tilde{\mf{a}}$. The condition $ c_\mu = \frac{1}{s_\mu -r_\mu}$ implies that $ s_\mu = r_\mu +1$, which by property \labelcref{item:eq_fract_constr} of \cref{prop:symm_cond_omega03} can only hold for one $\mu$. Comparing with \cref{thm:W_gl_Airy_arbitrary_autom}, we see that e.g. $ (r_1,s_1,r_2,s_2)=(5,3,3,2)$ gives an Airy structure for which $ \omega_{0,3} \neq \check{\omega}_{0,3}$. We conclude that \cite[Proposition 11]{BoEy13} does not always hold in our general setup.

\medskip

\subsubsection{\texorpdfstring{$(g,n) = (\tfrac{1}{2},2)$}{(g,n)=(1/2,2)}}

\label{Sec63}
\medskip
In \Cref{SecTR} we already asked the question whether topological recursion applied to any curve admissible in genus $0$ --- which in this section we will continue to assume to be of the form \eqref{eq:monoSC} --- together with a choice of $\omega_{0,2}$ and $\omega_{\frac{1}{2},1}$ will yield symmetric $\omega_{g,n}$ not just for $g=0$ but also for $g>0$. By studying the symmetry of $\omega_{\frac{1}{2},2}$ we will see that this is not always possible unless the parameters $Q_\mu$ in $\omega_{\frac{1}{2},1}\big(\begin{smallmatrix} \mu \\ z \end{smallmatrix}\big) = \frac{Q_{\mu}\dd z}{z}$ satisfy certain compatibility conditions. Let us first look at two examples.

\begin{example}
	Let us consider the two-component curve of type $(r_1,s_1)=(r_2,s_2)=(3,2)$ and choose $t_1=-t_2=1$. Then the topological recursion gives us
	\begin{equation*}
		\omega_{\frac{1}{2},2}\big(\begin{smallmatrix} 1 & 1 \\ z_1 & z_2 \end{smallmatrix}\big) = -\bigg(\frac{Q_1}{3}+ \frac{Q_2}{2}\bigg) \frac{\dd z_1 \dd z_2}{z_1^2 z_2^2} \,,\qquad \omega_{\frac{1}{2},2}\big(\begin{smallmatrix} 2 & 2 \\ z_1 & z_2 \end{smallmatrix}\big) = \bigg(\frac{Q_1}{2}+ \frac{Q_2}{3}\bigg) \frac{\dd z_1 \dd z_2}{z_1^2 z_2^2}\,,
	\end{equation*}
	which are always symmetric under the exchange of arguments, while for arguments lying on different components we find that
	\begin{equation*}
		\omega_{\frac{1}{2},2}\big(\begin{smallmatrix} 1 & 2 \\ z_1 & z_2 \end{smallmatrix}\big) = \frac{Q_1}{2} \frac{\dd z_1 \dd z_2}{z_1^2 z_2^2} \,,\qquad \omega_{\frac{1}{2},2}\big(\begin{smallmatrix} 2 & 1 \\ z_1 & z_2 \end{smallmatrix}\big) = -\frac{Q_2}{2} \frac{\dd z_1 \dd z_2}{z_1^2 z_2^2}
	\end{equation*}
	which means that $\omega_{\frac{1}{2},2}$ can only be symmetric if $Q_1 + Q_2 = 0$.
\end{example}

\begin{example}
	Let us look at the two-component curve with $(r_1,s_1)=(8,3)$ and $(r_2,s_2)=(5,3)$. Furthermore, we choose $t_1=t_2=1$. In this case $\omega_{\frac{1}{2},2}$ vanishes for arguments lying on different components and
	\begin{equation*}
	\begin{split}
		\omega_{\frac{1}{2},2}\big(\begin{smallmatrix} 1 & 1 \\ z_1 & z_2 \end{smallmatrix}\big) & = -\frac{5 Q_1+8 Q_2}{4} \frac{\dd z_1 \dd z_2}{z_1^2 z_2^3} - \frac{Q_1+4 Q_2}{4}\frac{\dd z_1 \dd z_2}{z_1^3 z_2^2}\,,\\
		\omega_{\frac{1}{2},2}\big(\begin{smallmatrix} 2 & 2 \\ z_1 & z_2 \end{smallmatrix}\big)  & = - \frac{6 Q_2}{5} \frac{\dd z_1 \dd z_2}{z_1^2 z_2^3} - \frac{Q_2}{5} \frac{\dd z_1 \dd z_2}{z_1^3 z_2^2}\,.
		\end{split}
	\end{equation*}
	For the above to be symmetric under the exchange of $z_1$ and $z_2$ we need $Q_1=Q_2=0$ which means in this case there is no ambiguity in choosing the $Q_\mu$.
\end{example}

These two examples are special instances of the following classification.

\begin{proposition}
	\label{prop:symm_cond_omega1half2}
	Assume $(r_\mu,s_\mu)_{\mu\in\tilde{\mf{a}}}$ are integers satisfying condition \labelcref{item:r_eq_pm1mods,item:eq_fract_constr} from \cref{prop:symm_cond_omega03}. Then $\omega_{\frac{1}{2},2}$ is symmetric if and only if 
	\begin{enumerate}
		\setcounter{enumi}{3}
		\item\label{item:1half2cond1} If $s_\mu>2$ and $r_\mu = 1 \,\,{\rm mod}\,\, s_\mu$ for $\mu\in \tilde{\mf{a}}$ then
		\begin{equation*}
		\sum_{\substack{\nu\neq \mu \\ \frac{r_\mu}{s_\mu} > \frac{r_\nu}{s_\nu}}} Q_\nu = 0 \qquad\text{ and }\qquad \forall \ell\in [s_\mu-3],\quad \sum_{\substack{\nu\neq \mu \\ s_\nu = 1,\, \big\lfloor \frac{r_\mu}{s_\mu}\big\rfloor = r_\nu}} Q_\nu t_\nu^{r_\nu \ell} = 0\,.
		\end{equation*}
		
		\item\label{item:1half2cond2} If $s_\mu>2$ and $r_\mu = -1 \,\,{\rm mod}\,\, s_\mu$ for $\mu\in \tilde{\mf{a}}$ then
		\begin{equation*}
		Q_\mu + \sum_{\substack{\nu\neq \mu \\ \frac{r_\mu}{s_\mu} > \frac{r_\nu}{s_\nu}}} Q_\nu = 0 \qquad\text{ and }\qquad \forall \ell\in [s_\mu-3],\quad \sum_{\substack{\nu\neq \mu \\ s_\nu = 1,\, \big\lceil \frac{r_\mu}{s_\mu}\big\rceil = r_\nu}} Q_\nu t_\nu^{-r_\nu \ell} = 0\,.
		\end{equation*}
		
		\item\label{item:1half2cond4} For all $\mu\neq\nu$ with $\bigl\lfloor\frac{r_\mu}{s_\mu}\bigr\rfloor = \bigl\lfloor\frac{r_\nu}{s_\nu}\bigr\rfloor$,
		\begin{equation*}
		r_\mu = -1 \,\,{\rm mod}\,\, s_\mu \qquad \text{ and } \qquad r_\nu = 1 \,\,{\rm mod}\,\, s_\nu\,,
		\end{equation*}
		and $s_\mu,s_\nu > 1$ we have
		\begin{equation*}
		Q_\mu = -Q_\nu\,.
		\end{equation*}
	\end{enumerate}
	If these conditions are satisfied, we have if $r_\mu = r_\mu' s_\mu +1$
	\begin{equation}
	\label{eq:F1half2_mumu_sym1}
	\begin{split}
	&\omega_{\frac{1}{2},2}\big(\begin{smallmatrix} \mu & \mu \\ z_1 & z_2 \end{smallmatrix}\big)\\
	&\quad = \sum_{k_1,k_2 > 0} \frac{k_1k_2\,\dd z_1 \dd z_2}{z_1^{k_1 + 1}z_2^{k_2 + 1}} ~\bigg( -\frac{Q_\mu r_\mu'}{t_\mu r_\mu} \delta_{k_1 + k_2 ,s_{\mu}} + \sum_{\substack{\nu \neq \mu \\ \frac{r_{\mu}}{s_{\mu}} = \frac{r_{\nu}}{s_{\nu}}}} \frac{ t_\mu^{r_\mu-1} }{t_{\nu}^{r_{\mu}}-t_{\mu}^{r_{\mu}}} \, Q_{\nu} \,\delta_{k_1 + k_2,s_{\mu}} \, \delta_{s_{\mu},2}\\
	&\qquad\qquad - \sum_{\substack{\nu\neq \mu \\ s_\nu = 1,\, \big\lfloor \frac{r_\mu}{s_\mu}\big\rfloor = r_\nu}} \frac{Q_\nu}{t_\mu} \left(\frac{t_\nu}{t_\mu}\right)^{r_\nu(s_\mu-2)} \delta_{k_1 + k_2 ,2} \, \delta_{s_{\mu}>2} + \sum_{\substack{\nu \neq \mu:\, s_\nu=2, \\ \frac{r_{\mu}}{s_{\mu}} < \frac{r_{\nu}}{s_{\nu}},\, \left\lfloor\frac{r_\mu}{s_\mu}\right\rfloor = \left\lfloor\frac{r_\nu}{s_\nu}\right\rfloor}} \frac{Q_\nu}{t_\mu} \left(\frac{t_{\mu}}{t_{\nu}}\right)^{r_\nu}  \delta_{k_1 + k_2 ,2} \, \delta_{s_{\mu}>2}\bigg)\,,
	\end{split}
	\end{equation}
	and if $r_\mu = r_\mu' s_\mu + s_\mu  - 1$ and $s_\mu>2$, we have
	\begin{equation}
	\label{eq:F1half2_mumu_sym2}
	\begin{split}
	&\omega_{\frac{1}{2},2}\big(\begin{smallmatrix} \mu & \mu \\ z_1 & z_2 \end{smallmatrix}\big) \\
	&\quad = \sum_{k_1,k_2 > 0} \frac{k_1k_2\,\dd z_1 \dd z_2}{z_1^{k_1 + 1}z_2^{k_2 + 1}} ~\bigg( \frac{Q_\mu (r_\mu'+1)}{t_\mu r_\mu} \delta_{k_1 + k_2 ,s_{\mu}} + \sum_{\substack{\nu\neq \mu \\ s_\nu = 1,\, \big\lceil  \frac{r_\mu}{s_\mu}\big\rceil = r_\nu}} \frac{Q_\nu}{t_\mu} \left(\frac{t_\mu}{t_\nu}\right)^{r_\nu(s_\mu-2)} \delta_{k_1 + k_2 ,2} \, \delta_{s_{\mu}>2} \\
	& \qquad\qquad\qquad\qquad\qquad\qquad\qquad\qquad\qquad- \sum_{\substack{\nu\neq \mu:\, s_\nu=2, \\ \frac{r_\mu}{s_\mu} > \frac{r_\nu}{s_\nu},\, \left\lfloor\frac{r_\mu}{s_\mu}\right\rfloor = \left\lfloor\frac{r_\nu}{s_\nu}\right\rfloor }} \frac{Q_\nu}{t_\mu} \left(\frac{t_\nu}{t_\mu}\right)^{r_\nu} \delta_{k_1 + k_2 ,2} \, \delta_{s_{\mu}>2}\bigg)\,.
	\end{split}
	\end{equation}
	Moreover, for $\mu \neq \nu$ we have 
	\begin{equation}
	\label{eq:omega1half2offdiag}
	\omega_{\frac{1}{2},2}\big(\begin{smallmatrix} \mu & \nu \\ z_1 & z_2 \end{smallmatrix}\big) = \begin{cases}
	-\frac{\dd z_1 \dd z_2}{z_1^{2}z_2^{2}} \, \frac{Q_\mu}{t_\mu}\left(\frac{t_\nu}{t_\mu}\right)^{r_{\mu}'} & \text{if $(r_\mu, s_\mu)$, $(r_\nu, s_\nu)$ satisfy \labelcref{item:1half2cond4} and $\frac{r_\mu}{s_\mu}>\frac{r_\nu}{s_\nu}$} \\[1em]
	- \frac{\dd z_1 \dd z_2}{z_1^{2}z_2^{2}} \, \frac{Q_\mu t_\mu^{r_\mu'} t_\nu^{r_\nu'} }{t_\mu^{r_\mu} - t_\nu^{r_\mu}} & \text{if } r_\mu =r_\nu \text{ and } s_\mu =s_\nu =2 \\[1em]
	0 & \text{otherwise}
	\end{cases}\,.
	\end{equation}
\end{proposition}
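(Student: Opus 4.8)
The plan is to compute $\omega_{\frac12,2}$ directly from the topological recursion formula \eqref{thefirstg}, in exactly the same spirit as the proof of \cref{prop:symm_cond_omega03}, and to read off when the resulting expression is symmetric. The structure of $\omega_{\frac12,2}$ is parallel to $\omega_{0,3}$ but with one factor of $\omega_{0,2}$ replaced by the crosscap differential $\omega_{\frac12,1}\begin{psmallmatrix}\mu\\ z\end{psmallmatrix}=\tfrac{Q_\mu\,\dd z}{z}$; since $\omega_{\frac12,1}$ again has no cross-terms, the same vanishing mechanism applies: $\omega_{\frac12,2}\begin{psmallmatrix}\mu\\ z_1\end{psmallmatrix}\begin{psmallmatrix}\nu\\ z_2\end{psmallmatrix}$ can only be non-zero when $z_1,z_2$ are on branches $\mu,\nu$ that are ``linked'' through a single recursion step, i.e. when both $z$ and $z'$ (or $z_2$) lie in the fibre over a common branchpoint in a way compatible with $\omega_{0,2}$ not mixing components. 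This immediately organises the computation into the diagonal case $\mu=\nu$ and the off-diagonal case $\mu\neq\nu$, matching \eqref{eq:F1half2_mumu_sym1}--\eqref{eq:F1half2_mumu_sym2} and \eqref{eq:omega1half2offdiag}.

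Concretely, I would set $z_1\in\tilde C_\mu$ and apply the recursion with respect to the variable $z_1$, so $z\in\tilde C_\mu$ and $z'\in\mf f'(z)$. The relevant building blocks are already assembled in \cref{GenusZeroConstraints}: the kernel factor \labelcref{AlwaysInKernel} pairing $z$ with a same-branch argument, the factor \labelcref{KernelSameComp} when $z'\in\tilde C_\mu$, and the factors \labelcref{KernelOtherComp=}, \labelcref{KernelOtherComp>}, \labelcref{KernelOtherComp<} when $z'\in\tilde C_\nu$ with $\nu\neq\mu$, according to the sign of $\tfrac{r_\mu}{s_\mu}-\tfrac{r_\nu}{s_\nu}$. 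Now, however, the ``correlator'' attached to one of $z,z'$ is $\omega_{\frac12,1}$ rather than $\omega_{0,2}$, so in place of a series in $z_m$ one simply gets the monomial $\tfrac{Q_\bullet}{z}\dd z$ on the appropriate branch. Substituting the expansions and extracting $\operatorname{Res}_{z=0}$ is then a bookkeeping exercise. The diagonal term ($z'\in\tilde C_\mu$) produces, via the $\ell_\mu(k)$-computation already carried out (with $k\mapsto$ the crosscap index $0$, effectively), the $\delta_{k_1+k_2,s_\mu}$ term with coefficient $-\tfrac{Q_\mu r'_\mu}{t_\mu r_\mu}$ or $+\tfrac{Q_\mu(r'_\mu+1)}{t_\mu r_\mu}$; the off-branch contributions \labelcref{KernelOtherComp=}--\labelcref{KernelOtherComp<}, with the crosscap monomial inserted, give the various $\delta_{k_1+k_2,2}$ corrections indexed by $\nu$ with $t_\nu/t_\mu$-powers determined by $\ell_\nu$. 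The same computation with $z_2\in\tilde C_\nu$, $\nu\neq\mu$, yields the off-diagonal formula \eqref{eq:omega1half2offdiag}, where the power $r'_\mu$ and the prefactors come out of evaluating the leading Laurent coefficient in the resonant case $\tfrac{r_\mu}{s_\mu}=\tfrac{r_\nu}{s_\nu}$ versus the strictly inequivalent cases.

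Having these explicit expressions, symmetry of $\omega_{\frac12,2}$ is checked term by term. The $\delta_{k_1+k_2,s_\mu}$ parts are automatically symmetric in $(k_1,k_2)$, so they impose no new constraint; the potential asymmetry lives entirely in: (a) the existence of the spurious $\delta_{k_1+k_2,2}$ terms in the diagonal formula that are not matched by a symmetric counterpart unless the corresponding $Q$-sums vanish — these give conditions \labelcref{item:1half2cond1} and \labelcref{item:1half2cond2}; (b) the requirement that $\omega_{\frac12,2}\begin{psmallmatrix}\mu\\ z_1\end{psmallmatrix}\begin{psmallmatrix}\nu\\ z_2\end{psmallmatrix}=\omega_{\frac12,2}\begin{psmallmatrix}\nu\\ z_2\end{psmallmatrix}\begin{psmallmatrix}\mu\\ z_1\end{psmallmatrix}$ in the off-diagonal sector, which forces $Q_\mu=-Q_\nu$ in exactly the $\lfloor r_\mu/s_\mu\rfloor=\lfloor r_\nu/s_\nu\rfloor$, mixed-$\pm 1$ situation of \labelcref{item:1half2cond4} (and is automatic, by the coprimality + \labelcref{item:r_eq_pm1mods} + \labelcref{item:eq_fract_constr} hypotheses, in all other cases, since then one of the two orderings makes the corresponding correlator vanish). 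Conversely, assuming \labelcref{item:1half2cond1}--\labelcref{item:1half2cond4}, one verifies that all the $\delta_{k_1+k_2,2}$ asymmetries cancel and that \eqref{eq:F1half2_mumu_sym1}--\eqref{eq:omega1half2offdiag} hold, establishing sufficiency.

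The main obstacle, as in \cref{prop:symm_cond_omega03}, is the combinatorial control of the off-branch summations \labelcref{KernelOtherComp>} and \labelcref{KernelOtherComp<}: one must determine, for each $\nu$ with $\tfrac{r_\mu}{s_\mu}\gtrless\tfrac{r_\nu}{s_\nu}$, precisely which powers of $z$ survive after imposing the residue/divisibility constraints, and show that — under \labelcref{item:r_eq_pm1mods} and \labelcref{item:eq_fract_constr} — the only surviving contribution is at total degree $k_1+k_2=2$ with the exponent $\ell$ of $t_\nu/t_\mu$ running over $[s_\mu-3]$ (hence the $\delta_{s_\mu>2}$ and the explicit range in \labelcref{item:1half2cond1}, \labelcref{item:1half2cond2}). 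This requires the same floor/ceiling estimates $\lfloor x\rfloor>x-1$ and the ordering $\tfrac{r_\mu}{s_\mu}\geq\tfrac{r_\nu}{s_\nu}$ used in the proof of \cref{prop:symm_cond_omega03,ALEtoMLE}; I expect the bulk of the work to be reusing and slightly adapting those lemmas rather than proving anything genuinely new. A secondary subtlety is keeping track of the two distinct roles of $Q_\mu$: it enters both through $\omega_{\frac12,1}$ on the branch carrying the kernel argument $z$ and through $\omega_{\frac12,1}$ on the argument $z'$, and these contribute with different prefactors; careful separation of the two insertions is what produces the asymmetric $Q_\mu$-term appearing only in the $r_\mu=-1\bmod s_\mu$ case of \labelcref{item:1half2cond2}.
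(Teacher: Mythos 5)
Your proposal is correct and takes essentially the same route as the paper: compute $\omega_{\frac12,2}$ from the recursion formula using the building blocks from the $\omega_{0,3}$ analysis (kernel factors \labelcref{AlwaysInKernel}--\labelcref{KernelOtherComp<}), split into diagonal and off-diagonal cases, and extract symmetry conditions term by term with the $\delta_{k_1+k_2,s_\mu}$ parts automatic and the $\delta_{k_1+k_2,2}$ parts producing \labelcref{item:1half2cond1}--\labelcref{item:1half2cond4}. The one place where you slightly underestimate the work is your expectation that the off-branch control is ``reusing and slightly adapting'' the $\omega_{0,3}$ lemmas: the paper in fact needs a genuinely new sub-lemma (\cref{prop:symm_cond_omega1half2_tech}), classifying exactly when $\Delta=r_\mu s_\nu-r_\nu s_\mu$ can be as small as $1$ or $\max\{s_\mu,s_\nu\}-2$, and this classification (rather than the raw floor/ceiling estimates) is what pins down the surviving exponents $\ell\in[s_\mu-3]$ and the $\delta_{s_\mu>2}$ factor; without it the ``bookkeeping exercise'' you describe does not close.
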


Before proving \cref{prop:symm_cond_omega1half2}, we need the following technical fact.
\begin{lemma}
	\label{prop:symm_cond_omega1half2_tech} Let $r_1,r_2,s_1,s_2>0$ with $r_\mu$ and $s_\mu$ coprime and $\frac{r_1}{s_1}>\frac{r_2}{s_2}$. Further assume the integers satisfy all constraints from
	\cref{prop:symm_cond_omega03}. Then
	\begin{equation*}
	\Delta\coloneqq r_1 s_2 - r_2 s_1
	\end{equation*}
	takes the following values.
	\begin{itemize}
		\item $\Delta = 1$ if $s_1=1$, $r_1=\bigl\lfloor \frac{r_2}{s_2}\bigr\rfloor+1$, and $r_2 = -1 \,\,{\rm mod}\,\, s_2$ or we have $s_2=1$, $r_2=\bigl\lfloor\frac{r_1}{s_1}\bigr\rfloor$, and $r_1 = 1 \,\,{\rm mod}\,\, s_1$.
		
		\item $\Delta = \mr{max}\{s_1,s_2\}-2$ if $s_1 = 2$, $\bigl\lfloor\frac{r_1}{s_1}\bigr\rfloor = \bigl\lfloor\frac{r_2}{s_2}\bigr\rfloor$, and $r_2 = 1 \,\,{\rm mod}\,\, s_2$ or we have $s_2 = 2$, $\bigl\lfloor\frac{r_1}{s_1}\bigr\rfloor = \bigl\lfloor\frac{r_2}{s_2}\bigr\rfloor$, and $r_1 = -1 \,\,{\rm mod}\,\, s_1$.
		
		\item Otherwise $\Delta \geq \mr{max}\{s_1,s_2\} - 1$.
	\end{itemize}
\end{lemma}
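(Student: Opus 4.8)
The plan is to reduce the statement to an elementary but careful finite case analysis. First I would fix notation: write $r_\mu = q_\mu s_\mu + \delta_\mu$ with $q_\mu = \lfloor r_\mu / s_\mu\rfloor$ and $\delta_\mu \in [0,s_\mu)$. By condition \labelcref{item:r_eq_pm1mods} of \cref{prop:symm_cond_omega03} together with the coprimality assumption, $\delta_\mu \in \{1, s_\mu - 1\}$ when $s_\mu \geq 2$ (the two values coinciding when $s_\mu = 2$), while $\delta_\mu = 0$ when $s_\mu = 1$. Substituting this into the definition of $\Delta$ gives the key identity
\begin{equation*}
\Delta = r_1 s_2 - r_2 s_1 = (q_1 - q_2)\,s_1 s_2 + \delta_1 s_2 - \delta_2 s_1\,,
\end{equation*}
and since $0 \leq \delta_2 \leq s_2 - 1$ and $\delta_1 \geq 0$, the hypothesis $\frac{r_1}{s_1} > \frac{r_2}{s_2}$, equivalently $\Delta > 0$, immediately forces $q_1 \geq q_2$.

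I would then split the argument according to the value of $q_1 - q_2$. When $q_1 - q_2 \geq 2$, bounding $\delta_2 s_1 \leq (s_2 - 1)s_1$ in the identity yields $\Delta \geq s_1(s_2+1) \geq \max\{s_1,s_2\} + 1$, so we land in the third alternative. When $q_1 - q_2 = 1$ the same bound gives $\Delta \geq s_1 + \delta_1 s_2$; if $s_1 \geq 2$ this is $\geq s_1 + s_2 > \max\{s_1,s_2\} - 1$ (third alternative), and if $s_1 = 1$ one has $\Delta = s_2 - \delta_2$, which equals $1$ exactly in the configuration of the first alternative ($\delta_1 = 0$ with $r_1 = \lfloor r_2/s_2\rfloor + 1$ and $\delta_2 = s_2 - 1$, i.e. $r_2 \equiv -1 \bmod s_2$) and equals $s_2 - 1 = \max\{s_1,s_2\} - 1$ in the remaining case $\delta_2 = 1$. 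The genuinely delicate case is $q_1 = q_2$, where $\Delta = \delta_1 s_2 - \delta_2 s_1$ and condition \labelcref{item:eq_fract_constr} of \cref{prop:symm_cond_omega03} is used: I would first note that $s_1 = 1$ is impossible (it forces $\Delta \leq 0$) and that $s_2 = 1$ gives $\Delta = \delta_1$, landing in the first alternative when $\delta_1 = 1$ and giving $\Delta = s_1 - 1 = \max\{s_1,s_2\}-1$ when $\delta_1 = s_1 - 1$; then, for $s_1, s_2 \geq 2$, I would run through the four sign patterns $(\delta_1,\delta_2)$. The patterns $(1,1)$ and $(s_1 - 1, s_2 - 1)$ with $s_1, s_2 > 2$ are precisely the ``matching residue sign, equal floor'' configurations excluded by \labelcref{item:eq_fract_constr}; the pattern $(1, s_2 - 1)$ with $s_2 > 2$ gives $\Delta = 1 - (s_1-1)(s_2-1) < 0$ and hence never occurs; and the pattern $(s_1 - 1, 1)$ --- which also absorbs the degenerate possibilities $s_1 = 2$ or $s_2 = 2$ --- gives $\Delta = (s_1-1)(s_2-1) - 1$, and a short computation shows this equals $\max\{s_1,s_2\} - 2$ exactly when $s_1 = 2$ with $r_2 \equiv 1 \bmod s_2$ or when $s_2 = 2$ with $r_1 \equiv -1 \bmod s_1$ (the second alternative), and is $\geq \max\{s_1,s_2\} - 1$ in all other situations (necessarily with $s_1, s_2 \geq 3$).

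Collecting these sub-cases gives the three stated alternatives and also shows that the hypotheses of the first and second alternatives cannot be met simultaneously. I expect the main obstacle to be purely bookkeeping rather than conceptual: the dichotomy ``$r_\mu \equiv 1$'' versus ``$r_\mu \equiv -1 \bmod s_\mu$'' degenerates for $s_\mu \in \{1, 2\}$, so each boundary configuration must be inspected by hand to decide which of the three alternatives it falls into, and one must keep careful track of the floors $\lfloor r_\mu/s_\mu\rfloor$ in the statement against the $q_\mu$ of the parametrisation. All the inequalities that arise are elementary consequences of $s_1, s_2 \geq 1$ and the hypothesis $\frac{r_1}{s_1} > \frac{r_2}{s_2}$.
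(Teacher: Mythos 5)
Your proof is correct and follows essentially the same approach as the paper's: both use the decomposition $r_\mu = q_\mu s_\mu + \delta_\mu$ with $\delta_\mu \in \{1, s_\mu - 1\}$ (or $\delta_\mu = 0$ when $s_\mu = 1$), the identity $\Delta = (q_1 - q_2)s_1 s_2 + \delta_1 s_2 - \delta_2 s_1$, and a case analysis on $q_1 - q_2$ combined with the four residue patterns. The only difference is organizational (you split $q_1 - q_2 \geq 2$, $=1$, $=0$, while the paper handles $q_1 = q_2$ first and then $q_1 > q_2$), and both accounts have to examine the degenerate configurations $s_\mu \in \{1,2\}$ by hand.
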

\begin{proof}
	Let us write $r_\mu\coloneqq r_\mu' s_\mu + r_\mu''$ with $r_\mu''\in [0,s_\mu)$. Then due to \labelcref{item:r_eq_pm1mods} we know that $r_\mu''\in\{1,s_\mu-1\}$. We can thus rewrite
	\begin{equation}
	\label{eq:Delta_rew}
	\Delta= s_1 s_2 (r_1'-r_2') + r_1''  s_2 - r_2''  s_1\,.
	\end{equation}
	By simply plugging in the indicated values for $r_\mu$, $s_\mu$ it is easy to see that they indeed produce $\Delta \in \{1, \mr{max}\{s_1,s_2\}-2\}$. For instance setting $s_1=1$, $r_1=r_2'+1$, and $r_2'' = s_2 - 1$ we directly obtain $\Delta=1$ as was claimed.\\
	It is therefore only left to prove that in all other cases $\Delta \geq \mr{max}\{s_1,s_2\} - 1$. To do so let us first assume $r_1'=r_2'$ in which case \eqref{eq:Delta_rew} reduces to
	\begin{equation*}
	\Delta = r_1'' s_2 - r_2'' s_1 = \left\{\begin{array}{lll}
	s_1 - s_2 &\text{if }r_1''= s_1-1,& r_2'' = s_2-1 \\
	s_1s_2 - s_1 - s_2 &\text{if }r_1''= s_1-1,& r_2'' = 1 \\
	s_1 + s_2 - s_1s_2 &\text{if }r_1''= 1,& r_2'' = s_2-1 \\
	s_2 - s_1 &\text{if }r_1''= 1,& r_2'' = 1 \\
	\end{array}\right.\,.
	\end{equation*}
	Let us go through the cases separately. First assume $r_1''= s_1-1$ and $r_2'' = s_2-1$. In this case \labelcref{item:eq_fract_constr} forbids $s_1,s_2>2$. Notice now that since we assume that $\frac{r_1}{s_1}>\frac{r_2}{s_2}$ necessarily $\Delta>0$ and therefore the only allowed cases are $s_2=1$ with $s_1>1$ or $s_2=2$ with $s_1>2$. While $s_2=1$ and $s_1>1$ gives $\Delta=s_2-1=\mr{max}\{s_1,s_2\}-1$ the second case was considered before yielding $\Delta=s_2-2=\mr{max}\{s_1,s_2\}-2$.\\
	For our further analysis of the case $r_1'=r_2'$ we may assume that $s_1,s_2>2$. Then if $r_1''= s_1-1$ and $r_2'' = 1$ one has
	\begin{equation*}
	\Delta = s_1(s_2 - 1) - s_2 \geq 3(s_2 - 1) - s_2 = 2s_2 - 3 \geq s_2\,,
	\end{equation*}
	and similarly also $\Delta \geq s_1$. Now assume $r_1''= 1$ and $r_2'' = s_2-1$. Then due to $s_1,s_2>2$ one has
	\begin{equation*}
	\Delta =  s_2 - s_1(s_2 - 1) < s_2 - 2(s_2 - 1) = 2 - s_2 < 0\,,
	\end{equation*}
	which contradicts the assumption that $\frac{r_1}{s_1} > \frac{r_2}{s_2}$. Finally notice that the case $r_1''= 1$ and $r_2'' = 1$ for $s_1,s_2>2$ is forbidden by \labelcref{item:eq_fract_constr}.\\
	Now assume that $r_1' \neq r_2'$. Due to $\frac{r_1}{s_1} > \frac{r_2}{s_2}$ we thus know that $r_1' > r_2'$ which using \eqref{eq:Delta_rew} implies that
	\begin{align*}
	\Delta &\geq s_1 s_2 + r_1''  s_2 - r_2''  s_1 \\
	&= s_1 (s_2 - r_2'') + r_1''  s_2 \\
	&\geq s_1 + r_1''  s_2\,,
	\end{align*}
	which is always larger than $\mr{max}\{s_1,s_2\}-1$ unless $r_1''=0$. However, if $r_1''=0$ then necessarily $s_1=1$. Then either $r_1' = r_2' +1$ implying that
	\begin{equation*}
	\Delta = s_2 - r_2'' = \begin{cases}
	1 & \text{if }r_2'' = s_2 - 1 \\
	s_2 - 1 & \text{if }r_2'' = 1
	\end{cases}\,,
	\end{equation*}
	or we have $r_1' > r_2' +1$ yielding
	\begin{equation*}
	\Delta \geq 2s_2 - r_2'' \geq s_2 > \mr{max}\{s_1,s_2\}-1\,.
	\end{equation*}
	Note that the cases above in which $\Delta=1$ are exactly those considered in the beginning of the proof.
\end{proof}

\begin{proof}[Proof of \cref{prop:symm_cond_omega1half2}]
	Let us start by computing $\omega_{\frac{1}{2},2}\big(\begin{smallmatrix} \mu & \mu \\ z_1 & z_2 \end{smallmatrix}\big)$. Inspecting \cref{thefirstg} it should be clear that one may proceed as in the computation of $\omega_{0,n}$. While the second term in in the bracket in \eqref{thefirstg} gives a contribution
	\begin{align}
	\omega_{\frac{1}{2},2}\big(\begin{smallmatrix} \mu & \mu \\ z_1 & z_2 \end{smallmatrix}\big) &= \ldots + \mathop{{\rm Res}}_{z = 0} \sum_{a=1}^{r_\mu-1} \frac{z \dd z_1}{z_1 (z_1-z)} \frac{\theta_\mu^{a}}{r_\mu t_\mu (\theta_\mu^{a s_\mu} -1) z^{s_\mu -1}} \frac{\dd z_2}{(\theta_\mu^{a} z - z_2)^2} \, \frac{Q_\mu \dd z}{z} \nonumber \\
	&= \ldots + \sum_{k_1,k_2 > 0} \frac{\dd z_1 \dd z_2}{z_1^{k_1 + 1}z_2^{k_2 + 1}}\,k_2\, Q_{\mu}\, \delta_{k_1 + k_2,s_{\mu}} \frac{2\ell_{\mu}(k_2) -r_{\mu} + 1}{2r_{\mu}t_{\mu}}\,, \label{eq:omega1half2mumu1}
	\end{align}
	the first one gets a contribution of type \labelcref{AlwaysInKernel} and additionally a factor
	\begin{equation*}
	\frac{1}{r_\nu(t_\nu \theta_\nu^{b s_\nu} z^{r_\mu s_\nu/r_\nu - 1} - t_\mu z^{s_\mu -1})}\frac{Q_\nu \dd z}{z}\,,
	\end{equation*}
	where one sums over $b\in[r_\nu]$ if $\nu\neq\mu$ and over $b\in[r_\mu-1]$ if $\nu=\mu$. This gives
	\begin{equation}
	\begin{split}
	\label{eq:F1half2_mumu}
	\omega_{\frac{1}{2},2}\big(\begin{smallmatrix} \mu & \mu \\ z_1 & z_2 \end{smallmatrix}\big) &= \mathop{{\rm Res}}_{z = 0} \sum_{\nu\in \tilde{\mathfrak{a}}} \sum_{b=1}^{r_\mu-\delta_{\mu,\nu}} \frac{z \dd z_1}{z_1 (z_1-z)} \frac{\dd z \dd z_2}{(z-z_2)^2} \frac{Q_\nu}{r_\nu(t_\nu \theta_\nu^{b s_\nu} z^{r_\mu s_\nu/r_\nu - 1} - t_\mu z^{s_\mu -1})}\frac{\dd z}{z} + \ldots \\
	&= \sum_{k_1,k_2 > 0} \frac{\dd z_1 \dd z_2}{z_1^{k_1 + 1}z_2^{k_2 + 1}} \bigg\{k_2 Q_{\mu} \delta_{k_1 + k_2,s_{\mu}}\,\frac{ \ell_{\mu}(k_2) -r_{\mu} + 1}{r_{\mu}t_{\mu}} \\
	& \qquad \qquad \qquad \qquad+ k_2 \,\sum_{\substack{\nu \neq \mu \\ \frac{r_{\mu}}{s_{\mu}} = \frac{r_{\nu}}{s_{\nu}}}} \frac{ t_\mu^{r_\mu-1} }{t_{\nu}^{r_{\mu}}-t_{\mu}^{r_{\mu}}} Q_{\nu} \delta_{k_1 + k_2,s_{\mu}} \\
	& \qquad \qquad \qquad \qquad- k_2 \,\sum_{\substack{\nu \neq \mu \\ \frac{r_{\mu}}{s_{\mu}} > \frac{r_{\nu}}{s_{\nu}}}}  \sum_{\ell' \geq 0} \frac{t_{\nu}^{r_\nu\ell'}}{t_{\mu}^{r_\nu\ell' + 1}} Q_{\nu} \delta_{k_1 + k_2 + \ell' (r_{\mu} s_{\nu} - r_{\nu} s_{\mu}) , s_{\mu}}\\
	& \qquad \qquad \qquad \qquad+ k_2 \,\sum_{\substack{\nu \neq \mu \\ \frac{r_{\mu}}{s_{\mu}} < \frac{r_{\nu}}{s_{\nu}}}}  \sum_{\ell' > 0} \frac{t_{\mu}^{r_\nu \ell' - 1}}{t_{\nu}^{r_\nu \ell'}} Q_{\nu} \delta_{k_1 + k_2 - \ell' (r_{\mu} s_{\nu} - r_{\nu} s_{\mu}), s_{\mu}}\bigg\}\,,
	\end{split}
	\end{equation}
	where in the second line we also included the contribution from \eqref{eq:omega1half2mumu1}.
	Let us further on use the notation
	\begin{equation*}
	\omega_{\frac{1}{2},2}\big(\begin{smallmatrix} \mu_1 & \mu_2 \\ z_1 & z_2 \end{smallmatrix}\big) = \sum_{k_1,k_2 > 0} \frac{\dd z_1 \dd z_2}{z_1^{k_1 + 1}z_2^{k_2 + 1}} ~ F_{\frac{1}{2},2}\big[\begin{smallmatrix} \mu_1 & \mu_2 \\ k_1 & k_2 \end{smallmatrix}\big]\,.
	\end{equation*}
		Inspecting \eqref{eq:F1half2_mumu} we notice that for $s_\mu\leq 2$ the components $F_{\frac{1}{2},2}\big[\begin{smallmatrix} \mu & \mu \\ k_1 & k_2 \end{smallmatrix}\big]$ are always symmetric under the exchange of arguments. To be more precise, for $s_\mu=1$ the components all vanish and for $s_\mu=2$ they get a contribution from the first and second line of \eqref{eq:F1half2_mumu} solely. Hence, using that $\ell_\mu(1)=r_\mu'$ for $s_\mu=2$ we see that in this case $F_{\frac{1}{2},2}\big[\begin{smallmatrix} \mu & \mu \\ k_1 & k_2 \end{smallmatrix}\big]$ is indeed given by \eqref{eq:F1half2_mumu_sym1}.\\
	Now let us assume that $s_\mu>2$. Then due to property \labelcref{item:eq_fract_constr} of \cref{prop:symm_cond_omega03} the contribution $\frac{r_\mu}{s_\mu}=\frac{r_\nu}{s_\nu}$ in \eqref{eq:F1half2_mumu} has to vanish and one ends up with
	\begin{equation}
	\label{eq:F1half2_mumu_sgr2}
	\begin{split}
	F_{\frac{1}{2},2}\big[\begin{smallmatrix} \mu & \mu \\ k_1 & k_2 \end{smallmatrix}\big] & = -\delta_{k_1 + k_2,s_{\mu}}\, k_2 \,\bigg(\frac{Q_{\mu} (r_{\mu} - 1 - \ell_{\mu}(k_2))}{r_{\mu} t_{\mu}} + \sum_{\substack{\nu \neq \mu \\ \frac{r_{\mu}}{s_{\mu}} > \frac{r_{\nu}}{s_{\nu}}}} \frac{Q_{\nu}}{t_{\mu}}\bigg)  \\
	& \quad + \frac{k_2}{t_\mu}\sum_{\ell' > 0} \bigg(-\sum_{\substack{\nu \neq \mu \\ \frac{r_{\mu}}{s_{\mu}} > \frac{r_{\nu}}{s_{\nu}}}} \!\! \left(\frac{t_{\nu}}{t_{\mu}}\right)^{r_\nu \ell'}\!\! Q_{\nu} \delta_{k_1 + k_2 + \ell' (r_{\mu} s_{\nu} - r_{\nu} s_{\mu}) , s_{\mu}} + \!\! \sum_{\substack{\nu \neq \mu \\ \frac{r_{\mu}}{s_{\mu}} < \frac{r_{\nu}}{s_{\nu}}}} \! \left(\frac{t_{\mu}}{t_{\nu}}\right)^{r_\nu \ell'} \!\! Q_{\nu} \delta_{k_1 + k_2 - \ell' (r_{\mu} s_{\nu} - r_{\nu} s_{\mu}), s_{\mu}} \bigg)\,.
	\end{split} 
	\end{equation}
	Since $r_{\mu} s_{\nu} - r_{\nu} s_{\mu} \neq 0$ for $\frac{r_{\mu}}{s_{\mu}} \neq \frac{r_{\nu}}{s_{\nu}}$ we can analyse the symmetry constraints coming from the first and second line of \eqref{eq:F1half2_mumu_sgr2} individually.\\
	First, assume $r_\mu = r_\mu' s_\mu + 1$. Then if $k_1 + k_2 = s_\mu$ we may use that
	\begin{equation*}
	r_{\mu} - 1 - \ell_{\mu}(k_2) = r_{\mu} - 1 - r_\mu' k_2 = r_\mu' k_1
	\end{equation*}
	to find that
	\begin{equation}
	\label{eq:F1half2_mumu_eqs_pl1}
	F_{\frac{1}{2},2}\big[\begin{smallmatrix} \mu & \mu \\ k_1 & k_2 \end{smallmatrix}\big] = -k_1 k_2 \, \frac{Q_{\mu} r_\mu'}{r_{\mu} t_{\mu}} - k_2 \sum_{\substack{\nu \neq \mu \\ \frac{r_{\mu}}{s_{\mu}} > \frac{r_{\nu}}{s_{\nu}}}} \frac{Q_{\nu}}{t_{\mu}}\,.
	\end{equation}
	This expression is symmetric under the exchange of $k_1$ and $k_2$ if and only if
	\begin{equation}
	\label{eq:F1half2_mumu_eqs_pl1sym1}
	\sum_{\substack{\nu \neq \mu \\ \frac{r_{\mu}}{s_{\mu}} > \frac{r_{\nu}}{s_{\nu}}}} Q_{\nu} = 0 \,.
	\end{equation}
	Let us proceed by analysing the symmetry constraints coming from coefficients with $k_1 + k_2< s_\mu$, i.e.\ we consider the second line of \eqref{eq:F1half2_mumu_sgr2}. Clearly the case $s_\mu=3$ is symmetric. Therefore now assume that $s_\mu>3$. In this case we can only expect a non-vanishing $F_{\frac{1}{2},2}\big[\begin{smallmatrix} \mu & \mu \\ k_1 & k_2 \end{smallmatrix}\big]$ if $|r_{\mu} s_{\nu} - r_{\nu} s_{\mu}| \in [s_\mu -3]$ for some $\nu\neq\mu$. Due to \cref{prop:symm_cond_omega1half2_tech} we know that for all $\nu$ either $|r_{\mu} s_{\nu} - r_{\nu} s_{\mu}| =1$ or $|r_{\mu} s_{\nu} - r_{\nu} s_{\mu}|  \geq s_\mu -2$ which limits the cases in which $F_{\frac{1}{2},2}\big[\begin{smallmatrix} \mu & \mu \\ k_1 & k_2 \end{smallmatrix}\big] \neq 0$ for $k_1 + k_2< s_\mu$ extremely. To be more precise, \cref{prop:symm_cond_omega1half2_tech} tells us that $|r_{\mu} s_{\nu} - r_{\nu} s_{\mu}| =1$ if and only if $s_\nu = 1$ and $r_\nu = \big\lfloor \tfrac{r_\mu}{s_\mu}\big\rfloor$. Therefore for $2< k_1 + k_2 < s_\mu$ we have
	\begin{align}
	F_{\frac{1}{2},2}\big[\begin{smallmatrix} \mu & \mu \\ k_1 & k_2 \end{smallmatrix}\big] &= -\frac{k_2}{t_\mu} \sum_{\ell'>0} \sum_{\substack{\nu \neq \mu \\ s_\nu = 1 ,\, r_\nu = \big\lfloor \frac{r_\mu}{s_\mu}\big\rfloor}}  \left(\frac{t_{\nu}}{t_{\mu}}\right)^{r_\nu \ell'} Q_{\nu}\,\delta_{k_1 + k_2 + \ell' , s_\mu} \nonumber \\
	&= -\frac{k_2}{t_\mu} \sum_{\substack{\nu \neq \mu \\ s_\nu = 1 ,\, r_\nu = \big\lfloor \frac{r_\mu}{s_\mu} \big\rfloor}}  \left(\frac{t_{\nu}}{t_{\mu}}\right)^{r_\nu (s_\mu-k_1-k_2)} Q_{\nu}\,, \label{eq:F1half2_mumu_eqs_min2}
	\end{align}
	which is symmetric if and only if for all $\ell\in [s_\mu-3]$ we have
	\begin{equation*}
	\sum_{\substack{\nu \neq \mu \\ s_\nu = 1 ,\, r_\nu = \big\lfloor \frac{r_\mu}{s_\mu} \big\rfloor}}  \left(\frac{t_{\nu}}{t_{\mu}}\right)^{r_\nu \ell} Q_{\nu} = 0\,.
	\end{equation*}
	Together with \eqref{eq:F1half2_mumu_eqs_pl1sym1} this explains symmetry condition \labelcref{item:1half2cond1}. Regarding the formula for $F_{\frac{1}{2},2}\big[\begin{smallmatrix} \mu & \mu \\ k_1 & k_2 \end{smallmatrix}\big]$ stated in \eqref{eq:F1half2_mumu_sym1} notice that for $s_\mu=2$ we have
	\begin{equation*}
	F_{\frac{1}{2},2}\big[\begin{smallmatrix} \mu & \mu \\ k_1 & k_2 \end{smallmatrix}\big] = -\frac{Q_{\mu} r_{\mu}'}{r_{\mu}t_{\mu}}\,\delta_{k_1 + k_2,2} + \sum_{\substack{\nu \neq \mu \\ \frac{r_{\mu}}{s_{\mu}} = \frac{r_{\nu}}{s_{\nu}}}} \frac{ t_\mu^{r_\mu-1} }{t_{\nu}^{r_{\mu}}-t_{\mu}^{r_{\mu}}} Q_{\nu} \delta_{k_1 + k_2,s_{\mu}}
	\end{equation*}
	while for $s_\mu>2$ one has
	\begin{equation*}
	\begin{split}
	& \quad F_{\frac{1}{2},2}\big[\begin{smallmatrix} \mu & \mu \\ k_1 & k_2 \end{smallmatrix}\big] \\
	& = -\frac{Q_{\mu} r_{\mu}'}{r_{\mu}t_{\mu}}\,\delta_{k_1 + k_2,s_\mu} - \sum_{\substack{\nu\neq \mu \\ s_\nu = 1,\, \big\lfloor \frac{r_\mu}{s_\mu}\big\rfloor = r_\nu}} \frac{Q_\nu}{t_\mu} \left(\frac{t_\nu}{t_\mu}\right)^{r_\nu(s_\mu-2)} \delta_{k_1 + k_2 ,2} + \sum_{\substack{\nu \neq \mu:\, s_\nu=2, \\ \frac{r_{\mu}}{s_{\mu}} < \frac{r_{\nu}}{s_{\nu}},\, \left\lfloor\frac{r_\mu}{s_\mu}\right\rfloor = \left\lfloor\frac{r_\nu}{s_\nu}\right\rfloor}} \frac{Q_\nu}{t_\mu} \left(\frac{t_{\mu}}{t_{\nu}}\right)^{r_\nu}  \delta_{k_1 + k_2 ,2}\,.
	\end{split}
	\end{equation*}
	In order to obtain the first two terms in the above expression one applies the symmetry constraint \labelcref{item:1half2cond1} on \eqref{eq:F1half2_mumu_eqs_pl1} and \eqref{eq:F1half2_mumu_eqs_min2}. The third term is due to contributions $\nu\neq\mu$ in \eqref{eq:F1half2_mumu_sgr2} for which $|r_{\mu} s_{\nu} - r_{\nu} s_{\mu}| = s_\mu -2$.  \Cref{prop:symm_cond_omega1half2_tech} tells us that this is the case for exactly those $\nu\neq \mu$ with $s_\nu=2$, $\frac{r_\mu}{s_\mu} < \frac{r_\nu}{s_\nu}$, and $\big\lfloor \tfrac{r_\mu}{s_\mu}\big\rfloor = \big\lfloor \tfrac{r_\nu}{s_\nu}\big\rfloor$. This now explains the origin of all terms occurring in \eqref{eq:F1half2_mumu_sym1} which closes the analysis of the case $r_\mu = r_\mu' s_\mu + 1$.
	
	Now assume that $r_\mu = r_\mu' s_\mu + s_\mu - 1$ and $s_\mu>2$. First let us inspect $F_{\frac{1}{2},2}\big[\begin{smallmatrix} \mu & \mu \\ k_1 & k_2 \end{smallmatrix}\big]$ given by \eqref{eq:F1half2_mumu_sgr2} for $k_1+k_2=s_\mu$. In this case we may use that
	\begin{equation*}
	r_{\mu} - 1 - \ell_{\mu}(k_2) = r_{\mu} - k_1 (r_\mu'+1)
	\end{equation*}
	in order to find that
	\begin{equation*}
	F_{\frac{1}{2},2}\big[\begin{smallmatrix} \mu & \mu \\ k_1 & k_2 \end{smallmatrix}\big] = k_1 k_2 \, \frac{Q_{\mu} (r_\mu'+1)}{r_{\mu} t_{\mu}} - k_2 \bigg(\frac{Q_{\mu}}{t_{\mu}} + \sum_{\substack{\nu \neq \mu \\ \frac{r_{\mu}}{s_{\mu}} > \frac{r_{\nu}}{s_{\nu}}}} \frac{Q_{\nu}}{t_{\mu}}\bigg)
	\end{equation*}
	for $k_1 + k_2 = s_\mu$. This is symmetric if and only if
	\begin{equation}
	\label{eq:F1half2_mumu_eqs_pl2sym1}
	Q_{\mu} + \sum_{\substack{\nu \neq \mu \\ \frac{r_{\mu}}{s_{\mu}} > \frac{r_{\nu}}{s_{\nu}}}} Q_{\nu} = 0 \,.
	\end{equation}
	Now let us turn to the case $k_1 + k_2 < s_\mu$. Again, we may only expect a contribution from $\nu\neq\mu$ in the second line of \eqref{eq:F1half2_mumu_sgr2} possibly leading to a non-symmetric term if  $|r_{\mu} s_{\nu} - r_{\nu} s_{\mu}| =1$. Now \cref{prop:symm_cond_omega1half2_tech} says that $|r_{\mu} s_{\nu} - r_{\nu} s_{\mu}| =1$ if and only if $s_\nu = 1$ and $r_\nu = \big\lceil \tfrac{r_\mu}{s_\mu}\big\rceil$. Thus, for $2< k_1 + k_2 < s_\mu$ we find that
	\begin{equation*}
	F_{\frac{1}{2},2}\big[\begin{smallmatrix} \mu & \mu \\ k_1 & k_2 \end{smallmatrix}\big] =  \frac{k_2}{t_\mu} \sum_{\substack{\nu \neq \mu \\ s_\nu = 1 ,\, r_\nu = \big\lceil \frac{r_\mu}{s_\mu} \big\rceil}}  \left(\frac{t_{\mu}}{t_{\nu}}\right)^{r_\nu (s_\mu-k_1-k_2)} Q_{\nu}\,,
	\end{equation*}
	which is symmetric if and only if
	\begin{equation*}
	\sum_{\substack{\nu \neq \mu \\ s_\nu = 1 ,\, r_\nu = \big\lceil \frac{r_\mu}{s_\mu}\big\rceil }}  \left(\frac{t_{\mu}}{t_{\nu}}\right)^{r_\nu \ell} Q_{\nu} = 0
	\end{equation*}
	for all $\ell\in [s_\mu-3]$. This condition together with \eqref{eq:F1half2_mumu_eqs_pl2sym1} are of course nothing but \labelcref{item:1half2cond2}. Since the derivation of formula \eqref{eq:F1half2_mumu_sym2} is in line with the one of \eqref{eq:F1half2_mumu_sym1} we omit a further discussion.
	
	Now let us consider $F_{\frac{1}{2},2}\big[\begin{smallmatrix} \mu & \nu \\ k_1 & k_2 \end{smallmatrix}\big]$ for $\nu\neq\mu$. First, assume $\frac{r_\mu}{s_\mu} = \frac{r_\nu}{s_\nu}$. Note that in this case property \labelcref{item:eq_fract_constr} of \cref{prop:symm_cond_omega03} forces $s_\mu=s_\nu\leq 2$. Using the same approach as in the derivation of \eqref{eq:F1half2_mumu} one finds that
	\begin{equation}
	\label{eq:F1half2_munueqfr}
	F_{\frac{1}{2},2}\big[\begin{smallmatrix} \mu & \nu \\ k_1 & k_2 \end{smallmatrix}\big] = - k_2 \, \frac{Q_\mu t_\mu^{\ell_\mu(k_1)} t_\nu^{\ell_\mu(k_2)} }{t_\mu^{r_\mu} - t_\nu^{r_\nu}} \, \delta_{k_1 + k_2, s_\mu}\,.
	\end{equation}
	For $s_\mu=s_\nu\ = 1$ this expression is always vanishing and hence symmetric. Conversely, for $s_\mu=s_\nu\ = 2$ we have $F_{\frac{1}{2},2}\big[\begin{smallmatrix} \mu & \nu \\ k_1 & k_2 \end{smallmatrix}\big] = F_{\frac{1}{2},2}\big[\begin{smallmatrix} \nu & \mu \\ k_2 & k_1 \end{smallmatrix}\big]$ if and only if $Q_\mu = - Q_\nu$, which is captured in condition \labelcref{item:1half2cond4}.\\
	Now let us consider the case $\frac{r_\mu}{s_\mu} \neq \frac{r_\nu}{s_\nu}$. Without loss of generality we may assume that $\frac{r_\mu}{s_\mu} > \frac{r_\nu}{s_\nu}$. In this case we have
	\begin{align}
	F_{\frac{1}{2},2}\big[\begin{smallmatrix} \mu & \nu \\ k_1 & k_2 \end{smallmatrix}\big] 
	&= -k_2 \,\frac{Q_{\mu}}{t_\mu}\sum_{\ell' \geq 0} \bigg(\frac{t_{\nu}}{t_{\mu}}\bigg)^{r_\nu \ell' + \ell_\nu(k_2)} \delta_{k_1 + (k_2 + s_{\nu} \ell_\nu(k_2))\frac{r_{\mu}}{r_{\nu}} - s_\mu\ell_\nu(k_2) + \ell'(r_{\mu}s_{\nu} - r_{\nu} s_{\mu}) , s_{\mu}} \,. \label{eq:F1half2_munu_neq}
	\end{align}
	Let us first simplify this expression before turning to $F_{\frac{1}{2},2}\big[\begin{smallmatrix} \nu & \mu \\ k_2 & k_1 \end{smallmatrix}\big]$. Comparing \eqref{eq:F1half2_munu_neq} with \eqref{eq:omega03mumunu} we can deduce that property \labelcref{item:eq_fract_constr} of \cref{prop:symm_cond_omega03} ensures that $F_{\frac{1}{2},2}\big[\begin{smallmatrix} \mu & \nu \\ k_1 & k_2 \end{smallmatrix}\big]=0$ unless $k_1=1$. And moreover also $F_{\frac{1}{2},2}\big[\begin{smallmatrix} \mu & \nu \\ k_1 & k_2 \end{smallmatrix}\big]=0$ unless
	\begin{equation}
	\label{eq:F1half2_munu_neqz_cond}
	1 + (k_2 + s_{\nu} \ell_\nu(k_2))\tfrac{r_{\mu}}{r_{\nu}} - s_\mu\ell_\nu(k_2) = s_{\mu}\,.
	\end{equation}
	This relation cannot be satisfied for $k_2 \geq s_\nu$. Indeed, if $r_\nu = 1 \,\,{\rm mod}\,\, s_\nu$ then
	\begin{equation*}
		\begin{split}
			(k_2 + s_{\nu} \ell_\nu(k_2))\tfrac{r_{\mu}}{r_{\nu}} - s_\mu\ell_\nu(k_2) & = k_2(r_\mu - r_\nu' s_\mu) - \left\lfloor\frac{k_2 r_\nu'}{r_\nu} \right\rfloor (r_\mu s_\nu - r_\nu s_\mu) \\
			& = \frac{s_\mu}{s_\nu} \, k_2 + \bigg( \frac{k_2}{s_\nu} - \left\lfloor \frac{k_2 r_\nu'}{r_\nu}\right\rfloor\bigg)(r_\mu s_\nu - r_\nu s_\mu) \\ 
			& \geq \frac{s_\mu}{s_\nu} \, k_2  \geq s_\mu
		\end{split}
	\end{equation*}
	for all $k_2\geq s_\nu$ where we used that
	\begin{equation*}
		\ell_\nu(k) = \begin{cases}
			k r_\nu' - \left\lfloor\frac{k r_\nu'}{r_\nu} \right\rfloor \,r_\nu & \text{if } r_\nu = r_\nu' s_\nu + 1\\[0.5em]
			-k (r_\nu' +1) + \left\lceil\frac{k (r_\nu'+1)}{r_\nu} \right\rceil \,r_\nu & \text{if } r_\nu = r_\nu' s_\nu + s_\nu - 1
		\end{cases}\,,
	\end{equation*}
	and that $\big\lfloor \frac{k r_\nu'}{r_\nu}\big\rfloor \leq \frac{k}{s_\nu}$. The case $r_\nu = -1 \,\,{\rm mod}\,\, s_\nu$ can be covered using similar arguments. On the other hand for $1\leq k_2< s_\nu$ plugging in the explicit expression for $\ell_\nu(k_2)$ we find that \eqref{eq:F1half2_munu_neqz_cond} is equivalent to
	\begin{equation}
	\label{eq:F1half2_munu_neqz_cond2}
	s_\mu-1 = \begin{cases}
	k_2 (r_\mu - s_\mu r_\nu') & \text{if }r_\nu = r_\nu' s_\nu + 1 \\
	s_\mu + (s_\nu - k_2) (r_\mu - s_\mu r_\nu') & \text{if }r_\nu = r_\nu' s_\nu + s_\nu - 1
	\end{cases}\,.
	\end{equation}
	Let us first analyse the case $r_\nu = r_\nu' s_\nu + 1$. Since
	\begin{equation}
		\label{eq:F1half2_munu_neqz_cond2simpl}
		r_\mu - s_\mu r_\nu' = r_\mu'' + s_\mu \left(\left\lfloor\frac{r_\mu}{s_\mu} \right\rfloor - \left\lfloor\frac{r_\nu}{s_\nu} \right\rfloor + \delta_{s_\nu,1}\right)
	\end{equation}
	equation \eqref{eq:F1half2_munu_neqz_cond2} can only be satisfied if $s_\nu>1$ and $\big\lfloor\frac{r_\mu}{s_\mu} \big\rfloor = \big\lfloor\frac{r_\nu}{s_\nu} \big\rfloor$. Suppose the latter is true then since we assumed that $\frac{r_\mu}{s_\mu} > \frac{r_\nu}{s_\nu}$ we have $s_\nu r_\mu'' >s_\mu$. Therefore if we had $r_\mu''=1$ and $s_\mu>2$ this would violate property \labelcref{item:eq_fract_constr} of \cref{prop:symm_cond_omega03} as in this case $s_\nu,s_\mu>2$. Hence, we can assume that $r_\mu''=s_\mu-1$. Using \eqref{eq:F1half2_munu_neqz_cond2simpl} this means that \eqref{eq:F1half2_munu_neqz_cond2} is satisfied if and only if $k_2=1$.\\
	The case where $r_\nu = r_\nu' s_\nu + s_\nu - 1$ can be treated similarly. We can assume that $s_\nu>2$ since $s_\nu\leq 2$ is a special case of the one considered before. Now since $r_\mu - s_\mu r_\nu'\geq 0$ we always have
	\begin{equation*}
	s_\mu + (s_\nu - k_2) (r_\mu - s_\mu r_\nu') \geq s_\mu\,,
	\end{equation*}
	implying that \eqref{eq:F1half2_munu_neqz_cond2} cannot be satisfied for any $k_2>0$. Thus, $F_{\frac{1}{2},2}\big[\begin{smallmatrix} \mu & \nu \\ k_1 & k_2 \end{smallmatrix}\big]=0$ for all $k_i>0$ if $s_\nu>2$ and $r_\nu = r_\nu' s_\nu + s_\nu - 1$. To sum up, assuming that \labelcref{item:r_eq_pm1mods,item:eq_fract_constr} from \cref{prop:symm_cond_omega03} hold we have
	\begin{equation}
	\label{eq:F1half2_neq1}
	F_{\frac{1}{2},2}\big[\begin{smallmatrix} \mu & \nu \\ k_1 & k_2 \end{smallmatrix}\big]= \begin{cases} -
	\frac{Q_\mu}{t_\mu} \left(\frac{t_\nu}{t_\mu}\right)^{\ell_\nu(1)} & \text{if }k_1=k_2=1,\, s_\nu>1,\, \bigl\lfloor \frac{r_\mu}{s_\mu} \bigr\rfloor = \bigl\lfloor \frac{r_\nu}{s_\nu} \bigr\rfloor ,\, r_\mu = -1 \,\,{\rm mod}\,\, s_\mu ,\, r_\nu =1 \,\,{\rm mod}\,\, s_\nu\\
	0 & \text{otherwise.}
	\end{cases}
	\end{equation}
	One can obtain a similar expression for $F_{\frac{1}{2},2}\big[\begin{smallmatrix} \nu & \mu \\ k_2 & k_1 \end{smallmatrix}\big]$. First, one finds that
	\begin{equation*}
	F_{\frac{1}{2},2}\big[\begin{smallmatrix} \nu & \mu \\ k_2 & k_1 \end{smallmatrix}\big] = k_1 \sum_{\ell \geq 0} \frac{t_{\nu}^{\ell}}{t_{\mu}^{\ell + 1}} Q_{\nu} \delta_{r_{\mu}|k_1 - (\ell+1) s_{\mu}} \delta_{k_2 + (k_1 - (\ell+1) s_{\mu})\frac{r_{\nu}}{r_{\mu}} + \ell s_{\nu}, 0}\,.
	\end{equation*}
	Then comparing the above expression with \eqref{eq:omega03mumunu} one can simplify the expression for $F_{\frac{1}{2},2}\big[\begin{smallmatrix} \nu & \mu \\ k_2 & k_1 \end{smallmatrix}\big]$ as we did with $F_{\frac{1}{2},2}\big[\begin{smallmatrix} \mu & \nu \\ k_1 & k_2 \end{smallmatrix}\big]$ before. One finds that
	\begin{equation}
	\label{eq:F1half2_neq2}
	F_{\frac{1}{2},2}\big[\begin{smallmatrix} \nu & \mu \\ k_2 & k_1 \end{smallmatrix}\big] = \begin{cases}
	\frac{Q_\nu}{t_\nu} \left(\frac{t_\nu}{t_\mu}\right)^{r_\mu - \ell_\mu(1)} & k_1=k_2=1,\, s_\nu>1,\, \bigl\lfloor \frac{r_\mu}{s_\mu} \bigr\rfloor = \bigl\lfloor \frac{r_\nu}{s_\nu} \bigr\rfloor ,\, r_\mu = -1 \,\,{\rm mod}\,\, s_\mu ,\, r_\nu =1 \,\,{\rm mod}\,\, s_\nu\\
	0 & \text{otherwise}
	\end{cases}\,.
	\end{equation}
	Now comparing \eqref{eq:F1half2_neq1} with \eqref{eq:F1half2_neq2} it is clear that $F_{\frac{1}{2},2}\big[\begin{smallmatrix} \mu & \nu \\ k_1 & k_2 \end{smallmatrix}\big] = F_{\frac{1}{2},2}\big[\begin{smallmatrix} \nu & \mu \\ k_2 & k_1 \end{smallmatrix}\big]$ if and only if
	\begin{equation}
	\label{eq:1half2cond4}
	\frac{Q_\mu}{t_\mu} \left(\frac{t_\nu}{t_\mu}\right)^{\ell_\nu(1)} = - \frac{Q_\nu}{t_\nu} \left(\frac{t_\nu}{t_\mu}\right)^{r_\mu - \ell_\mu(1)}
	\end{equation}
	in case $r_\mu = r_\mu' s_\mu + s_\mu  -1$, $r_\nu =r_\nu' s_\nu + 1$, $s_\nu>1$, and  $r_\mu' = r_\nu'$. Notice now that due to $\ell_\nu(1) = r_\nu'$ and $\ell_\mu(1) = r_\mu - r_\mu' - 1$ equation \eqref{eq:1half2cond4} is equivalent to $Q_\mu = - Q_\nu$, which is nothing but condition \labelcref{item:1half2cond4}. Finally, notice that the equations \eqref{eq:F1half2_munueqfr} and \eqref{eq:F1half2_neq1} directly imply the formula for $\omega_{\frac{1}{2},2}\big(\begin{smallmatrix} \mu & \nu \\ z_1 & z_2 \end{smallmatrix}\big)$ stated in  \eqref{eq:omega1half2offdiag}. This finishes the proof.
\end{proof}

\medskip

\subsubsection{\texorpdfstring{$(g,n) = (1,1)$}{(g,n)=(1,1)}}

\medskip

We study separately the two types of terms in \eqref{thefirstg}
\[
\omega_{1,1}\big(\begin{smallmatrix} \mu_1 \\ z_1 \end{smallmatrix}\big) = \omega_{1,1}^{{\rm I}}\big(\begin{smallmatrix} \mu_1 \\ z_1 \end{smallmatrix}\big) + \omega_{1,1}^{{\rm II}}\big(\begin{smallmatrix} \mu_1 \\ z_1 \end{smallmatrix}\big)\,,
\]
where $\omega_{1,1}^{{\rm I}}$ is the contribution from $\omega_{0,2}$ and $\omega_{1,1}^{{\rm II}}$ the one from $\omega_{\frac{1}{2},1} \cdot \omega_{\frac{1}{2},1}$.

\begin{lemma}
	Let $\mu \in \tilde{\mf{a}}$. Then
	\begin{equation*}
	\omega_{1,1}^{{\rm I}}\big(\begin{smallmatrix} \mu \\ z \end{smallmatrix}\big) = \frac{r_{\mu}^2 - 1}{24 r_{\mu} t_{\mu}} \,\frac{\dd z}{z^{s_{\mu} + 1}}\,. 
	\end{equation*}
\end{lemma}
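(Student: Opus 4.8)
The plan is to evaluate the $\omega_{0,2}$-part of the genus one recursion \eqref{thefirstg} by a direct residue computation, reducing it to a sum over roots of unity which is then evaluated by an inversion trick.

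First, since $z_1 \in \tilde{C}_\mu$ and $\omega_{0,2}$ has no cross-terms between the components, the numerator $\int_\mu^z \omega_{0,2}(\cdot,z_1)$ of $K_\mu(z_1,z,z')$ vanishes unless the running point $z$ lies on $\tilde{C}_\mu$, and then the factor $\omega_{0,2}(z,z')$ forces $z' = \theta_\mu^a z$ with $a \in [r_\mu-1]$. I would then insert $x\big(\begin{smallmatrix}\mu\\z\end{smallmatrix}\big) = z^{r_\mu}$ and $y\big(\begin{smallmatrix}\mu\\z\end{smallmatrix}\big) = -t_\mu z^{s_\mu-r_\mu}$, so that $\big(y(z)-y(\theta_\mu^a z)\big)\dd x(z) = -t_\mu r_\mu(1-\theta_\mu^{a s_\mu})\,z^{s_\mu-1}\,\dd z$, together with $\omega_{0,2}\big(\begin{smallmatrix}\mu&\mu\\z&\theta_\mu^a z\end{smallmatrix}\big) = \frac{\theta_\mu^a\,(\dd z)^2}{z^2(1-\theta_\mu^a)^2}$ and $\int_\mu^z\omega_{0,2}(\cdot,z_1) = \frac{z\,\dd z_1}{z_1(z_1-z)} = \sum_{k\geq1}z^k z_1^{-k-1}\,\dd z_1$. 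The $a$-th summand of the integrand is then
\[
\frac{\theta_\mu^a}{-t_\mu r_\mu(1-\theta_\mu^{a s_\mu})(1-\theta_\mu^a)^2}\,\sum_{k\geq1}z^{k-s_\mu-1}z_1^{-k-1}\,\dd z\,\dd z_1\,,
\]
and taking $\Res_{z=0}$ keeps only the term $k = s_\mu$, giving
\[
\omega_{1,1}^{{\rm I}}\big(\begin{smallmatrix}\mu\\z_1\end{smallmatrix}\big) = -\frac{1}{t_\mu r_\mu}\bigg(\sum_{a=1}^{r_\mu-1}\frac{\theta_\mu^a}{(1-\theta_\mu^{a s_\mu})(1-\theta_\mu^a)^2}\bigg)\frac{\dd z_1}{z_1^{s_\mu+1}}\,.
\]

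It remains to show that for $\theta = e^{2\ii\pi/r}$ with $\gcd(r,s)=1$ one has $S := \sum_{a=1}^{r-1}\frac{\theta^a}{(1-\theta^{as})(1-\theta^a)^2} = -\frac{r^2-1}{24}$, which is the only step requiring genuine work. I would prove this by reindexing the sum via $a\mapsto r-a$: using that $\frac{x}{(1-x)^2}$ is invariant under $x\mapsto x^{-1}$, and that $\frac{1}{1-\theta^{-as}} = 1 - \frac{1}{1-\theta^{as}}$ — which makes sense precisely because $\gcd(r,s) = 1$ guarantees $\theta^{as}\neq1$ for $1\leq a\leq r-1$ — the reindexed sum equals $\sum_{a=1}^{r-1}\frac{\theta^a}{(1-\theta^a)^2} - S$, hence $2S = \sum_{a=1}^{r-1}\frac{\theta^a}{(1-\theta^a)^2}$. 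Finally, performing the substitution $m_2 = m_1 + a \pmod r$ in \eqref{eq:psidef} shows $\sum_{a=1}^{r-1}\frac{\theta^a}{(1-\theta^a)^2} = \frac{2}{r}\,\Psi^{(1)}_r(\emptyset) = -\frac{r^2-1}{12}$ by \eqref{psiorun}, so $S = -\frac{r_\mu^2-1}{24}$. Plugging this into the last display gives the stated formula; the main things to handle carefully are the identification of which fibre points contribute and the extraction of the simple pole at $z=0$, everything else being routine bookkeeping.
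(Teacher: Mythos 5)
Your proof is correct and takes a genuinely cleaner route for the key step. Both you and the paper reduce the residue computation to the same sum over $r_\mu$th roots of unity, but from there you diverge. The paper inserts the Fourier-type identity
\begin{equation*}
\frac{\theta_{\mu}^{a}}{(1 - \theta_{\mu}^{a})^2} = \sum_{m = 0}^{r_{\mu} - 1} \frac{m(r_{\mu} - m)}{2r_{\mu}}\,\theta_{\mu}^{am}\,,
\end{equation*}
which turns the sum into one over $m$ involving the function $\ell_\mu(m)$, and then computes it by a case split on $r_\mu = r'_\mu s_\mu + 1$ versus $r_\mu = r'_\mu s_\mu + s_\mu - 1$, in each case decomposing $m = m's_\mu + m''$ and evaluating a double sum. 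You instead evaluate $S = \sum_{a=1}^{r-1}\frac{\theta^a}{(1-\theta^{as})(1-\theta^a)^2}$ directly via the involution $a \mapsto r-a$: invariance of $\frac{\theta^a}{(1-\theta^a)^2}$ under this involution plus the identity $\frac{1}{1-\theta^{-as}} = 1 - \frac{1}{1-\theta^{as}}$ give $2S = \sum_a\frac{\theta^a}{(1-\theta^a)^2}$, which then follows from the $\Psi^{(1)}_r(\emptyset)$ value in \eqref{psiorun}. Your route is shorter, avoids the case analysis and the $\ell_\mu$ bookkeeping entirely, requires only $\gcd(r_\mu,s_\mu)=1$ (not $r_\mu \equiv \pm1 \bmod s_\mu$), and is in the spirit of the paper's own closing remark that the computation via $\Psi^{(1)}$ is ``much simpler.'' All the intermediate identities you invoke check out: the explicit form of $(y(z)-y(\theta^a z))\dd x(z)$, the extraction of the $z^{-1}$ coefficient selecting $k=s_\mu$, the algebraic inversion identities, and the evaluation $\sum_{a=1}^{r-1}\frac{\theta^a}{(1-\theta^a)^2} = -\frac{r^2-1}{12}$ via the $m_2 = m_1 + a$ substitution in \eqref{eq:psidef}.
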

\begin{proof}
	We have
	\begin{equation}
	\label{tniugniun} 
	\omega_{1,1}^{{\rm I}} \big(\begin{smallmatrix} \mu \\ z_1 \end{smallmatrix}\big) =  \mathop{{\rm Res}}_{z = \mu} \sum_{a = 1}^{r_{\mu} - 1} K_{\mu}\big(\begin{smallmatrix} \mu & \mu & \mu \\ z_1 & z & \vartheta_{\mu}^{a}z \end{smallmatrix}\big)\, \frac{\vartheta_{\mu}^{a}}{(1 - \vartheta_{\mu}^{a})^2}\,\frac{(\dd z)^2}{z^2} 
	\end{equation} 
	where the presence of $\omega_{0,2}(z,z')$ forces the three variables in the recursion kernel to belong to the same component $\tilde{C}_{\mu}$ of $\tilde{C}$. To handle the sum, we use the following identity for $a$ not divisible by $r_{\mu}$
	\begin{equation}
	\label{doubleid}
	\frac{\theta_{\mu}^{a}}{(1 - \theta_{\mu}^{a})^2} = \sum_{m = 0}^{r_{\mu} - 1} \frac{m(r_{\mu} - m)}{2r_{\mu}}\,\vartheta_{\mu}^{am}\,.
	\end{equation}
	The contribution of each term of the right-hand side of \eqref{doubleid} to \eqref{tniugniun} can be computed similarly to \cref{KernelOtherComp=}. After computation of the residue we find
	\begin{equation}
	\label{thefunnysum}\omega_{1,1}^{{\rm I}}\big(\begin{smallmatrix} \mu \\ z_1 \end{smallmatrix}\big) = - \sum_{m = 0}^{r_{\mu} - 1} \frac{m(r_{\mu} - m)(r_{\mu} - 1 - 2\ell_{\mu}(m))}{4r_{\mu}^2t_{\mu}}\,\frac{\dd z_1}{z_1^{s_{\mu} + 1}}\,,
	\end{equation}
	and it is enough to sum over $m \in [r_{\mu} - 1]$. Recall that $\ell_{\mu}(m)$ is defined as the unique integer in $[0,r_{\mu})$ such that $\ell_{\mu}(m) = -mc_{\mu}\,\,{\rm mod}\,\,r_{\mu}$, where $ c_\mu $ is given by \cref{defcmu}
		
	In the case $r_\mu = r'_\mu s_\mu +1$, we can decompose $m = m's_{\mu} + m''$ with $m'' \in [s_{\mu}]$. It puts $m \in [r_{\mu} - 1]$ in bijection with $(m',m'') \in [0,r_{\mu}') \times [s_{\mu}]$, and we get
	\[
	\ell_{\mu}(m) = r_{\mu}'m - r_{\mu}m' = r_{\mu}'m'' - m' \,.
	\]
	Then, we have
\begin{equation*}
\begin{split}
	\sum_{m = 1}^{r_{\mu} - 1} m(r_{\mu} - 1)(r_{\mu} - 1 - 2\ell_{\mu}(m)) & = \sum_{m' = 0}^{r_{\mu} - 1} \sum_{m'' = 1}^{s_{\mu}} (m's_{\mu} + m'')(r_{\mu} - m's_{\mu} - m'')(r_{\mu} - 1 - 2r'_{\mu}m'' + 2m') \\
	& = -\frac{r_{\mu}(r_{\mu}^2 - 1)}{6}\,,
\end{split}
\end{equation*}
	whence
	\[
	\omega_{1,1}^{{\rm I}}\big(\begin{smallmatrix} \mu \\ z_1 \end{smallmatrix}\big) = \frac{r_{\mu}^2 - 1}{24r_{\mu}t_{\mu}}\,\frac{\dd z_1}{z_1^{s_{\mu} + 1}}\,.
	\]
	
	In the case $r_\mu = r'_\mu s_\mu + s_\mu - 1$, we decompose $m = m's_{\mu} + m''$ with $m'' \in [0,s_{\mu})$. It puts $m \in [r_{\mu}]$ in bijection with $(m',m'') \in [0,r_{\mu}'] \times [0,s_{\mu})$ and gives
	\[
	\ell_{\mu}(m) = -(r'_{\mu} +1)m + (m' + 1)r_{\mu} = r_{\mu} - m' - (r'_{\mu} +1)m''\,.
	\]
	Then, we have
\begin{equation*}
\begin{split}
\sum_{m = 0}^{r_{\mu}} m(r_{\mu} - 1)(r_{\mu} - 1 - 2\ell_{\mu}) & = \sum_{m' = 0}^{r_{\mu}'} \sum_{m'' = 1}^{s_{\mu} - 1} (m's_{\mu} + m'')(r_{\mu} - m's_{\mu} - m'')(2m' + 2(r'_{\mu} +1)m'' -r_{\mu} - 1) \\ 
& = -\frac{r_{\mu}(r_{\mu}^2 - 1)}{6}\,.
\end{split}
\end{equation*}
	So, we obtain the same formula for $\omega_{1,1}^{{\rm I}}$ in both cases.
\end{proof}

\begin{remark} This computation can also be done directly from the Airy structure, as
	\[
	\omega_{1,1}^{{\rm I}}\big(\begin{smallmatrix} \mu \\ z \end{smallmatrix}\big) = \sum_{k > 0} F_{1,1}[k]|_{Q = 0}\,\frac{\dd z}{z^{k + 1}} 
	\] 
	and $F_{1,1}[k]$ is the constant term of order $\hbar$ in the unique operator of the Airy structure of the form $\hbar\partial_{x_k} + \higherorderterms{2}$. In fact, as \eqref{tniugniun} coincides with the topological recursion on the sole component $\tilde{C}_{\mu}$ of the spectral curve, the value of $F_{1,1}[k]|_{Q = 0}$ must coincide at $t_{\mu} = \frac{1}{r_{\mu}}$ with the one computed in \cite[Lemma B.3]{BBCCN18}. This is indeed the case, but we note the calculation by this other method involves the sum
	\[
	\Psi^{(1)}(\emptyset) = -\frac{1}{2} \sum_{m = 0}^{r_{\mu} - 1} \frac{m(r_{\mu} - m)}{2r_{\mu}} = -\frac{r_{\mu}(r_{\mu}^2 - 1)}{24}\,,
	\]
	which is much simpler than \eqref{thefunnysum} although it leads to the same result.
\end{remark}

To obtain $\omega_{1,1}^{{\rm II}}$ we may split the contribution of the various $\nu$ as we did in the derivation of \cref{eq:F1half2_mumu}. The details are omitted and we only give the outcome:
\begin{equation*}
\begin{split}
\omega_{1,1}^{{\rm II}}\big(\begin{smallmatrix} \mu \\ z_1 \end{smallmatrix}\big) & = \dd z_1 Q_\mu \bigg\{-z_1^{-(s_{\mu} + 1)}\,Q_{\mu}\,\frac{r_{\mu} - 1}{2r_{\mu}t_{\mu}} \\
& \quad \quad - \sum_{\substack{\nu \neq \mu \\ \frac{r_{\mu}}{s_{\mu}} = \frac{r_{\nu}}{s_{\nu}}}} \frac{t_{\mu}^{r_{\mu} - 1}}{t_{\mu}^{r_{\mu}} - t_{\nu}^{r_{\mu}}}\,\frac{Q_{\nu}}{z_1^{s_{\mu} + 1} } \\
& \quad\quad - \sum_{\substack{\nu \neq \mu \\ \frac{r_{\mu}}{s_{\mu}} > \frac{r_{\nu}}{s_{\nu}}}} \sum_{\substack{k_1 > 0 \\ \ell' \geq 0}} \frac{t_{\nu}^{r_{\nu}\ell'}}{t_{\mu}^{r_{\nu}\ell' + 1}}\,\frac{Q_{\nu}}{z_1^{k_1 + 1}}\delta_{k_1 + \ell'(s_{\nu} r_{\mu} - s_{\mu}r_{\nu}),s_{\mu}} \\
& \quad \quad + \sum_{\substack{\nu \neq \mu \\ \frac{r_{\mu}}{s_{\mu}} < \frac{r_{\nu}}{s_{\nu}}}} \sum_{\substack{k_1 > 0 \\ \ell' > 0}} \frac{t_{\mu}^{r_{\nu}\ell' - 1}}{t_{\nu}^{r_{\nu}\ell'}}\,\frac{Q_{\nu}}{z_1^{k_1 + 1}}\delta_{k_1+ \ell'(s_{\mu}r_{\nu} - s_{\nu}r_{\mu}),s_{\mu}} \bigg\}\,.
\end{split}
\end{equation*}

\medskip
\subsection{The exceptional case}
\label{sec:exccasecorr}

\medskip

We proceed considering a spectral curve with only one branchpoint. Let us only slightly change the setting and add a component --- the exceptional component --- indexed by $\mu_-\in\tilde{\mathfrak{a}}$ on which
\[
x\big(\begin{smallmatrix} \mu_- \\ z \end{smallmatrix}\big) = z,\qquad y\big(\begin{smallmatrix} \mu_- \\ z \end{smallmatrix}\big) = 0\,,
\]
i.e.\ $r_{\mu_-}=1$ and $s_{\mu_-}=\infty$. On all other components $\mu\in\tilde{\mathfrak{a}}\setminus\{\mu_-\}$ we still take
\[
x\big(\begin{smallmatrix} \mu \\ z \end{smallmatrix}\big) = z^{r_{\mu}},\qquad y\big(\begin{smallmatrix} \mu \\ z \end{smallmatrix}\big) = -t_{\mu}\,z^{s_{\mu} - r_{\mu}}\,,
\]
and we equip the curve with the standard bidifferential and the crosscap differential
\[
\omega_{\frac{1}{2},1}\big(\begin{smallmatrix} \mu \\ z \end{smallmatrix}\big) = \frac{Q_{\mu}\dd z}{z}\,.
\]
As before we only require that ${\rm gcd}(r_{\mu},s_{\mu}) = 1$, and that for $\tfrac{r_{\mu}}{s_{\mu}} = \tfrac{r_{\nu}}{s_{\nu}}$ and $\mu \neq \nu$ we must have $t_{\mu}^{r_\mu} \neq t_{\nu}^{r_\nu}$. In the following we will consider and partially compute the correlators $\omega_{0,3}$, $\omega_{0,4}$ and $\omega_{\frac{1}{2},2}$ and characterise the cases in which they are symmetric. These correlators are still given by formula \eqref{thefirstg} and we will learn that in the exceptional case they mostly behave as in the standard case which was discussed in the preceding sections.
\begin{lemma}
	\label{prop:symm_cond_exc}
	The correlators $(\omega_{0,n})_{n\geq 1}$ are symmetric if and only if the conditions \labelcref{item:r_eq_pm1mods}, \labelcref{item:eq_fract_constr} of \cref{prop:symm_cond_omega03} and condition \labelcref{item:threefracteq} of \Cref{prop:symm_cond_omega04} are satisfied. Assuming the above conditions, $\omega_{\frac{1}{2},2}$ is symmetric if and only if \labelcref{item:1half2cond1}, \labelcref{item:1half2cond2}, and \labelcref{item:1half2cond4} of \cref{prop:symm_cond_omega1half2} hold. In this case the correlators $\omega_{0,3}$ and $\omega_{\frac{1}{2},2}$ are still given by the formulas stated in \cref{prop:symm_cond_omega03} and \cref{prop:symm_cond_omega1half2}, where any Kronecker delta involving $ s_{\mu_-} = \infty $ evaluates to $0$.
\end{lemma}

\begin{remark}
	One should remark that all expressions occurring in \labelcref{item:r_eq_pm1mods} to \labelcref{item:1half2cond4} make sense even with $s_{\mu_-}=\infty$ for a single $\mu_-\in\tilde{\mathfrak{a}}$ if we understand $r_{\mu_-} = 1 \,\,{\rm mod}\,\, s_{\mu_-}$ and interpret $\frac{r_{\mu_-}}{s_{\mu_-}}$ as being zero.
\end{remark}

\begin{proof}[Proof of \cref{prop:symm_cond_exc}]
	Let us first focus on the correlators $(\omega_{0,n})_{n\geq 1}$. Regarding those \Cref{mainth2gequal0} already tells us that \labelcref{item:r_eq_pm1mods}, \labelcref{item:eq_fract_constr}, and \labelcref{item:threefracteq} are sufficient conditions for these correlators to be symmetric. Conversely, we can argue that these conditions are necessary for the symmetry too by looking at $\omega_{0,3}$ and $\omega_{0,4}$.\\
	First of all we notice that in case $\mu_1,\mu_2,\mu_3$ satisfy $\mu_i\neq\mu_-$ the correlator $\omega_{0,3}\big(\begin{smallmatrix} \mu_1 & \mu_2 & \mu_3 \\ z_1 & z_2 & z_3 \end{smallmatrix}\big)$ is computed as in the standard case which was considered in \cref{prop:symm_cond_omega03} since the bidifferential does not mix the components. So first we can deduce that restriction of $\omega_{0,3}$ to the non-exceptional components is symmetric if and only if $(r_\mu,s_\mu)_{\mu\neq\mu_-}$ satisfy \labelcref{item:r_eq_pm1mods} and \labelcref{item:eq_fract_constr} and second we know that in this case $\omega_{0,3}$ is given by \eqref{03explicit}. The same argument applied to $\omega_{0,4}$ tells us that $(r_\mu,s_\mu)_{\mu\neq\mu_-}$ have to satisfy \labelcref{item:threefracteq} for the correlator to be symmetric.\\
	Hence, we only need to argue that \labelcref{item:r_eq_pm1mods}--\labelcref{item:threefracteq} involving the component $\tilde{C}_{\mu_-}$ are necessary to impose. However, condition \labelcref{item:r_eq_pm1mods} is satisfied by our convention that $r_{\mu_-} = 1 \,\,{\rm mod}\,\, s_{\mu_-}$. Regarding property \labelcref{item:eq_fract_constr}, suppose there is a $\mu\neq\mu_-$ for which $s_\mu>2$, $r_{\mu} = 1 \,\,{\rm mod}\,\, s_{\mu}$, and $\big\lfloor\frac{r_{\mu}}{s_{\mu}}\big\rfloor=0$. Then of course we must already have $r_{\mu_2}=1$. In this case the correlator
	\begin{align}
		\frac{\omega_{0,3}\big(\begin{smallmatrix} \mu & \mu & \mu_- \\ z_1 & z_2 & z_3 \end{smallmatrix}\big)}{\dd z_1 \dd z_2 \dd z_3} &= \frac{1}{\dd z_1}\mathop{{\rm Res}}_{z = \mu} K_{\mu}\Big(\begin{smallmatrix} \mu & \mu & \mu_- \\ z_1 & z & z \end{smallmatrix}\Big)\, \sum_{k_2,k_3 \geq 1} k_2k_3\, z^{k_2+ k_3 -2} z_2^{-k_2-1} z_3^{-k_3-1}\, (\dd z)^2 \nonumber\\
		&= -\frac{1}{t_\mu} \sum_{k_1,k_2,k_3 \geq 1} k_2k_3\,  z_1^{-k_1-1} z_2^{-k_2-1} z_3^{-k_3-1}\, \delta_{k_1+k_2+k_3 , s_\mu}\,, \label{eq:omega03_1exc}
	\end{align}
	where we used that
	\begin{equation*}
		K_{\mu}\Big(\begin{smallmatrix} \mu & \mu & \mu_- \\ z_1 & z & z \end{smallmatrix}\Big) = - \frac{\dd z_1}{t_\mu \dd z}\sum_{k_1 \geq 1} z^{k_1+1-s_\mu} z_1^{-k_1-1}\,.
	\end{equation*}
	This is clearly non-zero unlike $\omega_{0,3}\begin{psmallmatrix} \mu & \mu_- &\mu \\ z_1 & z_3 & z_2 \end{psmallmatrix}=0$ and therefore we see that \labelcref{item:eq_fract_constr} must indeed be satisfied for all $(r_\mu,s_\mu)_{\mu\in\tilde{\mathfrak{a}}}$ for $\omega_{0,3}$ to be symmetric.\\
	Regarding \labelcref{item:threefracteq} we proceed similarly. Suppose there are $\mu\neq \nu$ distinct from $\mu_-$ with $s_\mu,s_\nu>1$ and $\big\lfloor\frac{r_{\mu}}{s_{\mu}}\big\rfloor=\big\lfloor\frac{r_{\nu}}{s_{\nu}}\big\rfloor=0$. Then \labelcref{item:r_eq_pm1mods} and \labelcref{item:eq_fract_constr} only leave us with $r_\mu=r_\nu=1$ and $s_\mu=s_\nu=2$. We compute
	\begin{align*}
		\frac{\omega_{0,4}\big(\begin{smallmatrix} \mu & \mu & \nu & \mu_- \\ z_1 & z_2 & z_3 & z_4 \end{smallmatrix}\big)}{\dd z_1 \dd z_2 \dd z_3 \dd z_4} &= \frac{1}{\dd z_1}\mathop{{\rm Res}}_{z = \mu} K_{\mu}\Big(\begin{smallmatrix} \mu & \mu & \nu & \mu_- \\ z_1 & z & z & z \end{smallmatrix}\Big)\, \sum_{k_2,k_3,k_4 \geq 1} k_2k_3k_4\, z^{k_2+ k_3 + k_4 -3} z_2^{-k_2-1} z_3^{-k_3-1} z_4^{-k_4-1}\, (\dd z)^3 \nonumber\\
		&= -\frac{1}{t_\mu(t_\mu-t_\nu)} \frac{1}{z_1^{2} z_2^{2} z_3^{2} z_4^{2}}
	\end{align*}
	which is non-vanishing where
	\begin{equation*}
		K_{\mu}\Big(\begin{smallmatrix} \mu & \mu & \nu & \mu_- \\ z_1 & z & z & z \end{smallmatrix}\Big) = - \frac{\dd z_1}{(t_\mu - t_\nu) t_\mu (\dd z)^2} \sum_{k_1\geq 1} z^{k_1-2} z_1^{-k_1-1}\,.
	\end{equation*}
	Using that always $\omega_{0,4}\begin{psmallmatrix} \mu & \nu & \mu & \mu_- \\ z_1 & z_3 & z_2 & z_4 \end{psmallmatrix}=0$ this shows that \labelcref{item:eq_fract_constr} is indeed necessary to impose.

	Now let us turn to $\omega_{\frac{1}{2},2}$. In the following set $\omega_{\frac{1}{2},2}'\coloneqq\omega_{\frac{1}{2},2}|_{Q_{\mu_-=0}}$, i.e.\ with $\omega_{\frac{1}{2},2}'$ we denote the restriction of $\omega_{\frac{1}{2},2}$ to the non-exceptional components. Then for distinct $\mu,\nu\in\tilde{\mathfrak{a}}\setminus\{\mu_-\}$ we have
	\begin{equation*}
	\omega_{\frac{1}{2},2}\big(\begin{smallmatrix} \mu & \nu \\ z_1 & z_2 \end{smallmatrix}\big) = \omega_{\frac{1}{2},2}'\big(\begin{smallmatrix} \mu & \nu \\ z_1 & z_2 \end{smallmatrix}\big)
	\end{equation*}
	as the bidifferential is not mixing the components. Thus, from the analysis done in \cref{prop:symm_cond_omega1half2} we deduce that the condition ensuring the symmetry of this correlator is \labelcref{item:1half2cond4} and in the symmetric case it may be evaluated using formula \eqref{eq:omega1half2offdiag}. However, if the arguments lie on the same component $\mu\neq\mu_-$ we obtain an additional contribution
	\begin{align*}
	\omega_{\frac{1}{2},2}\big(\begin{smallmatrix} \mu & \mu \\ z_1 & z_2 \end{smallmatrix}\big) &= \omega_{\frac{1}{2},2}'\big(\begin{smallmatrix} \mu & \mu \\ z_1 & z_2 \end{smallmatrix}\big) + r_\mu Q_{\mu_-} \mathop{{\rm Res}}_{z = \mu} K_{\mu}\Big(\begin{smallmatrix} \mu & \mu & \mu_- \\ z_1 & z & z^{r_{\mu}} \end{smallmatrix}\Big)\, \sum_{k_2 \geq 1} k_2 \, z^{k_2 -2} z_2^{-k_2-1}\, (\dd z)^2 \dd z_2\\
	&= \omega_{\frac{1}{2},2}'\big(\begin{smallmatrix} \mu & \mu \\ z_1 & z_2 \end{smallmatrix}\big) - \sum_{k_1,k_2 \geq 1} \frac{\dd z_1 \dd z_2}{z_1^{k_1 + 1}z_2^{k_2 + 1}} \,  k_2 \frac{Q_{\mu_-}}{t_\mu}\, \delta_{k_1+k_2, s_\mu}
	\end{align*}
	coming from the first term in the bracket in \eqref{thefirstg}. This is symmetric for $s_\mu\leq 2$ while for $s_\mu>3$ we may use that $\omega_{\frac{1}{2},2}'\big(\begin{smallmatrix} \mu & \mu \\ z_1 & z_2 \end{smallmatrix}\big)$ is computed via \eqref{eq:F1half2_mumu_sgr2} in order to get
	\begin{equation*}
	\begin{split}
	&\omega_{\frac{1}{2},2}\big(\begin{smallmatrix} \mu & \mu \\ z_1 & z_2 \end{smallmatrix}\big) = \sum_{k_1,k_2 \geq 1} \frac{\dd z_1 \dd z_2}{z_1^{k_1 + 1}z_2^{k_2 + 1}} \Bigg\{-\delta_{k_1 + k_2,s_{\mu}}\, k_2 \,\bigg(\frac{Q_{\mu} (r_{\mu} - 1 - \ell_{\mu}(k_2))}{r_{\mu} t_{\mu}} + \sum_{\substack{\nu\in\tilde{\mathfrak{a}}\setminus\{\mu\} \\ \frac{r_{\mu}}{s_{\mu}} > \frac{r_{\nu}}{s_{\nu}}}} \frac{Q_{\nu}}{t_{\mu}}\bigg) \\
	& \qquad\qquad\qquad\qquad\qquad\qquad\quad + \frac{k_2}{t_\mu}\sum_{\ell' > 0} \bigg(-\sum_{\substack{ \nu\in\tilde{\mathfrak{a}}\setminus\{\mu,\mu_-\} \\ \frac{r_{\mu}}{s_{\mu}} > \frac{r_{\nu}}{s_{\nu}}}} \!\! \left(\frac{t_{\nu}}{t_{\mu}}\right)^{r_\nu \ell'}\!\! Q_{\nu} \delta_{k_1 + k_2 + \ell' (r_{\mu} s_{\nu} - r_{\nu} s_{\mu}) , s_{\mu}} \\
	&\qquad\qquad\qquad\qquad\qquad\qquad\qquad\qquad\quad\:\:+\sum_{\substack{\nu\in\tilde{\mathfrak{a}}\setminus\{\mu,\mu_-\} \\ \frac{r_{\mu}}{s_{\mu}} < \frac{r_{\nu}}{s_{\nu}}}} \! \left(\frac{t_{\mu}}{t_{\nu}}\right)^{r_\nu \ell'} \!\! Q_{\nu} \delta_{k_1 + k_2 - \ell' (r_{\mu} s_{\nu} - r_{\nu} s_{\mu}), s_{\mu}} \bigg) \Bigg\}\,.
	\end{split} 
	\end{equation*}
	Comparing \eqref{eq:F1half2_mumu_sgr2} with the above equation and following the characterisation of the symmetry of \eqref{eq:F1half2_mumu_sgr2} it should be clear that the condition for the symmetry of $\omega_{\frac{1}{2},2}\big(\begin{smallmatrix} \mu & \mu \\ z_1 & z_2 \end{smallmatrix}\big)$ above is encoded in \labelcref{item:1half2cond1} and \labelcref{item:1half2cond2}.
	
	Now let us turn to the computation of correlators with arguments lying on the exceptional component. It is a straightforward calculation to find that
	\begin{equation}
	\label{eq:F1half2_mumu_exc}
	\omega_{\frac{1}{2},2}\big(\begin{smallmatrix} \mu_- & \mu_- \\ z_1 & z_2 \end{smallmatrix}\big) = \frac{\dd z_1 \dd z_2}{z_1^{2}z_2^{2}} \!\! \sum_{\substack{\nu\neq\mu_- \\ (r_\nu,s_\nu) = (1,2)}} \!\! \frac{Q_\mu}{t_\mu}\,,
	\end{equation}
	which is always symmetric. Notice at this point that this is in accordance with \labelcref{item:1half2cond1} and \labelcref{item:1half2cond2} as the two conditions do not imply any constraints for $\mu_-$. Now let us briefly argue that formula \eqref{eq:F1half2_mumu_sym1} exactly produces the result for $\omega_{\frac{1}{2},2}\big(\begin{smallmatrix} \mu_- & \mu_- \\ z_1 & z_2 \end{smallmatrix}\big)$ we obtained above. Since $r_{\mu_-}'=\bigl\lfloor\frac{r_{\mu_-}}{s_{\mu_-}}\bigr\rfloor = 0$ the first term in \eqref{eq:F1half2_mumu_sym1} vanishes. The two terms after are empty sums, hence vanishing. Thus the last term in  \eqref{eq:F1half2_mumu_sym1} is the only one potentially leading to a non-zero contribution and indeed one finds that the sum condition in this sum coincides with the one in \eqref{eq:F1half2_mumu_exc}. We therefore deduce that formula \eqref{eq:F1half2_mumu_sym1} even applies for the exceptional case.
	
	It is also straightforward to compute
	\begin{equation}
	\label{eq:omega1half2_exc_cross}
	\omega_{\frac{1}{2},2}\big(\begin{smallmatrix} \mu & \mu_- \\ z_1 & z_2 \end{smallmatrix}\big) = -\delta_{r_\mu+1,s_\mu} \frac{Q_\mu}{t_\mu}\,\frac{\dd z_1 \dd z_2}{z_1^{2}z_2^{2}}, \qquad \omega_{\frac{1}{2},2}\big(\begin{smallmatrix} \mu & \mu_- \\ z_1 & z_2 \end{smallmatrix}\big) = \delta_{r_\mu+1,s_\mu} \frac{Q_{\mu_-}}{t_\mu}\,\frac{\dd z_1 \dd z_2}{z_1^{2}z_2^{2}}
	\end{equation}
	for $\mu\neq\mu_-$. We therefore deduce the symmetry condition that for all $\mu\neq\mu_-$ with $s_\mu=r_\mu+1$ necessarily $Q_{\mu}=-Q_{\mu_-}$. This condition is however covered in \labelcref{item:1half2cond4}. Notice moreover that formula \eqref{eq:omega1half2offdiag} applied to the case at hand indeed produces \eqref{eq:omega1half2_exc_cross}.
	
	At last, let us show that the statement of \cref{prop:symm_cond_omega04} is true in the exceptional case as well. For this notice that if we choose $\mu,\nu,\lambda\in\tilde{\mathfrak{a}}$ pairwise distinct with $ \frac{r_\mu}{s_\mu} = \frac{r_\nu}{s_\nu} = \frac{r_\lambda}{s_\lambda} $ then already $s_\mu<\infty$. Thus,  $\omega_{0,4}\big(\begin{smallmatrix} \mu & \mu & \nu & \lambda \\ z_1 & z_2 & z_3 & z_4 \end{smallmatrix}\big)$ gets the same contributions as in the standard case which means that the further discussion of the symmetry of this correlator must be in line with the proof of \cref{prop:symm_cond_omega04}.
\end{proof}

\medskip

\subsection{Necessary conditions for symmetry}
\label{secneccons}
\medskip

In \Cref{cor:genuszerosymclassification} we showed that the topological recursion formula outputs symmetric $(\omega_{0,n})_{n\geq 1}$ if and only if the considered spectral curve is admissible in genus $0$. Using \Cref{HASequivALEgequal0} this means that the associated differential operators $H_{\alpha;i,k}$, which we introduced in \Cref{Airytocurve}, admit a partition function solving the associated system of differential equations to leading order in $\hbar^{\frac{1}{2}}$ in the sense of equation \eqref{eq:Hik_diff_eq_const_ord}. This on the other hand implies that for the differential operators of the type considered in \Cref{sec:W_gl_Airy_structs_all_shift} --- which are exactly those corresponding to spectral curves of the form \eqref{eq:monoSC} --- the reverse direction of \Cref{thm:genus0_soln_admissible_Hik} is true as well. This is exactly the statement of \Cref{prop:gen_zero_nec=suff} which we have hereby proven. Note that by \Cref{prop:symm_cond_exc} the statement of \Cref{cor:genuszerosymclassification} holds in the exceptional case too.\par

The next natural step is to investigate the necessary conditions for the operators $H_{\alpha;i,k}$ to be an Airy structure. Suppose they are, then we know from \Cref{mainth2} that all $\omega_{g,n}$ have to be symmetric. So especially the symmetry conditions for $\omega_{\frac{1}{2},2}$ will impose further necessary conditions on the $H_{\alpha;i,k}$ to be an Airy Structure. We will analyse how they restrict the choice of $(r_\mu,s_\mu)_{\mu\in\mu\in\tilde{\mathfrak{a}}}$ in case we ask $\omega_{\frac{1}{2},2}$ to be symmetric for all $Q_{\mu}\in\mathbb{C}$, $\mu \in \tilde{\mathfrak{a}}$ only being subject to the constraint
\begin{equation}
	\label{Qcond}
	\sum_{\mu \in \tilde{\mathfrak{a}}} Q_{\mu} = 0\,.
\end{equation}
Before we begin our analysis let us remark that during our calculation in \cref{sec:stndcasecorr,sec:exccasecorr} we were always assuming that ${\rm gcd}(r_{\mu},s_{\mu}) = 1$, 
\begin{equation}
	\label{tcond}
	t_{\mu}^{r_\mu} \neq t_{\nu}^{r_\nu} \text{ whenever } (r_{\mu},s_{\mu}) = (r_{\nu},s_{\nu}) \text{ and } \mu \neq \nu\,,
\end{equation}
and that $s_\mu=\infty$ for at most one $\mu\in\tilde{\mathfrak{a}}$. However, we already argued in \cref{rem:t_cond,rem:generality_number_shifts} that these assumptions are indeed necessary so that the differential operators are an Airy structure. In order to be closer to the notation of \cref{sec:W_gl_Airy_structs_all_shift}, let us choose a lexicographic ordering $\mu:[d] \rightarrow \tilde{\mathfrak{a}}$ satisfying $\mu_{j}\prec\mu_{j+1}$.

\begin{proof}[Proof of \cref{propneccons}]
Due to \labelcref{item:r_eq_pm1mods} of \cref{prop:symm_cond_omega03} we know that necessarily $r_\nu = \pm 1 \,\,{\rm mod}\,\, s_\nu$ for all $\nu\in\tilde{\mathfrak{a}}$. If $s_{\mu_1} > 2$ then \labelcref{item:eq_fract_constr} tells us that we must have $\frac{r_{\mu_1}}{s_{\mu_1}}>\frac{r_{\mu_2}}{s_{\mu_2}}$. Therefore,
\begin{equation*}
\sum_{\substack{\nu\neq \mu_1 \\ \frac{r_{\mu_1}}{s_{\mu_1}} > \frac{r_\nu}{s_\nu}}} Q_\nu = \sum_{j=2}^d Q_{\mu_j}
\end{equation*}
and hence --- since we assume $\omega_{\frac{1}{2},2}$ to be symmetric for all choices of $Q_\mu$ satisfying \eqref{Qcond} --- condition \labelcref{item:1half2cond1} of \cref{prop:symm_cond_omega1half2} forbids that $r_{\mu_1} = 1 \,\,{\rm mod}\,\, s_{\mu_1}$ and $s_{\mu_1}>2$ as we can always choose the $Q$s such that $\sum_{j\in(d]} Q_{\mu_j} \neq 0$. Thus, we are left with $r_{\mu_1} = -1 \,\,{\rm mod}\,\, s_{\mu_1}$.\\
Now let us see which values the symmetry conditions allow $(r_{\mu_j},s_{\mu_j})$ to take in case $j\notin \{1,d\}$. First, let us assume $\frac{r_{\mu_1}}{s_{\mu_1}}=\frac{r_{\mu_j}}{s_{\mu_j}}$. Then by \labelcref{item:eq_fract_constr} necessarily $s_{\mu_j}\in\{1,2\}$. If however $\frac{r_{\mu_1}}{s_{\mu_1}}>\frac{r_{\mu_j}}{s_{\mu_j}}$ then condition \labelcref{item:1half2cond1} and \labelcref{item:1half2cond2} also force $s_{\mu_j}\in\{1,2\}$ using that we can choose the $Q$s arbitrary except that they must satisfy \eqref{Qcond}.

As one can use similar arguments in order to show that also $r_{\mu_d} = 1 \,\,{\rm mod}\,\, s_{\mu_d}$ we omit the further discussion of this case.\\
Now let us analyse the implications coming from condition \labelcref{item:1half2cond4} in case of generic $Q$s. If $d=2$ condition \labelcref{item:1half2cond4} is always satisfied since the requirement that $Q_\mu=-Q_\nu$ for $\mu\neq\nu$ is nothing but property \eqref{Qcond}. Therefore, let us assume that $d>2$. In this case \labelcref{item:1half2cond4} exactly forbids that $r_\mu=r_\nu$ and $s_\mu=s_\nu=2$ for $\mu\neq\nu$. Hence, taking into account the other constraints we have already derived for $(r_\mu,s_\mu)_{\mu\in\tilde{\mathfrak{a}}}$ we deduce that for $d>2$ necessarily $s_\mu=s_\nu=1$ whenever $(r_\mu,s_\mu)=(r_\nu,s_\nu)$ for some $\mu\neq\nu$.

Note that due to \cref{prop:symm_cond_exc} the above discussion covers both the standard and the exceptional case.
\end{proof}

\medskip

\part{Applications to intersection theory}
\label{part:appl}

We expect that the coefficients $F_{g,n}$ of the partition function for all basic Airy structures (those of \cref{sec:ninrign} and \cref{sec:W_gl_Airy_structs_all_shift}) and the $\omega_{g,n}$ of the corresponding topological recursion can be represented as integrals of distinguished cohomology classes on $\overline{\mathcal{M}}_{g,n}$ or its cousins. As soon as such a representation is known for one spectral curve admitting a single ramification point  and whose type belongs to a certain set, it is relatively easy to extend it to any spectral curve having ramification points whose type belong to this set. The reason is that the corresponding Airy structure is built from the basic Airy structures by direct sums, change of polarisations and further dilaton shifts, cf. \cref{sec:AllDilatonPolarize}. In terms of partition functions, this is sometimes called ``Givental decomposition''.

We develop this idea for two types for which the link to $\overline{\mathcal{M}}_{g,n}$ is already known:
\begin{itemize}
\item the type $(r,s) = (r,r + 1)$ is related to Witten $r$-spin theory. The $r = 2$ subcase is Eynard's formula \cite{Eyn11,Eyn14}, and by generalising it to any $r$ we answer a question of Shadrin to the first-named author.
\item the type $(r_1,s_1,r_2,s_2) = (r,r+1,1,\infty)$ is related to open $r$-spin intersection theory as discussed in \cref{sec:open}. 
\end{itemize}
The type $(r,s)$ for other $s$ is discussed in \cref{Ssmonfg} assuming the existence of a special class on $\overline{\mathcal{M}}_{g,n}$ which has only been constructed for $(r,s) = (2,1)$ so far \cite{Nor17,CheNor19}. It turns out that Laplace-type integrals play an important role in such representations, and we first study them in the preliminary \cref{SecLap}. The method is general: if in the future an enumerative interpretation is found for a larger set of types, it is rather automatic to follow the strategy at work in these two examples and extend our representations to any spectral curve having ramifications whose types belong to this larger set. 
\medskip
\section{Representation of correlators via intersection theory}

\label{Represe}

\medskip

\subsection{The Laplace isomorphisms}
\label{SecLap}
\medskip

Let $r \geq 1$. If $m \geq 1 - r$ is an integer, we introduce the $r$-fold factorial, either by induction
\begin{equation}
\label{rfoldfac} m!^{(r)} = \begin{cases} 1 &  \text{if } 1-r \leq m \leq 0 \\ m\cdot (m - r)!^{(r)} &  \text{if } m > 0 \end{cases}\,,
\end{equation}
or equivalently in terms of the Gamma function
\[
m!^{(r)} = \frac{r^{\frac{m}{r} + 1} \Gamma\big(\frac{m}{r} + 1\big)}{r^{\langle \frac{m}{r} \rangle} \Gamma\big(\big\langle \frac{m}{r} \big\rangle\big)}\,,
\]
where we have defined $\big\langle \tfrac{dr + a}{r} \big\rangle = \tfrac{a}{r}$ if $d \in \mathbb{Z}$ and $a \in [r]$.

\begin{definition}
\label{Laplacedef} We introduce two isomorphisms.
\begin{equation*}
\begin{split}
& \begin{array}{lllll}
\mathfrak{L}^{+} & : & \mathbb{C}\llbracket \zeta \rrbracket \dd \zeta & \longrightarrow & u^{\frac{1}{r}}\mathbb{C} \llbracket u^{\frac{1}{r}} \rrbracket \\
& & k\zeta^{k - 1}\dd \zeta & \longrightarrow & k!^{(r)}\,u^{\frac{k}{r}} \end{array}\,, \\
& \begin{array}{lllll} 
\mathfrak{L}^{-} & : & \mathbb{C}[\zeta^{-1}]\,\zeta^{-2}\dd \zeta & \longrightarrow & \epsilon^{\frac{1}{r}}\mathbb{C}[\epsilon^{\frac{1}{r}}] \\ & &  k!^{(r)}\,\zeta^{-(k + 1)}\dd \zeta  & \longmapsto & \epsilon^{\frac{k}{r}}  \end{array}\,.
\end{split}
\end{equation*}
Abusing notation slightly we also write $ \mf{L}^\pm $ for the maps extended to the domain $ \C (\!( \zeta )\!)\dd \zeta $ by defining them to be zero on any monomial not in the original domain of definition.
\end{definition}

The first map can be realised by integrating over paths from $0$ to $\infty$ in the $x$-plane.

\begin{lemma}
\label{LaplacePlus}
We have
\[
\mathfrak{L}^{+} = \sum_{j = 0}^{r - 1} \beth_j \int_{e^{\frac{{2\rm i}\pi j}{r}}\mathbb{R}_{+}} e^{-\frac{\zeta^r}{ur}} \cdot
\]
where the constants are
\begin{equation}
\label{bethformula}\beth_j \coloneqq \sum_{a = 1}^{r} \frac{e^{-\frac{2{\rm i}\pi j a}{r}}}{r^{\frac{a}{r}} \Gamma\big(\frac{a}{r}\big)}\,.
\end{equation}
\end{lemma}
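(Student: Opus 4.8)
The identity to establish is an equality of $\C$-linear maps $\C(\!(\zeta)\!)\dd\zeta \to u^{\frac1r}\C\llbracket u^{\frac1r}\rrbracket$, so by linearity it suffices to check it on the basis of monomials $\zeta^{k-1}\dd\zeta$. The left-hand side sends $k\zeta^{k-1}\dd\zeta \mapsto k!^{(r)}u^{k/r}$ for $k \geq 1$ and everything else to zero, by \cref{Laplacedef}. So the whole content is the computation of the oscillatory-type integrals $\int_{e^{2\pi\iu j/r}\R_+} e^{-\zeta^r/(ur)}\zeta^{k-1}\dd\zeta$ for each ray and each $k$, followed by the weighted sum over $j$ with the constants $\beth_j$ of \eqref{bethformula}.

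\textbf{Step 1: evaluate the single-ray integral.} On the ray $\zeta = e^{2\pi\iu j/r}\rho$, $\rho \in \R_+$, substitute $\tau = \rho^r/(ur)$ so that $\rho = (ur\tau)^{1/r}$ and $\dd\rho = \frac1r(ur)^{1/r}\tau^{1/r - 1}\dd\tau$. This converts the integral into a Gamma integral: one gets
\[
\int_{e^{2\pi\iu j/r}\R_+} e^{-\zeta^r/(ur)}\,\zeta^{k-1}\,\dd\zeta = \frac{e^{2\pi\iu jk/r}}{r}\,(ur)^{k/r}\,\Gamma\!\Big(\frac{k}{r}\Big)\,u^{\text{(correction)}},
\]
where I will need to bookkeep the exponents carefully: the powers of $\rho$ combine as $\rho^{k-1}\dd\rho \leadsto (ur)^{k/r}\tau^{k/r - 1}\dd\tau \cdot \frac1r$, giving $\frac1r (ur)^{k/r}\Gamma(k/r)$ times the phase $e^{2\pi\iu jk/r}$. (Convergence at $\zeta \to \infty$ along the ray holds because $\mathrm{Re}(\zeta^r/(ur)) > 0$ there, for $u$ in an appropriate sector; the result is then a formal power series identity in $u^{1/r}$ obtained by analytic continuation / asymptotic expansion, which is all that is needed since $\mf{L}^+$ lands in $u^{1/r}\C\llbracket u^{1/r}\rrbracket$.)

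\textbf{Step 2: sum over $j$ and identify the constant.} Multiplying by $\beth_j$ and summing over $j \in \{0,\dots,r-1\}$, the $\zeta^{k-1}\dd\zeta$ coefficient of $\sum_j \beth_j \int_{e^{2\pi\iu j/r}\R_+}e^{-\zeta^r/(ur)}$ becomes
\[
\frac{(ur)^{k/r}\,\Gamma(k/r)}{r}\sum_{j=0}^{r-1}\beth_j\,e^{2\pi\iu jk/r}.
\]
Now insert the definition $\beth_j = \sum_{a=1}^r \frac{e^{-2\pi\iu ja/r}}{r^{a/r}\Gamma(a/r)}$ and perform the $j$-sum first: $\sum_{j=0}^{r-1}e^{2\pi\iu j(k-a)/r} = r\,\delta_{r \mid k-a}$. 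Writing $k = dr + a$ with $a \in [r]$, $d \geq 0$, this collapses the $a$-sum to the single term $a = \langle k/r\rangle \cdot r$ in the notation above, i.e. $a$ is the representative of $k \bmod r$ in $[r]$, leaving
\[
\frac{(ur)^{k/r}\Gamma(k/r)}{r}\cdot r\cdot\frac{1}{r^{a/r}\Gamma(a/r)} = \frac{r^{k/r}\Gamma(k/r)}{r^{a/r}\Gamma(a/r)}\,u^{k/r}.
\]
Finally, compare with the Gamma-function formula for $m!^{(r)}$ given just before \cref{Laplacedef}: with $m = k$, $m!^{(r)} = \frac{r^{k/r+1}\Gamma(k/r+1)}{r^{\langle k/r\rangle}\Gamma(\langle k/r\rangle)}$; using $\Gamma(k/r+1) = (k/r)\Gamma(k/r)$ and $r\cdot(k/r) = k$, and recognizing $r^{\langle k/r\rangle}\Gamma(\langle k/r\rangle) = r^{a/r}\Gamma(a/r)$, one gets exactly $k!^{(r)}u^{k/r}/k$ — wait, tracking the factor of $k$: $m!^{(r)} = \frac{r^{k/r}\cdot r\cdot(k/r)\Gamma(k/r)}{r^{a/r}\Gamma(a/r)} = k\cdot\frac{r^{k/r}\Gamma(k/r)}{r^{a/r}\Gamma(a/r)}$, so the displayed quantity equals $\frac{1}{k}m!^{(r)}u^{k/r}$, hence $\sum_j\beth_j\int_{\cdots}$ applied to $k\zeta^{k-1}\dd\zeta$ yields $k!^{(r)}u^{k/r}$, matching $\mf{L}^+$.

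\textbf{Step 3: the vanishing cases.} For $k \leq 0$ the monomial $k\zeta^{k-1}\dd\zeta$ is not in $\C\llbracket\zeta\rrbracket\dd\zeta$ and $\mf{L}^+$ is defined to be zero on it; on the integral side the computation of Step 1 still formally produces $r^{k/r}\Gamma(k/r)u^{k/r}/\cdots$, which has a negative power of $u^{1/r}$ and therefore does not lie in $u^{1/r}\C\llbracket u^{1/r}\rrbracket$ — one must argue that these terms are discarded (consistent with the ``abusing notation'' clause extending $\mf{L}^\pm$ by zero outside the stated domain), so that on the relevant domain the two maps agree. \textbf{The main obstacle} is precisely the analytic bookkeeping of Steps 1–2: getting the exponents of $u^{1/r}$ and the phases $e^{2\pi\iu j\bullet/r}$ exactly right, justifying the interchange of the (formal) $u$-expansion with the $j$-sum, and matching the resulting ratio of Gamma values to the closed form of $m!^{(r)}$ including the subtle $\langle m/r\rangle$ correction; the convergence/continuation issue on each ray is routine once the sector of validity for $u$ is fixed.
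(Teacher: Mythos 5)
Your proof is correct and follows essentially the same route as the paper: evaluate each ray integral by the substitution $\tau=\rho^r/(ur)$ to get a Gamma function, sum over $j$ using the discrete-Fourier orthogonality $\sum_j e^{2\pi\iu j(k-a)/r}=r\delta_{r\mid k-a}$, and match the resulting ratio $r^{k/r}\Gamma(k/r)/r^{a/r}\Gamma(a/r)$ against the closed form of $k!^{(r)}$. The only presentational difference is that the paper works forward with \emph{undetermined} constants $\beth_j$, finds that the combination must satisfy $\hat{\beth}_a=1/(r^{a/r}\Gamma(a/r))$, and recovers \eqref{bethformula} by inverse discrete Fourier transform, whereas you plug in \eqref{bethformula} from the start and verify it collapses correctly; these are two presentations of the same computation and buy the same thing. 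Your Step 3 remark about monomials $\zeta^{k-1}\dd\zeta$ with $k\leq 0$ is a fair observation that the paper leaves implicit (the paper simply restricts to $k=dr+a$ with $d\geq 0$, $a\in[r]$); the cleanest reading is that the lemma is an identity of maps on the original domain $\C\llbracket\zeta\rrbracket\dd\zeta$, where the integral is absolutely convergent at $\zeta=0$, and the extension-by-zero clause is irrelevant here.
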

\begin{proof}
Let $\beta_j$ be from $0$ to $\infty$ in the angular direction $e^{\frac{2{\rm i}\pi j}{r}}$. Consider integration along a formal combination of paths $\beta = \sum_{j = 0}^{r - 1} \beth_j\beta_j$ for some $\beth_j \in \mathbb{C}$. For $k = dr + a$ with $d \geq 0$ and $a \in [r]$, we compute
\begin{equation*}
\begin{split}
\int_{\beta} e^{-\frac{\zeta^r}{ur}}\,k\zeta^{k - 1}\dd \zeta & = \sum_{j = 0}^{r - 1} \beth_j\,e^{\frac{2{\rm i}\pi j a}{r}} (ur)^{\frac{k}{r}}\,\frac{k}{r} \int_{\mathbb{R}_{+}} e^{-\tilde{x}}\,\tilde{x}^{\frac{k}{r} - 1}\,\dd \tilde{x} \\
& = r\hat{\beth}_{a}\,u^{\frac{k}{r}}\,r^{\frac{k}{r}}\,\Gamma\big(\tfrac{k}{r} + 1\big) \\
& =  \hat{\beth}_{a}\,u^{\frac{k}{r}}\,r^{\frac{a}{r}}\, \Gamma\big(\tfrac{a}{r}\big)\,k!^{(r)}
\end{split}
\end{equation*}
with the change of variable $\tilde{x} = \zeta^r/ur$ and the discrete Fourier transform for $a \in \mathbb{Z}$
\[
\hat{\beth}_a = \frac{1}{r} \sum_{j = 0}^{r - 1} e^{\frac{2{\rm i}\pi ja}{r}} \beth_j \,.
\]
We get
\[
\int_{\rho} e^{-\zeta^r/ur}\,k\zeta^{k - 1}\dd \zeta = \mathfrak{L}_{+}[\zeta^{k - 1}\dd \zeta]
\]
provided we choose
\[
\hat{\beth}_{a} = \frac{1}{r^{\frac{a}{r}}\Gamma\big(\frac{a}{r}\big)}\,.
\]
This entails the result by inverse discrete Fourier transform.
\end{proof}

The second map can be realised by contour integration. To this end, let $\gamma$ be the Hankel contour giving the Gamma function representation (Figure~\ref{Hankelfig})
\[
\frac{1}{\Gamma(\alpha)} = \int_{\gamma} e^{x} x^{-\alpha}\dd x\qquad \alpha \in \mathbb{C}\,,
\] 
that is, $\gamma$ goes from $-\infty - {\rm i}0$ to $-{\rm i}0$, then round the origin to $+{\rm i}0$ and ends in $-\infty + {\rm i}0$. Under the branched covering $\zeta \mapsto x(\zeta) = \zeta^{r}$, this contour has $r$ different lifts $(\gamma_j)_{j = 1}^{r}$, which we label so that $\gamma_j$ comes from the asymptotic direction $e^{-\frac{{\rm i}\pi}{r}(2j + 1)}(+\infty + {\rm i}0)$, approaches the origin and then ends in the asymptotic direction $e^{-\frac{{\rm i}\pi}{r}(2j - 1)}(+\infty - {\rm i}0)$. These contours belong to the lattice of rank $(r - 1)$ of Lefschetz thimbles
\[
V \coloneqq H_1(\mathbb{C},\mathbb{S}_{M}^-;\mathbb{Z})\,,
\]
where $\mathbb{S}_{M}^- = \{z \in \mathbb{C} \,\, | \,\, {\rm Re}\,x < -M \}$ for some large $M > 0$. The homology class $\sum_{j = 1}^r \gamma_j$ is trivial and omitting one $\gamma_j$ we get a basis of $V$.
\begin{center}
\begin{figure}
\includegraphics[width=0.7\textwidth]{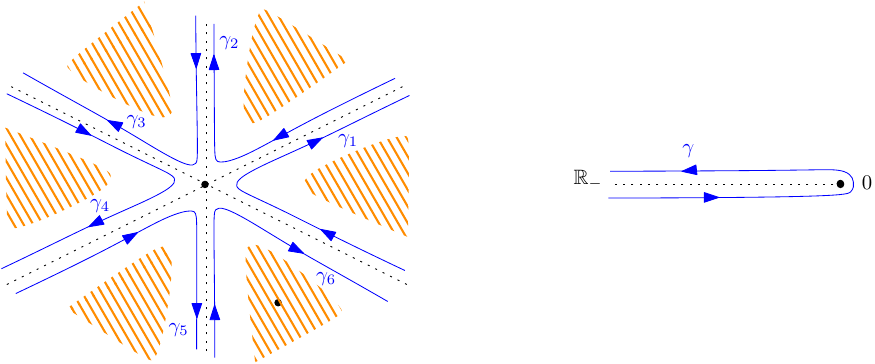}
\caption{\label{Hankelfig} Left panel: contours in the $\zeta$-plane for $r = 6$. The striped regions correspond to ${\rm Re}\,x > M$. Right panel: Hankel contour in the $x$ plane.}
\end{figure}
\end{center}

\begin{lemma}
\label{LaplaceMinus}
We have
\[
\mathfrak{L}^{-} = \sum_{j = 1}^{r} \frac{\beth_j}{2{\rm i}\pi r} \int_{\gamma_j} e^{\frac{\epsilon \zeta^r}{r}}\cdot 
\]
with the constants already appearing in \eqref{bethformula}.
\end{lemma}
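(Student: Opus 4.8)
The strategy is to mirror the proof of \cref{LaplacePlus}, replacing the path integrals from $0$ to $\infty$ by the Hankel-type contour integrals $\gamma_j$ and the Euler integral representation of $\Gamma$ by the Hankel representation $\frac{1}{\Gamma(\alpha)} = \int_{\gamma} e^{x}x^{-\alpha}\dd x$. First I would fix $k = dr + a$ with $d \geq 0$ and $a \in [r]$, and compute the action of $\int_{\gamma_j} e^{\epsilon \zeta^r / r}\cdot$ on the basis monomial $k!^{(r)}\,\zeta^{-(k+1)}\dd\zeta$ of $\mathbb{C}[\zeta^{-1}]\zeta^{-2}\dd\zeta$. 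Substituting $x = -\epsilon\zeta^r/r$ (so that $\zeta^r = -rx/\epsilon$, and $\dd(\zeta^r) = r\zeta^{r-1}\dd\zeta$, hence $\zeta^{-(k+1)}\dd\zeta = \frac{1}{r}\zeta^{-(k+1) - (r-1)}\,r\zeta^{r-1}\dd\zeta = \frac{1}{r}\zeta^{-(k+r)}\dd(\zeta^r)$) turns the integral over $\gamma_j$ into an integral over (a rotated copy of) the Hankel contour $\gamma$ of $e^{x}$ against a power of $x$, up to an explicit power of $\epsilon$, a root of unity $e^{-\frac{\mathrm{i}\pi}{r}(2j-1)(\text{something in }a)}$ coming from the choice of branch of $\zeta^{(k+r)/r}$ on the lift $\gamma_j$, and a factor involving $r^{(k+r)/r}$. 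The Hankel formula then evaluates the $x$-integral to $1/\Gamma$ of the appropriate argument, and using the Gamma-function expression for $m!^{(r)}$ from just before \cref{Laplacedef}, the $\Gamma$ factors and powers of $r$ should combine so that the full coefficient of $\epsilon^{k/r}$ collapses to $1$ after the Fourier sum over $j$ — exactly as in \cref{LaplacePlus}.

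Concretely, after the substitution I expect to obtain
\[
\int_{\gamma_j} e^{\frac{\epsilon \zeta^r}{r}}\, k!^{(r)}\,\zeta^{-(k+1)}\dd\zeta = c_{a}\, e^{\frac{2\mathrm{i}\pi j a}{r}}\, \epsilon^{\frac{k}{r}}
\]
for some nonzero constant $c_a$ depending only on $a \bmod r$ (and on $r$), with the $j$-dependence entirely in the root of unity $e^{2\mathrm{i}\pi j a/r}$ (the labelling of the $\gamma_j$ by their asymptotic directions $e^{-\frac{\mathrm{i}\pi}{r}(2j\pm 1)}$ is chosen precisely to produce this clean phase, just as the $\beta_j$ were in \cref{LaplacePlus}). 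Then, writing the candidate operator as $\sum_{j=1}^r \frac{\beth_j}{2\mathrm{i}\pi r}\int_{\gamma_j}$, the action on $k!^{(r)}\zeta^{-(k+1)}\dd\zeta$ is $\frac{c_a}{2\mathrm{i}\pi r}\bigl(\sum_{j=1}^r \beth_j e^{2\mathrm{i}\pi j a/r}\bigr)\epsilon^{k/r} = \frac{c_a}{2\mathrm{i}\pi r}\cdot r\,\hat\beth_a\,\epsilon^{k/r}$, using the same discrete Fourier transform $\hat\beth_a = \frac1r\sum_j e^{2\mathrm{i}\pi j a/r}\beth_j$ as in the previous lemma. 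With $\hat\beth_a = \frac{1}{r^{a/r}\Gamma(a/r)}$ coming from \eqref{bethformula}, this forces $c_a = 2\mathrm{i}\pi\, r^{a/r}\Gamma(a/r)$, and I would verify that this is indeed the value produced by the Hankel integral. Finally, I should also check that both sides vanish on monomials $\zeta^{m}\dd\zeta$ with $m \geq -1$ (equivalently $m+1 \geq 0$): for such monomials the substitution gives a nonnegative power of $x$ times $e^x$ integrated over $\gamma$, and the Hankel integral of $e^x x^{\ell}$ for $\ell \in \mathbb{Z}_{\geq 0}$ vanishes (it equals $1/\Gamma(-\ell) = 0$), matching the convention that $\mathfrak{L}^-$ is zero outside its original domain. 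This handles the "abusing notation slightly" extension mentioned in \cref{Laplacedef}.

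The main obstacle I anticipate is bookkeeping the branches and orientations: one must be careful about which root of $\zeta^{k+r} = (\zeta^r)^{(k+r)/r}$ is selected along each lift $\gamma_j$ (this depends on the asymptotic sector $\gamma_j$ lives in, hence on the labelling convention fixed in the paragraph before the lemma), about the orientation reversal built into the Hankel contour (from $-\infty-\mathrm{i}0$ around to $-\infty+\mathrm{i}0$), and about the sign in $x = -\epsilon\zeta^r/r$ versus the $e^{+\epsilon\zeta^r/r}$ in the statement. Getting all these phases to line up so that the $j$-dependence is exactly $e^{2\mathrm{i}\pi j a/r}$ — with no extra $a$-dependent or $j$-dependent phase left over — is the delicate point; everything else (the change of variables, the Gamma-function identity, the Fourier inversion) is routine and formally identical to \cref{LaplacePlus}. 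I would also remark, as a consistency check, that $\mathfrak{L}^+$ and $\mathfrak{L}^-$ are essentially adjoint/inverse-like under the residue pairing $\Res_{\zeta=0}$, which can be used to cross-check the constants; but the direct contour computation is the cleanest route and is the one I would write out.
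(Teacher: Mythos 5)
Your overall plan — mirror the structure of \cref{LaplacePlus} but swap the Euler integral for the Hankel one — is sensible and in outline would work, but it is \emph{not} the route the paper takes, and the difference is not cosmetic. The paper never applies the Hankel formula at negative integer $\alpha$: it instead computes $A_j(\alpha) = \frac{1}{2\mathrm{i}\pi}\int_{\gamma_j} e^{\epsilon\zeta^r/r}\zeta^\alpha \dd\zeta$ only for $\mathrm{Re}\,\alpha > 0$, where the Hankel-type contour can be squeezed onto its two asymptotic rays. On a single ray the branch of $\zeta^\alpha$ is determined simply by the ray's angle, so there is nothing delicate to track; one gets an explicit closed form involving $(\epsilon/r)^{-(\alpha+1)/r}$, $e^{-2\mathrm{i}\pi j(\alpha+1)/r}$, $\sin(\pi(\alpha+1)/r)$ and $\Gamma((\alpha+1)/r)$. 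Both sides are then observed to be entire in $\alpha$ (the zeros of the sine cancel the poles of $\Gamma$), and the identity is extended to $\alpha = -(k+1)$ by analytic continuation. This sidesteps precisely the branch bookkeeping that you flag as the ``delicate point''. In your direct approach you must decide, for each lift $\gamma_j$, which branch of $(-rx/\epsilon)^{-(k+r)/r}$ is selected, keep the orientation of the rotated Hankel contour consistent with the sign in $x = -\epsilon\zeta^r/r$, and only then may you invoke the Hankel representation. That is not ``routine'' bookkeeping formally identical to \cref{LaplacePlus}: it is essentially the entire content of the computation, and your proposal leaves it unresolved. So the two routes are comparable in final length, but the paper's has the advantage that the one place branches must be tracked (along rays, for $\mathrm{Re}\,\alpha > 0$) is elementary.

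The step you add at the end — checking that both sides vanish on $\zeta^m\dd\zeta$ with $m \geq -1$ — is both outside what the lemma asserts and, as you sketch it, incorrect. The lemma (and the paper's proof) concerns $\mathfrak{L}^-$ on its original domain $\mathbb{C}[\zeta^{-1}]\zeta^{-2}\dd\zeta$; the ``abusing notation slightly'' extension is a separate convention used elsewhere, and the contour representation is not claimed to compute it. Your argument for vanishing also fails on its own terms: under $x = -\epsilon\zeta^r/r$, the form $\zeta^m\dd\zeta$ becomes a constant times $x^{(m+1)/r - 1}\dd x$. The exponent $(m+1)/r - 1$ is a nonnegative \emph{integer} only when $m+1$ is a positive multiple of $r$, so the ``$1/\Gamma(-\ell) = 0$ for $\ell \in \mathbb{Z}_{\geq 0}$'' argument does not apply in general; for generic $m \geq -1$ one gets $1/\Gamma((m+1)/r)$ with a non-integer argument, which is nonzero, and indeed (e.g.\ for $r = 2$, $m = 0$) the contour integral $\int_{\gamma_j} e^{\epsilon\zeta^2/2}\dd\zeta$ is a nonvanishing Gaussian-type integral. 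The correct conclusion is that the contour formula reproduces $\mathfrak{L}^-$ only on the polar parts; do not try to prove the extension statement, because it is false.
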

\begin{proof}
Let $\epsilon > 0$, $j \in [r]$ and consider the integral
\begin{equation}
\label{Ibdef}
\mathcal{I}_j(\alpha) \coloneqq \frac{1}{2{\rm i}\pi} \int_{\gamma_j} e^{\frac{\epsilon\zeta^r}{r}} \zeta^{\alpha}\dd \zeta
\end{equation}  
for complex $\alpha \in \mathbb{C}$. Here, $\zeta \mapsto \zeta^{\alpha}$ is defined in the usual way as the analytic function on $\mathbb{C}\setminus \mathbb{R}_{-}$ such that for $\zeta \in \mathbb{R}_{-}^*$, we have
\[
\lim_{\epsilon \rightarrow 0^+} (\zeta \pm {\rm i}\epsilon)^{\alpha} =  e^{\pm {\rm i}\pi \alpha}|\zeta|^{\alpha}\,.
\]

Let us first consider ${\rm Re}\,\alpha > 0$.  In this case, as the integrand is regular we can squeeze the Hankel contour to the half-axes of angles $e^{-\frac{{\rm i}\pi(2k + 1)}{r}}$ for $k \in [r]$. After a change of variable $\tilde{x} = -\tfrac{\epsilon \zeta^r}{r}$, we find
\begin{equation}
\begin{split}
\label{Ibundef}
\mathcal{I}_j(\alpha) & = \bigg(\frac{\epsilon}{r}\bigg)^{-\frac{\alpha + 1}{r}}\,\big(-e^{-\frac{{\rm i}\pi (2j + 1)(\alpha + 1)}{r}} + e^{-\frac{{\rm i}\pi(2j - 1)(\alpha + 1)}{r}}\big) \cdot \frac{1}{r} \int_{\mathbb{R}_{+}} e^{-\tilde{x}}\tilde{x}^{\frac{\alpha + 1}{r} - 1}\,\dd\tilde{x} \\
& = \bigg(\frac{\epsilon}{r}\bigg)^{-\frac{\alpha + 1}{r}}\,e^{-\frac{2{\rm i}\pi j(\alpha + 1)}{r}} \cdot \frac{2{\rm i}}{r}\,\sin\bigg(\frac{\pi(\alpha + 1)}{r}\bigg)\,\Gamma\bigg(\frac{\alpha + 1}{r}\bigg)\,.
\end{split}
\end{equation}
We note that by definition in \eqref{Ibdef}, $\mathcal{I}_j(\alpha)$ is an entire function of $\alpha \in \mathbb{C}$. This is also true for the right-hand side of \eqref{Ibundef}: the Gamma function has simple poles when $\tfrac{\alpha + 1}{r} \in 2\mathbb{Z}_{\leq 0}$ which are compensated by a zero in the prefactor. Then, by analytic continuation the identity \eqref{Ibundef} remains true for all $\alpha \in \mathbb{C}$. 

We apply it to $\alpha = -(k + 1)$ where $k$ is a positive integer that we decompose as $k = rd + a$ with $a \in [r]$ and $d \geq 0$. Then
\begin{equation*}
\begin{split} 
\mathcal{I}_j(-(k + 1)) & = \bigg(\frac{\epsilon}{r}\bigg)^{\frac{k}{r}}\,e^{\frac{2{\rm i}\pi jk}{r}}\,\frac{2{\rm i}}{r}\,\sin\bigg(-\frac{\pi j k}{r}\bigg)\,\Gamma\bigg(-\frac{k}{r}\bigg) \\
& = \epsilon^{\frac{k}{r}}\,\frac{2{\rm i}\pi}{r}\,e^{\frac{2{\rm i}\pi ja}{r}}\,\frac{1}{r^{\frac{k}{r}}\Gamma\big(\frac{k}{r} + 1\big)} \\
& = \epsilon^{\frac{k}{r}}\,e^{\frac{2{\rm i}\pi ja}{r}}\,\frac{2{\rm i}\pi}{r^{\frac{a}{r}}\,\Gamma\big(\tfrac{a}{r}\big)\,k!^{(r)}}\,.
\end{split}
\end{equation*}
Coming back to the definition \eqref{bethformula} of $\beth_j$, we get
\[
\sum_{j = 1}^{r} \frac{\beth_j}{2{\rm i}\pi r} \int_{\gamma_j} e^{\frac{\epsilon x}{r}}\,\frac{k!^{(r)}\dd \zeta}{\zeta^{k + 1}} = \mathfrak{L}^{-}\bigg[\frac{k!^{(r)}\,\dd\zeta}{\zeta^{k + 1}}\bigg]\,.
\]
\end{proof}

\medskip

\subsection{Laplace transform on curves}
\label{Laplacesection}
\medskip

\subsubsection{Total Laplace transform}
\label{TotalLaplace}

\medskip

Let $C$ be a curve with normalisation $\pi\,:\,\tilde{C} \rightarrow C$, equipped with a meromorphic $1$-form $\dd x$. We shall rely on the notations introduced in \cref{SecTR}. We do not require that $\tilde{C}$ comes from a spectral curve in the sense of \cref{defspc}, neither that $\dd \tilde{x}$ is the differential of a meromorphic function on $\tilde{C}$. In particular, $\dd \tilde{x}$ could have simple poles, in which case a local primitive $\tilde{x}$ has logarithmic singularities and would be multivalued on $\tilde{C}$.

Recall that we write $\dd\tilde{x} = \pi^*\dd x$, that $\mathfrak{a}$ is the set of zeroes of $\dd x$ (including singular points) and $\tilde{\mathfrak{a}} = \pi^{-1}(\mathfrak{a})$. Denote the order of a zero of $\dd x$ at $\mu \in \tilde{\mathfrak{a}}$ by $r_{\mu} - 1$ (this could be zero if $\pi(a)$ is singular). Around each $\mu \in \tilde{\mathfrak{a}}$, we have a local coordinate such that
\[
\int_{\mu}^{\cdot} \dd \tilde{x} = \zeta^{r_{\mu}}\,.
\]

We define the vector spaces
\[
V = \bigoplus_{\mu \in \tilde{\mf{a}}} V_{\mu}\,,\qquad V_{\mu} \coloneqq \bigoplus_{l = 1}^{r_{\mu}} \mathbb{C}.e_{\mu,l}\,,
\]
and equip $V$ with the pairing
\begin{equation}
\label{pairingwithr} \eta(e_{\mu,l} \otimes e_{\nu,m}) = \frac{\delta_{\mu,\nu} \delta_{l+m,r_\mu}}{r_{\mu}}\,.
\end{equation}
\begin{remark} The fact that $e_{\mu,r_\mu}$ are null vectors for $\eta$  is a convention: it simplifies the formulas in \cref{higheret} but has no effect elsewhere. The factor $r_{\mu}$ is a choice that will simplify formulas in \cref{higheret} and \cref{ELSVwgnsec} but has no effect elsewhere.
\end{remark}
When needed, we shall decompose integers $k \in \mathbb{Z}$ as
\[
k = \overline{k}r_{\mu} + \hat{k},\qquad \hat{k} \in [r_{\mu}]
\]
and the index $\mu \in \tilde{\mf{a}}$ that one should use will be clear from the context. In order to get rid of fractional powers of the Laplace variable, we introduce the isomorphism
\[
\begin{array}{llccc}
E_{\mu}^*& : & \epsilon^{\frac{1}{r_{\mu}}}\mathbb{C}[\epsilon^{\frac{1}{r_{\mu}}}] &  \longrightarrow & V_{\mu}^*[\epsilon] \\ & & \epsilon^{\frac{k}{r}} & \longmapsto & e_{\mu,\hat{k}}^*\,\epsilon^{\overline{k}} \end{array}\,.
\]
where $e_{\mu,l}^*$ is the dual basis with respect to the pairing $\eta$. Importing \cref{Laplacedef}, for each $\mu \in \tilde{\mf{a}}$, we consider the local Laplace map
\[
\mf{L}^-_{\mu} = E_{\mu} \circ \mf{L}^{-} \circ \mathsf{Loc}_{\mu}\,:\,H^0\big(\tilde{C},K_{\tilde{C}}(*\tilde{\mf{a}})\big) \to V_{\mu}^*[\epsilon ]
\]
and the total map
\[
\mf{L}^-_{{\rm tot}} = \bigg(\bigoplus_{\mu \in \tilde{\mf{a}}} \mf{L}^{-}_{\mu}\bigg)\,:\,H^0\big(\tilde{C},K_{\tilde{C}}(*\tilde{\mf{a}})\big) \rightarrow V^*[\epsilon]\,.
\]
We define in a similar way
\begin{equation*}
\begin{split}
E_\mu & \,:\, u^{\frac{1}{r_\mu}}\mathbb{C}[\![\epsilon^{\frac{1}{r_{\mu}}}]\!] \longrightarrow V_{\mu}[\![u]\!]\,, \\
\mf{L}^+_{\mu} & \,:\, H^0\big(\tilde{C},K_{\tilde{C}}(*\tilde{\mf{a}})\big) \longrightarrow V_{\mu}[\![u]\!]\,, \\
\mf{L}^+_{{\rm tot}} & \,:\, H^0\big(\tilde{C},K_{\tilde{C}}(*\tilde{\mf{a}})\big) \longrightarrow V[\![u]\!]\,.
\end{split}
\end{equation*}
where the role of $e_{\mu,k}^*$ is now played by $e_{\mu,k}$.

\begin{remark}
\label{ConventionsLaplaceDuals}
From \cref{LaplacePlus,LaplaceMinus}, we see that the natural variables Laplace dual to $ x = \zeta^r$ are not $ u$ and $ \epsilon^{-1}$, but $ ru$ and $ r\epsilon^{-1}$. Moreover, by a different choice of constants (replacing multiple factorials by values of the $\Gamma $ function at rational numbers) the transforms could have been given by a single integral. We give the Laplace transform in this way to conform to conventions in the literature using $r$-fold factorials for the case of a smooth spectral curve, and its relation to the Witten $r$-spin class.
\end{remark}

\medskip

\subsubsection{Two generating series}

\medskip

Assume that we are given a holomorphic $1$-form $\omega_{0,1}$ in a neighboorhood of $\tilde{\mathfrak{a}}$ in $\tilde{C}$.

\begin{definition}
\label{Tudefn}We introduce $\mathsf{T}(u) \in V[\![u]\!]$ given by the formula
\begin{equation*}
\begin{split}
\mathsf{T}(u) & \coloneqq \sum_{\mu \in \tilde{\mathfrak{a}}} \mathsf{T}_{\mu,k} e_{\mu,\hat{k}} \,u^{\overline{k}}\\
&\coloneqq  \bigg(\sum_{\substack{\mu \in \tilde{\mathfrak{a}} \\ s_{\mu} \neq \infty}} e_{\mu,\hat{s}_{\mu}}u^{\overline{s}_{\mu}}\bigg) + \mathfrak{L}_{{\rm tot}}^+[\omega_{0,1}](u) \\ 
& = \sum_{\substack{\mu \in \tilde{\mathfrak{a}} \\ s_{\mu} \neq \infty}} \bigg(e_{\mu,\hat{s}_{\mu}} u^{\overline{s}_{\mu}} + \sum_{k > 0} (k - r_{\mu})!^{(r_{\mu})} F_{0,1}\big[\begin{smallmatrix} \mu \\ -k \end{smallmatrix}\big]\,e_{\mu,\hat{k}}\,u^{\overline{k}}\bigg) \,.
\end{split}
\end{equation*}
\end{definition}
 
Assume that we are given a fundamental bidifferential of the second kind $\omega_{0,2}$ on the smooth curve $\tilde{C}$.

\begin{definition}
\label{Buvdef} We introduce $\mathsf{B}(u,v) \in V^{\otimes 2}[\![u,v]\!]$, given by the formula
\begin{equation}
\label{Bu1u2} \begin{split}
\mathsf{B}(u,v) & \coloneqq (\mathfrak{L}^+_{{\rm tot}})^{\otimes 2}\big(\omega_{0,2} - \omega_{0,2}^{{\rm std}}) \\
& = \sum_{\substack{\mu,\nu \in \mathfrak{a} \\ k,l > 0}} (k - r_{\mu})!^{(r_{\mu})}\,(l - r_{\nu})!^{(r_{\nu})}\,F_{0,2}\big[\begin{smallmatrix} \mu & \nu \\ -k & -l \end{smallmatrix}\big]\,e_{\mu,\hat{k}} \otimes e_{\nu,\hat{l}}\,u^{\overline{k}} v^{\overline{l}}\,,
\end{split}
\end{equation}
where the second line follows from the decomposition \eqref{locun}.
\end{definition}

These definitions in particular apply to admissible spectral curves equipped with a fundamental bidifferential of the second kind, but make sense in this greater generality. Their relevance will become clear in \cref{theregucnu}.

\medskip

\subsubsection{Factorisation property for $\mathsf{B}$}
\label{higheret}

\medskip

We prove in this section a factorisation property for $\mathsf{B}$ when $\tilde{C}$ is smooth compact connected curve and $\dd \tilde{x}$ is a meromorphic $1$-form. Here we only need the data of a smooth compact curve $\tilde{C}$, of a meromorphic $1$-form that we denote $\dd \tilde{x}$, and of a fundamental bidifferential of the second kind on $\tilde{C}$, which defines $\mathsf{B}(u,v)$ through \cref{Buvdef}. We do not need the full data of a spectral curve, nor $\tilde{C}$ to be the normalisation of a curve. In light of \cref{ConventionsLaplaceDuals}, and for this section only, the `right' Laplace variables $ r u$ and $ r v$ (where $ r$ depends on the branch point) will play a role, so that the result will be best formulated for the following object.
\begin{definition}
Let $\mathsf{B}_{\mu,\nu} (u,v) $ be the projection of $\mathsf{B}(u,v)$ into $ (V_\mu \otimes V_\nu )[\![u,v]\!]$ and define
\begin{equation*}
\bar{\mathsf{B}}(u,v) \coloneqq \sum_{\mu, \nu \in \tilde{\mf{a}}} \mathsf{B}_{\mu,\nu}(u/r_\mu ,v/r_\nu )\,.
\end{equation*}
\end{definition}

\begin{proposition}
\label{compactfacto} Assume that $\tilde{C}$ is a smooth compact connected curve, $\dd \tilde{x}$ is a meromorphic $1$- form meromorphic, and $\omega_{0,2}$ is a fundamental bidifferential of the second kind on $\tilde{C}$. Then
\begin{equation}
\label{thefirstfacto}
\bar{\mathsf{B}}(u,v) = \frac{1}{u + v}\Big(u \bar{\mathsf{B}}(u,0) + v \bar{\mathsf{B}}(0,v) - u\,v\,\bar{\mathsf{B}}(u,0) \star \bar{\mathsf{B}}(0,v)\Big)\,,
\end{equation}
where $A \star B = (\id \otimes \eta \otimes \id)(A \otimes B)$. Besides, we have the compatibility relation
\begin{equation}
\label{thesecondfacto}
\bar{\mathsf{B}}(u,0) - \bar{\mathsf{B}}(0,-u) +u \bar{\mathsf{B}}(u,0) \star \bar{\mathsf{B}}(0,-u) =0 \,.
\end{equation}
\end{proposition}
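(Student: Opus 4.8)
The key is that the bidifferential $\omega_{0,2} - \omega_{0,2}^{{\rm std}}$, while only a formal power series a priori, is in fact globally meromorphic on the compact curve $\tilde{C}$: by assumption $\omega_{0,2}$ is a fundamental bidifferential of the second kind, so it is a global section of $K_{\tilde{C}}^{\boxtimes 2}(2\Delta)$ with biresidue $1$ on the diagonal, and $\omega_{0,2}^{{\rm std}}$ accounts exactly for this diagonal singularity locally near each $\mu\in\tilde{\mathfrak{a}}$. Thus $\mathsf{B}$ is (the Laplace transform of) an honest holomorphic $2$-form near $\tilde{\mathfrak{a}}\times\tilde{\mathfrak{a}}$. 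The plan is to exploit the defining integral representations of $\mathfrak{L}^{+}$ from \cref{LaplacePlus} to turn the algebraic identities \eqref{thefirstfacto}--\eqref{thesecondfacto} into residue/contour computations on $\tilde{C}$, following the strategy of \cite[Appendix B]{Eyn14} but now keeping track of the fact that $\dd x$ has a zero of order $r_\mu - 1$ at $\mu$ (hence the rescaling $u\mapsto u/r_\mu$ built into $\bar{\mathsf{B}}$, which is precisely what makes $x=\zeta^{r_\mu}$ the correct Laplace-dual variable, cf. \cref{ConventionsLaplaceDuals}).

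First I would write, for $z, z'$ in neighbourhoods of branchpoints $\mu, \nu$ respectively,
\[
\bar{\mathsf{B}}(u,v) = \sum_{j,j'} \beth_j^{(\mu)}\beth_{j'}^{(\nu)} \int_{\Gamma_j^\mu}\int_{\Gamma_{j'}^\nu} e^{-\frac{x(z)}{u} - \frac{x(z')}{v}}\,\big(\omega_{0,2}(z,z') - \omega_{0,2}^{{\rm std}}(z,z')\big)\,,
\]
where the contours $\Gamma_j^\mu$ are the $r_\mu$ lifts to $\tilde C$ of a ray from $0$ to $\infty$ in the $x$-plane based at $\mu$, valued in the appropriate space $V_\mu$ via $E_\mu$. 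The identity \eqref{thefirstfacto} should then be obtained by the standard manipulation: multiply by $u+v$, and integrate $e^{-x(z)/u - x(z')/v}$ against $\omega_{0,2}(z,z')$ while using $\big(\tfrac{1}{u}+\tfrac{1}{v}\big)e^{-x/u - x'/v}$-type identities together with integration by parts in one variable to convert the factor $u+v$ into boundary terms and a term where $\omega_{0,2}$ is "broken" across the diagonal. The cross-term $u\,v\,\bar{\mathsf{B}}(u,0)\star\bar{\mathsf{B}}(0,v)$ arises from deforming contours and picking up the residue on the diagonal $\Delta$ — this is where the biresidue-$1$ normalisation of $\omega_{0,2}$ enters, producing exactly the contraction $\star = (\id\otimes\eta\otimes\id)$ with the pairing $\eta(e_{\mu,l}\otimes e_{\nu,m}) = \delta_{\mu,\nu}\delta_{l+m,r_\mu}$; the subtraction of $\omega_{0,2}^{{\rm std}}$ is what makes the remaining integrals convergent/formal-well-defined and accounts for the $\omega_{0,2}^{{\rm std}}$ term already removed in \cref{Buvdef}. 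The compatibility relation \eqref{thesecondfacto} follows from the same computation specialised to $v = -u$ (where the prefactor $\tfrac{1}{u+v}$ formally degenerates), or equivalently by evaluating $(u+v)\bar{\mathsf B}(u,v)$ at $v=-u$ using that the left side then vanishes identically as a formal series — this forces the stated quadratic relation among $\bar{\mathsf B}(u,0)$ and $\bar{\mathsf B}(0,-u)$.

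The main obstacle I anticipate is \emph{bookkeeping the combinatorial constants and the $V_\mu$-valued structure} rather than any deep analytic difficulty: one must verify that the sum over lifts $j$ weighted by $\beth_j^{(\mu)}$ from \eqref{bethformula}, together with the isomorphisms $E_\mu$ collapsing fractional powers $\epsilon^{k/r_\mu}$ (resp. $u^{k/r_\mu}$) onto $e_{\mu,\hat k}\epsilon^{\bar k}$, reproduces on the nose the Hadamard-product structure $\star$ with the null vectors $e_{\mu,r_\mu}$. Concretely, when the diagonal residue is extracted, the local expansion of $\int_\mu^{z}\omega_{0,2}(\cdot,z')$ pairs a coefficient indexed by $\hat k$ with one indexed by $r_\mu - \hat k$, which is exactly $\eta$; checking that the powers of $r_\mu$ (coming from $\dd x = \zeta^{r_\mu-1}\dd\zeta$ up to $r_\mu$, and from the $r$-fold factorials $(k-r_\mu)!^{(r_\mu)}$ in \cref{Buvdef}) match the rescaling $u\mapsto u/r_\mu$ in the definition of $\bar{\mathsf B}$ is the delicate point. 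I would organise this by first treating a single pair $(\mu,\nu)$ with $\mu\neq\nu$ (no diagonal contribution, so \eqref{thefirstfacto} reduces to the pure integration-by-parts identity), then the diagonal case $\mu=\nu$ where the biresidue produces the $\star$-term, and finally assemble. Throughout, compactness of $\tilde C$ and meromorphy of $\dd x$ guarantee that all the global contour deformations are legitimate and that no contributions from other poles of $\omega_{0,2}$ intervene, since $\omega_{0,2}$ has poles only on $\Delta$.
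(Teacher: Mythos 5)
The proposal takes a genuinely different route from the paper, and it contains a real gap.

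The paper's proof is algebraic: it introduces the globally meromorphic forms $\dd\xi^\mu_{-k}$ from \eqref{xinegdef}, shows that the combination $\dd\bigl(\dd\xi^\mu_{-k}/\dd\tilde{x}\bigr) + \tfrac{k+r_\mu}{r_\mu}\dd\xi^\mu_{-(k+r_\mu)} + \sum_{\nu,l}(\ldots)\dd\xi^\nu_{l-r_\nu}$ is holomorphic on $\tilde{C}$, and then proves it is actually \emph{zero} via a Lagrangian-subspace argument: the exact form has all periods zero, the $\dd\xi^\mu_{-k}$ have vanishing periods on the Lagrangian subspace $\mathsf{K}\subset H_1(\tilde{C},\C)$ of cycles annihilated by $\omega_{0,2}$, and the period pairing $H^0(\tilde{C},K_{\tilde{C}})\otimes\mathsf{K}\to\C$ is non-degenerate. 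The factorisation of $\bar{\mathsf{B}}$ is then a direct Laplace transform of the resulting identity \eqref{idemeromor} among meromorphic forms. You instead propose a direct Laplace-contour computation with integration by parts and diagonal residue extraction, in the spirit of \cite[Appendix B]{Eyn14}.

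The gap in your plan is precisely where you say you expect only ``bookkeeping'' rather than ``any deep analytic difficulty.'' Multiplying by $u+v$, converting $\tfrac{1}{u}+\tfrac{1}{v}$ into $-(\partial_x+\partial_{x'})$ acting on the exponential, and integrating by parts produces two kinds of contributions beyond the local ones: boundary terms along the contours $\Gamma^\mu_j$, and, when deforming contours globally on $\tilde{C}$, \emph{period} terms of the meromorphic differentials $\dd\xi^\mu_{-k}$ and $\dd(\dd\xi^\mu_{-k}/\dd\tilde{x})$ around cycles in $H_1(\tilde{C})$. It is not enough that $\omega_{0,2}$ has no poles off $\Delta$: a holomorphic differential on a compact Riemann surface is generically non-zero, so the identity does not close without a period-vanishing argument. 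That is exactly what the paper supplies via the Lagrangian subspace $\mathsf{K}$. Your proposal never articulates this, appealing only to ``compactness of $\tilde C$ and meromorphy of $\dd x$ guarantee that all the global contour deformations are legitimate,'' which does not by itself rule out period contributions. If you want to run the Eynard-style contour argument rather than the paper's algebraic recursion, you would have to invoke a Riemann-bilinear-type identity together with the specific normalisation property that the $\dd\xi^\mu_{-k}$ have vanishing $\mathsf{K}$-periods, and note that $\mathsf{K}$ is Lagrangian so that the resulting pairing against $H^0(\tilde{C},K_{\tilde C})$ is non-degenerate.

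A secondary issue: $\omega_{0,2}-\omega^{\rm std}_{0,2}$ is only defined in a neighbourhood of $\tilde{\mathfrak a}\times\tilde{\mathfrak a}$, since $\omega^{\rm std}_{0,2}$ depends on the local coordinate $\zeta$; your double contour integral over $\Gamma^\mu_j\times\Gamma^\nu_{j'}$ would leave those neighbourhoods, so the integrand is not globally defined as written. The paper sidesteps this by working purely with the formal coefficients $F_{0,2}[\cdot]$ and the globally defined $\dd\xi^\mu_{-k}$, only invoking the contour picture to interpret the Laplace transform.

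Your treatment of the compatibility relation \eqref{thesecondfacto} as a specialisation at $v=-u$ of the formal identity for $(u+v)\bar{\mathsf B}(u,v)$ does match what the paper does.
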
 
\begin{remark}
Such a property appeared in the case where $C$ is smooth and $\dd x$ has simple zeroes in \cite[Appendix B]{Eyn14}.
\end{remark}
\begin{proof}
We have introduced in \eqref{xinegdef} the family of meromorphic $1$-forms
\begin{equation}
\label{renewdefx}\dd\xi_{-k}^{\mu}(z) = \mathop{{\rm Res}}_{z' = \mu} \bigg(\int_{\mu}^{z'} \omega_{0,2}(z,\cdot)\bigg)\frac{\dd \zeta (z')}{\zeta (z')^{k + 1}} \in H^0\Big(\tilde{C},K_{\tilde{C}}\big((k + 1)\mu\big)\Big)
\end{equation}
indexed by $\mu \in \tilde{\mathfrak{a}}$ and $k > 0$. We have for $\nu \in \tilde{\mf{a}}$
\begin{equation}
\label{regexpLoc}
\mathsf{Loc}_{\nu}(\dd\xi_{-k}^{\mu}) = \frac{\delta_{\mu,\nu}\dd\zeta}{\zeta^{k + 1}} + \sum_{l > 0} \frac{F_{0,2}\big[\begin{smallmatrix} \mu & \nu \\ -k & -l \end{smallmatrix}\big]}{k}\,\zeta^{l - 1}\dd \zeta \,.
\end{equation}
The idea of the proof is to derive a recursion for these forms using the action of $\dd\big(\tfrac{\cdot}{\dd \tilde{x}}\big)$ --- this is \cref{idemeromor} below. We will first prove it for the polar part near the ramification points, and use that $\tilde{C}$ is smooth and compact and $\dd \tilde{x}$ meromorphic on $\tilde{C}$ to get an equality of globally defined meromorphic forms. This implies a recursion for the regular part of the expansion \eqref{regexpLoc}, i.e. the coefficients $F_{0,2}$, and this will imply the desired relation for $\mathsf{B}(u_1,u_2)$.

Since $\dd \tilde{x} = r_\nu\zeta^{r_{\nu} - 1}\dd\zeta$ near the ramification point $\nu$, we have
\[
\mathsf{Loc}_{\nu}\bigg[-\dd\bigg(\frac{\dd\xi_{-k}^{\mu}}{\dd \tilde{x}}\bigg)\bigg] = \frac{k + r_{\nu}}{r_\nu}\,\frac{\delta_{\mu,\nu}\dd \zeta}{\zeta^{k + r_{\nu} + 1}} + \sum_{l = 1}^{r_{\nu} - 1} \frac{F_{0,2}\big[\begin{smallmatrix} \mu & \nu \\ -k & -l \end{smallmatrix}\big]}{k}\,\frac{r_{\nu} - l}{r_{\nu}} \frac{\dd \zeta}{\zeta^{r_{\nu} - l + 1}} + \mc{O}(\dd \zeta)\,.
\]
As $\dd \tilde{x}$ is meromorphic, the $1$-form
\begin{equation}
\label{rdsdix} \dd\bigg(\frac{\dd \xi_{-k}^{\mu}}{\dd \tilde{x}}\bigg) + \frac{k + r_{\mu}}{r_{\mu}}\dd\xi_{-(k + r_\mu)}^{\mu} + \sum_{\nu \in \tilde{\mf{a}}} \sum_{l = 1}^{r_{\nu} - 1} \frac{F_{0,2}\big[\begin{smallmatrix} \mu & \nu \\ -k & -l \end{smallmatrix}\big]}{k}\,\frac{r_{\nu} - l}{r_{\nu}}\,\dd\xi_{l-r_\nu}^{\nu}
\end{equation}
is holomorphic on $\tilde{C}$. Since $\tilde{C}$ is compact and smooth, $H_1(\tilde{C},\C)$ is a finite-dimensional symplectic space equipped with the intersection pairing, and
\[
\mathsf{K} \coloneqq \bigg\{\gamma \in H_1(\tilde{C},\C) \quad \bigg| \quad \int_{\gamma} \omega_{0,2}(\cdot,z) = 0 \bigg\}
\]
is a Lagrangian subspace. From the definition \eqref{renewdefx}, we see that integrating any $\dd\xi_{-k}^{\mu}$ for $k > 0$ and $\mu \in \tilde{\mf{a}}$ along a cycle in $\mathsf{K}$ gives zero. The same is true for the first term in \eqref{rdsdix} since it is an exact form. As the period map induces a non-degenerate pairing $H^0(\tilde{C},K_{\tilde{C}}) \otimes \mathsf{K} \to \C$ and \eqref{rdsdix} is sent to $0$, we deduce the identity between meromorphic forms
\begin{equation}
\label{idemeromor} \dd\bigg(\frac{\dd \xi_{-k}^{\mu}}{\dd \tilde{x}}\bigg) + \frac{k + r_{\mu}}{r_{\mu}}\,\dd\xi_{-(k + r_{\mu})}^{\mu} + \sum_{\nu \in \tilde{\mf{a}}} \sum_{l = 1}^{r_{\nu} - 1} \frac{F_{0,2}\big[\begin{smallmatrix} \mu & \nu \\ -k & -l \end{smallmatrix}\big]}{k}\,\frac{r_{\nu} - l}{r_{\nu}} \dd\xi_{l-r_\nu}^{\nu} = 0\,.
\end{equation} 

We now would like to apply the local Laplace transform $\mathfrak{L}^+_{\nu}[\cdot](v)$ to this relation. Recall that $\mathfrak{L}^+$ by definition kills the polar part. So, by direct computation on the basis elements with \cref{Laplacedef}, we have
\[
\forall \phi \in H^0\big(\tilde{C},K_{\tilde{C}}(*\tilde{\mf{a}})\big)\qquad \mathfrak{L}^+_{\rho}\bigg[\dd\bigg(\frac{\phi}{\dd \tilde{x}}\bigg)\bigg](v) =  \frac{1}{r_{\rho}}\,\big[v^{-1}\mathfrak{L}^+_{\rho}[\phi](v)\big]_{+}\,,
\]
where $[\cdots]_{+}$ keeps only the nonnegative powers of $v$. The expansion \eqref{regexpLoc} implies for any meromorphic form 
\[
\mathfrak{L}^+_{\nu}[\dd\xi_{-k}^{\mu}](v) = \sum_{l > 0} \frac{F_{0,2}\big[\begin{smallmatrix} \mu & \nu \\ -k & -l \end{smallmatrix}\big]}{k}\,(l - r_{\nu})!^{(r_{\nu})}\,e_{\nu,\hat{l}}\,v^{\overline{l}}\,.
\]
By comparison with \eqref{Bu1u2}, $\mathsf{B}_{\mu,\rho}(u,v)$ can be obtained by the generating series
\[
\mathsf{B}_{\mu,\rho}(u,v) = \sum_{k > 0} k!^{(r_{\mu})} e_{\mu,\hat{k}} u^{\overline{k}} \otimes \mathfrak{L}^+_{\rho}[\dd\xi_{-k}^{\mu}](v)\,.
\]
Applying $\mathfrak{L}^+_{\rho}[\cdot](v)$ to \eqref{idemeromor}, multiplying by $k!^{(r_{\mu})}e_{\mu,\hat{k}} u^{\overline{k}}$, and summing over $k > 0$ then yields
\begin{equation*}
\begin{split}
 (r_\rho v)^{-1}(\mathsf{B}_{\mu,\rho}(u,v) - \mathsf{B}_{\mu,\rho}(u,0)\big) + (r_\mu u)^{-1}\big(\mathsf{B}_{\mu,\rho}(u,v) - \mathsf{B}_{\mu,\rho}(0,v)\big) & \\ + \sum_{\nu \in \tilde{\mf{a}}} \frac{1}{r_{\nu}} \sum_{l = 1}^{r_{\nu} - 1} (\id \otimes e_{\nu,l}^*)\big[\mathsf{B}_{\mu,\nu}(u,0)\big] \, (e^*_{\nu,r_{\nu} - l} \otimes \id \big)\big[\mathsf{B}_{\nu,\rho}(0,v)\big] & = 0\,.
\end{split}
\end{equation*}
Noticing that $\eta(e_{\nu,l} \otimes e_{\nu,r_{\nu} - m}) = r_{\nu}^{-1} \delta_{l,m}$ for $l,m \in [r_{\nu} - 1]$ and $\eta(e_{\mu,r_\mu},-) = 0$, this can be rewritten
\[
(r_\mu u + r_\rho v)\mathsf{B}_{\mu,\rho}(u,v)  - r_\mu u \mathsf{B}_{\mu,\rho}(u,0) - r_\rho v \mathsf{B}_{\mu,\rho}(0,v) + r_\mu r_\rho uv \sum_{\nu \in \tilde{\mf{a}}} (\id \otimes \eta \otimes \id)\big[\mathsf{B}_{\mu,\nu}(u,0) \otimes\mathsf{B}_{\nu,\rho}(0,v)\big]  = 0\,.
\]
Replacing $ u $ and $v$ by $ u/r_\mu$ and $v/r_\rho$, respectively, and summing over $\mu,\rho \in \tilde{\mf{a}}$, this implies the desired formula \eqref{thefirstfacto}. Since the left-hand side is a formal power series in $u,v$, it can be specialised at $v=-u$. This is possible on the right-hand side if and only if \eqref{thesecondfacto} is satisfied.
\end{proof}

For each $\mu \in \tilde{\mathfrak{a}}$, we have an orthogonal direct sum $V_{\mu} \cong {}^{{\rm s}}V_\mu \oplus {}^0V_\mu$ with
\[
{}^{{\rm s}}V_{\mu}  = \bigoplus_{l = 1}^{r_{\mu} - 1} \mathbb{C}.e_{\mu,l}\,,\qquad {}^0V_\mu = \mathbb{C}.e_{\mu,r_{\mu}}\,,
\]
and we decompose $V \cong {}^0V \oplus {}^{{\rm s}}V$ with
\[
{}^{\text{s}}V \coloneqq \bigoplus_{\mu \in \tilde{\mathfrak{a}}} {}^{{\rm s}}V_\mu\,,\qquad  {}^0V \coloneqq \bigoplus_{\mu \in \tilde{\mathfrak{a}}} {}^0V_\mu\,.
\]

\begin{definition} \label{Rudef} 
We introduce two generating series
\begin{equation*}
\begin{split}
\mathsf{R}(u) & \coloneqq \eta^{\bot} - u\, \bar{\mathsf{B}}(u,0) \in (V^{\text{s}})^{\otimes 2}\llbracket u\rrbracket\,,  \\
\mathsf{R}^{\bot}(u) & \coloneqq \eta^{\bot} - u\, \bar{\mathsf{B}}(0,u) \in (V^{\text{s}})^{\otimes 2}\llbracket u\rrbracket\,,
\end{split}
\end{equation*}
where $\eta^{\bot} \in ({}^{\text{s}}V)^{\otimes 2} \subset V^{\otimes 2}$, is induced by the pairing $\eta$, i.e.
\begin{equation}
\eta^\bot = \sum_{\mu \in \tilde{\mf{a}}} \Big( \sum_{l=1}^{r_\mu -1} r_{\mu} e_{\mu,l} \otimes e_{\mu, r_\mu-l} \Big)\,,
\end{equation}
We write $ ^{00}\mathsf{B}$, $^{0\text{s}}\mathsf{B}$, $ ^{\text{s}0}\mathsf{B}$, and $ ^{\text{ss}}\mathsf{B}$ for the projections of $\bar{\mathsf{B}}$ on the adequate subspaces in both arguments, and likewise for $\mathsf{R}$.
\end{definition}
\begin{corollary}
\label{compactfactospin}
Under the assumptions of \cref{compactfacto},  we have:
\begin{equation*}
\begin{split}
^{\textup{ss}}\bar{\mathsf{B}}(u,v) &= \frac{1}{u+v}\Big(\eta^\bot - {}^{\textup{ss}}\mathsf{R}(u) \star {}^{\textup{ss}}\mathsf{R}^\bot(v)\Big)\,,\\
^{\textup{s}0}\bar{\mathsf{B}}(u,v) &= \frac{-1}{u+v}\Big({}^{\textup{s}0}\mathsf{R}(u) + {}^{\textup{ss}}\mathsf{R}(u) \star {}^{\textup{s}0}\mathsf{R}^\bot(v)\Big)\,,\\
^{0\textup{s}}\bar{\mathsf{B}}(u,v) &= \frac{-1}{u+v}\Big({}^{0\textup{s}}\mathsf{R}^\bot(v) + {}^{0\textup{s}}\mathsf{R}(u) \star {}^{\textup{ss}}\mathsf{R}^\bot(v)\Big)\,,\\
^{00}\bar{\mathsf{B}}(u,v) &= \frac{-1}{u+v}\Big({}^{00}\mathsf{R}(u) + {}^{00}\mathsf{R}^\bot(v) +  {}^{0\textup{s}}\mathsf{R}(u) \star {}^{\textup{s}0}\mathsf{R}^\bot(v) \Big)\,,\\
\end{split}
\end{equation*}
and the following compatibility relations hold
\begin{equation*}
\begin{split}
{}^{\textup{ss}}\mathsf{R}(u) \star {}^{\textup{ss}}\mathsf{R}^{\bot}(-u) &= \eta^\bot\,, \\
{}^{\textup{ss}}\mathsf{R}(u) \star {}^{\textup{s}0}\mathsf{R}^\bot (-u) &= - {}^{\textup{s}0}\mathsf{R}(u)\,,\\
{}^{0\textup{s}}\mathsf{R}(u) \star {}^{\textup{ss}} \mathsf{R}^\bot(-u) &= - {}^{0\textup{s}}\mathsf{R}^\bot(-u) \,,\\
^{0\textup{s}} \mathsf{R}(u) \star ^{\textup{s}0}\mathsf{R}^\bot(-u) &= - {}^{00} \mathsf{R}(u) - {}^{00} \mathsf{R}^\bot(-u)\,.
\end{split}
\end{equation*}
\end{corollary}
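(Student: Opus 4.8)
The plan is to deduce \cref{compactfactospin} purely formally from \cref{compactfacto}, by decomposing the two identities \eqref{thefirstfacto} and \eqref{thesecondfacto} according to the orthogonal splitting $V \cong {}^{0}V \oplus {}^{\textup{s}}V$ and keeping track of how the $\star$-product interacts with this decomposition. The key observation is that $\eta$ vanishes on ${}^{0}V$, so that the contraction in $A \star B = (\id \otimes \eta \otimes \id)(A \otimes B)$ only pairs the ${}^{\textup{s}}V$-components of the middle slots; in particular, for any $A,B \in V^{\otimes 2}$ one has $A \star B = {}^{\bullet\textup{s}}A \star {}^{\textup{s}\bullet}B$, where the superscripts indicate projection of the relevant tensor factor. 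This already tells us that the $\star$-terms appearing in \eqref{thefirstfacto} and \eqref{thesecondfacto}, when restricted to a given pair of output subspaces, only ever see ${}^{\textup{ss}}\bar{\mathsf{B}}$, ${}^{\textup{s}0}\bar{\mathsf{B}}$ and ${}^{0\textup{s}}\bar{\mathsf{B}}$ in the middle factors, never ${}^{00}\bar{\mathsf{B}}$.

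First I would record the elementary facts: $\bar{\mathsf{R}}(u) = \eta^{\bot} - u\,\bar{\mathsf{B}}(u,0)$ and $\bar{\mathsf{R}}^{\bot}(u) = \eta^{\bot} - u\,\bar{\mathsf{B}}(0,u)$ lie in $({}^{\textup{s}}V)^{\otimes 2}\llbracket u \rrbracket$ (this is why the definition only keeps the ${}^{\textup{s}}$-parts, i.e. $\bar{\mathsf{R}} = \eta^\bot - u\,{}^{\bullet\bullet}$ applied after projecting out ${}^0V$; here one uses that $e_{\mu,r_\mu}$ is a null vector so the ${}^0V$-direction of $\bar{\mathsf B}(u,0)$ never gets ``promoted'' into $\bar{\mathsf R}$). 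Since $\eta^{\bot}$ is a unit for $\star$ on $({}^{\textup{s}}V)^{\otimes 2}$, one has $\eta^{\bot} \star X = X = X \star \eta^{\bot}$ for $X$ supported on ${}^{\textup{s}}V$ in the contracted slot. Now substitute $u \mapsto u/r_\mu$, $v\mapsto v/r_\nu$ and use bilinearity: rewriting $u\,\bar{\mathsf{B}}(u,0) = \eta^{\bot} - \bar{\mathsf{R}}(u)$ only in the ${}^{\textup{s}}$-slots and $v\,\bar{\mathsf{B}}(0,v) = \eta^\bot - \bar{\mathsf R}^\bot(v)$ similarly, the $\star$-term $u\,v\,\bar{\mathsf{B}}(u,0)\star\bar{\mathsf{B}}(0,v)$ in \eqref{thefirstfacto} becomes $\big(\eta^{\bot} - {}^{\bullet\textup{s}}\bar{\mathsf{R}}(u)\big)\star\big(\eta^{\bot} - {}^{\textup{s}\bullet}\bar{\mathsf{R}}^{\bot}(v)\big)$ after distributing the scalars $u,v$ into the two factors — this is precisely where the four cases $^{\textup{ss}}, {}^{\textup{s}0}, {}^{0\textup{s}}, {}^{00}$ come from, depending on which output subspace one projects onto. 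Expanding the product and matching $(u+v)$-denominators term by term against the four claimed formulas is then bookkeeping: for the $^{\textup{ss}}$-component the $\eta^\bot\star\eta^\bot = \eta^\bot$ term survives and one gets $\frac{1}{u+v}(\eta^\bot - {}^{\textup{ss}}\mathsf R(u)\star{}^{\textup{ss}}\mathsf R^\bot(v))$; for the mixed and $^{00}$-components the leading $\eta^{\bot}$ terms either partially or completely drop out because $\eta^{\bot}$ has no ${}^{0}V$-component, producing the stated signs and the residual $\mathsf R$-terms.

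For the compatibility relations I would specialise the four decomposed versions of \eqref{thefirstfacto} — or rather directly decompose \eqref{thesecondfacto} — at $v = -u$. Since the left sides of all the decomposed identities are formal power series, the specialisation is legitimate; on the right, the factor $\frac{1}{u+v}$ forces the numerator to vanish at $v=-u$, and setting the ${}^{\textup{ss}}$-numerator to zero gives $\eta^\bot = {}^{\textup{ss}}\mathsf R(u)\star{}^{\textup{ss}}\mathsf R^\bot(-u)$, the ${}^{\textup{s}0}$-numerator gives ${}^{\textup{s}0}\mathsf R(u) = -{}^{\textup{ss}}\mathsf R(u)\star{}^{\textup{s}0}\mathsf R^\bot(-u)$, and so on for the remaining two. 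Equivalently, these four relations are exactly the ${}^{\textup{ss}}, {}^{\textup{s}0}, {}^{0\textup{s}}, {}^{00}$ projections of \eqref{thesecondfacto} $\bar{\mathsf{B}}(u,0) - \bar{\mathsf{B}}(0,-u) + u\,\bar{\mathsf{B}}(u,0)\star\bar{\mathsf{B}}(0,-u) = 0$ after substituting $u\,\bar{\mathsf{B}}(u,0) = \eta^{\bot} - \bar{\mathsf{R}}(u)$ and $u\,\bar{\mathsf{B}}(0,-u) = -(\eta^{\bot} - \bar{\mathsf{R}}^{\bot}(-u))$ in the appropriate slots (note the sign from $v = -u$ in $v\,\bar{\mathsf B}(0,v)$).

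The main obstacle — really the only subtle point, since the statement is a formal corollary — is to be completely careful about which tensor factor of $\bar{\mathsf{B}}$ and $\bar{\mathsf{R}}$ is being projected onto ${}^{0}V$ versus ${}^{\textup{s}}V$ in each of the three positions (the two ``outer'' slots that survive and the two ``middle'' slots that get contracted by $\eta$), and about the fact that rewriting $u\,\bar{\mathsf{B}}(u,0)$ as $\eta^{\bot} - \bar{\mathsf{R}}(u)$ is only valid after projecting the \emph{first} tensor factor onto ${}^{\textup{s}}V$ (since $\bar{\mathsf{R}}$ by \cref{Rudef} lives in $({}^{\textup{s}}V)^{\otimes 2}$, the replacement loses the ${}^{0}V$-part of the first factor of $\bar{\mathsf{B}}(u,0)$, which reappears as the explicit ${}^{0\textup{s}}\mathsf R$ or ${}^{00}\mathsf R$ terms in the formulas). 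Keeping a consistent indexing convention — e.g. always writing the contracted slots as the second factor of the left operand and the first factor of the right operand — makes the distribution of the scalars $u,v$ and the case analysis mechanical. Once that convention is fixed, each of the four display formulas and each of the four compatibility relations follows by a one-line expansion, so I would present the ${}^{\textup{ss}}$-case in full and indicate that the others are identical bookkeeping.
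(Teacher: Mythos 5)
Your approach is the natural and, in fact, the only sensible one: the paper offers no proof of \cref{compactfactospin}, and what you do — substitute $u\,\bar{\mathsf{B}}(u,0)=\eta^\bot-\bar{\mathsf{R}}(u)$ and $v\,\bar{\mathsf{B}}(0,v)=\eta^\bot-\bar{\mathsf{R}}^\bot(v)$ into \eqref{thefirstfacto} and \eqref{thesecondfacto}, expand, and project onto the four blocks of $V\cong{}^0V\oplus{}^{\textup s}V$ — is exactly what is needed. The two facts you isolate, namely that the $\star$-contraction sees only the ${}^{\textup s}V$-part of the middle slots (since $\eta$ annihilates ${}^0V$) and that $\eta^\bot$ is a two-sided unit for $\star$ on $({}^{\textup s}V)^{\otimes 2}$, are indeed the key structural points; combining them, the numerator of $(u+v)\bar{\mathsf{B}}(u,v)$ simplifies to $\eta^\bot-{}^{\bullet 0}\mathsf{R}(u)-{}^{0\bullet}\mathsf{R}^\bot(v)-{}^{\bullet\textup s}\mathsf{R}(u)\star{}^{\textup s\bullet}\mathsf{R}^\bot(v)$, whose four block projections are the four displayed formulas, and the compatibility relations come out of the $v=-u$ specialisation (or, equivalently, of decomposing \eqref{thesecondfacto}). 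That is all correct.

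The one thing to repair is the parenthetical assertion that $\bar{\mathsf{R}}(u),\bar{\mathsf{R}}^\bot(u)\in({}^{\textup s}V)^{\otimes 2}\llbracket u\rrbracket$, justified by $e_{\mu,r_\mu}$ being null ``so the ${}^0V$-direction of $\bar{\mathsf B}(u,0)$ never gets promoted into $\bar{\mathsf R}$.'' This is not true: $X\mapsto\eta^\bot-uX$ is a plain affine map with no reason to kill ${}^0V$-components, and the coefficient of $e_{\mu,r_\mu}$ in the second slot of $\bar{\mathsf B}(u,0)$ is $(k-r_\mu)!^{(r_\mu)}\,F_{0,2}\big[\begin{smallmatrix}\mu&\nu\\-k&-r_\nu\end{smallmatrix}\big]$ with $k!^{(r_\mu)}\neq 0$, hence generically nonzero. (You are reproducing what is in fact a typo in \cref{Rudef}; the codomain should read $V^{\otimes 2}\llbracket u\rrbracket$.) If $\bar{\mathsf R}$ really did live in $({}^{\textup s}V)^{\otimes 2}$, then ${}^{\textup s 0}\mathsf{R}$, ${}^{0\textup s}\mathsf{R}$, ${}^{00}\mathsf{R}$ and their $\mathsf{R}^\bot$ counterparts would all vanish identically, and the second through fourth displayed formulas of \cref{compactfactospin} would read ${}^{\textup s 0}\bar{\mathsf B}={}^{0\textup s}\bar{\mathsf B}={}^{00}\bar{\mathsf B}=0$, which is false. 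Your actual expansion already treats $\mathsf R,\mathsf R^\bot$ as full elements of $V^{\otimes 2}\llbracket u\rrbracket$ — so the claim contradicts the rest of your own computation; delete or correct it, and the argument is sound and complete.
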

The structure of the first line is familiar from Givental formalism and from \cite[Appendix B]{Eyn14}, but the other three are new.

\medskip

\subsubsection{Primary differentials and their descendants}
\label{primsec}

\medskip

\begin{definition} We consider a new family of meromorphic differentials on $\tilde{C}$, indexed by $\mu \in \tilde{\mathfrak{a}}$ and $k > 0$:
\begin{equation}
\label{changeba}
\dd\widehat{\xi}_{-k}^{\,\mu} := \mathcal{D}^{\overline{k}}(\dd \xi_{-\hat{k}}^{\mu}),\qquad {\rm where}\quad \mathcal{D}(\varphi) := -\dd\big(\varphi/\dd \tilde{x})\,.
\end{equation}
For $k \in [r_{\mu}]$ they are equal to $\dd \xi_{-k}^{\mu}$,  while the $k > r_{\mu}$ ones are called descendants. Borrowing the language of Frobenius manifolds, we call $(\dd\xi_{-k}^{\mu})_{\mu.k}$ the canonical basis and $(\dd \widehat{\xi}^{\,\mu}_{-k})_{\mu,k}$ the flat basis.
\end{definition}

The aim of this section is to compute the change of basis, from the canonical basis to the flat basis of differentials, when $\tilde{C}$ is smooth compact connected as in the previous section. This generalises computations done in \cite{Eyn14} for simple ramifications. As we shall see in \Cref{ELSVwgnsec}, the correspondence between topological recursion and intersection theory takes a particularly nice form in the $\widehat{\xi}$-basis.

By an easy recursion from \eqref{idemeromor}, passing from $\xi$ to $\widehat{\xi}$ is change of basis with triangular structure. Thus, there exist scalars $A\big[\begin{smallmatrix} \mu & \nu \\ -k & - l \end{smallmatrix}\big]$ indexed by $\mu,\rho \in \tilde{\mathfrak{a}}$ and $k,m> 0$, such that
\begin{equation}
\label{basishatxi}\dd\xi_{-k}^{\mu} = \sum_{\substack{\rho \in \tilde{\mathfrak{a}} \\ m > 0}} A\big[\begin{smallmatrix} \mu & \rho \\ -k & - m \end{smallmatrix}\big]\,\dd\widehat{\xi}^{\,\rho}_{-m}\,,
\end{equation}
where the sum on the right-hand side is finite.

\begin{definition}
We introduce:
\begin{equation*}
\begin{split}
\mathsf{A}_{\mu,\nu}(u,v) & = \bigg(\sum_{k,l > 0} A\big[\begin{smallmatrix} \mu & \nu \\ -k & - l \end{smallmatrix}\big] \, k!^{(r_{\mu})} e_{\mu,\hat{k}} \otimes e_{\nu,\hat{l}}\, u^{\overline{k}}v^{\overline{l}}\bigg) \in V_{\mu} \otimes V_{\nu} [\![u,v]\!]\,\qquad \mu,\nu \in \tilde{\mathfrak{a}}\,, \\
\bar{\mathsf{A}}(u,v) & = \bigg(\sum_{\mu,\nu \in \tilde{\mathfrak{a}}} \mathsf{A}_{\mu,\nu}(u/r_{\mu},v)\bigg) \in V^{\otimes 2}[\![u,v]\!]\,, \\
\eta^{\vee} & = \sum_{\mu \in \tilde{\mathfrak{a}}} \sum_{k = 1}^{r_{\mu}} k\,e_{\mu,k} \otimes e_{\mu,k}\,.
\end{split}
\end{equation*}
Note that in $\bar{\mathsf{A}}$ we do not rescale the variable $v$, unlike for $\bar{\mathsf{B}}$. Due to the definition of the primary differentials, we have $\bar{\mathsf{A}}(0,v) = \bar{\mathsf{A}}(0,0) = \eta^{\vee}$. We use as in Definition~\ref{Rudef} the notation ${}^{{\rm ss}}$ for the projection to  $(V^{{\rm s}})^{\otimes 2}$.
\end{definition}

\begin{proposition}
\label{ABuv} Assume that $\tilde{C}$ is a smooth compact connected curve, $\dd \tilde{x}$ is a meromorphic $1$-form on $\tilde{C}$ and $\omega_{0,2}$ is a fundamental bidifferential of the second kind on $\tilde{C}$. We have:
\[
\overline{\mathsf{A}}(u,v) = \frac{\eta^{\vee} - u \bar{\mathsf{B}}(u,0) \star \eta^{\vee}}{1 - uv}\,.
\]
In particular
\[
{}^{{\rm ss}}\overline{\mathsf{A}}(u,v) = \frac{\mathsf{R}(u) \star \eta^{\vee}}{1 - uv}\,.
\]
\end{proposition}
\begin{proof}
Inserting the change of basis \eqref{basishatxi} into the relation \eqref{idemeromor} and identifying the coefficient of $\dd\widehat{\xi}^{\rho}_{-m}$ yields, for any $\mu,\rho \in \tilde{\mathfrak{a}}$ and $k,m > 0$
\[
\mathsf{A}\big[\begin{smallmatrix} \mu & \rho \\ -k & -m + r_{\rho}\end{smallmatrix}\big] = \frac{k + r_{\mu}}{r_{\mu}}\, \mathsf{A}\big[\begin{smallmatrix} \mu & \rho \\ -(k + r_{\mu}) & - m \end{smallmatrix}\big] + F_{0,2}\big[\begin{smallmatrix} \mu & \rho \\ -k & m - r_{\rho} \end{smallmatrix}\big]\,\frac{m}{kr_{\rho}}\,\delta_{m < r_{\rho}}\,,
\]
where by convention $\mathsf{A}$ with nonnegative lower indices vanish. Multiplying by $k!^{(r_{\mu})}\,e_{\mu,\hat{k}} \otimes e_{\rho,\hat{m}} u^{\overline{k}}v^{\overline{m}}$ and summing over $k,m > 0$ we obtain
\begin{equation*}
\begin{split}
& \quad v\mathsf{A}_{\mu,\rho} (u,v) \\ 
& = (r_{\mu} u)^{-1}\big(\mathsf{A}_{\mu,\rho}(u,v) - \mathsf{A}_{\mu,\rho}(0,v)\big) + \sum_{k > 0} \sum_{m = 1}^{r_{\rho} - 1} F_{0,2}\big[\begin{smallmatrix} \mu & \rho \\ -k & m - r_{\rho} \end{smallmatrix}\big] (k - r_{\mu})!^{(r_{\mu})}\,\frac{m}{r_{\rho}}\,e_{\mu,\hat{k}} \otimes e_{\rho,m}\,u^{\overline{k}} \\
& = (r_{\mu} u)^{-1}\big(\mathsf{A}_{\mu,\rho}(u,v) - \mathsf{A}_{\mu,\rho}(0,v)\big) + \bigg(\sum_{k,l > 0} F_{0,2}\big[\begin{smallmatrix} \mu & \rho \\ -k & -l \end{smallmatrix}\big]\,(k - r_{\mu})!^{(r_{\mu})} (l - r_{\rho})!^{(r_{\rho})} e_{\nu,\hat{k}} \otimes e_{\rho,\hat{l}}\, u^{\overline{k}} 0^{\overline{l}} \bigg) \star  \eta^{\vee} \\
& = (r_{\mu} u)^{-1}\big(\mathsf{A}_{\mu,\rho}(u,v) - \mathsf{A}_{\mu,\rho}(0,v)\big) + \mathsf{B}_{\mu,\rho}(u,0) \star \eta^{\vee}\,,
\end{split}
\end{equation*}
where we have used $\eta(e_{\rho,l},e_{\nu,m}) = r_{\rho}^{-1}\delta_{\nu,\rho}\delta_{l + m,r_{\rho}}$, the fact that $(l - r_{\rho})!^{(r_{\rho})} = 1$ when $l < r_{\rho}$, which are the only terms contributing to the sum, and recognised $\mathsf{B}_{\mu,\rho}(u,0)$ from \cref{Buvdef}. We now make the substitution $u \mapsto u/r_{\mu}$, sum over $\mu,\rho \in \tilde{\mathfrak{a}}$ and use $\bar{\mathsf{A}}(0,v) = \eta^{\vee}$. This results in
\[
v\bar{\mathsf{A}}(u,v) = u^{-1}\big(\bar{\mathsf{A}}(u,v) - \eta^{\vee}\big) + \bar{\mathsf{B}}(u,0) \star \eta^{\vee}\,,
\]
whose solution is
\[
\bar{\mathsf{A}}(u,v) = \frac{\eta^{\vee} - u\bar{\mathsf{B}}(u,0) \star \eta^{\vee}}{1 - uv}\,.
\]
As the projection of $\eta^{\vee}$ onto $(V^{{\rm s}})^{\otimes 2}$ coincides with $\eta^{\bot} \star \eta^{\vee}$, we deduce
\[
{}^{{\rm ss}}\bar{\mathsf{A}}(u,v) = \frac{(\eta^{\bot} - u\bar{\mathsf{B}}(u,0)) \star \eta^{\vee}}{1 - uv}\,,
\]
where we recognise $\mathsf{R}(u)$ from \cref{Rudef}.
\end{proof}

\medskip
\subsection{Review of Witten \texorpdfstring{$r$}{r}-spin classes}
\label{revwit}
\medskip

For $r \geq 2$, we denote $w_{g,n}^{r{\rm spin}}(k_1,\ldots,k_n)$ the Witten $r$-spin class, first imagined by \cite{Wit93} and constructed in \cite{PoVa01,Chi06}, cf. also \cite{PPZ15}. For all integers $g,n \geq 0$ such that $2g - 2 + n > 0$ and $k_1,\ldots,k_n \in \mathbb{Z}$, this is a Chow class
\[
w^{r{\rm spin}}_{g,n}(k_1,\ldots,k_n) \in CH^*(\overline{\mathcal{M}}_{g,n})\,.
\]

It is defined via the moduli space of $r$-spin structures $\overline{\mathcal{M}}_{g;k_1,\ldots,k_n}^{r{\rm spin}}$ parametrizing pointed curves $(C,p_1,\ldots,p_n)$ with a line bundle $L \rightarrow C$ together with an isomorphism of $L^{\otimes r}$ to $K_C^{{\rm log}}\big(- \sum_{i = 1}^n k_i p_i\big)$. Denoting $\mathcal{C}$ the universal curve and $\mathcal{L}$ the universal line bundle, and considering the projection
\[
\mathcal{L} \rightarrow \mathcal{C} \mathop{\longrightarrow}^{\pi} \overline{\mathcal{M}}_{g;k_1,\ldots,k_n}^{r{\rm spin}} \mathop{\longrightarrow}^{p} \overline{\mathcal{M}}_{g,n}\,,
\]
the naive definition is
\[
w_{g,n}^{r{\rm spin}}(k_1,\ldots,k_n) = r^{-g}\,p_*c_{{\rm top}}\big((R^1\pi_*\mathcal{L})^{\vee}\big)\,.
\]
This works in genus $0$, but if $g > 0$, then $R^1\pi_*\mathcal{L}$ is not a vector bundle as $R^0\pi_*\mathcal{L}$ is non-zero, so the general construction is more involved. For positive $k_i$, Witten's class vanishes if one of the $k_i$ is divisible by $r$. It is concentrated in codimension\footnote{The notion of codimension for Chow classes refers to homology. Therefore, when the Chow class can be realised by a cohomology class, this codimension is twice the cohomological degree.}\begin{equation}
\label{dimrspin}
D = \frac{(r-2)(g-1) - n + \sum_{i = 1}^n k_i}{r}\,.
\end{equation}
In particular, the class vanishes unless the right-hand side of \eqref{dimrspin} is an integer, and its integration on $\overline{\mathcal{M}}_{g,n}$ vanishes unless $D = \dim \overline{\mc{M}}_{g,n} = 3g-3+n$.

We are primarily interested in indices ranging over $[r]$, or $[r - 1]$ since the class vanishes for index equal to $r$, but the following property, conjectured by Jarvis--Kimura--Vaintrob \cite[Descent axiom 1.9]{JKV01},  explains the appearance of the $r$-fold factorial in all our formulas, see also \cite[Lemma 4.2.8]{Chi08}.
\begin{lemma} \cite[Proposition 5.1]{PoVa01}
Let $g,n,k_1,\ldots,k_n \geq 0$ be integers such that $2g - 2 + n > 0$, and decompose $k_i = d_ir + a_i$ with $a_i \in [r]$ and $d_i \geq 0$. We have
\[
w_{g,n}^{r{\rm spin}}(k_1,\ldots,k_n) = w_{g,n}^{r{\rm spin}}(a_1,\ldots,a_n) \prod_{i = 1}^n r^{-d_i} (k - r)!^{(r)} \psi_i^{d_i}\,.
\]
\end{lemma}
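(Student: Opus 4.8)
The statement is a compatibility of Witten's $r$-spin class with the "divisibility" of the markings: replacing an index $k_i = d_i r + a_i$ by its residue $a_i \in [r]$ costs a factor $r^{-d_i}(k_i - r)!^{(r)}\psi_i^{d_i}$. I would follow the original argument of \cite[Proposition 5.1]{PoVa01}, which is essentially a local comparison of the $r$-spin line bundles on the universal curve under twisting a section by multiples of $r$ at a marked point, combined with the Grothendieck--Riemann--Roch / virtual-class description of Witten's class.

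\emph{Step 1: reduce to one marking.} By symmetry and linearity it suffices to treat $n=1$, or rather to fix all but one marking and vary $k$ at the point $p$; the full statement follows by iterating. So I would work with $\overline{\mathcal{M}}_{g;k,k_2,\dots,k_n}^{r\mathrm{spin}}$ versus $\overline{\mathcal{M}}_{g;a,k_2,\dots,k_n}^{r\mathrm{spin}}$ where $k = dr + a$, $a \in [r]$, $d \geq 0$, and iterate $d$ times the case $d=1$, i.e. the single step $k \mapsto k - r$.

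\emph{Step 2: the single twist $k \mapsto k-r$.} On the universal curve $\pi \colon \mathcal{C} \to \overline{\mathcal{M}}$ with the section $\mathfrak{p}$ corresponding to $p$, the universal $r$-spin bundle for index $k$ and the one for index $k-r$ are related by $\mathcal{L}_{k} \cong \mathcal{L}_{k-r} \otimes \mathcal{O}(-\mathfrak{p})$, since an isomorphism $\mathcal{L}^{\otimes r} \cong K^{\mathrm{log}}(-\sum k_i p_i)$ changes to $K^{\mathrm{log}}(-\sum k_i p_i + r\, p)$ exactly when one tensors the line bundle by $\mathcal{O}(p)$. Pushing forward the short exact sequence $0 \to \mathcal{L}_{k} \to \mathcal{L}_{k-r} \to \mathcal{L}_{k-r}|_{\mathfrak{p}} \to 0$ along $\pi$ and taking the Euler class of $(R^1\pi_*)^\vee$ (together with the vanishing $R^0\pi_*\mathcal{L}_k = 0$ for the relevant index range, and more generally the compatibility of the virtual construction of \cite{PoVa01,Chi06} with this sequence), one gets
\[
w_{g,n}^{r\mathrm{spin}}(k,k_2,\dots,k_n) = \lambda \cdot w_{g,n}^{r\mathrm{spin}}(k-r,k_2,\dots,k_n),
\]
where $\lambda$ is a multiplicative correction concentrated on the first factor. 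A computation with $c_1(\mathcal{L}_{k-r}|_{\mathfrak{p}})$ on $\overline{\mathcal{M}}_{g,n}$ — using that $\mathcal{L}^{\otimes r}|_{\mathfrak{p}} \cong K^{\mathrm{log}}(-\sum k_i p_i)|_{\mathfrak{p}}$ has trivial first Chern class at $p$ (the log canonical bundle restricted to a marked point is trivial and $\mathcal{O}(-k_j p_j)|_{\mathfrak{p}} $ contributes nothing once $j$ is the varied index in the "log" normalization), so $c_1(\mathcal{L}|_{\mathfrak{p}}) $ is $r$-torsion after rescaling, precisely producing the factor $\tfrac{k}{r}$ — gives $\lambda = \tfrac{k}{r}\,\psi_1 + (\text{terms absorbed by the codimension shift})$. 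Matching with the codimension count \eqref{dimrspin}, which increases by $1$ exactly when $k$ increases by $r$, forces $\lambda = \tfrac{k}{r}\,\psi_1$ on the nose.

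\emph{Step 3: iterate.} Applying Step 2 $d$ times, $w^{r\mathrm{spin}}(k,\dots) = \big(\prod_{j=0}^{d-1} \tfrac{k - jr}{r}\big)\,\psi_1^{d}\, w^{r\mathrm{spin}}(a,\dots) = r^{-d}\,(k - r)!^{(r)}\,\psi_1^{d}\, w^{r\mathrm{spin}}(a,\dots)$, since $\prod_{j=0}^{d-1}(k - jr) = k(k-r)\cdots(k-(d-1)r) = (k-r)!^{(r)}/(a-r)!^{(r)}$ and $(a-r)!^{(r)} = 1$ for $a \in [r]$ by \eqref{rfoldfac}. Doing this simultaneously for all markings $i$ with $d_i > 0$ (the operations at distinct markings commute, as they involve disjoint sections) yields the claimed formula.

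\emph{Main obstacle.} The genuine difficulty, as the excerpt itself flags, is Step 2 in positive genus: for $g > 0$ the sheaf $R^1\pi_*\mathcal{L}$ is not locally free (because $R^0\pi_*\mathcal{L} \neq 0$ in general after twisting), so "$c_{\mathrm{top}}$ of its dual" has to be replaced by the virtual/cosection-localized class of \cite{PoVa01} (algebraically) or \cite{Chi06,Moc06} (analytically). The plan is to invoke the fact — already part of the construction in \cite{PoVa01} and used in \cite{JKV01,Chi08} — that Witten's class behaves functorially under the twist $\mathcal{L}_k \rightsquigarrow \mathcal{L}_{k-r}\otimes\mathcal{O}(-\mathfrak{p})$, i.e. that the "descent/forgetting a $\psi$" compatibility holds at the level of the virtual class and not just of the naive Euler class. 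I would cite \cite[Proposition 5.1]{PoVa01} for this and limit the new content to the bookkeeping of Steps 1 and 3 and the identification of the constant in Step 2 via the dimension formula \eqref{dimrspin}.
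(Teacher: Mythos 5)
The paper does not actually prove this lemma: it is stated as a citation to \cite[Proposition~5.1]{PoVa01} (with \cite[Descent~axiom~1.9]{JKV01} and \cite[Lemma~4.2.8]{Chi08} as companion references), so there is no ``paper proof'' for your sketch to match. Your strategy is nevertheless the standard and correct one for this descent axiom --- reduce to a single marking, compare the $r$-spin bundles under the twist $k \mapsto k-r$ via the isomorphism $\mathcal{L}_k \cong \mathcal{L}_{k-r}\otimes\mathcal{O}(-\mathfrak{p})$, push forward the resulting short exact sequence, and iterate --- and you correctly flag the genuine difficulty (non-local-freeness of $R^1\pi_*\mathcal{L}$ in positive genus, requiring the cosection/virtual class formalism rather than a naive Euler class).

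However, there is a self-consistent arithmetic slip in Steps 2--3. Restricting $\mathcal{L}_{k-r}$, i.e.\ the sheaf that appears as the cokernel in your exact sequence, to the section $\mathfrak{p}$ gives $c_1(\mathcal{L}_{k-r}|_{\mathfrak{p}}) = \tfrac{k-r}{r}\,\psi_1$ (since $\mathcal{L}_{k-r}^{\otimes r}\cong K^{\log}(-(k-r)p_1-\cdots)$, $K^{\log}|_{\mathfrak{p}}$ is trivial, and $\mathcal{O}(-p_1)|_{\mathfrak{p}}\cong \mathbb{L}_1$). So the single-step descent factor is $\tfrac{k-r}{r}\,\psi_1$, not $\tfrac{k}{r}\,\psi_1$ as you write. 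Correspondingly, your Step~3 identity $\prod_{j=0}^{d-1}(k-jr) = (k-r)!^{(r)}$ is false: unwinding the recursion in \eqref{rfoldfac} gives $(k-r)!^{(r)} = (k-r)(k-2r)\cdots(k-dr) = \prod_{j=1}^{d}(k-jr)$, whereas your product starts at $j=0$ and is therefore off by a factor $k/a$. The two slips are coordinated (you shifted both the single-step factor and the product by one unit of $r$), so you land on the right final formula, but the intermediate chain of equalities as written is incorrect. With the corrected single-step factor $\tfrac{k-r}{r}\psi_1$, the iteration gives $\prod_{j=1}^{d}\tfrac{k-jr}{r}\psi_1^d = r^{-d}(k-r)!^{(r)}\psi_1^d$ directly, with no need for the spurious cancellation. (There remains a possible overall sign, depending on whether one takes $c_1$ of the cokernel or of its dual and on the sign convention in the definition of $w^{r\mathrm{spin}}_{g,n}$; this is absorbed into the conventions of \cite{PoVa01} and is not something your sketch addresses or needs to resolve for the purposes of this paper.)
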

This formula is consistent with the case $k \in [r]$ due to the initial condition in the definition \eqref{rfoldfac} of the $r$-fold factorial.

\begin{remark} Witten's class can be interpreted as a cohomological field theory in the following way: its vector space $V = V^{r\textup{spin}}$ has a basis $(e_i)_{i = 1}^{r - 1}$ and we define for $a_1,\ldots,a_n \in [r - 1]$
\[
W_{g,n}^{r{\rm spin}}(e_{a_1} \otimes \cdots \otimes e_{a_n}) \coloneq w_{g,n}^{r{\rm spin}}(a_1,\ldots,a_n) \in CH^*(\overline{\mathcal{M}}_{g,n})\,.
\]
$V^{r{\rm spin}}$ is equipped with product and multiplication
\[
\big(e_a\,\big|\,e_b\big)  = \frac{\delta_{a + b,r}}{r}\,,\qquad  \big(e_{a_1} \bullet e_{a_2}\,\big|\,e_{a_3}\big) = W_{0,3}(e_{a_1} \otimes e_{a_2} \otimes e_{a_3}) = \delta_{a_1 + a_2 + a_3,r + 1}\,.
\]
We included a factor $r^{-1}$ to align with \eqref{pairingwithr}. It could be removed by rescaling the product without changing the class $W_{g,n}^{r{\rm spin}}$ and without effect on the formulas we are going to write.
\end{remark}

We define the partition function of the $r$-spin theory by
\begin{equation*}
\begin{split}
& Z^{r{\rm spin}}\big[(x_k)_{k > 0}\big] \quad \\
& = \exp\left(\sum_{\substack{g,n \in \mathbb{Z}_{\geq 0} \\ 2g - 2 + n > 0}} \frac{\hbar^{g - 1}}{n!} \sum_{k_1,\ldots,k_n > 0} \bigg(\int_{\overline{\mathcal{M}}_{g,n}} w_{g,n}^{r{\rm spin}}(k_1,\ldots,k_n) \bigg)\prod_{i = 1}^n r^{\lfloor \frac{k_i}{r} \rfloor} k_i x_{k_i}\right) \\
& =  \exp\left(\sum_{\substack{g,n \in \mathbb{Z}_{\geq 0} \\ 2g - 2 + n > 0}} \frac{\hbar^{g - 1}}{n!} \sum_{\substack{a_1,\ldots,a_n \in [r - 1] \\ d_1,\ldots,d_n \geq 0}} \bigg(\int_{\overline{\mathcal{M}}_{g,n}} w_{g,n}^{r{\rm spin}}(a_1,\ldots,a_n) \prod_{i = 1}^n \psi_i^{d_i}\bigg)\prod_{i = 1}^n (d_ir + a_i)!^{(r)}\,x_{d_ir + a_i}\right)\,,
\end{split}
\end{equation*}
where we took into account the dimension constraint \eqref{dimrspin} to get the second line. Due to the aforementioned vanishing, it is independent of the times with indices divisible by $r$. $Z^{r{\rm spin}}$ was identified in \cite{FSZ10} with the tau function for the $r$-KdV hierarchy.  It is also known that $Z^{r{\rm spin}}$ satisfies $\mc{W}(\mathfrak{gl}_r)$-constraints --- see \cite{BBCCN18} for the history of this result. From there, Milanov proved in \cite{Mil16} that $Z^{r{\rm spin}}$ is the partition function of the Airy structure of \cref{thm:W_gl_Airy_Coxeter} with $(r,s) = (r,r + 1)$, as well as the topological recursion \`a la Bouchard--Eynard \cite{BoEy13} for the associated correlators:
\begin{equation}
\label{rspinw}
\omega_{g,n}^{r{\rm spin}}(\zeta_1,\ldots,\zeta_n) = \sum_{\substack{a_1,\ldots,a_n \in [r - 1] \\ d_1,\ldots,d_n \geq 0}} \bigg(\int_{\overline{\mathcal{M}}_{g,n}} w_{g,n}(a_1,\ldots,a_n) \prod_{i = 1}^n \psi_i^{d_i}\bigg) \prod_{i = 1}^n \frac{(d_ir + a_i)!^{(r)}\,\dd \zeta_i}{\zeta_i^{dr_i + a_i + 1}} \,.
\end{equation}
This result was apparently also obtained by Bouchard and Eynard in an unpublished draft, and appeared in \cite{DNOPS19} in a form closer to the one we state here.
\begin{theorem}
\label{rspinthmTR} The correlators $\boldsymbol{\omega}^{r{\rm spin}}$ are computed by the topological recursion for the spectral curve (without crosscap)
\begin{equation}
\label{zetarspin} x(\zeta) = \zeta^{r},\qquad y(\zeta) = -\frac{\zeta}{r}\,,\qquad \omega_{0,2}(\zeta_1,\zeta_2) = \frac{\dd \zeta_1 \dd \zeta_2}{(\zeta_1 - \zeta_2)^2}\,.
\end{equation}
\end{theorem}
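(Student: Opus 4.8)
The plan is to deduce \Cref{rspinthmTR} directly from our general correspondence \Cref{mainth2}, specialised to the single-branchpoint spectral curve \eqref{zetarspin}. First I would check that this curve is \emph{admissible} in the sense of \Cref{dall}: there is a unique branchpoint at $x=0$, the curve is locally irreducible there ($|\tilde{\mathfrak{a}}_\alpha|=1$), $\tilde{y}=-\zeta/r$ is holomorphic with $\dd\tilde y(0)\neq 0$, so by \Cref{dereg} the point is \emph{regularly admissible}; equivalently we are in the case $d=1$, $(r,s)=(r,r+1)$, $Q=0$, no crosscap ($\omega_{\frac12,1}=0$), which is exactly the setting of \Cref{thm:W_gl_Airy_Coxeter}. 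By \Cref{mainth2}, there is then a unique system of correlators $\boldsymbol{\omega}$ on $\tilde C$ with $\omega_{0,1}=\tilde y\,\dd\tilde x$, $\omega_{0,2}$ as in \eqref{zetarspin}, $\omega_{\frac12,1}=0$, satisfying the projection property and the abstract (hence master) loop equations; it is computed by the topological recursion \eqref{TRsum}, and its coefficients $F_{g,n}$ are those of the partition function of the Airy structure of \Cref{Airytocurve}, which here is precisely the $W(\mathfrak{gl}_r)$-Airy structure of \Cref{thm:W_gl_Airy_Coxeter} with $(r,s)=(r,r+1)$.

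Second I would invoke Milanov's theorem \cite{Mil16} (as recalled just above the statement): $Z^{r{\rm spin}}$ is the partition function of exactly that Airy structure. Since \Cref{thm:partition_fct_exists} guarantees the partition function of an Airy structure is \emph{unique}, the $F_{g,n}$ produced by topological recursion on \eqref{zetarspin} must coincide with the $r$-spin intersection numbers packaged in $Z^{r{\rm spin}}$. It then remains to translate this identity of $F_{g,n}$'s into the claimed identity of differentials $\boldsymbol{\omega}=\boldsymbol{\omega}^{r{\rm spin}}$. For this I would use the decomposition \eqref{omgnxp} of $\omega_{g,n}$ on the basis $\dd\xi^{\mu}_{-k}$: for the curve \eqref{zetarspin} one has $\tilde{\mathfrak{a}}=\{\alpha\}$ a single point, $\omega_{0,2}=\omega_{0,2}^{\rm std}$ so $F_{0,2}\big[\begin{smallmatrix}\mu&\nu\\-k&-l\end{smallmatrix}\big]=0$, and hence by \eqref{xinegdef} the basis forms are simply $\dd\xi_{-k}(\zeta)=k\,\zeta^{-(k+1)}\dd\zeta$ up to the combinatorial normalisation already fixed in \cref{onecyc} (more precisely $\dd\xi_{-k}(\zeta)=\dd\zeta/\zeta^{k+1}$, and the factor $k$ appearing in \eqref{rspinw} is absorbed in the relation between $F_{g,n}[k]$ and the intersection numbers, exactly as in the definition of $Z^{r{\rm spin}}$). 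Plugging the identification $F_{g,n}[k_1,\dots,k_n]=\int_{\overline{\mathcal M}_{g,n}}w_{g,n}^{r{\rm spin}}(k_1,\dots,k_n)\prod_i r^{\lfloor k_i/r\rfloor}k_i$ into \eqref{omgnxp} yields \eqref{rspinw}, i.e.\ $\omega_{g,n}=\omega_{g,n}^{r{\rm spin}}$, and the fact that these are computed by \eqref{TRsum} is part of the conclusion of \Cref{mainth2}.

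The main obstacle, and the step that needs genuine care rather than being routine, is the \emph{bookkeeping of normalisation conventions}. There are three conventions in play that must be reconciled: (a) the convention in \Cref{notationWik} for the modes $W_{i,k}$ and the resulting dilaton shift $J_{-s}\to J_{-s}-1$ giving $t=1/r$, versus the Laplace-type normalisation with $r$-fold factorials used in \eqref{rspinw} and in the definition of $Z^{r{\rm spin}}$; (b) the role of the factor $1/r$ in $y(\zeta)=-\zeta/r$ (it is precisely the value $t=\tfrac1r$ at which \Cref{lem:homogen} gives the homogeneity/dilaton equation), which must match the dilaton shift normalisation \eqref{eq:dilaton_shift_coxeter}; (c) the combinatorial factors $k\,x_k$ in $Z^{r{\rm spin}}$ versus the $\dd\xi_{-k}$-expansion. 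I would handle this by checking the low-order cases $F_{0,3}$ and $F_{1,1}$ explicitly against the known $r$-spin numbers $\int_{\overline{\mathcal M}_{0,3}}w^{r{\rm spin}}_{0,3}(a_1,a_2,a_3)=\delta_{a_1+a_2+a_3,r+1}$ and $\int_{\overline{\mathcal M}_{1,1}}w^{r{\rm spin}}_{1,1}$, using \Cref{lem:homogen0} and the computation of $\omega_{1,1}^{\rm I}$ in \cref{sec:free_energ_arb_autom} (with $t_\mu=1/r_\mu$), which fixes all the constants; once the conventions are pinned at this level, the general statement follows from uniqueness in \Cref{thm:partition_fct_exists} and from \Cref{mainth2}. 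I would also remark that the absence of a crosscap differential is consistent with $Z^{r{\rm spin}}$ having only integer powers of $\hbar$ (cf.\ \Cref{nohalf}), so all half-integer $g$ correlators vanish on both sides.
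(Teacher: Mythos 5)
Your proof is correct but takes a genuinely different route from the paper.

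The paper's proof is a short scaling argument: it invokes \cite[Theorem 7.3]{DNOPS19}, which gives the outcome of topological recursion on the curve $x(z)=z^r$, $y(z)=z$ as the sum in \eqref{thefirstco} carrying a prefactor $(-r)^{2-2g-n}$, then observes that passing to the curve \eqref{zetarspin} amounts to rescaling $y\mapsto -y/r$, which multiplies $\omega_{g,n}$ by $(-r)^{2g-2+n}$; the prefactors cancel, giving $\omega_{g,n}^{r{\rm spin}}$ on the nose. This is entirely self-contained given the DNOPS19 reference and needs no discussion of Airy structures at all.

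Your route instead goes through the paper's own general machinery: admissibility of the curve (regular, $(r,s)=(r,r+1)$), \Cref{mainth2} to equate the topological-recursion correlators with the partition function coefficients of the Airy structure of \Cref{thm:W_gl_Airy_Coxeter}, Milanov's theorem \cite{Mil16} that this partition function is $Z^{r{\rm spin}}$, and then uniqueness from \Cref{thm:partition_fct_exists}. This is internally consistent and does work — in particular your observation that the dilaton shift $J_{-(r+1)}\to J_{-(r+1)}-1$ encodes $t=1/r$ and hence $y=-\zeta/r$ is exactly right, so no rescaling is hidden. Where the two approaches differ in practice is the treatment of normalisations: the paper makes the $(-r)^{2-2g-n}$ bookkeeping completely explicit in one line, while your proof pushes the normalisation reconciliation into the assertion that Milanov's identification holds in the paper's conventions, supplemented by a low-order sanity check. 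That check is legitimate (uniqueness then closes the argument), but it is slightly less transparent than the paper's direct computation, and it relies on the precise alignment of Milanov's conventions with those of \cite{BBCCN18}/this paper being correct, which is precisely what the DNOPS19 route verifies explicitly. In short: your proof buys you consistency with the paper's own framework at the cost of an extra dependence on external normalisation conventions, while the paper's proof trades framework elegance for a one-line, fully explicit calculation.
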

\begin{proof} We start from \cite[Theorem 7.3]{DNOPS19} which shows that the topological recursion for the spectral curve
\begin{equation}
\label{zrspin} x(z) = z^r,\qquad y(z) = z,\qquad \omega_{0,2}(z_1,z_2) = \frac{\dd z_1\dd z_2}{(z_1 - z_2)^2}
\end{equation}
yields
\begin{equation}
\label{thefirstco}\sum_{\substack{a_1,\ldots,a_n \in [r - 1] \\ d_1,\ldots,d_n \geq 0}} (-r)^{2 - 2g - n} \bigg(\int_{\overline{\mathcal{M}}_{g,n}} w_{g,n}^{r{\rm spin}}(a_1,\ldots,a_n) \prod_{i = 1}^n \psi_i^{d_i}\bigg) \prod_{i = 1}^n \frac{(d_ir + a_i)!^{(r)}\,\dd z_i}{z_i^{dr_i + a_i + 1}}\,.
\end{equation}
The spectral curve \eqref{zetarspin} can be obtained from \eqref{zrspin} by multiplying $y$ by $-\frac{1}{r}$. This multiplies \eqref{thefirstco} by $(-r)^{2g -2+ n}$. So the factors of $r$ cancel and we indeed obtain the correlators $\omega_{g,n}^{r{\rm spin}}(\zeta_1,\ldots,\zeta_n)$.
\end{proof}

\medskip

\subsection{Deformations on Witten \texorpdfstring{$r$}{r}-spin classes}

\medskip

We now recall well-known actions on family of classes, originating from the work of Givental. As our focus is not on cohomological field theories, some of the actions we allow may not preserve this property and do not belong stricto sensu to the Givental group. See e.g. \cite{Shad08,Tel12,PPZ15} for more background.

\medskip
\subsubsection{Translations}

\medskip

\label{Sectrans}

Given a formal series $\mathsf{T}(u) \in uV^{r{\rm spin}}\llbracket u\rrbracket $, we can define a new family of Chow classes
\[
\big[\widehat{\mathsf{T}}\cdot W^{r{\rm spin}}\big]_{g,n}\colon (V^{r{\rm spin}})^{\otimes n} \rightarrow CH^*(\overline{\mathcal{M}}_{g,n})\,.
\]
We first decompose
\[
\mathsf{T}(u) = \sum_{\substack{d \geq 1 \\ a \in [r - 1]}} \mathsf{T}_{rd + a}\,e_{a}u^{d}
\]
and assume that $\mathsf{T}_{r + 1} \neq 1$. Then we introduce
\[
\tilde{\mathsf{T}}(u) \coloneqq \frac{\mathsf{T}(u) - \mathsf{T}_{r + 1}e_{1}u}{1 - \mathsf{T}_{r + 1}} =  \sum_{\substack{d \geq 1 \\ a \in [r - 1] \\ (d,a) \neq (1,1)}} \frac{\mathsf{T}_{rd + a}}{1 - \mathsf{T}_{r + 1}}\,e_{a}\,u^{d}\,,
\] 
and set
\[
\big[\hat{\mathsf{T}}\cdot W_{g,n}^{r{\rm spin}}\big](-) = \sum_{m \geq 0} \frac{1}{m!}\,(\pi_m)_*W_{g,n + m}^{r{\rm spin}}\big(- \otimes \tilde{\mathsf{T}}(\psi_{n + 1}) \otimes \dotsb \otimes \tilde{\mathsf{T}}(\psi_{n + m})\big)\,,
\]
where $\pi_m\,:\,\overline{\mathcal{M}}_{g,n + m} \rightarrow \overline{\mathcal{M}}_{g,n}$ is the forgetful morphism. This definition is well-posed, i.e. the sum has finitely many non-zero terms. Indeed, if we evaluate on $\bigotimes_{i=1}^n e_{a_i} $, the codimension (after pushforward) of the summand proportional to $\prod_{j = 1}^m \mathsf{T}_{d_jr + b_j}$ with $d_jr + b_j \geq r + 2$ is
\[
\frac{1}{r}\bigg((g-1)(r-2) -(n+m) + \sum_{i = 1}^n a_i + \sum_{j = 1}^m (d_jr + b_j)\bigg) -m \geq \frac{1}{r}\bigg((g-1)(r-2) -n + \sum_{i = 1}^n a_i  \bigg) + \frac{m}{r} \,,
\]
which for fixed $g,n,a_1,\ldots,a_n$ becomes larger than $ \dim \overline{\mc{M}}_{g,n} = 3g-3+n$ for  $m$ large enough, forcing this summand to vanish. The coefficient $\mathsf{T}_{r + 1}$ plays a special role, which reflects the dilaton equation
\[
(p_1)_*\big(w_{g,n + 1}^{r{\rm spin}}(a_1,\ldots,a_n,1)\cdot \psi_{n + 1}\big) = (2g - 2 + n)\,w_{g,n}^{r{\rm spin}}(a_1,\ldots,a_n)\,.
\]
The change from $ \mathsf{T}$ to $ \tilde{\mathsf{T}}$ reflects this.

\medskip

\subsubsection{Sums over stable graphs}

\medskip

\label{sumstab}
Let now $V$ be a finite-dimensional vector space and consider a family of classes
\[
\Omega_{g,n}\colon V^{\otimes n} \rightarrow CH^*(\overline{\mathcal{M}}_{g,n}),\qquad g,n \in \mathbb{Z}_{\geq 0},\quad 2g - 2 + n > 0\,.
\]
Given a symmetric formal power series $\mathsf{B}(u_1,u_2) \in V^{\otimes 2} \llbracket u_1,u_2 \rrbracket $, we can define a new such family $\big[\widehat{\mathsf{B}}\cdot \Omega_{g,n}\big]_{g,n}$ by sums over stable graphs.

Let $\mathsf{G}_{g,n}$ be the set of stable graphs of type $(g,n)$. For a vertex $\mathsf{v}$ in a stable graph, we denote $h(\mathsf{v})$ the genus and $k(\mathsf{v})$ the valency.
\[
\big[\widehat{\mathsf{B}}\cdot \Omega\big]_{g,n} = \sum_{\Gamma \in \mathsf{G}_{g,n}} \frac{1}{|{\rm Aut}\,\Gamma|}  \eta_{\Gamma}\,(\iota_{\Gamma})_* \bigg[\prod_{\mathsf{v} \in {\rm Vert}(\Gamma)} \Omega_{h(\mathsf{v}),k(\mathsf{v})} \prod_{\{\mathsf{e},\mathsf{e}'\} \in {\rm Edge}(\Gamma)} \mathsf{B}(\psi_{\mathsf{e}},\psi_{\mathsf{e}'})\bigg]\,,
\]
where $\iota_{\Gamma}\,:\,\prod_{v \in {\rm Vert}(\Gamma)} \overline{\mathcal{M}}_{h(v),k(v)} \rightarrow \overline{\mathcal{M}}_{g,n}$ is the natural inclusion of the boundary stratum associated to $\Gamma$. To read this formula, half-edges label the punctures on the curves whose moduli spaces sit at the vertices. So, we have $\psi$-classes $\psi_{\mathsf{e}},\psi_{\mathsf{e}'}$ associated to an edge $\{\mathsf{e},\mathsf{e}'\}$, and for each half-edge there is a copy of $V^*$ coming from the vertex it starts from, and a copy of $V$ coming from the contribution of the edge it belongs to. The symbol $\eta_{\Gamma}$ indicates that we pair them in the natural way. This definition is well-posed because the dimension of the moduli spaces at the vertices is smaller than the one of $\overline{\mathcal{M}}_{g,n}$, so that only finitely many powers of $\psi$-classes can contribute in the sum.

\begin{remark}
This differs slightly from the so-called $R$-action in Givental formalism, by the fact that we do not decorate leaves of the stable graph, and we do not assume that $\mathsf{B}$ has a factorisation property in terms of an $R$-matrix in the style of \eqref{thefirstfacto}.
\end{remark}

\medskip
 
\subsection{Intersection theory for regularly admissible spectral curves}
\label{theregucnu}
\medskip

Let $\mc{S} = (C,x,y,\omega_{0,2})$ be a regularly admissible spectral curve equipped with a fundamental bidifferential of the second kind. In particular, $C$ must be smooth and we can write $ C$ and $ \mf{a} $ instead of $ \tilde{C}$ and $ \tilde{\mf{a}}$, respectively. In this context, as in \cref{Rudef} we rather take
\[
V = \bigoplus_{\alpha \in \mathfrak{a}} V^{r_\alpha{\rm spin}}\,.
\]
This amounts to set $e_{\alpha,r_{\alpha}} = 0$ in all subsequent formulas.
 
Following \cref{Tudefn} we have a generating series
\begin{equation}
\label{eq:243} \begin{split}
\mathsf{T}(u) & \coloneqq \bigg(\sum_{\alpha \in \mathfrak{a}} e_{\alpha,1}u\bigg) + \mathfrak{L}^+_{{\rm tot}}(\omega_{0,1}) \\
& = \sum_{\alpha \in \mathfrak{a}} \bigg( e_{\alpha,1} u + \sum_{k > 0} (k - r_{\alpha})!^{(r_{\alpha})}\,F_{0,1}\big[\begin{smallmatrix} \alpha \\ -k \end{smallmatrix}\big]\,e_{\alpha,\hat{k}}\,u^{\overline{k}}\bigg) \\
& \eqqcolon \sum_{\alpha \in \mathfrak{a}} \sum_{k > 0} \mathsf{T}_{\alpha,k}\,e_{\alpha,\hat{k}}\,u^{\overline{k}}\,.
\end{split}
\end{equation} 
Equivalently, the definition of $\mathsf{T}$ means that we have the expansion
\[
\mathsf{Loc}_{\alpha}(y - y(\alpha)) = - \frac{\zeta}{r_\alpha}+  \sum_{k > 0} \frac{k!^{(r_\alpha)} \mathsf{T}_{\alpha,k + r_{\alpha}}}{r_\alpha}\,\zeta^{k}\dd \zeta
\]
when $z \rightarrow \alpha$ in the local coordinate such that $x = x(\alpha) + \zeta^{r_\alpha}$. Due to the regularly admissible condition, we indeed have $\mathsf{T}(u) \in \mc{O}(u)$ and
\[
\mathsf{T}_{\alpha,r_{\alpha} + 1} = 1 + F_{0,1}\big[\begin{smallmatrix} \alpha \\ -(r_{\alpha} + 1) \end{smallmatrix}\big] \neq 1\,,
\]
so we can use it to act on Witten $r_{\alpha}$-spin class. Recall that we have a second generating series $\mathsf{B}(u,v)$ from \cref{Buvdef}.

\begin{definition}
We let these generating series act on the direct sum of Witten classes to define
\[
\Omega^{\mathcal{S}}_{g,n} \coloneqq \bigg[\widehat{\mathsf{B}}\widehat{\mathsf{T}}\cdot \Big(\bigoplus_{\alpha \in \mathfrak{a}} W^{r_{\alpha}{\rm spin}}\Big)\bigg]_{g,n} \colon V^{\otimes n} \rightarrow CH^*(\overline{\mathcal{M}}_{g,n})\,.
\]
\end{definition}

\begin{theorem}
\label{interrepreg} If $\mathcal{S} = (C,x,y,\omega_{0,2})$ is a regularly admissible spectral curve equipped with a fundamental bidifferential of the second kind $\omega_{0,2}$ and zero crosscap form, then for $2g - 2 + n > 0$
\[
(\mathfrak{L}^{-}_{{\rm tot}})^{\otimes n}(\omega_{g,n}) = \int_{\overline{\mathcal{M}}_{g,n}} \frac{\Omega^{\mathcal{S}}_{g,n}}{\prod_{i = 1}^n (1 - \epsilon_i\psi_i)}\,,
\] 
where $\epsilon_i$ is the variable in the $i$th Laplace transform.
\end{theorem}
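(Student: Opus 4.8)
The plan is to prove the representation by reducing it to the known case of a single ramification point --- namely Witten $r$-spin theory --- and then transporting the result through the algebraic deformations that encode $\omega_{0,1}$ and $\omega_{0,2}$. First I would recall that, for a regularly admissible spectral curve, the admissibility hypotheses force $C$ to be smooth with $|\tilde{\mathfrak{a}}_\alpha| = 1$, $s_\alpha = r_\alpha+1$ and $\dd\tilde y(\alpha)\neq 0$ at each $\alpha\in\mathfrak{a}$; hence near each $\alpha$ the germ of the spectral curve is, after the change of local trivialisation $y\rightsquigarrow y - y(\alpha)$, a deformation of the standard $r_\alpha$-spin curve $x=\zeta^{r_\alpha}$, $y=-\zeta/r_\alpha$ of \cref{rspinthmTR}. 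By \cref{mainth2}, the $\omega_{g,n}$ are computed by the topological recursion \eqref{TRsum}, which is local at the ramification points; equivalently, their partition function is that of the Airy structure of \cref{Airytocurve}, built from the basic Airy structures by the dilaton shift $\hat T$ and the change of polarisation $\hat\Phi$ of \cref{sec:AllDilatonPolarize}. The first step is thus to observe that when $\omega_{0,1}=\bigoplus_\alpha(-\zeta_\alpha/r_\alpha)\,\dd\zeta_\alpha$ and $\omega_{0,2}=\omega_{0,2}^{\mathrm{std}}$, the statement is exactly \cref{rspinthmTR} together with \eqref{rspinw}, after applying the Laplace isomorphism $\mathfrak L^-_{\mathrm{tot}}$ of \cref{Laplacedef}: $\mathfrak L^-$ converts $(d_ir+a_i)!^{(r)}\zeta_i^{-(d_ir+a_i+1)}\dd\zeta_i$ into $\epsilon_i^{d_i+a_i/r}$, which $E_\alpha$ repackages as $e_{\alpha,a_i}^*\epsilon_i^{d_i}$, so that the generating series over $d_i\ge 0$ becomes $\Omega^{\mathcal S}_{g,n}/\prod_i(1-\epsilon_i\psi_i)$ with $\Omega^{\mathcal S}_{g,n}=\bigoplus_\alpha W^{r_\alpha\mathrm{spin}}_{g,n}$ (no $\mathsf T$- or $\mathsf B$-action).

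The second step is to turn on the deformations. Conjugating by $\hat\Phi$ changes the polarisation, i.e. replaces $\omega_{0,2}^{\mathrm{std}}$ by $\omega_{0,2}$; on the intersection-theoretic side this is precisely the stable-graph action $\widehat{\mathsf B}$ of \cref{sumstab} with the kernel $\mathsf B(u,v)=(\mathfrak L^+_{\mathrm{tot}})^{\otimes 2}(\omega_{0,2}-\omega_{0,2}^{\mathrm{std}})$ of \cref{Buvdef}. Conjugating by $\hat T$ performs the dilaton shifts $J^\alpha_{-k}\to J^\alpha_{-k}+F_{0,1}[\begin{smallmatrix}\alpha\\-k\end{smallmatrix}]$, which on the moduli side is the translation action $\widehat{\mathsf T}$ of \cref{Sectrans} with the series $\mathsf T(u)$ of \eqref{eq:243} --- here the regular-admissibility inequality $\mathsf T_{\alpha,r_\alpha+1}=1+F_{0,1}[\begin{smallmatrix}\alpha\\-(r_\alpha+1)\end{smallmatrix}]\neq 1$ is exactly what guarantees the translation action is well-defined, and the renormalisation $\mathsf T\rightsquigarrow\tilde{\mathsf T}$ in \cref{Sectrans} reflects the dilaton equation for the Witten class, matching the way the shift in the degree $\Pi(i,k)=rk+s(i-1)$ slot ($s=r+1$, index $1$) feeds into the recursion. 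So the claim follows once one checks the \emph{compatibility} of these two operator-theoretic conjugations with the two geometric operations: more precisely, that the $\omega_{g,n}$ of \eqref{wgnd}, expanded in the basis $\dd\xi^\mu_{-k}$ attached to $\omega_{0,2}$, are the $\mathfrak L^-_{\mathrm{tot}}$-Laplace transforms of $\int\,[\widehat{\mathsf B}\widehat{\mathsf T}\cdot(\bigoplus_\alpha W^{r_\alpha\mathrm{spin}})]_{g,n}/\prod_i(1-\epsilon_i\psi_i)$.

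For this compatibility, the cleanest route is to induct on $2g-2+n$ using the topological recursion \eqref{TRsum} on the $\omega$ side and the known ``local'' form of the $\widehat{\mathsf B}\widehat{\mathsf T}$-action on the other --- that is, show both sides satisfy the same recursion with the same initial data $(\omega_{0,1},\omega_{\frac12,1}=0,\omega_{0,2})$. On the geometric side one expands $\Omega^{\mathcal S}_{g,n}$ along the boundary: splitting off the last marked point produces either a smaller-Euler-characteristic class (matching the terms of \eqref{TRsum} with $Z$ containing old correlators), a separating or non-separating node (the $\mathsf B$-edge carrying $\omega_{0,2}$ through the kernel $\int_\mu^z\omega_{0,2}(\cdot,z_0)$ of \eqref{TRkernel}), or a $\mathsf T$-leaf (the $\omega_{0,1}$ factors in the denominator $\prod_{z'}(\tilde y(z')-\tilde y(z))\dd\tilde x$). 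The residue extraction in \eqref{TRsum} is then matched, branchpoint by branchpoint and to all orders in the local coordinate, against the Laplace-transformed boundary-pushforward via the two formulas $\mathfrak L^+_\rho[\dd(\phi/\dd\tilde x)](v)=r_\rho^{-1}[v^{-1}\mathfrak L^+_\rho[\phi](v)]_+$ (used already in \cref{compactfacto}) and the Laplace images of $\dd\xi^\mu_{-k}$. The main obstacle I anticipate is exactly this last bookkeeping step: disentangling the action of $\widehat{\mathsf T}$ (which changes the insertions) from that of $\widehat{\mathsf B}$ (which changes the graph) and verifying that the $r$-fold factorials $(k-r_\alpha)!^{(r_\alpha)}$ appearing in $\mathsf T$, $\mathsf B$ and $\mathfrak L^\pm$ conspire correctly --- i.e. that the Givental-type ``translation + stable-graph'' reconstruction of the Witten class really is computed by the Bouchard--Eynard residues on the deformed curve. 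This is essentially the statement that, for regularly admissible local data, the Airy-structure partition function of \cref{Airytocurve} coincides with the $[\widehat{\mathsf B}\widehat{\mathsf T}\cdot(\bigoplus W^{r_\alpha\mathrm{spin}})]$ generating series; once the single-ramification-point, smooth case is established (where it is Milanov's theorem plus the Givental formalism for $r$-spin theory, as recorded around \cref{rspinthmTR}), the extension to several branchpoints and to the deformed $(\omega_{0,1},\omega_{0,2})$ is the routine ``direct sum + dilaton shift + polarisation'' argument of \cite[Section 4.1.5]{BBCCN18} transported through the Laplace isomorphisms.
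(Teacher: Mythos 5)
Your outline has the right skeleton: start from the base case of independent Witten $r_\alpha$-spin curves ($\mathsf T = 0$, $\mathsf B = 0$), which after applying $\mathfrak L^{-}_{\mathrm{tot}}$ gives the claim there (this is exactly \cref{rspinthmTR} combined with \eqref{rspinw}), and then interpret the conjugations by $\hat T$ and $\hat\Phi$ on the Airy-structure side as the translation action $\widehat{\mathsf T}$ and stable-graph action $\widehat{\mathsf B}$ on the cohomological side. That much matches the paper exactly.

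Where you diverge is the verification mechanism, and the route you propose is considerably harder than what the paper actually does. You suggest inducting on $2g-2+n$ and matching the Bouchard--Eynard residue formula \eqref{TRsum} term-by-term against a boundary decomposition of the Givental class, ``branchpoint by branchpoint and to all orders in the local coordinate''. The paper never touches the residue formula \eqref{TRsum} in this proof. Instead, it works purely at the level of partition-function coefficients. Conjugating by $\hat T$ is literally a shift of times $x_{\alpha,k} \to x_{\alpha,k} + \mathsf T_{\alpha,k}/k!^{(r_\alpha)}$, which at the level of coefficients reads
\[
F^1_{g,n}\big[\begin{smallmatrix}\alpha_1 & \cdots & \alpha_n \\ k_1 & \cdots & k_n\end{smallmatrix}\big] = \sum_{m\geq 0}\frac{1}{m!}\sum_{\boldsymbol\beta,\boldsymbol l} F^0_{g,n+m}\big[\begin{smallmatrix}\alpha_1 & \cdots & \alpha_n & \beta_1 & \cdots & \beta_m \\ k_1 & \cdots & k_n & l_1 & \cdots & l_m\end{smallmatrix}\big]\,\prod_{i=1}^m \frac{\mathsf T_{\beta_i,l_i}}{l_i!^{(r_{\beta_i})}}\,,
\]
and by plugging in the base-case expression \eqref{Fgn000} for $F^0$ one sees the $r$-fold factorials $l_i!^{(r_{\beta_i})}$ cancel against the denominators (this is precisely why the factorials are built into the definition of $\mathsf T$ in \eqref{eq:243}), leaving exactly the translation action $[\widehat{\mathsf T}\cdot W^0]_{g,n}$ paired with the surviving $\psi$-class insertions. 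The same argument with $\hat\Phi$ produces the sum over stable graphs and identifies it with $\widehat{\mathsf B}$. No residues are matched, no induction on Euler characteristic is run, and the compatibility you flag as the ``main obstacle'' simply dissolves once one writes down the two sides at the level of $F_{g,n}$. Your final sentence --- calling this the ``routine direct sum + dilaton shift + polarisation argument'' --- is in fact the correct characterisation of the paper's proof, but your middle paragraph had already committed to the harder, unnecessary residue-matching path. You should drop that intermediate step and run the coefficient-level comparison directly.
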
 
\begin{proof}
The spectral curve
\[ 
x_0\big(\begin{smallmatrix} \alpha \\ z \end{smallmatrix}\big) = z^{r_{\alpha}}\,,\qquad y_0\big(\begin{smallmatrix} \alpha \\ z \end{smallmatrix}\big) = -\frac{z}{r},\qquad \omega_{0,2}\big(\begin{smallmatrix} \alpha_1 & \alpha_2 \\ z_1 & z_2 \end{smallmatrix}\big) = \frac{\delta_{\alpha_1,\alpha_2}\dd z_1\dd z_2}{(z_1 - z_2)^2}
\] 
has $\mathsf{T} = 0$ and $\mathsf{B} = 0$, hence $\Omega^{\mathcal{S}} = \bigoplus_{\alpha \in \mathfrak{a}} W_{g,n}^{r_{\alpha}{\rm spin}}$. It is obtained by taking independent copies of \eqref{zetarspin} indexed by $\alpha \in \mathfrak{a}$. So, its correlators equal to \eqref{rspinw}, that is
\begin{equation}
\label{apply0} \omega_{g,n}^0\big(\begin{smallmatrix} \alpha_1 & \cdots & \alpha_n \\ z_1 & \cdots & z_n \end{smallmatrix}\big) = \sum_{\substack{d_1,\ldots,d_n \geq 0 \\ l_i \in [r_{\alpha_i}]}}  \bigg(\int_{\overline{\mathcal{M}}_{g,n}} W_{g,n}^0\big(\otimes_{i = 1}^n e_{\alpha_i,l_i}\big)\prod_{i = 1}^n \psi_i^{d_i}\bigg) \prod_{i = 1}^n \frac{(d_i r_{\alpha_i} + l_i)!^{(r_{\alpha_i})}\,\dd z_i}{z_i^{d_i r_{\alpha_i} + l_i + 1}}\,,
\end{equation}
where $W_{g,n}^0 = \bigoplus_{\alpha \in \mathfrak{a}} W^{r_{\alpha}{\rm spin}}$. Applying $\bigotimes_{i = 1}^n \mathfrak{L}^{-}_{\alpha_i}$ to \cref{apply0} cancels the factorials and replaces $z^{-(d_ir_{\alpha_i} + l_i + 1)}\dd z$ with $e_{\alpha_i,l_i}^* \epsilon^{d_i}$. The sum over $d_i$ can then be performed and this entails the claim in this special case.

Applying \cref{mainth2} to the special case, let $Z_0$ be the partition function of the Airy structure corresponding this special case. Its coefficients are
\begin{equation}
\label{Fgn000} F_{g,n}^0\big[\begin{smallmatrix} \alpha_1 & \cdots & \alpha_n \\ k_1 & \cdots & k_n \end{smallmatrix}\big] \coloneqq \bigg(\int_{\overline{\mathcal{M}}_{g,n}} W_{g,n}\big(\otimes_{i = 1}^n e_{\alpha_i,\hat{k}_i}\big) \prod_{i = 1}^n \psi_i^{\overline{k}_i}\bigg) \prod_{i = 1}^n k_i!^{(r_{\alpha_i})}\,.
\end{equation}
It corresponds to $F_{0,1}^{0}\big[\begin{smallmatrix} \alpha \\ -(r_{\alpha} + 1) \end{smallmatrix}\big] = -1$.

Applying now \cref{mainth2} to a general regularly admissible spectral curve, the partition function of the corresponding Airy structure is
\[
Z = \exp\left( \frac{1}{2\hbar} \sum_{\substack{\alpha,\beta \in \mathfrak{a} \\ k,l > 0}} F_{0,2}\big[\begin{smallmatrix} \alpha & \beta \\ -k & -l \end{smallmatrix}\big]\,\frac{J_{k}^{\alpha}J_{l}^{\beta}}{k\,l}\right)\cdot Z_1,\qquad Z_1 = \exp\left(\frac{1}{\hbar} \sum_{\substack{\alpha \in \mathfrak{a} \\ k > 0}}  \Big(F_{0,1}\big[\begin{smallmatrix} \alpha \\ -k \end{smallmatrix}\big] + \delta_{k,r_{\alpha} + 1}\Big)\frac{J_{k}^{\alpha}}{k}\right)\cdot Z_0\,.
\]
The operation taking $Z_0$ to $Z_1$ is the shift of times:
\[
x_{\alpha,k} \rightarrow x_{\alpha,k} + \frac{F_{0,1}\big[\begin{smallmatrix} \alpha \\ -k \end{smallmatrix}\big] + \delta_{k,r_{\alpha} + 1}}{k} = x_{\alpha,k} + \frac{\mathsf{T}_{\alpha,k}}{k!^{(r_{\alpha})}}\,,
\]
taking into account \eqref{eq:243}. In terms of the coefficients $F_{g,n}^1$ of $Z_1$, we get
\begin{equation*}
\begin{split}
F_{g,n}^1\big[\begin{smallmatrix} \alpha_1 & \cdots & \alpha_n \\ k_1 & \cdots & k_n \end{smallmatrix}\big] & = \sum_{m \geq 0} \frac{1}{m!} \sum_{\substack{\beta_1,\ldots,\beta_n \in \mathfrak{a} \\ l_1,\ldots,l_m > 0}} F_{g,n + m}^{0}\big[\begin{smallmatrix} \alpha_1 & \cdots & \alpha_n & \beta_1 & \cdots & \beta_m \\ k_1 & \cdots & k_n & l_1 & \cdots & l_m \end{smallmatrix}\big]\,\prod_{i = 1}^m \frac{\mathsf{T}_{\beta_i,l_i}}{l_i!^{(r_{\beta_i})}} \\
& = \bigg(\int_{\overline{\mathcal{M}}_{g,n}} \big[\widehat{\mathsf{T}}\cdot  W^0\big]_{g,n}\big(\otimes_{i = 1}^n e_{\alpha_i,\hat{k}_i}\big) \prod_{i = 1}^n \psi_i^{\overline{k}_i}\bigg) \prod_{i = 1}^n k_i!^{(r_{\alpha_i})}
\end{split}
\end{equation*}
by using \eqref{Fgn000} and comparing with \cref{Sectrans}. Note the cancellation of the factorials that was the motivation for our definition of $\mathsf{T}$ in \eqref{eq:243}. Applying $\mathfrak{L}^-_{{\rm tot}}$ to the corresponding correlators kills the remaining factorials, we would prove the desired formula in the case $\mathsf{B} = 0$.

For general $\mathsf{B}$ we should still take $Z_1$ to $Z$, and this amounts at the level of coefficients to summing over stable graphs. Comparing with \cref{sumstab}, one can check in a similar way that the factorials completely disappear, so that the sum over $d_i$ becomes the geometric series in the Laplace variable $\epsilon_i$.
\end{proof} 
 
\medskip

\subsubsection{The conjectural $(r,s)$ classes}
\label{rsclconjsec}

\medskip

The basic case of irregularly admissible smooth spectral curves with one ramification point is
\[
x = z^r,\qquad y = -\frac{z^{s - r}}{r}\,,\qquad \omega_{0,2}(z_1,z_2) = \frac{\dd z_1\dd z_2}{(z_1 - z_2)^2}\,,
\]
with $r \geq 2$, $s \in [r - 1]$ and $r = \pm 1\,\,{\rm mod}\,\,s$. It correspond to the Airy structures of \cref{thm:W_gl_Airy_Coxeter}, already obtained in \cite{BBCCN18}. The coefficients of its partition function have the following basic properties from \Cref{lem:homogen} and \cref{prop:symm_cond_omega03}:
\begin{itemize}
\item[$(i)$] Homogeneity: $F_{g,n}[p_1,\ldots,p_n] = 0$ unless we have $\sum_{m = 1}^{n} p_m = s(2g - 2 + n)$.
\item[$(ii)$] Dilaton equation: $F_{g,n + 1}[s,p_1,\ldots,p_n] = s(2g - 2 + n)F_{g,n}[p_1,\ldots,p_n]$ for $2g - 2 + n > 0$.
\item[$(iii)$] Special values:
\begin{equation*} 
\begin{split}
F_{1,1}[p] & = \frac{r^2 - 1}{24}\,\delta_{p,s}\,, \\ 
F_{0,3}[p_1,p_2,p_3] & = c\,p_1p_2p_3\delta_{p_1 + p_2 + p_3,s}\,,
\end{split} 
\end{equation*} 
where $c$ is as in \cref{defcmu}.
\item[$(iv)$] If $s = 1$, then $F_{0,n} = 0$ for any $n \geq 3$. Indeed, as $c = 0$ in this case, we have $F_{0,3} = 0$ and as it is the only initial data needed for topological recursion \eqref{theFGNsum} in genus $0$, all the genus $0$ sector vanishes.
\item[$(v)$] $F_{g,n}[p_1,\ldots,p_n] = 0$ whenever there exists $i \in [n]$ such that $r|p_i$.
\end{itemize}
\begin{remark} There is no string equation, as the operator $H_{i = 2,k = -1}$ is not part of the Airy structure.
\end{remark}

Mimicking \cref{rspinthmTR} and taking into account these properties, we are led to propose the following conjecture --- in a slightly more precise form than \cite[Section 6.2]{BBCCN18}.

\begin{conjecture}
\label{conjrsclass} For each $r \geq 2$ and $s \in [r - 1]$ such that $r = \pm 1\,\,{\rm mod}\,\,s$, there exists cohomology classes $w_{g,n}^{(r,s)}(a_1,\ldots,a_n) \in CH^*(\overline{\mathcal{M}}_{g,n})$ indexed by $a_i \in [r]$ and $g,n \in \mathbb{Z}_{\geq 0}$ such that $2g - 2 + n > 0$, so that
\begin{itemize}
\item[$(o)$] for any $d_i \geq 0$
\begin{equation}
\label{rsclassfgn}F_{g,n}[d_1r + a_1,\ldots,d_nr + a_n] = \prod_{i = 1}^n (d_ir + a_i)!^{(r)} \int_{\overline{\mathcal{M}}_{g,n}} w_{g,n}^{(r,s)} (a_1, \dotsc, a_n) \prod_{i = 1}^n \psi_i^{d_i}\,.
\end{equation}
\item[$(i)$] $w_{g,n}^{(r,s)}(\mathbf{a})$ has pure Chow codimension
\[
2\bigg( \frac{ \sum_{i = 1}^n a_i - s(2g - 2 + n)}{r} + (3g - 3 + n)\bigg)\,.
\]
\item[$(ii)$] denoting $\pi\,:\,\overline{\mathcal{M}}_{g,n + 1} \rightarrow \overline{\mathcal{M}}_{g,n}$ the forgetful morphism, we have the dilaton equation
\begin{equation}
\label{rsdilaton}\psi_{n + 1} \pi^*\big(w_{g,n}^{(r,s)}(\mathbf{a})\big) = w_{g,n + 1}^{(r,s)}(s,\mathbf{a})\,.
\end{equation}
\item[$(iii)$] we have the special values
\begin{equation*}
\begin{split}
w_{0,3}^{(r,s)}(a_1,a_2,a_3) & = c \delta_{a_1 + a_2 + a_3,s} \mathbf{1} \in H^0(\overline{\mathcal{M}}_{0,3})\,, \\
w_{1,1}^{(r,s)}(a) & = \delta_{a,s}\,\frac{r^2 - 1}{s}\psi_1 \in H^2(\overline{\mathcal{M}}_{1,1})\,.
\end{split}
\end{equation*}
\item[$(iv)$] $w_{0,n}^{(r,s = 1)} = 0$ for all $n$.
\item[$(v)$] $w_{g,n}^{(r,s)}(a_1,\ldots,a_n) = 0$ whenever there exists $i \in [n]$ such that $a_i = r$.
\end{itemize}
\end{conjecture}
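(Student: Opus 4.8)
The plan is to imitate the strategy behind \cref{interrepreg}, but for the basic irregularly admissible curve $x = z^{r}$, $y = -z^{s-r}/r$, $\omega_{0,2} = \dd z_1\,\dd z_2/(z_1-z_2)^2$, and then to propagate the result to every admissible spectral curve of type $(r,s)$ by Givental decomposition (dilaton shifts, change of polarisation and direct sums, cf.\ the opening of \cref{part:appl} and \cref{theregucnu}). Because the $F_{g,n}$ are characterised by the $\mc{W}(\mf{gl}_r)$ constraints of \cref{thm:W_gl_Airy_Coxeter} and hence uniquely determined by the topological recursion on this curve (\cref{mainth2}), the real content of $(o)$ is to exhibit a \emph{geometric} class whose $\psi$-intersection numbers satisfy the \emph{same} recursion. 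The natural candidate for $w_{g,n}^{(r,s)}$ is a Chiodo/FJRW-type class: a normalised pushforward to $\overline{\mc{M}}_{g,n}$ of a Chern class of $R^{\bullet}\pi_{*}\mc{L}$ over a moduli space of $r$-th roots of an appropriate twist of $\omega_{\log}$, interpolating between Witten's $r$-spin class (the $s = r+1$ anchor, \cref{rspinthmTR}) and the Norbury $\Theta$-type class (for $(r,s) = (2,1)$, cf.\ \cite{Nor17,CheNor19}). First I would fix the precise twist, the power of $r$ absorbed in the genus, and the overall constant by forcing \eqref{rsclassfgn} to hold for $(g,n) = (0,3)$ and $(1,1)$, where both sides are elementary; this simultaneously verifies $(iii)$ and its consistency with \eqref{theFGNsum} and \cref{defcmu}.

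The decisive step is $(o)$ for larger $2g - 2 + n$. One route is the Givental--Teleman reconstruction used to prove \cref{interrepreg}: express the generating series of the candidate intersection numbers through the translation $\mathsf{T}$ and the $R$-matrix $\mathsf{B}$ attached to the curve by the Laplace transforms of \cref{SecLap}, and match the output with $\omega_{g,n}$ via \cref{mainth2}. This applies verbatim only if the underlying cohomological field theory is semisimple and has a flat unit, which is exactly what fails here --- most visibly for $s = 1$, where $(iv)$ forces the genus-zero part to degenerate. I would therefore favour a second route, modelled on the Faber--Shadrin--Zvonkine proof of the $r$-spin $\mc{W}$-constraints \cite{FSZ10} used by Milanov \cite{Mil16}: prove \emph{directly} that the partition function assembled from the candidate classes is annihilated by the operators ${}^{\sigma}H_{i,k}$ of \cref{thm:W_gl_Airy_Coxeter} for the parameters $(r,s)$, and then invoke uniqueness of the solution of an Airy structure (\cref{thm:partition_fct_exists}). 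Concretely this requires a projection formula relating $\psi$- and $\kappa$-classes for $w_{g,n}^{(r,s)}$ together with pushforward relations along the boundary and forgetful morphisms --- the analogue of the FSZ relations in the $r$-spin case, with codimension bookkeeping dictated by $(i)$.

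The remaining properties should then follow. Dimension $(i)$ is read off the Riemann--Roch computation for the universal $r$-th root and is consistent with the homogeneity of $F_{g,n}$ (\cref{lem:homogen}) under $p_i = d_i r + a_i$. The dilaton equation $(ii)$, \eqref{rsdilaton}, follows from compatibility of $R^{\bullet}\pi_{*}\mc{L}$ with the forgetful map together with the $\psi$-class comparison on $\overline{\mc{M}}_{g,n+1} \to \overline{\mc{M}}_{g,n}$, and it matches the dilaton equation for $F_{g,n}$ already established in \cref{lem:homogen}. For $(iv)$, once $(o)$ holds the vanishing $F_{0,n} = 0$ for $s = 1$ --- which is itself forced because $c = 0$ in \cref{defcmu}, hence $F_{0,3} = 0$ and \eqref{theFGNsum} propagates the vanishing through all of genus zero --- immediately makes every $\psi$-pairing of $w_{0,n}^{(r,1)}$ vanish; upgrading this to vanishing of the class itself needs the construction to produce something manifestly boundary-supported in genus zero, which is automatic for the $\Theta$-type candidate.

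The hard part is unquestionably the existence of the class with all of $(i)$--$(iv)$: outside $s = r+1$ and $(r,s) = (2,1)$ no construction is known, and the absence of semisimplicity and of a flat unit removes the standard Givental--Teleman leverage. I expect the real obstacles to be (a) pinning down the correct $r$-th-root moduli problem so that the genus dependence of the normalisation in \eqref{rsclassfgn} is uniform, and (b) proving the boundary and forgetful-map relations that make the $\mc{W}(\mf{gl}_r)$ constraints hold for the intersection-theoretic generating function --- step (b) being where a genuinely new geometric input would be needed, beyond repackaging \cref{mainth2}.
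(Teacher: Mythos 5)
This statement is an open conjecture in the paper, not a theorem: the paper offers no proof, and records that it is known to hold only at the two anchor points $s = r+1$ (Witten's $r$-spin class, via \cref{rspinthmTR}) and $(r,s) = (2,1)$ (Norbury's $\Theta$-class, \cite{Nor17,CheNor19}). Your proposal is honest about this and never claims to close the gap: you correctly observe that the existence of $w_{g,n}^{(r,s)}$ with all of $(i)$--$(iv)$ is precisely what is missing for the remaining admissible types, and that the lack of semisimplicity and of a flat unit for the putative CohFT removes the Givental--Teleman leverage that would otherwise reduce $(o)$ to a reconstruction statement.

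The scaffolding you set up is sound and consistent with what the paper actually establishes. The characterisation of the $F_{g,n}$ as the unique output of the $\mc{W}(\mf{gl}_r)$-Airy structure of \cref{thm:W_gl_Airy_Coxeter} and the topological recursion of \cref{mainth2}; the consistency check of $(iii)$ against $F_{0,3}$ and $F_{1,1}$ (using $s!^{(r)} = s$ for $s \in [r]$ and $\int_{\overline{\mc{M}}_{1,1}}\psi_1 = 1/24$, which indeed returns $(r^2-1)/24$); the comparison of $(i)$ and $(ii)$ with the homogeneity and dilaton properties of \cref{lem:homogen}; and the observation that $(iv)$ reflects $c=0$ in \cref{defcmu} propagating through the genus-zero recursion \eqref{theFGNsum} are all correct. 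You also correctly flag that the analogue of \cref{interrepreg} via translation and sums over stable graphs only transports an existence statement from the basic $(r,s)$ curve to general spectral curves: it does not create the class at the basic curve. Your second route, an FSZ/Milanov-style direct verification of the $\mc{W}$-constraints for an intersection-theoretic partition function, is the natural plan, but as you say, it presupposes pushforward and boundary relations for the hypothetical class that are themselves the hard geometric content. Where your proposal necessarily stops short is exactly where the paper stops: producing the class. The paper handles this by assuming \cref{conjrsclass} in the remainder of \cref{Ssmonfg}, and your proposal, despite its careful roadmap, does not and cannot do more. As a review of the state of the problem and of where genuinely new input is needed, it is accurate; as a proof, it is (by your own admission) incomplete, and the gap is the one the paper itself leaves open.
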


The conjecture is proved in the case $(r,s) = (2,1)$ by Norbury \cite{Nor17} and $s = r -1$ for general $r \geq 2$ by Chidambaram, Garcia-Failde and Giacchetto \cite{CGFG22}. They construct $w_{g,n}^{(r,r - 1)}$ as a cohomology class obtained by pushforward from the moduli space of $r$-spin curves; in their work it is denoted $\Theta_{g,n}^{r}$, or simply $\Theta_{g,n}$ if $r = 2$.

Assuming \cref{conjrsclass} holds, we will deform \eqref{rsclassfgn} to generalise \cref{interrepreg} for any smooth admissible spectral curve. Before this, we need to discuss action of translation and sums over stable graphs on the $(r,s)$ class.

\medskip

\subsubsection{Deformation of the $(r,s)$ classes}
\label{defrsdefrs}

\medskip

We introduce
\[
W^{(r,s)}_{g,n}\colon \begin{array}{lll} (V^{r{\rm spin}})^{\otimes n} & \longrightarrow & H^*(\overline{\mathcal{M}}_{g,n}) \\ \bigotimes_{i = 1}^n e_{a_i} & \longmapsto & w_{g,n}^{(r,s)}(a_1,\ldots,a_n) \end{array}\,.
\]
If we have a formal series $\mathsf{T}(u) \in V^{r{\rm spin}}[\![u]\!]$, we will see that under certain conditions we can define analogously to \cref{Sectrans} a new family of cohomology classes
\[
\big[\hat{\mathsf{T}} W^{(r,s)}\big]_{g,n}\colon (V^{r{\rm spin}})^{\otimes n} \rightarrow H^*(\overline{\mathcal{M}}_{g,n}) \,.
\]
Assuming $\mathsf{T}_{s} \neq 1$, this is done in terms of the modified generating series
\[
\tilde{\mathsf{T}}(u) \coloneqq \frac{\mathsf{T}(u) - \mathsf{T}_{s}e_{s}}{1 - \mathsf{T}_{s}} = \sum_{\substack{d \geq 0 \\ a \in [r - 1] \\ rd + a > s}}\frac{\mathsf{T}_{rd + a}e_{a}u^{d}}{1 - \mathsf{T}_{s}}\,,
\] 
via the formula
\begin{equation}
\label{TWRDSF}\big[\hat{\mathsf{T}} W^{(r,s)}\big]_{g,n}(-) = \sum_{m \geq 0} \frac{1}{m!} (\pi_m)_* W_{g,n + m}^{(r,s)}\big(- \otimes \tilde{\mathsf{T}}(\psi_{n + 1}) \otimes \cdots \otimes \tilde{\mathsf{T}}(\psi_{n + m})\big)\,,
\end{equation}
where $\pi_m\,:\,\overline{\mathcal{M}}_{g,n + m} \rightarrow \overline{\mathcal{M}}_{g,n}$ is the forgetful morphism. As in \eqref{Sectrans}, the handling of $\mathsf{T}_{s}$ is tailored to be compatible with the dilaton equation \eqref{rsdilaton}.

\begin{lemma}
\label{lemdefTrs} Assume that $\mathsf{T}_{s} \neq 1$ and $\mathsf{T}_{a} = 0$ for $a < s$. Then \cref{TWRDSF} is well-defined, i.e. for any evaluation on an element of $(V^{r{\rm spin}})^{\otimes n}$ the sum on the right-hand side is finite.
\end{lemma}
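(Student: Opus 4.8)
The plan is to follow verbatim the well-posedness argument used for the translation action on Witten $r$-spin classes in \cref{Sectrans} (where the role of the dimension formula \eqref{dimrspin} is now played by the conjectural codimension statement $(i)$ of \cref{conjrsclass}). Fix $g,n$ and an evaluation of \eqref{TWRDSF} on $\bigotimes_{i=1}^n e_{a_i}$ with $a_i \in [r-1]$. First I would expand each factor $\tilde{\mathsf{T}}(\psi_{n+j})$ in the $m$-th summand, so that this summand becomes a sum of terms of the shape
\[
\Bigl(\prod_{j=1}^m \frac{\mathsf{T}_{rd_j+b_j}}{1-\mathsf{T}_s}\Bigr)\,(\pi_m)_*\!\left( W^{(r,s)}_{g,n+m}\!\left(\bigotimes_{i=1}^n e_{a_i}\otimes \bigotimes_{j=1}^m e_{b_j}\right)\prod_{j=1}^m \psi_{n+j}^{d_j}\right),
\]
with $d_j\geq 0$ and $b_j\in[r-1]$. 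The two hypotheses enter precisely here: $\mathsf{T}_a=0$ for $a<s$ together with the removal of the $e_s$-component in passing from $\mathsf{T}$ to $\tilde{\mathsf{T}}$ (legitimate because $\mathsf{T}_s\neq 1$) force every surviving pair $(d_j,b_j)$ to satisfy $r d_j + b_j \geq s+1$.

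Next I would do the codimension bookkeeping. By property $(i)$ of \cref{conjrsclass} the class $W^{(r,s)}_{g,n+m}(\bigotimes_i e_{a_i}\otimes\bigotimes_j e_{b_j})$ is pure of complex codimension $\tfrac{1}{r}\bigl(\sum_i a_i+\sum_j b_j - s(2g-2+n+m)\bigr)+(3g-3+n+m)$; multiplying by $\prod_j \psi_{n+j}^{d_j}$ adds $\sum_j d_j$, and pushing forward along the forgetful morphism $\pi_m\colon\overline{\mathcal{M}}_{g,n+m}\to\overline{\mathcal{M}}_{g,n}$, of relative dimension $m$, subtracts $m$. Using $\sum_j b_j + r\sum_j d_j=\sum_j(rd_j+b_j)\geq (s+1)m$, the complex codimension of the resulting class on $\overline{\mathcal{M}}_{g,n}$ is at least
\[
\frac{\sum_i a_i - s(2g-2+n)}{r} + (3g-3+n) + \frac{m}{r}\,.
\]
Since $\dim\overline{\mathcal{M}}_{g,n}=3g-3+n$, this exceeds the dimension of the target as soon as $m > s(2g-2+n)-\sum_i a_i$, so $(\pi_m)_*$ of every such term vanishes for all but finitely many $m$; and for each of the remaining $m$, the requirement that the codimension not exceed $3g-3+n$ gives $\sum_j(rd_j+b_j)\leq sm-\sum_i a_i + s(2g-2+n)$, which bounds all $d_j$ and $b_j$, leaving only finitely many terms. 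Hence \eqref{TWRDSF} is a finite sum.

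The proof involves no serious computation; the genuine point is that the argument is \emph{entirely conditional} on the purity and codimension assertion $(i)$ of \cref{conjrsclass}, which is established only in the case $(r,s)=(2,1)$. So the statement of \cref{lemdefTrs}, and all the subsequent deformation constructions for the $(r,s)$ classes, hold only modulo that conjecture; this is the "obstacle," and it is one of principle rather than of technique. Along the way I would also record, as in \cref{Sectrans}, the two standard facts used silently: $(\pi_m)_*$ lowers cohomological degree by $2m$ because $\pi_m$ has relative dimension $m$, and a class of cohomological degree strictly larger than $2\dim\overline{\mathcal{M}}_{g,n}$ vanishes.
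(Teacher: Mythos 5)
Your proof is correct and follows essentially the same route as the paper: expand $\tilde{\mathsf{T}}$, use the purity/codimension assertion $(i)$ of \cref{conjrsclass} together with the hypothesis $\mathsf{T}_a=0$ for $a<s$ to get the lower bound $rd_j+b_j\ge s+1$, and then observe that the codimension on $\overline{\mathcal{M}}_{g,n}$ grows linearly in $m$, forcing the $(\pi_m)_*$-pushforward to vanish for all but finitely many $m$. In fact your phrasing (computing the codimension and comparing it to $\dim\overline{\mathcal{M}}_{g,n}=3g-3+n$) matches the companion argument in \cref{Sectrans} more closely than the paper's own proof of the lemma, which tracks the complementary quantity and requires it to be nonnegative; and your explicit remark that for each surviving $m$ only finitely many $(d_j,b_j)$ can contribute makes the finiteness claim fully watertight.
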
 
\begin{proof}
The argument is similar to \cref{Sectrans}. If we evaluate on $\bigotimes_{i = 1}^n e_{a_i}$, due to $(i)$ the complex codimension of the summand proportional to $\prod_{j = 1}^m \mathsf{T}_{d_jr + b_j}$ with $d_jr + b_j \geq s + 1$ is
\[
\frac{1}{r}\bigg(s(2g - 2 + n + m) - \sum_{i = 1}^n a_i - \sum_{j = 1}^m (d_jr + b_j)\bigg) \leq \frac{1}{r}\bigg(s(2g - 2 + n) - \sum_{i = 1}^n a_i - m\bigg)\,,
\]
which for fixed $a_1,\ldots,a_n$ is negative for $m$ is large enough, forcing this summand to vanish.
\end{proof}

On the other hand, the action of any $\mathsf{B} \in V^{r{\rm spin}}[\![u,v]\!]$ via sums over stable graphs is well-defined since it always involves finite sums.

\medskip

\subsubsection{Intersection theory for admissible smooth local spectral curves}

\medskip

Let $\mathcal{S} = (C,x,y,\omega_{0,2})$ be an admissible smooth spectral curve equipped with a fundamental bidifferential of the second kind. Recall from \cref{Laplacesection} the definition of the vector space $V$ and the generating series $\mathsf{T}$ and $\mathsf{B}$ that can be associated to $\mathcal{S}$.
\begin{definition}
We let them act on the direct sum of $(r_{\alpha},s_{\alpha})$-classes to define
\[
\Omega_{g,n}^{\mathcal{S}} = \bigg[\hat{\mathsf{B}}\hat{\mathsf{T}}\bigg(\bigoplus_{\alpha \in \mathfrak{a}} W^{(r_{\alpha},s_{\alpha})}\bigg)\bigg]_{g,n}\,:\,V^{\otimes n} \rightarrow CH^*(\overline{\mathcal{M}}_{g,n})\,.
\]
Here we suppose the $(r,s)$ class is a Chow class to put them on the same footing as the Witten $r$-spin classes, and denote $W^{(r,r+1)} \coloneqq W^{r{\rm spin}}$ for uniformity. The admissibility condition for irregular ramification points matches the condition of \cref{lemdefTrs} so the definition is well-posed. 
\end{definition}

\begin{theorem}
\label{thmint2}Assume \cref{conjrsclass} holds, and let $\mathcal{S} = (C,x,y,\omega_{0,2})$ be an admissible smooth spectral curve equipped with a fundamental bidifferential of the second kind $\omega_{0,2}$ and zero crosscap form. Then, for $2g - 2 + n > 0$
\[
(\mathfrak{L}_{{\rm tot}}^{-})^{\otimes n}(\omega_{g,n}) = \int_{\overline{\mathcal{M}}_{g,n}} \frac{\Omega_{g,n}^{\mathcal{S}}}{\prod_{i = 1}^n (1 - \epsilon_i \psi_i)}\,,
\]
where $\epsilon_i$ is the variable of the $i$th Laplace transform.
\end{theorem}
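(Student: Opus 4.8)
The plan is to reduce the general admissible smooth spectral curve to the ``basic'' local model at each ramification point, exactly as in the proof of \cref{interrepreg}, and then transport the known intersection-theoretic representation of the basic $(r,s)$-correlators through the Givental-type operations (translation and sums over stable graphs) that build the general curve out of the basic pieces. First I would recall that \cref{mainth2} identifies $\omega_{g,n}$ with a decomposition of the partition function of the Airy structure of \cref{Airytocurve}, which — by \cref{sec:AllDilatonPolarize} and the discussion in the introduction to \cref{part:appl} — is obtained from the direct sum over $\alpha\in\mathfrak{a}$ of the basic Airy structures of type $(r_\alpha,s_\alpha)$ (those of \cref{thm:W_gl_Airy_Coxeter}, with the convention $W^{(r,r+1)}=W^{r{\rm spin}}$) by first a dilaton shift determined by $\omega_{0,1}-\sum_\alpha(\text{leading model term})$ and then a change of polarisation determined by $\omega_{0,2}-\omega_{0,2}^{{\rm std}}$. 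So the proof splits into three layers: (a) the basic case, where $\mathsf{T}=0$ and $\mathsf{B}=0$; (b) turning on $\mathsf{T}$; (c) turning on $\mathsf{B}$.

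\textbf{Key steps.} For layer (a): the basic local spectral curve $x=z^{r_\alpha}$, $y=-z^{s_\alpha-r_\alpha}/r_\alpha$, $\omega_{0,2}=\omega_{0,2}^{{\rm std}}$ has $\mathsf{T}=0$, $\mathsf{B}=0$, hence $\Omega^{\mathcal S}_{g,n}=\bigoplus_{\alpha}W^{(r_\alpha,s_\alpha)}_{g,n}$; its correlators are, by assumption (\cref{conjrsclass}(o)) together with the analogue of \eqref{rspinw} and \cref{rspinthmTR}, given by $\sum_{d_i,l_i}\big(\int \bigoplus_\alpha W^{(r_\alpha,s_\alpha)}\,\prod\psi_i^{d_i}\big)\prod (d_ir_{\alpha_i}+l_i)!^{(r_{\alpha_i})}z_i^{-(d_ir_{\alpha_i}+l_i+1)}\dd z_i$; applying $(\mathfrak{L}^-_{{\rm tot}})^{\otimes n}$ then cancels the $r$-fold factorials and replaces $z_i^{-(d_ir_{\alpha_i}+l_i+1)}\dd z_i$ by $e^*_{\alpha_i,l_i}\epsilon_i^{d_i}$, so that the $d_i$-sum produces exactly $\int \Omega^{\mathcal S}_{g,n}/\prod_i(1-\epsilon_i\psi_i)$. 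For layer (b): the dilaton shift takes the basic partition function $Z_0$ to $Z_1$ by a shift of times $x_{\alpha,k}\to x_{\alpha,k}+\mathsf{T}_{\alpha,k}/k!^{(r_\alpha)}$ (by definition \eqref{eq:243} of $\mathsf{T}$ and \cref{Tudefn}); expanding in the shift and matching term by term with \cref{defrsdefrs}, using \eqref{rsclassfgn} for $Z_0$, shows $F^1_{g,n}=\big(\int [\hat{\mathsf{T}}W]_{g,n}\,\prod\psi_i^{\overline{k}_i}\big)\prod k_i!^{(r_{\alpha_i})}$ — the $r$-fold factorials cancel precisely because the normalisation of $\tilde{\mathsf{T}}$ in \cref{defrsdefrs} was chosen to match the dilaton equation \cref{rsdilaton} and the exclusion of the leading coefficient $\mathsf{T}_{\alpha,s_\alpha}$; here the admissibility of the ramification point guarantees $\mathsf{T}_{\alpha,s_\alpha}\ne 1$ (equivalently $t_\alpha\ne 0$, i.e.\ $F_{0,1}[\begin{smallmatrix}\alpha\\-s_\alpha\end{smallmatrix}]\ne 0$ with $s_\alpha$ the minimal index) and $\mathsf{T}_{\alpha,a}=0$ for $a<s_\alpha$, which is exactly the hypothesis of \cref{lemdefTrs}. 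For layer (c): the change of polarisation takes $Z_1$ to $Z$ by conjugation with $\hat\Phi$, and the standard combinatorial expansion of this conjugation (the same one used in the proof of \cref{interrepreg}, coming e.g.\ from \cite[Section 2]{BBCCN18}) organises $F_{g,n}$ as a sum over stable graphs with $\mathsf{B}$ on edges and $[\hat{\mathsf{T}}W]$ at vertices, i.e.\ as $\int \Omega^{\mathcal S}_{g,n}\prod\psi_i^{\overline{k}_i}\prod k_i!^{(r_{\alpha_i})}$, matching \cref{sumstab}; once more the $r$-fold factorials drop out, and applying $(\mathfrak{L}^-_{{\rm tot}})^{\otimes n}$ turns the sums over $d_i$ into the geometric series $1/(1-\epsilon_i\psi_i)$. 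Throughout, one invokes \cref{compactfacto} only in the global-curve case to identify $\mathsf{B}$ with an $R$-matrix factorisation, but this is not needed for the bare statement; the admissibility hypothesis rules out the pathological cases (i)--(iv) of \cref{lemYA} so that \cref{mainth2} applies.

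\textbf{Main obstacle.} The conceptually load-bearing input is \cref{conjrsclass} itself, which is assumed, so within the stated hypotheses the genuine work is the bookkeeping that the Givental-type operations on the \emph{Airy-structure} side (dilaton shift, change of polarisation) correspond \emph{exactly} to the operations on the \emph{cohomology} side (translation action $\hat{\mathsf T}$ of \cref{defrsdefrs}, stable-graph sums $\hat{\mathsf B}$ of \cref{sumstab}), including the precise normalisations involving $\mathsf{T}_{\alpha,s_\alpha}$ and the cancellations of the factors $k!^{(r_\alpha)}$, $r_\alpha^{\lfloor k/r_\alpha\rfloor}$. This is where property $(ii)$ of \cref{conjrsclass} (the dilaton equation for the $(r,s)$-classes) is essential: it is the cohomological shadow of the dilaton equation \cref{lem:homogen} for the partition function, and without it the translation action would not be well-defined in the way \cref{lemdefTrs} demands. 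The part I expect to be fiddliest is verifying that the stable-graph expansion genuinely produces \emph{Chow} (not merely cohomology) classes when $(r,s)\ne(r,r+1)$ — this is why the theorem is phrased ``Assume \cref{conjrsclass} holds'' with the class supposed to be a Chow class — and that the compatibility relations of \cref{compactfactospin} are consistent with (though not required for) this representation. The remaining steps are the routine computations of the two Laplace transforms on basis monomials (already done in \cref{LaplacePlus,LaplaceMinus}) and the interchange of the finite sums, which I would not grind through here.
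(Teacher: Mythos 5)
Your proposal is correct and follows essentially the same route as the paper, which itself omits the details and simply refers to the proof of \cref{interrepreg} (for the regular points) and to the deformation of $(r,s)$-classes in \cref{defrsdefrs} (for the irregular points): establish the basic case from \cref{conjrsclass}$(o)$ and the analogue of \cref{rspinthmTR}, then match the dilaton shift with the translation $\hat{\mathsf{T}}$ (admissibility guaranteeing the hypotheses of \cref{lemdefTrs}) and the change of polarisation with the stable-graph sum $\hat{\mathsf{B}}$, with the $r$-fold factorials cancelling against the Laplace transforms.
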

\begin{proof}
The proof, which relies on the correspondence of \cref{mainth2} --- already proved in \cite{BBCCN18} --- is similar to that of \cref{interrepreg}, so we omit the details: the regular ramification points are treated by \cref{interrepreg} itself, and the irregular ramification points using \cref{defrsdefrs}.
\end{proof}

 \medskip
\subsubsection{Intersection numbers on the flat basis}
\label{Ssmonfg}
\label{ELSVwgnsec}

\medskip

\begin{definition}
\label{presequde}Let $\mathcal{S} = (C,x,y,\omega_{0,2})$ be a spectral curve equipped with a fundamental bidifferential of the second kind. We say that $\mathcal{S}$ is nearly compact if there exists a smooth compact connected curve $\bar{C}$ containing $\tilde{C}$ such that $\dd \tilde{x}$ admits an analytic continuation to a meromorphic $1$-form on $\bar{C}$ without zeroes on $\bar{C} \setminus \tilde{C}$, and $\omega_{0,2}$ admits an analytic continuation as a fundamental bidifferential of the second kind on $\bar{C}$.
\end{definition}
This definition is tailored to allow $\tilde{x}$ to be multivalued (for instance due to logarithmic singularities) on $\bar{C}$. Many examples of non compact but nearly compact spectral curves can be found in mirror symmetry and in Hurwitz theory. In the nearly compact case the results of Sections~\ref{higheret}-\ref{primsec} apply to $\bar{C}$: we have the factorisation property for $\bar{\mathsf{B}}$ via the $\mathsf{R}$-matrix (\cref{compactfactospin}) and we can consider the basis of primary and descendants differentials on $\bar{C}$ (\cref{ABuv}).

\begin{definition}
We introduce the covector of primary differentials:
\[
\dd \Xi(z) = \bigg(\sum_{\substack{\mu \in \tilde{\mathfrak{a}} \\ a \in [r_{\mu}]}} \dd \xi^{\,\mu}_{-a}(z)\,e^*_{\mu,a}\bigg) \in H^0\big(\tilde{C},K_{\tilde{C}}(*\tilde{\mathfrak{a}})\big) \otimes V^*
\]
\end{definition}

\begin{corollary}
\label{ELSVwgnCor} Consider the situation described in \cref{interrepreg} or \cref{thmint2}, and assume that the spectral curve $\mathcal{S}$ is nearly compact. We have the following decomposition on the basis of primary differentials and their descendants:
\[
\omega_{g,n}(z_1,\ldots,z_n)  = \int_{\overline{\mathcal{M}}_{g,n}} \Omega^{\mathcal{S}}_{g,n} \otimes  \bigg(\bigotimes_{i = 1}^n \dd\Xi(z_i)\bigg)  \cdot \bigg(\bigotimes_{i = 1}^n \frac{\mathsf{R}(\psi_i) \star \eta^{\vee}}{1 - \mathcal{D}_i \psi_i}\bigg)\,.
\]
Here, $\mathcal{D}_i$ is the operator $\mathcal{D} \colon \varphi \mapsto - \dd\big(\varphi/\dd \tilde{x}\big)$ acting from the right the $1$-forms in $\dd \Xi(z_i)$, and $\cdot$ means that the first tensor factor from the $i$-th factor on the right must be inserted by the $i$-th position in the multilinear map $\Omega_{g,n}^{\mathcal{S}}$, while the second tensor factor must be inserted in the linear form $\dd \Xi_i(z_i)$.
\end{corollary}
\begin{proof} 
Under the assumptions we already know that the coefficients of decomposition of the $\omega_{g,n}$ on the $\xi$-basis are
 \begin{equation}
\label{Fgnome} F_{g,n}\big[\begin{smallmatrix} \mu_1 & \cdots & \mu_n \\ k_1 & \cdots & k_n \end{smallmatrix}\big] = \int_{\overline{\mathcal{M}}_{g,n}} \Omega^{\mathcal{S}}_{g,n}\big((e_{\mu_i,\hat{k}_i})_{i = 1}^n\big)  \prod_{i = 1}^n k_i!^{(r_{\mu_i})} \psi_i^{\overline{k_i}} \,.
 \end{equation}
 We rather want to express them on the $\widehat{\xi}$-basis. With the change of basis \eqref{changeba} this yields:
 \begin{equation*}
 \begin{split}
 & \quad \omega_{g,n}(z_1,\ldots,z_n) \\
 & = \sum_{\substack{\mu_1,\ldots,\mu_n \in \tilde{\mathfrak{a}} \\ k_1,\ldots,k_n > 0}} \sum_{\substack{\rho_1,\ldots,\rho_n \in \tilde{\mathfrak{a}} \\ l_1,\ldots,l_n > 0}} \bigg(\int_{\overline{\mathcal{M}}_{g,n}} \Omega^{\mathcal{S}}_{g,n}\big(e_{\mu_1,\hat{k}_1} \otimes \cdots \otimes e_{\mu_n,\hat{k}_n}\big)\,\prod_{i = 1}^n k_i!^{(r_{\mu_i})} \psi_i^{\overline{k_i}} \bigg)\,A\big[\begin{smallmatrix} \mu_i & \rho_i \\ -k_i & -l_i \end{smallmatrix}\big]\, \dd \widehat{\xi}^{\,\rho_i}_{-l_i}(z_i) \\ 
 & = \int_{\overline{\mathcal{M}}_{g,n}} \Omega^{\mathcal{S}}_{g,n} \otimes \bigg(\bigotimes_{i = 1}^n \dd\Xi(z_i)\bigg)  \cdot \bigotimes_{i = 1}^n \bigg(\sum_{\substack{\mu_i,\rho_i \in \tilde{\mathfrak{a}} \\ k_i,l_i > 0}}  A\big[\begin{smallmatrix} \mu_i & \rho_i \\ -k_i & -l_i \end{smallmatrix}\big]\,k_i!^{(r_{\mu_i})} e_{\mu_i,\hat{k}_i} \otimes e_{\rho_i,l_i}\, \psi_i^{\overline{k_i}} \mathcal{D}^{\overline{l}_i}_i\bigg)  \\
 & = \int_{\overline{\mathcal{M}}_{g,n}} \Omega^{\mathcal{S}}_{g,n} \otimes  \bigg(\bigotimes_{i = 1}^n \dd\Xi(z_i)\bigg)  \cdot \bigotimes_{i = 1}^n \bar{\mathsf{A}}(\psi_i,\mathcal{D}_i)\,,
 \end{split} 
 \end{equation*} 
where $\cdot$ is the contraction of multilinear maps with tensors in the order prescribed by the statement of the claim, and $\mathcal{D}_i$ acts from the right on the $1$-forms in $\dd \Xi(z_i)$. The situations of \cref{interrepreg} or \cref{thmint2} correspond to smooth spectral curves, for which \eqref{Fgnome} vanishes whenever one of the $k_i$ is divisible by $r_{\mu_i}$ --- see $(v)$ in \cref{rsclconjsec}. Subsequently, we can replace $\bar{A}(\psi_i,\mathcal{D}_i)$ by its projection ${}^{{\rm ss}}\bar{\mathsf{A}}(\psi_i,\mathcal{D}_i)$. We conclude by using the second formula of \cref{ABuv}.
\end{proof} 
 
 \medskip
 
 \subsubsection{TR-ELSV formula and quasi-polynomiality}
\label{SecTRELSV}
\medskip

In the situation of \Cref{ELSVwgnCor},  let $P \subset \bar{C}$ be a set of points which are not zeroes of $\dd \tilde{x}$. Given $p \in P$, we introduce:
\[
d_p \coloneqq \big(- {\rm ord}_p \,\dd \tilde{x}\big) \in \mathbb{Z}_{\geq 0},\qquad c_p = \Res_{p} \dd \tilde{x}\,.
\]
and we choose a local coordinate $X$ near $p$ such that $X(p) = 0$ and 
\[
\dd \tilde{x} = \left\{\begin{array}{lll} c_p\,\frac{\dd X}{X} && {\rm if}\,\,d_p = 1\,, \\[3pt] X^{-d_p}\dd X && {\rm if}\,\,d_p \neq 1\,. \end{array}\right.
\] 
\begin{definition}
We introduce for $(p,\ell) \in P \times \mathbb{Z}_{> 0}$ and $\mu \in \tilde{\mathfrak{a}}$, $k \in [r_{\mu}]$ the quantities
\begin{equation*}
\begin{split}
S\big[\begin{smallmatrix} p & \mu \\ \ell & -k \end{smallmatrix}\big] & = \Res_{z = p} \frac{\dd\xi_{-k}^{\mu}(z)}{X(z)^{\ell}} = \Res_{z = p} \Res_{z' = \mu} \frac{k\,\omega_{0,2}(z,z')}{X(z)^{\ell}\zeta(z')^{k}} \,, \\
\boldsymbol{S}\big[\begin{smallmatrix} p \\ \ell \end{smallmatrix}\big] & = \sum_{\substack{\mu \in \tilde{\mathfrak{a}} \\ k \in [r_{\mu}]}} S\big[\begin{smallmatrix} p & \mu \\ \ell & -k \end{smallmatrix}\big]\, e_{\mu,k}^* \,.
\end{split} 
\end{equation*}
They give the all-order series expansion (indicated with the symbol $\approx$) of the primary differentials near $p$:
\[
\dd \Xi (z) \mathop{\approx}_{z \rightarrow p} \,\,\dd\bigg(\sum_{\ell > 0} \boldsymbol{S}\big[\begin{smallmatrix} p \\ \ell \end{smallmatrix}\big]\,\frac{X^{\ell}}{\ell}\bigg)\,.
\]  
and are closely related to the $S$-matrix in Givental formalism.
\end{definition}

\begin{lemma}
\label{lemDpi} With the convention that $\mathcal{D}$ acts from the right on primary differentials, we have for $p \in P$:
\[
\dd \Xi (z) \,\frac{1}{1 - \mathcal{D}\psi}\,\, \mathop{\approx}_{z \rightarrow p}\,\,\left\{\begin{array}{lll} \dd\bigg(\sum_{\ell > 0} \dfrac{\boldsymbol{S}\big[\begin{smallmatrix} p \\ \ell \end{smallmatrix}\big]}{1 + c_p^{-1} \ell \psi}\,\frac{X^{\ell}}{\ell}\bigg) && {\rm if}\,\,d_p = 1\,, \\ [12pt] \dd\bigg(\sum_{\ell > 0} \boldsymbol{S}\big[\begin{smallmatrix} p \\ \ell \end{smallmatrix}\big]\, G_{\frac{\ell}{d_p-1}}\big((d_p-1)X^{d_p - 1} \psi\big)\,\frac{X^{\ell}}{\ell}\bigg) && {\rm otherwise}\,. \end{array} \right.
\]
where in the second line we introduced
\begin{equation*}
G_a(x) \coloneqq \sum_{m\geq 0} (-1)^m \prod_{j=0}^{m-1}(j+a) ~ x^m = \int_{0}^{\infty} \frac{e^{-t/x}}{(1 + t)^{a}}\,\frac{\dd t}{x} = \frac{e^{1/x}\Gamma\big(1-a,\tfrac{1}{x}\big)}{x^{a}}\,,
\end{equation*}
and $\Gamma(1-a,x)$ is the incomplete Gamma function.
\end{lemma}
\begin{proof}
Let us write for uniformity $\dd \tilde{x} = \kappa_p X^{-d_p} \dd X$ with $\kappa_p = c_p \neq 0$ if $d_p = 1$ and $\kappa_p = 1$ otherwise. The operator $\mathcal{D}$ can be expressed in terms of $X$:
\[
\mathcal{D}\varphi = - \dd\bigg(\frac{X^{d_p}\varphi}{\kappa_p \dd X}\bigg)\,.
\]
In particular for $\ell > 0$
\[
\mathcal{D} (X^{\ell - 1}\dd X) = -\frac{(d_p - 1 + \ell)}{\kappa_p}\,X^{d_p - 1 + \ell - 1}\dd X\,,
\]
and by induction for $m \geq 0$
\[
\mathcal{D}^m (X^{\ell - 1}\dd X) = \left\{\begin{array}{lll} (-1)^m(\ell/c_p)^{m}\,X^{\ell - 1}\dd X && {\rm if}\,\,d_p = 1\,, \\ (-1)^m(d_p - 1)^m \prod_{j = 1}^{m} \Big(j + \frac{\ell}{d_p - 1}\Big)\,X^{m(d_p - 1) + \ell - 1}\dd X\,. && {\rm otherwise}\,. \end{array}\right.
\]
We then multiply by $\psi^m$ and sum over $m \geq 0$.
\end{proof}

Since $p$ is not a zero of $\dd \tilde{x}$, for $2g - 2 + n > 0$ the form $\omega_{g,n}(z_1,\ldots,z_n)$ is holomorphic at $z_i = p \in P$, and so we have an all-order series expansion of the form
\[
\omega_{g,n}(z_1,\ldots,z_n) \approx \dd_1 \cdots \dd_n\bigg(\sum_{\ell_1,\ldots,\ell_n > 0} H_{g,n}\big[\begin{smallmatrix} p_1 & \cdots & p_n \\ \ell_1 & \cdots & \ell_n \end{smallmatrix}\big]\,\prod_{i = 1}^n \frac{X_i^{\ell_i}}{\ell_i} \bigg)\,.
\]
Equivalently:
\[
H_{g,n}\big[\begin{smallmatrix} p_1 & \cdots & p_n \\ \ell_1 & \cdots & \ell_n \end{smallmatrix}\big] = \Res_{z_1 = p_1} \cdots \Res_{z_n = p_n} \frac{\omega_{g,n}(z_1,\ldots,z_n)}{\prod_{i = 1}^n X(z_i)^{\ell_i}}\,.
\]
When we only expand near simple poles of $\dd \tilde{x}$, the formula for $H_{g,n}$ becomes particularly simple, due to \cref{lemDpi}.
 
\begin{corollary}[TR-ELSV formula]
\label{TRESLVcor}Under the assumptions of \Cref{ELSVwgnCor} and assuming that $P \subset \bar{C}$ is a set of simple poles of $\dd \tilde{x}$, we have for $2g - 2 + n > 0$, $p_1,\ldots,p_n \in P$ and $\ell_1,\ldots,\ell_n > 0$
\[
H_{g,n}\big[\begin{smallmatrix} p_1 & \cdots & p_n \\ \ell_1 & \cdots & \ell_n \end{smallmatrix}\big] = \int_{\overline{\mathcal{M}}_{g,n}} \bigg(\Omega^{\mathcal{S}}_{g,n} \otimes \bigotimes_{i = 1}^n \frac{\boldsymbol{S}\big[\begin{smallmatrix} p_i \\ \ell_i \end{smallmatrix}\big]}{1 +  c_{p_i}^{-1}\ell_i \psi_i}\bigg) \cdot \bigg(\bigotimes_{i = 1}^n \mathsf{R}(\psi_i) \star \eta^{\vee}\bigg)\,. 
\] 
where $c_p = \Res_p \dd \tilde{x}$.
\end{corollary}
\begin{remark}
Apart from the non-trivial dependence in $\ell_i$ contained in $\boldsymbol{S}$, the dependence in the $\ell_i$ is polynomial, coming from the geometric series expansion of
\[
\frac{1}{1 + c_{p_i}^{-1}\ell_i\psi_i}
\]
from which only finitely many terms contribute. This type of behavior is called ``quasi-polynomiality''. The assumptions of \Cref{TRESLVcor} in particular apply for spectral curves of Hurwitz theory and Gromov--Witten theory, where the $H_{g,n}$ store the desired enumerative invariants.
\cref{lemDpi} shows that, in the situation where enumerative invariants are stored in the expansion of  $\omega_{g,n}$ near points which are not simple poles of $\dd \tilde{x}$, the quasi-polynomiality is a priori destroyed and the structure is more complicated: the coefficient of $\dd\big(X^{\ell_i}(z_i)/\ell_i\big)$ will receive contributions from $\boldsymbol{S}\big[\begin{smallmatrix} p_i \\ \ell_i' \end{smallmatrix}\big]$ for all $\ell_i' \leq \ell_i$ such that $(d_p - 1)$ divides $(\ell_i - \ell_i')$.  
\end{remark}

\medskip

\section{Open intersection numbers}
 \label{sec:open}
 
 \medskip
 
In this section, we shall propose precise conjectures about open $r$-spin intersection numbers using the partition function of the Airy structure with twist $ \sigma $ of cycle type $ (r-1,1)$ with no dilaton shift attached to the fixed point, as obtained in \cref{thm:W_gl_Airy_arbitrary_autom}. Before this, we review the relation between this partition function for $r = 2$ and the open intersection theory, and Safnuk~\cite{Saf16}'s topological recursion for it. The latter has peculiar features related to the reducibility of the spectral curve, our general approach shines a new light on this. These relations depend on some foundational conjectures in open intersection theory scattered in the literature and that we make explicit.

\medskip

\subsection{Review of open intersection theory}
\label{OpenIntKP}

\medskip

The enumerative geometry of open Riemann surfaces was developed by Pandharipande--Solomon--Tessler in genus $0$ in \cite{PST15}, and its extension to all genera was announced by Solomon and Tessler. The upshot is that for $g,n,b,m \geq 0$ such that
\begin{equation}
\label{chibarE}-\overline{\chi} \coloneqq 2\overline{g} - 2 + 2n + m > 0\,,\qquad \overline{g} \coloneqq 2g + b - 1\,,
\end{equation}
there is a moduli space $\mathcal{M}_{g,n;b,m}$ parametrizing Riemann surfaces of genus $g$ with $b$ (unlabelled) boundary components, $n$ labelled interior marked points and $m$ labelled boundary marked points, equipped with spin structure and a ``grading''. It is a real orbifold of dimension
\[
D \coloneqq 6g - 6 + 3b + 2n + m\,,
\]
and admits several connected components indexed by the distribution of the boundary marked points on the boundary components. Note that $\overline{g}$ and $\overline{\chi}$ are respectively the genus and the Euler characteristic of the surface doubled along its boundary. This moduli space admits a compactification $\overline{\mathcal{M}}_{g,n;b,m}$ on which one can seek to define and calculate intersection numbers. Denoting $\mathbb{L}_i$ the cotangent line bundle at the $i$th interior marked point, and according to the statement of \cite[Theorem 1.1]{ABT17} --- whose proof by Solomon and Tessler has not yet appeared --- it is possible to define
\begin{equation}
\label{withoutb} \big \langle \tau_{d_1}^{\circ} \cdots \tau_{d_n}^{\circ} (\tau_0^{\partial})^{m} \big\rangle_{g,n;b,m} = 2^{1 - g - \frac{b + m}{2}} \int_{\overline{\mathcal{M}}_{g,n;b,m}} e\Big(\bigoplus_{i = 1}^n \mathbb{L}_{i}^{\oplus d_i},\mathrm{s}\Big) \in \mathbb{Q}\,,
\end{equation}
whenever $d_1,\ldots,d_n \geq 0$ are such that
\[
D = \sum_{i = 1}^n 2d_i\,.
\]
Here, $e$ is the Euler class relative to some boundary condition $\mathrm{s}$, and \cite{PST15,ST} give suitable boundary conditions so that the number is unambiguously defined.

\begin{remark} For $b = k = 0$, the moduli space $\overline{\mathcal{M}}_{g,n;0,0}$ coincides with the moduli space of spin structures, which is a $\mathbb{Z}_2$-orbifold cover of the Deligne--Mumford moduli space of pointed Riemann surfaces $\overline{\mathcal{M}}_{g,n}$. In fact, $\overline{\mathcal{M}}_{g,n;0,0}$ has two connected components, distinguished by the parity of the corresponding spin structures. On the component containing the even (resp. odd) spin structures the degree of the cover is  $2^{g - 1}(2^g + 1)$ (resp $2^{g - 1}(2^g - 1)$), and the virtual fundamental class (= the Witten $2$-spin class) is the fundamental class of the even component minus the one of the odd components. After pushforward this yields a multiple of the fundamental class of $\overline{\mathcal{M}}_{g,n}$ by a factor of $\frac{1}{2}\big(2^{g - 1}(2^g + 1) - 2^{g - 1}(2^g - 1)\big)= 2^{g - 1}$, where the extra $\frac{1}{2}$ comes from the $\mathbb{Z}_2$-orbifold structure of $\overline{\mathcal{M}}_{g,n;0,0}$. So, the conventional factor of $2$ in \eqref{withoutb} is such that we retrieve for $b = k = 0$ the usual $\psi$-class intersection 
\[ 
\big\langle \tau_{d_1}^{\circ} \cdots \tau_{d_n}^{\circ} \big\rangle_{g,n;0,0} = \int_{\overline{\mathcal{M}}_{g,n}} \prod_{i = 1}^n\psi_i^{d_i},\qquad \psi_i = c_1(\mathbb{L}_i).
\]
We thank Ran Tessler for a remark on this point, see also \cite[Lemma 6.20 and Remark 6.21]{Tes15}.
\end{remark}

It is expected that one can also define geometrically boundary descendants, that we would like to denote
\[
\big \langle \tau_{d_1}^{\circ}\cdots \tau_{d_n}^{\circ} \tau_{k_1}^{\partial} \cdots \tau_{k_m}^{\partial} \big\rangle_{g,n;b,m}  \in \mathbb{Q}\,,
\]
where now $d_1,\ldots,d_n,k_1,\ldots,k_m \geq 0$ satisfy the dimension constraint 
\begin{equation}
\label{dimcon}
D = \sum_{i = 1}^n 2d_i + \sum_{j = 1}^{m} 2k_j\,.  
\end{equation}
or equivalently
\begin{equation}
\label{3deim} 3(2g - 2 + n + m+ b) = \sum_{i = 1}^n (2d_i +1) + \sum_{j = 1}^m (2k_j +2)\,.
\end{equation}
One could then consider the generating series
\[
Z^{{\rm open}}\big[Q;t^\circ;t^{\partial}\big] = \exp\left(\sum_{\substack{g,b,m,n \geq 0 \\ \overline{\chi} < 0}} \!\!\!\! \frac{\hbar^{g - 1 + \frac{b}{2}}\,Q^{b}}{m!n!} \big \langle \tau_{d_1}^{\circ} \cdots \tau_{d_n}^{\circ} \tau_{k_1}^{\partial} \cdots \tau_{k_m}^{\partial} \big\rangle_{g,n;b,m} \prod_{i = 1}^n (2d_i + 1)!!\,t^{\circ}_{d_i} \prod_{j = 1}^m (2k_j + 2)!!\,t^{\partial}_{k_j}\right)\,,
\]
where $b$ is the number determined from $g,n,m,d_i,k_i$ via \eqref{dimcon}.

A combinatorial model for the intersection numbers \eqref{withoutb} --- i.e. without boundary descendants --- has been proposed in \cite{ABT17}, refining \cite{Tes15} where only their sum over all possible $b$s was obtained. As a consequence, their generating series has a matrix integral description. It turns out this matrix integral allows naturally for the insertion of extra parameters, thus defining a generating series of the form
\[  
Z^{{\rm ABT}}\big[Q;t^{\circ};t^{\partial}\big] = \exp\left(\sum_{\substack{g,b,m,n \geq 0 \\ \overline{\chi} < 0}} \!\!\!\! \frac{\hbar^{g - 1 + \frac{b}{2}} Q^{b}}{m!n!} \big\langle \tau_{d_1}^{\circ} \cdots \tau_{d_n}^{\circ} \tau_{k_1}^{\partial} \cdots \tau_{k_m}^{\partial} \big\rangle_{g,n;b,m}^{{\rm ABT}} \prod_{i = 1}^n (2d_i + 1)!!\, t^{\circ}_{d_i} \prod_{j = 1}^m (2k_j + 2)!!\,t^{\partial}_{k_j}\right)\,.  
\]
We will not need its precise definition, which can be found in \cite[Equation 3.14 and Lemma 3.2]{ABT17}.

The Kontsevich-Penner matrix model is another important character of the story. It is defined by
\[
\mathcal{Z}^{{\rm KP},N}(\Lambda) = \int_{\mathcal{H}_{N}} \frac{\dd H}{c_{\Lambda,N,\hbar}}\,\exp\bigg\{\hbar^{-\frac{1}{2}} {\rm Tr}\bigg(\frac{H^3}{6} - \frac{H^2\Lambda}{2}\bigg)\bigg\} \bigg(\frac{\det\big(\Lambda)}{\det(\Lambda - H)}\bigg)^{Q}\,, \\
\]
where $\mathcal{H}_{N}$ is the space of hermitian matrices of size $N$ and $c_{\Lambda,N,\hbar}$ is some normalizing constant. It determines a unique generating series of the form
\[
Z^{{\rm KP}}\big[Q;t^{\circ};t^{\partial}\big] = \exp\left(\sum_{\substack{g,b,m,n \geq 0 \\ \overline{\chi} < 0}} \!\!\!\! \frac{\hbar^{g - 1 + \frac{b}{2}}\,Q^{b}}{m!n!} \big\langle \tau_{d_1}^{\circ} \cdots \tau_{d_n}^{\circ} \tau_{k_1}^{\partial} \cdots \tau_{k_m}^{\partial} \big\rangle_{g,n;b,m}^{{\rm KP}} \prod_{i = 1}^{n} (2d_i + 1)!!\,t_{d_i}^{\circ}  \prod_{j = 1}^m (2k_j + 2)!!\,t_{k_j}^{\partial}\right)
\]
such that for any $N \geq 1$
\[
\mathcal{Z}^{{\rm KP}}_{N}(\Lambda) = Z^{{\rm KP}}\bigg[Q;\Big(t_d^{\circ} = \frac{\hbar^{\frac{1}{2}}\,{\rm Tr}\,\Lambda^{-(2d + 1)}}{2d + 1}\Big)_{d \geq 0}\,;\,\Big(t_k^{\partial} = \frac{\hbar^{\frac{1}{2}}\,{\rm Tr}\,\Lambda^{-(2k + 2)}}{2k + 2}\Big)_{k \geq 0}\bigg]\,.
\]

\begin{remark} In contrast with the aforementioned works, we have included in our definition of the generating series a variable $\hbar$ which is redundant because of the dimension constraint \eqref{dimcon}, and we have not written separately the contribution of the closed Riemann surfaces ($b = 0$). Our definition can be obtained from \cite{ABT17} by the following substitutions:
\begin{equation*}
\begin{split}
&  N \rightarrow Q,\qquad H \rightarrow \hbar^{-\frac{1}{6}}H,\qquad \Lambda \rightarrow \hbar^{-\frac{1}{6}}\Lambda,\qquad t_{d} \rightarrow \hbar^{\frac{d - 1}{3}}\,(2d + 1)!!\,t_{d}^{\circ},\qquad s_{k} \rightarrow \hbar^{\frac{k}{3} - \frac{1}{6}}\,(2k + 2)!!\,t_{k}^{\partial} \,,
\end{split}
\end{equation*}
under which $\tau^{o,ext}_N \rightarrow Z^{{\rm ABT}}$ and $\tau_N \rightarrow Z^{{\rm KP}}$. Our definition can be obtained from \cite{Ale15} by substituting there
\[
N \rightarrow Q,\qquad T_{2d + 1} \rightarrow \hbar^{\frac{d - 1}{3}}\,t_{2d + 1}^{\circ},\qquad T_{2k + 2} \rightarrow \hbar^{\frac{k}{3} - \frac{1}{6}}\,t_{k}^{\partial}\,.
\]
Finally, note that the definition of the Kontsevich-Penner matrix model of \cite{ABT17} can be obtained from the one in \cite{Ale15} by the substitution $\Phi \rightarrow \Lambda - H$.
\end{remark}

It is expected that these three collections of numbers coincide.
\begin{conjecture}
\label{conj1}There exists a geometric definition of the open intersection numbers with boundary descendants, and it is such that $Z^{{\rm open}} = Z^{{\rm ABT}}$, that is 
\[
\big\langle \tau_{d_1}^{\circ} \cdots \tau_{d_n}^{\circ} \tau_{k_1}^{\partial} \cdots \tau_{k_m}^{\partial} \big\rangle_{g,n;b,m} =  \big\langle \tau_{d_1}^{\circ} \cdots \tau_{d_n}^{\circ} \tau_{k_1}^{\partial} \cdots \tau_{k_m}^{\partial} \rangle_{g,n;b,m}^{{\rm ABT}}\,.
\]
\end{conjecture}
\begin{conjecture}
\label{conj2}There exists a geometric definition of the open intersection numbers with boundary descendants, and it is such that $Z^{{\rm open}} = Z^{{\rm KP}}$, that is
\[
 \big\langle \tau_{d_1}^{\circ} \cdots \tau_{d_n}^{\circ} \tau_{k_1}^{\partial} \cdots \tau_{k_m}^{\partial} \big\rangle_{g,n;b,m} = \big\langle \tau_{d_1}^{\circ} \cdots \tau_{d_n}^{\circ} \tau_{k_1}^{\partial} \cdots \tau_{k_m}^{\partial} \rangle_{g,n;b,m}^{{\rm KP}}\,.
\]
\end{conjecture}
\begin{conjecture}
\label{conj3} We have $Z^{{\rm KP}} = Z^{{\rm ABT}}$.
\end{conjecture}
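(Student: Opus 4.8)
\emph{Towards \cref{conj3}.} The strategy I would follow is to realise both $Z_{{\rm ABT}}$ and $Z_{{\rm KP}}$ as \emph{the} normalised partition function of one and the same Airy structure, and then to quote the uniqueness statement in \cref{thm:partition_fct_exists}. The Airy structure in question is the one produced by \cref{thm:W_gl_Airy_arbitrary_autom} for a permutation $\sigma$ of cycle type $(r-1,1)=(2,1)$, i.e.\ $d=2$ with $(r_1,s_1)=(2,3)$ and $(r_2,s_2)=(1,\infty)$, with a single dilaton shift on the first cycle and free parameter $Q_1=-Q_2$; through the dictionary of times recorded in the Remark above this coincides with Alexandrov's $\mathcal{W}(\mathfrak{gl}_3)$-system, whose identification with this Airy structure is \cite{BBCCN18} (and which, by \cref{mainth2,Airytocurve}, is in turn equivalent to topological recursion on the curve $y(y^2-2x)=0$, cf.\ \cref{coroKP}). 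Since \cref{thm:partition_fct_exists} guarantees a \emph{unique} solution of the form $Z=\exp(F)$ with $F$ as in \eqref{eq:free_energy_of_Airy_struct}, interpreted in the filtered setting of \cref{sec:infinit}, it suffices to show that $Z_{{\rm ABT}}$ and $Z_{{\rm KP}}$, read through this dictionary, are both solutions of that shape --- in particular that each $\log Z$ carries no unstable $(0,1)$, $(0,2)$ (or $(\tfrac{1}{2},1)$) contributions, these having been absorbed into the conjugations $\hat{T}$, $\hat{\Phi}$.

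\emph{Step A ($Z_{{\rm KP}}$).} After the substitutions of the Remark, the fact that $Z_{{\rm KP}}$ obeys the $\mathcal{W}(\mathfrak{gl}_3)$-constraints is precisely Alexandrov's theorem \cite{Ale15}; combined with the identification of \cite{BBCCN18} this yields the differential constraints $H_{\alpha;i,k}\cdot Z_{{\rm KP}}=0$ of the Airy structure above. What then remains is the normalisation check: $Z_{{\rm KP}}|_{t^{\circ}=t^{\partial}=0}=1$, and the dimension constraint \eqref{dimcon}--\eqref{3deim} forces each $F_{g,n}$ to be a (completed) polynomial of the homogeneity dictated by the grading of $E$. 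Hence $Z_{{\rm KP}}$ is that partition function.

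\emph{Step B ($Z_{{\rm ABT}}$), the main obstacle.} I see two routes. The \textbf{matrix-model route}: the integral defining $Z_{{\rm ABT}}$ in \cite[Eq.~3.14]{ABT17} differs from the Kontsevich--Penner one only by the linear change of variables $\Phi\to\Lambda-H$ noted in the Remark; performing this substitution, checking that the translation-invariant Jacobian is trivial, and matching the normalising constants $c_{\Lambda,N,\hbar}$ and the two dictionaries of times, one would obtain $Z_{{\rm ABT}}=Z_{{\rm KP}}$ directly, bypassing the Airy-structure argument altogether. The \textbf{constraints route}: establish the $\mathcal{W}(\mathfrak{gl}_3)$-constraints for $Z_{{\rm ABT}}$ itself, by upgrading the Virasoro-type/loop-equation description available for the combinatorial model in \cite{BT17,BCT18} so as to incorporate boundary descendants, and then conclude via \cref{thm:partition_fct_exists} and Step~A.

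The difficulty is of the same nature on both routes. On the matrix-model route, each integral only defines its generating series asymptotically in the size $N$ of the auxiliary matrix (and only formally in the parameter $Q$), so one must verify that the change of variables is compatible with the stabilisation in $N$ and with the passage from $N$ to a formal $Q$, and that the boundary-descendant variables --- which enter solely through the factor $(\det\Lambda/\det(\Lambda-H))^{Q}$ --- are transported correctly. On the constraints route, the obstruction is that a loop-equation description of $Z_{{\rm ABT}}$ \emph{including} boundary descendants currently rests on foundations that are themselves only partially available. In both cases, once the constraints and the normalised form are in hand, \cref{thm:partition_fct_exists} applies verbatim and forces $Z_{{\rm KP}}=Z_{{\rm ABT}}$; a useful cross-check while fixing conventions is to match the low-order data, e.g.\ the three-point function $F_{0,3}$ of \cref{prop:symm_cond_omega03}, the genus-$\tfrac{1}{2}$ two-point function $F_{\frac{1}{2},2}$ of \cref{prop:symm_cond_omega1half2}, and the parameters appearing in \cref{lem:homogen0} and the string equation for $(r_1,s_1)=(2,3)$.
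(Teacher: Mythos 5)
This statement is a \emph{conjecture} in the paper (\cref{conj3}), not a theorem: the paper explicitly leaves it open and only records partial evidence (the $Q=\pm1$ specialisation from \cite{Ale15,ABT17}, and agreement of string/dilaton equations). So there is no ``paper's own proof'' to compare against, and any valid submission here would have to actually prove something new. Your proposal correctly frames the problem --- the intended mechanism is indeed uniqueness of the normalised Airy-structure partition function (\cref{thm:partition_fct_exists}) once both sides are shown to satisfy the $\mathcal{W}(\mathfrak{gl}_3)$-constraints --- and Step~A for $Z_{\rm KP}$ is sound, resting on \cite{Ale17} and \cref{ZKPAiry}. But Step~B does not close, and in one of your two routes there is a concrete error, not just a gap.

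Your ``matrix-model route'' misreads the Remark. The paper's statement that ``the definition of the Kontsevich--Penner matrix model of \cite{ABT17} can be obtained from the one in \cite{Ale15} by the substitution $\Phi \to \Lambda - H$'' is a reconciliation of two \emph{notations for the same Kontsevich--Penner integral} $\tau_N$; it says nothing about relating $\tau_N$ to the ABT combinatorial/matrix model $\tau^{o,\mathrm{ext}}_N$ (what the paper calls $Z_{\rm ABT}$). Those are genuinely different integrals, and proving their equality \emph{is} the content of \cref{conj3} (what \cite{ABT17} establish by explicit computation only for $N=\pm1$). So a linear change of variables and a Jacobian/normalisation check does not even get started: you would be proving a tautology about $Z_{\rm KP}$ alone, not $Z_{\rm ABT}=Z_{\rm KP}$. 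Your ``constraints route'' is the right shape of argument, but as you yourself note, it requires establishing the $\mathcal{W}(\mathfrak{gl}_3)$-constraints (or equivalently the abstract loop equations of \cref{def:absloopgen}) directly for $Z_{\rm ABT}$ with boundary descendants; that is precisely the missing ingredient, and invoking \cref{thm:partition_fct_exists} before having it is circular. Until that input is supplied --- e.g.\ by deriving loop equations from the ABT combinatorial/matrix model uniformly in $Q$ --- this remains a conjecture, as the paper states.
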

Obviously, any two of the conjectures imply the third one. They are supported by partial results:
\begin{itemize}
\item The specialisation of \cref{conj1} to $t_{k}^{\partial} = 0$ for $k > 0$ is proved in \cite{ABT17} conditionally to \cite[Theorem 1.1]{ABT17} whose proof was announced by Solomon and Tessler but has not appeared yet. The same specialisation in \cref{conj2} was anticipated in \cite{Saf16} and proved there for $g = 0,\tfrac{1}{2},1$.
\item The specialisation of \cref{conj3} to $Q = 1$ is proved in \cite{Ale15} via integrability techniques and for $Q = \pm 1$ in \cite[Section 4.2]{ABT17} by matrix integral techniques.
\item The string and dilaton equations satisfied by $Z^{{\rm ABT}}$ and $Z^{{\rm KP}}$ are the same \cite[Section 4.3]{ABT17}.
\end{itemize}

Alexandrov has proposed various collections of differential operators relevant to the study of the Kontsevich--Penner model --- and therefore to open intersection theory in light of \cref{conj2}. To summarise what is relevant for our exposition:
\begin{enumerate}
\item[(a)]  in \cite{Ale15}, Alexandrov uses a representation of the $\mathfrak{gl}_1$-Heisenberg algebra to construct operators $(\widehat{L}_k^o)_{k \geq 0}$ and $(\widehat{M}_{k}^{o})_{k \geq -2}$ --- see Equations (7.4) and (7.14) therein --- annihilating the $Q = 1$ specialization of $Z^{{\rm KP}}$.
\item[(b)] in \cite{Saf16}, Safnuk introduced a modification of these operators, denoted $(\widehat{L}_k)_{k \geq -1}$ and $(\widehat{M}_{k})_{k \geq -2}$ --- see Equations (2.9) and (2.10) therein --- which still annihilate the $Q = 1$ specialization of $Z^{{\rm KP}}$.
\item[(c)] \label{Ale17constraints} in \cite{Ale17}, Alexandrov uses a twisted representation of the Heisenberg algebra of $\mathfrak{gl}_3$ to construct a free field representation of $\mc{W}(\mathfrak{sl}_3)$ and operators $(\widehat{\mathcal{L}}_{k}^{Q})_{k \geq -1}$ and $(\widehat{\mathcal{M}}_{k}^{Q})_{k \geq -2}$ --- see Equation (72) therein --- annihilating $Z^{{\rm KP}}$.
\end{enumerate}

All those operators are related by taking (possibly infinite) linear combinations, but it turns out choosing one or the other set of operators affects the structure of the recursion one deduces for $\langle \cdots \rangle^{{\rm KP}}$. For sake of comparison and completeness, we review in \cref{Safnukreviewopen} the definition of the operators in  (a) and (b) and the topological recursion with strange features that Safnuk derived from the operators in (b). In \cref{TRopenrel}, we explain that the operators in (c) directly compare to Airy structures for $\sigma = (12)(3)$ and thus provide a CEO-like topological recursion, whose structure is more transparent and more general than \cite{Saf16}.

\medskip

\subsection{Review of Safnuk's recursion}

\label{Safnukreviewopen}

\medskip

Consider the following representation of the Heisenberg VOA for $\mathfrak{gl}_1$
\[
J(z) = \sum_{k \in \mathbb{Z}} \frac{J_k}{z^{k + 1}},\qquad J_k \coloneqq \left\{\begin{array}{lll} \partial_{t_{k}} & & {\rm if}\,\,k > 0 \\ 0 & & {\rm if}\,\,k = 0 \\ -kt_{-k} & & {\rm if}\,\,k < 0 \end{array}\right.\,,
\] 
and introduce the normal ordered products
\begin{equation*}
\begin{split}
L(z) & = \frac{1}{2}\,\norder{J(z)^2} = \sum_{k \in \mathbb{Z}} \frac{L_k}{z^{k + 2}}\,, \\
M(z) & = \frac{1}{3}\,\norder{J(z)^3} = \sum_{k \in \mathbb{Z}} \frac{M_{k}}{z^{k + 3}}\,.
\end{split}
\end{equation*}
These operators form a representation of the $\mc{W}(\mathfrak{sl}_3)$-algebra. The collection of operators mentioned in (a) are:
\begin{equation*}
\begin{split}
\widehat{L}_k^o &\coloneqq  L_{2k} + (k+2) J_{2k} - J_{2k+3} + \delta_{k,0} \Big(\frac{1}{8} + \frac{3}{2} \Big)\,, \\
\widehat{M}_k^o &\coloneqq  M_{2k} + 2(k+3)L_{2k} -2L_{2k+3} -2(k+3) J_{2k+3} + J_{2k+6} + \Big( \frac{95}{12} + 6k + \frac{4}{3}k^2\Big) J_{2k} + \frac{23}{4} \delta_{k,0} \,.
\end{split}
\end{equation*}
Then Alexandrov proved in~\cite{Ale15} that $(\widehat{L}_k^o)_{k \geq 0}$ and $(\widehat{M}_{k}^o)_{k \geq -2}$ annihilate the specialization of $Z^{{\rm KP}}$
\begin{equation*}
\left\{\begin{array}{lll} \hat{L}_k^o \tau_1= 0 && k \geq 0\,,\\
\hat{M}_k^o \tau_1= 0 &&k \geq -2\,. \end{array}\right.
\end{equation*}

The collection of modified operators mentioned in (b) read:
\[
\widehat{L}_k \coloneqq  \hat{L}_k^o\,, \qquad \widehat{M}_k \coloneqq  - \widehat{M}_k^o + 2(k+2) \widehat{L}_k^o\,.
\]
Let us sketch the strategy of Safnuk in \cite{Saf16} --- for a better comparison, we set $\hbar = 1$ till the end of this paragraph. We first rewrite these operators. To this end, we change the currents to include as zero-mode $Q = 1$ and a dilaton shift (Safnuk does this later in the computation):
\[
\tilde{J}(z) \coloneqq  \sum_{k\in \Z} \frac{\tilde{J}_k}{z^{k+1}} \,, \qquad \tilde{J}_k \coloneqq  J_k - \delta_{k,-3} + \delta_{k,0}\,.
\]
Let us also define, following Safnuk, the differential operators
\[
\mc{D}_1 \coloneqq  \dd z \Big( -\frac{\dd}{\dd z} + \frac{1}{z} \Big) \,; \qquad \mc{D}_2 \coloneqq  \frac{(\dd z)^2}{2} \Big( \frac{\dd^2}{\dd z^2} -\frac{3}{z} \frac{\dd}{\dd z} + \frac{3}{z^2} \Big)\,
\]
the one-form $ \eta = -z^2\dd z$, and the projection operators
\[
\mc{P}^{(i)} \colon \C \llbracket z^{\pm 1} \rrbracket \dd z \to  \C \llbracket z^{-2} \rrbracket z^{-i} \dd z\, \qquad i \in \{2,3\}\,.
\]
Then, taking only the parts of the generating series that annihilate $Z^{{\rm KP}}|_{Q = 1}$, we get
\begin{equation*}
\begin{split}
\mc{J}(z) &\coloneqq  \tilde{J}(z)\dd z\,,\\
\mc{L}(z) &\coloneqq  \sum_{k \geq -1} \frac{\dd z}{z^{2k+4}} \hat{L}_k = \mc{P}^{(2)} \bigg( \frac{1}{2\eta } \Big( \norder{\mc{J}^2} + \mc{D}_1\mc{J} + \frac{(\dd z)^2}{4z^2} \Big) \bigg)\,,\\
\mc{M}(z) &\coloneqq  \sum_{k \geq -2} \frac{\dd z}{z^{2k+7}} \hat{M}_k  = \mc{P}^{(3)} \bigg( \frac{1}{3\eta^2} \Big( \norder{\mc{J}^3} - \mc{D}_2\mc{J} + \frac{3(\dd z)^2}{4z^2} \mc{J} \Big) \bigg)\,.
\end{split} 
\end{equation*}
The last term in the last two lines can be absorbed by \emph{defining}
\[
\label{thesquareZ} \mc{J}^2(z) = \norder{\mc{J}^2(z)} + \frac{(\dd z)^2}{4z^2}\,,\qquad \mc{J}^{3}(z) = \mc{J}(z)\cdot \mc{J}^2(z)\,.
\]
The square of $\mathcal{J}(z)$ itself will not be defined due to infinite sums, and somehow the definition \eqref{thesquareZ} implements the right operator product expansion from the $\mc{W}(\mathfrak{sl}_3)$-algebra, cf. \cref{Wikz}. We then get the following operators annihilating $Z^{{\rm KP}}|_{Q = 1}$:
\[
\mc{L} = \mc{P}^{(2)} \bigg( \frac{1}{2\eta } \Big( \mc{J}^2+ \mc{D}_1\mc{J} \Big) \bigg)\,, \qquad \mc{M} = \mc{P}^{(3)} \bigg( \frac{1}{3\eta^2} \Big( \mc{J}^3 - \mc{D}_2\mc{J} \Big) \bigg)\,.
\]
Now, if we write $Z^{{\rm KP}}|_{Q = 1} = e^F$, and commute this through these operators, this gives the equations
\begin{equation*}
\begin{split}
0&= \mc{P}^{(2)} \bigg( \frac{1}{2\eta } e^{-F} \Big( \mc{J}^2+ \mc{D}_1\mc{J} \Big) e^F\bigg) = \mc{P}^{(2)} \bigg( \frac{1}{2\eta } \Big( U^2+ \mc{D}_1 U \Big) \cdot 1 \bigg)\,,\\
0&= \mc{P}^{(3)} \bigg( \frac{1}{3\eta^2}e^{-F} \Big( \mc{J}^3 - \mc{D}_2\mc{J} \Big) e^F\bigg)= \mc{P}^{(3)} \bigg( \frac{1}{3\eta^2} \Big( U^3 - \mc{D}_2U \Big)\cdot 1 \bigg)\,,
\end{split}
\end{equation*}
where 
\[
U(z) \coloneqq  e^{-F} \mc{J}(z) e^F = \mc{J}(z) + [\mc{J}(z),F]
\]
is the operator appearing in \cite{Saf16}. In order to recover a spectral curve topological recursion from this, one should define
\[
 \delta_{z} \coloneqq  \mc{J}_-(z) = \sum_{k\geq 1} \frac{\dd z}{z^{k+1}}\,\partial_{t_{k}},\qquad  \omega_{g,n}(z_1,\ldots,z_n) \coloneqq  \delta_{z_1} \cdots \delta_{z_n} F_{g,n}\,.
 \]
If we also introduce the unstable terms
\[
\omega_{0,1}(z) \coloneqq  \eta(z)\,, \qquad \omega_{0,2} = [\delta_1, \mc{J}_2 ] = \frac{\dd z_1 \dd z_2}{(z_1-z_2)^2} \,, \qquad \omega_{\frac{1}{2},1} \coloneqq \frac{\dd z}{z} \,,
\]
coming respectively from the dilaton shift, the positive part of $\mc{J}(z)$ and the zero mode $J_0 = 1$, we see that
\[
U(z) = \delta_{z} + \omega_{0,1}(z) + \omega_{\frac{1}{2},1}(z) + \delta^{-1} \omega_{0,2} + \sum_{2g - 2 + n > 0} \delta F_{g,n}\,.
\]
Reinterpreting the projection operators as residues with the recursion kernel results in
\begin{theorem}[{\cite[Theorem~5.3]{Saf16}}]
The $\omega_{g,n}$ obey a modified topological recursion on the spectral curve with crosscap form
\[
\C,\qquad  x(z) = \frac{z^2}{2},\qquad  y(z) = -z\,,\qquad \omega_{0,2}(z_1,z_2) = \frac{\dd z_1 \dd z_2}{(z_1-z_2)^2}\,,\qquad \omega_{\frac{1}{2},1} = \frac{\dd z}{z}\,,
\]
given by
\begin{equation*}
\begin{split}
\omega_{g,n+1}(z_0,z_{[n]}) = \Res_{w = 0}& \bigg( K^{(2)}(z_0,w) \Big( \mc{W}'_{g,2,n}(w,w;z_{[n]}) + \mc{D}_1 \omega_{g-\frac{1}{2},n+1}(w,z_{[n]}) \Big) \\
&+  K^{(3)}(z_0,w) \Big( \mc{W}'_{g,3,n}(w,w,w;z_{[n]}) + \mc{D}_2 \omega_{g-1,n+1}(w,z_{[n]}) \Big) \bigg)\,,
\end{split}
\end{equation*}
where $\mc{W}'$ is as in \Cref{DisconnConnCor} and
\[
K^{(j)}(z,w) = \Big( (-1)^j \int_{\zeta = 0}^{-w} \omega_{0,2}(z,\zeta ) - \int_{\zeta = 0}^w \omega_{0,2}(z,\zeta) \Big) \frac{1}{2j (-w^2\dd w)^{j-1}}\,.
\]
\end{theorem}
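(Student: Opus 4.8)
The plan is to derive the recursion directly from the two $\mathcal{W}(\mathfrak{sl}_3)$-constraints
\[
\mathcal{L}(z)\cdot Z = 0,\qquad \mathcal{M}(z)\cdot Z = 0,\qquad Z = Z_{\mathrm{KP}}|_{Q=1} = e^F,
\]
coming from Safnuk's operators $\widehat{L}_k$ ($k\geq -1$) and $\widehat{M}_k$ ($k\geq -2$). First I would record the rewriting of these operators as the generating series
\[
\mathcal{L}(z) = \mathcal{P}^{(2)}\!\left(\tfrac{1}{2\eta}\big(\mathcal{J}^2 + \mathcal{D}_1\mathcal{J}\big)\right),\qquad \mathcal{M}(z) = \mathcal{P}^{(3)}\!\left(\tfrac{1}{3\eta^2}\big(\mathcal{J}^3 - \mathcal{D}_2\mathcal{J}\big)\right),
\]
with $\mathcal{J}(z) = \tilde{J}(z)\,\dd z$ the dilaton-shifted current carrying zero mode $Q = 1$, $\eta = -z^2\,\dd z$, and with the regularised powers $\mathcal{J}^2(z) = \norder{\mathcal{J}^2(z)} + \tfrac{(\dd z)^2}{4z^2}$, $\mathcal{J}^3(z) = \mathcal{J}(z)\cdot\mathcal{J}^2(z)$ encoding the operator product expansion. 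This is a finite reindexing of the explicit $\widehat{L}_k,\widehat{M}_k$ using $\mathcal{D}_1 = \dd z\,(-\tfrac{\dd}{\dd z}+\tfrac1z)$ and $\mathcal{D}_2 = \tfrac{(\dd z)^2}{2}(\tfrac{\dd^2}{\dd z^2}-\tfrac3z\tfrac{\dd}{\dd z}+\tfrac3{z^2})$.

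Next I would conjugate by $e^{-F}$. Since $F$ carries no differential part it commutes with $\omega_{0,1}=\eta$, with the creation part of $\mathcal{J}$, and with the zero mode, so $e^{-F}\mathcal{J}(z)e^F = U(z) := \mathcal{J}(z)+[\mathcal{J}(z),F]$, and each constraint becomes an identity of the form $\mathcal{P}^{(i)}\big(\tfrac{1}{i\eta^{i-1}}\,(\text{polynomial in }U,\partial_z U,\partial_z^2 U)\cdot 1\big)=0$. Using the expansion
\[
U(z) = \delta_z + \omega_{0,1}(z) + \omega_{\tfrac12,1}(z) + \delta^{-1}\omega_{0,2} + \!\!\!\sum_{2g-2+n>0}\!\!\! \delta\,F_{g,n},
\]
with $\delta_z = \mathcal{J}_-(z)$, I would apply $\delta_{z_1}\cdots\delta_{z_n}$, use $[\delta_{z_i},U(w)] = \omega_{0,2}(z_i,w)$, set all $t_k = 0$, and isolate the term of fixed genus and valency. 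The products of $U$'s reorganise — exactly as the $\mathcal{E}^{(i)}_{g,n}$ of \cref{SAIRNUSN}, but over a single branch — into the disconnected correlators $\mathcal{W}'_{g,2,n}$ and $\mathcal{W}'_{g,3,n}$ of \cref{DisconnConnCor}, while the linear pieces $\mathcal{D}_1\mathcal{J}$ and $\mathcal{D}_2\mathcal{J}$ produce $\mathcal{D}_1\omega_{g-\tfrac12,n+1}$ and $\mathcal{D}_2\omega_{g-1,n+1}$.

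Finally I would convert the projection operators into residues. Writing $\mathcal{P}^{(2)},\mathcal{P}^{(3)}$ as contour integrals extracting the relevant negative powers, absorbing the factors $\tfrac{1}{i\eta^{i-1}}$, and separating off the $\delta_{z_0}$-term (which contributes $\int_0^{w}\omega_{0,2}(z_0,\cdot)$ together with its parity companion forced by the shape of $\mathcal{P}^{(i)}$), one recognises precisely the kernels $K^{(2)}(z_0,w)$ and $K^{(3)}(z_0,w)$ of the statement; matching the numerical constants $\tfrac12$, $\tfrac1{2j}$, $(-w^2\dd w)^{j-1}$ against $\eta = -z^2\dd z$ and $\dd x = z\,\dd z$ on $x = z^2/2$, $y = -z$ pins down the residue formula, and the unstable inputs $\omega_{0,1} = y\,\dd x$, $\omega_{0,2} = \tfrac{\dd z_1\dd z_2}{(z_1-z_2)^2}$, $\omega_{\tfrac12,1} = \tfrac{\dd z}{z}$ emerge from the dilaton shift, the creation/annihilation split of the current, and $J_0 = 1$.

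The main obstacle is the handling of the square of the current and its self-contraction term. The naive product $\mathcal{J}(z)^2$ is ill-defined because of divergent sums; only the regularised $\mathcal{J}^2(z) = \norder{\mathcal{J}^2(z)} + \tfrac{(\dd z)^2}{4z^2}$ is, and one must verify both that with this definition the generating series above genuinely reproduce $\widehat{L}_k,\widehat{M}_k$, and that after conjugation and extraction of $\omega_{g,n+1}$ the self-contraction term is exactly what shifts the Euler-characteristic counting so that $\mathcal{W}'_{g,2,n}$ — rather than a bare product of two correlators — appears. This is the point where the $\mathcal{W}(\mathfrak{sl}_3)$ operator product expansion, i.e. the $\hbar$-corrections of \cref{Wikz} for $r = 3$, really enters. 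Tracking all the subleading $z^{-1},z^{-2}$ contributions through the two non-commuting operators $\mathcal{D}_1,\mathcal{D}_2$ and the projections $\mathcal{P}^{(i)}$ is the delicate bookkeeping to which the rest of the proof reduces.
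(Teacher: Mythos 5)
Your proposal follows essentially the same route as the paper's sketch of this result: rewriting $\widehat{L}_k,\widehat{M}_k$ as the generating series involving $\mathcal{J}$, $\mathcal{D}_1$, $\mathcal{D}_2$, $\eta$ and $\mathcal{P}^{(i)}$, conjugating by $e^{-F}$ to pass to $U$ and hence to the correlators, and converting the projections into residues with the kernels $K^{(j)}$, and you correctly identify the key subtlety (the regularised $\mathcal{J}^2$ and the role of the self-contraction in shifting the Euler characteristic). The paper itself presents only this outline and cites Safnuk for the detailed bookkeeping, so your plan matches the level of argument asserted there.
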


This form of the topological recursion has some odd features. For one, it is a recursion of order $3$, even though the degree of $x$ is only $2$. This is also noted by Safnuk, who computes the quantum curve in \cite[Theorem~7.1]{Saf16}, and obtains the semi-classical limit $ y(y^2-2x) =0$. Moreover, the formula does not include summation over fibers of $x$ --- rather, the variable $ w$ is inserted several times. Finally, the operators $\mc{D}_j$ introduce uncommon derivatives. Safnuk posits that these quirks come from the reducibility of the curve, but in the next section, we will see this relation is not straightforward.

\medskip

\subsection{Relation to topological recursion}
\label{TRopenrel}

\medskip

In a previous work, it was shown that the operators in (c), i.e. in \cite{Ale17}, coincide, up to a change of variables, with the reduction to $\mc{W}(\mathfrak{sl}_3)$ of an $\mc{W}(\mathfrak{gl}_3)$-Airy structure, leading to the following result.

\begin{theorem}[{\cite[Proposition~6.3]{BBCCN18}}]
\label{ZKPAiry} $Z^{{\rm KP}}$ is the partition function of the Airy structure of \cref{thm:W_gl_Airy_arbitrary_autom} with parameters $(r_1,s_1,r_2,s_2) = (2,3,1,\infty )$, $t_1 = \tfrac{1}{2}$, and $Q_1 = -Q_2 = Q$, specialised to
\begin{equation}
 \label{specin} x^1_{2d+1} = t_{d}^{\circ},\qquad x^1_{2d + 2} = t_{d}^{\partial},\qquad x^2_{d + 1} = 0,\qquad (d \geq 0) \,.
\end{equation}
\end{theorem}
Thanks to Propositions~\ref{HASequivALE} and \ref{mainth2}, we can now convert this into a CEO-like topological recursion on the reducible spectral curve consisting of the union of two components intersecting at $z = 0$
\begin{equation}
\label{spreducible} C = C_1 \cup C_2\,,\qquad C_1\,: \left\{\begin{array}{l} x(z) = z^2 \\ y(z) = -\frac{z}{2} \end{array}\right.,\qquad C_2\,:\,\left\{\begin{array}{l} x(z) = z \\ y(z) = 0 \end{array}\right.\,.
\end{equation}
Due to the constraint $H_{i = 1,k=d + 1} \cdot Z = 0$ in the Airy structure, the partition function only depends on the variable $x^1_{2d + 2} - x^2_{d + 1}$. Accordingly, no information is lost after the specialisation \eqref{specin}. More precisely, each occurrence of $x^1_{2d + 1}$ in a monomial is associated with the insertion of $(2d + 1)!! \tau^{\circ}_{d}$, each occurrence of $x^2_{d + 1}$ with the insertion of $(2d + 2)!!\tau^{\partial}_{d}$, and each occurrence of $x^1_{2d + 2}$ with $- (d+ 1)!!\tau_d^{\partial}$. Taking this into account and using \eqref{omgnxp}, the suitable definition for the correlators in terms of open intersection numbers is
\begin{equation*}
\begin{split}
& \quad \omega_{h,n + m}^{{\rm KP}}\big(\begin{smallmatrix} 1 & \cdots & 1 & 2 & \cdots & 2 \\ z_1 & \cdots & z_n  & z_{n + 1} & \cdots & z_{n + m}\end{smallmatrix}\big) \\
& = \sum_{\substack{g,b \in \mathbb{Z}_{\geq 0} \\ g + \frac{b}{2} = h}} \sum_{N_{\circ} \sqcup N_{\partial} = [n]} \sum_{\substack{d_1,\ldots,d_{n} \geq 0 \\ k_1,\ldots,k_{m} \geq 0}} (-1)^{m}\,Q^{b}\,\bigg\langle \prod_{i \in N_{\circ}} \tau^{\circ}_{d_i} \prod_{i \in N_{\partial}} \tau^{\partial}_{d_i} \prod_{j = 1}^m \tau_{k_j}^{\partial}\bigg\rangle_{g,|N_{\circ}|;b,m + |N_{\partial}|}^{{\rm KP}} \\
& \qquad \cdot \prod_{i \in N_{\circ}} \frac{(2d_i + 1)!!\,\dd z_i}{z_i^{2d_i + 2}}\prod_{i \in N_{\partial}} \frac{(d_i + 1)!\,\dd z_i}{z_i^{2d_i + 3}}\prod_{j = 1}^m \frac{(k_j + 1)!\,\dd z_{n + j}}{z_{n + j}^{k_j + 2}}\,,
\end{split}
\end{equation*} 
where the number of boundaries $b$ is determined in terms of $g,n,d,k,m$ via \eqref{dimcon}.

\begin{corollary}
\label{coroKP} For $g \in \tfrac{1}{2}\mathbb{Z}_{\geq 0}$ and $n \geq 1$ such that $2g - 2 + n > 0$, $\omega_{g,n}^{{\rm KP}}$ is computed by the topological recursion on the spectral curve \eqref{spreducible} equipped with bidifferential and crosscap form
\[
\omega_{0,2}\big(\begin{smallmatrix} \mu_1 & \mu_2 \\ z_1 & z_2 \end{smallmatrix}\big) = \delta_{\mu_1,\mu_2}\,\frac{\dd z_1 \dd z_2}{(z_1 - z_2)^2}\,,\qquad \omega_{\frac{1}{2},1}\big(\begin{smallmatrix} \mu \\ z \end{smallmatrix}\big) = (-1)^{\mu + 1}\,Q\,\frac{\dd z}{z}\,.
\]
for $\mu,\mu_1,\mu_2 \in \{1,2\}$.
\end{corollary}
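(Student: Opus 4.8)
The plan is to deduce the statement directly from the chain of equivalences established in Part~\ref{SCpart}, together with the identification of $Z^{{\rm KP}}$ as a partition function of an Airy structure from \cref{ZKPAiry}. The first step is to check that the spectral curve \eqref{spreducible}, equipped with the bidifferential and crosscap form prescribed in the statement, is \emph{exceptionally admissible} in the sense of \cref{deexpreg}, and in particular that none of the pathological conditions (i)--(iv) of \cref{lemYA} occur. The branch point in question is $\alpha$ lying over $x=0$; its normalisation has two points $\mu=1$ with $r_1=2$ and $\mu=2$ with $r_2=1$. Reading off the Laurent expansion of $\tilde y$: on the first component $y=-z/2$, so $s_1=3$ (coprime to $r_1=2$ and $r_1=-1\bmod s_1$, as required since $s_1=r_1+1$), with $t_1=\tfrac12$; on the second component $y\equiv0$, so $s_2=+\infty$, $r_2=1$. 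Thus $\mu_-=2$ is the unique exceptional component with $r_{\mu_-}=1$, matching the first bullet of \cref{deexpreg}; $\mu_+=1$ satisfies $r_{\mu_+}=-1\bmod s_{\mu_+}$; there is no third component so the remaining conditions are vacuous. Since $\omega_{\frac12,1}$ has residues $Q$ and $-Q$ at the two preimages of $\alpha$, which sum to zero, it is a genuine crosscap differential in the sense of \cref{SecLoop}. Hence \cref{dall} applies and $\mc C$ is admissible.

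The second step is to invoke \cref{mainth2}: for this admissible curve there is a unique completion of $(\omega_{0,1},\omega_{\frac12,1},\omega_{0,2})$ into a family of correlators $\boldsymbol\omega$ satisfying the projection property and the (master/abstract) loop equations, this family is computed by the topological recursion \eqref{TRsum}, and the resulting $\omega_{g,n}$ are symmetric; moreover the coefficients $F_{g,n}$ of the expansion \eqref{omgnxp} are precisely the coefficients of the partition function of the Airy structure attached to $\mc C$ via \cref{Airytocurve}. The third step is to identify that Airy structure with the one of \cref{ZKPAiry}. By construction in \cref{Airytocurve}, the operators $H_{\alpha;i,k}$ are the $\sigma_\alpha$-twisted $\mc W(\mathfrak{gl}_3)$-modes conjugated by the dilaton shift and change of polarisation read off from $(\omega_{0,1},\omega_{\frac12,1},\omega_{0,2})$. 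The permutation $\sigma_\alpha$ here has cycle type $(r_1,r_2)=(2,1)$; the data read off from our curve gives $s_1=3$, $t_1=\tfrac12$, $s_2=\infty$, $Q_\mu=\Res_{z=\mu}\omega_{\frac12,1}$ so $Q_1=Q=-Q_2$; and since $\omega_{0,2}=\omega_{0,2}^{{\rm std}}$ and $\omega_{\frac12,1}$ has no regular part beyond the simple pole, the change of polarisation $\hat\Phi$ and the half-integer shift $\hat T_2$ are trivial. These are exactly the parameters $(r_1,s_1,r_2,s_2)=(2,3,1,\infty)$, $t_1=\tfrac12$, $Q_1=-Q_2=Q$ appearing in \cref{ZKPAiry}, so the two Airy structures coincide and their partition functions agree after the linear change of variables $t_d^\circ=x^1_{2d+1}$, $t_k^\partial=\tfrac12 x^1_{2k+2}-x^2_{k+1}$.

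The final step is bookkeeping: translate the equality of partition functions into the equality of correlators. One expands both sides in the appropriate linear coordinates, uses \eqref{omgnxp} (valid by the projection property, which \cref{mainth2} guarantees) to write $\omega_{g,n}$ on $\tilde C$ in terms of the $F_{g,n}\big[\begin{smallmatrix}\boldsymbol\mu\\\mathbf k\end{smallmatrix}\big]$, and matches these against the open-intersection generating series via the dictionary $t_d^\circ\leftrightarrow x^1_{2d+1}$, $t_k^\partial\leftrightarrow\tfrac12 x^1_{2k+2}-x^2_{k+1}$ and the normalisation factors $(2d+1)!!$, $(2k+2)!!$ built into the definition of $\omega^{{\rm KP}}_{g,n}$. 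The split of the genus/boundary grading $h=g+\tfrac b2$ and the sign $(-1)^m\,2^{-|N_\partial|}$ in the definition of $\omega^{{\rm KP}}$ are precisely what is produced by this substitution, since inserting $x^2_{k+1}$ corresponds to a boundary insertion on $C_2$ and picks up the exceptional component. I expect the genuine content of the proof to be entirely in the first and third steps --- verifying admissibility is routine but must be done carefully, and the matching of the Airy-structure parameters with \cref{ZKPAiry} is where one must be scrupulous about conventions (the normalisation of $w_i$ in \cref{notationWik}, the factor $r_\mu$ in the dilaton shift \eqref{eq:dilaton_arbitr_autom}, and the precise form of the change of variables). The main obstacle, such as it is, is conceptual rather than technical: \cref{coroKP} is only a theorem about open intersection numbers \emph{conditionally}, namely modulo \cref{conj2} ($Z^{{\rm open}}=Z^{{\rm KP}}$), and this dependence should be flagged; the unconditional content is the statement about $\omega^{{\rm KP}}_{g,n}$, i.e. about the Kontsevich--Penner correlators, which follows rigorously from the above.
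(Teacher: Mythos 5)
Your proposal is correct and follows essentially the same route as the paper, which dispatches the corollary in a few lines by citing \cref{ZKPAiry} together with \cref{HASequivALE} and \cref{mainth2}. You do well to make the implicit admissibility check (conditions of \cref{deexpreg} and \cref{lemYA}) and the identification of the dilaton shift/change of polarisation data with the parameters of \cref{ZKPAiry} explicit, since the paper leaves those to the reader. One small point of framing: your closing caveat about \cref{conj2} is not really an obstacle here. The corollary, as stated, concerns $\omega_{g,n}^{\rm KP}$, which the paper defines directly from the Kontsevich--Penner numbers $\langle\cdots\rangle^{\rm KP}$ (themselves defined from the matrix integral, no geometry needed). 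So the corollary is unconditional; \cref{conj2} only enters if one wants to promote it to a statement about the geometrically defined $\omega_{g,n}^{\rm open}$. You do note this at the end, so your conclusion is correct, but leading with it as the ``main obstacle'' slightly misstates the logical structure --- there is no conditional dependence in \cref{coroKP} itself.
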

 
Several sanity checks of this corollary can be proposed. At $Q = 0$, the variables $t^{\partial}$ become irrelevant and $Z^{{\rm KP}}$ specialises to the Witten-Kontsevich partition function \cite{Wit91,Kont92}
\begin{equation}
\label{ZKWQ0} Z^{{\rm KP}}[Q = 0,t^{\circ},t^{\partial}] = \sum_{\substack{g \geq 0,\,\,n \geq 1 \\ 2g - 2 + n > 0}} \frac{\hbar^{g - 1}}{n!} \bigg(\int_{\overline{\mathcal{M}}_{g,n}} \prod_{i = 1}^n \psi_i^{d_i}\bigg) \prod_{i = 1}^n (2d_i + 1)!!\,t_{d_i}^{\circ}\,.
\end{equation}
This can also be checked on the topological recursion side. Indeed, \cref{rspinthmTR} for $r=2$ states that the topological recursion for the spectral curve
\begin{equation}
\label{xyzeta} x(\zeta) = \zeta^2\,,\qquad y(\zeta) = -\frac{\zeta}{2}\,,\qquad \omega_{0,2}(\zeta_1,\zeta_2) = \frac{\dd \zeta_1\dd \zeta_2}{(\zeta_1 -\zeta_2)^2}
\end{equation}
produces the $(g,n)$-correlator for $2g - 2 + n > 0$:
\[
\omega_{g,n}(\zeta_1, \dotsc, \zeta_n) = \sum_{d_1,\ldots,d_n \geq 0} \bigg(\int_{\overline{\mathcal{M}}_{g,n}} \prod_{i = 1}^n \psi_i^{d_i}\bigg) \prod_{i = 1}^n \frac{(2d_i + 1)!!\dd \zeta_i}{\zeta_i^{2d_i + 2}}\,.
\]
Note that this case was proved before with a slightly different normalisation by Eynard--Orantin \cite{EyOr07}, see also \cite{Ebook}.\par
At $Q = 0$ we are in position to apply \cref{propdecouple}, showing that the second component decouples, i.e. $\omega_{g,n}^{{\rm KP}}\big(\begin{smallmatrix} 1 & \cdots & 1 \\ z_1 & \cdots & z_n \end{smallmatrix}\big)\big|_{Q = 0}$ coincides with the correlators of \eqref{xyzeta}. So, the correlators of the spectral curve \eqref{spreducible} at $ Q = 0$ agree with the correlators associated to \eqref{ZKWQ0}, as predicted by the corollary.

\medskip
 
\subsection{Review of open \texorpdfstring{$r$}{r}-spin theory}

\medskip

It is expected that there exists an open analog of Witten $r$-spin theory, related to a space $\overline{\mathcal{M}}_{g,n;b,m}^{r{\rm spin}}$, which specialises to the open theory of \cref{OpenIntKP} for $r = 2$. It would give for each $(g,n,b,m)$ such that $-\overline{\chi} > 0$ a collection of numbers
\[ 
\big\langle\tau_{d_1}^{\circ}(a_1) \cdots \tau_{d_n}^{\circ}(a_n) \tau_{k_1}^{\partial} \cdots \tau_{k_m}^{\partial}\big\rangle_{g,n;b,m}^{r{\rm spin}} \in \mathbb{Q}
\] 
indexed by $d_i,k_i \geq 0$ and $a_i \in [r]$, which we can collect in a generating series:
\begin{equation}
\label{openrspinZ}
\begin{split}
& \quad Z^{{\rm open}\,r{\rm spin}}\big[Q;t^{\circ};t^{\partial}\big] \\
& = \exp\left(\sum_{\substack{g,b,n,m \geq 0 \\ -\overline{\chi} > 0}} \!\!\frac{\hbar^{g - 1 + \frac{b}{2}}Q^{b}}{m!n!} \!\! \sum_{\substack{d_1,\ldots,d_n \geq 0 \\ a_1,\ldots,a_n \in [r] \\ k_1,\ldots,k_m \geq 0}} \! \Big\langle \prod_{i = 1}^n \tau_{d_i}^{\circ}(a_i) \prod_{j = 1}^m \tau_{k_j}^{\partial}\Big\rangle_{g,n;b,m}^{r{\rm spin}} \prod_{i = 1}^n (d_ir + a_i)!^{(r)}t_{a_i,d_i}^\circ  \prod_{j = 1}^m (rk_j + r)!^{(r)} t_{k_j}^\partial \right)\,.
\end{split}
\end{equation}

These numbers should vanish unless
\begin{equation}
\label{thedimopenr}(r + 1)(2g - 2 + b + n + m) = \sum_{i = 1}^n (rd_i + a_i) + \sum_{j = 1}^m (rk_j + r)\,.
\end{equation}
Besides, for $m \geq 1$, each insertion\footnote{For $d = 0$ this matches the normalisation \cite[Theorem 1.5 and Section 6.1]{BCT18}, namely $\tau_0^{\circ}(r) \leftrightarrow (-1/r)\tau_0^{\partial}$. The $\tau_d^{\partial}$ correspond to boundary descendent insertion: as they have not been defined geometrically yet, we do not have a natural normalisation to compare to. The factor $r^{d + 1}$ is natural from the numerical perspective in combination with the identity $(dr + r)!^{(r)} = r^{d + 1} (d + 1)!$.}  of $\tau_{d}^\circ(r)$ should amount to an insertion of $(-1/r^{d + 1})\tau^\partial_d$. 
 
There are several possible choices of conventions (in particular, for orientations) that could affect these numbers by a prefactor depending only on the topology. We fix them by the normalisation of the consistency relations with the intersection numbers that are already defined. For $r = 2$ and $m \geq 1$, we want to retrieve the open intersection numbers of \cref{OpenIntKP}
\begin{equation}
\label{compare2pin} \bigg\langle \prod_{i = 1}^{n} \tau_{d_i}^\circ(1) \prod_{j = 1}^m \tau_{k_j}^\partial \bigg\rangle_{g,n;b,m}^{2{\rm spin}} = \bigg\langle \prod_{i = 1}^n \tau^\circ_{d_i} \prod_{j = 1}^m \tau_{k_j}^\partial \bigg\rangle_{g,n;b,m}\,.
\end{equation}
Notice that the dimension constraint \eqref{3deim} forces $b + m$ to be even, so this is indeed an identity in $\mathbb{Q}$. In absence of boundaries $b = m = 0$, we want to retrieve the Witten $r$-spin class intersections of \cref{revwit}
\[
\big\langle \tau_{d_1}^{\circ}(a_1) \cdots \tau_{d_n}^{\circ}(a_n) \big\rangle_{g,n;0,0}^{r{\rm spin}} = \int_{\overline{\mathcal{M}}_{g,n}} w_{g,n}^{r{\rm spin}}(a_1,\ldots,a_n) \prod_{i = 1}^n \psi_i^{d_i}\,.
\]

For disks without boundary descendants --- that is $(g,b) = (0,1)$ and $k_j = 0$ --- the open $r$-spin intersection numbers have been defined in \cite{BCT20} in the form
\begin{equation}
\label{bibibibibib}\big \langle \tau_{d_1}^{\circ}(a_1)\cdots \tau_{d_n}^{\circ}(a_n) (\tau_0^{\partial})^m \big\rangle_{0,n;1,m} = \int_{\mathcal{P}\overline{\mathcal{M}}_{0,n;1,m}^{r{\rm spin}}} e\bigg(\mathbb{W} \oplus \bigoplus_{i = 1}^n \mathbb{L}_{i}^{\oplus d_i},{\rm s}\bigg)\,,
\end{equation}
where $\mathcal{P}\overline{\mathcal{M}}$ is a partial compactification of the moduli space of $r$-spin disks and $\mathbb{W}$ is a bundle which is the open analogue of $R^1\pi_*\mathcal{L}$. We stress that $a_i$ in \cite{BCT18,BCT20} corresponds to our $a_i - 1$. These numbers are computed explicitly for $d_i = 0$ in \cite[Theorem 1.2]{BCT18}, and in particular
\[
\big\langle \tau_0^\circ(a) \tau_0^\partial \big\rangle_{0,1;1,1} = \delta_{a,1}\,.
\]
The dimension constraint \eqref{thedimopenr} is the natural generalisation of \cite[Section 6.2.1]{BCT18} allowing boundary descendants, and coincides with \eqref{3deim} for $r = 2$.

Bertola and Yang have constructed in \cite{BY15} a particular solution of the extended $r$-KdV hierarchy, generalising the $r = 2$ construction of \cite{Bur15}. Up to a change of normalisation, this solution is mentioned in  \cite{BCT18} under the name $\Phi$ and depends on a redundant parameter $\varepsilon$ and times $(T_k)_{k>0}$. We shall use the latter normalisation, and for uniformity denote it  $Z^{r{\rm BY}}[\varepsilon ; (T_k)_{k > 0}]$. It gives, for each $(\overline{g},n,m)$ such that $\overline{\chi} > 0$ (see \eqref{chibarE}), a collection of numbers
\[
\big\langle \tau_{d_1}^\circ(a_1)\cdots\tau_{d_n}^\circ(a_n)\tau_{k_1}^\partial\cdots \tau_{k_m}^\partial\big\rangle^{r{\rm BY}}_{\overline{g};n;m}
\]
by writing down the following expansion:
\begin{equation*}
\begin{split}
& \quad Z^{r{\rm BY}}\Big[\varepsilon = (-r\hbar)^{\frac{1}{2}}\,;\, \big(T_{dr + a} = (-r)^{d+\frac{1}{2} -\frac{3(dr + a)}{2(r + 1)}} (t_{a,d}^\circ - \delta_{a,r}r^{d + 1}t_{d}^\partial)\big)_{\substack{a \in [r] \\ d \geq 0}}\Big] \\
& = \exp\left(\sum_{\substack{\overline{g},n,m \geq 0 \\ -\overline{\chi} > 0}} \frac{\hbar^{\frac{\overline{g} - 1}{2}}}{m!n!}  \sum_{\substack{d_1,\ldots,d_n \geq 0 \\ a_1,\ldots,a_n \in [r] \\ k_1,\ldots,k_m \geq 0}} \bigg\langle \prod_{i = 1}^n \tau_{d_i}^{\circ}(a_i)\prod_{j = 1}^m \tau_{k_j}^{\partial} \bigg\rangle_{\overline{g};n;m}^{r{\rm BY}} \prod_{i = 1}^n (rd_i + a_i)!^{(r)}t_{a_i,d_i}^\circ  \prod_{j = 1}^m (rk_j + r)!^{(r)} t_{k_j}^\partial \right)\,.
\end{split}
\end{equation*}
In absence of an extra variable in $Z^{r{\rm BY}}$ playing the role that $Q$ has in \eqref{openrspinZ}, one cannot define numbers depending individually on $(g,b)$, but only on the doubled genus $\overline{g} = 2g + b - 1$.

\begin{conjecture}
There exists a geometric definition of the open $r$-spin intersection numbers and it satisfies
\begin{equation}
\label{theconjBYro}\big\langle \tau_{d_1}^\circ(a_1)\cdots\tau_{d_n}^\circ(a_n)\tau_{k_1}^\partial\cdots \tau_{k_m}^\partial\big\rangle^{r{\rm BY}}_{\overline{g};n;m} = \sum_{\substack{g,b \in \mathbb{Z}_{\geq 0} \\ 2g + b - 1 =\overline{g}}} \big\langle \tau_{d_1}^\circ(a_1)\cdots\tau_{d_n}^\circ(a_n)\tau_{k_1}^\partial\cdots \tau_{k_m}^\partial\big\rangle^{r{\rm spin}}_{g,n;b,m} \,.
\end{equation}
\end{conjecture}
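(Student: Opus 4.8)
The plan is to reduce the conjecture to the $\mathcal{W}(\mathfrak{gl}_{r+1})$-constraints produced by the Airy structure of \cref{thm:W_gl_Airy_arbitrary_autom}, and then combine \cref{thm:partition_fct_exists} (uniqueness of the partition function) with the integrable-systems characterisation of $Z^{r\mathrm{BY}}$. Concretely I would first fix a geometric definition of the open $r$-spin intersection numbers in arbitrary genus, with boundary components \emph{and} boundary descendants, extending the disk constructions of \cite{BCT18,BCT20} and the $r=2$ theory announced by Solomon--Tessler on top of \cite{PST15,Bur15,ST}, and normalised so that \eqref{compare2pin} holds for $r=2$ and \eqref{bibibibibib} holds on disks; this gives the series $Z^{\mathrm{open}\,r\mathrm{spin}}[Q;\cdot;\cdot]$ of \eqref{openrspinZ}. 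Next I would strengthen the target to the statement that, after the change of variables and normalisation of \cref{ZKPAiry} generalised to arbitrary $r$ — cycle type $(r,1)$ in $\mathfrak{gl}_{r+1}$, with $s_1=r+1$, $s_2=\infty$, $t_1=\tfrac1r$, $Q_1=-Q_2=Q$ — the series $Z^{\mathrm{open}\,r\mathrm{spin}}[Q;\cdot;\cdot]$ is the partition function of that Airy structure. Finally, specialise $Q=1$: since $\hbar^{g-1+b/2}=\hbar^{(\overline{g}-1)/2}$ whenever $2g+b-1=\overline{g}$, the $Q=1$ coefficient of $Z^{\mathrm{open}\,r\mathrm{spin}}$ is exactly the right-hand side $\sum_{2g+b-1=\overline{g}}\langle\cdots\rangle^{r\mathrm{spin}}_{g,n;b,m}$ of the conjecture, while the left-hand side $Z^{r\mathrm{BY}}$ is, by the same argument applied on the integrable side, the partition function of the same Airy structure with $Q=1$.

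For the identification of $Z^{r\mathrm{BY}}$ with that Airy structure the route is purely from integrable systems and mirrors the closed case: $Z^{r\mathrm{BY}}$ is by construction a tau function of the extended $r$-KdV hierarchy of Bertola--Yang with a prescribed string-type initial condition, and just as \cite{FSZ10} together with Givental--Milanov $\mathcal{W}(\mathfrak{gl}_r)$-constraints identify $Z^{r\mathrm{spin}}$ with the Airy structure of \cref{thm:W_gl_Airy_Coxeter} at $(r,s)=(r,r+1)$, one translates ``extended $r$-KdV $+$ string'' into the $\mathcal{W}(\mathfrak{gl}_{r+1})$-constraints in which the extra fixed point carries the extension data and the parameter $Q$. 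Because of \cref{thm:partition_fct_exists} it suffices to verify finitely many structural facts (degree one, the subalgebra property, the normalisation $\sum Q_\mu=0$), all of which hold for these constraints; equivalently, via \cref{HASequivALE,mainth2}, this is the topological recursion on the reducible curve obtained by gluing the component $x=z^r,\ y=-z/r$ to the component $x=z,\ y=0$ at $z=0$, with crosscap $\omega_{\frac12,1}\bigl(\begin{smallmatrix}\mu\\ z\end{smallmatrix}\bigr)=(-1)^{\mu+1}Q\,\tfrac{\dd z}{z}$, the exact analogue of \eqref{spreducible} and \cref{coroKP}.

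The real work, and the main obstacle, lies entirely on the geometric side: one must (a) construct $\overline{\mathcal{M}}^{r\mathrm{spin}}_{g,n;b,m}$ and the open Witten bundle in all genus with boundary marked points, together with the boundary descendant classes, and (b) prove that the resulting $Z^{\mathrm{open}\,r\mathrm{spin}}[Q=1;\cdot;\cdot]$ satisfies the $\mathcal{W}(\mathfrak{gl}_{r+1})$-constraints above. For (b) I would pursue the ``Givental decomposition'' strategy of \cref{part:appl}: express open $r$-spin correlators by gluing, i.e.\ as integrals of deformed closed Witten $r_\alpha$-spin classes over stable graphs whose vertices of one type record the disk and boundary contributions, so that the generating series becomes visibly a dilaton shift plus change of polarisation applied to a direct sum of basic partition functions — which by \cref{prop:AiryAllDilatonPolarize} and \cref{mainth2} automatically obeys a topological recursion, hence the $\mathcal{W}$-constraints. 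An alternative, more concrete in practice, is to find an $r$-spin analogue of the Kontsevich--Penner matrix integral $\mathcal{Z}^{\mathrm{KP},N}$ and deduce the constraints from its $\mathcal{W}$-symmetry, as was done for $r=2$. In either approach one must first establish the string and dilaton equations geometrically — compatible with \cref{lem:homogen2} and its analogues — and pin down the base cases, e.g.\ $\langle\tau_0^\circ(a)\tau_0^\partial\rangle_{0,1;1,1}=\delta_{a,1}$ from \cite[Theorem~1.2]{BCT18}, after which the recursion bootstraps everything. A natural first milestone, which already constitutes the support cited in the introduction, is the restriction to $(g,b)=(0,1)$ with all $k_j=0$: there \eqref{bibibibibib} and the explicit values of \cite{BCT18} at $d_i=0$ are propagated by the recursion and matched against the disk sector of $Z^{r\mathrm{BY}}$.
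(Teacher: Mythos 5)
This statement is a \emph{conjecture}, not a theorem, and the paper offers no proof of it --- only partial supporting evidence (the $r=2$ case from \cite{Bur15}, and the $\overline{g}=0$ case from \cite{BCT18}, both without boundary descendants). There is therefore no ``paper's proof'' to compare your proposal against, and your outline is, correctly, a proof strategy rather than a proof.

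The strategy you propose is in close agreement with the paper's own viewpoint. The paper in effect factors \eqref{theconjBYro} into two unproved sub-conjectures: (i) the (hypothetical, general $(g,b,m)$) geometric open $r$-spin invariants coincide with the Airy-structure coefficients $\langle\cdots\rangle^{r\star}$ of \eqref{Fhgnng}; and (ii) the Bertola--Yang tau function $Z^{r\mathrm{BY}}$ is the partition function of that same Airy structure with $Q_1=-Q_2=1$ --- the ``weaker prediction'' stated at the end of \cref{courserspin}. Together, via the exponent matching $g-1+b/2=(\overline{g}-1)/2$ whenever $2g+b-1=\overline{g}$, these yield \eqref{theconjBYro}. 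Your plan does exactly this, invoking \cref{thm:partition_fct_exists} (uniqueness) to close the argument once both sides are known to satisfy the same $\mc{W}(\mathfrak{gl}_{r+1})$-constraints, and you correctly identify that the real obstacle is the geometric construction in (i) together with the $\mc{W}$-constraint verification needed for both (i) and (ii). Two caveats: your ``purely integrable'' derivation of (ii) from the extended $r$-KdV hierarchy plus string equation, by analogy with \cite{FSZ10}, is itself unproven and not a routine transcription of the closed case (the fixed-point extension and the crosscap form $\omega_{\frac12,1}$ have no analogue there); and the $r$-spin Kontsevich--Penner matrix integral you mention as an alternative does not currently exist for $r>2$. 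Neither of these is a flaw in your outline, but they confirm that the statement remains genuinely open, as you acknowledge.
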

This conjecture is formulated in the restricted case $t^{\partial}_d = 0$ for $d > 0$ in \cite{BCT18}, perhaps because no geometric construction of boundary descendants in the open $r$-spin theory is available yet. Under this restriction, it is supported by the following results:
\begin{itemize}
\item[$\bullet$] the $r = 2$ case was proved in \cite{Bur15} and in agreement with \eqref{compare2pin} we have
$$
Z^{{\rm ABT}}\big[Q = 1;(t_d^{\circ})_{d \geq 0};(t_k^\partial)_{k \geq 0}\big] = Z^{{\rm open}\,2{\rm spin}}\big[Q = 1,(t^\circ_{1,d} = t^{\circ}_d\,,\,t^\circ_{2,d} = 0)_{d \geq 0};(t^\partial_k)_{k \geq 0}\big].
$$
\item[$\bullet$] the conjecture is proved in \cite{BCT18} for $\overline{g} = 0$ for general $r$.  In that case there is a single term $(g,b) = (0,1)$ in the right-hand side of \eqref{theconjBYro}.
\end{itemize}

 \medskip
 \subsection{Conjectural relation to topological recursion}
 \label{courserspin}
 \medskip
We now propose a direct generalisation of \cref{TRopenrel}. We consider the Airy structure of \cref{thm:W_gl_Airy_arbitrary_autom} with
\[
d = 2\,,\qquad (r_1,s_1;r_2,s_2) = (r,r+1;1,\infty)\,,\qquad Q_1 = -Q_2 = Q\,,\qquad t_1 = \frac{1}{r}\,,
\]
which is also given in \cite[Theorem 4.16]{BBCCN18}. We denote $Z^{r \star}$ its partition function and we decompose its coefficients as
\begin{equation}
\label{Fhgnng}\begin{split}
&\quad F_{h,n + m}^{r \star}\big[\begin{smallmatrix} 1 & \cdots & 1 &  2 & \cdots & 2 \\ d_1r + a_1 & \cdots & d_nr + a_n & k_1 + 1 & \cdots & k_m + 1 \end{smallmatrix}\big] \\
& = \sum_{\substack{g,b \in \mathbb{Z}_{\geq 0} \\ g + \frac{b}{2} = h}}  (-1)^{m} \, Q^b  \prod_{i = 1}^n (d_ir + a_i)!^{(r)} \prod_{j = 1}^m (k_j + 1)!  \big\langle \tau_{d_1}^\circ(a_1) \cdots \tau_{d_n}^\circ(a_n) \tau_{k_1}^\partial \cdots \tau_{k_m}^\partial \big\rangle_{g,n;b,m}^{r\star}\,.
\end{split}
\end{equation}
According to \cref{SecTR}, the corresponding correlators should be defined as
\begin{equation}
\label{cornonsun}
\begin{split}
& \quad \omega_{h,n + m}^{r\star}\big(\begin{smallmatrix} 1 & \cdots & 1 & 2 & \cdots & 2 \\ z_1 & \cdots & z_n  & z_{n + 1} & \cdots & z_{n + m}\end{smallmatrix}\big) \\
& \coloneqq \sum_{\substack{g,b \in \mathbb{Z}_{\geq 0} \\ g + \frac{b}{2} = h}} \sum_{N_{\circ} \sqcup N_{\partial} = [n]} \sum_{\substack{a\,:\, N_\circ \rightarrow [r-1] \\ d_1,\ldots,d_{n} \geq 0 \\ k_1,\ldots,k_{m} \geq 0}} (-1)^{m} \,Q^{b}\,\bigg\langle \prod_{i \in N_{\circ}} \tau^{\circ}_{d_i}(a_i) \prod_{i \in N_{\partial}} \tau^{\partial}_{d_i} \prod_{j = 1}^m \tau_{k_j}^{\partial}\bigg\rangle_{g,|N_{\circ}|;b,m + |N_{\partial}|}^{r\star} \\
& \quad \cdot \prod_{i \in N_{\circ}} \frac{(d_ir + a_i)!^{(r)}\,\dd z_i}{z_i^{d_ir + a_i + 1}}\prod_{i \in N_{\partial}} \frac{(d_i + 1)!\,\dd z_i}{z_i^{d_ir + r + 1}}\prod_{j = 1}^m \frac{(k_j + 1)!\,\dd z_{n + j}}{z_{n + j}^{k_j + 2}}\,,
\end{split}
\end{equation} 
Note that we have converted all $\tau^{\circ}_d(r)$ into $-r^{-(d + 1)}\tau^{\partial}_d$, which turned the $r$-fold factorial $(rd + r)!^{(r)} = r^{d + 1}(d + 1)!$ into a usual factorial in the corresponding factors. By \cref{mainth2}, they satisfy the topological recursion on the reducible spectral curve with two components intersecting at $z = 0$:
\begin{equation}
\label{spreducibler} C = C_1 \cup C_2\,,\qquad C_1\,: \left\{\begin{array}{l} x(z) = z^r \\ y(z) = -\frac{z}{r} \end{array}\right.,\qquad C_2\,:\,\left\{\begin{array}{l} x(z) = z \\ y(z) = 0 \end{array}\right.\,,
\end{equation}
equipped with
\[
\omega_{0,2}\big(\begin{smallmatrix} \mu_1 & \mu_2 \\ z_1 & z_2 \end{smallmatrix}\big) = \delta_{\mu_1,\mu_2}\,\frac{\dd z_1 \dd z_2}{(z_1 - z_2)^2}\,,\qquad \omega_{\frac{1}{2},1}\big(\begin{smallmatrix} \mu \\ z \end{smallmatrix}\big) = (-1)^{\mu + 1}\,Q\,\frac{\dd z}{z}\,.
\]

For comparison, let us examine the basic properties of $\langle \cdots \rangle^{r\star}$. Firstly, due to the constraint  $(J^1_{kr} + J^2_k)Z^{r\star} = 0$ for $k > 0$ and the definition \eqref{Fhgnng}, each insertion of $\tau_d^\circ(r)$ amounts to the insertion of $-r^{-(d + 1)}\tau^\partial_d$ while incrementing the number $m$ by $1$. Secondly, the constraint $H_{i = 2,k = 0} Z^{r\star} = 0$ gives, by computations similar to those of \cref{SecHom}, the string equation
\begin{equation}
\label{stringopen}\begin{split}
& \quad \bigg\langle \tau_{0}^\circ(1) \prod_{i = 1}^n \tau_{d_i}^\circ(a_i) \prod_{j = 1}^m \tau_{k_j}^\partial\bigg\rangle_{g,1+n;b,m}^{r\star} \\
& = \sum_{l = 1}^n \bigg\langle \tau_{d_l - 1}^\circ(a_l) \prod_{i \neq l} \tau_{d_i}^\circ(a_i) \prod_{j = 1}^m \tau_{k_j}^\partial \bigg\rangle_{g,n;b,m}^{r\star} + \sum_{l = 1}^m \bigg\langle \tau_{k_l - 1}^\partial  \prod_{i = 1}^n \tau_{d_i}^\circ (a_i) \prod_{j \neq l} \tau_{k_j}^\partial\bigg\rangle_{g,n:b,m}^{r\star} \\
& \quad + \delta_{g,b,m,0}\delta_{n,2}\delta_{d_1,d_2,0}\delta_{a_1 + a_2,r} + \delta_{g,m,0}\delta_{b,n,1}\delta_{d_1,0}\delta_{a_1,r} + \delta_{g,n,0}\delta_{b,m,1}\delta_{k_1,0}\,,
\end{split}
\end{equation}
with obvious vanishing conventions for insertion of negative indices. This last term gives the special value
\begin{equation}
\label{iungfisgun}
\big\langle \tau_0^\circ(1) \tau_0^\partial\big\rangle_{0,1;1,1}^{r\star} = 1\,.
\end{equation}
Note this is compatible with the computation of \eqref{eq:omega1half2_exc_cross} with the specialisation $t_1 = 1/r$. Indeed, the latter yields
\[ 
\omega_{\frac{1}{2},2}\big(\begin{smallmatrix} 1 & 2 \\ z_1 & z_2 \end{smallmatrix}\big) = - rQ\frac{\dd z_1}{z_1^2}\,\frac{\dd z_2}{z_2^2} = - Q \frac{\dd z_1}{z_1^2}\,\frac{r!^{(r)}\dd z_2}{z_2^2}\,,
\]
and thus after taking \eqref{cornonsun} into account, \eqref{iungfisgun} describes the only non-vanishing intersection number for $(g,n;b,m) = (0,1;1,1)$.  Thirdly, we have from \Cref{lem:homogen2} the dilaton equation
\begin{equation}
\label{opendilaton}\begin{split}
\bigg\langle \tau_{1}^\circ(1) \prod_{i = 1}^n \tau_{d_i}^\circ(a_i)\prod_{j = 1}^m \tau_{k_j}^\partial\bigg\rangle_{g,1+n;b,m}^{r\star} & = (2g - 2 + n + m)\bigg\langle \prod_{i = 1}^n \tau_{d_i}^\circ(a_i) \prod_{j = 1}^m \tau_{k_j}^\partial\bigg\rangle_{g,n;b,m}^{r\star} \\
& \quad + \frac{r - 1}{24}\delta_{g,1}\delta_{n,b,m,0} + \frac{1}{2}\delta_{g,n,m,0}\delta_{b,2}\,,
\end{split}
\end{equation}
and the homogeneity property which says that $\big\langle \prod_{i = 1}^n \tau_{d_i}^\circ(a_i) \prod_{j = 1}^m \tau_{k_j}^\partial \big\rangle_{g,n;b,m}^{r\star}$ vanishes unless the dimension constraint \eqref{thedimopenr} holds.

We predict that the partition function $Z^{r\star}$ describes the full open $r$-spin intersection theory in any genera and with arbitrary descendants.
\begin{conjecture}
There is a geometric definition of the open $r$-spin intersection numbers, and it satisfies
\[
\big\langle \tau_{d_1}^\circ(a_1) \cdots \tau_{d_n}^\circ(a_n) \tau_{k_1}^\partial \cdots \tau_{k_m}^\partial \big\rangle_{g,n;b,m}^{r {\rm spin}} =  \big\langle \tau_{d_1}^\circ(a_1) \cdots \tau_{d_n}^\circ(a_n) \tau_{k_1}^\partial \cdots \tau_{k_m}^\partial \big\rangle_{g,n;b,m}^{r\star} \,.
\]
\end{conjecture}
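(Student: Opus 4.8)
The plan is the following. Since the statement presupposes a geometric definition of the open $r$-spin intersection numbers with arbitrary boundary descendants in all genera --- which is not yet available in the literature --- the first and by far the hardest task is to \emph{construct} such a theory. Building on the $r$-spin disk theory of Buryak--Clader--Tessler, one would extend the moduli space $\overline{\mathcal{M}}_{g,n;b,m}^{r{\rm spin}}$ to positive genus, equip it with the open analogue $\mathbb{W}$ of $R^1\pi_*\mathcal{L}$ together with suitable boundary conditions $\mathrm{s}$, and define the numbers by a relative Euler class integral generalising \eqref{bibibibibib}. The delicate points are the choice of canonical multisections making the integrals well-defined and independent of the choices (the orientation and graded-spin-structure subtleties present already for $r = 2$ in the work of Pandharipande--Solomon--Tessler and Solomon--Tessler), and a geometric realisation of the boundary descendant insertions $\tau_k^\partial$. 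For $r = 2$ this reduces to the set of foundational conjectures recalled in \cref{OpenIntKP} (in particular to the unpublished work announced by Solomon and Tessler), and for $\overline{g} = 0$ it is the theory already constructed in \cite{BCT18,BCT20}; the genuinely new input is the higher-genus, higher-$r$ case.

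Granting the geometric theory, the second step is to prove that its generating series $Z^{{\rm open}\,r{\rm spin}}$ is annihilated by the system of differential operators $\{H_{i,k}\}_{(i,k)\in\mathcal{I}}$ of the $\mathcal{W}(\mathfrak{gl}_r)$-Airy structure of \cref{thm:W_gl_Airy_arbitrary_autom} with $d = 2$, $(r_1,s_1;r_2,s_2) = (r,r+1;1,\infty)$, $Q_1 = -Q_2 = Q$ and $t_1 = \tfrac1r$, under the change of variables $t_d^\circ = x^1_{rd+a}$ on the $a$-th spin sector and $t_k^\partial = \tfrac1r x^1_{rk+r} - x^2_{k+1}$. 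The natural route is to exhibit a $\mathfrak{gl}_r$ Kontsevich--Penner matrix model generalising the $r = 2$ model of Alexandrov and Buryak--Clader--Tessler, whose partition function manifestly carries the $W(\mathfrak{gl}_r)$-constraints, and then to prove --- the higher-$r$ analogue of \cref{conj2} --- that this matrix model computes the geometric open $r$-spin numbers. An alternative route passes through the extended $r$-KdV hierarchy: the open $r$-spin partition function should specialise to the Bertola--Yang tau function $Z^{r{\rm BY}}$ (the higher-$r$ analogue of Buryak's theorem), which is known to satisfy appropriate $W$-constraints, while on the other side \cite[Theorem~4.16]{BBCCN18} identifies $Z^{r\star}$ --- the partition function of exactly this Airy structure --- with the same constraint system.

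The third step is to match normalisations. By \cref{thm:partition_fct_exists} the Airy structure determines $Z^{r\star}$ uniquely once $(\omega_{0,1},\omega_{\frac12,1},\omega_{0,2})$ are fixed, and for these parameters they correspond to $\omega_{0,1} = -\tfrac{z}{r}\dd(z^r)$ on the main branch, $\omega_{0,1} = 0$ on the exceptional branch, $\omega_{\frac12,1}\begin{psmallmatrix}\mu\\z\end{psmallmatrix} = (-1)^{\mu+1}Q\,\tfrac{\dd z}{z}$, and the standard bidifferential; the excerpt has already checked that these reproduce $\langle\tau_0^\circ(1)\tau_0^\partial\rangle^{r\star}_{0,1;1,1} = 1$ through the computation of $\omega_{\frac12,2}$ with the specialisation $t_1 = \tfrac1r$, together with the correct string and dilaton equations \eqref{stringopen}--\eqref{opendilaton}. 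One then checks the corresponding dimensionless data on the geometric side: the $b = 0$ sector is Witten $r$-spin theory by construction, which fixes $\omega_{0,1}$ and $\omega_{0,2}$ via \cref{rspinthmTR}; $\langle\tau_0^\circ(a)\tau_0^\partial\rangle^{r{\rm spin}}_{0,1;1,1} = \delta_{a,1}$ from \cite[Theorem~1.2]{BCT18} fixes $\omega_{\frac12,1}$; and the $r = 2$ comparison \eqref{compare2pin} pins down the remaining conventions. The uniqueness part of \cref{thm:partition_fct_exists} then yields $Z^{{\rm open}\,r{\rm spin}} = Z^{r\star}$, hence the claimed equality of all open $r$-spin correlators.

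The main obstacle is without question the first step: there is presently no geometric construction of open $r$-spin theory in positive genus, nor of boundary-descendant classes, so the conjecture cannot be attacked until this foundational work is in place --- and even the $r = 2$ case rests on announced but unpublished results. Beyond that, the second-hardest point is deriving the full $\mathcal{W}(\mathfrak{gl}_r)$-constraints directly from geometry (for $r = 2$ this is the content of \cref{conj2}, itself only partially established), as opposed to the string and dilaton equations, which are comparatively routine consequences of the low-order constraints already spelled out in the excerpt.
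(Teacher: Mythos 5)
This statement is a \emph{conjecture}; the paper offers no proof, only supporting evidence. There is therefore no proof of the paper to compare your attempt against, and your write-up is correctly calibrated in acknowledging this: what you give is a research program, not an argument, and you explicitly identify the missing foundational input.

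Your three-step plan is essentially the one implicit in the paper's discussion, and the supporting checks you list are exactly the ones the paper performs: the $b=0$ sector reduces to closed Witten $r$-spin theory (fixing $\omega_{0,1}$, $\omega_{0,2}$ via the spectral curve $(z^r, -z/r)$), the disk number $\langle\tau_0^\circ(1)\tau_0^\partial\rangle = 1$ matches the $\omega_{\frac12,2}$ computation and fixes $\omega_{\frac12,1}$, the $r=2$ case reduces to open intersection theory and \cref{ZKPAiry}, and the string/dilaton equations \eqref{stringopen}--\eqref{opendilaton} match those stated in \cite{BCT18} in the range where both are defined. The paper also notes the structural resemblance of the $W$-constraints to the TRR relations of \cite[Theorem 4.1]{BCT18}, which you allude to only indirectly via the Bertola--Yang route. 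One small emendation: the change of variables you quote, $t^\circ_d = x^1_{rd+a}$, should read $t^\circ_{a,d} = x^1_{rd+a}$ for $a \in [r]$ (the spin label is part of the time variable, cf.\ \eqref{cornonsun}); and the identification with the Bertola--Yang tau function involves a nontrivial rescaling of times (cf.\ the $(-r)^{\cdots}$ prefactors in the paper), which your sketch elides but which matters for step~3. Otherwise your assessment of where the genuine obstacles lie --- the absence of a higher-genus geometric construction and the absence of a direct geometric derivation of the full $\mathcal{W}(\mathfrak{gl}_r)$-constraints --- is accurate and consistent with how the paper frames the status of the conjecture.
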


A weaker prediction involving only quantities whose definition is available at the time of writing, is that the Bertola-Yang $Z^{r{\rm BY}}$ partition function satisfies  $\mc{W}(\mathfrak{gl}_r)$-constraints with zero mode values $Q_1 = -Q_2 = 1$. Including the expected normalisations, this would translate into the following.
\begin{conjecture}
We have for $b,m > 0$
\[
\langle \tau_{d_1}^\circ(a_1) \cdots \tau_{d_n}^\circ(a_n) \tau_{k_1}^\partial \cdots \tau_{k_m}^\partial \big\rangle_{\overline{g};n;m}^{r {\rm BY}} = \sum_{\substack{g,b \in \mathbb{Z}_{\geq 0} \\ g + \frac{b}{2} = \overline{g}}} \big\langle \tau_{d_1}^\circ(a_1) \cdots \tau_{d_n}^\circ(a_n) \tau_{k_1}^\partial \cdots \tau_{k_m}^\partial \big\rangle_{g,n;b,m}^{r \star}\,.
\]
\end{conjecture}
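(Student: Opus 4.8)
The plan is to prove the statement by combining the uniqueness of the partition function of an Airy structure with a $W(\mathfrak{gl}_r)$-constraint statement for the Bertola--Yang tau function. By \cref{thm:W_gl_Airy_arbitrary_autom}, the operators ${}^{\sigma}H_{i,k}$ with $d=2$, $(r_1,s_1;r_2,s_2)=(r,r+1;1,\infty)$, $t_1=\tfrac1r$ and zero modes $Q_1=-Q_2=1$ (which satisfy $\sum_\mu Q_\mu=0$) form an Airy structure, so by \cref{thm:partition_fct_exists} the system of constraints ${}^{\sigma}H_{i,k}\cdot Z=0$ has a unique normalised solution, namely $Z^{r\star}$ at $Q=1$. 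It therefore suffices to (a) pin down a change of variables, generalising the one of \cref{ZKPAiry} from $r=2$ to all $r$, which sends the Bertola--Yang times $T_{dr+a}$ to the Heisenberg variables $(x^1_{dr+a})_{a\in[r-1]}$ and $(x^2_{k+1})_{k\ge 0}$ (with the $T_{dr+r}$ absorbed, via the Heisenberg constraint $(J^1_{kr}+J^2_k)Z=0$, into the boundary combination $\tfrac1r x^1_{kr+r}-x^2_{k+1}$ dual to $t^\partial_k$); (b) show that after this substitution $Z^{r{\rm BY}}$ is annihilated by all ${}^{\sigma}H_{i,k}|_{Q_1=-Q_2=1}$ in the range of \cref{thm:W_gl_Airy_arbitrary_autom}; and (c) check that $Z^{r{\rm BY}}$ has the normalised exponential form \eqref{Fhgnng}. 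Uniqueness then forces $Z^{r{\rm BY}}=Z^{r\star}|_{Q=1}$, and the conjectured equality is obtained by reading off coefficients and collecting the power of $Q$, i.e. summing over $(g,b)$ with $g+\tfrac b2=\overline g$; the restriction $b,m>0$ plays no special role in the argument, it merely marks the regime where $Z^{r\star}$ carries more than the closed $r$-spin data.

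First I would fix the change of variables. The natural candidate, matching \cref{ZKPAiry}, is $t^\circ_{a,d}\leftrightarrow x^1_{dr+a}$ for $a\in[r-1]$ and $t^\partial_k\leftrightarrow \tfrac1r x^1_{kr+r}-x^2_{k+1}$, composed with the explicit rescaling $T_{dr+a}=(-r)^{d+\frac12-\frac{3(dr+a)}{2(r+1)}}(t^\circ_{a,d}-r\delta_{a,r}t^\partial_d)$ already recorded before \eqref{theconjBYro}; the $r$-fold factorials $(dr+a)!^{(r)}$ and $(kr+r)!^{(r)}$ appearing in \eqref{Fhgnng}--\eqref{cornonsun} are exactly the Laplace normalisations of \cref{SecLap}, so on the topological-recursion side this is the statement that the correlators \eqref{cornonsun} on the reducible curve \eqref{spreducibler} encode $Z^{r{\rm BY}}$. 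The substantive step is (b). Here I would argue through the integrable hierarchy: the specialisation of $Z^{r{\rm BY}}$ to all boundary-type times zero is, by Bertola--Yang (generalising \cite{Bur15}, and in agreement with the $r=2$ case), essentially the closed $r$-spin/$r$-KdV tau function $Z^{r{\rm spin}}$, which satisfies $W(\mathfrak{gl}_r)$-constraints (\cite{FSZ10} and the history recalled in \cref{revwit}) and, by Milanov's theorem \cite{Mil16}, is precisely the partition function of the Airy structure of \cref{thm:W_gl_Airy_Coxeter} at $(r,s)=(r,r+1)$. What then remains is to prove that switching on the extension variables $\varepsilon$ and $(T_{dr+a})$ of the Bertola--Yang hierarchy has the same effect on these constraints as (i) adjoining the Heisenberg variables $x^2_{k+1}$ of the $r_2=1$ fixed-point cycle, (ii) performing no dilaton shift on that cycle, and (iii) turning on the crosscap zero mode $Q_1=-Q_2=1$. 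Concretely one would expand the Kac--Schwarz/wave-function description of the Bertola--Yang solution $\Phi$ in the principal Heisenberg currents and verify term by term that the resulting operators coincide with the ${}^{\sigma}H_{i,k}$ of \eqref{eq:twist_mode_arbitr} after the change of variables.

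An alternative, perhaps technically cleaner, route to step (b) is via a matrix model. For $r=2$ the identification proceeds through the Kontsevich--Penner integral and Alexandrov's $W(\mathfrak{sl}_3)$-constraints; one would generalise this to a Chiodo-type generalised Kontsevich--Penner model whose partition function is $Z^{r{\rm BY}}$, derive its Virasoro/$W$-constraints by the usual loop-equation argument, and recognise them as the $Q=1$ specialisation of the ${}^{\sigma}H_{i,k}$-constraints. With the constraints in hand, the remaining bookkeeping in (c) is routine: one checks the exponential normalisation, the absence of unstable contributions beyond $(g,n;b,m)=(0,1;1,1)$ — whose value $\langle\tau_0^\circ(1)\tau_0^\partial\rangle=1$ is already matched with \eqref{eq:omega1half2_exc_cross} at $t_1=1/r$ in the discussion around \eqref{iungfisgun} — and compatibility with the string and dilaton equations \eqref{stringopen}, \eqref{opendilaton}, all of which follow from the $\mathcal{W}$-constraints as in \cref{SecHom}.

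The main obstacle is step (b) itself: identifying the Bertola--Yang extension of the $r$-KdV hierarchy with the $r_2=1$ fixed-point cycle equipped with the crosscap zero mode $Q_1=-Q_2=1$. For $r=2$ this is essentially Alexandrov's work together with \cref{ZKPAiry}, but for general $r$ I am not aware of a matrix-model or Kac--Schwarz description of the boundary sector in the literature, so this part genuinely has to be constructed. A secondary difficulty is organisational — tracking the fractional powers of $-r$ in the change of variables, the $r$-fold factorials, and the signs $(-1)^m$ and $r^{-\sum_i\delta_{a_i,r}}$ in \eqref{Fhgnng} so that the two sides agree exactly rather than up to an unknown topological prefactor; but the already-verified value \eqref{iungfisgun} and the matching string/dilaton equations make it very plausible that the normalisations chosen here are the correct ones. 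Note that the conjectural status of open $r$-spin intersection theory is not an obstruction to this weaker statement, since $Z^{r{\rm BY}}$ and $Z^{r\star}$ are both rigorously defined; only the stronger conjecture relating either one to the geometry would require the foundational results announced by Solomon--Tessler and the open $r$-spin construction of \cite{BCT18,BCT20}.
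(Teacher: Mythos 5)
The statement you are addressing is a \emph{conjecture}, and the paper offers no proof of it — only supporting evidence. What the paper does provide is a reformulation: the sentence immediately preceding the conjecture says it is equivalent to the assertion that $Z^{r{\rm BY}}$ satisfies the $W(\mathfrak{gl}_r)$-constraints with zero modes $Q_1=-Q_2=1$ after the indicated normalisations and change of variables. Your step (a) plus step (b) recover exactly this reformulation, via the uniqueness of \cref{thm:partition_fct_exists}; and your step (c), together with the checks against \eqref{iungfisgun}, \eqref{stringopen}, \eqref{opendilaton} and \cite[Propositions~5.2--5.3]{BCT18}, is precisely the supporting evidence the paper lists after the conjecture. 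So you have accurately reconstructed both the logical content of the conjecture and the evidence for it.

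What you have not done — and what you honestly flag as the main obstacle — is prove step (b), i.e.\ establish that $Z^{r{\rm BY}}$ actually is annihilated by the operators ${}^\sigma H_{i,k}$ for general $r$. This is the genuine open problem; the paper has no argument for it either, which is why the statement is labelled a conjecture. The two routes you sketch for (b) — a direct Kac--Schwarz computation with the extended $r$-KdV hierarchy, or a matrix-model derivation of $W$-constraints for a Chiodo-type Kontsevich--Penner integral — are both plausible but neither is carried out here or in the paper. In short: your proposal is a correct and faithful account of what a proof would require, not a proof; it identifies the same gap the authors implicitly acknowledge by calling this a conjecture, and you should not present it as resolving the statement.
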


In support of the conjectures, we see that the basic properties listed for $\langle \cdots \rangle^{r\star}$ match the ones listed for $\langle \cdots \rangle$ in the range of parameters in which the comparison is possible. The dilaton equation of \cite[Proposition 5.3]{BCT18}  matches the restriction of \eqref{opendilaton} to $(g,b) = (0,1)$ and $k_j = 0$. The string equation of \cite[Proposition 5.2]{BCT18} matches the same restriction of \eqref{stringopen}, and observe that in absence of boundary descendants the second sum in the right-hand side of \eqref{stringopen} is absent. The identification of $\tau_d^{\circ}(r)$ with $-r^{-(d +1)} \tau_d^{\partial}$ for $d = 0$  is manifest in \cite[Theorem 1.1]{BCT18}. Their extension to $g > 0$ expected in \cite[Section 6.2]{BCT18} also matches our proposal.

TRR relations involving in a linear way the open $r$-spin intersection numbers mentioned in \eqref{bibibibibib} and the closed $r$-spin intersection numbers are given in \cite[Theorem 4.1]{BCT18}. The information of \eqref{bibibibibib} should be encoded for us in $F_{\frac{1}{2},n}^{r\star}$, as it is easy to see that the $W$-constraints indeed imply some quadratic relation with a similar structure involving $F_{0,n}^{r\star}$ in a non-linear way and $F_{\frac{1}{2},n}^{r\star}$ in a linear way. As $F_{0,n}^{r\star}$  contains only information from the closed sector and satisfies $W$-constraints on its own, the structure somehow resembles the TRR relation, but establishing an exact match is left to future work.

\newpage

\printbibliography[heading=bibintoc]

\end{document}